\title{Algorithms and Hardness for Linear Algebra on Geometric Graphs}
\title{Algorithms and Hardness for \\ Linear Algebra on Geometric Graphs\thanks{A preliminary version of this paper appeared in the Proceedings of 61st Annual IEEE Symposium on Foundations of Computer Science (FOCS 2020).}}
\author{
\IEEEauthorblockN{Josh Alman}
\IEEEauthorblockA{\texttt{jalman@seas.harvard.edu}\\
Harvard U.}
\and
\IEEEauthorblockN{Timothy Chu}
\IEEEauthorblockA{\texttt{tzchu@andrew.cmu.edu}\\
Carnegie Mellon U.}
\and
\IEEEauthorblockN{Aaron Schild}
\IEEEauthorblockA{\texttt{aschild@uw.edu}\\
U. of Washington}
\and
\IEEEauthorblockN{Zhao Song}
\IEEEauthorblockA{\texttt{zhaos@utexas.edu}\\
Columbia, Princeton, IAS}
}
\author{
Josh Alman\thanks{\texttt{jalman@seas.harvard.edu}, Harvard University, supported by a Rabin fellowship. Part of the work done while the author was a graduate student at MIT. Supported in part by a Michael O. Rabin Postdoctoral Fellowship.} 
\quad
Timothy Chu\thanks{\texttt{tzchu@andrew.cmu.edu}, Carnegie Mellon University.}
\quad
Aaron Schild\thanks{\texttt{aschild@uw.edu}, University of Washington. Part of the work done while the author was a graduate student at UC Berkeley.  Supported in part by ONR \#N00014-17-1-2429, a Packard fellowship, and the University of Washington. }
\quad
Zhao Song\thanks{\texttt{magic.linuxkde@gmail.com}, Columbia University, Princeton University and Institute for Advanced Study. Part of the work done while the author was a graduate student at UT-Austin. Supported in part by Ma Huateng Foundation, Schmidt Foundation, Simons Foundation, NSF, DARPA/SRC, Google and Amazon.}
}
\date{}
\newtheorem{theorem}{Theorem}[section]
\newtheorem{lemma}[theorem]{Lemma}
\newtheorem{definition}[theorem]{Definition}
\newtheorem{proposition}[theorem]{Proposition}
\newtheorem{corollary}[theorem]{Corollary}
\newtheorem{conjecture}[theorem]{Conjecture}
\newtheorem{fact}[theorem]{Fact}
\newtheorem{remark}[theorem]{Remark}
\newtheorem{example}[theorem]{Example}
\newtheorem{problem}[theorem]{Problem}
\newtheorem{hypothesis}[theorem]{Hypothesis}
\newcommand{\E}{\mathbf{E}}
\newcommand{\T}{\mathcal{T}}
\newcommand{\Err}{\mathrm{Err}}
\newcommand{\poly}{\mathrm{poly}}
\newcommand{\tr}{\mathrm{tr}}
\newcommand{\polylog}{\mathrm{polylog}}
\newcommand{\Ceff}{\mathtt{Ceff}}
\newcommand{\wt}{\widetilde}
\newcommand{\ov}{\overline}
\renewcommand{\tilde}{\wt}
\renewcommand{\k}{\mathsf{K}}
\newcommand{\Reff}{\mathtt{Reff}}
\newcommand{\KAdjE}{\mathsf{KAdjE}}
\newcommand{\KLapE}{\mathsf{KLapE}}
\newcommand{\WSPD}{\mathrm{WSPD}}
\newcommand{\dist}{\mathrm{dist}}
\renewcommand{\d}{\mathrm{d}}
\newcommand{\SETH}{\mathsf{SETH}}
\newcommand{\ANN}{\mathsf{ANN}}
\newcommand{\LowDiamSet}{\textsc{LowDiamSet}}
\newcommand{\ReffQuery}{\textsc{ReffQuery}}
\newcommand{\ReffPreproc}{\textsc{ReffPreproc}}
\newcommand{\UnweightedCover}{\textsc{UnweightedCover}}
\newcommand{\BoundedCover}{\textsc{BoundedCover}}
\newcommand{\TwoBoundedCover}{\textsc{TwoBoundedCover}}
\newcommand{\LogCover}{\textsc{LogCover}}
\newcommand{\DesiredCover}{\textsc{DesiredCover}}
\newcommand{\IntervalFamily}{\textsc{IntervalFamily}}
\newcommand{\OversamplingWithCover}{\textsc{OversamplingWithCover}}
\def\Z{{\mathbb Z}}
\def\N{{\mathbb N}}
\def\R{{\mathbb R}}
\def\eps{\varepsilon}
\renewcommand{\epsilon}{\eps}
\definecolor{b2}{RGB}{51,153,255}
\definecolor{mygreen}{RGB}{80,180,0}
\newcommand{\Josh}[1]{{\color{b2}[Josh: #1]}}
\newcommand{\josh}[1]{{\color{b2}[Josh: #1]}}
\newcommand{\Aaron}[1]{{\color{mygreen}[Aaron: #1]}}
\renewcommand{\Josh}[1]{{\color{b2}}}
\renewcommand{\josh}[1]{{\color{b2}}}
\renewcommand{\Aaron}[1]{{\color{brown}}}
\begin{document}

\ifdefined\isfocs
 \maketitle
 \begin{abstract}
For a function $\mathsf{K} : \mathbb{R}^{d} \times \mathbb{R}^{d} \to \mathbb{R}_{\geq 0}$, and a set $P = \{ x_1, \ldots, x_n\} \subset \mathbb{R}^d$ of $n$ points, the $\mathsf{K}$ graph $G_P$ of $P$ is the complete graph on $n$ nodes where the weight between nodes $i$ and $j$ is given by $\mathsf{K}(x_i, x_j)$. In this paper, we initiate the study of when efficient spectral graph theory is possible on these graphs. We investigate whether or not it is possible to solve the following problems in $n^{1+o(1)}$ time for a $\mathsf{K}$-graph $G_P$ when $d < n^{o(1)}$:

\begin{itemize}
\item Multiply a given vector by the adjacency matrix or Laplacian matrix of $G_P$
\item Find a spectral sparsifier of $G_P$
\item Solve a Laplacian system in $G_P$'s Laplacian matrix
\end{itemize}

For each of these problems, we consider all functions of the form $\mathsf{K}(u,v) = f(\|u-v\|_2^2)$ for a function $f:\mathbb{R} \rightarrow \mathbb{R}$. We provide algorithms and comparable hardness results for many such $\mathsf{K}$, including the Gaussian kernel, Neural tangent kernels, and more. For example, in dimension $d = \Omega(\log n)$, we show that there is a parameter associated with the function $f$ for which low parameter values imply $n^{1+o(1)}$ time algorithms for all three of these problems and high parameter values imply the nonexistence of subquadratic time algorithms assuming Strong Exponential Time Hypothesis ($\mathsf{SETH}$), given natural assumptions on $f$.

As part of our results, we also show that the exponential dependence on the dimension $d$ in the celebrated fast multipole method of Greengard and Rokhlin cannot be improved, assuming $\mathsf{SETH}$, for a broad class of functions $f$. To the best of our knowledge, this is the first formal limitation proven about fast multipole methods.

  \end{abstract}
\else

\begin{titlepage}
 \maketitle
  \begin{abstract}

  \end{abstract}
 \thispagestyle{empty}
 \end{titlepage}

\pagenumbering{roman}

{\hypersetup{linkcolor=black}
\tableofcontents
}

\newpage

\setcounter{page}{1}
\pagenumbering{arabic}
\fi

\section{Introduction}\label{sec:intro}

Linear algebra has a myriad of applications throughout computer science and physics. Consider the following seemingly unrelated tasks:

\vspace{1mm}
\begin{compactenum}
    \item \textbf{$n$-body simulation (one step)}: Given $n$ bodies $X$ located at points in $\mathbb{R}^d$, compute the gravitational force on each body induced by the other bodies.
    \item \textbf{Spectral clustering}: Given $n$ points $X$ in $\mathbb{R}^d$, partition $X$ by building a graph $G$ on the points in $X$, computing the top $k$ eigenvectors of the Laplacian matrix $L_G$ of $G$ for some $k\ge 1$ to embed $X$ into $\mathbb{R}^k$, and run $k$-means on the resulting points.
    \item \textbf{Semi-supervised learning}: Given $n$ points $X$ in $\mathbb{R}^d$ and a function $g:X\rightarrow \mathbb{R}$ whose values on some of $X$ are known, extend $g$ to the rest of $X$.
\end{compactenum}

\vspace{1mm}

Each of these tasks has seen much work throughout numerical analysis, theoretical computer science, and machine learning. The first task is a celebrated application of the fast multipole method of Greengard and Rokhlin~\cite{gr87, gr88, gr89}, voted one of the top ten algorithms of the twentieth century by the editors of \emph{Computing in Science and Engineering}~\cite{dongarra2000guest}.
The second task is \emph{spectral clustering} \cite{njw02, lwdh13}, a popular algorithm for clustering data. The third task is to label a full set of data given only a small set of partial labels~\cite{z05survey, csbz09, zl05}, which has seen increasing use in machine learning. One notable method for performing semi-supervised learning is the graph-based Laplacian regularizer method~\cite{lszlh19,zl05, bns06,z05}.

Popular techniques for each of these problems benefit from primitives in spectral graph theory on a special class of dense graphs called \emph{geometric graphs}. For a function $\k:\mathbb{R}^d\times \mathbb{R}^d\rightarrow \mathbb{R}$ and a set of points $X\subseteq \mathbb{R}^d$, the \emph{$\k$-graph on $X$} is a graph with vertex set $X$ and edges with weight $\k(u,v)$ for each pair $u,v\in X$. Adjacency matrix-vector multiplication, spectral sparsification, and Laplacian system solving in geometric graphs are directly relevant to each of the above problems, respectively:

\vspace{1mm}
\begin{compactenum}
\item \textbf{$n$-body simulation (one step)}: For each $i\in \{1,2,\hdots,d\}$, make a weighted graph $G_i$ on the points in $X$, in which the weight of the edge between the points $u,v\in X$ in $G_i$ is $\k_i(u,v) := (\frac{G_{\text{grav}} \cdot m_u \cdot m_v}{\|u - v\|_2^2})(\frac{v_i - u_i}{\|u - v\|_2})$,  where $G_{\text{grav}}$ is the gravitational constant and $m_x$ is the mass of the point $x\in X$. Let $A_i$ denote the weighted adjacency matrix of $G_i$. Then $A_i\textbf{1}$ is the vector of $i$th coordinates of force vectors. In particular, gravitational force can be computed by doing $O(d)$ adjacency matrix-vector multiplications, where each adjacency matrix is that of the $\k_i$-graph on $X$ for some $i$.

\item \textbf{Spectral clustering}: Make a $\k$ graph $G$ on $X$. In applications, $\k(u,v) = f(\|u-v\|_2^2)$, where $f$ is often chosen to be $f(z) = e^{-z}$~\cite{l07,njw02}. Instead of directly running a spectral clustering algorithm on $L_G$, one popular method is to construct a sparse matrix $M$ approximating $L_G$ and run spectral clustering on $M$ instead~\cite{chl16,csblc11, kmt12}. Standard sparsification methods in the literature are heuristical, and include the widely used Nystrom method which uniformly samples rows and columns from the original matrix~\cite{cjkmm13}. 

If $H$ is a spectral sparsifier of $G$, it has been suggested that spectral clustering with the top $k$ eigenvectors of $L_H$ performs just as well in practice as spectral clustering with the top $k$ eigenvectors of $L_G$~\cite{chl16}. 
One justification is that since $H$ is a spectral sparsifier of $G$, the eigenvalues of $L_H$ are at most a constant factor larger than those of $L_G$, so cuts with similar conductance guarantees are produced. Moreover, spectral clustering using sparse matrices like $L_H$ is known to be faster than spectral clustering on dense matrices like $L_G$ ~\cite{chl16, cjkmm13, kmt12}.

\item \textbf{Semi-supervised learning}: An important subroutine in semi-supervised learning is completion based on $\ell_2$-minimization~\cite{z05, z05survey, lszlh19}. Specifically, given values $g_v$ for $v\in Y$, where $Y$ is a subset of $X$, find the vector $g\in \mathbb{R}^n$ (variable over $X\setminus Y$) that minimizes
$\sum_{u,v\in X,u\ne v} \k(u,v) (g_u - g_v)^2.$
The vector $g$ can be found by solving a Laplacian system on the $\k$-graph for $X$.
\end{compactenum}
\vspace{1mm}

In the first, second, and third tasks above, a small number of calls to matrix-vector multiplication, spectral sparsification, and Laplacian system solving, respectively, were made on geometric graphs. One could solve these problems by first explicitly writing down the graph $G$ and then using near-linear time algorithms \cite{ss11,ckmpprx14} to multiply, sparsify, and solve systems. However, this requires a minimum of $\Omega(n^2)$ time, as $G$ is a dense graph.

In this paper, we initiate a theoretical study of the geometric graphs for which efficient spectral graph theory is possible. In particular, we attempt to determine for which (a) functions $\k$ and (b) dimensions $d$ there is a much faster, $n^{1+o(1)}$-time algorithm for each of (c) multiplication, sparsification, and Laplacian solving. Before describing our results, we elaborate on the choices of (a), (b), and (c) that we consider in this work.

We start by discussing the functions $\k$ that we consider (part (a)). Our results primarily focus on the class of functions of the form $\k(u,v) = f(\|u-v\|_2^2)$ for a function $f:\mathbb{R}_{\ge 0}\rightarrow \mathbb{R}$ for $u,v\in \mathbb{R}^d$. Study of these functions dates back at least eighty years, to the early work of Bochner, Schoenberg, and John Von Neumann on metric embeddings into Hilbert Spaces~\cite{b33, s37, ns41}. These choices of $\k$ are ubiquitous in applications, like the three described above, since they naturally capture many kernel functions $\k$ from statistics and machine learning, including the Gaussian kernel $(e^{-\|u-v\|_2^2})$, the exponential kernel $(e^{-\|u-v\|_2})$, the power kernel $(\|u-v\|_2^q)$ for both positive and negative $q$, the logarithmic kernel ($\log (\|u-v\|_2^q + c)$), and more~\cite{s10, z05, btb05}.  See Section~\ref{subsec:relatedwork} below for even more popular examples. In computational geometry, many transformations of distance functions are also captured by such functions $\k$, notably in the case when $\k(u,v) = \|u-v\|_2^q$~\cite{l82, as14,acx19, cms20}. 

We would also like to emphasize that many kernel functions which do not at first appear to be of the form $f(\|u-v\|_2^2)$ can be rearranged appropriately to be of this form. For instance, in
\ifdefined\isfocs
the full version
\else
Section~\ref{sec:ntk} below 
\fi
we show that the recently popular Neural Tangent Kernel is of this form, so our results apply to it as well.
That said, to emphasize that our results are very general, we will mention later how they also apply to some functions of the form $\k(u,v) = f(\langle u,v\rangle )$, including $\k(u,v) = |\langle u,v\rangle |$.

Next, we briefly elaborate on the problems that we consider (part (c)). For more details, see 
\ifdefined\isfocs
the full version. 
\else
Section \ref{sec:preli}. 
\fi
The points in $X$ are assumed to be real numbers stated with $\polylog(n)$ bits of precision. Our algorithms and hardness results pertain to algorithms that are allowed some degree of approximation. For an error parameter $\epsilon > 0$, our multiplication and Laplacian system solving algorithms produce solutions with $\epsilon$-additive error, and our sparsification algorithms produce a graph $H$ for which the Laplacian quadratic form $(1 \pm \epsilon)$-approximates that of $G$.

Matrix-vector multiplication, spectral sparsification, and Laplacian system solving are very natural linear algebraic problems in this setting, and have many applications beyond the three we have focused on ($n$-body simulation, spectral clustering, and semi-supervised learning). See Section~\ref{subsec:relatedwork} below where we expand on more applications.

Finally, we discuss dependencies on the dimension $d$ and the accuracy $\epsilon$ for which $n^{1+o(1)}$ algorithms are possible (part (b)). Define $\alpha$, a measure of the `diameter' of the point set and $f$, as
$$\alpha := \frac{ \max_{u,v \in X} f( \| u - v \|_2^2 ) }{ \min_{u,v \in X} f( \| u - v \|_2^2 ) } + \frac{ \max_{u,v \in X}  \| u - v \|_2^2  }{ \min_{u,v \in X}  \| u - v \|_2^2  }.$$
It is helpful to have the following two questions in mind when reading our results:

\vspace{2mm}
\begin{compactitem}
\item (High-dimensional algorithms, e.g. $d = \Theta(\log n)$) Is there an algorithm which runs in time $\poly(d, \log(n\alpha/\epsilon)) n^{1+o(1)}$ for multiplication and Laplacian solving? Is there an algorithm which runs in time $\poly(d, \log(n\alpha))n^{1+o(1)}$ for sparsification when $\epsilon=1/2$?

\item (Low-dimensional algorithms, e.g. $d = o(\log n)$) Is there an algorithm which runs in time $(\log(n\alpha/\epsilon))^{O(d)} n^{1+o(1)}$ for multiplication and Laplacian solving? Is there a sparsification algorithm which runs in time $(\log(n\alpha))^{O(d)} n^{1+o(1)}$ when $\epsilon=1/2$?
\end{compactitem}
\vspace{2mm}

We will see that there are many important functions $\k$ for which there are such efficient low-dimensional algorithms, but no such efficient high-dimensional algorithms. In other words, these functions $\k$ suffer from the classic `curse of dimensionality.' At the same time, other functions $\k$ will allow for efficient low-dimensional and high-dimensional algorithms, while others won't allow for either.

We now state our results. We will give very general classifications of functions $\k$ for which our results hold, but afterwards in Section~\ref{sec:resultstable} we summarize the results for a few particular functions $\k$ of interest. The main goal of our results is as follows:

\textbf{Goal}: For each problem of interest (part (c)) and dimension $d$ (part (b)), find a natural parameter $p_f > 0$ associated with the function $f$ for which the following dichotomy holds:

\begin{compactitem}
    \item If $p_f$ is high, then the problem cannot be solved in subquadratic time assuming $\SETH$ on points in dimension $d$.
    \item If $p_f$ is low, then the problem of interest can be solved in almost-linear time ($n^{1+o(1)}$ time) on points in dimension $d$.
\end{compactitem}

As we will see shortly, the two parameters $p_f$ which will characterize the difficulties of our problems of interest in most settings are the \emph{approximate degree} of $f$, and a parameter related to how \emph{multiplicatively Lipschitz} $f$ is. We define both of these in the next section.

\subsection{High-dimensional results}\label{subsubsec:highdimmult}

We begin in this subsection by stating our results about which functions have $\poly(d,\log(\alpha),\log(1/\epsilon)) \cdot$ $n^{1+o(1)}$-time algorithms for multiplication and Laplacian solving and $\poly(d,\log(\alpha),1/\epsilon)\cdot n^{1+o(1)}$-time algorithms for sparsification. When reading these results, it is helpful to think of $d = \Theta(\log n)$, $\alpha = 2^{\text{polylog}(n)}$, $\epsilon = 1/2^{\polylog(n)}$ for multiplication and Laplacian solving, and $\epsilon = 1/2$ for sparsification. With these parameter settings, $\poly(d)n^{1+o(1)}$ is almost-linear time, while $2^{O(d)}n^{1+o(1)}$ time is not. For results about algorithms with runtimes that are exponential in $d$, see 
\ifdefined\isfocs
the full version.
\else
Section \ref{sec:low-dim-summary}.
\fi

\subsubsection{Multiplication} 

In high dimensions, we give a full classification of when the matrix-vector multiplication problems are easy for kernels of the form $\k(u,v) = f(\|u - v\|_2^2)$ for some function $f:\mathbb{R}_{\ge 0}\rightarrow \mathbb{R}_{\ge 0}$ that is analytic on an interval. We show that the problem can be efficiently solved only when $\k$ is very well-approximated by a simple polynomial kernel. That is, we let $p_f$ denote the minimum degree of a polynomial that $\eps$-additively-approximates the function $f$.

\begin{theorem}
[Informal version of Theorem~\ref{thm:hardnessapprox} and Corollary~\ref{cor:kernelalg}]\label{thm:informal-high}
%\ifdefined\isfocs\else[Informal version of Theorem~\ref{thm:hardnessapprox} and Corollary~\ref{cor:kernelalg}]\fi\label{thm:informal-high} 
For any function $f : \R_+ \to \R_+$ which is analytic on an interval $(0,\delta)$ for any $\delta$ $>0$, and any $0 < \eps < 2^{-\polylog(n)}$, consider the following problem: given as input $x_1, \ldots, x_n \in \R^d$ with $d = \Theta(\log n)$ which define a $\k$ graph $G$ via $\k(x_i, x_j) = f(\|x_i -x_j\|_2^2)$, and a vector $y \in \{0,1\}^n$, compute an $\eps$-additive-approximation to $L_G \cdot y$. 
\begin{compactitem}
    \item If $f$ can be $\eps$-additively-approximated by a polynomial of degree at most $o(\log n)$, then the problem can be solved in $n^{1+o(1)}$ time.
    \item Otherwise, assuming $\SETH$, the problem requires time $n^{2 - o(1)}$.
\end{compactitem}
The same holds for $L_G$, the Laplacian matrix of $G$, replaced by $A_G$, the adjacency matrix of $G$.
\end{theorem}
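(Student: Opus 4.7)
The plan is to prove the dichotomy by exhibiting polynomial approximability as the dividing criterion, handling the algorithmic and hardness directions separately.

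For the algorithmic direction, suppose $f$ admits a degree-$D$ polynomial approximation $p(z) = \sum_{k=0}^{D} c_k z^k$ with $\sup |f(z) - p(z)| \leq \eps'$ on the relevant range, where $D = o(\log n)$ and $\eps'$ is small enough after scaling. Then, setting $z = \|x_i-x_j\|_2^2 = \|x_i\|_2^2 + \|x_j\|_2^2 - 2\langle x_i, x_j\rangle$ and expanding each $z^k$ by the multinomial theorem yields
\[
p(\|x_i-x_j\|_2^2) \;=\; \langle \phi(x_i), \phi(x_j)\rangle,
\]
where $\phi : \R^d \to \R^N$ is an explicit polynomial feature map of dimension $N \leq \binom{d+2D}{2D}$. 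For $d = \Theta(\log n)$ and $D = o(\log n)$, $N = n^{o(1)}$. Forming the $N \times n$ matrix $\Phi$ and computing $\Phi^\top(\Phi y)$ takes $O(nN) = n^{1+o(1)}$ time, giving the approximate adjacency-vector product. The Laplacian product is then $Ly = (A\mathbf{1}) \odot y - Ay$, so one extra matrix-vector product on $\mathbf{1}$ suffices. Propagating error requires only that the polynomial approximate $f$ to additive error $\eps/\poly(n\alpha)$, which still takes degree $o(\log n)$ under the hypothesis; bounding the input range uses $\alpha$.

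For the hardness direction, I reduce from Orthogonal Vectors on $n$ vectors in $c\log n$ dimensions, which is $n^{2-o(1)}$-hard under $\SETH$. I embed the OV input $\{a_i\},\{b_j\} \subset \{0,1\}^{c\log n}$ into a point set $X \subset \R^{O(\log n)}$ such that $\|a_i - b_j\|_2^2$ lies in a small discrete set of values, with one distinguished value $z_0^\star$ corresponding to $\langle a_i, b_j\rangle = 0$. If the adjacency-vector product $A y$ with $y = \mathbf{1}_{B}$ can be approximated to additive error $\eps$ in subquadratic time, then each entry $(Ay)_i = \sum_j f(\|a_i-b_j\|_2^2)$ separates the "orthogonal pair exists" case from "no orthogonal pair" case, provided $f$ has a sufficient gap at $z_0^\star$ vs.\ nearby values. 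The key structural step is to upgrade the qualitative hypothesis — $f$ cannot be $\eps$-approximated by a polynomial of degree $o(\log n)$ — into the quantitative gap needed. Via Chebyshev/Jackson-type theory, an analytic function on $(0,\delta)$ whose best degree-$D$ approximation error does not decay like $\rho^{-D}$ must witness, on some finite set of points in the interval, a "jump" that no low-degree polynomial can interpolate; this jump is exactly the resource needed to distinguish $z_0^\star$ from the other attainable squared distances in the OV embedding.

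The main obstacle is this last step: converting "not approximable by degree $o(\log n)$" into a concrete, localized contrast of $f$-values across the two squared-distance values that an OV gadget can realize. The cleanest route I see is (i) to invoke analyticity and Markov's inequality to constrain how fast $f$ could otherwise be approximated by its Taylor truncations, (ii) to argue contrapositively that any analytic $f$ which \emph{lacks} a low-degree approximation must vary rapidly between two explicit evaluation points determined by the OV embedding geometry, and (iii) to rescale the OV instance (using the freedom in choosing the embedding scale and a small multiplicative perturbation absorbed into $\alpha$) so that exactly these two points arise as $\|a_i-b_j\|_2^2$ values. Once the gap is established, amplification (replicating each coordinate $O(\log n)$ times) pushes the required precision of the matrix-vector product down to $\eps \leq 2^{-\polylog n}$, matching the hypothesis on $\eps$. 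The Laplacian version then follows because $Ly$ determines $Ay$ up to the diagonal, which can be queried separately. The algorithmic half and the OV embedding are comparatively standard; the analytic-approximation lower-bound-to-gap conversion is where the real work lies.
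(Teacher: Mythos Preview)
Your algorithmic direction is correct and matches the paper's approach (the low-rank feature map from Lemma~\ref{lem:low-rank-mmult} and Corollary~\ref{cor:kernelalg}).

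Your hardness direction, however, has a genuine gap at exactly the point you flag as ``where the real work lies.'' You propose a single-call OV reduction: embed so that orthogonal pairs sit at one distinguished squared distance $z_0^\star$, and hope that non-approximability of $f$ by degree-$o(\log n)$ polynomials forces a usable \emph{gap} $|f(z_0^\star)-f(z)|$ against the other attainable distances. This inference does not hold. A function like $f(z)=1/(1+z)$ already fails to be $\eps$-approximated by degree-$o(\log(1/\eps))$ polynomials, yet it is smooth and monotone with no local ``jump''; the obstruction lives in its high-order derivatives, not in any pointwise contrast. In an OV sum $\sum_j f(\|a_i-b_j\|_2^2)$, one orthogonal pair then contributes a signal of size $|f(z_0^\star)-f(z_0^\star+1)|$, which can be $O(1)$ while the ``noise'' from $n$ non-orthogonal pairs is $\Theta(n)$ --- no threshold separates the cases, even with $\eps=2^{-\polylog n}$ accuracy. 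This is precisely the limitation of prior work (e.g.\ the Gaussian-kernel hardness of \cite{bis17}) that the paper discusses in its ``Comparison with Previous Lower Bound Techniques'' paragraph: the gap-based reduction works only for rapidly decaying kernels.

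The paper's route is structurally different. Rather than one call, it makes $d+1$ calls to the $\KAdjE$ oracle at different affine rescalings of the point set, so that call $i$ returns (approximately) $\sum_\ell c_\ell\, f(a_i\ell+b_i)$, where $c_\ell$ counts pairs at squared Hamming distance $\ell$. Stacking these gives $Mc$ for the ``counting matrix'' $M_{i\ell}=f(a_i\ell+b_i)$. The key lemma (Lemma~\ref{lem:step4}) lower-bounds $|\det M|$ via the Cauchy--Binet formula, using that non-approximability forces some high-order derivative $|f^{(k+1)}|$ to be large on an interval (Lemmas~\ref{lem:step1}--\ref{lem:step3}). Cramer's rule (Lemma~\ref{lem:step5}) then converts the determinant bound into the accuracy needed to recover $c$ and hence solve Hamming closest pair. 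In short: the missing idea is to replace ``find one gap in $f$'' with ``invert a linear system whose conditioning is controlled by $f$'s high derivatives.''
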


While Theorem \ref{thm:informal-high} yields a parameter $p_f$ that characterizes hardness of multiplication in high dimensions, it is somewhat cumbersome to use, as it can be challenging to show that a function is far from a polynomial. We also show Theorem \ref{thm:hardnessapprox-easy}, which shows that if $f$ has a single point with large $\Theta(\log n)$-th derivative, then the problem requires time $n^{2-o(1)}$ assuming $\SETH$. The Strong Exponential Time Hypothesis ($\SETH$) is a common assumption in fine-grained complexity regarding the difficulty of solving the Boolean satisfiability problem; see 
\ifdefined\isfocs
the full version
\else
section~\ref{sec:fine_grained}
\fi 
for more details. Theorem~\ref{thm:informal-high} informally says that assuming $\SETH$, the curse of dimensionality is inherent in performing adjacency matrix-vector multiplication. In particular, we directly apply this result to the $n$-body problem discussed at the beginning:

\begin{corollary} \label{cor:nbodyhardness}
Assuming $\SETH$, in dimension $d = \Theta(\log n)$ one step of the $n$-body problem requires time $n^{2 - o(1)}$.
\end{corollary}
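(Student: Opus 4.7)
The plan is to obtain Corollary~\ref{cor:nbodyhardness} as a direct consequence of Theorem~\ref{thm:informal-high} via a coordinate-planting reduction from hard matrix--vector multiplication into one step of $n$-body. The governing intuition is that the $i$-th coordinate of each force vector is already an adjacency-matrix multiplication against the kernel $(v_i-u_i)/\|v-u\|_2^3$; by positioning the input points on two parallel hyperplanes $\{x_{d+1}=0\}$ and $\{x_{d+1}=h\}$ in $\R^{d+1}$, the $(d{+}1)$-st coordinate of the force on each ``bottom'' body will isolate a clean sum of the form $h\sum_j y_j(\|x_i-x_j\|_2^2+h^2)^{-3/2}$ against masses $y_j$ placed on the top layer, with no interference from intra-layer interactions.

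For the hard kernel I would take $f(z):=(z+h^2)^{-3/2}$ for a small constant $h>0$. This $f$ is analytic on $(-h^2,\infty)$, so in particular on every $(0,\delta)$, and a direct differentiation shows $|f^{(k)}(0)|=\Theta(k!/h^{2k+3})$, which is $n^{\omega(1)}$ when $k=\Theta(\log n)$. Thus Theorem~\ref{thm:hardnessapprox-easy} (the ``large derivative'' companion of Theorem~\ref{thm:informal-high}) rules out $n^{2-o(1)}$-time algorithms under $\SETH$ for $\eps$-additively approximating $A_f y$ in dimension $\Theta(\log n)$ at accuracy $\eps<2^{-\polylog(n)}$.

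Given such an $A_f y$ instance $(x_1,\dots,x_n;y)$ in $\R^d$, the reduction will build a $2n$-body configuration in $\R^{d+1}$: for each $i\in[n]$, place a ``source'' body at $(x_i,0)$ with mass $1/G_{\mathrm{grav}}$ and a ``target'' body at $(x_i,h)$ with mass $y_i$, after first splitting $y$ into its positive and negative parts and running the reduction on each so that all masses remain non-negative. Source--source and target--target pairs contribute zero to the $(d{+}1)$-st force coordinate, since the relevant coordinate difference vanishes. The only surviving contribution to the $(d{+}1)$-st force at $(x_i,0)$ is $h\sum_j y_j(\|x_i-x_j\|_2^2+h^2)^{-3/2} = h\cdot(A_f y)_i$; dividing by $h$ recovers the desired product. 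Any $n^{2-\Omega(1)}$-time algorithm for one $n$-body step in dimension $d{+}1=\Theta(\log n)$ would therefore yield one for $A_f y$, contradicting Theorem~\ref{thm:informal-high}.

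The main delicate point will be parameter bookkeeping: I must choose $h$ and scale the $x_i$'s so that \emph{simultaneously} (a) the polynomial-approximation degree of $f$ on the actual interval of squared distances is genuinely $\omega(\log n)$ at the prescribed $\eps$, so that Theorem~\ref{thm:hardnessapprox-easy} applies; (b) the diameter ratio $\alpha$ of the planted $2n$-body instance stays within the $2^{\polylog(n)}$ range for which Theorem~\ref{thm:informal-high} gives the hardness; and (c) the additive-error guarantee returned by the $n$-body algorithm pulls back to an additive-error guarantee for $A_f y$ through both the $1/h$ rescaling and the positive/negative mass split. None of these three knobs is individually hard, but keeping them all compatible is the only subtlety in turning the coordinate-isolation idea above into a full proof.
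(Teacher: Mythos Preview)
Your proposal is correct and follows essentially the same route as the paper: lift the hard instance into $\R^{d+1}$ on two parallel hyperplanes so that the extra force coordinate isolates a kernel of the form $(z+c)^{-3/2}$, then invoke Theorem~\ref{thm:informal-high} (via Theorem~\ref{thm:hardnessapprox-easy}) on that analytic, high-derivative $f$.

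The one difference worth noting is the encoding of $y$. The paper exploits that Theorem~\ref{thm:informal-high} already gives hardness for $y\in\{0,1\}^n$, and reduces from $\KLapE$ rather than $\KAdjE$: it uses $y_x\in\{0,1\}$ as the \emph{height} of point $x$ (so the lifted point is $(x,y_x)$) and gives every body unit mass. Intra-layer pairs then contribute zero to the $(d{+}1)$-st force coordinate automatically, and what remains is exactly $(L_G y)_x$ for $f(z)=(1+z)^{-3/2}$. This buys two simplifications over your version: the positive/negative mass split is unnecessary (your $y$ is already in $\{0,1\}^n$), and there is no diagonal correction to subtract, since no two bodies coincide in the first $d$ coordinates. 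Your adjacency-side reduction also works once you subtract the $y_i f(0)$ term coming from the coincident source/target pair at $(x_i,0),(x_i,h)$, but the paper's height-encoding is a bit cleaner.
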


The fast multipole method of Greengard and Rokhlin~\cite{gr87, gr89} solves one step of this $n$-body problem in time $(\log(n/\epsilon))^{O(d)} n^{1+o(1)}$. Our Corollary~\ref{cor:nbodyhardness} shows that assuming $\SETH$, such an exponential dependence on $d$ is required and cannot be improved. To the best of our knowledge, this is the first time such a formal limitation on fast multipole methods has been proved. This hardness result also applies to fast multipole methods for other popular kernels, like the Gaussian kernel $\k(u,v) = \exp(-\|u - v\|_2^2)$, as well.

\subsubsection{Sparsification}

We next show that sparsification can be performed in almost-linear time in high dimensions for kernels that are ``multiplicatively Lipschitz'' functions of the $\ell_2$-distance. We say $f:\mathbb{R}_{\ge 0}\rightarrow \mathbb{R}_{\ge 0}$ is $(C,L)$-\emph{multiplicatively Lipschitz} for $C > 1, L > 1$ if for all $x\in \mathbb{R}_{\ge 0}$ and all $\rho\in (1/C,C)$,
$$C^{-L} f(x)\le f(\rho x)\le C^L f(x).$$
Here are some popular functions that are helpful to think about in the context of our results:\\
${}$\hspace{4mm}1. $f(z) = z^L$ for any positive or negative constant $L$. This function is $(C,|L|)$-multiplicatively Lipschitz for any $C > 1$. \\
${}$\hspace{4mm}2. $f(z) = e^{-z}$. This function is not $(C,L)$-multiplicatively Lipschitz for any $L > 1$ and $C > 1$. We call this the \emph{exponential function}.\\
${}$\hspace{4mm}3.  The piecewise function $f(z) = e^{-z}$ for $z\le L$ and $f(z) = e^{-L}$ for $z > L$. This function is $(C,O(L))$-multiplicatively Lipschitz for any $C > 1$. We call this a \emph{piecewise exponential function}. \\
${}$\hspace{4mm}4. The piecewise function $f(z) = 1$ for $z\le k$ and $f(z) = 0$ for $z > k$, where $k\in \mathbb{R}_{\ge 0}$. This function is not $(C,L)$-multiplicatively Lipschitz for any $C > 1$ or $L > 1$. This is a \emph{threshold function}.

We show that multiplicatively Lipschitz functions can be sparsified in $n^{1+o(1)} \poly(d)$ time: 

\begin{theorem}[Informal version of Theorem~\ref{thm:sparsify-lipschitz}]\label{thm:informal-sparsify-lipschitz}
%\ifdefined\isfocs\else[Informal version of Theorem~\ref{thm:sparsify-lipschitz}]\fi\label{thm:informal-sparsify-lipschitz} 
For any function $f$ such that $f$ is $(2, L)$-multiplicatively Lipschitz, building a $(1\pm\epsilon)$-spectral sparsifier of the $\k$-graph on $n$ points in $\mathbb{R}^d$ where $\k(u,v) = f(\|u - v\|_2^2)$, with $O(n \log n / \eps^2)$ edges, can be done in time
\begin{align*}
 O(nd \sqrt{L \log n}) + n \log n \cdot 2^{O(\sqrt{L \log n})} \cdot (\log\alpha)/\eps^2 .
\end{align*}
\end{theorem}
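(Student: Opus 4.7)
My plan is to reduce sparsification to a collection of geometrically bucketed ``single-scale'' subproblems, solve each efficiently using dimension reduction followed by a grid-based spatial decomposition, and then sparsify the union.

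\textbf{Step 1 (Scale decomposition).} I would partition the edges of the $\k$-graph $G$ by squared distance into $O(\log\alpha)$ scale classes $S_i := \{(u,v) : \|u-v\|_2^2 \in [2^i, 2^{i+1})\}$ and let $G_i$ be the corresponding weighted subgraph. A $(1\pm\eps)$-spectral sparsifier of each $G_i$ yields one for $G$ by taking the union, since Laplacian quadratic forms decompose additively across $G_i$. The $(2,L)$-multiplicatively Lipschitz condition guarantees that within each $G_i$ all edge weights lie in an interval of ratio $2^{O(L)}$, which is the essential structural regularity we will exploit.

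\textbf{Step 2 (Dimension reduction).} I would apply a Johnson--Lindenstrauss projection $\Pi : \R^d \to \R^k$ with $k = \Theta(\sqrt{L\log n})$, using a fast construction so the projection costs $O(nd\sqrt{L\log n})$. Distances are preserved up to a multiplicative factor that, after amplification by the Lipschitz exponent $L$, distorts every weight by at most $2^{O(\sqrt{L\log n})}$; this controlled weight distortion will be absorbed by an oversampling factor in the sparsification step. After projection I relabel each point by its image $\Pi x_i \in \R^k$, noting that squared distances are preserved up to the factor above so the scale-buckets $S_i$ are approximately preserved (up to shifting a point into a neighboring bucket, which we handle by inflating each bucket by a constant factor of scales).

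\textbf{Step 3 (Per-scale geometric covering).} In the low-dimensional image, for each scale $i$ I would overlay an axis-aligned grid of side length $\Theta(2^{i/2})$; a point $\Pi u$ has at most $2^{O(k)} = 2^{O(\sqrt{L\log n})}$ grid neighborhoods that can contain another point at scale $i$. For every point I enumerate all other points in these neighborhoods, giving an edge set $E_i$ of size $n \cdot 2^{O(\sqrt{L\log n})}$ that contains every edge at scale $i$ (and some slightly off-scale edges, all of comparable weight). For each pair $(u,v) \in E_i$ I attach the true weight $f(\|x_u - x_v\|_2^2)$ computed in the original coordinates, so weights are exact even though the selection used the projection.

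\textbf{Step 4 (Sparsification by oversampling).} On each $G_i$ restricted to $E_i$, I use the oversampling framework (a la Spielman--Srivastava / Koutis--Miller--Peng) with importance scores equal to upper bounds on effective resistances derived from the near-uniform weight structure: since weights inside $G_i$ span a factor $2^{O(\sqrt{L\log n})}$, weighted-degree sampling with a $2^{O(\sqrt{L\log n})}$ oversampling factor yields a $(1\pm\eps)$-sparsifier of $G_i$ with $O(n\log n/\eps^2)$ edges, constructed in $n\log n \cdot 2^{O(\sqrt{L\log n})}/\eps^2$ time. Summing over $O(\log\alpha)$ scales gives a union $H$ of $O(n\log n (\log\alpha)/\eps^2)$ edges which $(1\pm\eps)$-approximates $G$; a final application of a near-linear-time spectral sparsifier to $H$ reduces the edge count to $O(n\log n/\eps^2)$ without changing the asymptotic running time, giving the claimed bound.

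\textbf{Main obstacle.} The delicate point is the joint choice of the JL target dimension $k$ with the Lipschitz parameter $L$: I need the JL distortion, after being raised to the $L$-th power, to remain $2^{O(\sqrt{L\log n})}$, while the grid cost $2^{O(k)}$ must match the same bound, forcing $k \asymp \sqrt{L\log n}$. Verifying that this balance simultaneously (i) keeps each point's local grid neighborhood of size $2^{O(\sqrt{L\log n})}$, (ii) preserves bucket membership up to constants, and (iii) produces weight ratios for which oversampling with only the same $2^{O(\sqrt{L\log n})}$ overhead suffices, is the technical heart of the argument.
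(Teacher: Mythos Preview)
Your Steps 1 and 2 (scale bucketing and JL to $k=\Theta(\sqrt{L\log n})$ dimensions, balancing the $2^{O(k)}$ spatial blowup against the $n^{O(L/k)}$ weight distortion) are exactly the paper's choices. The gap is in Step~3.

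You claim that after gridding at side $\Theta(2^{i/2})$, each point sees only $2^{O(k)}$ neighboring cells and therefore $|E_i|\le n\cdot 2^{O(\sqrt{L\log n})}$. The first statement is true, but the second does not follow: $2^{O(k)}$ bounds the number of neighboring \emph{cells}, not the number of \emph{points} in them. If all $n$ points lie at pairwise distance $\Theta(1)$, a single scale class already has $\Theta(n^2)$ edges sitting in $O(1)$ cells; your ``enumerate all other points in these neighborhoods'' step is then $\Omega(n^2)$ and the running-time claim collapses.

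The paper sidesteps precisely this by replacing the grid with a $\tfrac12$-\emph{well-separated pair decomposition} on the projected points. The WSPD outputs $n\cdot 2^{O(k)}$ bicliques $(A_\ell,B_\ell)$ covering all pairs, with each vertex in at most $2^{O(k)}\log\alpha$ of them. A biclique is stored \emph{implicitly} as the pair $(A_\ell,B_\ell)$, never as its $|A_\ell|\,|B_\ell|$ edges. Because all edges inside one biclique have weight within a $2^{O(\sqrt{L\log n})}$ ratio, and because in a complete bipartite graph the effective resistance of every edge is at most $(|A_\ell|+|B_\ell|)/(|A_\ell||B_\ell|)$ up to that ratio, the oversampling theorem lets you draw $2^{O(\sqrt{L\log n})}(|A_\ell|+|B_\ell|)\log n/\eps^2$ uniform samples per biclique without ever listing its edges. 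Summing over bicliques gives the stated bound. Your grid idea becomes correct if you group edges by \emph{pairs of cells} and treat each cell-pair as a biclique to sample from implicitly---that is essentially a WSPD done by hand---but as written, explicit edge enumeration is fatal.

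Step~4 has a related, smaller issue: near-uniform weights alone do not make ``weighted-degree sampling with a $2^{O(\sqrt{L\log n})}$ oversampling factor'' a valid spectral sparsifier of an arbitrary $G_i$, since leverage scores depend on combinatorial structure (a unit-weight path has every leverage score equal to~$1$). It is the complete-bipartite structure of each piece, not just the weight ratio, that yields the leverage-score upper bound driving the sampling.
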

This Theorem applies even when $d = \Omega(\log n)$. When $L$ is constant, the running time simplifies to $O(nd \sqrt{ \log n} + n^{1+o(1)} \log \alpha/ \eps^2)$. This covers the case when $f(x)$ is any rational function with non-negative coefficients, like $f(z) = z^L$ or $f(z) = z^{-L}$. 

It may seem more natural to instead define $L$-multiplicatively Lipschitz functions, without the parameter $C$, as functions with $\rho^{-L} f(x)\le f(\rho x)\le \rho^L f(x)$ 
for all $\rho$ and $x$. Indeed, an $L$-multiplicatively Lipschitz function is also $(C,L)$-multiplicative Lipschitz for any $C>1$, so our results show that efficient sparsification is possible for such functions. However, the parameter $C$ is necessary to characterize when efficient sparsification is possible. Indeed, as in Theorem~\ref{thm:informal-sparsify-lipschitz} above, it is sufficient for $f$ to be $(C,L)$-multiplicative Lipschitz for a $C$ that is bounded away from 1. To complement this result, we also show a lower bound for sparsification for any function $f$ which is not $(C,L)$-multiplicatively Lipschitz for any $L$ and sufficiently large $C$:

\begin{theorem}[Informal version of Theorem \ref{thm:high-spars-hard}]\label{thm:informal-high-spars-hard}
%\ifdefined\isfocs \else[Informal version of Theorem \ref{thm:high-spars-hard}]\fi\label{thm:informal-high-spars-hard}
Consider an $L > 1$. There is some sufficiently large value $C_L > 1$ depending on $L$ such that for any decreasing function $f:\mathbb{R}_{\ge 0}\rightarrow \mathbb{R}_{\ge 0}$ that is not $(C_L,L)$-multiplicatively Lipschitz, no $O(n2^{L^{.48}})$-time algorithm for constructing an $O(1)$-spectral sparsifier of the $\k$-graph of a set of $n$ points in $O(\log n)$ dimensions exists assuming $\SETH$, where $\k(x,y) = f(\|x-y\|_2^2)$.
\end{theorem}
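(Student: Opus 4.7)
The plan is to give a fine-grained reduction from the Bichromatic Orthogonal Vectors problem in dimension $d = \omega(\log n)$, which requires $n^{2-o(1)}$ time under $\SETH$. The rough idea is to embed an OV instance as a point set in $O(\log n)$ dimensions so that orthogonal pairs correspond to ``heavy'' edges of weight $f(x^*)$ and non-orthogonal pairs to ``light'' edges of weight at most $f(x^*)/C_L^L$; an $O(1)$-spectral sparsifier then reveals, via its total edge weight, whether any orthogonal pair exists.

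To begin, I would extract a witness to non-Lipschitzness. Since $f$ is decreasing and fails to be $(C_L, L)$-multiplicatively Lipschitz, there exist $x^* > 0$ and $\rho \in (1, C_L)$ with $f(\rho x^*) \le C_L^{-L} f(x^*)$; the ``upward'' violation is ruled out by monotonicity. Given an OV instance with sets $U, V \subset \{0,1\}^d$, I would apply the standard equal-Hamming-weight padding so that every vector has the same squared norm $w$, yielding $\|u - v\|_2^2 = 2w - 2\langle u, v\rangle$ for $u \in U, v \in V$. An affine rescaling then places orthogonal pairs at squared distance exactly $x^*$ while non-orthogonal pairs land at squared distance at least $\rho x^*$, provided we can amplify the discrete gap $\tfrac{2w}{2w-2}$ available from bit vectors up to the required factor $\rho$. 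This amplification is done by repeating coordinates, at the cost of increasing the ambient dimension.

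Next, I would run the hypothetical sparsification algorithm on this construction to obtain a $(1 \pm 1/4)$-sparsifier $H$, and compute $\sum_{e \in H} w_H(e)$ in $O(|E(H)|)$ time. The total edge weight of $G$ equals $\tfrac{1}{2}\tr(L_G)$ and is preserved up to a constant factor by $H$. In any ``no'' instance the total weight of $G$ is bounded by $n^2 \cdot f(x^*)/C_L^L$, whereas in a ``yes'' instance it is at least $f(x^*)$, so the two cases are distinguishable as soon as $C_L^L \gtrsim n^2$, i.e.\ whenever $L \log C_L \gg \log n$. Choosing $C_L$ sufficiently large as a function of $L$ ensures this threshold is met.

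The main obstacle is the quantitative bookkeeping between three quantities: the multiplicative gap $\rho$ we are forced to use (fixed by $f$, possibly very close to $1$), the bit-level gap $\tfrac{2w}{2w-2}$ available from OV (also close to $1$ for large $w$), and the ambient dimension of the embedding, which must remain $O(\log n)$ in order to invoke $\SETH$-hardness of OV. The exponent $0.48$ in $2^{L^{0.48}}$ should fall out of matching these parameters: amplifying a near-trivial gap up to $\rho$ requires roughly $\log \rho \cdot w$ coordinate repetitions, which simultaneously inflates the reduction's dimension and restricts how large $L$ can be driven through without dissolving the contradiction with $\SETH$. Everything else --- the Laplacian trace identity, the sparsifier's preservation of total weight, and the OV padding construction --- is standard.
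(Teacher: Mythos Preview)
Your high-level picture is right --- the proof is a fine-grained reduction that plants a single ``heavy'' edge of weight $f(x_0)$ in the YES case versus only ``light'' edges of weight $\le f(x_0)/\poly(n)$ in the NO case, and then reads this off from an $O(1)$-sparsifier. But two steps in your plan do not go through as written.

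\textbf{The test statistic must be the $A$--$B$ cut weight, not the total edge weight.} Your proposal thresholds $\sum_{e\in H} w_H(e)=\tfrac12\tr(L_H)$. The problem is that the $\k$-graph is the \emph{complete} graph on $A\cup B$, so it also contains all within-$A$ and within-$B$ edges. Under any embedding of OV (or closest-pair) vectors, pairs $u,u'\in A$ can be very close --- e.g.\ two bit vectors differing in one coordinate --- and since $f$ is decreasing these edges have weight $\ge f(x_0)$. There can be $\Theta(n^2)$ such edges in both the YES and NO instances, so the total weight is $\Omega(n^2 f(x_0))$ in both cases and the single heavy cross edge is invisible. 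The paper instead computes the weight of the $\tilde A$--$\tilde B$ cut in $H$, which equals $\mathbf{1}_{\tilde A}^\top L_H \mathbf{1}_{\tilde A}$ and is therefore preserved up to $(1\pm\eps)$ by the spectral sparsifier; this isolates exactly the cross edges.

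\textbf{The OV embedding has the wrong orientation, and the ``dimension amplification'' is a red herring.} With equal-weight padding, orthogonal pairs are the \emph{farthest} pairs ($\|u-v\|_2^2=2w$), not the closest, so an affine rescaling of the points cannot send them to $x^*$ while sending non-orthogonal pairs to $\ge \rho x^*$. The paper avoids this by reducing directly from bichromatic Hamming \emph{closest pair} (Rubinstein's $\SETH$-hardness, Theorem~\ref{thm:r18}): here YES means some $a\in A,b\in B$ are close, which after scaling by $\sqrt{x_0}/k$ gives a cross edge of weight $\ge f(x_0)$. No coordinate repetition is needed to ``amplify the gap'': in $\{0,1\}^{c_1\log n}$ the squared distances are integers in $[1,c_1\log n]$, so consecutive values differ multiplicatively by at least $1+1/(2c_1\log n)$. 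The exponent $0.48$ arises not from dimension blow-up but from choosing the instance size: set $n=2^{L^{0.49}}$ and $C_L=\rho=1+2\log(10\cdot 2^{L^{0.48}})/L$, so that (i) $\rho^L\ge 100n^2$ gives the required weight gap, (ii) $1+1/(2c_1\log n)>\rho$ so the point set is $\rho$-spaced, and (iii) the hypothesized $O(n\cdot 2^{L^{0.48}})$-time sparsifier runs in $n^{1+o(1)}$ time, contradicting the $n^{2-o(1)}$ lower bound for closest pair.
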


For example, when $L = \Theta(\log^{2+\delta} n)$ for some constant $\delta > 0$, Theorem~\ref{thm:informal-high-spars-hard} shows that there is a $C$ for which, whenever $f$ is not $(C,L)$-multiplicatively Lipschitz, the sparsification problem cannot be solved in time $n^{1+o(1)}$ assuming $\SETH$.

Bounding $C$ in terms of $L$ above is important. For example, if $C$ is small enough that $C^L = 2$, then $f$ could be close to constant. Such $\k$-graphs are easy to sparsify by uniformly sampling edges, so one cannot hope to show hardness for such functions.

Theorem~\ref{thm:informal-high-spars-hard} shows that geometric graphs for threshold functions, the exponential function, and the Gaussian kernel do not have efficient sparsification algorithms. Furthermore, this hardness result essentially completes the story of which \emph{decreasing} functions can be sparsified in high dimensions, modulo a gap of $L^{.48}$ versus $L$ in the exponent. The tractability landscape is likely much more complicated for non-decreasing functions. That said, many of the kernels used in practice, like the Gaussian kernel, are decreasing functions of distance, so our dichotomy applies to them.

We also show that our techniques for sparsification extend beyond kernels that are functions of $\ell_2$ norms; specifically $\k(u,v) = |\langle u,v\rangle|$:

\begin{lemma}[Informal version of Lemma \ref{lem:inner-sparsify}]\label{thm:sparsabsinnerproduct} %\ifdefined\isfocs\else[Informal version of Lemma \ref{lem:inner-sparsify}]\fi \label{thm:sparsabsinnerproduct}
The $\k(u,v) = |\langle u,v\rangle|$-graph on $n$ points in $\mathbb{R}^d$ can be $\epsilon$-approximately sparsified in $n^{1+o(1)} \poly(d) /\epsilon^2$ time.
\end{lemma}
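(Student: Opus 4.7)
My plan is to reduce sparsification of the $|\langle u,v\rangle|$-graph to a logarithmic collection of $\ell_2$-distance-based kernel sparsification subproblems, each on a thin annulus where the kernel is essentially constant, and then apply the sampling machinery that underlies Theorem~\ref{thm:informal-sparsify-lipschitz}.

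First I would bucket the points by norm, partitioning them into $O(\log\alpha)$ classes according to $\lfloor\log_2\|u\|_2\rfloor$. In $|\langle u,v\rangle|=\|u\|\|v\|\,|\cos\theta_{uv}|$ the factor $\|u\|\|v\|$ is constant up to a factor of $4$ within each of the $O(\log^2\alpha)$ ordered pairs of classes, so it suffices to sparsify the $|\cos\theta_{uv}|$-graph on unit vectors within each pair and take the union of the resulting sparsifiers. Next I would split the absolute value into positive and negative parts via $|\langle\hat u,\hat v\rangle|=\max\{0,\langle\hat u,\hat v\rangle\}+\max\{0,-\langle\hat u,\hat v\rangle\}$, yielding two graphs $G^+,G^-$ whose kernels, using the unit-vector identity $\langle x,y\rangle=1-\tfrac12\|x-y\|_2^2$, become $f(z)=\max(0,1-z/2)$ and $g(z)=\max(0,z/2-1)$ respectively as functions of squared $\ell_2$ distance. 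Since $G^-$ on $\{\hat u_i\}$ is identical to the $\max(0,\langle\cdot,\cdot\rangle)$-graph on the negated point set $\{-\hat u_i\}$, it suffices to handle $G^+$.

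Both $f$ and $g$ vanish at $z=2$ and so are not $(C,L)$-multiplicatively Lipschitz on the full range. I would circumvent this by dyadically decomposing $G^+$ by the scale of the kernel value: for $k=0,1,\dots,O(\log(n/\epsilon))$, bucket $k$ contains the pairs whose weight lies in $[2^{-k-1},2^{-k}]$, equivalently whose squared distance lies in the thin annulus $[2-2^{-k+1},2-2^{-k}]$. Within bucket $k$ the kernel is constant up to a factor of $2$, so for sampling purposes the bucket looks like an unweighted graph, and the LSH/KDE-based importance-sampling framework behind Theorem~\ref{thm:informal-sparsify-lipschitz} (applied with constant effective $L$) produces a spectral sparsifier of it with $O(n\log n/\epsilon^2)$ edges. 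Pairs contributing weight below $\epsilon/n^{\Theta(1)}$ are simply discarded: using the rank-$d$ representation $A=XX^\top$ to control the Laplacian's trace, the dropped mass can be shown to perturb the quadratic form by at most $\epsilon$. Summing over the $O(\log^2\alpha\cdot\log(n/\epsilon))=n^{o(1)}$ buckets yields the claimed $n^{1+o(1)}\poly(d)/\epsilon^2$ running time.

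The main obstacle I foresee is the near-orthogonal regime $|\cos\theta|\approx 0$, where the dyadic annuli around $z=2$ become exponentially thin and generic high-dimensional $\ell_2$-LSH for a band of width $o(1)$ would cost $n^{\Omega(1)}$ per bucket. I would resolve this by performing the LSH in terms of the \emph{angle} rather than the squared distance, using hyperplane (SimHash) LSH on the unit sphere: the collision probability $1-\theta/\pi$ is $(1/\pi)$-Lipschitz in $\theta$ even when $\theta$ is close to $\pi/2$, so distinguishing dyadic scales of $|\cos\theta|$ costs only an $n^{o(1)}$ overhead and gives the $2^{O(\sqrt{\log n})}$-style blowup that matches the bound in Theorem~\ref{thm:informal-sparsify-lipschitz}. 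A secondary technical issue is certifying that the truncated low-weight edges are negligible \emph{spectrally} (not just in trace norm); I expect this to follow by bounding the dropped Laplacian in the PSD order by an $\epsilon$-fraction of the diagonal degree matrix, whose spectrum is in turn controlled by the rank-$d$ structure of $XX^\top$.
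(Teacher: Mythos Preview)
Your reduction of $G^-$ to $G^+$ by negating all points is incorrect: $\langle -u,-v\rangle=\langle u,v\rangle$, so negating everything returns $G^+$, not $G^-$ (you would need the $2n$-point set $\{u_i\}\cup\{-u_i\}$ instead). More seriously, the heart of your plan---dyadic bucketing by kernel value and discarding pairs with $|\langle u,v\rangle|<\epsilon/\poly(n)$---does not work spectrally. The matrix with entries $|\langle x_i,x_j\rangle|$ is \emph{not} low rank (the absolute value destroys the rank-$d$ structure of $XX^\top$), so no trace argument controls the truncation; a single vertex $u$ that is nearly orthogonal to every other point has all its incident weight below your threshold, and dropping those edges disconnects $u$, violating $(1-\epsilon)L_G\preceq L_H$. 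Within each annulus bucket near $z=2$, the graph of ``pairs with squared distance in a window of width $2^{-k}$'' is an arbitrary unweighted graph with no Lipschitz structure, so neither the WSPD machinery behind Theorem~\ref{thm:informal-sparsify-lipschitz} nor SimHash yields leverage-score overestimates for it in $n^{1+o(1)}$ time: SimHash collision probability is $1-\theta/\pi\approx 1/2$ near orthogonality, and being Lipschitz in $\theta$ is irrelevant---resolving dyadic scales of $|\cos\theta|\sim 2^{-k}$ requires $\sim 2^{2k}$ hashes. Finally, your norm bucketing introduces $\polylog\alpha$ factors, whereas the formal Lemma~\ref{lem:inner-sparsify} has \emph{no} dependence on $\alpha$ at all.

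The paper's proof is entirely different and exploits the inner-product structure algebraically rather than metrically. The key observation is that in the \emph{unweighted} inner-product graph (edge present when $|\langle u,v\rangle|\ge\|u\|\|v\|/(d+1)$), every independent set has size at most $d$, because the Gram matrix of pairwise near-orthogonal vectors is diagonally dominant and hence full rank (Lemma~\ref{lem:indep-rank}). Thus this graph is $(d+1)$-dependent, hence $\Omega(n^2/d^2)$-dense, hence contains a large expander; the paper converts this into an $n\,\poly(d,\log n)$-time routine $\LowDiamSet$ that repeatedly peels off clusters of low effective-resistance diameter, using a novel Markov-type lower tail for effective resistances under random edge sampling (Lemma~\ref{lem:sparse-lower-tail}). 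These clusters, together with a sampling data structure built from an $\ell_1$ sketch of the linear map $v\mapsto(\langle u,v\rangle)_{u\in S}$ (Lemma~\ref{lem:simple-sampling-ds}), are assembled into a ``$(\zeta,\kappa,\delta)$-cover'' that certifies leverage-score overestimates and supports importance sampling (Proposition~\ref{prop:sparsify-given-cover}). The $\alpha$-independence is achieved separately (Proposition~\ref{prop:desired-cover}) by routing pairs of very different norms through an approximate basis of at most $d+1$ high-norm anchors. None of this resembles a reduction to distance-kernel sparsification.
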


\subsubsection{Laplacian solving}

Laplacian system solving has a similar tractability landscape to that of adjacency matrix multiplication. We prove the following algorithmic result for solving Laplacian systems:

\begin{theorem}[Informal version of Corollary~\ref{cor:exactmonomial} and Proposition~\ref{prop:woodbury}]\label{thm:informal-lapl-poly}
%\ifdefined\isfocs \else[Informal version of Corollary~\ref{cor:exactmonomial} and Proposition~\ref{prop:woodbury}]\fi\label{thm:informal-lapl-poly}
There is an algorithm that takes $ n^{1+o(1)} \poly(d,\log(n\alpha/\epsilon))$ time to $\epsilon$-approximately solve Laplacian systems on $n$-vertex $\k$-graphs, where $\k(u,v) = f(\|u-v\|_2^2)$ for some (nonnegative) polynomial $f$.\footnote{$f$ is a nonnegative function if $f(x) \geq 0$ for all $x \geq 0$.}
\end{theorem}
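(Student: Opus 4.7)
The plan is to combine an exact low-rank representation of the adjacency matrix with a Sherman--Morrison--Woodbury inversion, exactly mirroring the two building blocks cited. Writing $f(t) = \sum_{i=0}^{D} c_i t^i$ and expanding each $\|u-v\|_2^{2i} = (\|u\|_2^2 - 2\langle u,v\rangle + \|v\|_2^2)^i$ as a polynomial in the coordinates of $u$ and $v$ yields an explicit factorization $A = U V^T$ with $U, V \in \R^{n \times r}$, where $r = \binom{d+2D}{2D} = \poly(d)$ for $D = O(1)$. The rows of $U$ and $V$ are just monomial evaluations at the points $x_i$ weighted by the $c_i$, and can be written down in $O(nr)$ time; this realizes Corollary~\ref{cor:exactmonomial}.

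Given this factorization, I would compute the degree vector $A\mathbf{1} = U(V^T \mathbf{1})$ in $O(nr)$ time to form the diagonal degree matrix $D_{\deg}$, and write $L = D_{\deg} - UV^T$. Since $L$ is singular with kernel $\mathrm{span}(\mathbf{1})$, pass to $\wt L := L + \tfrac{1}{n}\mathbf{1}\mathbf{1}^T$, which is strictly positive definite and still equals $D_{\deg}$ minus a rank-$(r+1)$ matrix. Any Laplacian solve reduces, after projecting $b$ onto $\mathbf{1}^\perp$, to computing $\wt L^{-1} b$. The Woodbury identity then expresses $\wt L^{-1} b$ in terms of $D_{\deg}^{-1}$, $U$, $V$, and the inverse of a single $(r+1) \times (r+1)$ Schur-complement matrix, costing $O(nr + r^3)$ arithmetic operations overall; this is the content of Proposition~\ref{prop:woodbury}.

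The main obstacle I expect is numerical precision rather than algebraic structure. Both $D_{\deg}$ and the Schur-complement matrix can in principle have small singular values, so naive inversion would not yield the $\log(n\alpha/\epsilon)$ bit dependence advertised. The remedy is to bound the condition numbers of $D_{\deg}$ and of the Schur complement by $\poly(n,\alpha)$ using the diameter parameter $\alpha$ together with the nonnegativity of $f$, and then to invert the $O(r) \times O(r)$ block using high-precision Gaussian elimination at $O(\log(n\alpha/\epsilon))$ bits per entry. This last step runs in $\poly(r, \log(n\alpha/\epsilon)) = \poly(d, \log(n\alpha/\epsilon))$ time, and combining with the $O(nr) = n \cdot \poly(d)$ per-application work and standard logarithmic overheads yields the claimed $n^{1+o(1)}\poly(d, \log(n\alpha/\epsilon))$ runtime.
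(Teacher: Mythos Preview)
Your approach is correct and is essentially what the paper intends: the citation to Corollary~\ref{cor:exactmonomial} is precisely for the explicit rank-$r$ factorization of the kernel matrix (with $r = \binom{2d+2D-1}{2D} = \poly(d)$ for fixed degree $D$), and the citation to Proposition~\ref{prop:woodbury} is for inverting the resulting diagonal-plus-low-rank Laplacian. The paper does not spell out the proof beyond those two pointers, and your proposal is a faithful expansion.

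One small correction worth noting: the factorization $UV^\top$ you obtain has $f(0)$ on the diagonal, whereas the adjacency matrix $A$ has zeros there. So strictly $A = UV^\top - f(0)I$, and hence $L = (D_{\deg} + f(0)I) - UV^\top$; this is still diagonal minus rank $r$, so Woodbury applies unchanged with $D_{\deg}$ replaced by $D_{\deg} + f(0)I$. Your treatment of the $\mathbf{1}$-kernel via $\wt L = L + \tfrac{1}{n}\mathbf{1}\mathbf{1}^\top$ and the precision discussion (bounding condition numbers by $\poly(n,\alpha)$ via Proposition~\ref{prop:lapl-apx} and carrying $O(\log(n\alpha/\epsilon))$ bits) are both sound and at the level of rigor the informal statement calls for.
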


We show that this theorem is nearly tight via two hardness results. The first applies to multiplicatively Lipschitz kernels, while the second applies to kernels that are not multiplicatively Lipschitz. The second hardness result only works for kernels that are decreasing functions of $\ell_2$ distance. We now state our first hardness result: 

\begin{corollary}\label{cor:introsystemhardhighdim}
Consider a function $f$ that is $(2,o(\log n))$-multiplicatively Lipschitz for which $f$ cannot be $(\epsilon=2^{-\poly(\log  n)})$-approximated by a polynomial of degree at most $o(\log n)$. Then, assuming $\SETH$, there is no $n^{1+o(1)} \poly(d, \log(\alpha n/\epsilon))$-time algorithm for $\epsilon$-approximately solving Laplacian systems in the $\k$-graph on $n$ points, where $\k(u,v) = f(\|u-v\|_2^2)$.
\end{corollary}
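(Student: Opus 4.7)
The plan is to reduce the matrix-vector multiplication problem, which is $\SETH$-hard in this regime by Theorem~\ref{thm:informal-high}, to high-precision Laplacian system solving, using the spectral sparsifier of Theorem~\ref{thm:informal-sparsify-lipschitz} as a bridge. Suppose for contradiction that there is an $n^{1+o(1)}\poly(d,\log(\alpha n/\epsilon))$-time algorithm $\mathcal{A}_{\text{solve}}$ for $\epsilon$-approximate Laplacian solving on $L_G$. Because $f$ is $(2,o(\log n))$-multiplicatively Lipschitz, Theorem~\ref{thm:informal-sparsify-lipschitz} constructs a $(1\pm 1/2)$-spectral sparsifier $L_H$ of $L_G$ with $\wt{O}(n)$ non-zero entries in $n^{1+o(1)}$ time, so $L_H$ can be applied to any vector in $n^{1+o(1)}$ time.

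The core idea is to run iterative refinement on the \emph{forward} problem: using $L_H$ as a fast-multipliable spectral approximation to $L_G$ and $\mathcal{A}_{\text{solve}}$ as a fast approximator of $L_G^{+}$, compute $L_G y$ to additive accuracy $\epsilon'=2^{-\poly(\log n)}$ for any input $y$. First subtract the projection of $y$ onto $\mathbf{1}$ (which does not affect $L_G y$) so that $y\perp\mathbf{1}$; then set $w_0 := L_H y$ and iterate
\[ w_{k+1} \;:=\; w_k + L_H\bigl(y - L_G^{+} w_k\bigr), \]
with every application of $L_G^{+}$ replaced by an invocation of $\mathcal{A}_{\text{solve}}$ at inner accuracy $\epsilon$. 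Writing $w^{\star} := L_G y$ and using $L_G^{+} w^{\star}=y$, a direct calculation shows that the iterates (and hence the errors $e_k := w_k - w^{\star}$) stay in the range of $L_G$ and satisfy $e_{k+1} = (L_G - L_H)L_G^{+} e_k$. The sparsifier guarantee $\tfrac12 L_G \preceq L_H \preceq \tfrac32 L_G$ translates to $\|(L_G-L_H)L_G^{+} e\|_{L_G^{+}} \le \tfrac12 \|e\|_{L_G^{+}}$, so $O(\log(1/\epsilon'))$ iterations drive $\|e_k\|_{L_G^{+}}$ below $\epsilon'/\poly(n,\alpha)$; converting to $\ell_2$-norm costs only a $\sqrt{\lambda_{\max}(L_G)}\le \poly(n,\alpha)$ factor. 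Each iteration is one $L_H$-application and one $\mathcal{A}_{\text{solve}}$ call, both in $n^{1+o(1)}\poly(d,\log(n\alpha/\epsilon))$ time, so the total cost is $n^{1+o(1)}\poly(d,\log(n\alpha/\epsilon'))$. Picking $\epsilon'=2^{-\poly(\log n)}$ then contradicts Theorem~\ref{thm:informal-high}.

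The main obstacle is simultaneously bookkeeping two independent error sources: the $\epsilon$-error of $\mathcal{A}_{\text{solve}}$ introduced at every iteration, and the $L_G^{+}$-to-$\ell_2$ conversion at the end. Setting the solver accuracy $\epsilon = \epsilon'/\poly(n,\alpha)$ keeps the $O(\log(1/\epsilon'))$ per-step solver errors dominated by the geometric contraction of iterative refinement, and this only inflates the solver cost by a $\poly(d,\log(n\alpha/\epsilon'))$ factor, which remains $n^{1+o(1)}$ for $\epsilon'=2^{-\poly(\log n)}$. One should also check that with multiplicative Lipschitz parameter $L=o(\log n)$ the sparsification bound of Theorem~\ref{thm:informal-sparsify-lipschitz} indeed gives $n^{1+o(1)}$ time, which it does since $2^{O(\sqrt{L\log n})}=n^{o(1)}$ in this regime. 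The conceptual content of the argument is simply that multiplicative Lipschitzness of $f$ supplies, via sparsification, a fast-multipliable spectral preconditioner $L_H$ for $L_G$, and that such a preconditioner converts a fast $L_G^{+}$-oracle into a fast $L_G$-multiplier by iterative refinement on the forward problem.
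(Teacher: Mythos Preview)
Your proposal is correct and follows essentially the same approach as the paper. The paper packages your iterative-refinement argument into Lemma~\ref{lem:mul-given-solve} (the \textsc{MultiplyG} algorithm), which uses the sparsifier $L_H$ and the solver $\textsc{SolveG}$ to approximate $L_Gx$ via the recursion $x_{\text{res}}\gets x-\textsc{SolveG}(L_Hx)$ and output $L_Hx+\textsc{MultiplyGAdditive}(x_{\text{res}})$; unrolling this recursion gives exactly your iteration $w_{k+1}=w_k+L_H(y-L_G^{+}w_k)$, with the same $(L_G-L_H)L_G^{+}$ contraction in the $L_G^{+}$-norm. The paper then invokes Theorem~\ref{thm:sparsify-lipschitz} and Theorem~\ref{thm:hardnessapprox} just as you invoke their informal versions.
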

\ifdefined\isfocs
In the full version, 
\else
In Section~\ref{sec:equivalence},
\fi
we will see, using an iterative refinement approach, that if a $\k$ graph can be efficiently sparsified, then there is an efficient Laplacian multiplier for $\k$ graphs if and only if there is an efficient Laplacian system solver for $\k$ graphs. Corollary~\ref{cor:introsystemhardhighdim} then follows using this connection: it describes functions which we have shown have efficient sparsifiers but not efficient multipliers.

Corollary~\ref{cor:introsystemhardhighdim}, which is the first of our two hardness results in this setting, applies to slowly-growing functions that do not have low-degree polynomial approximations, like $f(z) = 1/(1 + z)$. Next, we state our second hardness result:

\begin{theorem}[Informal version of Theorem \ref{thm:high-lsolve-hard}]\label{thm:informal-high-lsolve-hard}
%\ifdefined\isfocs \else[Informal version of Theorem \ref{thm:high-lsolve-hard}]\fi\label{thm:informal-high-lsolve-hard}
Consider an $L > 1$. There is some sufficiently large value $C_L > 1$ depending on $L$ such that for any decreasing function $f:\mathbb{R}_{\ge 0}\rightarrow \mathbb{R}_{\ge 0}$ that is not $(C_L,L)$-multiplicatively Lipschitz, no $O(n  2^{L^{.48}} \log \alpha)$-time algorithm exists for solving Laplacian systems $2^{-\poly(\log n)}$ approximately in the $\k$-graph of a set of $n$ points in $O(\log n)$ dimensions assuming $\SETH$, where $\k(u,v) = f(\|u-v\|_2^2)$. 
\end{theorem}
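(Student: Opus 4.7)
The plan is to adapt the sparsification-hardness reduction from Theorem~\ref{thm:informal-high-spars-hard} into one that rules out fast Laplacian system solvers. The common structural ingredient is that a decreasing $f$ failing to be $(C_L,L)$-multiplicatively Lipschitz must exhibit a near-threshold: there exist $x_0>0$ and $\rho\in(1/C_L,C_L)$ with $f(x_0)/f(\rho x_0) > C_L^{L}$. For a decreasing $f$ and sufficiently large $C_L$, this means edge weights on squared distances just below $x_0$ are enormously larger than those on squared distances just above $\rho x_0$, giving us a soft threshold to encode a decision problem.

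I would reduce from Bichromatic Orthogonal Vectors (or a close-pair variant), which requires time $n^{2-o(1)}$ on $n$ vectors in $d=\Theta(\log n)$ dimensions assuming $\SETH$. Given $A,B \subseteq \{0,1\}^d$, I would reuse the Euclidean embedding $\phi$ (with image dimension $O(\log n)$) from the sparsification lower bound, tuned so that $\|\phi(a)-\phi(b)\|_2^2$ falls just below $x_0$ when $\langle a,b\rangle=0$ and just above $\rho x_0$ otherwise. The $\k$-graph edge weight then jumps by more than $C_L^{L}$ exactly across the ``orthogonal'' pairs. To extract the presence or absence of such a heavy edge from a single Laplacian solve, I would attach two designated probe vertices $s,t$ together with a small calibration gadget so that: (i) in a NO instance, every $s$-$t$ path uses only light edges and $R_{\mathrm{eff}}(s,t)$ is large and predictable; (ii) in a YES instance, a single heavy edge drastically shorts $R_{\mathrm{eff}}(s,t)$. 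Since $R_{\mathrm{eff}}(s,t)=(e_s-e_t)^{\top} L_G^{+}(e_s-e_t)$ can be recovered, up to relative error, from a single Laplacian solve with right-hand side $e_s-e_t$, a $2^{-\poly(\log n)}$-accurate solver would decide the instance.

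To make the parameters add up, I would pick $C_L$ large enough that the $C_L^{L}$ weight jump dominates the aggregate contribution of the $\Theta(n^2)$ light edges to $R_{\mathrm{eff}}(s,t)$; this is where the $L^{0.48}$ exponent in the claimed lower bound appears, since one amplifies the basic gadget roughly $L^{0.48}$ times to widen the resistance gap while only multiplying the vertex count by a sub-exponential factor, mirroring the slack already present in Theorem~\ref{thm:informal-high-spars-hard}. I would then verify that the constructed point set has diameter $\alpha \le 2^{\poly(\log n)}$ and lives in $O(\log n)$ dimensions, so that the $\log \alpha$ factor in the hypothesized solver's runtime stays polylogarithmic and the reduction cleanly contradicts $\SETH$.

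The main obstacle will be engineering the probe gadget so that the $R_{\mathrm{eff}}(s,t)$ gap between YES and NO instances genuinely survives both the solver's $2^{-\poly(\log n)}$ error and the contribution from the many light edges adjacent to $s$ and $t$. A natural implementation is a weighted series-parallel attachment in which the path $s\to a\to b\to t$ contributes resistance $\Theta(1/w_{ab})$ when $(a,b)$ is heavy, while all other parallel paths contribute a calibrated additive constant; balancing these two quantities is precisely what the $L^{0.48}$ versus $L$ slack buys. This is also why the argument relies crucially on $f$ being decreasing: then heavy edges are automatically the short ones, which is essential for the resistance-shortening picture to go through.
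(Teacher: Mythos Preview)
Your high-level intuition---that the non-Lipschitz point $x_0$ creates a weight threshold one can exploit---is correct and shared with the sparsification hardness. But the concrete mechanism you propose has a real gap. In the $\k$-graph model every vertex is a point in $\R^{O(\log n)}$ and \emph{every} pair receives an edge of weight $f(\|u-v\|_2^2)$; you cannot attach probe vertices $s,t$ with handpicked incidences or realize a series-parallel gadget. Any point you place for $s$ is connected to all of $A$, all of $B$, and to $t$ simultaneously, with weights dictated by distances you only partially control. The ``path $s\to a\to b\to t$ contributes $\Theta(1/w_{ab})$'' picture ignores this complete-graph topology: the shortcut edges $s$--$b$, $a$--$t$, $s$--$t$ and all intra-$A$, intra-$B$ edges participate in $R_{\mathrm{eff}}(s,t)$, and there is no way to damp them without already knowing which $(a,b)$ is heavy. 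Even granting some gadget, a single $R_{\mathrm{eff}}(s,t)$ query is too coarse to locate one heavy edge among $\Theta(n^2)$ candidates at an unknown position.

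The paper avoids gadgets entirely. It runs $O(\log n)$ Laplacian solves, one per row of a Johnson--Lindenstrauss matrix $P$, to obtain $\tilde Z \approx P L_G^{\dagger}$; for every vertex $c$ it then forms the $O(\log n)$-dimensional sketch $\tilde Z b_c$. By Propositions~\ref{prop:res2-res} and~\ref{prop:weight-res}, the Euclidean distance $\|\tilde Z b_p - \tilde Z b_q\|_2$ tracks $\Reff_G(p,q)$, which in turn certifies whether any heavy $A$--$B$ edge exists. An $O(\log n)$-approximate nearest-neighbor query (Theorem~\ref{thm:l2-ann}) on these sketches then scans all $A\times B$ pairs in $n^{1+o(1)}$ time. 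The $2^{L^{0.48}}$ factor comes not from gadget amplification but from choosing $n=2^{L^{0.49}}$ so that the Hamming-distance granularity $1+\Theta(1/\log n)$ exceeds the required $\rho = 1 + \Theta(\log n / L)$.
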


This yields a quadratic time hardness result when $L = \Omega(\log^2 n)$. By comparison, the first hardness result, Corollary~\ref{cor:introsystemhardhighdim}, only applied for $L = o(\log n)$. In particular, this shows that for non-Lipschitz functions like the Gaussian kernel, the problem of solving Laplacian systems and, in particular, doing semi-supervised learning, cannot be done in almost-linear time assuming $\SETH$.

\subsection{Our Techniques}

\subsubsection{Multiplication}

Our goal in matrix-vector multiplication is, given points $P = \{ x_1, \ldots, x_n\} \subset \R^d$ and a vector $y \in \R^n$, to compute a $(1 \pm \eps)$-approximation to the vector $L_G \cdot y$ where $L_G$ is the Laplacian matrix of the $\k$ graph on $P$, for $\eps = n^{-\Omega(1)}$ (see
\ifdefined\isfocs
the full version
\else
Definition~\ref{def:approxmult}\fi for the precise error guarantees on $\eps$). We call this the $\k$ Laplacian Evaluation ($\KLapE$) problem. A related problem, in which the Laplacian matrix $L_G$ is replaced by the adjacency matrix $A_G$, is the $\k$ Adjacency Evaluation ($\KAdjE$) problem.

We begin by showing a simple, generic equivalence between $\KLapE$ and $\KAdjE$ for any $\k$: an algorithm for either one can be used as a black box to design an algorithm for the other with only negligible blowups to the running time and error. It thus suffices to design algorithms and prove lower bounds for $\KAdjE$.

\paragraph*{Algorithmic Techniques}

We use two primary algorithmic tools: the Fast Multipole Method (FMM), and a `kernel method' for approximating $A_G$ by a low-rank matrix.

FMM is an algorithmic technique for computing aggregate interactions between $n$ bodies which has applications in many different areas of science. Indeed, when the interactions between bodies is described by our function $\k$, then the problem solved by FMM coincides with our $\KAdjE$ problem. 

Most past work on FMM either considers the low-dimensional case, in which $d$ is a small constant, or else the low-error case, in which $\eps$ is a constant. Thus, much of the literature does not consider the simultaneous running time dependence of FMM on $\eps$ and $d$. In order to solve $\KAdjE$, we need to consider the high-dimensional, high-error case. We thus give a clean mathematical overview and detailed analysis of the running time of FMM in 
\ifdefined\isfocs
the full version,
\else Section~\ref{sec:fastmm},
\fi
following the seminal work of Greengard and Strain~\cite{gs91}, which may be of independent interest.  

As discussed in section~\ref{subsubsec:highdimmult} above, the running time of FMM depends exponentially on $d$, and so it is most useful in the low-dimensional setting. Our main algorithmic tool in high dimensions is a low-rank approximation technique: we show that when $f(x)$ can be approximated by a sufficiently low-degree polynomial (e.g. any degree $o(\log n)$ suffices in dimension $\Theta(\log n)$), then we can quickly find a low-rank approximation of the adjacency matrix $A_G$, and use this to efficiently multiply by a vector. Although this seems fairly simple, in Theorem~\ref{thm:informal-high} we show it is optimal: when such a low-rank approximation is not possible in high dimensions, then $\SETH$ implies that $n^{2 - o(1)}$ time is required for $\KAdjE$.

The simplest way to show that $f(x)$ can be approximated by a low-degree polynomial is by truncating its Taylor series. In fact, the FMM \emph{also} requires that a truncation of the Taylor series of $f$ gives a good approximation to $f$. By comparison, the FMM puts more lenient restrictions on what degree the series must be truncated to in low dimensions, but in exchange adds other constraints on $f$, including that $f$ must be monotone. \ifdefined\isfocs
See the full version 
\else
See Section~\ref{sec:fastmm_general} and Corollary~\ref{cor:kernelalg} 
\fi 
for more details.

\paragraph*{Lower Bound Techniques}

We now sketch the proof of Theorem~\ref{thm:informal-high}, our lower bound for $\KAdjE$ for many functions $\k$ in high enough dimensions (typically $d = \Omega(\log n)$), assuming $\SETH$. Although $\SETH$ is a hardness hypothesis about the Boolean satisfiability problem, a number of recent results~\cite{aw15,r18,c18,sm19} have showed that it implies hardness for a variety of \emph{nearest neighbor search} problems. Our lower bound approach is hence to show that $\KAdjE$ is useful for solving nearest neighbor search problems.

The high-level idea is as follows. Suppose we are given as input points $X = \{ x_1, \ldots, x_n\} \subset \{0,1\}^d$, and our goal is to find the closest pair of them. For each $\ell \in \{1,2,\ldots,d\}$, let $c_\ell$ denote the number of pairs of distinct points $x_i, x_j \in X$ with distance $\|x_i - x_j\|_2^2 = \ell$. Using an algorithm for $\KAdjE$ for our function $\k(x,y) = f(\| x-y \|_2^2)$, we can estimate
$$1^\top A_G 1 = \sum_{i \neq j} \k(x_i, x_j) = \sum_{\ell=1}^d c_\ell \cdot f(\ell).$$
Similarly, for any nonnegative reals $a,b \geq 0$, we can take an appropriate affine transformation of $X$ so that an algorithm for $\KAdjE$ can estimate
\begin{align}\label{introeq}\sum_{\ell=1}^d c_\ell \cdot f(a\cdot \ell + b).\end{align}

Suppose we pick real values $a_1, \ldots,a_d,b_1,\ldots,b_d \geq 0$ and define the $d \times d$ matrix $M$ by $M[i,\ell] = f(a_i \cdot \ell + b_i)$. By estimating the sum (\ref{introeq}) for each pair $(a_i,b_i)$, we get an estimate of the matrix-vector product $Mc$, where $c \in \mathbb{R}^d$ is the vector of the $c_\ell$ values. We show that if $M$ has a large enough determinant relative to the magnitudes of its entries, then one can recover an estimate of $c$ itself from this, and hence solve the nearest neighbor problem. 

The main tool we need for this approach is a way to pick $a_1, \ldots,a_d,b_1,\ldots,b_d$ for a function $f$ which cannot be approximated by a low degree polynomial so that $M$ has large determinant. We do this by decomposing $\det(M)$ in terms of the derivatives of $f$ using the Cauchy-Binet formula, and then noting that if $f$ cannot be approximated by a polynomial, then many of the contributions in this sum must be large. The specifics of this construction are quite technical; 
\ifdefined\isfocs
see the full version for the details.
\else
see section~\ref{sec:lbhighdim} for the details.
\fi

\paragraph*{Comparison with Previous Lower Bound Techniques}

Prior work (e.g. \cite{cs17}, \cite{bis17}, \cite{bcis18}) has shown $\SETH$-based fine-grained complexity results for matrix-related computations. For instance, \cite{bis17} showed hardness results for exact algorithms for many machine-learning related tasks, like kernel PCA and gradient computation in training neural networks, while \cite{cs17} and \cite{bcis18} showed hardness results for kernel density estimation. In all of this work, the authors are only able to show hardness for a limited set of kernels. For example, \cite{bis17} shows hardness for kernel PCA only for Gaussian kernels. These limitations arise from the technique used. To show hardness, \cite{bis17} exploits the fact that the Gaussian kernel decays rapidly to obtain a gap between the completeness and soundness cases in approximate nearest neighbors, just as we do for functions $f$ like $f(x) = (1/x)^{\Omega(\log n)}$. The hardness results of \cite{cs17} and \cite{bcis18} employ a similar idea.

As discussed in \emph{Lower Bound Techniques}, we circumvent these limitations by showing that applying the multiplication algorithm for one kernel a small number of times and linearly combining the results is enough to solve Hamming closest pair. This idea is enough to give a nearly tight characterization of the analytic kernels for which subquadratic-time multiplication is possible in dimension $d = \Theta(\log n)$. As a result, by combining with reductions similar to those from past work, our lower bound also applies to a variety of similar problems, including kernel PCA, for a much broader set of kernels than previously known; 
\ifdefined\isfocs
see the full version.
\else
see Section~\ref{subsec:kernelPCA} below for the details.
\fi

Our lower bound is also interesting when compared with the Online Matrix-Vector Multiplication (OMV) Conjecture of Henzinger et al.~\cite{henzinger2015unifying}. In the OMV problem, one is given an $n \times n$ matrix $M$ to preprocess, then afterwards one is given a stream $v_1, \ldots, v_n$ of length-$n$ vectors, and for each $v_i$, one must output $M \times v_i$ before being given $v_{i+1}$. The OMV Conjecture posits that one cannot solve this problem in total time $n^{3-\Omega(1)}$ for a general matrix $M$. 
At first glance, our lower bound may seem to have implications for the OMV Conjecture: For some kernels $\k$, our lower bound shows that for an input set of points $P$ and corresponding adjacency matrix $A_G$, and input vector $v_i$, there is no algorithm running in time $n^{2-\Omega(1)}$ for multiplying $A_G \times v_i$, so perhaps multiplying by $n$ vectors cannot be done in time $n^{3-\Omega(1)}$. However, this is not necessarily the case, since the OMV problem allows $O(n^{2.99})$ time for preprocessing $A_G$, which our lower bound does not incorporate. More broadly, the matrices $A_G$ which we study, which have very concise descriptions compared to general matrices, are likely not the best candidates for proving the OMV Conjecture. That said, perhaps our results can lead to a form of the OMV Conjecture for geometric graphs with concise descriptions.

\subsubsection{Sparsification}

\paragraph*{Algorithmic techniques}

Our algorithm for constructing high-dimensional sparsifiers for $\k(x,y) = f(\|x-y\|_2^2)$, when $f$ is a $(2, L)$ multiplicatively Lipschitz function, involves using three classic ideas: the Johnson Lindenstrauss lemma of random projection~\cite{jl84, im98}, the notion of well-separated pair decomposition from Callahan and Kosaraju~\cite{ck93, ck95}, and spectral sparsification via oversampling~\cite{ss11, kmp10}. Combining these techniques carefully gives us the bounds in Theorem~\ref{thm:informal-sparsify-lipschitz}.

To overcome the `curse of dimensionality', we use the Lindenstrauss lemma to project onto $\sqrt{L \log n}$ dimensions. This preserves all pairs distance, with a distortion of at most $2^{O(\sqrt{\log n/L})}$. Then, using a $1/2$-well-separated pair decomposition partitions the set of projected distances into bicliques, such that each biclique has edges that are no more than $2^{O(\sqrt{\log n/L})}$ larger than the smallest edge in the biclique. This ratio will upper bound the maximum leverage score of an edge in this biclique in the original $\k$-graph. 
Each biclique in the set of projected distances has a one-to-one correspondence to a biclique in the original $\k$-graph. Thus to sparsify our $\k$-graph, we sparsify each biclique in the $\k$-graph by uniform sampling, and take the union of the resulting sparsified bicliques. Due to the $(2, L)$-Lipschitz nature of our function, it is guaranteed that the longest edge in any biclique (measured using $\k(x,y)$) is at most $2^{O(\sqrt{L \log n})}$. This upper bounds the maximum leverage score of an edge in this biclique with respect to the $\k$-graph, which then can be used to upper bound the number of edges we need to sample from each biclique via uniform sampling.
We take the union of these sampled edges over all bicliques, which gives our results for high-dimensional sparsification summarized in Theorem~\ref{thm:informal-sparsify-lipschitz}.
\ifdefined\isfocs
Details can be found in the full version. 
\else
Details can be found in the proof of Theorem~\ref{thm:sparsify-lipschitz} in Section~\ref{sec:sparsify-lipschitz}.
\fi
When $L$ is constant, we get almost linear time sparsification algorithms.

For low dimensional sparsification, we skip the Johnson Lindenstrauss step, and use a $(1+1/L)$-well separated pair decomposition. This gives us a nearly linear time algorithm for sparsifying $(C, L)$ multiplicative Lipschitz functions, when $(2L)^{O(d)}$ is small, which covers the case when $d$ is constant and $L=n^{o(1)}$. 
\ifdefined\isfocs 
See the full version for details.
\else 
See Theorem~\ref{thm:low-sparsify-lipschitz} for details.
\fi

For $\k(u,v) = |\langle u,v\rangle|$, our sparsification algorithm is quite different from the multiplicative Lipschitz setting. In particular, the fact that $\k(u,v) = 0$ on a large family of pairs $u,v\in \mathbb{R}^d$ presents challenges. Luckily, though, this kernel does have some nice structure. For simplicity, just consider defining the $\k$-graph on a set of unit vectors. The weight of any edge in this graph is at most 1 by Cauchy-Schwarz. The key structural property of this graph is that for every set $S$ with $|S| > d+1$, there is a pair $u,v\in S$ for which the $u$-$v$ edge has weight at least $\Omega(1/d)$. In other words, the unweighted graph consisting of edges with weight between $\Omega(1/d)$ and 1 does not have independent sets with size greater than $d+1$. It turns out that all such graphs are dense  \ifdefined\isfocs(see the full version)\else(see Proposition \ref{prop:k-dep-dense})\fi and that all dense graphs have an expander subgraph consisting of a large fraction of the vertices \ifdefined\isfocs(see the full version)\else(see Proposition \ref{prop:dense-has-expander})\fi. Thus, if this expander could be found in $O(n)$ time, we could partition the graph into expander clusters, sparsify the expanders via uniform sampling, and sparsify the edges between expanders via uniform sampling. It is unclear to us how to identify this expander efficiently, so we instead identify clusters with effective resistance diameter $O(\poly(d\log n)/n)$. This can be done via uniform sampling and Johnson-Lindenstrauss \cite{ss11}. As part of the proof, we prove a novel Markov-style lower bound on the probability that effective resistances deviate too low in a randomly sampled graph, which may be of independent interest  
\ifdefined\isfocs
(see the full version).
\else(see Lemma \ref{lem:sparse-lower-tail}).
\fi

\paragraph*{Lower bound techniques}

To prove lower bounds on sparsification for decreasing functions that are not $(C_L,L)$-multiplicatively Lipschitz, we reduce from exact bichromatic nearest neighbors on two sets of points $A$ and $B$. In high dimensions, nearest neighbors is hard even for Hamming distance \cite{r18}, so we may assume that $A,B\subseteq \{0,1\}^d$. In low dimensions, we may assume that the coordinates of points in $A$ and $B$ consist of integers on at most $O(\log n)$ bits. In both cases, the set of possible distances between points in $A$ and $B$ is discrete. We take advantage of the discrete nature of these distance sets to prove a lower bound. In particular, $C_L$ is set so that $C_L$ is the smallest ratio between any two possible distances between points in $A$ and $B$. To see this in more detail, see 
\ifdefined\isfocs
the full version.
\else
Lemma \ref{lem:spars-reduc}.
\fi

Let $x_f$ be a point at which the function $f$ is not $(C_L,L)$-multiplicatively Lipschitz and suppose that we want to solve the decision problem of determining whether or not $\min_{a\in A, b\in B} \|a - b\|_2\le k$. We can do this using sparsification by scaling the points in $A$ and $B$ by a factor of $k/x_f$, sparsifying the $\k$-graph on the resulting points, and thresholding based on the total weight of the resulting $A$-$B$ cut. If there is a pair with distance at most $k$, there is an edge crossing the cut with weight at least $f(x_f)$ because $f$ is a decreasing function. Therefore, the sparsifier has total weight at least $f(x_f)/(1+\epsilon) = f(x_f)/2$ crossing the $A$-$B$ cut by the cut sparsification approximation guarantee. If there is not a pair with distance at most $k$, no edges crossing the cut with weight larger than $f(C_L x_f)\le C_L^{-L} f(x_f)\le ( 1/n^{10} ) \cdot f(x_f)$ by choice of $C_L$. Therefore, the total weight of the $A$-$B$ cut is at most $(1/n^8 ) \cdot f(x_f)$, which means that it is at most $( (1 + \epsilon)/n^8 ) \cdot f(x_f) < f(x_f)/4$ in the sparsifier. In particular, thresholding correctly solves the decision problem and one sparsification is enough to solve bichromatic nearest neighbors.

\subsection{Brief summary of our results in terms of \texorpdfstring{$p_f$}{}}

Before proceeding to the body of the paper, we summarize our results. Recall that we consider three linear-algebraic problems in this paper, along with two different dimension settings (low and high). This gives six different settings to consider. We now define $p_f$ in each of these settings. In all high-dimensional settings, we have found a definition of $p_f$ that characterizes the complexity of the problem. In some low-dimensional settings, we do not know of a suitable definition for $p_f$ and leave this as an open problem. For simplicity, we focus here only on decreasing functions $f$, although all of our algorithms, and most of our hardness results,  hold for more general functions as well.

\begin{table*}[!ht]
\centering
\begin{tabular}{|l|l|l|l|}
    \hline
    {\bf Dimension} & {\bf Multiplication} & {\bf Sparsification} & {\bf Solving} \\ \hline
     $d = \poly(\log n)$ & $f_1$ & $f_1,f_2$ & $f_1$\\ \hline
     $c^{\log^* n} < d < O(\log^{1-\delta} n)$ for  $\delta > 0$ & $f_1,f_2,f_3$ & $f_1,f_2,f_3$ & $f_1,f_2,f_3$ \\ \hline
\end{tabular}\caption{Functions among $f_1,f_2,f_3,f_4$ that have almost-linear time algorithms}\label{tab:examples}
\end{table*}

\begin{compactenum}
    \item Adjacency matrix-vector multiplication
    \begin{enumerate}
        \item High dimensions: $p_f$ is the minimum degree of any polynomial that $1/2^{\text{poly}(\log n)}$-additively approximates $f$. $p_f > \Omega(\log n)$ implies subquadratic-time hardness (Theorem \ref{thm:informal-high} part 2), while $p_f < o(\log n)$ implies an almost-linear time algorithm (Theorem \ref{thm:informal-high}, part 1).
        \item Low dimensions: Not completely understood. The fast multipole method yields an almost-linear time algorithm for some functions, like the Gaussian kernel \ifdefined\isfocs(see the full version)\else(Theorem \ref{thm:fast_gaussian_transform_intro})\fi, but functions exist that are hard in low dimensions \ifdefined\isfocs(see the full version).
        \else(Proposition \ref{prop:low-mult-hard}).\fi
    \end{enumerate}
    \item Sparsification. In both settings, $p_f$ is the minimum value for which $f$ is $(C, p_f)$-multiplicatively Lipschitz, where $C = 1 + 1/p_f^c$ for some constant $c > 0$ independent of $f$.
    \begin{enumerate}
        \item High dimensions: If $p_f > \Omega(\log^2 n)$ and $f$ is nonincreasing, then no subquadratic time algorithm exists (Theorem \ref{thm:informal-high-spars-hard}). If $p_f < o(\log n)$, then an almost-linear time algorithm for sparsification exists (Theorem \ref{thm:informal-sparsify-lipschitz}).
        \item Low dimensions: There is some constant $t > 1$ such that if $p_f > \Omega(n^t)$ and $f$ is nonincreasing, then no subquadratic time algorithm exists \ifdefined\isfocs(see the full version). \else(Theorem \ref{thm:informal-low-spars-hard}).\fi If $p_f < n^{o(1/d)}$, then there is a subquadratic time algorithm \ifdefined\isfocs(see the full version)\else(Theorem \ref{thm:informal-low-sparsify-lipschitz})\fi.
    \end{enumerate}
    \item Laplacian solving.
    \begin{enumerate}
        \item High dimensions: $p_f$ is the maximum of the $p_f$ values in the \emph{Adjacency matrix-vector multiplication} and \emph{Sparsification} settings, with hardness occurring for decreasing functions $f$ if $p_f > \Omega(\log^2 n)$ (Corollary \ref{cor:introsystemhardhighdim} combined with Theorem \ref{thm:informal-high-lsolve-hard}) and an algorithm existing when $p_f < o(\log n)$ (Theorem \ref{thm:informal-lapl-poly}).
        \item Low dimensions: Not completely understood, as in the low-dimensional multiplication setting. As in the sparsification setting, we are able to show that there is a constant $t$ such that if $f$ is nonincreasing and $p_f > \Omega(n^t)$ where $p_f$ is defined as in the \emph{Sparsifiction} setting, then no subquadratic time algorithm exists \ifdefined\isfocs(see the full version).
        \else
        (Theorem \ref{thm:low-lsolve-hard}).
        \fi
    \end{enumerate}
\end{compactenum}

Many of our results are not tight for two reasons: (a) some of the hardness results only apply to decreasing functions, and (b) there are gaps in $p_f$ values between the upper and lower bounds. However, neither of these concerns are important in most applications, as (a) weight often decreases as a function of distance and (b) $p_f$ values for natural functions are often either very low or very high. For example, $p_f > \Omega(\text{polylog}(n))$ for all problems for the Gaussian kernel ($f(x) = e^{-x}$), while $p_f = O(1)$ for sparsification and $p_f > \Omega(\text{polylog}(n))$ for multiplication for the gravitational potential ($f(x) = 1/x$). Resolving the gap may also be difficult, as for intermediate values of $p_f$, the true best running time is likely an intermediate running time of $n^{1 + c + o(1)}$ for some constant $0<c<1$. Nailing down and proving such a lower bound seems beyond the current techniques in fine-grained complexity.

\subsection{Summary of our Results on Examples} \label{sec:resultstable}

To understand our results better, we illustrate how they apply to some examples. For each of the functions $f_i$ given below, make the $\k$-graph, where $\k_i(u,v) = f_i(\|u-v\|_2^2)$:

\begin{compactenum}
\item $f_1(z) = z^k$ for a positive integer constant $k$.
\item $f_2(z) = z^c$ for a negative constant or a positive non-integer constant $c$.
\item $f_3(z) = e^{-z}$ (the Gaussian kernel).
\item $f_4(z) = 1$ if $z\le \theta$ and $f_4(z) = 0$ if $z > \theta$ for some parameter $\theta > 0$ (the threshold kernel).
%\item $f_5(z) = \frac{1}{\pi} (\pi - \cos^{-1}(1-0.5 z) ) \cdot (1-0.5 z) $ (corresponds to the neural tangent kernel \cite{jgh18}; see Section~\ref{sec:ntk})
\end{compactenum}

\josh{We need to put $f_5$ in the appropriate places in the table}

In Table \ref{tab:examples}, we summarize for which of the above functions there are efficient algorithms and for which we have hardness results. There are six regimes, corresponding to three problems (multiplication, sparsification, and solving) and two dimension regimes ($d = \poly(\log n)$ and $d = c^{\log^* n}$). A function is placed in a table cell if an almost-linear time algorithm exists, where runtimes are $n^{1+o(1)}(\log(\alpha n/\epsilon))^t$ in the case of multiplication and system solving and $n^{1+o(1)}(\log(\alpha n))^t/\epsilon^2$ in the case of sparsification for some $t\le O(\log^{1-\delta} n)$ for some $\delta > 0$. Moreover, for each of these functions $f_1, f_2, f_3,$ and $f_4$, if it does not appear in a table cell, then we show a lower bound that no subquadratic time algorithm exists in that regime assuming $\SETH$.

\josh{Note: we don't care if the stuff below is on page 11 or later}

\subsection{Other Related Work} \label{subsec:relatedwork}

\paragraph*{Linear Program Solvers}

Linear Program is a fundamental problem in convex optimization. There is a long list of work focused on designing fast algorithms for linear program \cite{d47,k80,k84,v87,v89_lp,ls14,ls15,cls19,lsz19,song19,b20,blss20,sy20,jswz20}. For the dense input matrix, the state-of-the-art algorithm \cite{jswz20} takes $n^{\max\{\omega,2+1/18\}} \log (1/\eps)$ time, $\omega$ is the exponent of matrix multiplication \cite{aw21}. The solver can run faster when matrix $A$ has some structures, e.g. Laplacian matrix. 

\paragraph*{Laplacian System Solvers}
It is well understood that a Laplacian linear system can be solved in time $\tilde{O} (m \log ( 1 / \eps ) )$, where $m$ is the number of edges in the graph generating the Laplacian~\cite{st04,kmp10,kmp11,kosz13,ls13,ckmpprx14,klpss16,ks16}. This algorithm is very efficient when the graph is sparse. However, in our setting where the $\k$ graph is dense but succinctly describable by only $n$ points in $\R^d$, we aim for much faster algorithms. 

\paragraph*{Algorithms for Kernel Density Function Approximation} A recent line of work by Charikar et al.~\cite{cs17,bcis18} also studies the algorithmic KDE problem. They show, among other things, that kernel density functions for ``smooth'' kernels $\k$ can be estimated in time which depends only polynomially on the dimension $d$, but which depends polynomially on the error $\eps$. We are unfortunately unable to use their algorithms in our setting, where we need to solve $\KAdjE$ with $\eps = n^{-\Omega(1)}$, and the algorithms of Charikar et al. do not run in subquadratic time. We instead design and make use of algorithms whose running times have only polylogarithmic dependences on $\eps$, but often have exponential dependences on $d$.

\paragraph{Kernel Functions}
Kernel functions are useful functions in data analysis, with applications in physics, machine learning, and computational biology~\cite{s10}. There are many kernels studied and applied in the literature; we list here most of the popular examples.

The following kernels are of the form $\k(x,y) = f(\|x-y\|_2^2)$, which we study in this paper: the Gaussian kernel \cite{njw02,rr08}, exponential kernel, Laplace kernel \cite{rr08}, rational quadratic kernel, multiquadric kernel \cite{bg97}, inverse multiquadric kernel \cite{m84,m12}, circular kernel \cite{btfb05}, spherical kernel, power kernel \cite{fs03}, log kernel \cite{bg97,m12}, Cauchy kernel \cite{rr08}, and generalized T-Student kernel \cite{btf04}.

For these next kernels, it is straightforward that their corresponding graphs have low-rank adjacency matrices, and so efficient linear algebra is possible using the Woodbury Identity \ifdefined\isfocs(see the full version)\else(see Section~\ref{sec:woodbury_identity} below)\fi: the linear kernel \cite{ssa98,mssmsr99,h07,s14} and the polynomial kernel \cite{cv95,ge08,bossr08,chcrl10}.

Finally, the following relatively popular kernels are not of the form we directly study in this paper, and we leave extending our results to them as an important open problem: the Hyperbolic tangent (Sigmoid) kernel \cite{hs97,btb05,jkl09,ksh12,zsjbd17,zsd17,ssbcv17,zsjd19}, spline kernel \cite{g98,u99}, B-spline kernel \cite{h04,m10}, Chi-Square kernel \cite{vz12}, and the histogram intersection kernel and generalized histogram intersection \cite{btb05a}. More interestingly, our result also can be applied to Neural Tangent Kernel \cite{jgh18}, which plays a crucial role in the recent work about convergence of neural network training \cite{ll18,dzps19,als18,als19,sy19,bpsw20,lsswy20,jmsz20}. For more details, we refer the readers to
\ifdefined\isfocs
the full version.
\else
Section~\ref{sec:ntk}.
\fi

\paragraph{Acknowledgements}

The authors would like to thank Lijie Chen for helpful suggestions in the hardness section and explanation of his papers. The authors would like to thank Sanjeev Arora, Simon Du, and Jason Lee for useful discussions about the neural tangent kernel.

 %%% Section 1

%\newpage

\ifdefined\isfocs

\else
\section{Summary of Low Dimensional Results}\label{sec:low-dim-summary}

In the results we've discussed so far, we show that in high-dimensional settings, the curse of dimensionality applies to a wide variety of functions that are relevant in applications, including the Gaussian kernel and inverse polynomial kernels. Luckily, in many settings, the points supplied as input are very low-dimensional. In the classic $n$-body problem, for example, the input points are 3-dimensional. In this subsection, we discuss our results pertaining to whether algorithms with runtimes exponential in $d$ exist; such algorithms can still be efficient in low dimensions $d = o(\log n)$.

\subsection{Multiplication}

The prior work on the fast multipole method ~\cite{gr87, gr88, gr89} yields algorithms with runtime $(\log(n/\epsilon))^{O(d)} n^{1+o(1)}$ for $\epsilon$-approximate adjacency matrix-vector multiplication for a number of functions $\k$, including when $\k(u,v) = \frac{1}{\|u - v\|_2^c}$ for a constant $c$ and when $\k(u,v) = e^{-\|u-v\|_2^2}$. In order to explain what functions $\k$ the fast multipole methods work well for, and to clarify dependencies on $d$ in the literature, we give a complete exposition of how the fast multipole method of~\cite{gs91} works on the Gaussian kernel:

\begin{theorem}[fast Gaussian transform~\cite{gs91}, exposition in Section~\ref{sec:fastmm}]\label{thm:fast_gaussian_transform_intro}
Let $\k(x,y) = \exp (- \| x - y \|_2^2)$. Given a set of points $P \subset \R^d$ with $|P|= n$. Let $G$ denote the $\k$-graph. For any vector $u \in \R^d$, for accuracy parameter $\epsilon$, 
there is an algorithm that runs in 
$
n \log^{O(d)} ( \| u \|_1 / \epsilon ) 
$
time to approximate $A_G \cdot u$ within $\epsilon$ additive error.
\end{theorem}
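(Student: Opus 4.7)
The plan is to implement the Fast Gauss Transform of Greengard--Strain, which is tailored to the separable structure of the Gaussian kernel. The core identity is
\[
e^{-\|x-y\|_2^2} \;=\; \prod_{i=1}^{d} e^{-(x_i - y_i)^2},
\]
so one-dimensional expansions can be tensorized. Fix a source box with center $y_B$, and rewrite
\[
e^{-(x_i - y_i)^2} \;=\; e^{-(x_i - y_{B,i})^2}\, e^{-(y_i - y_{B,i})^2}\, e^{2(x_i - y_{B,i})(y_i - y_{B,i})},
\]
and Taylor-expand the last factor in $(y_i - y_{B,i})$. Truncating each coordinate expansion at degree $p$ and tensorizing gives a separated approximation of the form $e^{-\|x - y_B\|_2^2}\sum_{|\alpha|\le p} c_\alpha (x-y_B)^\alpha (y-y_B)^\alpha$ with $(p+1)^d$ terms, whose variables in $x$ and $y$ decouple — exactly what is needed for a fast ``moment precomputation + evaluation'' scheme.

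Next, I would pin down two scales. First, truncation: since $\sum_i u_i e^{-\|x_j - x_i\|_2^2}$ has each summand bounded by $|u_i|$, dropping all sources with $\|x_j - x_i\|_2 > R$ for $R = O(\sqrt{\log(\|u\|_1/\eps)})$ incurs total additive error at most $\eps/2$. Partition $\mathbb{R}^d$ into a grid of axis-aligned boxes of side length $r$ for a small absolute constant $r<1$; for each target $x_j$, only the $O(R/r)^d = \log^{O(d)}(\|u\|_1/\eps)$ boxes meeting the ball of radius $R$ contribute. Second, expansion degree: the standard Hermite/Taylor remainder bound gives, for $\|y-y_B\|_\infty \le r/2$,
\[
\Bigl|e^{-\|x-y\|_2^2} - e^{-\|x-y_B\|_2^2}\!\sum_{|\alpha|<p}\tfrac{1}{\alpha!}(y-y_B)^\alpha\, H_\alpha(x-y_B)\Bigr| \;\le\; \frac{d\,(r/2)^p}{(1-r/2)\,p!},
\]
so $p = O(\log(\|u\|_1/\eps) + d\log d)$ ensures per-kernel error at most $\eps/(2\|u\|_1)$, which summed against $\|u\|_1$ yields total error at most $\eps/2$. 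Combined with truncation this gives the desired $\eps$ bound.

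The algorithm then has three phases. \textbf{Preprocessing:} bucket the $n$ source points into grid boxes in $O(n d)$ time, and for each nonempty box $B$ compute the moments $M_\alpha^B := \sum_{x_i \in B} u_i (x_i - y_B)^\alpha$ for all multi-indices $|\alpha|<p$. Each point contributes to a single box and writes into $\binom{p+d}{d} = p^{O(d)}$ moments, so this phase costs $n \cdot p^{O(d)}$. \textbf{Evaluation:} for each target $x_j$, identify the $O(R/r)^d$ relevant source boxes and, for each such box $B$, add $e^{-\|x_j - y_B\|_2^2}\sum_{|\alpha|<p} \tfrac{1}{\alpha!} H_\alpha(x_j - y_B)\, M_\alpha^B$ to the output. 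Each target does $R^{O(d)} \cdot p^{O(d)}$ work, for total cost $n \cdot \log^{O(d)}(\|u\|_1/\eps)$. \textbf{Bookkeeping:} points with $u_i < 0$ are handled by splitting $u = u^+ - u^-$ and running twice; the guarantee on $\|u\|_1$ is unchanged.

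The main obstacle I expect is the tightness of the expansion-error analysis: one must show that a polynomial degree $p = O(\log(\|u\|_1/\eps))$ really suffices in every coordinate, uniformly over all $x$ (including far-away targets where $|x_i - y_{B,i}|$ can be as large as $R$). This is delicate because the naive Taylor remainder of $e^{2(x_i - y_{B,i})(y_i - y_{B,i})}$ grows with $R$, and one must use the cancelling Gaussian prefactors $e^{-(x_i - y_{B,i})^2}$ together with the smallness $\|y - y_B\|_\infty \le r/2$ (forced by choosing $r$ a small constant) to obtain a bound of the form $(r/2)^p/p!$ that is \emph{independent} of $R$. Once this one-dimensional bound is in hand, tensorizing across coordinates and summing over the $O(R/r)^d$ nearby boxes together with the $\|u\|_1$ factor yields the claimed additive error, and the running time bound follows by straightforward bookkeeping.
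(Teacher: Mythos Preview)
Your proposal is correct and follows essentially the same approach as the paper's exposition of the Greengard--Strain Fast Gauss Transform: box decomposition of side $\Theta(1)$, cutoff of boxes beyond distance $O(\sqrt{\log(\|u\|_1/\epsilon)})$ using the Gaussian tail, truncated Hermite expansion about each source-box center with degree $p=O(\log(\|u\|_1/\epsilon))$, and evaluation at each target by summing over the $\log^{O(d)}(\|u\|_1/\epsilon)$ nearby boxes. The paper additionally presents the Hermite-to-Taylor translation step (its ``Technique~3''), but as you implicitly note, since sources and targets coincide ($M=N=n$) the simpler source-expansion-only scheme (the paper's ``Technique~2'') already achieves the stated $n\log^{O(d)}(\|u\|_1/\epsilon)$ bound; one minor discrepancy is that the paper truncates coordinate-wise ($\alpha_i\le p$ for each $i$, giving $(p{+}1)^d$ terms and an error of the form $(1/p!)^{d/2}(r^{p+1}/(1-r))^d$) rather than by total degree, which makes the tensorized error bound cleaner than the one you wrote down.
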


The fast multipole method is fairly general, and so similar algorithms also exist for a number of other functions $\k$; see Section~\ref{sec:fastmm_general} for further discussion. 
Unlike in the high-dimensional case, we do not have a characterization of the functions for which almost-linear time algorithms exist in near-constant dimension. We leave this as an open problem. Nonetheless, we are able to show lower bounds, even in barely super-constant dimension $d = \exp(\log^*(n))$\footnote{Here, $\log^*(n)$ denotes the \emph{very} slowly growing iterated logarithm of $n$.}, on adjacency matrix-vector multiplication for kernels that are not multiplicatively Lipschitz:

\begin{theorem}[Informal version of Proposition \ref{prop:low-mult-hard}]
For some constant $c>1$, any function $f$ that is not $(C,L)$-multiplicatively Lipschitz for any constants $C > 1, L > 1$  does not have an $n^{1+o(1)}$ time adjacency matrix-vector multiplication (up to $2^{-\poly(\log n)}$ additive error) algorithm in $c^{\log^* n}$ dimensions assuming $\SETH$ when $\k(u,v) = f(\| u - v \|_2^2)$.
\end{theorem}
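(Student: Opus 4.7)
The plan is to reduce from a $\SETH$-hard low-dimensional closest-pair problem and use a single adjacency matrix-vector multiplication to detect whether a close bichromatic pair exists. The hardness source would be Rubinstein's theorem~\cite{r18}: for some constants $c > 1$ and $\gamma > 1$, $\SETH$ rules out $n^{2-o(1)}$-time algorithms that distinguish whether two point sets $A, B \subseteq \R^{c^{\log^* n}}$ of size $n$ have a bichromatic pair at Euclidean distance $\le r$ versus all bichromatic pairs at distance $\ge \gamma r$. The constant $c$ in the statement is inherited directly from this theorem.

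The non-Lipschitz hypothesis is used to extract a large multiplicative jump in $f$. If $\sup_{x>0,\, \rho \in (1/\gamma^2, \gamma^2)} f(\rho x)/f(x)$ were finite, then $f$ would be $(\gamma^2, L)$-multiplicatively Lipschitz for some constant $L$, contradicting the assumption. Hence the supremum is $+\infty$, so for any target ratio $R = n^{\Omega(1)}$ I can locate $x_f > 0$ and $\rho \in (1/\gamma^2, \gamma^2)$ with $f(\rho x_f)/f(x_f)$ (or its reciprocal) exceeding $R$. By swapping $x_f \leftrightarrow \rho x_f$ if necessary, I assume the jump is in the direction that makes the close-pair case produce the larger kernel value.

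The reduction itself scales $A \cup B$ uniformly so that the threshold squared distance $r^2$ maps to $x_f$ and the far squared distance $\gamma^2 r^2$ maps to $\rho x_f$ (adjusting $\rho$ and $x_f$ using the discrete-distance preprocessing from the sparsification lower bound so that all non-threshold pair distances fall outside the sensitive interval). I then apply the hypothetical multiplication algorithm to the adjacency matrix $A_G$ of the resulting $\k$-graph on input $y = \mathbf{1}_B$, and sum the entries of $A_G y$ over $A$ to obtain an approximation of $W := \sum_{a \in A,\, b \in B} f(\|a-b\|_2^2)$. A single threshold on $W$ separates the two cases: in the close case, at least one pair contributes $\approx f(x_f)$; in the far case every bipartite pair contributes at most $\approx f(\rho x_f)$, for a total of at most $n^2 f(\rho x_f) \ll f(x_f)$. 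The $2^{-\poly(\log n)}$ additive error is swallowed by the $R = n^{\Omega(1)}$ gap after rescaling $f$ so that the relevant values lie in a controlled range.

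The main obstacle is aligning the discrete geometry of Rubinstein's instance with the analytic jump witnessed by the non-Lipschitz hypothesis: the witness $(x_f, \rho)$ need not correspond to any ratio of pair distances achievable by the input point set. I would handle this by (a) preprocessing the closest-pair instance so that coordinates are $\poly(\log n)$-bit integers, giving a discrete, polynomially-bounded set of achievable squared distances, and (b) using the $+\infty$ supremum to choose $(x_f, \rho)$ with the jump large enough relative to this set's granularity, so that after scaling no non-threshold pair squared distance lies in the narrow window where $f$ is changing. A secondary subtlety is handling non-monotone $f$, which is resolved by choosing the direction of the Lipschitz witness (and possibly the sign of the threshold comparison on $W$) so that the close case is the extremal one; the remainder of the analysis is then independent of monotonicity.
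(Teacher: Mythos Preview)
Your overall reduction is the right one and matches the paper's approach: reduce from low-dimensional bichromatic closest pair, scale the instance so the threshold squared distance lands on a Lipschitz-violating witness point $x_f$, read off the total $A$--$B$ weight via $\mathbf{1}_A^\top A_G \mathbf{1}_B$ using one adjacency multiplication, and threshold. This is exactly the adaptation of Lemma~\ref{lem:spars-reduc} that the paper invokes for Proposition~\ref{prop:low-mult-hard}.

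The one genuine error is the hardness source. Rubinstein's theorem~\cite{r18} is in dimension $\Omega(\log n)$, not $c^{\log^* n}$, and there is no constant-gap approximate closest-pair hardness known in $c^{\log^* n}$ dimensions. The result you actually need is Chen's exact bichromatic $\ell_2$-closest pair hardness (Theorem~\ref{thm:lownnhard} here, from~\cite{c18}), which supplies points with $O(\log n)$-bit integer coordinates in $c^{\log^* n}$ dimensions. There is therefore no constant $\gamma$ to work with; what you get instead is that the set of achievable squared distances is $\rho$-discrete with $\rho = 1 + 1/\poly(n)$, which is precisely the ``discrete-distance preprocessing'' you mention in your last paragraph. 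Your second paragraph should then run with $C = \rho$ rather than $C = \gamma^2$: the hypothesis that $f$ is not $(C,L)$-multiplicatively Lipschitz for \emph{any} $C,L>1$ applies in particular to this instance-specific $\rho$ and to $L$ large enough that $\rho^L \gg n^2$, yielding the witness $x_f$ with the required jump. Once you make that substitution, the rest of your argument (including the binary search over the discrete distance set and the thresholding on $W$) goes through, and the non-monotone case is handled exactly as you describe by choosing the direction of the jump.
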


This includes threshold functions, but does not include piecewise exponential functions. Piecewise exponential functions do have efficient adjacency multiplication algorithms by Theorem \ref{thm:fast_gaussian_transform}.

To illustrate the complexity of the adjacency matrix-vector multiplication problem in low dimensions, we are also able to show hardness for the function $\k(u,v) = |\langle u,v\rangle|$ in nearly constant dimensions. By comparison, we are able to sparsify for this function $\k$, even in very high $d = n^{o(1)}$ dimensions (in Theorem~\ref{thm:sparsabsinnerproduct} above).
\begin{theorem}[Informal version of Corollary \ref{cor:lowdimmultabsinnerproduct}] \label{thm:introlowdimmultabsinnerproduct}
For some constant $c>1$, assuming $\SETH$, adjacency matrix-vector multiplication (up to $2^{-\poly(\log n)}$ additive error) in $c^{\log^* n}$ dimensions cannot be done in subquadratic time in dimension $d = \exp(\log^*(n))$ when $\k(u,v) = |\langle u,v\rangle|$.
\end{theorem}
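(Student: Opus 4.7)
The plan is to reduce from the $\SETH$-hard low-dimensional bichromatic Hamming closest pair problem (known to require $n^{2-o(1)}$ time in dimension $c^{\log^* n}$ by Rubinstein~\cite{r18}) to the $\k(u,v)=|\langle u,v\rangle|$ matrix-vector multiplication problem, following the paper's lower bound template: a handful of carefully perturbed matrix-vector products suffice to read off the count of pairs at each possible Hamming distance. Given an input on $\{0,1\}^d$ with $d = c^{\log^* n}$, first center each vector by subtracting $\tfrac12 \mathbf{1}$, so that for $\tilde x_i := x_i - \tfrac12 \mathbf{1} \in \{-\tfrac12, \tfrac12\}^d$ one has $\langle \tilde x_i, \tilde x_j\rangle = \tfrac{d}{4} - \tfrac12 H(x_i, x_j)$. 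The absolute-inner-product kernel on $\{\tilde x_i\}$ is then exactly $|\tfrac{d}{4} - \tfrac12 H(x_i,x_j)|$, a piecewise-linear function of Hamming distance with a kink at $\ell = d/2$ that is the entire source of non-polynomial behavior.

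To disentangle distances with a small number of queries, build $d+1$ auxiliary instances $X^{(1)}, \ldots, X^{(d+1)}$, where $X^{(k)}$ is obtained from $X$ by duplicating each coordinate $s_k$ times and appending $t_k$ always-disagree coordinate pairs. Hamming distances in $X^{(k)}$ are exactly $s_k H(x_i,x_j) + t_k$, so one product $A_{G_k}\mathbf 1$ reveals, for each vertex $i$, the inner product $(M c^{(i)})_k$ where $c^{(i)}_\ell := |\{j : H(x_i,x_j)=\ell\}|$ and
\[
 M_{k,\ell} \;:=\; \bigl| \tfrac{d_k}{4} - \tfrac12 (s_k \ell + t_k) \bigr|, \qquad d_k := s_k d + 2 t_k .
\]
Choosing $(s_k, t_k)$ so that the kink locations $\ell_k^* := (d_k - 2t_k)/(2 s_k)$ interleave the integers $\{0, 1, \ldots, d\}$ produces a $(d+1)\times(d+1)$ matrix $M$ that is well-conditioned; inverting it recovers each coordinate of $c^{(i)}$ and in particular lets us decide whether $\sum_{\ell \le k_0} c^{(i)}_\ell > 0$, solving the closest-pair decision problem. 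Since all $s_k, t_k$ may be taken $\poly(d) = \poly(c^{\log^* n})$, every $X^{(k)}$ still lives in dimension $c^{\log^* n}$ after adjusting the base constant; total work is $O(d \cdot T(n))$ where $T(n)$ is the alleged $\k$-multiplication time, so an $n^{1+o(1)}$ multiplier yields an $n^{1+o(1)}$ algorithm for Hamming closest pair, contradicting $\SETH$.

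The hard part will be quantifying the condition number of $M$ well enough that $2^{-\poly(\log n)}$ additive errors on the right-hand side (the approximation slack in the statement of the theorem) still leave each recovered $c^{(i)}_\ell$ within $\tfrac12$ of an integer. This is precisely the same analytic phenomenon that powers the paper's high-dimensional lower bound Theorem~\ref{thm:hardnessapprox}: a Cauchy--Binet expansion of $\det M$ shows that super-polynomially large $M^{-1}$ entries would yield a polynomial of degree $d = c^{\log^* n} = n^{o(1)}$ approximating $|x|$ on a scaled interval to accuracy $2^{-\poly(\log n)}$, contradicting the classical Bernstein lower bound on polynomial approximation of the absolute value. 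Making this quantitative --- choosing the $(s_k, t_k)$ explicitly, pushing the determinant estimate through, and checking that the ambient dimension never exceeds $c^{\log^* n}$ --- is the delicate core of the argument; the surrounding reduction and linear-system inversion are bookkeeping by comparison.
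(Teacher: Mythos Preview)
Your reduction starts from a problem that is not hard. You claim bichromatic Hamming closest pair on $\{0,1\}^d$ with $d = c^{\log^* n}$ requires $n^{2-o(1)}$ time, citing Rubinstein~\cite{r18}. But Rubinstein's lower bound is for $d = \Omega(\log n)$, not $d = c^{\log^* n}$. In fact, when $d = c^{\log^* n}$ we have $2^d \ll n$, so there are fewer than $n$ distinct Boolean vectors; the bichromatic problem is then solvable in $O(n + 2^{2d}) = O(n)$ time by bucketing. The genuine low-dimensional hardness results (Chen~\cite{c18}, Theorems~\ref{thm:maxiphard} and~\ref{thm:lownnhard} in the paper) are for vectors with $O(\log n)$-bit \emph{integer} entries, which is what makes the instance space large enough for the problem to be hard. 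But then the number of possible pairwise distances is $n^{O(1)}$ rather than $d+1$, and your $(d{+}1)\times(d{+}1)$ linear system cannot recover the distance profile---you would need polynomially many auxiliary instances, destroying the reduction.

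The paper's argument avoids distance-counting entirely. It reduces from $\Z$-\textsf{MaxIP} in dimension $c^{\log^* n}$ (Theorem~\ref{thm:maxiphard}) using a sign test: after appending one coordinate to shift all cross inner products by $-a$, one computes $s_1 = \sum_{i,j} |\langle x'_i, x'_j\rangle|$ via a single $\KAdjE$ call and $s_2 = \sum_{i,j} \langle x'_i, x'_j\rangle$ exactly via the rank-$d$ structure, and observes that $s_1 = -s_2$ iff every cross inner product is nonpositive. Binary search over $a$ then finds the maximum. This uses $O(\log^2 n)$ calls to $\KAdjE$ in dimension $d+1$ and needs no matrix inversion or conditioning analysis---the integer nature of the inner products means $2^{-\poly(\log n)}$ error suffices to test exact equality.
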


\subsection{Sparsification}

We are able to give a characterization of the decreasing functions for which sparsification is possible in near-constant dimension. We show that a polynomial dependence on the multiplicative Lipschitz constant is allowed, unlike in the high-dimensional setting:
\begin{theorem}[Informal version of Theorem \ref{thm:low-sparsify-lipschitz}]\label{thm:informal-low-sparsify-lipschitz}
Let $f$ be a $(1 + 1/L,L)$-multiplicatively Lipschitz function and let $\k(u,v) = f(\|u-v\|_2^2)$. Then an $(1\pm\epsilon)$-spectral sparsifier for the $\k$-graph on $n$ points can be found in $ n^{1+o(1)} L^{O(d)} (\log \alpha) /\epsilon^2$ time. 
\end{theorem}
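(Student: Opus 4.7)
The plan is to combine a well-separated pair decomposition (WSPD) of the input points with oversampling-based spectral sparsification applied to each resulting biclique, exploiting the fact that the multiplicative Lipschitz property flattens the edge weights inside every WSPD biclique to near-uniform values. First, I would apply the Callahan--Kosaraju WSPD to the point set with separation parameter $s = \Theta(L)$; the output is a family of $L^{O(d)} n$ disjoint bicliques $(A_1, B_1), \ldots, (A_m, B_m)$ whose cross edges partition the edge set of the complete graph, and it is computable in time $L^{O(d)} n \log n$. The separation property guarantees that for any $u, u' \in A_i$ and $v, v' \in B_i$, $\|u'-v'\|_2 \le (1 + 2/s)\|u-v\|_2$; squaring, any two cross squared-distances inside a biclique have ratio at most $1 + 1/L$ for an appropriate constant hidden in $s = \Theta(L)$. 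The $(1+1/L, L)$-multiplicative Lipschitz property of $f$ then forces every weight $\k(u,v) = f(\|u-v\|_2^2)$ inside the biclique to lie within a factor of $(1+1/L)^L \le e$ of any other, so each WSPD biclique is a near-uniform-weight complete bipartite subgraph of $G$.

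The second step is to sparsify each near-uniform biclique separately by uniform edge sampling. In a weighted $K_{|A_i|, |B_i|}$ whose weights are all $\Theta(w_i)$, the effective resistance between $u \in A_i$ and $v \in B_i$ is $\Theta((1/|A_i| + 1/|B_i|)/w_i)$, so every cross edge has leverage score $\Theta(1/|A_i| + 1/|B_i|)$ inside the biclique. By Rayleigh monotonicity, the global leverage score of that edge in $G$ is bounded by its biclique-local leverage score, so uniform sampling at rate $O((1/|A_i| + 1/|B_i|) \log n / \eps^2)$ is a valid oversampling relative to the true leverage scores. A standard matrix-Chernoff argument in the style of~\cite{ss11, kmp10} then shows that $O((|A_i| + |B_i|) \log n / \eps^2)$ reweighted samples per biclique suffice. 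Because the biclique Laplacians sum to $L_G$ and spectral sparsification commutes with summation, the union of the per-biclique sparsifiers is a $(1 \pm \eps)$-spectral sparsifier of $G$, with total size $n L^{O(d)} \polylog(n) / \eps^2$ and total running time $L^{O(d)} n \log n + n L^{O(d)} \polylog(n) (\log \alpha) / \eps^2$, matching the claimed bound (the $\log \alpha$ absorbs the bit complexity of evaluating $f$ once per biclique and of representing sampled weights).

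The main obstacle I expect is verifying cleanly that uniform sampling at the biclique-local rate remains valid when the sampled edges are assembled into a sparsifier of $G$ rather than of each biclique in isolation. Three ingredients must align: the WSPD must partition pairs exactly so that $L_G = \sum_i L_{G_i}$; Rayleigh monotonicity is needed to control global leverage scores by biclique-local ones; and the matrix Chernoff bound must be applied with a union bound over all $L^{O(d)} n$ bicliques, which is what drives the $\log n$ factor in the per-biclique sample size. A secondary subtlety is choosing the hidden constant in $s = \Theta(L)$ so that the squared-distance ratio is genuinely at most $1 + 1/L$ (taking $s$ slightly larger than $L$ suffices), which is pedantic but needed to invoke the Lipschitz hypothesis with $C = 1 + 1/L$ as stated.
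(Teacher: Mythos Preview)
Your approach is essentially the paper's: a $(1/L)$-WSPD (your separation $s = \Theta(L)$) followed by uniform sampling on each near-uniform biclique, justified by the same Rayleigh-monotonicity leverage-score overestimate. One correction: the $\log \alpha$ factor enters through the WSPD itself---each vertex lies in $L^{O(d)}\log\alpha$ pairs, so $\sum_i(|A_i|+|B_i|) = n\,L^{O(d)}\log\alpha$ drives both the sample count and the running time---not through the bit complexity of evaluating $f$.
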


Thus, geometric graphs for piecewise exponential functions with $L = n^{o(1)}$ can be sparsified in almost-linear time when $d$ is constant, unlike in the case when $d = \Omega(\log n)$. In particular, spectral clustering can be done in $O(kn^{1+o(1)})$ time for $k$ clusters in low dimensions. Unfortunately, not all geometric graphs can be sparsified, even in nearly constant dimensions:

\begin{theorem}[Informal version of Theorem \ref{thm:low-spars-hard}]\label{thm:informal-low-spars-hard}
There are constants $c'\in (0,1),c > 1$ and a value $C_L$ given $L > 1$ for which any decreasing function $f$ that is not $(C_L,L)$-multiplicatively Lipschitz does not have an $O(n L^{c'})$ time sparsification algorithm for $\k$-graphs on $c^{\log^*n}$ dimensional points, where $\k(u,v) = f(\|u-v\|_2^2)$.
\end{theorem}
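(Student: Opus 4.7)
The plan is to mirror the high-dimensional hardness reduction, but source the hardness from a low-dimensional bichromatic nearest neighbor problem. Specifically, I would reduce from the decision version of bichromatic nearest neighbor ($\mathrm{BCN}$) on sets $A, B \subseteq \mathbb{Z}^{d'}$ in dimension $d' = c^{\log^* n}$ with coordinates of magnitude at most $\mathrm{poly}(n)$, whose $n^{2-o(1)}$ hardness under $\SETH$ in barely super-constant dimension follows from the $\SETH$-hardness of $\mathrm{OV}$-type problems in this regime. A key feature is that, since coordinates are integers of bounded magnitude, the set of realizable squared $\ell_2$ distances between $A$ and $B$ is a discrete subset of $\{1, 2, \ldots, \mathrm{poly}(n)\}$, so a ``no'' answer to the question ``is there $(a, b)$ with $\|a - b\|_2^2 \le k$?'' upgrades to the stronger guarantee that every $A$-$B$ pair has squared distance at least $k+1$.

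Given any decreasing $f$ that is not $(C_L, L)$-multiplicatively Lipschitz, let $x_f > 0$ be a witness of the failure; since $f$ is decreasing, the failure direction yields some $\rho \in (1, C_L]$ with $f(\rho x_f) \le C_L^{-L} f(x_f)$. I would scale the points in $A \cup B$ by $\sqrt{x_f / k}$, so that a hypothetical pair at squared distance exactly $k$ maps to a pair at squared distance $x_f$, while every pair at squared distance at least $k+1$ maps to one at squared distance at least $x_f(1 + 1/k)$. By choosing the BCN threshold so that $1 + 1/k \ge C_L$ (this is where the choice $C_L \to 1$ for large $L$ is exploited: we pick the smallest ratio between consecutive realizable squared distances at the scale being probed), the entire NO-case distance spectrum sits in the $\rho x_f$-or-further regime where $f$ has collapsed by a factor of $C_L^{-L}$.

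I would then invoke the purported $O(n L^{c'})$-time sparsification algorithm on the $\k$-graph of the scaled $A \cup B$ with constant accuracy $\epsilon = 1/2$, and threshold on the total weight of the $(A, B)$-cut in the sparsifier $H$. In the YES case, the cut contains at least one edge of weight $f(x_f)$, so the sparsifier cut-weight is at least $f(x_f)/2$. In the NO case, every one of the at most $n^2$ edges crossing the cut has weight at most $f(\rho x_f) \le C_L^{-L} f(x_f)$, so the sparsifier cut-weight is at most $\tfrac{3}{2} n^2 C_L^{-L} f(x_f)$. Picking $L$ so that $C_L^{-L} \le n^{-10}$ (equivalently $L \log C_L \gtrsim \log n$) separates the two regimes by a factor of $\Omega(n^{5})$, so the sparsifier decides BCN in the same time. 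Combined with the $n^{2-o(1)}$ $\SETH$ lower bound for BCN in dimension $c^{\log^* n}$, this forces $n L^{c'} \ge n^{2-o(1)}$ on the relevant range of $L$, yielding the claimed $\Omega(n L^{c'})$ lower bound for an appropriate constant $c' \in (0,1)$.

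The main obstacle is calibrating $C_L$, the BCN threshold $k$, and the point-set scaling so that a \emph{single} choice of $(C_L, L)$ works uniformly for every decreasing $f$ failing the Lipschitz condition, when the reduction does not know $x_f$ in advance. One way around this is to define $C_L$ via the smallest ratio between consecutive possible squared distances that the reduction can produce (which requires choosing a BCN threshold $k$ small enough, or enumerating over a small set of candidate thresholds at logarithmic cost), and then to exploit that the Lipschitz failure supplies a witness $\rho \in (1, C_L]$ so that \emph{some} scaling maps the NO-case distances into the collapsed regime $[\rho x_f, \infty)$. A secondary technical point is that the exponent $c'$ depends on how tightly the BCN hardness in $c^{\log^* n}$ dimensions can be converted into an $n L^{c'}$ lower bound; keeping the reduction essentially input-preserving (no blow-up beyond $\mathrm{polylog}(n)$ factors) is what makes the polynomial-in-$L$ lower bound, rather than a weaker super-linear one, come through.
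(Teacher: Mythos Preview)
Your proposal is essentially the paper's proof. The paper reduces from exact bichromatic $\ell_2$-closest pair in dimension $d=c^{\log^*n}$ (hard under $\SETH$ by Chen's result), scales the points by $\sqrt{x_0}/k$ so that the witness point $x_0$ of the Lipschitz failure aligns with the decision threshold, sparsifies, and thresholds the total weight of the $A$--$B$ cut; the discreteness of the integer distance set is exactly what forces the NO-case distances past $\rho\cdot k$ after scaling, and binary search over the discrete distance set $S$ recovers the exact closest pair.

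The one place where your write-up is vaguer than the paper is the calibration. You frame it as choosing the BCN threshold $k$ so that $1+1/k\ge C_L$, and you leave the relationship between $n$ and $L$ implicit. The paper instead fixes the relationship $n=L^{1/(4c_0)}$ up front: with $c_0\log n$-bit integer coordinates, consecutive realizable $\ell_2$ distances differ multiplicatively by at least $1+1/n^{2c_0}$, and one checks that this exceeds $\rho=C_L=1+2\log(10n)/L$ precisely because $L=n^{4c_0}$. This choice simultaneously makes the hypothesized sparsification time $O(nL^{1/(8c_0)})=O(n^{3/2})$, contradicting the $n^{2-o(1)}$ lower bound. So the ``obstacle'' you flag---making a single $C_L$ work uniformly---is resolved not by varying $k$ but by tying $n$ to $L$ and letting $C_L$ be the minimum ratio over the whole distance set; the witness $x_0$ then only affects the scaling, which the reduction is free to choose.
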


This theorem shows, in particular, that geometric graphs of threshold functions are not sparsifiable in subquadratic time even for low-dimensional pointsets. These two theorems together nearly classify the decreasing functions for which efficient sparsification is possible, up to the exponent on $L$.

\subsection{Laplacian solving}

As in the case of multiplication, we are unable to characterize the functions for which solving Laplacian systems can be done in almost-linear time in low dimensions. That said, we still have results for many functions $\k$, including most kernel functions of interest in applications. We prove most of these using the aforementioned connection from Section~\ref{sec:equivalence}: if a $\k$ graph can be efficiently sparsified, then there is an efficient Laplacian multiplier for $\k$ graphs if and only if there is an efficient Laplacian system solver for $\k$ graphs.

For the kernels $\k(u,v) = 1/\|u-v\|_2^c$ for constants $c$ and the piecewise exponential kernel, we have almost-linear time algorithms in low dimensions by Theorems \ref{thm:sparsify-lipschitz} and \ref{thm:low-sparsify-lipschitz} respectively. Furthermore, the fast multipole method yields almost-linear time algorithms for multiplication. Therefore, there are almost-linear time algorithm for solving Laplacian systems in geometric graphs for these kernels.

A similar approach also yields hardness results. Theorem~\ref{thm:sparsabsinnerproduct} above implies that an almost-linear time algorithm for solving Laplacian systems on $\k$-graphs for $\k(u,v) = |\langle u,v\rangle |$ yields an almost-linear time algorithm for $\k$-adjacency multiplication. However, no such algorithm exists assuming $\SETH$ by Theorem~\ref{thm:introlowdimmultabsinnerproduct} above. Therefore, $\SETH$ implies that no almost-linear time algorithm for solving Laplacian systems in this kernel can exist.

We directly (i.e. without using a sparsifier algorithm) show an additional hardness result for solving Laplacian systems for kernels that are not multiplicatively Lipschitz, like threshold functions of $\ell_2$-distance:

\begin{theorem}[Informal version of Theorem \ref{thm:low-lsolve-hard}]
Consider an $L > 1$. There is some sufficiently large value $C_L > 1$ depending on $L$ such that for any decreasing function $f:\mathbb{R}_{\ge 0}\rightarrow \mathbb{R}_{\ge 0}$ that is not $(C_L,L)$-multiplicatively Lipschitz, no $O(n L^{c'} \log \alpha)$-time algorithm exists for solving Laplacian systems $2^{-\poly(\log n)}$ approximately in the $\k$-graph of a set of $n$ points in $c^{\log^* n}$ dimensions for some constants $c > 1, c'\in (0,1)$ assuming $\SETH$, where $\k(u,v) = f(\|u-v\|_2^2)$.
\end{theorem}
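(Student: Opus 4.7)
The plan is to reduce directly from exact bichromatic closest pair (BCP) in dimension $c^{\log^* n}$ to solving Laplacian systems on the $\k$-graph, mirroring the template used for the low-dimensional sparsification hardness (Theorem~\ref{thm:informal-low-spars-hard}), but extracting the answer bit from a single Laplacian system rather than from a sparsifier. Under $\SETH$, BCP is hard on point sets $A,B \subseteq \mathbb{Z}^d$ with coordinates on $O(\log n)$ bits even in dimension $d = c^{\log^* n}$. The crucial discrete structure is that all achievable squared $\ell_2$ distances are integers bounded by $n^{O(1)}$, so any two distinct achievable squared distances differ by a factor of at least $1 + 1/n^{O(1)}$.

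First I would fix parameters. Let $x_f$ witness that $f$ is not $(C_L,L)$-multiplicatively Lipschitz, so there is $\rho \in (1/C_L, C_L)$ with $f(\rho x_f) > C_L^{L} f(x_f)$; by monotonicity we may take $\rho < 1$. Uniformly scale the BCP points so the threshold squared distance $k^2$ maps to $x_f$; then ``close'' pairs land at squared distance at most $x_f$ while ``far'' pairs land at squared distance at least $C_L x_f$. Choosing $C_L = 1 + 1/n^{O(1)}$ is consistent with the BCP integer-distance gap, and the non-Lipschitz condition then forces a kernel-value gap of at least $C_L^L$, which we make exceed $2^{\poly(\log n)}$ by taking $L$ sufficiently large.

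Next I would set up the Laplacian query. Adjoin two auxiliary terminal vertices $s$ and $t$ by placing them in one or two additional coordinates (still within $c^{\log^* n}$ dimensions) so that $s$ lies at a common distance yielding a heavy edge to every vertex of $A$ and a negligible edge to every vertex of $B$, and symmetrically for $t$. Invoke the hypothetical $O(n L^{c'} \log \alpha)$ solver to approximate $L_G^{+}(e_s - e_t)$ to additive error $2^{-\poly(\log n)}$ and read off the effective resistance $R_{\mathrm{eff}}(s,t) = (e_s - e_t)^\top L_G^{+}(e_s - e_t)$. In the yes case the path $s \to A \to B \to t$ contains an $A$-$B$ edge of weight at least $f(x_f)$, so $R_{\mathrm{eff}}(s,t) \le O(1/f(x_f))$; in the no case every $A$-$B$ edge has weight at most $C_L^{-L} f(x_f)$, so series/parallel bounds give $R_{\mathrm{eff}}(s,t) \ge \Omega(C_L^{L}/(n^2 f(x_f)))$. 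The multiplicative gap $C_L^L / n^2$ is huge compared to the solver's precision, so thresholding decides BCP in the claimed time, contradicting $\SETH$.

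The main obstacle will be engineering the terminal gadget cleanly within the $\k$-graph format. Placing $s$ and $t$ by lifting into extra coordinates works provided $f$ takes large enough values at the chosen arguments (for a decreasing $f$ this is available near $0$), but care is needed so that $s,t$ do not create heavy spurious connections to the wrong side, and so that the within-$A$ and within-$B$ resistance does not dominate the terminal-to-terminal path: an explicit calculation bounding the shortest within-cluster path resistance relative to the $A$-$B$ cut is required. A secondary subtlety is calibrating $C_L$ and $L$ jointly, since $C_L$ must be small enough to respect BCP's integer distance gap while $L$ must be large enough that $C_L^L = 2^{\omega(\log n)}$ swamps the $2^{-\poly(\log n)}$ solver error; the constant $c' \in (0,1)$ in the runtime bound then arises from matching $L^{c'} = n^{\Omega(1)}$ against the subquadratic $\SETH$ threshold for BCP in $c^{\log^* n}$ dimensions.
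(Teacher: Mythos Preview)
Your high-level plan matches the paper's: reduce from bichromatic $\ell_2$-closest pair in dimension $c^{\log^*n}$, scale so the threshold lands at the non-Lipschitz point $x_0$ of $f$, and use the resulting polynomial-factor gap in edge weights across the $A$--$B$ cut to decide the instance via effective-resistance information extracted from a Laplacian solver. Your analysis relating the yes/no cases to a gap in $\Reff_G$ across the cut is also essentially what the paper proves (Propositions~\ref{prop:res2-res} and~\ref{prop:weight-res}).

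The genuine gap is the terminal gadget. In a $\k$-graph every edge weight is forced by Euclidean distance, so you cannot simply ``adjoin $s,t$ with heavy edges to $A$ (resp.\ $B$) and negligible edges to the other side.'' Any placement of $s$ that is simultaneously close to all of $A$ requires $A$ to have small diameter, which bichromatic closest pair instances do not have. Lifting into extra coordinates to separate $A$ from $B$ (say putting $B$ at height $\beta$) adds $\beta^2$ to every $A$--$B$ squared distance, so the multiplicative gap between the yes and no thresholds shrinks toward $1$ and no longer lands at the non-Lipschitz point $x_0$. More concretely, in the yes case the path $s\to a^*\to b^*\to t$ has resistance at least $1/f(\|s-a^*\|_2^2)$, and if $a^*$ is on the far side of $A$ this term already exceeds the no-case lower bound $1/(n^2 f(\rho x_0))$, since $f$ is decreasing and $\|s-a^*\|_2^2$ can be polynomially larger than $\rho x_0$. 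So thresholding $\Reff(s,t)$ does not separate the cases. You acknowledge this obstacle, but it is not a detail to be cleaned up; it is a structural obstruction in the $\k$-graph model.

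The paper sidesteps the gadget entirely. Instead of a single $s$--$t$ query, it runs the solver $O(\log n)$ times on the rows of a random $\pm 1/\sqrt{\ell}$ Johnson--Lindenstrauss matrix $P$ to obtain $\tilde Z \approx P L_G^{\dagger}$, giving an $O(\log n)$-dimensional embedding of the vertices in which $\|\tilde Z b_{uv}\|_2$ approximates $\|L_G^{\dagger}b_{uv}\|_2$ for \emph{all} pairs simultaneously (Proposition~\ref{prop:lapl-error}). It then feeds the embedded points into an $O(\log n)$-approximate nearest neighbor data structure (Theorem~\ref{thm:l2-ann}) to locate, in $n^{1+o(1)}$ time, the pair $u\in A,\ v\in B$ minimizing this quantity. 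Since $\|L_G^{\dagger}b_{uv}\|_2^2$ is sandwiched between $\tfrac12\Reff_G(u,v)^2$ and $n\,\Reff_G(u,v)^2$, and $\min_{u\in A,v\in B}\Reff_G(u,v)$ is sandwiched between $1/\max_e w_e$ and $n^2/\max_e w_e$ over the cut edges, the ANN output decides the instance. The missing idea in your plan is this JL-plus-ANN aggregation, which is what lets the paper read off the minimum effective resistance over all $n^2$ cross pairs from only $O(\log n)$ Laplacian solves without any auxiliary vertices.
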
 %%% Section 2 Summary of Low Dimension Results
\section{Preliminaries}\label{sec:preli}

Our results build off of algorithms and hardness results from many different areas of theoretical computer science. We begin by defining the relevant notation and describing the important past work.

\subsection{Notation}

For an $n\in \mathbb{N}_{+}$, let $[n]$ denote the set $\{1,2,\cdots,n\}$.

For any function $f$, we write $\wt{O}(f)$ to denote $f\cdot \log^{O(1)}(f)$. In addition to $O(\cdot)$ notation, for two functions $f,g$, we use the shorthand $f\lesssim g$ (resp. $\gtrsim$) to indicate that $f\leq C g$ (resp. $\geq$) for an absolute constant $C$. We use $f\eqsim g$ to mean $cf\leq g\leq Cf$ for constants $c,C$.

For a matrix $A$, we use $\|A\|_2$ to denote the spectral norm of $A$. Let $A^\top$ denote the transpose of $A$. Let $A^\dagger$ denote the Moore-Penrose pseudoinverse of $A$. Let $A^{-1}$ denote the inverse of a full rank square matrix. 

We say matrix $A$ is positive semi-definite (PSD) if $A = A^\top$ and $x^\top A x \geq 0$ for all $x \in \R^n$. We use $\preceq$, $\succeq$ to denote the semidefinite ordering, e.g. $A \succeq 0$ denotes that $A$ is PSD, and $A \succeq B $ means $A- B \succeq 0$. We say matrix $A$ is positive definite (PD) if $A = A^\top$ and $x^\top A x > 0$ for all $x \in \R^n -\{0\}$. $A \succ B$ means $A-B$ is PD.

For a vector $v$, we denote $\| v \|_p$ as the standard $\ell_p$ norm. For a vector $v$ and PSD matrix $A$, we let $\| v \|_A = (v^\top A v)^{1/2}$.

The iterated logarithm $\log^* : \R \to \Z$ is given by \begin{align*}
    \log^*(n) = \begin{cases}
    0, &\text{ if } n \leq 1 ; \\
    1 + \log^*(\log n), &\text{ otherwise.}
    \end{cases}
\end{align*}

We use $G_{\text{grav}}$ to denote the Gravitational constant.

We define $\alpha$ slightly differently in different sections. Note that both are less than the value of $\alpha$ used in Theorem \ref{thm:informal-sparsify-lipschitz}: 
\begin{table}[!h]\caption{}
\centering
\begin{tabular}{|l|l|l|}
    \hline
    Notation & Meaning & Location \\ \hline
    $\alpha$ & $\frac{ \max_{i,j} f(\| x_i - x_j \|_2^2) }{ \min_{i,j} f( \| x_i - x_j \|_2^2 ) }$  & Section~\ref{sec:equivalence}  \\ \hline
    $\alpha$ & $\frac{\max_{i,j} \| x_i - x_j \|_2 }{\min_{i,j} \| x_i - x_j \|_2}$ & Section~\ref{sec:sparsify-lipschitz} \\ \hline %%% This is Tim section and actually kappa
\end{tabular}
\end{table}

\subsection{Graph and Laplacian Notation}\label{sec:graph_Laplacian_notation}

Let $G = (V,E,w)$ be a connected weighted undirected graph with $n$ vertices and $m$ edges and edge weights $w_e > 0$. We say $r_e = 1/w_e$ is the resistance of edge $e$.
If we give a direction to the edges of $G$ arbitrarily, we can write its Laplacian as $L_G = B^\top W B$, where $W \in \R^{m \times m}$ is the diagonal matrix $W(e,e) = w_e$ and $B \in \R^{m \times n}$ is the signed edge-vertex incidence matrix and can be defined in the following way
\begin{align}\label{eq:def_Laplacian_B}
B(e,v) = \begin{cases}
1,  & \text{~if~$v$~is~$e$'s~head}; \\
-1, & \text{~if~$v$~is~$e$'s~tail};\\
0,  & \text{~otherwise.}
\end{cases}
\end{align}

A useful notion related to Laplacian matrices is the effective resistance of a pair of nodes:
\begin{definition}[Effective resistance]\label{def:effective_resistance}
The effective resistance of a pair of vertices $u,v \in V_G$ is defined as 
\begin{align*}
\Reff_G(u,v) = b_{u,v}^\top L^\dagger b_{u,v}
\end{align*}
where $b_{u,v} \in \R^{|V_G|}$ is an all zero vector except for entries of $1$ at $u$ and $-1$ at $v$.
\end{definition}

Using effective resistance, we can define leverage score
\begin{definition}[Leverage score]\label{def:leverage_score}
The leverage score of an edge $e = (u,v) \in E_G$ is defined as 
\begin{align*}
l_e = w_e \cdot \Reff_G(u,v).
\end{align*}
\end{definition}

We define a useful notation called electrical flow
\begin{definition}[Electrical flow]\label{def:electrical_flow}
Let $B \in \R^{m \times n}$ be defined as Eq.~\eqref{eq:def_Laplacian_B}, for a given demand vector $d \in \R^n$, we define electrical flow $f \in \R^m$ as follows:
\begin{align*}
f = \arg\min_{ f : B^\top f = d } \sum_{ e \in E_G } f_e^2 /w_e .
\end{align*}
\end{definition}

We let $d(i)$ denote the degree of vertex $i$. For any set $S \subseteq V$, we define volume of $S$: $\mu(S) = \sum_{i \in S} d(i)$. It is obvious that $\mu(V) = 2|E|$. For any two sets $S, T \subseteq V$, let $E(S,T)$ be the set of edges connecting a vertex in $S$ with a vertex in $T$. We call $\Phi(S)$ to be the conductance of a set of vertices $S$, and can be formally defined as
\begin{align*}
\Phi(S) = \frac{ | E(S, V \setminus S) | }{ \min ( \mu (S) , \mu ( V \setminus S ) ) }.
\end{align*}
We define the notation conductance, which is standard in the literature of graph partitioning and graph clustering \cite{st04,kvv04,acl06,ap09,lrtv11,zlm13,lz14}.
\begin{definition}[Conductance]\label{def:conductance}
The conductance of a graph $G$ is defined as follows:
\begin{align*}
\Phi_G = \min_{ S \subset V } \Phi(S).
\end{align*}
\end{definition}

\begin{lemma}[\cite{st04,aalg17}]\label{lem:conductance_cut}
A graph $G$ with minimum conductance $\Phi_G$ has the property that for every pair of vertices $u,v$,
\begin{align*}
\Reff_G(u,v)\le O \left( \Big( \frac{1}{c_u} + \frac{1}{c_v} \Big) \cdot \frac{1}{\Phi_G^2} \right)
\end{align*}
where $c_u$ is the sum of the weights of edges incident with $u$. Furthermore, for every pair of vertices $u,v$,
$$\Reff_G(u,v)\ge \max(1/c_u,1/c_v)$$
\end{lemma}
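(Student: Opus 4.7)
The plan is to prove the two bounds separately, handling the lower bound by a short monotonicity argument and the upper bound via Cheeger's inequality applied to the normalized Laplacian.

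For the lower bound $\Reff_G(u,v) \ge \max(1/c_u, 1/c_v)$, I would appeal to Rayleigh's monotonicity principle: merging (short-circuiting) vertices can only decrease effective resistance. Contract every vertex other than $u$ into a single super-node identified with $v$. In the resulting multigraph, $u$ is connected to $v$ by parallel edges of total conductance $\sum_{e \ni u} w_e = c_u$, so the effective resistance between $u$ and $v$ in the contracted graph equals $1/c_u$. By monotonicity, $\Reff_G(u,v) \ge 1/c_u$, and the symmetric argument contracting everything to $u$ gives $\Reff_G(u,v) \ge 1/c_v$. Taking the max yields the claim.

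For the upper bound, the plan is to diagonalize through the normalized Laplacian $\mathcal{L} := D^{-1/2} L_G D^{-1/2}$, where $D$ is the diagonal degree matrix with $D_{ii} = c_i$. Since $L_G = D^{1/2} \mathcal{L} D^{1/2}$, we have $L_G^\dagger = D^{-1/2} \mathcal{L}^\dagger D^{-1/2}$ on the image of $L_G$, and for $b_{uv} = e_u - e_v$,
\begin{align*}
\Reff_G(u,v) \;=\; b_{uv}^\top L_G^\dagger b_{uv} \;=\; z^\top \mathcal{L}^\dagger z, \qquad \text{where } z := D^{-1/2}(e_u - e_v).
\end{align*}
A direct computation shows $\langle z, D^{1/2}\mathbf{1}\rangle = 0$, so $z$ is orthogonal to the kernel of $\mathcal{L}$ (which is spanned by $D^{1/2}\mathbf{1}$ when $G$ is connected). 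Hence $z^\top \mathcal{L}^\dagger z \le \|z\|_2^2 / \lambda_2(\mathcal{L})$, and $\|z\|_2^2 = 1/c_u + 1/c_v$.

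To finish, I would invoke Cheeger's inequality, which states $\lambda_2(\mathcal{L}) \ge \Phi_G^2/2$. Combining gives
\begin{align*}
\Reff_G(u,v) \;\le\; \frac{2}{\Phi_G^2}\left(\frac{1}{c_u} + \frac{1}{c_v}\right),
\end{align*}
which is exactly the stated $O(\cdot)$ bound. There is no serious obstacle here: both Rayleigh monotonicity and Cheeger's inequality are standard and can be cited. The only subtle point to check is the orthogonality of $z$ to $\ker(\mathcal{L})$ and the correct handling of the pseudoinverse, which falls out from a one-line inner product computation.
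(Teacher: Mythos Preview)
Your proof is correct, and the paper does not supply its own argument for this lemma---it is stated in the preliminaries with a citation to \cite{st04,aalg17}. The route you take (Rayleigh monotonicity for the lower bound, and the normalized Laplacian combined with Cheeger's inequality for the upper bound) is the standard one, and indeed the paper uses essentially the same Cheeger-based reasoning elsewhere (see the proof of Proposition~\ref{prop:lapl-apx}).

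One small point worth tightening: the assertion that $L_G^\dagger = D^{-1/2}\mathcal{L}^\dagger D^{-1/2}$ ``on the image of $L_G$'' is not literally true as a linear map, since $D^{-1/2}\mathcal{L}^\dagger D^{-1/2}b$ need not be orthogonal to $\mathbf{1}$ even when $b$ is. What \emph{is} true, and what you actually need, is the quadratic-form identity $b^\top L_G^\dagger b = b^\top D^{-1/2}\mathcal{L}^\dagger D^{-1/2} b$ for $b\perp\mathbf{1}$; the two vectors $L_G^\dagger b$ and $D^{-1/2}\mathcal{L}^\dagger D^{-1/2} b$ differ only by a multiple of $\mathbf{1}$, which is killed by the inner product with $b$. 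Alternatively, you can argue directly: if $y=\mathcal{L}^\dagger z$ with $z=D^{-1/2}b_{uv}$, then $x=D^{-1/2}y$ satisfies $L_G x = b_{uv}$, and hence $\Reff_G(u,v)=b_{uv}^\top x = z^\top y = z^\top\mathcal{L}^\dagger z$. Either way the bound $\Reff_G(u,v)\le (1/c_u+1/c_v)\cdot 2/\Phi_G^2$ follows as you wrote.
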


For a function $\k:\mathbb{R}^d\times \mathbb{R}^d\rightarrow \mathbb{R}$, the $\k$-graph on a set of points $X\subseteq \mathbb{R}^d$ is the graph with vertex set $X$ and edge weights $\k(u,v)$ for $u,v\in X$. For a function $f\mathbb{R}_{\ge 0}\rightarrow \mathbb{R}_{\ge 0}$, the $f$-graph on a set of points $X$ is defined to be the $\k$ graph on $X$ for $\k(u,v) = f(\|u-v\|_2)$.

\subsection{Spectral Sparsification via Random Sampling}

Here, we state some well known results on spectral sparsification via random sampling, from previous works. The theorems below are essential for our results on sparsifying geometric graphs quickly.

\begin{theorem}[Oversampling \cite{kmp11}]\label{thm:oversampling}
Consider a graph $G = (V,E)$ with edge weights $w_e > 0$ and probabilities $p_e\in (0,1]$ assigned to each edge and parameters $\delta \in (0,1),\epsilon\in (0,1)$. Generate a reweighted subgraph $H$ of $G$ with $q$ edges, with each edge $e$ sampled with probability $p_e/t$ and added to $H$ with weight $w_e t/(p_e q)$, where $t = \sum_{e \in E} p_e$. If
\begin{enumerate}
    \item $q\ge C \cdot \epsilon^{-2} \cdot t\log t \cdot \log (1/\delta)$, where $C > 1$ is a sufficiently large constant
    \item $p_e\ge w_e \cdot \Reff_G(u,v)$ for all edges $e = \{u,v\}$ in $G$
\end{enumerate}
then $(1 - \epsilon) L_G \preceq L_H \preceq (1 + \epsilon) L_G$ with probability at least $1 - \delta$.
\end{theorem}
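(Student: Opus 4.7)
The plan is to reduce the statement to a matrix concentration inequality (matrix Chernoff/Bernstein) applied to a sum of independent rank-one random matrices, following the Spielman--Srivastava approach as extended by Koutis--Miller--Peng. Write $L_G = \sum_{e \in E} w_e b_e b_e^\top$, where $b_e \in \R^n$ is the signed indicator vector of edge $e$ (the corresponding row of $B$). For each $i \in [q]$, let $e_i$ be the random edge sampled in round $i$, so that
\begin{equation*}
L_H \;=\; \sum_{i=1}^{q} \frac{w_{e_i} t}{p_{e_i} q}\, b_{e_i} b_{e_i}^\top.
\end{equation*}
A direct expectation computation using $\Pr[e_i = e] = p_e/t$ gives $\E[L_H] = L_G$, so we need only show concentration.

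To handle the kernel of $L_G$ (the all-ones direction, and any other disconnected components), I would work inside the image of $L_G$: let $\Pi := L_G^{\dagger/2} L_G L_G^{\dagger/2}$ be the orthogonal projection onto $\operatorname{Im}(L_G)$, and for each $i$ set
\begin{equation*}
X_i \;:=\; \frac{w_{e_i} t}{p_{e_i} q}\, L_G^{\dagger/2} b_{e_i} b_{e_i}^\top L_G^{\dagger/2}.
\end{equation*}
The spectral approximation $(1-\eps) L_G \preceq L_H \preceq (1+\eps) L_G$ is equivalent to $\bigl\lVert \sum_i X_i - \Pi \bigr\rVert_2 \leq \eps$, so it suffices to prove the latter. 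Each $X_i$ is PSD, and $\E\!\left[\sum_i X_i\right] = \Pi$ with $\lambda_{\max}(\Pi) = 1$. The key deterministic bound uses the hypothesis $p_e \geq w_e \Reff_G(u,v) = \ell_e$: for any realized edge $e_i = e$,
\begin{equation*}
\lVert X_i \rVert_2 \;=\; \frac{w_e t}{p_e q}\,\bigl\lVert L_G^{\dagger/2} b_e \bigr\rVert_2^2 \;=\; \frac{t}{q}\cdot\frac{w_e \Reff_G(e)}{p_e} \;\leq\; \frac{t}{q}.
\end{equation*}

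With these ingredients I would invoke the standard matrix Chernoff bound for sums of independent PSD matrices: if $X_1, \dots, X_q$ are independent with $0 \preceq X_i$ and $\lVert X_i\rVert_2 \leq R$, and $\mu := \lambda_{\max}(\E[\sum_i X_i])$, then
\begin{equation*}
\Pr\!\left[\Bigl\lVert \sum_i X_i - \E[\sum_i X_i] \Bigr\rVert_2 \geq \eps \mu\right] \;\leq\; 2n \exp\!\bigl(-\tfrac{\eps^2 \mu}{3R}\bigr).
\end{equation*}
Here $\mu = 1$ and $R = t/q$, so the failure probability is at most $2n\exp(-\eps^2 q/(3t))$. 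Since $\sum_e p_e \geq \sum_e w_e\Reff_G(e) = n-1$ (the trace of $\Pi$), we have $\log n = O(\log t)$, so the hypothesis $q \geq C\eps^{-2} t \log t \log(1/\delta)$ for a sufficiently large $C$ drives this probability below $\delta$, completing the proof.

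\textbf{Main obstacle.} The only subtlety is conceptual rather than technical: the oversampling parameters $p_e$ need not equal the true leverage scores $\ell_e$, only upper bound them. This is exactly what makes the bound $\lVert X_i\rVert_2 \leq t/q$ go through cleanly (the ratio $\ell_e/p_e$ is absorbed), and it is also what forces the number of samples to scale with $t = \sum_e p_e$ rather than with $\sum_e \ell_e = n-1$. Aside from this, the argument is a direct application of matrix Chernoff once the correct normalization via $L_G^{\dagger/2}$ is set up to deal with $L_G$'s null space.
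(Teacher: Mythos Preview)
Your proposal is correct and is the standard matrix-Chernoff argument for leverage-score oversampling. However, the paper does not prove this theorem at all: it is stated in the preliminaries (Section~3) as a known result, attributed to \cite{kmp11}, and used as a black box throughout. There is therefore nothing in the paper to compare your argument against; your write-up is essentially the proof one finds in the cited literature.

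One small bookkeeping remark: the matrix-Chernoff bound you quote gives failure probability $2n\exp(-\eps^2 q/(3t))$, so strictly you need $q \gtrsim \eps^{-2} t(\log n + \log(1/\delta))$ rather than the product $\eps^{-2} t\log t\log(1/\delta)$ stated in the theorem. Your observation that $t \geq n-1$ gives $\log n = O(\log t)$, and since $\delta \in (0,1)$ the product form with a large enough constant $C$ dominates the sum, so the stated hypothesis suffices. This is a cosmetic mismatch in how the bound is packaged, not a gap in your argument.
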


\begin{algorithm}
\begin{algorithmic}[1]\caption{}
\Procedure{\textsc{Oversampling}}{$G,w,p,\epsilon,\delta$} \Comment{Theorem~\ref{thm:oversampling}}
    \State $t \leftarrow \sum_{e \in E} p_e$
    \State $q \leftarrow C \cdot \epsilon^{-2} \cdot t \log t \cdot \log(1/\delta)$
    \State Initialize $H$ to be an empty graph
    \For{$i = 1 \to q$}
        \State Sample one $e \in E$ with probability $p_e/t$
        \State Add that edge with weight $w_e t / (p_e q)$ to graph $H$
    \EndFor
    \State \Return $H$
\EndProcedure
\end{algorithmic}
\end{algorithm}

\begin{theorem}[\cite{ss11} effective resistance data structure]\label{thm:ss11}
There is a $\tilde{O}(m(\log \alpha)/\eps^2)$ time algorithm which on input $\eps > 0$ and $G = (V,E,w)$ with $\alpha = w_{\max}/w_{\min}$ computes a $(24 \log n/\eps^2)\times n$ matrix $\tilde{Z}$ such that with probability at least $1 - 1/n$,

$$(1 - \eps) \Reff_G(u,v) \le \|\tilde{Z}b_{uv}\|_2^2 \le (1 + \eps) \Reff_G(u,v)$$
for every pair of vertices $u,v\in V$.
\end{theorem}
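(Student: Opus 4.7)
The plan is to follow the Johnson-Lindenstrauss-based construction of Spielman and Srivastava. First, factor the Laplacian as $L = B^\top W B$ using the notation from Section~\ref{sec:graph_Laplacian_notation}, and define the $m \times n$ matrix $Z := W^{1/2} B L^\dagger$. A direct algebraic manipulation gives
$$\|Z b_{uv}\|_2^2 = b_{uv}^\top L^\dagger (B^\top W B) L^\dagger b_{uv} = b_{uv}^\top L^\dagger L L^\dagger b_{uv} = b_{uv}^\top L^\dagger b_{uv} = \Reff_G(u,v),$$
using that $b_{uv}$ lies in the image of $L$ on a connected graph. So it suffices to build a row-compression $\tilde{Z}$ of $Z$ that approximately preserves the squared norms of the $\binom{n}{2}$ vectors $Zb_{uv}$.

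Next, let $Q \in \R^{k \times m}$ be a random matrix with i.i.d. $\pm 1/\sqrt{k}$ entries (or standard Gaussians) for $k = 24\log n/\eps^2$, and set $\tilde{Z} := QZ$. The Johnson-Lindenstrauss lemma gives, for any fixed $y \in \R^m$, that $\|Qy\|_2^2 \in (1\pm\eps)\|y\|_2^2$ with probability at least $1 - 2\exp(-\eps^2 k/8)$. With the chosen $k$, this failure probability is at most $1/n^{3}$, and a union bound over all $\binom{n}{2}$ choices of $y = Zb_{uv}$ yields the simultaneous $(1\pm\eps)$ approximation of all pairwise effective resistances with probability at least $1 - 1/n$, as required.

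For efficiency, one never forms $L^\dagger$ explicitly. Compute $Y := Q W^{1/2} B \in \R^{k \times n}$ in $O(mk)$ time by touching each edge once, and then obtain each row $\tilde{z}_i$ of $\tilde{Z}$ as the solution to the Laplacian system $L \tilde{z}_i^\top = y_i^\top$, invoking a nearly-linear-time Laplacian solver at cost $\tilde{O}(m\log(1/\delta))$ per solve to reach relative error $\delta$ in the $L$-norm. This produces $\tilde{Z}$ in total time $\tilde{O}(mk\log(1/\delta))$.

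The main obstacle, and the source of the $\log \alpha$ factor in the final runtime, is bounding how the solver error propagates into the multiplicative error on $\Reff_G(u,v)$. Since the smallest effective resistance can be as small as $\Theta(1/w_{\max})$ while the relevant norms are governed by weights varying by a factor of $\alpha$, one needs $\delta = \poly(\eps/(n\alpha))$ so that the perturbation from approximate solves is dominated by the JL budget of $\eps$. This gives $\log(1/\delta) = O(\log(n\alpha/\eps))$, and after absorbing polylogarithmic factors in $n$ and $1/\eps$ into $\tilde{O}(\cdot)$ the total cost collapses to $\tilde{O}(m(\log\alpha)/\eps^2)$. The error analysis itself is the routine step: letting $\hat{z}_i$ denote the approximate solve, bound $\|\hat{z}_i - L^\dagger y_i^\top\|_L \le \delta \|L^\dagger y_i^\top\|_L$, translate this to an additive perturbation on $\|\tilde{Z}b_{uv}\|_2^2$ via Cauchy-Schwarz and $\|b_{uv}\|_{L^\dagger} = \sqrt{\Reff_G(u,v)}$, and combine with the JL guarantee by triangle inequality to obtain the stated bound.
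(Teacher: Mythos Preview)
The paper does not prove this theorem; it is stated in the preliminaries as a black-box result imported from \cite{ss11}. Your sketch is the standard Spielman--Srivastava argument and is essentially correct, so there is nothing to compare against here.
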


The following is an immediate corollary of Theorems \ref{thm:oversampling} and \ref{thm:ss11}:

\begin{corollary}[\cite{ss11}]\label{cor:ss11}
There is a $\tilde{O}(m(\log \alpha)/\eps^2)$ time algorithm which on input $\eps > 0$ and $G = (V,E,w)$ with $\alpha = w_{\max}/w_{\min}$, produces an $(1 \pm \eps)$-approximate sparsifier for $G$.
\end{corollary}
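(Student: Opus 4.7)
The plan is to combine the effective resistance data structure of Theorem~\ref{thm:ss11} with the oversampling sparsifier of Theorem~\ref{thm:oversampling} in a black-box way, treating the approximate effective resistances as sampling weights.

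First, I would apply Theorem~\ref{thm:ss11} with accuracy parameter, say, $1/2$ (a constant suffices here; the $\eps$ in the corollary will come from the sampling count) to compute, in time $\tilde{O}(m \log \alpha)$, a matrix $\tilde Z$ such that $\tilde R_e := \|\tilde Z b_{uv}\|_2^2$ satisfies $\tfrac{1}{2}\Reff_G(u,v) \le \tilde R_e \le \tfrac{3}{2}\Reff_G(u,v)$ for every edge $e=(u,v)\in E$, with probability at least $1-1/n$. For each edge $e$, I would then set
\[
p_e \;:=\; 2\,w_e\,\tilde R_e,
\]
which can be computed in $O(\log n)$ time per edge once $\tilde Z$ is known, for a total of $\tilde O(m)$ additional time. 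By the approximation guarantee, $p_e \ge w_e \Reff_G(u,v) = \ell_e$ (the leverage score), so condition (2) of Theorem~\ref{thm:oversampling} is satisfied.

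Next, I need to control $t = \sum_{e\in E} p_e$ so that the sample count $q$ is small. Using the upper approximation, $t \le 3\sum_e w_e \Reff_G(u,v) = 3\sum_e \ell_e = 3(n-1)$, by the standard identity that the sum of leverage scores in a connected graph equals $n-1$ (this is $\tr(B^\top W B \cdot L_G^\dagger) = \tr(I_{n-1})$). Thus $t = O(n)$ and $\log t = O(\log n)$. Setting $\delta = 1/n$ in Theorem~\ref{thm:oversampling}, the required sample count is
\[
q \;=\; C\,\eps^{-2}\,t\log t\,\log(1/\delta) \;=\; O(n\log^2 n/\eps^2).
\]
Running \textsc{Oversampling} with these probabilities produces a reweighted subgraph $H$ with $q = \tilde O(n/\eps^2)$ edges satisfying $(1-\eps)L_G \preceq L_H \preceq (1+\eps)L_G$ with probability at least $1-1/n$, i.e.\ an $(1\pm\eps)$-approximate spectral sparsifier.

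Finally, I would add up the running time: $\tilde O(m\log\alpha)$ for building $\tilde Z$, $\tilde O(m)$ for computing all $p_e$, and $\tilde O(q) = \tilde O(n/\eps^2)$ for the sampling itself, for a grand total of $\tilde O(m(\log\alpha)/\eps^2)$ as claimed. The only subtle point — and the one worth double-checking — is that a constant-factor approximation to $\Reff_G$ (rather than $(1\pm\eps)$) is good enough, which is why we only pay the $\eps$ inside the sampling count $q$ and not inside the effective-resistance accuracy; this is what keeps the $\log\alpha$ factor independent of $\eps$.
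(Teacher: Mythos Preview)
Your proposal is correct and is exactly the argument the paper has in mind: the paper simply states this as ``an immediate corollary of Theorems~\ref{thm:oversampling} and~\ref{thm:ss11}'' without further proof, and your black-box combination of the two is precisely that immediate corollary. The one tiny technicality you may want to add is capping $p_e$ at $1$ (since Theorem~\ref{thm:oversampling} requires $p_e\in(0,1]$), which is harmless because leverage scores are at most $1$.
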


\subsection{Woodbury Identity}\label{sec:woodbury_identity}

\begin{proposition}[\cite{w49,w50}]\label{prop:woodbury}
The Woodbury matrix identity is
\begin{align*}
(A+ U C V)^{-1} = A^{-1} - A^{-1} U (C^{-1} + V A^{-1} U)^{-1} V A^{-1}
\end{align*}
where $A, U, C$ and $V$ all denote matrices of the correct (conformable) sizes: For integers $n$ and $k$, $A$ is $n\times n$, $U$ is $n\times k$, $C$ is $k \times k$ and $V$ is $k \times n$.
\end{proposition}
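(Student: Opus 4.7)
The plan is to verify the Woodbury identity by direct multiplication: I will multiply $(A + UCV)$ on the right by the claimed expression $A^{-1} - A^{-1} U (C^{-1} + VA^{-1}U)^{-1} V A^{-1}$ and show that the product is the identity $I_n$. Since both matrices are $n\times n$ and $A + UCV$ is assumed invertible (the statement implicitly requires $A$ and the $k\times k$ inner matrix $C^{-1} + VA^{-1}U$ to be invertible), exhibiting a right inverse suffices.

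First I would abbreviate $W := C^{-1} + VA^{-1}U$ to keep the bookkeeping light, so the claim becomes $(A+UCV)^{-1} = A^{-1} - A^{-1}UW^{-1}VA^{-1}$. Distributing $(A+UCV)$ across the two terms of this expression produces four summands: $AA^{-1} = I$, the term $-U W^{-1} V A^{-1}$, the term $UCVA^{-1}$, and the term $-UCVA^{-1} U W^{-1} V A^{-1}$. Grouping the two terms that end in $W^{-1}VA^{-1}$ gives $-(U + UCVA^{-1}U) W^{-1} V A^{-1} = -UC(C^{-1} + VA^{-1}U) W^{-1} V A^{-1} = -UCWW^{-1}VA^{-1} = -UCVA^{-1}$, which cancels the remaining $+UCVA^{-1}$. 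Thus the total product is $I$, and the identity follows.

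There is essentially no obstacle, since this is a classical algebraic identity; the only care needed is preserving the order of the (non-commuting) matrix factors throughout the manipulation, and making sure the factorization $U + UCVA^{-1}U = UC(C^{-1} + VA^{-1}U)$ is applied on the correct side. An alternative proof that I would mention only if pedagogically useful is to write the block matrix $\begin{pmatrix} A & -U \\ V & C^{-1}\end{pmatrix}$ and compute its inverse in two ways via the Schur complement---eliminating the $A$-block in one order gives a top-left entry of $(A+UCV)^{-1}$, and eliminating the $C^{-1}$-block gives $A^{-1} - A^{-1}U W^{-1} V A^{-1}$---but this requires setting up more machinery than the one-line direct verification above.
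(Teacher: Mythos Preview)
Your proof is correct. The paper does not actually prove this proposition---it simply states the Woodbury identity as a classical result with citations to \cite{w49,w50}---so your direct verification by right-multiplication is a perfectly valid way to supply the missing argument.
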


The Woodbury identity is useful for solving linear systems in a matrix $M$ which can be written as the sum of a diagonal matrix $A$ and a low-rank matrix $UV$ for $k\ll n$ (setting $C=I$).

\subsection{Tail Bounds}

We will use several well-known tail bounds from probability theory.
\begin{theorem}[Chernoff Bounds \cite{c52}]\label{thm:chernoff}
Let $X = \sum_{i=1}^n X_i$, where $X_i=1$ with probability $p_i$ and $X_i = 0$ with probability $1-p_i$, and all $X_i$ are independent. Let $\mu = \E[X] = \sum_{i=1}^n p_i$. Then \\
1. $ \Pr[ X \geq (1+\delta) \mu ] \leq \exp ( - \delta^2 \mu / 3 ) $, $\forall \delta > 0$ ; \\
2. $ \Pr[ X \leq (1-\delta) \mu ] \leq \exp ( - \delta^2 \mu / 2 ) $, $\forall 0 < \delta < 1$. 
\end{theorem}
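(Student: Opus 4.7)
The plan is to prove both tail bounds by the Chernoff--Cram\'er exponential moment method, which reduces a probability bound to a moment generating function computation via Markov's inequality. First I would handle the upper tail. For any parameter $t > 0$, Markov's inequality applied to $e^{tX}$ gives
\begin{align*}
\Pr[X \ge (1+\delta)\mu] \;=\; \Pr[e^{tX} \ge e^{t(1+\delta)\mu}] \;\le\; \frac{\E[e^{tX}]}{e^{t(1+\delta)\mu}}.
\end{align*}
By independence of the $X_i$'s, $\E[e^{tX}] = \prod_{i=1}^n \E[e^{tX_i}] = \prod_{i=1}^n (1 + p_i(e^t-1))$, and applying the elementary inequality $1+x \le e^x$ to each factor, this is at most $\exp\bigl(\mu(e^t-1)\bigr)$.

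The next step is to optimize over $t$. Setting $t = \ln(1+\delta)$ (which is positive for $\delta>0$) yields
\begin{align*}
\Pr[X \ge (1+\delta)\mu] \;\le\; \left(\frac{e^\delta}{(1+\delta)^{1+\delta}}\right)^{\!\mu}.
\end{align*}
To conclude the first bound I would show the scalar inequality $\delta - (1+\delta)\ln(1+\delta) \le -\delta^2/3$ for all $\delta>0$; taking logs of the previous display and multiplying through by $\mu$ then gives $\Pr[X \ge (1+\delta)\mu] \le \exp(-\delta^2\mu/3)$. This scalar inequality is standard and follows by verifying that the function $h(\delta) := \delta - (1+\delta)\ln(1+\delta) + \delta^2/3$ satisfies $h(0)=0$ and $h'(\delta) \le 0$ for $\delta > 0$, which reduces to the elementary fact $\ln(1+\delta) \ge 2\delta/(2+\delta)$ after some rearrangement.

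The lower-tail inequality is proved by a symmetric argument: I would apply Markov's inequality to $e^{-tX}$ for $t > 0$, use $\E[e^{-tX}] \le \exp(\mu(e^{-t}-1))$ in the same way, and optimize at $t = -\ln(1-\delta)$ (valid since $0<\delta<1$), obtaining $\Pr[X \le (1-\delta)\mu] \le \bigl(e^{-\delta}/(1-\delta)^{1-\delta}\bigr)^\mu$. The finishing scalar inequality to verify here is $-\delta - (1-\delta)\ln(1-\delta) \le -\delta^2/2$ for $0<\delta<1$, which again follows by examining the derivative and using the Taylor expansion $-\ln(1-\delta) = \sum_{k\ge 1}\delta^k/k$. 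The main (very mild) obstacle is simply the bookkeeping of these two analytic inequalities; everything else is a one-line application of Markov plus independence of the Bernoulli factors.
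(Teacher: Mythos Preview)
The paper does not prove this theorem; it is merely cited as a standard tail bound in the preliminaries. Your moment-generating-function approach is exactly the classical argument and is correctly executed through the step
\[
\Pr[X \ge (1+\delta)\mu] \;\le\; \left(\frac{e^\delta}{(1+\delta)^{1+\delta}}\right)^{\!\mu},
\]
and your lower-tail argument is fine as written.

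There is, however, a genuine gap in the upper-tail finish. The scalar inequality you invoke, $\delta - (1+\delta)\ln(1+\delta) \le -\delta^2/3$ for \emph{all} $\delta>0$, is false: already at $\delta=2$ one has $(1+\delta)\ln(1+\delta)=3\ln 3\approx 3.30$ while $\delta+\delta^2/3\approx 3.33$. Your derivative reduction to $\ln(1+\delta)\ge 2\delta/3$ likewise fails for large $\delta$ (logarithmic versus linear growth), and it is not equivalent to the inequality $\ln(1+\delta)\ge 2\delta/(2+\delta)$ that you mention. In fact the bound $\exp(-\delta^2\mu/3)$ as stated for all $\delta>0$ is too strong: for, say, $X\sim\mathrm{Bin}(100,0.01)$ and $\delta=10$ one can check $\Pr[X\ge 11]\gg e^{-100/3}$. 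The standard formulation restricts this form to $0<\delta\le 1$ and uses $\exp(-\delta\mu/3)$ (or $\exp\bigl(-\tfrac{\delta^2}{2+\delta}\mu\bigr)$) for larger $\delta$. You should either note that the $\delta^2/3$ exponent is only claimed on $(0,1]$, or prove the combined bound $\exp\bigl(-\min(\delta,\delta^2)\mu/3\bigr)$, which does follow from the tight form above by the same kind of elementary analysis.
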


\begin{theorem}[Hoeffding bound \cite{h63}]\label{thm:hoeffding}
Let $X_1, \cdots, X_n$ denote $n$ independent bounded variables in $[a_i,b_i]$. Let $X= \sum_{i=1}^n X_i$, then we have
\begin{align*}
\Pr[ | X - \E[X] | \geq t ] \leq 2\exp \left( - \frac{2t^2}{ \sum_{i=1}^n (b_i - a_i)^2 } \right)
\end{align*}
\end{theorem}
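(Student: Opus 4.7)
The plan is to prove this via the exponential moment (Chernoff/Bernstein) method, reducing the tail bound to bounding the moment generating function (MGF) of each $X_i - \E[X_i]$ and then optimizing a free parameter.

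First, for any $\lambda > 0$, apply Markov's inequality to the non-negative random variable $e^{\lambda (X - \E[X])}$:
$$\Pr[X - \E[X] \geq t] \leq e^{-\lambda t}\cdot \E[e^{\lambda (X - \E[X])}].$$
By independence of the $X_i$, the expectation factors as $\prod_{i=1}^n \E[e^{\lambda(X_i - \E[X_i])}]$, so the task reduces to bounding a single centered MGF.

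The main technical step, and the one requiring real work, is Hoeffding's lemma: if $Y$ is a zero-mean random variable with $Y \in [a,b]$, then $\E[e^{\lambda Y}] \leq \exp(\lambda^2 (b-a)^2 / 8)$. I would prove it by using convexity of $y \mapsto e^{\lambda y}$ to write $e^{\lambda y} \leq \tfrac{b-y}{b-a} e^{\lambda a} + \tfrac{y-a}{b-a} e^{\lambda b}$ on $[a,b]$; taking expectations and using $\E[Y]=0$ collapses the bound to $(1-p)e^{\lambda a} + p e^{\lambda b}$ where $p = -a/(b-a)$. Setting $h = \lambda(b-a)$ and $\varphi(h) = -ph + \log(1-p+p e^h)$, one verifies $\varphi(0)=\varphi'(0)=0$ and $\varphi''(h) \leq 1/4$ for all $h$ (the latter because $\varphi''(h) = q(1-q)$ where $q = pe^h/(1-p+pe^h) \in [0,1]$). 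Taylor's theorem then yields $\varphi(h) \leq h^2/8$, giving the claimed MGF bound.

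Applying the lemma to each $Y_i := X_i - \E[X_i] \in [a_i - \E[X_i], b_i - \E[X_i]]$ (an interval of length $b_i - a_i$) and multiplying gives
$$\Pr[X - \E[X] \geq t] \leq \exp\!\left(-\lambda t + \frac{\lambda^2}{8}\sum_{i=1}^n (b_i - a_i)^2\right).$$
Optimizing over $\lambda > 0$ with $\lambda = 4t/\sum_i (b_i-a_i)^2$ produces the one-sided bound $\exp\!\bigl(-2t^2/\sum_i(b_i-a_i)^2\bigr)$. Running the identical argument on $-X$ (using that $-Y_i$ lies in an interval of the same length) handles the lower tail, and a union bound over the two tails introduces the factor of $2$. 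The main obstacle is Hoeffding's lemma; once that is in hand, the remaining steps are routine.
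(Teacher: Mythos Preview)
Your proof is correct and follows the standard Chernoff--Hoeffding argument. Note, however, that the paper does not actually prove this theorem: it is stated in the preliminaries as a classical tail bound with a citation to Hoeffding's original paper~\cite{h63}, and no proof is given. Your write-up is exactly the textbook derivation (Markov on the MGF, Hoeffding's lemma via convexity and the $\varphi''\le 1/4$ bound, optimization in $\lambda$, union bound on the two tails), so there is nothing to compare against.
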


\subsection{Fine-Grained Hypotheses}\label{sec:fine_grained}
\paragraph*{Strong Exponential Time Hypothesis}

Impagliazzo and Paturi \cite{ip01} introduced the Strong Exponential Time Hypothesis ($\SETH$) to address the complexity of CNF-SAT. Although it was originally stated only for deterministic algorithms, it is now common to extend $\SETH$ to randomized algorithms as well.

\begin{hypothesis}[Strong Exponential Time Hypothesis ($\SETH$)]
For every $\epsilon > 0$ there exists an integer $k \geq 3$ such that CNF-SAT on formulas with clause size at most $k$ (the so called $k$-SAT problem) and $n$ variables cannot be solved in $O(2^{(1-\epsilon)n})$ time even by a randomized algorithm.
\end{hypothesis}

\paragraph*{Orthogonal Vectors Conjecture}

The Orthogonal Vectors (OV) problem asks: given $n$ vectors $x_1, \cdots, x_n \in \{0,1\}^d$, are there $i,j$ such that $\langle v_i, v_j \rangle = 0$ (where the inner product is taken over $\Z$)? It is easy to see that $O(n^2 d)$ time suffices for solving OV, and slightly subquadratic-time algorithms are known in the case of small $d$ \cite{awy15,cw16}. It is conjectured that there is no OV algorithm running in $n^{1.99}$ time when $d = \omega (\log n)$.

\begin{conjecture}[Orthogonal Vectors Conjecture (OVC) \cite{w05,avw14}]
For every $\epsilon > 0$, there is a $c \geq 1$ such that OV cannot be solved in $n^{2-\epsilon}$ time on instances with $d = c \log n$.
\end{conjecture}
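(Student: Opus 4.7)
The final statement is the Orthogonal Vectors Conjecture (OVC), which is a conjecture in fine-grained complexity rather than a theorem; no unconditional proof is currently known, and giving one would constitute a major breakthrough in complexity theory. Nevertheless, I will outline the plausible routes one might attempt and the barriers each faces.

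The most direct route would be an unconditional lower bound argument. One would fix $\epsilon > 0$, choose $c$ large (the freedom to grow $c$ is what makes OVC weaker than a bound at tiny dimension), and try to show that any word-RAM algorithm for OV on $n$ vectors in $\{0,1\}^{c\log n}$ must perform $n^{2-\epsilon}$ operations. A natural first step is to prove such a bound in a restricted model: for instance, a deterministic comparison/decision-tree lower bound, or a lower bound against algorithms that only access vectors by taking inner products or Hamming-ball queries. Here one would exploit the fact that the OV truth table has very high rigidity / circuit complexity for $d = \Theta(\log n)$, since a random bipartite $n \times n$ zero/one matrix is OV-hard and the gadget $d = c\log n$ gives enough entropy to encode such matrices. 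The plan would then be to lift the restricted-model bound to the general RAM model by an indistinguishability/simulation argument.

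A second route is via a reduction from a problem already known to require $n^{2-o(1)}$ time unconditionally. The plan would be to identify such a problem $\Pi$, encode each instance of $\Pi$ of size $n$ as an OV instance of size $n^{1+o(1)}$ in dimension $O(\log n)$, and invoke the lower bound for $\Pi$. The difficulty is that the list of candidate sources is essentially empty: 3SUM, APSP, BMM, edit distance, and the other standard pillars of fine-grained complexity are all believed hard but are not known to require superlinear, let alone quadratic, time unconditionally. A variant would try to use communication-complexity lower bounds (set disjointness has $\Omega(d)$ randomized communication), but translating a communication lower bound into a sequential time lower bound in the RAM model requires a simulation that defeats the purpose.

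The hard part, and the reason OVC stands as a conjecture rather than a theorem, is precisely the chasm between unconditional lower bounds we can currently prove for problems in $\mathsf{P}$ (at best slightly superlinear, by Fredman-type or Mihalj\v{c}ak--P\v{a}tra\c{s}cu-style arguments for restricted problems) and the $n^{2-o(1)}$ bound OVC demands. Any honest plan must therefore either (i) invoke a stronger hypothesis such as $\SETH$ and prove the conditional statement $\SETH \Rightarrow \mathrm{OVC}$ via Williams' sparsification-based reduction (which, as noted, is not what is asked here), or (ii) make a genuine breakthrough in unconditional complexity theory. Absent such a breakthrough, the realistic proposal is to flag OVC as a working hypothesis used throughout this paper, to note that it is implied by $\SETH$ (so our $\SETH$-based hardness results hold a fortiori under OVC), and to leave an unconditional proof as a major open problem.
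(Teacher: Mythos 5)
You are right that this statement is a conjecture, not a theorem: the paper does not prove OVC but merely states it as a hardness hypothesis imported from prior work, noting (as you do) that $\SETH$ implies it via Williams' reduction. Your assessment matches the paper's treatment exactly, and no proof is expected or given.
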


In particular, it is known that $\SETH$ implies OVC~\cite{w05}.
$\SETH$ and OVC are the most common hardness assumption in fine-grained complexity theory, and they are known to imply tight lower bounds for a number of algorithmic problems throughout computer science. See, for instance, the survey~\cite{williams2018some} for more background.

\subsection{Dimensionality Reduction}

We make use of the following binary version of the Johnson-Lindenstrauss lemma due to Achlioptas \cite{a03}:

\begin{theorem}[\cite{jl84,a03}]\label{thm:jl}
Given fixed vectors $v_1,\hdots,v_n\in \mathbb{R}^d$ and $\epsilon > 0$, let $Q \in \mathbb{R}^{k\times d}$ be a random $\pm 1/\sqrt{k}$ matrix (i.e. independent Bernoulli entries) with $k\ge 24 (\log n)/\epsilon^2$. Then with probability at least $1 - 1/n$,
\begin{align*}
(1 - \epsilon)\|v_i - v_j\|^2\le \|Qv_i - Qv_j\|^2\le (1 + \epsilon)\|v_i - v_j\|^2
\end{align*}
for all pairs $i,j\in [n]$.
\end{theorem}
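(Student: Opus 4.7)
The plan is the classical Johnson--Lindenstrauss strategy: reduce to a single-vector concentration inequality and then union-bound over the $\binom{n}{2}$ pairs. Fix $i,j\in[n]$ and set $u := v_i - v_j$. By linearity $Qv_i - Qv_j = Qu$, and both $\|Qu\|_2^2$ and $\|u\|_2^2$ scale by the same factor under rescaling of $u$, so it suffices to prove that for every \emph{fixed} unit vector $u\in\mathbb{R}^d$,
$$\Pr\!\left[\,\big|\|Qu\|_2^2 - 1\big| > \epsilon\,\right] \;\le\; \frac{2}{n^3}.$$
A union bound over the fewer than $n^2$ pairs then yields total failure probability at most $2/n \le 1/n$ with a slightly larger constant than $24$; the constant is chosen with room to spare precisely to allow this.

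For the single-vector bound, write the $r$-th row of $Q$ as $q_r^\top$ and set $Z_r := q_r^\top u = \frac{1}{\sqrt{k}}\sum_{\ell=1}^d \epsilon_{r\ell}\, u_\ell$, where the $\epsilon_{r\ell}\in\{\pm 1\}$ are independent Rademacher signs. The $Z_r$ are i.i.d.\ with $\mathbb{E}[Z_r]=0$ and, because $u$ is a unit vector, $\mathbb{E}[Z_r^2]=1/k$, so $\|Qu\|_2^2 = \sum_{r=1}^k Z_r^2$ has mean $1$. Since each $\epsilon_{r\ell}u_\ell$ is bounded in $[-|u_\ell|,|u_\ell|]$, Hoeffding's lemma gives $\mathbb{E}[\exp(s\sqrt{k}\,Z_r)] \le \exp(s^2/2)$ for every $s\in\mathbb{R}$, i.e. $\sqrt{k}\,Z_r$ is sub-Gaussian with parameter $1$. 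A standard Gaussian-integration trick (writing $\exp(\lambda Y^2) = \mathbb{E}_g[\exp(\sqrt{2\lambda}\,gY)]$ for an independent standard normal $g$ and applying Fubini) promotes this linear MGF bound to the chi-squared-type bound
$$\mathbb{E}\!\left[\exp(\lambda k\, Z_r^2)\right] \;\le\; \frac{1}{\sqrt{1-2\lambda}} \qquad (0 \le \lambda < 1/2),$$
with a symmetric bound for the lower tail.

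With this in hand, the remainder is routine Chernoff. By independence, $\mathbb{E}[\exp(\lambda k\|Qu\|_2^2)] \le (1-2\lambda)^{-k/2}$, and Markov gives
$$\Pr\!\left[\|Qu\|_2^2 > 1+\epsilon\right] \;\le\; \exp(-\lambda k(1+\epsilon))\,(1-2\lambda)^{-k/2}.$$
Optimizing at $\lambda = \epsilon/(2(1+\epsilon))$ produces the familiar $\exp\!\big(-(k/2)(\epsilon - \log(1+\epsilon))\big) \le \exp(-k\epsilon^2/6)$ for $\epsilon\in(0,1)$; the lower tail is handled identically. Substituting $k \ge 24(\log n)/\epsilon^2$ makes each tail at most $n^{-4}$, so the pairwise bound $2/n^3$ holds comfortably, completing the union bound.

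\textbf{Main obstacle.} The only non-mechanical step is the Rademacher chi-squared MGF bound: unlike the Gaussian case where $\|Qu\|_2^2$ is an exact chi-squared, here $\sqrt{k}\,Z_r$ is merely sub-Gaussian, so the bound on $\mathbb{E}[\exp(\lambda Z_r^2)]$ must be obtained indirectly (via the Gaussian-dominance trick above, which is the key idea in Achlioptas's argument). Once that lemma is established, the scalar Chernoff estimate and the union bound are standard and self-contained.
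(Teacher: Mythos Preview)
The paper does not prove this theorem at all: it is stated in the preliminaries as a cited black-box result from \cite{jl84,a03} and used without proof. Your proposal is a correct reconstruction of the standard Achlioptas argument (sub-Gaussianity of Rademacher sums, the Gaussian-integration trick to pass from linear to quadratic MGF, Chernoff, union bound), so there is nothing in the paper to compare against beyond the citation itself.
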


We will also use the following variant of Johnson Lindenstrauss for Euclidean space, for random projections onto $o(\log n)$ dimensions:

\begin{lemma}[Ultralow Dimensional Projection \cite{jl84,dg03}, see Theorem 8.2 in \cite{s16} for example]
\label{lem:low-dim-jl}  
For $k = o(\log n)$, with high probability the maximum distortion in pairwise distance obtained from projecting $n$ points into $k$ dimensions (with appropriate scaling) is at most $n^{O(1/k)}$.
\end{lemma}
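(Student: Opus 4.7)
The plan is to use a Gaussian random projection $Q \in \mathbb{R}^{k \times d}$ with i.i.d.\ entries $Q_{ij} \sim \mathcal{N}(0, 1/k)$, so that for every fixed $w \in \mathbb{R}^d$, $\|Qw\|_2^2 / \|w\|_2^2$ is distributed as $X/k$ with $X \sim \chi_k^2$. This is the same setup as in the textbook proof of Theorem~\ref{thm:jl}; the new feature is that I will use the standard moment-generating-function tail bound for $\chi_k^2$ in a regime where the deviation parameter is very large, rather than very small.

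Concretely, the Chernoff-style bound obtained by optimizing $\mathbb{E}[e^{\lambda X}] = (1-2\lambda)^{-k/2}$ gives, for both $t > 1$ and $0 < t < 1$,
\begin{align*}
\Pr\!\left[\|Qw\|_2^2 \geq t\,\|w\|_2^2\right] &\leq \exp\!\left(-\tfrac{k}{2}(t - 1 - \ln t)\right) \quad (t > 1), \\
\Pr\!\left[\|Qw\|_2^2 \leq t\,\|w\|_2^2\right] &\leq \exp\!\left(-\tfrac{k}{2}(t - 1 - \ln t)\right) \quad (0 < t < 1).
\end{align*}
The key step is to plug in $t = n^{C/k}$ in the upper tail and $t = n^{-C/k}$ in the lower tail, for a sufficiently large constant $C$. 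In the upper-tail case, $n^{C/k}$ is enormous whenever $k = o(\log n)$, so $t - 1 - \ln t \geq n^{C/k}/2$ and the bound becomes $\exp(-\Omega(k \cdot n^{C/k}))$, which easily beats any inverse polynomial in $n$. In the lower-tail case the $-\ln t = (C/k)\ln n$ term dominates $t - 1 - \ln t$, so the exponent is $-\Theta(C \log n)$ and the probability is at most $n^{-C/2}$.

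Choosing $C$ sufficiently large (say $C = 8$), each of the $\binom{n}{2}$ pairs fails with probability at most $n^{-4}$, so a union bound shows that with probability $1 - O(n^{-2})$, simultaneously for all $i,j \in [n]$,
\[
n^{-C/k}\,\|x_i - x_j\|_2^2 \;\leq\; \|Qx_i - Qx_j\|_2^2 \;\leq\; n^{C/k}\,\|x_i - x_j\|_2^2.
\]
Taking square roots yields $n^{O(1/k)}$ distortion in pairwise distances, which is exactly the claim of the lemma.

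The main place where care is needed is the lower tail: in the ultralow-dimensional regime $k = o(\log n)$, the quantity $1 - t$ is essentially a constant close to $1$, so the second-order Taylor simplification $t - 1 - \ln t \approx (t-1)^2/2$ underlying the familiar $(1 \pm \eps)$ analysis of JL breaks down completely. Instead one must retain the full logarithmic term $-\ln t = (C/k)\ln n$ directly, and this single term is what produces the $n^{\pm O(1/k)}$ distortion in the end; the upper tail is analogous, with $t$ itself huge rather than close to $1$. Modulo this change of regime, the argument is a routine MGF computation plus a union bound, and it works equally well (with slightly worse constants) if one prefers $\pm 1/\sqrt{k}$ Bernoulli entries as in Theorem~\ref{thm:jl}.
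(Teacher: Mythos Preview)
Your proof is correct and is essentially the standard argument (as in Dasgupta--Gupta~\cite{dg03}). Note, however, that the paper does not supply its own proof of this lemma: it is stated in the preliminaries as a cited result from~\cite{jl84,dg03} (with a pointer to Theorem~8.2 of~\cite{s16}), so there is nothing in the paper to compare against. Your argument is exactly the kind of proof those references contain.

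One minor point of precision: in the lower-tail case you write that the probability is at most $n^{-C/2}$, but strictly speaking $t - 1 - \ln t = (C/k)\ln n + (t-1)$ with $t-1 \in (-1,0)$, so after multiplying by $k/2$ you get an exponent of $-(C/2)\ln n + O(k) = -(C/2 - o(1))\ln n$ since $k = o(\log n)$. Thus the bound is $n^{-C/2 + o(1)}$ rather than exactly $n^{-C/2}$. This is harmless for the union bound (take $C$ a touch larger, or just accept $n^{-3.9}$ in place of $n^{-4}$), but worth stating cleanly.
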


\subsection{Nearest Neighbor Search}\label{sec:ann}

Our results will make use of a number of prior results, both algorithms and lower bounds, for nearest neighbor search problems.

\paragraph*{Nearest Neighbor Search Data Structures}

\begin{problem}[\cite{a09,r17} data-structure $\ANN$]
Given an $n$-point dataset $P$ in $\R^d$ with $d = n^{o(1)}$, the goal is to preprocess it to answer the following queries. Given a query point $q \in \R^d$ such that there exists a data point within $\ell_p$ distance $r$ from $q$, return a data point within $\ell_p$ distance $cr$ from $q$.
\end{problem}

\begin{theorem}[\cite{ai06}]\label{thm:l2-ann}
There exists a data structure that returns a $2c$-approximation to the nearest neighbor distance in $\ell_2^d$ with preprocessing time and space $O_c(n^{1+1/c^2+o_c(1)} + nd)$ and query time $O_c(dn^{1/c^2+o_c(1)})$
\end{theorem}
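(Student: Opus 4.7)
The plan is to establish this theorem via the classical Locality Sensitive Hashing (LSH) framework, which reduces the approximate nearest neighbor problem to the design of a good hash family, and then to plug in the Andoni--Indyk LSH for $\ell_2$ that achieves the optimal exponent $\rho = 1/c^2 + o_c(1)$.

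First, I would reduce the $2c$-ANN problem to the decision version: given a scale $r > 0$, build a data structure that, assuming some point lies within distance $r$ of the query, returns a point within distance $2cr$. This standard reduction costs only an $O(\log(n\alpha))$ factor and can be handled by building $O(\log(n \alpha))$ instances at geometrically spaced radii, absorbing the overhead into the $o_c(1)$ term (with the aspect ratio $\alpha$ truncated to $n^{O(1)}$ by snapping coordinates). Thus it suffices to solve the $(r, 2cr)$-near neighbor problem.

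Next, I would invoke the LSH master theorem: given a hash family $\mathcal{H}$ on $\mathbb{R}^d$ such that for $\|u-v\|_2 \le r$, $\Pr_{h \sim \mathcal{H}}[h(u) = h(v)] \ge p_1$, and for $\|u-v\|_2 \ge 2cr$, $\Pr[h(u)=h(v)] \le p_2$, then setting $\rho = \log(1/p_1)/\log(1/p_2)$, one obtains a data structure with preprocessing time and space $O(n^{1+\rho} + nd)$ and query time $O(dn^\rho)$. The construction concatenates $k = \Theta(\log_{1/p_2} n)$ hash functions into a composite hash, builds $L = n^\rho$ independent hash tables, and at query time probes the corresponding bucket in each table, returning the first candidate within distance $2cr$ (aborting after $O(L)$ candidates). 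A Markov/Chernoff argument on the expected number of far candidates and on the probability of collision with a true near neighbor yields correctness with constant success probability, amplified by repetition.

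The main work, and the key step in the Andoni--Indyk proof, is constructing an $\ell_2$-LSH with $\rho \le 1/c^2 + o_c(1)$. The plan is: project the pointset down to $t$ dimensions for a slowly growing parameter $t$ via a Gaussian (or Johnson--Lindenstrauss-like) random linear map, then hash using a random ball partitioning of $\mathbb{R}^t$ --- place balls of radius $\approx w$ at the points of a sufficiently dense random point process (e.g., a grid of shifted copies or the Poisson/Leech-like lattice construction), and hash each point to the index of the first ball containing it. The collision probability at distance $s$ in $\mathbb{R}^t$ is computable in closed form (a Gaussian-integral style expression), and optimizing $w$ and $t$ yields $\rho = 1/c^2 + o(1)$ as $t \to \infty$ appropriately. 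The main obstacle is this geometric probability analysis: one must carefully bound the collision probability for both near and far pairs in the projected space, accounting for the distortion introduced by the random projection to $t$ dimensions, and then optimize the ball radius to show that the two collision probabilities scale so that their logarithmic ratio approaches $1/c^2$. The remaining ingredients --- the reduction to near-neighbor, the generic LSH data structure, and the Johnson--Lindenstrauss guarantees used inside the construction --- are all standard and give the claimed bounds once the LSH family with the right $\rho$ is in hand.
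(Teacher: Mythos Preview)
The paper does not prove this statement; it is quoted from \cite{ai06} in the preliminaries as a black-box tool and no proof is given. Your sketch is a faithful outline of the Andoni--Indyk argument from the original source (reduction to the $(r,cr)$-near-neighbor decision problem, the generic LSH data structure, and the ball-carving hash family in a randomly projected $t$-dimensional space yielding $\rho = 1/c^2 + o_c(1)$), so there is nothing to compare against here.
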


\paragraph*{Hardness for Approximate Hamming Nearest Neighbor Search}

We provide the definition of the Approximate Nearest Neighbor search problem

\begin{problem}[monochromatic $\ANN$]\label{pro:single_ANN}
The monochromatic Approximate Nearest Neighbor ($\ANN$) problem is defined as : given a set of $n$ points $x_1, \cdots, x_n \in \R^d$ with $n^{o(1)}$, the goal is to compute $\alpha$-approximation of $\min_{i\neq j} \dist(x_i, x_j)$.
\end{problem}

\begin{theorem}[\cite{sm19}]\label{thm:sm19}
Let $\dist(x,y)$ be $\ell_p$ distance. Assuming ${\sf SETH}$, for every $\delta > 0$, there is a $\epsilon > 0$ such that the monochromatic $(1+\epsilon)$-${\sf ANN}$ problem for dimension $d = \Omega(\log n)$ requires time $n^{1.5-\delta}$.
\end{theorem}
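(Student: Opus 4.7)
The plan is to reduce the Bichromatic Closest Pair problem (BCP), which is known to require $n^{2-o(1)}$ time under $\SETH$ by Rubinstein-style gap-amplifying reductions from OV, to the monochromatic $\ANN$ problem. The exponent $1.5$ in the conclusion will emerge from a natural $\sqrt{n}$-block partitioning argument.

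First I would invoke the $\SETH$-based lower bound for approximate BCP: for every $\delta' > 0$ there is some $\epsilon' > 0$ such that $(1+\epsilon')$-approximate BCP in $\ell_p$ with $d = \Omega(\log n)$ cannot be solved in $n^{2-\delta'}$ time. This bound is established by an OV-based reduction combined with distributed-PCP style gap amplification (a randomized encoding of $d$-dimensional Boolean vectors so that orthogonal pairs become close and non-orthogonal pairs become far, with the gap amplified to a constant factor). From here, the task reduces to showing that a monochromatic $(1+\epsilon)$-$\ANN$ algorithm running in time $T(m)$ yields a BCP algorithm running in time roughly $n \cdot T(\sqrt{n})$.

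To implement this, given a BCP instance $(A,B)$ with $|A|=|B|=n$, partition $A$ into $\sqrt{n}$ blocks $A_1,\ldots,A_{\sqrt{n}}$ of size $\sqrt{n}$, and analogously $B$ into $B_1,\ldots,B_{\sqrt{n}}$. For each of the $n$ pairs $(A_i, B_j)$, construct a monochromatic instance on $O(\sqrt{n})$ points that preserves inter-set distances while making intra-set pairs strictly farther: lift each block through an error-correcting code to enforce a minimum intra-block distance, then append a separating coordinate that adds the same large offset to all points of $A_i$ and a different large offset to all points of $B_j$. Running the monochromatic $\ANN$ algorithm on this instance recovers the closest inter-set pair between $A_i$ and $B_j$, and aggregating over all $n$ pairs of blocks yields the BCP answer. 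The total running time is $n \cdot T(O(\sqrt{n}))$; if $T(m) \le m^{1.5-\delta}$, this is $O(n^{1.75 - \delta/2})$, which is $o(n^{2-\delta'})$ for sufficiently small $\delta'$, contradicting the BCP lower bound.

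The main obstacle is the bichromatic-to-monochromatic gadget inside the partitioning step: a naive separating coordinate fails when intra-set pairs are close (e.g., near-duplicates), because those intra-set pairs would become the minimum and mask the inter-set witnesses. One must carefully lift each $A_i$ and $B_j$ via an $\ell_p$ error-correcting code or a well-chosen embedding so that every intra-set distance is bounded below by a value strictly exceeding the largest legitimate inter-set distance, while blowing up the dimension by only $\poly(\log n)$ and preserving the $(1+\epsilon')$-gap up to lower-order losses. Matching the approximation factors between the BCP hard instance and the monochromatic $\ANN$ algorithm---in particular, arguing that the $\epsilon'$ produced by the distributed-PCP reduction survives the code-lifting and the separating gadget---is the most delicate part of the argument, and pins down the precise quantitative relationship between $\epsilon$ (the $\ANN$ approximation) and $\delta$ (the speedup exponent) promised by the theorem.
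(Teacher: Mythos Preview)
The paper does not prove this statement at all: Theorem~\ref{thm:sm19} is stated in the preliminaries as a known result from \cite{sm19} and is used as a black box. There is no proof in the paper to compare your proposal against.

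That said, your sketch is broadly the right shape for how one derives a monochromatic $n^{1.5-\delta}$ lower bound from a bichromatic $n^{2-\delta'}$ lower bound: the $\sqrt{n}$-blocking reduction is standard, and you correctly identify the delicate point, namely forcing intra-color pairs to be far while preserving the inter-color gap. Your arithmetic is also correct: $n$ subproblems of size $O(\sqrt{n})$ at cost $(\sqrt{n})^{1.5-\delta}$ each gives $n^{1.75-\delta/2}$, which contradicts bichromatic hardness for small enough $\delta'$. One caution: the actual proof in \cite{sm19} handles the intra-color issue not via a generic error-correcting-code lifting as you suggest, but by working directly with the structure of the hard BCP instances coming out of the OV reduction (where one already controls the minimum intra-set distance), so your proposed gadget is more generic than what is actually needed and would require some care to make work for arbitrary $\ell_p$ without losing the approximation gap.
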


\begin{problem}[bichromatic $\ANN$]\label{pro:hamming_ANN}
Let $\dist(\cdot, \cdot)$ denote the some distance function. Let $\alpha > 1$ denote some approximation factor. 
The bichromatic Approximate Nearest Neighbor ($\ANN$) problem is defined as: given two sets $A,B$ of vectors $\R^d$, the goal is to compute $\alpha$-approximation of $\min_{a\in A,b\in B} \dist( a , b )$.

\end{problem}

\begin{theorem}[\cite{r18}]\label{thm:r18}
Let $\dist(x,y)$ be any of Euclidean, Manhattan, Hamming($\| x - y \|_0$), and edit distance. Assuming ${\sf SETH}$, for every $\delta > 0$, there is a $\epsilon > 0$ such that the bichromatic $(1+\epsilon)$-${\sf ANN}$ problem for dimension $d = \Omega(\log n)$ requires time $n^{2-\delta}$.
\end{theorem}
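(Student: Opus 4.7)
}

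The plan is to reduce from the Orthogonal Vectors problem, which (by the known OVC-from-SETH reduction) requires $n^{2-o(1)}$ time under $\SETH$ on instances in dimension $d = c \log n$. The goal is to build a gap-producing reduction: given an OV instance $(A,B)$ with $A,B \subseteq \{0,1\}^{c\log n}$, produce two new point sets $A', B' \subseteq \mathbb{R}^{d'}$ with $d' = \Theta(\log n)$ and a threshold $r$ such that if some $a\in A, b\in B$ are orthogonal then $\min_{a'\in A', b'\in B'} \dist(a',b') \le r$, and otherwise $\min_{a'\in A',b'\in B'} \dist(a',b') \ge (1+\epsilon) r$. A subquadratic algorithm for bichromatic $(1+\epsilon)$-$\ANN$ would then decide OV in time $n^{2-\delta}$, contradicting $\SETH$.

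The key technical step is to construct, for each original vector, a low-dimensional \emph{gadget} that algebraically encodes the inner product so that orthogonality versus non-orthogonality becomes a constant multiplicative gap in $\ell_p$ distance. First, I would express $\langle a, b\rangle$ as a sum $\sum_{k=1}^{c\log n} a_k b_k$ and partition the coordinates into $\sqrt{c\log n}$ blocks of equal size, viewing $\langle a,b\rangle$ as a sum of block-wise inner products. Next, following the distributed PCP paradigm of Abboud--Rubinstein--Williams, I would design an Merlin--Arthur protocol in which Merlin sends a short algebraic certificate (e.g., coefficients of a low-degree polynomial encoding the block sums via Reed--Solomon / Reed--Muller evaluation) and Arthur verifies it by probing a constant number of coordinates, with soundness amplified to a constant gap by parallel repetition over $O(\log n)$ independent random challenges. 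The final step is to turn this protocol into geometric gadgets: a vector $a \in A$ becomes a point $\phi(a) \in \mathbb{R}^{d'}$ recording every possible Merlin message consistent with $a$, and $b \in B$ becomes $\psi(b) \in \mathbb{R}^{d'}$ recording all possible Arthur challenges for $b$; the coordinates are padded and scaled so that distance $\|\phi(a) - \psi(b)\|_p$ faithfully reflects the acceptance probability of the protocol on $(a,b)$.

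The main obstacle is balancing three conflicting requirements: (i) the output dimension must remain $O(\log n)$ so that the reduction lands in the dimensional regime where $\ANN$ is conjecturally hard, (ii) the soundness gap after amplification must be a genuine constant $\epsilon > 0$, and (iii) the gadget construction must be efficient enough that the entire reduction runs in $n^{2-o(1)}$ time so that the hardness transfers faithfully. The tension is that naïve parallel repetition blows up the dimension multiplicatively, while insufficient repetition leaves only an inverse-polylogarithmic gap. Resolving this tension requires the algebraic PCP encoding to compress the Merlin transcript into a short evaluation of a low-degree polynomial, so that a constant number of probes already gives constant gap, and the repetition only adds a logarithmic factor to the dimension.

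Once the gadget reduction is in place, the remaining cases of Theorem~\ref{thm:r18} (Manhattan, Hamming, edit distance) follow by standard distance-preserving embeddings: Hamming follows because the gadget vectors are Boolean, $\ell_1$ and $\ell_2$ are related by a coordinate lift that preserves the constant gap (replacing each coordinate $x_i$ by a unary block), and edit distance follows by the now-standard alphabet-expansion gadget that embeds Hamming distance into edit distance while preserving gaps.
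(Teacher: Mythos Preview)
This theorem is not proved in the paper: it is stated in the preliminaries (Section~\ref{sec:ann}) as a cited result from Rubinstein~\cite{r18}, and the paper uses it only as a black box in its hardness reductions. There is therefore no proof in the paper to compare your proposal against.

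That said, your sketch does track the actual argument in~\cite{r18}: the reduction from OV via the distributed-PCP framework of Abboud--Rubinstein--Williams, with a short algebraic Merlin certificate (polynomial evaluations) keeping the dimension at $O(\log n)$ while achieving a constant multiplicative gap, followed by standard metric embeddings to transfer the result across Hamming, $\ell_1$, $\ell_2$, and edit distance. If your goal were to reproduce Rubinstein's proof, the outline is sound; just be aware that the precise gadget that converts protocol acceptance probability into geometric distance, and the parameter bookkeeping ensuring $d' = O(\log n)$ simultaneously with a fixed $\epsilon > 0$, are where the real work lies and are not fully spelled out in your proposal.
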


By comparison, the best known algorithm for $d = \Omega(\log n)$ for each of these distance measures other than edit distance runs in time about $dn + n^{2 - \Omega(\eps^{1/3}/\log(1/\eps))}$ \cite{acw16}.

\paragraph*{Hardness for $\Z$-Max-IP}

\begin{problem}\label{pro:Z_maxip}
For $n,d \in \mathbb{N}$, the $\mathbb{Z}$-${\sf MaxIP}$ problem for dimension $d$ asks: given two sets $A,B$ of vectors from $\mathbb{Z}^d$, compute
\begin{align*}
\max_{ a \in A, b \in B } \langle a , b \rangle.
\end{align*}
\end{problem}

$\mathbb{Z}$-${\sf MaxIP}$ is known to be hard even when the dimension $d$ is barely superconstant:

\begin{theorem}[Theorem 1.14 in \cite{c18}] \label{thm:maxiphard}
Assuming ${\sf SETH}$ (or {\sf OVC}), there is a constant $c$ such that any exact algorithm for $\Z$-${\sf MaxIP}$ for $d = c^{\log^* n}$ dimensions requires $n^{2-o(1)}$ time, with vectors of $O(\log n)$-bit entries.
\end{theorem}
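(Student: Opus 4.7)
}

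The plan is to reduce from Orthogonal Vectors (which is $n^{2-o(1)}$-hard in dimension $d_0 = c_0 \log n$ under $\SETH$ by OVC) to $\Z$-$\mathsf{MaxIP}$, and then apply an iterated dimension-reduction gadget. In each round the gadget takes a $\Z$-$\mathsf{MaxIP}$ instance on $n$ vectors in dimension $d$ with $O(\log n)$-bit entries and produces an equivalent instance on $n$ vectors in dimension $d' = 2^{O(\sqrt{d \log d})}$, with only a constant-factor blow-up in entry bit length and near-linear overhead in time. Because $d \mapsto 2^{O(\sqrt{d \log d})}$ takes (roughly) a logarithm of the dimension per round, iterating it $\log^*(d_0) \le \log^* n$ times collapses the dimension to $c^{\log^* n}$ for an absolute constant $c$.

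The single-step gadget is built from an Aaronson--Wigderson-style MA communication protocol for inner products: to verify $\langle a,b\rangle = t$ for $a,b \in \{0,1\}^d$, Merlin sends the degree-$\sqrt{d}$ ``row polynomial'' evaluated along a short curve, and Arthur checks a single random point using $O(\log d)$ bits. I would algebraize this so that the question ``are $v_i, w_j$ orthogonal?'' becomes ``does there exist a Merlin string $\pi$ with $\langle \widetilde{v}_{i,\pi}, \widetilde{w}_{j,\pi}\rangle = T$?'', where the embedded vectors live in dimension $d' = 2^{O(\sqrt{d \log d})}$ (the ambient space of the polynomial evaluations indexed by Arthur's randomness). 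Folding the existential quantifier into the objective via the $\mathsf{MaxIP}$ value, together with a handful of padding coordinates that cleanly separate YES-instances ($\ge T$) from NO-instances ($< T$), gives the reduction. The use of $\Z$ rather than $\{0,1\}$ entries is essential because the algebraized vectors are naturally signed small integers.

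Iterating the gadget is the heart of the argument. Setting $d_{k+1} = 2^{O(\sqrt{d_k \log d_k})}$ starting from $d_0 = c_0 \log n$, we get $d_1 = n^{o(1)}$, and in general $d_{k+1}$ behaves like (a constant times) the iterated logarithm of $d_k$. After $k = \log^* n$ rounds, $d_k$ is an absolute constant, so the final dimension is of the form $c^{\log^* n}$. Chaining the $\log^* n$ reductions (each of near-linear cost in $n$), a subquadratic algorithm for $\Z$-$\mathsf{MaxIP}$ at this final dimension would yield a subquadratic algorithm for OV in dimension $c_0 \log n$, contradicting OVC and hence $\SETH$.

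The main obstacle is bit-complexity control across rounds: the algebraization step is naturally a polynomial identity check over a field of size at least $\mathrm{poly}(d)$, and a naive implementation lets the field, and therefore the entry magnitudes, grow with each round, potentially exceeding $O(\log n)$ bits after $\log^* n$ iterations. I would handle this by fixing a single prime $p = \mathrm{poly}(n)$ up front, large enough that every polynomial identity check across all $\log^* n$ levels is simultaneously sound by a union bound, and working over $\mathbb{F}_p$ throughout so that every entry at every level is an integer in $[0,p)$ and hence $O(\log n)$ bits. A secondary subtlety is that the per-round blow-up in the \emph{number} of vectors must also be controlled; this can be absorbed into the choice of $c_0$, so that after the cumulative polynomial overhead over $\log^* n$ rounds the effective problem size remains $n$.
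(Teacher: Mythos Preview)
This theorem is not proved in the paper: it appears in the Preliminaries (Section~\ref{sec:ann}) as a citation of Theorem~1.14 in~\cite{c18}, used as a black box for the hardness reductions later. So there is no ``paper's own proof'' to compare your proposal against.

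That said, your sketch is a faithful outline of the argument actually given in~\cite{c18}: reduce OV to $\Z$-$\mathsf{MaxIP}$ via an algebraized MA protocol for Set-Disjointness/Inner-Product, observe that this drops the dimension from $d$ to roughly $2^{O(\sqrt{d\log d})}$, and iterate $\log^* n$ times. Your identification of bit-complexity control as the main technical point is correct; Chen handles it essentially as you describe, by working over a fixed polynomially-large field throughout so that entries stay $O(\log n)$ bits across all rounds. The one place where your write-up is slightly imprecise is the recursion analysis: you want $d_{k+1} = 2^{O(\sqrt{d_k \log d_k})}$ to shrink, but for the starting value $d_0 = c_0 \log n$ this map does not immediately contract (indeed $2^{\sqrt{\log n}} \gg \log n$). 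The actual argument in~\cite{c18} first uses a single application of the protocol to get from Boolean OV in dimension $\Theta(\log n)$ to $\Z$-$\mathsf{MaxIP}$ in dimension $n^{o(1)}$, and then the iteration is on a different recursion (closer to $d \mapsto \mathrm{poly}(\log d)$ after appropriate rescaling) that genuinely contracts; getting the recursion to terminate at $c^{\log^* n}$ requires a bit more care than your sketch suggests.
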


It is believed that $\Z$-${\sf MaxIP}$ cannot be solved in truly subquadratic time even in constant dimension~\cite{c18}. Even for $d=3$, the best known algorithm runs in $O(n^{4/3})$ time and has not been improved for decades:

\begin{theorem}[\cite{mat92,aesw91,y82}]\label{thm:maxiplowdim}
$\Z$-${\sf MaxIP}$ for $d=3$ can be solved in time $O(n^{4/3})$. For general $d$, it can be solved in $n^{2- \Theta(1/d)}$.
\end{theorem}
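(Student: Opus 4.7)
The plan is to reduce $\Z$-${\sf MaxIP}$ to a classical geometric query problem and invoke the corresponding data structures as a black box. The key observation is that for any fixed query vector $a \in A$, the quantity $\max_{b \in B} \langle a, b \rangle$ is exactly the support function of $B$ (equivalently, of $\mathrm{conv}(B)$) evaluated in direction $a$, which is an extreme-point / linear programming query on the set $B$. Therefore, computing $\max_{a\in A, b\in B}\langle a,b\rangle$ reduces to preprocessing $B$ into a structure supporting such queries and then issuing $|A|=n$ queries and taking the maximum returned value.

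For general dimension $d$, I would use the Matoušek / Agarwal--Matoušek halfspace range-searching framework built on partition trees: one can decide, for a hyperplane $\{x : \langle a,x\rangle = t\}$, whether any point of $B$ lies on the positive side, in time $O(n^{1-1/d}\,\mathrm{polylog}\,n)$ after $O(n\log n)$ preprocessing. Combining this with binary search over the $O(\log n)$ relevant threshold values $t$ turns one MaxIP query into $\wt O(n^{1-1/d})$ time, and summing over the $n$ vectors of $A$ gives total time $n^{2-\Theta(1/d)}$ as claimed.

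For the $d=3$ case, the generic bound $n^{5/3}$ is not tight, and the improvement to $O(n^{4/3})$ comes from exploiting the fact that the convex hull of $n$ points in $\R^3$ has only $O(n)$ combinatorial complexity. Concretely, one can dualize so that each $b\in B$ becomes a plane in $\R^3$, and the answer to the query for $a$ is determined by the extreme plane in direction $a$; Matoušek's 3D halfspace range-searching / cutting-tree construction (and Yao's earlier decomposition techniques) then balances preprocessing and query costs to yield the $O(n^{4/3})$ total time for $n$ queries. This is precisely the bound provided by \cite{mat92, aesw91, y82}.

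The hardest part of proving this from scratch would be the $d=3$ bound: the generic partition-tree tradeoff only delivers $n^{2-1/d}$, so one must leverage the low combinatorial complexity of 3D convex hulls via $r$-cuttings and a carefully tuned multi-level partition scheme to shave the exponent from $5/3$ down to $4/3$. Since the statement appears only as a cited black box in our paper, I would not re-derive these classical data structures; I would invoke them directly and merely verify that the reduction above (extreme-point queries plus binary search over inner-product values) correctly preserves the stated running time.
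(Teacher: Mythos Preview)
The paper does not give a proof of this theorem; it is stated purely as a cited black-box result from \cite{mat92,aesw91,y82} in the preliminaries, with no accompanying argument. You correctly recognize this in your final paragraph, and your sketch of the underlying classical techniques---reducing $\Z$-${\sf MaxIP}$ to extreme-point/halfspace queries, using partition trees for the $n^{2-\Theta(1/d)}$ bound, and exploiting the linear complexity of 3D convex hulls via cuttings for the sharper $O(n^{4/3})$ bound---is a faithful summary of how those cited works obtain the result. Nothing further is needed here.
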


The closely related problem of $\ell_2$-nearest neighbor search is also hard in barely superconstant dimension:

\begin{theorem}[Theorem 1.16 in \cite{c18}]\label{thm:lownnhard}
Assuming ${\sf SETH}$ (or {\sf OVC}), there is a constant $c$ such that any exact algorithm for bichromatic $\ell_2$-closest pair for $d = c^{\log^* n}$ dimensions requires $n^{2-o(1)}$ time, with vectors of $c_0\log n$-bit entries for some constants $c > 1$ and $c_0 > 1$.
\end{theorem}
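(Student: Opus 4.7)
The plan is to prove Theorem~\ref{thm:lownnhard} by a direct reduction from $\Z$-$\mathsf{MaxIP}$ in barely-superconstant dimension, whose hardness is already stated as Theorem~\ref{thm:maxiphard}. The strategy is the classical norm-equalization trick that turns a maximum inner product query into a minimum $\ell_2$-distance query: for vectors of common norm, $\|a-b\|_2^2 = 2M - 2\langle a,b\rangle$, so the two optima coincide. The only issue is implementing the equalization in low dimension and with $O(\log n)$-bit integer entries.

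Concretely, I would start from a $\Z$-$\mathsf{MaxIP}$ instance $A,B \subseteq \Z^d$ with $d = c^{\log^* n}$ and $O(\log n)$-bit entries, given by Theorem~\ref{thm:maxiphard}. Let $M$ be any integer with $M \ge \max_{v \in A \cup B}\|v\|_2^2$ and $M = \mathrm{poly}(n)$, so $M$ has $O(\log n)$ bits. To equalize norms, I would not introduce a single irrational coordinate $\sqrt{M-\|v\|_2^2}$; instead I would invoke Lagrange's four-square theorem, writing $M - \|a\|_2^2 = s_1^2 + s_2^2 + s_3^2 + s_4^2$ for integers $s_i$, which can be computed in $\poly(\log M)$ time by known algorithms (e.g.\ Rabin--Shallit). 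Define $a' \in \Z^{d+8}$ by $a' = (a,\, s_1, s_2, s_3, s_4,\, 0,0,0,0)$ and analogously for $b \in B$ append the four zeros \emph{before} the four-square decomposition of $M - \|b\|_2^2$, giving $b' \in \Z^{d+8}$.

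By construction $\|a'\|_2^2 = \|b'\|_2^2 = M$ and $\langle a',b'\rangle = \langle a,b\rangle$, so
\begin{align*}
\|a' - b'\|_2^2 \;=\; 2M - 2\langle a, b\rangle.
\end{align*}
Hence the bichromatic $\ell_2$-closest pair of $A', B'$ is exactly the pair maximizing $\langle a, b\rangle$ over $A \times B$. An algorithm solving bichromatic $\ell_2$-closest pair exactly in $n^{2-\delta}$ time on $n$ points of dimension $d' = d + 8$ with $O(\log n)$-bit entries would therefore solve $\Z$-$\mathsf{MaxIP}$ in the same time; since $d' = c^{\log^* n} + 8 \le (c+1)^{\log^* n}$ for $n$ large enough, redefining the constant to $c_{\text{new}} = c+1$ yields exactly the statement of Theorem~\ref{thm:lownnhard}, with $c_0$ equal to the (constant) bit-length blowup incurred by the padding.

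The main technical obstacle I would expect is precisely the bit-complexity of the padding: a naive $(\sqrt{M - \|v\|_2^2})$ extra coordinate is irrational and would force us to reason about truncation error in a problem that is stated as \emph{exact} closest pair. The four-square route removes this entirely, at the cost of only eight extra integer dimensions, which is negligible since the target dimension is $c^{\log^* n}$. Everything else (the equivalence of optima, and the preservation of $O(\log n)$-bit entries) is mechanical once equalization is done over $\Z$.
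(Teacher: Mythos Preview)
The paper does not prove this statement at all: Theorem~\ref{thm:lownnhard} is quoted verbatim from \cite{c18} (as ``Theorem~1.16 in \cite{c18}'') and sits in the preliminaries section with no accompanying argument. So there is no in-paper proof to compare against.

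That said, your proposal is correct and is essentially the reduction that \cite{c18} itself uses: derive closest-pair hardness from $\Z$-$\mathsf{MaxIP}$ hardness (Theorem~\ref{thm:maxiphard}) via norm equalization, so that $\|a'-b'\|_2^2 = 2M - 2\langle a,b\rangle$ and the two optima coincide. Your use of Lagrange's four-square theorem to keep the padded coordinates integral with $O(\log n)$ bits is exactly the right fix for the bit-complexity issue you flagged, and the $+8$ in dimension is absorbed into the constant $c$ as you note. One cosmetic point: the inequality $c^{\log^* n} + 8 \le (c+1)^{\log^* n}$ needs $\log^* n$ bounded below by a small constant (e.g.\ $3$), which is fine ``for $n$ large enough'' as you wrote; alternatively one can simply take $c_{\text{new}} = 9c$ since $d+8 \le 9d \le (9c)^{\log^* n}$ for all $d \ge 1$.
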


\subsection{Geometric Laplacian System}

Building off of a long line of work on Laplacian system solving~\cite{st04,kmp10,kmp11,kosz13,ckmpprx14}, we study the problem of solving geometric Laplacian systems:
\begin{problem}\label{pro:geometric_Laplacian_system}
Let $\k : \R^d \times \R^d \rightarrow \R$. Given a set of points $x_1, \cdots, x_n \in \R^d$, a vector $ b\in \R^n$ and accuracy parameter $\epsilon$. Let graph $G$ denote the graph that has $n$ vertices and each edge$(i,j)$'s weight is $\k(x_i,x_j)$. Let $L_G$ denote the Laplacian matrix of graph $G$. The goal is to output a vector $u \in \R^n$ such that
\begin{align*}
\| u - L_G^\dagger b \|_{L_G} \leq \epsilon \| L_G^\dagger b \|_{L_G} 
\end{align*}
where $L_G^\dagger$ denotes the pseudo-inverse of $L_G$ and matrix norm is defined as $\| c \|_A = \sqrt{ c^\top A c }$.
\end{problem}

 %%% Section 3 Preliminaries
\section{Equivalence of Matrix-Vector Multiplication and Solving Linear Systems} \label{sec:equivalence}

In this section, we show that for linear systems with sparse preconditioners, approximately solving them is equivalent to approximate matrix multiplication. We begin by formalizing our notion of approximation.

\begin{definition} \label{def:approxmult}
Given a matrix $M$ and a vector $x$, we say that $b$ is an $\epsilon$-\emph{approximate multiplication of $Mx$} if
\begin{align*}
(b - Mx)^\top M^{\dag}(b - Mx)\le \epsilon \cdot x^\top Mx.
\end{align*}
Given a vector $d$, we say that a vector $y$ is an $\epsilon$-\emph{approximate solution to $My = d$} if $y$ is an $\epsilon$-approximate multiplication of $M^{\dag}d$.
\end{definition}

Before stating the desired reductions, we state a folklore fact about the Laplacian norm:
\begin{proposition}[property of Laplacian norm]\label{prop:lapl-apx}
Consider a $w$-weighted $n$-vertex connected graph $G$ and let $w_{\min} = \min_{u,v\in G} w_{uv}$, $w_{\max} = \max_{u,v\in G} w_{uv}$, and $\alpha = w_{\max}/w_{\min}$. Then, for any vector $x\in \mathbb{R}^n$ that is orthogonal to the all ones vector,
$$ 
\frac{w_{\min}}{2n^4 \alpha^2} \|x\|_{\infty}^2 \le \|x\|_{L_G}^2 \le n^2 w_{\max} \|x\|_{\infty}^2
$$
and for any vector $b\in \mathbb{R}^n$ orthogonal to all ones,
$$
\frac{1}{n^2 w_{\max}} \|b\|_{\infty}^2 \le \|b\|_{L_G^{\dag}}^2 \le  \frac{2n^4 \alpha^2}{w_{\min}} \|b\|_{\infty}^2
$$
\end{proposition}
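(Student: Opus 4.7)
The plan is to prove the four inequalities by first establishing the two bounds on $\|x\|_{L_G}^2$ directly from the quadratic-form expansion $\|x\|_{L_G}^2 = \sum_{(u,v)\in E} w_{uv}(x_u - x_v)^2$, and then deducing the two bounds on $\|b\|_{L_G^{\dag}}^2$ via the variational identity $\|b\|_{L_G^{\dag}}^2 = \sup_{x \perp \mathbf{1}, \|x\|_{L_G} > 0} (b^\top x)^2/\|x\|_{L_G}^2$, which holds whenever $b \perp \mathbf{1}$ (the kernel of $L_G$ for a connected graph).

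For the upper bound on $\|x\|_{L_G}^2$, I would bound each term $(x_u - x_v)^2 \le 4\|x\|_\infty^2$, multiply by $w_{\max}$, and sum over at most $\binom{n}{2}$ edges; up to a small absolute constant (absorbed into the statement) this immediately gives $\|x\|_{L_G}^2 \lesssim n^2 w_{\max} \|x\|_\infty^2$. For the lower bound, the key observation is that since $x \perp \mathbf{1}$, an index $u$ achieving $|x_u| = \|x\|_\infty$ has opposite sign from some other index $v$, so $|x_u - x_v| \ge \|x\|_\infty$. Fix a simple $u$-$v$ path $u = v_0, v_1, \ldots, v_k = v$ with $k \le n-1$, which exists by connectivity. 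Cauchy--Schwarz applied to the telescoping sum gives
\[
\|x\|_\infty^2 \le (x_u - x_v)^2 \le k \sum_{i=0}^{k-1}(x_{v_i}-x_{v_{i+1}})^2 \le \frac{k}{w_{\min}} \sum_{i=0}^{k-1} w_{v_i v_{i+1}}(x_{v_i}-x_{v_{i+1}})^2 \le \frac{n}{w_{\min}} \|x\|_{L_G}^2,
\]
so $\|x\|_{L_G}^2 \ge (w_{\min}/n)\|x\|_\infty^2$, which is in fact stronger than the stated bound (the extra $n^3\alpha^2$ slack in the proposition is harmless).

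The two bounds on $\|b\|_{L_G^{\dag}}^2$ now follow by plugging into the variational identity. For the lower bound, choose $x = b$: then $b^\top x = \|b\|_2^2 \ge \|b\|_\infty^2$, and the upper bound on $\|\cdot\|_{L_G}^2$ gives $\|b\|_{L_G}^2 \lesssim n^2 w_{\max} \|b\|_\infty^2$, yielding $\|b\|_{L_G^{\dag}}^2 \gtrsim \|b\|_\infty^2/(n^2 w_{\max})$. For the upper bound, bound the numerator by $|b^\top x| \le \|b\|_2 \|x\|_2 \le n \|b\|_\infty \|x\|_\infty$ and the denominator from below by the lower bound just obtained, which gives $\|b\|_{L_G^{\dag}}^2 \le n^3 \|b\|_\infty^2 / w_{\min}$, again stronger than needed.

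There is no real obstacle; the only step requiring care is the path-Cauchy--Schwarz argument for the lower bound on $\|x\|_{L_G}^2$, which has to simultaneously exploit orthogonality to $\mathbf{1}$ and connectivity of $G$. Everything else is an elementary manipulation of norms, and the loose $\alpha^2$ and $n^4$ factors in the proposition leave ample slack to absorb the constants I drop along the way.
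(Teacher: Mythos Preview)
Your proof is correct, and in fact gives sharper bounds than the proposition claims, but it proceeds by a genuinely different route from the paper. The paper's argument is spectral: it passes to the normalized Laplacian $D_G^{-1/2}L_GD_G^{-1/2}$, uses the bound $\lambda_{\max}\le 1$ (really $\le 2$) for the upper bound on $\|x\|_{L_G}^2$, and invokes Cheeger's inequality with the crude conductance lower bound $w_{\min}/(n^2 w_{\max})$ to get $\lambda_{\min}\ge 1/(2n^4\alpha^2)$ for the lower bound. The $L_G^{\dag}$ bounds then follow by inverting on the image, again through the $D_G$-inner product. Your approach instead works directly with the edge-sum form and a path--Cauchy--Schwarz argument for the $L_G$ bounds, and then lifts to $L_G^{\dag}$ via the variational identity $\|b\|_{L_G^{\dag}}^2=\sup_{x\perp\mathbf 1}(b^\top x)^2/\|x\|_{L_G}^2$ with the test vector $x=b$. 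This avoids Cheeger entirely, is more elementary, and yields $\|x\|_{L_G}^2\ge (w_{\min}/n)\|x\|_\infty^2$ and $\|b\|_{L_G^{\dag}}^2\le (n^3/w_{\min})\|b\|_\infty^2$, both tighter by a factor of roughly $n^3\alpha^2$ than what the paper states or proves. The only cost is that you carry a stray factor of~2 in the upper bound on $\|x\|_{L_G}^2$ (from $(x_u-x_v)^2\le 4\|x\|_\infty^2$), but the paper's own proof has the same issue, and all downstream uses of the proposition have ample polynomial slack.
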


\begin{proof}
\textbf{Lower Bound for $L_G$}: Let $\lambda_{\min}$ and $\lambda_{\max}$ denote the minimum nonzero and maximum eigenvalues of $D_G^{-1/2}L_GD_G^{-1/2}$ respectively, where $D_G$ is the diagonal matrix of vertex degrees. Since $G$ is connected, all cuts have conductance at least $w_{\min}/(n^2 w_{\max}) = 1/(n^2\alpha)$. Therefore, by Cheeger's Inequality \cite{c70}, $\lambda_{\min} \ge 1/(2n^4 \alpha^2)$. It follows that,
\begin{align*}
\|x\|_{L_G}^2 &= x^{\top} L_G x\\
&\ge (x^{\top} D_G x)/(2n^4 \alpha^2)\\
&\ge \|x\|_{\infty}^2 w_{\min}/(2n^4 \alpha^2)
\end{align*}
as desired.

\textbf{Upper bound for $L_G$}: $\lambda_{\max}\le 1$. Therefore,
\begin{align*}
\|x\|_{L_G}^2 
\le x^{\top} D_G x 
\le n^2 w_{\max} \|x\|_{\infty}^2
\end{align*}
as desired.

\textbf{Lower bound for $L_G^{\dag}$}:
\begin{align*}
\|b\|_{L_G^{\dag}}^2 
\ge (1/\lambda_{\max}) \|b\|_{D_G^{-1}}^2
\ge 1/(n^2 w_{\max}) \|b\|_{\infty}^2
\end{align*}

\textbf{Upper bound for $L_G^{\dag}$}:
\begin{align*}
\|b\|_{L_G^{\dag}}^2 \le (1/\lambda_{\min}) \|b\|_{D_G^{-1}}^2
\le (2n^4\alpha^2/w_{\min}) \|b\|_{\infty}^2
\end{align*}
\end{proof}

Proposition \ref{prop:lapl-apx} implies the following equivalent definition of $\epsilon$-approximate multiplication:

\begin{corollary}\label{cor:simpleapproxmult}
Let $G$ be a connected $n$-vertex graph with edge weights $\{w_e\}_{e\in E(G)}$ with $w_{\min} = \min_{e\in E(G)} w_e$, $w_{\max} = \max_{e\in E(G)} w_e$, and $\alpha = w_{\max}/w_{\min}$ and consider any vectors $b,x\in \mathbb{R}^n$. If

$$\|b - L_Gx\|_{\infty} \le \epsilon w_{\max} \|x\|_{\infty}$$
,$b^\top {\bf 1} = 0$, and $x^\top {\bf 1} = 0$, then $b$ is an $2n^3\alpha^2\epsilon$-approximate multiplication of $L_Gx$.
\end{corollary}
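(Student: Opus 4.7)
The plan is to reduce Corollary~\ref{cor:simpleapproxmult} to a direct application of Proposition~\ref{prop:lapl-apx}, applied twice: once to the error vector $b - L_Gx$ (to convert an $\ell_\infty$ bound into an $L_G^\dagger$-norm bound) and once to $x$ (to convert the $\ell_\infty$ norm on the right-hand side into an $L_G$-norm). Recalling the target notion from Definition~\ref{def:approxmult}, what we need to establish is
\[
(b - L_G x)^\top L_G^\dagger (b - L_G x) \;\lesssim\; n^{O(1)} \alpha^{O(1)} \epsilon \cdot x^\top L_G x,
\]
so the whole argument is essentially just norm comparisons, modulo one orthogonality observation.

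First I would verify that both vectors to which Proposition~\ref{prop:lapl-apx} is applied lie in the subspace orthogonal to $\mathbf{1}$, since the proposition requires this hypothesis. For $x$ it is assumed directly. For $b - L_G x$, note that $L_G \mathbf{1} = 0$ implies $(L_G x)^\top \mathbf{1} = x^\top L_G \mathbf{1} = 0$, and the corollary assumes $b^\top \mathbf{1} = 0$, hence $(b - L_Gx)^\top \mathbf{1} = 0$. This orthogonality check is the only non-mechanical step.

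Next I would chain the inequalities. The upper bound for $L_G^\dagger$ in Proposition~\ref{prop:lapl-apx} gives
\[
\|b - L_G x\|_{L_G^\dagger}^2 \;\le\; \frac{2 n^4 \alpha^2}{w_{\min}}\, \|b - L_G x\|_\infty^2 \;\le\; \frac{2 n^4 \alpha^2 \cdot \epsilon^2 w_{\max}^2}{w_{\min}}\, \|x\|_\infty^2,
\]
where the second step plugs in the hypothesis $\|b - L_G x\|_\infty \le \epsilon w_{\max} \|x\|_\infty$. Then I would use the lower bound for $L_G$ from the same proposition in the form $\|x\|_\infty^2 \le \frac{2 n^4 \alpha^2}{w_{\min}} \|x\|_{L_G}^2$ to rewrite the right-hand side purely in terms of $\|x\|_{L_G}^2$. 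Combining, $\|b - L_Gx\|_{L_G^\dagger}^2 \lesssim n^8 \alpha^6 \epsilon^2 \cdot \|x\|_{L_G}^2$, which is the required form of $\epsilon'$-approximation with $\epsilon' = O(n^c \alpha^{c'} \epsilon)$.

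The main obstacle is cosmetic rather than conceptual: Proposition~\ref{prop:lapl-apx} is stated with the factors $w_{\max}, w_{\min}$ appearing asymmetrically on the two sides, so the natural two-step chain produces a power of $\alpha$ and a power of $n$ that is larger than the advertised constant $2n^3\alpha^2$. To recover the stated constants, I would look for one of two refinements: either (i) avoid passing through $\|x\|_\infty$ altogether by using $\|L_G x\|_\infty \le O(n w_{\max}) \|x\|_\infty$ and bounding $\|b - L_Gx\|_{L_G^\dagger}$ in terms of $\|L_Gx\|_{L_G^\dagger}^2 = \|x\|_{L_G}^2$ directly, or (ii) replace the crude Cheeger-based spectral gap inside Proposition~\ref{prop:lapl-apx} by a tighter bound valid for vectors of the special form $b - L_G x$. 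Either refinement collapses a factor of roughly $n^{O(1)} \alpha^{O(1)}\epsilon$ out of the product, matching the constant in the statement; the conceptual content is the same two-step comparison already outlined.
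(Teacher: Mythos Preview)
Your two-step skeleton---apply the $L_G^\dagger$ upper bound from Proposition~\ref{prop:lapl-apx} to $b-L_Gx$, then convert $\|x\|_\infty$ to $\|x\|_{L_G}$---is exactly the paper's structure, and your orthogonality check is the same. The difference is in the second step: the paper does \emph{not} invoke Proposition~\ref{prop:lapl-apx} again on $x$. Instead it uses an elementary edge argument: since $x^\top\mathbf{1}=0$, $x$ has coordinates of both signs, and since $G$ is connected there is an edge $\{a,b\}$ with $|x_a-x_b|\ge \|x\|_\infty/n$; hence
\[
x^\top L_G x \;\ge\; w_{ab}(x_a-x_b)^2 \;\ge\; \frac{w_{\min}}{n^2}\,\|x\|_\infty^2.
\]
Substituting this (rather than the Cheeger-based bound $\|x\|_\infty^2\le \tfrac{2n^4\alpha^2}{w_{\min}}\|x\|_{L_G}^2$) into your first display gives $\|b-L_Gx\|_{L_G^\dagger}^2\le 2n^6\alpha^4\epsilon^2\,x^\top L_Gx$, i.e.\ the advertised $2n^3\alpha^2\epsilon$ after taking a square root.

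Your diagnosis that the constant mismatch is ``cosmetic'' is right, but neither of your proposed refinements is the one the paper actually uses. Option~(i) doesn't straightforwardly work because the hypothesis is stated in terms of $\|x\|_\infty$, not $\|L_Gx\|$, so you cannot avoid passing through $\|x\|_\infty$. Option~(ii) points at the right idea (replace Cheeger by something sharper) but targets the wrong vector: the sharpening is applied to $x$, not to $b-L_Gx$, and it is not about any special form---it is the general path/edge argument above, valid for any $x\perp\mathbf{1}$ in a connected graph.
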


\begin{proof}
Since $b^\top {\bf 1} = 0$, $(b - L_Gx)^\top {\bf 1} = 0$ also. By the upper bound for $L_G^{\dagger}$-norms in Proposition \ref{prop:lapl-apx},

$$\|b - L_Gx\|_{L_G^{\dagger}}^2\le \frac{2n^4\alpha^2}{w_{\min}} \|b - L_Gx\|_{\infty}^2\le 2n^4\alpha^4\epsilon^2w_{\min} \|x\|_{\infty}^2$$

Since $x^\top {\bf 1} = 0$, $x$ has both nonnegative and nonpositive coordinates. Therefore, since $G$ is connected, there exists vertices $a,b$ in $G$ for which $\{a,b\}$ is an edge and for which $|x_a - x_b| \ge \|x\|_{\infty}/n$. Therefore,

$$x^\top L_G x \ge w_{ab}(x_a - x_b)^2\ge (w_{\min}/n^2) \|x\|_{\infty}^2$$
Substitution shows that

$$\|b - L_G x\|_{L_G^{\dagger}}^2\le 2n^4\alpha^4\epsilon^2(n^2 x^\top L_G x)$$
This is the desired result by definition of $\epsilon$-approximate multiplication.
\end{proof}

\begin{corollary}
Let $G$ be a connected $n$-vertex graph with edge weights $\{w_e\}_{e\in E(G)}$ with $w_{\min} = \min_{e\in E(G)} w_e$, $w_{\max} = \max_{e\in E(G)} w_e$, and $\alpha = w_{\max}/w_{\min}$ and consider any vectors $b,x\in \mathbb{R}^n$. If $b$ is an $\epsilon/(2n^3\alpha^2)$-approximate multiplication of $L_Gx$ and $b^\top {\bf1} = 0$, then

$$\|b - L_Gx\|_{\infty} \le \epsilon w_{\min}\|x\|_{\infty}$$
\end{corollary}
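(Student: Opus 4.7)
The plan is to view this corollary as the converse of Corollary~\ref{cor:simpleapproxmult} and prove it by chaining together the same three ingredients in reverse order: the approximate multiplication hypothesis (stated in the $L_G^\dagger$-norm), the lower bound on $\|\cdot\|_{L_G^\dagger}$ from Proposition~\ref{prop:lapl-apx}, and the upper bound on $\|\cdot\|_{L_G}$ from Proposition~\ref{prop:lapl-apx}. Nothing beyond these tools is needed.

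First I would set $r := b - L_G x$ and observe that $\mathbf{1}^\top r = 0$, since $\mathbf{1}^\top b = 0$ by hypothesis and $L_G \mathbf{1} = 0$. This orthogonality is exactly what lets us apply the quadratic-form bounds in Proposition~\ref{prop:lapl-apx} to $r$. Unfolding the $\delta$-approximate multiplication hypothesis with $\delta = \epsilon/(2n^3\alpha^2)$ gives
\begin{align*}
r^\top L_G^\dagger r \;\le\; \delta \cdot x^\top L_G x.
\end{align*}
Applying Proposition~\ref{prop:lapl-apx}'s lower bound $\|r\|_{L_G^\dagger}^2 \ge \|r\|_\infty^2/(n^2 w_{\max})$ to the left-hand side converts this into an $\ell_\infty$ bound of the form $\|r\|_\infty^2 \le n^2 w_{\max}\cdot \delta\cdot x^\top L_G x$.

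The remaining step is to bound $x^\top L_G x$ by something of the form $O(n^2 w_{\max})\|x\|_\infty^2$. The upper half of Proposition~\ref{prop:lapl-apx} gives exactly this, but only for vectors orthogonal to $\mathbf{1}$; and here we have no such assumption on $x$. I would handle this by decomposing $x = x_\perp + c\mathbf{1}$ with $c = (\mathbf{1}^\top x)/n$ so that $x_\perp \perp \mathbf{1}$. Because $L_G \mathbf{1} = 0$, we have $x^\top L_G x = x_\perp^\top L_G x_\perp$, and a one-line averaging argument gives $\|x_\perp\|_\infty \le 2\|x\|_\infty$. Applying the proposition to $x_\perp$ then yields the desired $x^\top L_G x \lesssim n^2 w_{\max}\|x\|_\infty^2$.

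Combining the three displayed bounds and simplifying using the identity $w_{\max} = \alpha\, w_{\min}$ converts powers of $w_{\max}$ into powers of $w_{\min}$ and collapses the $n$ and $\alpha$ factors against $\delta = \epsilon/(2n^3\alpha^2)$, producing $\|r\|_\infty \le \epsilon\, w_{\min} \|x\|_\infty$. The computation itself is just bookkeeping of constants; the only genuine subtlety, and the step I would be most careful about, is the decomposition $x = x_\perp + c\mathbf{1}$ to reconcile the hypothesis (which places no orthogonality requirement on $x$) with Proposition~\ref{prop:lapl-apx} (whose upper bound does require it). Everything else is a routine chain of inequalities mirroring the forward direction proved in Corollary~\ref{cor:simpleapproxmult}.
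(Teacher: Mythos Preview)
Your approach is essentially the same as the paper's: chain the approximate-multiplication hypothesis with the $L_G^\dagger$ lower bound from Proposition~\ref{prop:lapl-apx} and an upper bound on $x^\top L_G x$. The one place you diverge is in bounding $x^\top L_G x$: you invoke the $L_G$ upper bound of Proposition~\ref{prop:lapl-apx} and therefore have to decompose $x = x_\perp + c\mathbf{1}$ to meet its orthogonality hypothesis, whereas the paper bypasses the proposition entirely and just expands $x^\top L_G x = \sum_{\{a,b\}\in E} w_{ab}(x_a - x_b)^2 \le 4 n^2 w_{\max}\|x\|_\infty^2$, which holds for \emph{every} $x$ with no orthogonality needed. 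Your decomposition is correct but unnecessary; the direct expansion is the cleaner route and saves you the factor-of-$2$ loss in $\|x_\perp\|_\infty$.
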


\begin{proof}
Since $b^\top {\bf 1} = 0$, $(b - L_Gx)^\top {\bf 1} = 0$ as well. By the lower bound for $L_G^{\dagger}$-norms in Proposition \ref{prop:lapl-apx} and the fact that $b$ is an approximate multiplication for $L_Gx$,

\begin{align*}
\|b - L_G x\|_{\infty}^2\le n^2 \alpha w_{\min} \|b - L_G x\|_{L_G^{\dagger}}^2\le \frac{\epsilon^2 w_{\min}}{4n^4\alpha^3} x^\top L_G x .
\end{align*}
Notice that
\begin{align*}
x^\top L_Gx = \sum_{\{a,b\}\in E(G)} w_{ab}(x_a - x_b)^2\le n^2 w_{\max} (4\|x\|_{\infty}^2) .
\end{align*}
Therefore, by substitution,
\begin{align*}
\|b - L_G x\|_{\infty}^2\le (\frac{\epsilon^2w_{\min}}{4n^4\alpha^3})(n^2 \alpha w_{\min} (4\|x\|_{\infty}^2))\le w_{\min}^2 \epsilon^2 \|x\|_{\infty}^2 .
\end{align*}
Taking square roots gives the desired result.
\end{proof}

\subsection{Solving Linear Systems Implies Matrix-Vector Multiplication}
\begin{lemma}\label{lem:mul-given-solve}
Consider an $n$-vertex $w$-weighted graph $G$, let $w_{\min} = \min_{e\in G} w_e $, $w_{\max} = \max_{e\in G} w_e$, $\alpha = w_{\max} / w_{\min}$, and $H$ be a known graph for which \begin{align*}
    (1 - 1/900)L_G
    \preceq 
    L_H
    \preceq 
    (1 + 1/900)L_G.
\end{align*} Suppose that $H$ has at most $Z$ edges and suppose that there is a $\T(n,\delta)$-time algorithm $\textsc{SolveG}(b,\delta)$ that, when given a vector $b\in \mathbb{R}^n$ and $\delta\in (0,1)$, returns a vector $x\in \mathbb{R}^n$ with
\begin{align*}
\|x - L_G^{\dag} b\|_{L_G}\le \delta \cdot \|L_G^{\dag}b\|_{L_G}.
\end{align*}
Then, given a vector $x \in \mathbb{R}^n$ and an $\epsilon \in (0,1)$, there is a 
\begin{align*}
\tilde{O}((Z + \T(n,\epsilon/(n^2\alpha)))\log(Zn\alpha/\epsilon))
\end{align*}
-time algorithm $\textsc{MultiplyG}(x,\epsilon)$ (Algorithm~\ref{alg:multiplyG}) that returns a vector $b\in \mathbb{R}^n$ for which
\begin{align*}
\|b - L_G x\|_{\infty}\le \epsilon \cdot w_{\min} \cdot \|x\|_{\infty}.
\end{align*} 
\end{lemma}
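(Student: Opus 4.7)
My plan is to solve this via preconditioned Richardson iteration, viewing the task of computing $L_G x$ as solving the linear system $L_G^\dagger b = x^\perp$ for $b$, where $x^\perp$ denotes the projection of $x$ onto the image of $L_G$. The solver $\textsc{SolveG}$ applies $L_G^\dagger$ approximately, and the sparsifier $L_H$ both approximates $L_G$ spectrally and can be multiplied by any vector in $O(Z)$ time. Accordingly, $\textsc{MultiplyG}$ will initialize $b_0 = 0$ and iterate, for $k = 0, 1, \ldots, K-1$,
\begin{align*}
b_{k+1} = b_k + L_H \bigl(x - \textsc{SolveG}(b_k,\delta)\bigr),
\end{align*}
returning $b_K$. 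Each iteration costs $O(Z) + \T(n,\delta)$; the values of $\delta$ and $K$ are chosen at the end.

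First I would analyze the idealized iteration in which $\textsc{SolveG}(b_k,\delta)$ is replaced by the exact $L_G^\dagger b_k$. Writing $e_k := b_k - L_G x$ and using that $L_H \mathbf{1} = 0$, a direct computation gives $e_{k+1} = (I - L_H L_G^\dagger)e_k$, and all $e_k$ stay in the image of $L_G$ (since $L_H$ and $L_G$ have the same null space, as $H$ is a sparsifier). Conjugating on this subspace by $L_G^{1/2}$ turns $I - L_H L_G^\dagger$ into the symmetric matrix $I - L_G^{-1/2} L_H L_G^{-1/2}$, whose spectral norm is at most $1/900$ by the hypothesis $(1-\tfrac{1}{900})L_G \preceq L_H \preceq (1+\tfrac{1}{900})L_G$. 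This yields the geometric contraction $\|e_{k+1}\|_{L_G^\dagger} \le (1/900)\|e_k\|_{L_G^\dagger}$. The initial error $\|e_0\|_{L_G^\dagger} = \|L_G x\|_{L_G^\dagger} = \|x^\perp\|_{L_G}$ is at most $O(n\sqrt{w_{\max}}\|x\|_\infty)$ by Proposition~\ref{prop:lapl-apx}.

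Next I would incorporate the solver error. At each step, $\textsc{SolveG}$ contributes an additive perturbation $-L_H \xi_k$ to the update, where $\|\xi_k\|_{L_G} \le \delta \|L_G^\dagger b_k\|_{L_G} = \delta \|b_k\|_{L_G^\dagger}$. Since $L_H \preceq 2L_G$, one obtains $\|L_H\xi_k\|_{L_G^\dagger} \le \sqrt{2}\,\|\xi_k\|_{L_G}$, so the recursion becomes
\begin{align*}
\|e_{k+1}\|_{L_G^\dagger} \;\le\; \bigl(\tfrac{1}{900} + O(\delta)\bigr)\|e_k\|_{L_G^\dagger} + O(\delta)\,\|x\|_{L_G}.
\end{align*}
Taking $\delta$ smaller than a fixed constant keeps the effective contraction below $1/2$. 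Setting $K = O(\log(n\alpha/\epsilon))$ and $\delta = \Theta(\epsilon/(n^2\alpha))$ then drives $\|e_K\|_{L_G^\dagger}$ below $\epsilon w_{\min}/(n\sqrt{w_{\max}})\cdot\|x\|_\infty$. Translating back via $\|e_K\|_\infty \le n\sqrt{w_{\max}}\,\|e_K\|_{L_G^\dagger}$ from Proposition~\ref{prop:lapl-apx} yields the required $\|b_K - L_G x\|_\infty \le \epsilon\, w_{\min}\,\|x\|_\infty$. The total running time is $K\cdot(O(Z) + \T(n,\delta)) = \tilde{O}\bigl((Z + \T(n, \epsilon/(n^2\alpha)))\log(Zn\alpha/\epsilon)\bigr)$, matching the claim.

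The main obstacle I expect is controlling the solver-error accumulation, because $\textsc{SolveG}$'s guarantee is relative to $\|L_G^\dagger b_k\|_{L_G}$, which itself varies with $k$. The argument only works because the exact-arithmetic contraction plus the triangle inequality $\|b_k\|_{L_G^\dagger} \le \|L_G x\|_{L_G^\dagger} + \|e_k\|_{L_G^\dagger}$ keep $\|b_k\|_{L_G^\dagger}$ bounded by $O(\|x\|_{L_G})$ uniformly in $k$, preventing solver noise from amplifying. This invariant needs to be tracked explicitly alongside the error recursion, and it is the reason the sparsifier tolerance $1/900$ must be chosen small enough to leave room for the $O(\delta)$ solver perturbation while still giving geometric contraction.
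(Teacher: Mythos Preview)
Your proposal is correct and uses essentially the same preconditioned iterative refinement idea as the paper: exploit that $L_H$ is a cheap spectral approximation to $L_G$ and that $\textsc{SolveG}$ supplies an approximate $L_G^\dagger$, so that repeated correction drives the error down geometrically. The paper organizes the iteration dually to yours: it recurses on the shrinking \emph{input residual} $x_{\mathrm{res}} = x - \textsc{SolveG}(L_H x,\cdot)\approx (I-L_G^\dagger L_H)x$, tracking $\|x_{\mathrm{res}}\|_{L_G}$, and accumulates the output as $\sum_k L_H x^{(k)}$; you instead iterate on the growing \emph{output} $b_k$ and track $\|b_k - L_G x\|_{L_G^\dagger}$. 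In exact arithmetic the two unrollings coincide term-for-term (via $(I-L_H L_G^\dagger)^j L_H = L_H(I-L_G^\dagger L_H)^j$), and with solver error both require $\delta = \Theta(\epsilon/(n^2\alpha))$ to pass between $\|\cdot\|_\infty$ and the Laplacian norms via Proposition~\ref{prop:lapl-apx}. One minor slip: from $L_H\preceq 2L_G$ alone you don't get the constant $\sqrt{2}$ in $\|L_H\xi_k\|_{L_G^\dagger}\le \sqrt{2}\,\|\xi_k\|_{L_G}$; the clean bound is $(1+1/900)\|\xi_k\|_{L_G}$, obtained by writing $L_H L_G^\dagger L_H = L_G^{1/2}M^2 L_G^{1/2}$ with $M=L_G^{-1/2}L_H L_G^{-1/2}$ and using $\|M\|\le 1+1/900$. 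This does not affect the argument.
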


The algorithm $\textsc{MultiplyG}$ (Algorithm~\ref{alg:multiplyG}) uses standard preconditioned iterative refinement. It is applied in the opposite from the usual way. Instead of using iterative refinement to solve a linear system given matrix-vector multiplication, we use iterative refinement to do matrix-vector multiplication given access to a linear system solver.

\begin{algorithm}\caption{\textsc{MultiplyG} and \textsc{MultiplyGAdditive}}\label{alg:multiplyG}
\begin{algorithmic}[1]
\Procedure{\textsc{MultiplyG}}{$x,\epsilon$} \Comment{Lemma~\ref{lem:mul-given-solve}, Theorem~\ref{lem:solve-given-mul}}

    \State \textbf{Given}: $x\in \mathbb{R}^n$, $\epsilon\in (0,1)$, the sparsifier $H$ for $G$, and a system solver for $G$
    
    \State \textbf{Returns}: an approximation $b$ to $L_G x$
    
    \State \Return $\textsc{MultiplyGAdditive}(x,\epsilon \|x\|_{\infty}/( \log^2(\alpha n/\epsilon)) )$

\EndProcedure
\Procedure{\textsc{MultiplyGAdditive}}{$x,\tau$}

    \If{$\|x\|_{\infty}\le \tau/(n^{10}\alpha^5)$}
    
        \State \Return $0$
    
    \EndIf
    
    \State $x_{\text{main}}\gets \textsc{SolveG}(L_H x,\tau \sqrt{w_{\min}}/(n^{10}\alpha^5))$
    
    \State $x_{\text{res}}\gets x - x_{\text{main}}$
    
    \State $b_{\text{res}}\gets \textsc{MultiplyGAdditive}(x_{\text{res}},\tau)$
    
    \State \Return $L_H x + b_{\text{res}}$ 

\EndProcedure
\end{algorithmic}
\end{algorithm}

\begin{proof}
In this proof, assume that ${\bf 1}^{\top}x = 0$. If this is not the case, then shifting $x$ so that it is orthogonal to ${\bf 1}$ only decreases its $\ell_2$-norm, which means that the $\ell_{\infty}$ norm only increases by a factor of $\sqrt{n}$, so the error only increases by $O(\log n)$ additional solves.

{\bf Reduction in residual and iteration bound}: First, we show that
\begin{align*}
\|x_{\text{res}}\|_{L_G}\le (1/14) \|x\|_{L_G}
\end{align*}
Since $(I - L_H L_G^{\dag})L_G(I - L_G^{\dag}L_H)\preceq 3(1/900) L_G$,
\begin{align*}
\|x_{\text{res}}\|_{L_G} &= \|x - x_{\text{main}}\|_{L_G}\\
&\le \|x - L_G^{\dag}L_H x\|_{L_G} + \|L_G^{\dag}L_H x - x_{\text{main}}\|_{L_G}\\
&\le (1/15) \|x\|_{L_G} + (1/10000) \|L_G^{\dag}L_H x\|_{L_G}\\
&\le (1/14) \|x\|_{L_G}\\ 
\end{align*}
Let $x_{\text{final}}$ be the lowest element of the call stack and let $k$ be the number of recursive calls to $\textsc{MultiplyGAdditive}$ (Algorithm~\ref{alg:multiplyG}). By Proposition \ref{prop:lapl-apx},
\begin{align*}
    \|x\|_{L_G}\le \|x\|_{\infty} \sqrt{w_{\max}} \cdot n
    \text{~~~and~~~} \|x_{\text{final}}\|_{L_G}\ge \|x_{\text{final}}\|_{\infty} \sqrt{w_{\min}}/(2n^2\alpha).
\end{align*}

By definition of $x_{\text{final}}$, 
\begin{align*}\|x_{\text{final}}\|_{\infty}
\ge \frac{\tau}{ (14 n^{10}\alpha^5) } = \frac{\epsilon \sqrt{w_{\min}} \|x\|_{\infty}}{ 28 n^{12}\alpha^5 \log(\alpha n/\epsilon)}.
\end{align*}
Therefore,
\begin{align*}
k\le \log_{14}(\|x\|_{L_G}/\|x_{\text{final}}\|_{L_G}) \le \log^2 (\alpha n/\epsilon)
\end{align*}
as desired.

\textbf{Error}: We start by bounding error in the $L_G^{\dag}$ norm. Let $b$ be the output of $\textsc{MultiplyGAdditive}(x,\tau)$ (Algorithm~\ref{alg:multiplyG}). We bound the desired error recursively:

\begin{align*}
\|b - L_G x\|_{L_G^{\dag}} &= \|L_H x + b_{\text{res}} - L_G x\|_{L_G^{\dag}}\\
&= \|b_{\text{res}} - L_G (x - L_G^{\dag} L_H x) \|_{L_G^{\dag}}\\
&= \|b_{\text{res}} - L_G (x - x_{\text{main}}) - L_G(x_{\text{main}} - L_G^{\dag} L_H x)\|_{L_G^{\dag}}\\
&\le \|b_{\text{res}} - L_G (x - x_{\text{main}})\|_{L_G^{\dag}} + \|L_G(x_{\text{main}} - L_G^{\dag} L_H x)\|_{L_G^{\dag}}\\
&= \|b_{\text{res}} - L_G x_{\text{res}}\|_{L_G^{\dag}} + \|x_{\text{main}} - L_G^{\dag} L_H x\|_{L_G}\\
&\le \|b_{\text{res}} - L_G x_{\text{res}}\|_{L_G^{\dag}} + \sqrt{w_{\min}}\tau/(n^{10}\alpha^5)
\end{align*}
Because $0 = \textsc{MultiplyGAdditive}(x_{\text{final}},\tau)$ (Algorithm~\ref{alg:multiplyG}),

\begin{align*}
\|b - L_G x\|_{L_G^{\dag}} &\le \|x_{\text{final}}\|_{L_G} + k\sqrt{w_{\min}}\tau/(n^{10}\alpha^5)\\
&\le n \sqrt{w_{\max}}\|x_{\text{final}}\|_{\infty} + k\sqrt{w_{\min}}\tau/(n^{10}\alpha^5)\\
&\le \sqrt{w_{\min}} \tau/(n^8 \alpha^4)\\
&\le \epsilon \sqrt{w_{\min}} \|x\|_{\infty}/(n^8 \alpha^4) & \text{~by~} \tau \leq \eps \| x \|_{\infty}
\end{align*}
By Proposition \ref{prop:lapl-apx} applied to $L_G^{\dag}$, $\|b - L_Gx\|_{L_G^{\dag}} \ge \|b - L_Gx\|_{\infty}/(n\sqrt{w_{\max}} )$. 
Therefore,
\begin{align*}\|b - L_Gx\|_{\infty} 
\leq & ~ n \sqrt{w_{\max}} \| b - L_G x \|_{L_G^\dag} \\
\leq & ~ n \sqrt{w_{\max}} \epsilon \sqrt{w_{\min}} \frac{1}{n^8 \alpha^4} \| x \|_{\infty} \\
\leq & ~ \epsilon w_{\min} \|x\|_{\infty} \cdot \frac{1}{n^7 \alpha^{3.5}} & \text{~by~} \alpha = w_{\max} / w_{\min} \\
\leq & ~ \epsilon w_{\min}   \|x\|_{\infty}
\end{align*}
as desired.

\textbf{Runtime}: There is one call to $\textsc{SolveG}$ and one multiplication by $L_H$ per call to $\textsc{MultiplyGAdditive}$ (Algorithm~\ref{alg:multiplyG}). Each multiplication by $L_H$ takes $O(Z)$ time. As we have shown, there are only $k\le O(\log (n\alpha/\epsilon))$ calls to $\textsc{MultiplyGAdditive}$ (Algorithm~\ref{alg:multiplyG}). Therefore, we are done.
\end{proof}

\subsection{Matrix-Vector Multiplication Implies Solving Linear Systems}

The converse is well-known to be true \cite{st04,kmp11}:

\begin{lemma}[\cite{st04}]\label{lem:solve-given-mul} 
Consider an $n$-vertex $w$-weighted graph $G$, let $w_{\min} = \min_{e\in G} w_e $, $w_{\max} = \max_{e\in G} w_e$, $\alpha =  w_{\max} / w_{\min}$, and $H$ be a known graph with at most $Z$ edges for which \begin{align*}
    (1 - 1/900)L_G
    \preceq 
    L_H
    \preceq 
    (1 + 1/900)L_G.
\end{align*} 
Suppose that, given an $\epsilon \in (0,1)$, there is a $ \T(n,\epsilon)$-time algorithm $\textsc{MultiplyG}(x,\epsilon)$ (Algorithm~\ref{alg:multiplyG}) that, given a vector $x\in \mathbb{R}^n$, returns a vector $b\in \mathbb{R}^n$ for which
\begin{align*}
\|b - L_G x\|_{\infty}\le \epsilon \cdot w_{\min} \cdot \|x\|_{\infty}.
\end{align*} Then, there is an algorithm $\textsc{SolveG}(b,\delta)$  that, when given a vector $b\in \mathbb{R}^n$ and $\delta\in (0,1)$, returns a vector $x\in \mathbb{R}^n$ with
\begin{align*}
\|x - L_G^{\dag} b\|_{L_G}\le \delta \cdot \|L_G^{\dag}b\|_{L_G}.
\end{align*}
in 
\begin{align*}
\tilde{O}( Z + \T(n, \delta / ( n^4 \alpha^2 ) ) ) \log(Zn\alpha/\delta)
\end{align*}
time.
\end{lemma}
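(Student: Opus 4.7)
The plan is preconditioned iterative refinement with $H$ as the preconditioner. Since $H$ has at most $Z$ edges and is a constant-factor spectral sparsifier of $G$, any standard near-linear-time Laplacian solver (e.g. \cite{ckmpprx14,kosz13}) can solve systems in $L_H$ to high accuracy in $\tilde{O}(Z \log(1/\delta))$ time. The spectral bound $(1-1/900)L_G \preceq L_H \preceq (1+1/900)L_G$ guarantees that the idealized iteration $x_{k+1} \gets x_k + L_H^\dag(b - L_G x_k)$ contracts the $L_G$-norm error by a factor of at most $1/300$ per step, so $O(\log(1/\delta))$ outer iterations would suffice if both inner operations were exact.

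\textbf{The iteration.} Starting from $x_0 = 0$, at step $k$ we (i) approximate $L_G x_k$ by calling $\textsc{MultiplyG}(x_k,\epsilon_{\mathrm{in}})$ and form $r_k \approx b - L_G x_k$, after projecting $x_k$ to be orthogonal to ${\bf 1}$; (ii) approximately solve $L_H z_k = r_k$ to very small relative $L_H$-norm error using a near-linear-time solver on $L_H$; (iii) set $x_{k+1} \gets x_k + z_k$. Because $\|L_G^\dag b\|_{L_G}$ and the smallest representable scale differ by at most $\poly(n,\alpha)$ (Proposition \ref{prop:lapl-apx}), $O(\log(n\alpha/\delta))$ outer iterations drive the $L_G$-norm error below $\delta\|L_G^\dag b\|_{L_G}$, provided the two inner approximation errors stay below the per-step contraction margin.

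\textbf{Main obstacle: bridging error models.} The technical challenge is that $\textsc{MultiplyG}$ supplies only an $\ell_\infty$ additive guarantee $\|r_k - (b - L_G x_k)\|_\infty \le \epsilon_{\mathrm{in}} w_{\min}\|x_k\|_\infty$, while the convergence argument for inexact Richardson iteration consumes error measured in the $L_G^\dag$ norm relative to $\|L_G x_k\|_{L_G^\dag}$. Corollary \ref{cor:simpleapproxmult} supplies exactly the needed translation: once $x_k$ has been centered against ${\bf 1}$, choosing $\epsilon_{\mathrm{in}} = \Theta(\delta/(n^4\alpha^2))$ upgrades the $\ell_\infty$ guarantee to a $\delta$-approximate multiplication of $L_G x_k$ in the $L_G^\dag$ norm, which is small enough to preserve the constant contraction rate across all iterations. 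Each iteration then costs $\tilde{O}(Z) + \T(n,\delta/(n^4\alpha^2))$ time, so summing over $O(\log(n\alpha/\delta))$ iterations (and absorbing the $\log(1/\delta)$ from the $L_H$ inner solves) yields the advertised runtime $\tilde{O}(Z + \T(n,\delta/(n^4\alpha^2)))\log(Zn\alpha/\delta)$.
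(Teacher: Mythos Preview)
The paper does not actually supply a proof of this lemma; it simply states it as ``well-known'' with a citation to \cite{st04,kmp11} and moves on. Your sketch is the standard preconditioned Richardson iteration that those references use, and the key insight you identify---that Corollary~\ref{cor:simpleapproxmult} is what converts the $\ell_\infty$ guarantee of $\textsc{MultiplyG}$ into the $L_G^\dag$-norm control the iteration needs---is exactly the point of that corollary in the paper's organization. So your approach matches what the paper is implicitly invoking.

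One small remark: be careful with the quantitative bookkeeping. Definition~\ref{def:approxmult} is stated with a square, so an $\eta$-approximate multiplication gives $\|b-L_Gx\|_{L_G^\dag}\le \sqrt{\eta}\,\|x\|_{L_G}$, not $\eta\,\|x\|_{L_G}$. With $\epsilon_{\mathrm{in}}=\Theta(\delta/(n^4\alpha^2))$ plugged into Corollary~\ref{cor:simpleapproxmult} you get a per-step residual perturbation of order $\sqrt{\delta/n}\,\|x_k\|_{L_G}$, and you should check that this, accumulated over $O(\log(n\alpha/\delta))$ iterations, still lands under $\delta\|L_G^\dag b\|_{L_G}$ (it does, with room to spare, but the exponent on $\delta$ needs a moment of care). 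This is a bookkeeping detail, not a gap in the approach.
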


\subsection{Lower bound for high-dimensional linear system solving}

We have shown in this section that if a $\k$ graph can be efficiently sparsified, then there is an efficient Laplacian multiplier for $\k$ graphs if and only if ther is an efficient Laplacian system solver for $\k$ graphs. Here we give one example of how this connection can be used to prove \emph{lower bounds} for Laplacian system solving:

\begin{corollary}[Restatement of Corollary~\ref{cor:introsystemhardhighdim}]
Consider a function $f$ that is $(2,o(\log n))$-multiplicatively Lipschitz for which $f$ cannot be $\epsilon$-approximated by a polynomial of degree at most $o(\log n)$. Then, assuming $\SETH$, there is no $\poly(d\log(\alpha n/\epsilon))n^{1+o(1)}$-time algorithm for $\epsilon$-approximately solving Laplacian systems in the $\k$-graph on $n$ points, where $\k(u,v) = f(\|u-v\|_2^2)$.
\end{corollary}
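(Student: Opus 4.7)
The plan is to derive the hardness for Laplacian system solving by combining three ingredients already developed in the paper: (i) the sparsification algorithm for multiplicatively Lipschitz kernels (Theorem~\ref{thm:informal-sparsify-lipschitz}), (ii) the reduction from matrix-vector multiplication to linear system solving (Lemma~\ref{lem:mul-given-solve}), and (iii) the $\SETH$-based lower bound for adjacency/Laplacian multiplication when $f$ cannot be approximated by a low-degree polynomial (Theorem~\ref{thm:informal-high}). The strategy is a contradiction argument: assume such a fast solver exists, bootstrap it into a fast multiplier, and contradict the multiplication hardness.

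More concretely, suppose there were a $\poly(d,\log(\alpha n/\epsilon))\cdot n^{1+o(1)}$-time algorithm $\textsc{SolveG}$ that $\epsilon$-approximately solves Laplacian systems in $\k$-graphs with $\k(u,v)=f(\|u-v\|_2^2)$. First I would invoke Theorem~\ref{thm:informal-sparsify-lipschitz} with $L = o(\log n)$: since $f$ is $(2,L)$-multiplicatively Lipschitz, we can compute a constant-factor spectral sparsifier $H$ of the $\k$-graph with $\tilde O(n)$ edges in time $O(nd\sqrt{L\log n}) + n\log n\cdot 2^{O(\sqrt{L\log n})}\cdot\log\alpha = n^{1+o(1)}\poly(d,\log\alpha)$. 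By rescaling $\eps$ to a small constant (e.g.\ $1/900$), $H$ satisfies the hypothesis of Lemma~\ref{lem:mul-given-solve}.

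Next, feeding the hypothetical $\textsc{SolveG}$ and the sparsifier $H$ into Lemma~\ref{lem:mul-given-solve} yields a $\textsc{MultiplyG}$ algorithm for the Laplacian $L_G$ of the $\k$-graph. The running time of the multiplier is $\tilde O((Z+\T(n,\eps/(n^2\alpha)))\log(Zn\alpha/\eps))$, where $Z=\tilde O(n)$ comes from the sparsifier and $\T(n,\cdot)=\poly(d,\log(\alpha n/\eps))n^{1+o(1)}$ is the cost of the assumed solver. Because the solver's cost is polylogarithmic in $1/\eps$, substituting $\eps/(n^2\alpha)$ only inflates the running time by a polylogarithmic factor, so the resulting multiplier runs in $\poly(d,\log(\alpha n/\eps))\cdot n^{1+o(1)}$ time. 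Via the $\ell_\infty$--Laplacian-norm equivalence (Corollary~\ref{cor:simpleapproxmult}), this immediately gives a $2^{-\poly(\log n)}$-additive-error multiplier for $L_G\cdot y$ on any $y\in\{0,1\}^n$ in the same running time.

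Finally I would apply Theorem~\ref{thm:informal-high} (the multiplication lower bound). Since $f$ is assumed not to admit an $\epsilon$-additive polynomial approximation of degree $o(\log n)$, that theorem says $\KLapE$ requires $n^{2-o(1)}$ time under $\SETH$ in dimension $d=\Theta(\log n)$. This directly contradicts the $n^{1+o(1)}\poly(d,\log(\alpha n/\eps))$ multiplier we constructed, completing the proof. The main obstacle in executing this plan is bookkeeping: making sure that (a) the sparsifier's quality constant matches the $1/900$ required by Lemma~\ref{lem:mul-given-solve}, (b) the $\ell_\infty$/$L_G^\dagger$-norm conversions in Proposition~\ref{prop:lapl-apx} and Corollary~\ref{cor:simpleapproxmult} only cost $\poly(n,\alpha)$ factors which are absorbed because the solver error $\epsilon$ is taken as $2^{-\poly(\log n)}$, and (c) the Lipschitz parameter $L=o(\log n)$ in the hypothesis is exactly what is needed for the sparsifier to run in $n^{1+o(1)}$ time, so there is no gap between the two regimes.
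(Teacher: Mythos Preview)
Your proposal is correct and follows essentially the same approach as the paper's own proof: sparsify via Theorem~\ref{thm:sparsify-lipschitz}, feed the sparsifier and the hypothetical fast solver into Lemma~\ref{lem:mul-given-solve} to obtain a fast Laplacian multiplier, and then contradict the multiplication lower bound of Theorem~\ref{thm:hardnessapprox}. The paper's proof is a terse three-sentence version of exactly this argument; the additional bookkeeping you outline (matching the $1/900$ constant, tracking the $\ell_\infty$-to-$L_G^\dagger$ norm conversions, and verifying that $L=o(\log n)$ keeps the sparsifier at $n^{1+o(1)}$) is all correct and simply makes explicit what the paper leaves implicit.
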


\begin{proof}
By Theorem \ref{thm:sparsify-lipschitz}, there is a $\poly(d\log(\alpha))n^{1+o(1)}$-time algorithm for sparsifying the $\k$-graph on $n$ points. Since sparsification is efficient, Lemma \ref{lem:mul-given-solve} implies that the existence of a $\poly(d\log(\alpha n/\epsilon))n^{1+o(1)}$-time Laplacian solver yields access to a $\poly(d\log(\alpha n/\epsilon))n^{1+o(1)}$-time Laplacian multiplier. The existence of this multiplier contradicts Theorem \ref{thm:hardnessapprox} assuming $\SETH$, as desired.
\end{proof} %%% Section 4 Equivalence of Matrix-Vector Multiplication and Solving Linear Systems
\section{Matrix-Vector Multiplication}

Recall the adjacency and Laplacian matrices of a $\k$ graph: For any function $\k : \R^d \times \R^d \to \R$, and any set $P = \{x_1, \ldots, x_n\} \subseteq \R^d$ of $n$ points, define the matrix $A_{\k,P} \in \R^{n \times n}$ by 
\begin{align*}
A_{\k,P}[i,j] = 
\begin{cases} 
\k(x_i, x_j), & \text{if } i \neq j ; \\
 0, &\text{if } i=j.
\end{cases}
\end{align*}
Similarly, define the matrix $L_{\k,P} \in \R^{n \times n}$ by 
\begin{align*}
L_{\k,P}[i,j] = 
\begin{cases} 
- \k(x_i, x_j), & \text{if } i \neq j ; \\ 
\sum_{a \in [n] \setminus \{i\}} \k(x_i, x_a), &\text{if } i=j.
\end{cases}
\end{align*}
$A_{\k,P}$ and $L_{\k,P}$ are the adjacency matrix and Laplacian matrix, respectively, of the complete weighted graph on $n$ nodes where the weight between node $i$ and node $j$ is $\k(x_i, x_j)$. 

In this section, we study the algorithmic problem of computing the linear transformations defined by these matrices:

\begin{problem}[$\k$ Adjacency Evaluation]\label{pro:KAdjE}
For a given function $\k : \R^d \times \R^d \to \R$, the $\k$ Adjacency Evaluation ($\KAdjE$) problem asks: Given as input a set $P =\{x_1, \ldots, x_n\} \subseteq \R^d$ with $|P|=n$ and a vector $y \in \R^n$, compute a vector $b \in \R^n$ such that $\|b - A_{\k,P} \cdot y \|_\infty \leq \eps \cdot w_{\max} \cdot \|y\|_\infty$.
\end{problem}

\begin{problem}[$\k$ Laplacian Evaluation]\label{pro:KLapE}
For a given function $\k : \R^d \times \R^d \to \R$, the $\k$ Laplacian Evaluation ($\KLapE$) problem asks: Given as input a set $P =\{x_1, \ldots, x_n\} \subseteq \R^d$ with $|P|=n$ and a vector $y \in \R^n$, compute a vector $b \in \R^n$ such that $\|b - L_{\k,P} \cdot y \|_\infty \leq \eps \cdot w_{\max} \cdot \|y\|_\infty$.
\end{problem}

We make a few important notes about these problems:
\begin{itemize}
    \item In both of the above, problems, $w_{\max} := \max_{u,v \in P} |\k(u,v)|$.
    \item We assume $\eps = 2^{-\polylog n}$ when it is omitted in the above problems. As discussed in Section~\ref{sec:equivalence}, this is small enough error so that we can apply such an algorithm for $\k$ Laplacian Evaluation to solve Laplacian systems, and furthermore, if we can prove hardness for any such $\eps$, it implies hardness for solving Laplacian systems.
    \item Note, by Corollary~\ref{cor:simpleapproxmult}, that when $\eps = 2^{-\polylog n}$, the result of $\KAdjE$ is an $\eps$-approximate multiplication of $A_{\k,P} \cdot y$ (see Definition~\ref{def:approxmult}), and the result of $\KLapE$ is an $\eps$-approximate multiplication of $L_{\k,P} \cdot y$.
    \item We will also sometimes discuss the $f$ $\KAdjE$ and $f$ $\KLapE$ problems for a single-input function $f : \R \to \R$. In this case, we implicitly pick $\k(u,v) = f(\|u-v\|_2^2)$.
\end{itemize}

Suppose the function $\k$ can be evaluated in time $T$ (in this paper we've been assuming $T = \tilde{O}(1)$). Then, both the $\KAdjE$  and $\KLapE$ problems can be solved in $O(T n^2)$ time, by computing all $n^2$ entries of the matrix and then doing a straightforward matrix-vector multiplication. However, since the input size to the problem is only $O(nd)$ real numbers, we can hope for much faster algorithms when $d = o(n)$. In particular, we will aim for $n^{1 + o(1)}$ time algorithms when $d = n^{o(1)}$.

For some functions $\k$, like $\k(x,y) = \| x - y \|_2^2$, we will show that a running time of $n^{1+o(1)}$ is possible for all $d = n^{o(1)}$. For others, like $\k(x,y) = 1/\|x - y\|_2^2$ and $\k(x,y) = \exp(-\|x - y\|_2^2)$, we will show that such an algorithm is only possible when $d \ll \log(n)$. More precisely, for these $\k$:
\begin{enumerate}
    \item When $d = O(\log (n) / \log \log (n))$, we give an algorithm running in time $n^{1 + o(1)}$, and
    \item For $d = \Omega(\log n)$, we prove a conditional lower bound showing that $n^{2 - o(1)}$ time is necessary. 
\end{enumerate}
Finally, for some functions like $\k(x,y) = |\langle x,y \rangle|$, we will show a conditional lower bound showing that $\Omega(n^{2-\delta})$ time is required even when $d = 2^{\Omega(\log^* n)}$ is just barely super-constant.

In fact, assuming $\SETH$, we will characterize the functions $f$ for which the $\KAdjE$ and $\KLapE$ problems can be efficiently solved in high dimensions $d = \Omega(\log n)$ in terms of the \emph{approximate degree} of $f$ (see subsection~\ref{subsec:approxdegree} below). The answer is more complicated in low dimensions $d = o(\log n)$, and for some functions $f$ we make use of the Fast Multipole Method to design efficient algorithms (in fact, we will see that the Fast Multipole Method solves a problem equivalent to our $\KAdjE$ problem).

\subsection{Equivalence between Adjacency and Laplacian Evaluation}

Although our goal in this section is to study the $\k$ Laplacian Evaluation problem, it will make the details easier to instead look at the $\k$ Adjacency Evaluation problem. Here we show that any running time achievable for one of the two problems can also be achieved for the other (up to a $\log n$ factor), and so it will be sufficient in the rest of this section to only give algorithms and lower bounds for the $\k$ Adjacency Evaluation problem.

\begin{proposition} \label{prop:adjtolap}
Suppose the $\KAdjE$ (Problem~\ref{pro:KAdjE}) can be solved in $\T(n,d,\eps)$ time. Then, the $\KLapE$ (Problem~\ref{pro:KLapE}) can be solved in $O(\T(n,d,\eps/2))$ time.
\end{proposition}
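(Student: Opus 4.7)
The plan is to observe that the Laplacian matrix decomposes as $L_{\k,P} = D_{\k,P} - A_{\k,P}$, where $D_{\k,P}$ is the diagonal degree matrix with $D_{\k,P}[i,i] = \sum_{a \neq i} \k(x_i,x_a)$. Crucially, the diagonal entries of $D_{\k,P}$ are exactly the coordinates of the vector $A_{\k,P} \cdot \mathbf{1}$, where $\mathbf{1}$ is the all-ones vector. Hence a single $\KAdjE$ call applied to $\mathbf{1}$ recovers the diagonal of $D_{\k,P}$ approximately, and another $\KAdjE$ call applied to $y$ recovers $A_{\k,P} \cdot y$ approximately; subtracting produces an approximation of $L_{\k,P} \cdot y$.

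Concretely, the algorithm is as follows. First, invoke the assumed $\KAdjE$ algorithm on the input point set $P$ and the vector $\mathbf{1}$ with error parameter $\eps/2$ to obtain a vector $d' \in \R^n$ with $\|d' - A_{\k,P}\mathbf{1}\|_\infty \le (\eps/2)\, w_{\max}$. Second, invoke $\KAdjE$ on $P$ and $y$ with error parameter $\eps/2$ to obtain $b' \in \R^n$ with $\|b' - A_{\k,P} y\|_\infty \le (\eps/2)\, w_{\max} \|y\|_\infty$. Finally, output the vector $z \in \R^n$ defined by $z_i = d'_i \cdot y_i - b'_i$.

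For the error analysis, note that $(D_{\k,P} y)_i = D_{\k,P}[i,i] \cdot y_i = (A_{\k,P}\mathbf{1})_i \cdot y_i$, so
\begin{align*}
\| \mathrm{diag}(d') \cdot y - D_{\k,P} \cdot y\|_\infty \le \|d' - A_{\k,P}\mathbf{1}\|_\infty \cdot \|y\|_\infty \le (\eps/2)\, w_{\max} \|y\|_\infty.
\end{align*}
Combining with the guarantee for $b'$ via the triangle inequality gives
\begin{align*}
\|z - L_{\k,P} y\|_\infty \le \|\mathrm{diag}(d') y - D_{\k,P} y\|_\infty + \|b' - A_{\k,P} y\|_\infty \le \eps\, w_{\max} \|y\|_\infty,
\end{align*}
which is exactly the $\KLapE$ guarantee. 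The total running time is two $\KAdjE$ calls at accuracy $\eps/2$ plus $O(n)$ work to form $z$, i.e.\ $O(\T(n,d,\eps/2))$.

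There is no real obstacle here; the only subtlety is ensuring that halving $\eps$ is sufficient to absorb the two independent error contributions (from computing the diagonal and from computing $A_{\k,P} y$), which the triangle inequality handles cleanly. The same reduction would not immediately work in reverse (Laplacian to adjacency) without additional care, but that direction is not needed for this proposition.
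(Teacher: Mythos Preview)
Your proof is correct and follows essentially the same approach as the paper: two calls to the $\KAdjE$ oracle at accuracy $\eps/2$, one on $\mathbf{1}$ to recover the degree vector and one on $y$, then combine via $z_i = d'_i y_i - b'_i$. Your write-up in fact supplies more detail on the error analysis than the paper's own proof.
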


\begin{proof}
We use the $\k$ Adjacency Evaluation algorithm with error $\eps/2$ twice, to compute $s := A_{\k, P} \cdot y$ and $g := A_{\k,P} \cdot \vec{1}$, where $\vec{1}$ is the all-1s vector of length $n$. We then output the vector $z \in \R^n$ given by $z_i = g_i \cdot y_i - s_i$, which can be computed in $O(n) = O(\T(n,d,\eps/2))$ time.
\end{proof}

\begin{proposition} \label{prop:laptoadj}
Suppose the $\KLapE$ (Problem~\ref{pro:KLapE}) can be solved in $\T(n,d,\eps)$ time, and that $\T$ satisfies $\T(n_1 + n_2,d,\eps) \geq \T(n_1,d,\eps)+\T(n_2,d,\eps)$ for all $n_1, n_2, d, \eps$. Then, the $\KAdjE$ (Problem~\ref{pro:KAdjE}) can be solved in $ O( \T(n \log n,d,0.5 \eps/\log n))$ time.
\end{proposition}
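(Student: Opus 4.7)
The idea is to compute $A_{\k,P} y$ by decomposing it into $O(\log n)$ layers of bipartite sums, each of which can be extracted from a single $\KLapE$ call by zero-padding the input vector. Build a balanced binary tree on the $n$ points of $P$ whose leaves are the individual points, and for each internal node $v$ let $S(v_L)$ and $S(v_R)$ denote the point subsets at its two children and $S = S(v_L) \cup S(v_R)$. Form the vector $y_L \in \R^S$ that equals $y$ on the $S(v_L)$ coordinates and $0$ on $S(v_R)$, and the symmetric vector $y_R$. A single application of $L_{\k, S}$ to $y_L$ has the convenient property that for $i \in S(v_R)$ the diagonal term vanishes, so $(L_{\k, S} y_L)_i = -\sum_{j \in S(v_L)} \k(x_i, x_j)\, y_j$; negating yields the bipartite sum from $S(v_L)$ to $i$. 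Symmetrically, $(L_{\k, S} y_R)_i$ for $i \in S(v_L)$ produces the bipartite sum from $S(v_R)$ to $i$.

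\textbf{Aggregation and correctness.} For every distinct pair $i,j \in P$ there is a unique internal node at which $i$ and $j$ sit in opposite children, namely their least common ancestor in the tree. Consequently, summing the bipartite contributions computed at each of the $O(\log n)$ ancestors of $i$ reproduces $\sum_{j \neq i} \k(x_i, x_j)\, y_j = (A_{\k, P} y)_i$ exactly in the noise-free setting. In the noisy setting, the returned value is the sum of $O(\log n)$ approximate bipartite pieces, one per ancestor.

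\textbf{Time and error accounting.} At depth $k$ the $2^k$ internal nodes partition $P$, so their subsets have total size $n$. Across all $O(\log n)$ tree levels, the total input size to the $\KLapE$ oracle is at most $2n \log n$; by the stated superadditivity of $\T$, the total running time is at most $\T(2 n \log n, d, \eps') = O(\T(n \log n, d, \eps'))$ plus $O(n \log n)$ bookkeeping. Each of the $O(\log n)$ bipartite pieces contributing to a given coordinate $(A_{\k,P} y)_i$ carries additive $\ell_\infty$-error at most $\eps' \cdot w_{\max}^S \cdot \|y_L\|_\infty \leq \eps' \cdot w_{\max} \cdot \|y\|_\infty$, since $w_{\max}^S \le w_{\max}$ for $S \subseteq P$ and $\|y_L\|_\infty \le \|y\|_\infty$. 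Setting $\eps' = 0.5 \eps / \log n$ keeps the cumulative per-coordinate error within $\eps \cdot w_{\max} \cdot \|y\|_\infty$, matching the $\KAdjE$ specification. The main technical point is precisely this error-budget accounting: one must verify that the $O(\log n)$ partial sums combine additively (rather than compounding multiplicatively), which follows immediately because each LCA contributes to the coordinate linearly and only once.
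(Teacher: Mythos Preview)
Your proof is correct and follows essentially the same approach as the paper: both build a balanced binary decomposition of $P$, at each internal node apply the $\KLapE$ oracle to zero-padded vectors so that the diagonal term vanishes on the ``other side'' and the result is a pure bipartite sum, and then aggregate the $O(\log n)$ bipartite pieces along each leaf's root-to-leaf path. The time accounting via superadditivity of $\T$ and the error budget of $0.5\eps/\log n$ per call are also identical. Your explicit tree presentation, with the LCA observation and the checks $w_{\max}^S\le w_{\max}$ and $\|y_L\|_\infty\le\|y\|_\infty$, is arguably tidier than the paper's recursive write-up.
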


\begin{proof}
We will show that the $\k$ Adjacency Evaluation problem can be solved in 
\begin{align*}
\sum_{i=0}^{\log n} O(2^i \cdot \T(n/2^i,d,0.5 \eps/\log n))
\end{align*}
time, and then apply the superadditive identity for $T$ to get the final running time.
For a fixed $d$, we proceed by strong induction on $n$, and assume the $\k$ Adjacency Evaluation problem can be solved in this running time for all smaller values of $n$.

Let $a', a'' \in \R^n$ be the vectors given by $a'_i = y_i$ and $a''_i = 0$ when $i \in [n/2]$, and $a'_i = 0$ and $a''_i = y_i$ when $i > n/2$. 

We first compute $z' \in \R^n$ and $z'' \in \R^n$ as follows:
\begin{align*}
 z' := L_{\k,P} \cdot a' \text{~~~and~~~}z'' := L_{\k,P} \cdot a''
\end{align*} 
in $O( \T(n,d,0.5 \eps/\log n))$ time. 

Next, let $y', y'' \in \R^{n/2}$ be the vectors given by $y'_i = y_i$ and $y''_i = y_{n/2 + i}$ for all $i \in [n/2]$, and let $P', P'' \subseteq \R^d$ be given by $P' = \{ x_1, \ldots, x_{n/2}\}$ and $P'' = \{x_{n/2 + 1}, \ldots, x_n \}$. 

We recursively compute $r', r'' \in \R^{n/2}$ given by 
\begin{align*}
r' = L_{\k, P'} \cdot y' \text{~~~and~~~} r'' = L_{\k, P''} \cdot y''.
\end{align*}
Finally, we can output the vector $z \in \R^n$ given by $z_i = z''_i + r'_i$ and $z_{n/2 + i} = z'_{n/2 + i} + r''_i$ for all $i \in [n/2]$. Each of the two recursive calls took time 
\begin{align*}
\sum_{i=1}^{\log_2(n)} O(2^{i-1} \cdot \T(n/2^i,d,0.5 \eps/\log n)),
\end{align*}
and our two initial calls took time $O( \T(n,d,0.5 \eps/\log n))$, leading to the desired running time. Each output entry is ultimately the sum of at most $2 \log n$ terms from calls to the given algorithm, and hence has error $\eps$ (since we perform all recursive calls with error $0.5 \eps / \log n$ and the additive error guarantees in recursive calls can only be more stringent).
\end{proof}

\begin{remark} \label{rem:equivzeroone}
In both Proposition~\ref{prop:adjtolap} and Proposition~\ref{prop:laptoadj}, if the input to the $\KLapE$ (resp. $\KAdjE$) problem is a $\{0,1\}$ vector, then we only apply the given $\KAdjE$ ($\KLapE$) algorithm on $\{0,1\}$ vectors. Hence, the two problems are equivalent even in the special case where the input vector $y$ must be a $\{0,1\}$ vector.
\end{remark}

\subsection{Approximate Degree} \label{subsec:approxdegree}

We will see in this section that the key property of a function $f : \R \to \R$ for determining whether $f$ $\KAdjE$ is easy or hard is how well it can be approximated by a low-degree polynomial.

\begin{definition} \label{def:closetopoly}
For a positive integer $k$ and a positive real number $\eps>0$, we say a function $f : [0,1] \to \R$ is \emph{$\eps$-close to a polynomial of degree $k$} if there is a polynomial $p : [0,1] \to \R$ of degree at most $k$ such that, for every $x \in [0,1]$, we have $|f(x) - p(x)| \leq \eps$.
\end{definition}

The Stone-Weierstrass theorem says that, for any $\eps>0$, and any continuous function $f$ which is bounded on $[0,1]$, there is a positive integer $k$ such that $f$ is $\eps$-close to a polynomial of degree $k$. That said, $k$ can be quite large for some natural and important continuous functions $f$. For some examples:

\begin{example}
For the function $f(x) = 1/(1+x)$, we have $f(x) = \sum_{\ell = 0}^\infty (-1)^\ell x^\ell $ for all $x \in [0,1)$. Truncating this series to degree $O(\log(1/\eps) )$ gives a $\eps$ approximation on the interval $[0,1/2]$. The following proposition shows that this is optimal up to constant factors.
\end{example}

\begin{proposition}
Any polynomial $p(x)$ such that $|p(x) - 1/(1+x)| \leq \eps$ for all $x \in [0,1/2]$ has degree at least $\Omega(\log(1/\eps))$.
\end{proposition}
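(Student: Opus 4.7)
The plan is to reduce the claim to a classical extremal property of Chebyshev polynomials. Suppose $p(x)$ is a polynomial of degree $k$ with $|p(x) - 1/(1+x)| \leq \eps$ for all $x \in [0,1/2]$. Form the auxiliary polynomial $q(x) := (1+x)p(x) - 1$, which has degree $k+1$. On $[0,1/2]$ we have $|q(x)| = (1+x) \cdot |p(x) - 1/(1+x)| \leq \tfrac{3}{2}\eps$, while at $x = -1$ the factor $(1+x)$ vanishes and so $q(-1) = -1$. Thus $q$ is tiny on the interval $[0,1/2]$ but has magnitude $1$ at the point $-1$, which lies outside the interval.

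Next I would invoke the standard extremal bound: among all polynomials of degree $n$ bounded by $M$ in absolute value on an interval $I$, the largest possible value at an external point is $M \cdot T_n(\ell(x))$, where $\ell$ is the affine map sending $I$ onto $[-1,1]$ and $T_n$ is the Chebyshev polynomial of the first kind. For $I = [0,1/2]$ the map is $\ell(x) = 4x-1$, which sends $x=-1$ to $-5$. Therefore
\[
1 = |q(-1)| \;\leq\; \tfrac{3}{2}\eps \cdot T_{k+1}(-5) \;=\; \tfrac{3}{2}\eps \cdot T_{k+1}(5),
\]
using that $T_n$ is even or odd so $|T_n(-5)|=T_n(5)$.

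Finally, I would estimate $T_{k+1}(5)$. For $|y|>1$, $T_n(y) = \cosh(n \, \mathrm{arccosh}(y))$, and in particular $T_n(5) \leq (5 + \sqrt{24})^n \leq 10^n$. Combining with the previous inequality gives $10^{k+1} \geq 2/(3\eps)$, hence $k \geq \log_{10}(2/(3\eps)) - 1 = \Omega(\log(1/\eps))$, as claimed.

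The main obstacle is simply to cite (or briefly justify) the Chebyshev extremal bound on external growth; everything else is a short algebraic manipulation. One could instead prove this from scratch via a direct argument: normalize $q$ so that $\|q\|_{L^\infty[0,1/2]} \leq 1$, substitute $x = (u+1)/4$ to pull back to $[-1,1]$, and apply the classical statement that $T_n$ uniquely maximizes $|r(y)|$ at any $|y|>1$ among degree-$n$ polynomials with supremum norm $1$ on $[-1,1]$. Either way, the degree bound follows from the exponential growth of Chebyshev polynomials outside their interval of oscillation.
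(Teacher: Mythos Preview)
Your proof is correct and follows essentially the same approach as the paper: the paper defines $q(x) = 1 - x\cdot p(x-1)$ (which is just your $q$ after the shift $x \mapsto x-1$ and a sign change), observes that $|q|$ is at most $O(\eps)$ on $[1,3/2]$ while $q(0)=1$, and then invokes the same Chebyshev extremal bound (citing \cite{sachdeva2014faster}) to force degree $\Omega(\log(1/\eps))$. Your version is slightly more detailed in carrying out the Chebyshev computation explicitly, but the argument is the same.
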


\begin{proof}
For such a polynomial $p(x)$, define $q(x) := 1 - x \cdot p(x-1)$. Thus, the polynomial $q$ has the two properties that $|q(x)| <  \eps$ for all $x \in [1,3/2]$, and $q(0)=1$. By standard properties of the Chebyshev polynomials (see e.g.~\cite[Proposition~2.4]{sachdeva2014faster}), the polynomial $q$ with those two properties of minimum degree is an appropriately scaled and shifted Chebyshev polynomial, which requires degree $\Omega(\log(1/\eps))$.
\end{proof}

\begin{example}
For the function $f(x) = e^{-x}$, we have $f(x) = \sum_{\ell = 0}^\infty (-1)^\ell x^\ell / \ell!$ for all $x \in \R_+$. Truncating this series to degree $O(\log(1/\eps) / \log\log(1/\eps))$ gives a $\eps$ approximation on any interval $[0,a]$ for constant $a>0$. Such a dependence is believed to be optimal, and is known to be optimal if we must approximate $f(x)$ on the slightly larger interval $[0,\log^2 (1/\eps) / \log^2 \log(1/\eps)]$~\cite[Section~5]{sachdeva2014faster}.
\end{example}

In both of the above settings, for error $\eps = n^{- \Omega(\log^4 n)}$, the function $f$ is only $\eps$-close to a polynomial of degree $\omega(\log n)$. We will see in Theorem~\ref{thm:hardnessapprox} below that this implies that, for each of these functions $f$, the $\eps$-approximate $f$ $\KAdjE$ problem in dimension $d = \Omega(\log n)$ requires time $n^{2 - o(1)}$ assuming $\SETH$.

\subsection{`Kernel Method' Algorithms}

\begin{lemma} \label{lem:low-rank-mmult}
For any integer $q \geq 0$, let $\k(u,v) = (\|u-v\|_2^2)^q$. The $\KAdjE$ problem (Problem~\ref{pro:KAdjE}) can be solved exactly (with $0$ error) in time $\tilde{O}( n \cdot \binom{2d+2q-1}{2q} )$.
\end{lemma}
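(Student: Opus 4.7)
The plan is to exploit the fact that $\k(u,v)=(\|u-v\|_2^2)^q$ is a homogeneous polynomial of total degree exactly $2q$ in the $2d$ variables $u_1,\dots,u_d,v_1,\dots,v_d$, and thereby exhibit an exact low-rank factorization of $A_{\k,P}$ of rank at most $M := \binom{2d+2q-1}{2q}$ (the number of degree-$2q$ monomials in $2d$ variables). This will reduce the matrix-vector product to two multiplications by $n \times M$ matrices.

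First I would expand $\|u-v\|_2^2 = \sum_{i=1}^{d}(u_i-v_i)^2$ and raise to the $q$-th power (e.g.\ via the multinomial theorem). Each resulting monomial in the combined variables has total degree $2q$, and hence factors uniquely as a product of a monomial in $u$ alone and a monomial in $v$ alone. Collecting terms, I get coefficients $\{c_\mu\}$ indexed by the set $\mc{M}$ of monomials of total degree $2q$ in the $2d$ variables, together with factorizations $\mu(u,v) = \phi_\mu(u)\psi_\mu(v)$, giving
\[
\k(u,v) \;=\; \sum_{\mu \in \mc{M}} c_\mu \, \phi_\mu(u)\, \psi_\mu(v), \qquad |\mc{M}| = M.
\]
I then define $\Phi, \Psi \in \R^{n \times M}$ by $\Phi_{i,\mu} = c_\mu \phi_\mu(x_i)$ and $\Psi_{i,\mu} = \psi_\mu(x_i)$, so that $(\Phi\Psi^\top)_{i,j} = \k(x_i,x_j)$ holds for \emph{all} pairs $i,j$. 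Since $\k(x_i,x_i)=0$ whenever $q \geq 1$, we have $A_{\k,P}= \Phi\Psi^\top$ in that case (and $A_{\k,P} = \Phi\Psi^\top - I$ in the trivial case $q=0$).

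To build $\Phi$ and $\Psi$ efficiently, I would first precompute all powers $(x_i)_k^{\,j}$ for $i\in[n]$, $k\in[d]$, $j\in\{0,1,\dots,2q\}$ in $O(ndq)$ time. After this preprocessing, evaluating any single monomial of degree $\le 2q$ at any $x_i$ costs only $O(q)$ cached multiplications, so the full matrices $\Phi,\Psi$ are assembled in $\tilde O(nM)$ time; the multinomial coefficients $c_\mu$ can be enumerated in the same budget by a standard compositions-of-$2q$-into-$2d$-parts traversal.

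Finally, given the query vector $y$, I compute $z := \Psi^\top y$ in $O(nM)$ operations and then $\Phi z$ in another $O(nM)$ operations, subtracting $Dy$ (where $D = \mathrm{diag}(\k(x_i,x_i))$) in $O(n)$ time to correct the diagonal. Because the factorization is algebraically exact, the output equals $A_{\k,P} y$ with zero error, and the total runtime is $\tilde O(nM) = \tilde O\bigl(n\binom{2d+2q-1}{2q}\bigr)$. There is no genuine obstacle: the only points requiring mild care are the bookkeeping of multinomial coefficients and ensuring that monomials are enumerated once each, both of which are routine combinatorial tasks subsumed by the $\tilde O$.
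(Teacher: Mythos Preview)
Your proposal is correct and essentially identical to the paper's proof: both expand $\k(u,v)$ as a homogeneous polynomial of degree $2q$ in the $2d$ variables, factor each monomial as a $u$-part times a $v$-part to obtain a rank-$\binom{2d+2q-1}{2q}$ decomposition, and then multiply by the two thin factors successively. Your explicit diagonal correction for $q=0$ is a minor detail the paper glosses over (there $M_uM_v$ is the all-ones matrix rather than $A_{\k,P}$), but otherwise the arguments match.
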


\begin{proof}
The function 
\begin{align*}
\k(u,v) = \left( \sum_{i=1}^d ( u_i - v_i )^2 \right)^q
\end{align*}
is a homogeneous polynomial of degree $2q$ in the variables $u_1, \ldots, u_d, v_1, \ldots, v_d$. Let 
\begin{align*}
V = \{u_1, \ldots, u_d, v_1, \ldots, v_d\},
\end{align*}
and let $T$ be the set of functions $t : V \to \{0,1,2,\ldots\, 2q\}$  such that $\sum_{v \in V} t(v) = 2q$. 

We can count that 
\begin{align*}
|T| = \binom{2d+2q-1}{2q}.
\end{align*}
Hence, there are coefficients $c_t \in \R$ for each $t \in T$ such that
\begin{align} \label{eqn:polyexpansion}\k(u,v) = \sum_{t \in T} c_t \cdot \prod_{v \in V} v^{t(v)}.\end{align}
Let $V_u = \{u_1, \ldots, u_d\}$ and $V_v = V \setminus V_u$. Define $\phi_u : \R^d \to \R^{|T|}$ by, for $t \in T$, 
\begin{align*}
\phi_u(u_1, \ldots, u_d)_t = c_t \cdot \prod_{u_i \in V_u} {u_i}^{t(u_i)}.
\end{align*}
Similarly define $\phi_v : \R^d \to \R^{|T|}$ by, for $t \in T$, 
\begin{align*}
\phi_v(v_1, \ldots, v_d)_t = \prod_{v_i \in V_v} {v_i}^{t(v_i)}.
\end{align*}
It follows from (\ref{eqn:polyexpansion}) that, for all $u,v \in \R^d$, we have $\k(u,v) = \langle \phi_u(u), \phi_v(v) \rangle$.

Our algorithm thus constructs the matrix $M_u \in \R^{n \times |T|}$ whose rows are the vectors $\phi_u(x_i)$ for $i \in [n]$, and the matrix $M_v \in \R^{|T| \times n}$ whose columns are the vectors $\phi_v(x_i)$ for $i \in [n]$. Then, on input $y \in \R^n$, it computes $y' := M_v \cdot y \in \R^{|T|}$ in $\tilde{O}(n \cdot |T|)$ time, then $z := M_u \cdot y' \in \R^n$, again in $\tilde{O}(n \cdot |T|)$ time, and outputs $z$. Since $M_u \cdot M_v = A_{\k,\{x_1, \ldots, x_n\}}$, it follows that the vector we output is the desired $z = A_{\k,\{x_1, \ldots, x_n\}} \cdot y$.
\end{proof}

\begin{remark}
The running time in Lemma~\ref{lem:low-rank-mmult} can be improved to $\tilde{O}( n \cdot \binom{d+q-1}{q} )$ with more careful work, by noting that each monomial has either `$x$-degree' or `$y$-degree' at most $d$, but we omit this here since the difference is negligible for our parameters of interest.
\end{remark}

\begin{corollary} \label{cor:exactmonomial}
Let $q,d$ be positive integers which may be functions of $n$, such that $\binom{2(d+q)}{2q} < n^{o(1)}$. For example:
\begin{itemize}
    \item when $d = o(\log n / \log \log n)$ and $q \leq \poly (\log n)$, or
    \item when $d = o(\log n)$ and $q \leq O(\log n)$, or
    \item when $d = \Theta(\log n)$ and $q < o(\log n)$.
\end{itemize}
If $f : \R \to \R$ is a polynomial of degree at most $q$, and we define $\k(u,v) := f(\|u-v\|_2^2)$, then the $\KAdjE$ problem in dimension $d$ can be solved exactly in $n^{1 + o(1)}$ time.
\end{corollary}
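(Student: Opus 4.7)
The plan is to decompose $f$ into its monomial components and apply Lemma~\ref{lem:low-rank-mmult} to each separately, then sum the resulting vectors. Since $f$ has degree at most $q$, we may write $f(x) = \sum_{j=0}^{q} a_j x^j$ for some coefficients $a_j \in \R$, so that
\begin{align*}
\k(u,v) = f(\|u-v\|_2^2) = \sum_{j=0}^{q} a_j \cdot (\|u-v\|_2^2)^j,
\end{align*}
and correspondingly $A_{\k,P} = \sum_{j=0}^{q} a_j \cdot A_{\k_j, P}$, where $\k_j(u,v) := (\|u-v\|_2^2)^j$. (The diagonal is $0$ for each $A_{\k_j, P}$ and for $A_{\k, P}$, so no constant offset is needed.)

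Given the input vector $y \in \R^n$, the algorithm computes $z_j := A_{\k_j, P} \cdot y$ for each $j \in \{0,1,\ldots,q\}$ using the procedure of Lemma~\ref{lem:low-rank-mmult}, and then returns $\sum_{j=0}^{q} a_j \cdot z_j$. Each call runs exactly (i.e.\ with zero error), in time $\tilde{O}\bigl(n \cdot \binom{2d+2j-1}{2j}\bigr)$, and the final summation takes $O(qn)$ time. Monotonicity of the binomial coefficient in $j$ gives
\begin{align*}
\sum_{j=0}^{q} \tilde{O}\!\left( n \cdot \binom{2d+2j-1}{2j} \right) \;\leq\; \tilde{O}\!\left( n (q+1) \binom{2(d+q)}{2q} \right).
\end{align*}

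It remains to verify that $(q+1)\binom{2(d+q)}{2q} \leq n^{o(1)}$ in each of the three parameter regimes listed, given the hypothesis $\binom{2(d+q)}{2q} < n^{o(1)}$. The $(q+1)$ factor is absorbed because $q \leq \polylog(n)$ in all three regimes (and in general $q \leq \binom{2(d+q)}{2q}$ for $q \geq 1$, so the factor is also absorbed into $n^{o(1)}$ directly). For the bullet points themselves: when $d = o(\log n / \log\log n)$ and $q \leq \poly(\log n)$, we have $\binom{2(d+q)}{2q} \leq (2(d+q))^{2q} = 2^{O(q \log(d+q))} = 2^{O(\polylog(n) \cdot \log\log n)} = n^{o(1)}$ provided $q$ is a sufficiently small polylogarithm (the statement is understood with this implicit calibration, as is standard); when $d = o(\log n)$ and $q \leq O(\log n)$, one uses $\binom{2(d+q)}{2q} \leq 2^{2(d+q)} = 2^{o(\log n) + O(\log n)}$, which gives $n^{o(1)}$ when $q = o(\log n)$; and when $d = \Theta(\log n)$ and $q < o(\log n)$, we get $\binom{2(d+q)}{2q} \leq (2e(d+q)/(2q))^{2q} \leq 2^{O(q \log(d/q))} = n^{o(1)}$. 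In each regime the overall runtime is $n \cdot n^{o(1)} = n^{1+o(1)}$, as claimed. There is no real obstacle here beyond careful bookkeeping; the substance is entirely contained in Lemma~\ref{lem:low-rank-mmult}, and this corollary is a direct linearity-of-expansion consequence.
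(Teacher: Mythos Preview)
Your approach—decompose $f$ into monomials, apply Lemma~\ref{lem:low-rank-mmult} to each, and sum—is exactly the paper's proof, and your main running-time bound $\tilde{O}\bigl(n(q{+}1)\binom{2(d+q)}{2q}\bigr)$ together with the observation $q+1 \le O\bigl(\binom{2(d+q)}{2q}\bigr)$ correctly yields $n^{1+o(1)}$ under the hypothesis $\binom{2(d+q)}{2q} < n^{o(1)}$.

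However, your verifications of the example bullets contain slips. For the first bullet you bound $\binom{2(d+q)}{2q}$ by $(2(d+q))^{2q}$, which is far too loose when $q$ is the larger parameter; you should instead use the symmetry $\binom{2(d+q)}{2q}=\binom{2(d+q)}{2d}\le (2(d+q))^{2d}=2^{O(d\log(d+q))}$, which with $d=o(\log n/\log\log n)$ and $q\le\poly(\log n)$ gives $2^{o(\log n)}=n^{o(1)}$ without any extra calibration of $q$. For the second bullet, your bound $\binom{2(d+q)}{2q}\le 2^{2(d+q)}$ only yields $n^{O(1)}$ when $q=\Theta(\log n)$, so your conclusion ``gives $n^{o(1)}$ when $q=o(\log n)$'' silently weakens the stated hypothesis $q\le O(\log n)$. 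The paper handles this case by writing $d=\tfrac{1}{a(n)}\log n$ with $a(n)=\omega(1)$ and $q=b(n)\log n$ with $b(n)=O(1)$, then bounding $\binom{2(d+q)}{2d}\le O(q/d)^{O(d)}=2^{O(d\log(q/d))}=2^{O((\log a)/a)\cdot\log n}=n^{o(1)}$; the key point is that the exponent picks up the small parameter $d$, not $d+q$. Your third bullet is fine.
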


\begin{proof}
This follows by applying Lemma~\ref{lem:low-rank-mmult} separately to each monomial of $f$, and summing the results.

When $d = o(\log n)$ and $q \leq O(\log n)$, then we can write $d = \frac{1}{a(n)} \log n$ for some $a(n) = \omega(1)$, and $q = b(n) \cdot \log n$ for some $b(n) = O(1)$. It follows that \begin{align*}
\binom{2(d+q)}{2q} 
= & ~ \binom{2(d+q)}{2d} \\
= & ~ \binom{O(q)}{O(d)} & \text{~by~} d = O(q) \\
\leq & ~ O(q/d)^{O(d)} \\
= & ~ 2^{O(d \log(q/d))} \\
= & ~ 2^{O(\frac{\log(ab)}{a}) \cdot \log n} & \text{~by~} d = \frac{\log n}{a}, q = b \log n \\
\leq & ~ 2^{O(\frac{\log(a)}{a}) \cdot \log n} & \text{~by~} b = O(1) \\
< & ~ n^{o(1)}. & \text{~by~} a = \omega(1)
\end{align*} The other cases are similar.
\end{proof}

\begin{corollary} \label{cor:kernelalg}
Suppose $f : \R \to \R$ is $\eps/n$-close to a polynomial of degree $q$ (Definition~\ref{def:closetopoly}), where $q,d$ are positive integers such that $\binom{2(d+q)}{2q} < n^{o(1)}$ (such as the parameter setting examples in Corollary~\ref{cor:exactmonomial}), and define $\k(u,v) := f(\|u-v\|_2^2)$. Then, the $\KAdjE$ problem in dimension $d$ can be solved $\eps$-approximately in $n^{1 + o(1)}$ time. \Josh{fix this to include that one can compute the polynomial efficiently}
\end{corollary}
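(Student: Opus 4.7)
The plan is an immediate reduction to Corollary~\ref{cor:exactmonomial}. By hypothesis there exists a polynomial $p$ of degree at most $q$ with $|f(z) - p(z)| \leq \eps/n$ for all $z$ in the range of values taken by $\|x_i - x_j\|_2^2$ (after an initial affine rescaling of the input points $P$ into the unit interval, which costs $O(nd)$ time and only shifts the polynomial $p$ accordingly). Define the auxiliary kernel $\k_p(u,v) := p(\|u-v\|_2^2)$ and the associated matrix $A_p := A_{\k_p, P}$. I would invoke Corollary~\ref{cor:exactmonomial} on $p$ to produce, exactly and in $n^{1+o(1)}$ time, the vector $b := A_p \cdot y$, and return this $b$ as the approximation to $A_{\k,P}\cdot y$.

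For correctness, every off-diagonal entry of $A_{\k,P} - A_p$ has absolute value at most $\eps/n$ and the diagonals of both matrices vanish, so for each coordinate $i \in [n]$,
\[
|(A_{\k,P} y - b)_i| \;=\; \Bigl| \sum_{j \neq i} (f - p)(\|x_i - x_j\|_2^2)\, y_j \Bigr| \;\leq\; n \cdot \tfrac{\eps}{n} \cdot \|y\|_\infty \;=\; \eps \cdot \|y\|_\infty,
\]
which matches the $\eps$-approximate $\KAdjE$ error tolerance $\eps \cdot w_{\max}\cdot \|y\|_\infty$ whenever $w_{\max} \geq 1$; the remaining case is handled by simply rescaling $f$ by $1/w_{\max}$ before constructing $p$, which only changes the approximation parameter by a known scalar and does not affect the running time. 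The running time is dominated by the one call to Corollary~\ref{cor:exactmonomial}, hence $n^{1+o(1)}$, as claimed.

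The only nonroutine point --- and the one flagged by the inline note \texttt{\textbackslash Josh\{fix this to include that one can compute the polynomial efficiently\}} --- is ensuring that the coefficients of $p$ are actually available to the algorithm within the claimed time budget. For the specific kernels of interest in the paper (Gaussian, inverse polynomial, rational, exponential, and so on), $p$ can be produced by truncating a Taylor series or a Chebyshev expansion in $\poly(q, d, \log(n/\eps)) = n^{o(1)}$ time, using standard quantitative versions of the approximation results cited in Section~\ref{subsec:approxdegree}. For the statement to hold for an arbitrary $f$ satisfying the hypothesis, I would state the corollary in the natural oracle form, assuming access to the coefficients of $p$ in time $\poly(q)$; then the reduction above is unconditional. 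I expect this bookkeeping to be the only obstacle; the linear-algebraic content is a one-line triangle-inequality argument on top of Corollary~\ref{cor:exactmonomial}.
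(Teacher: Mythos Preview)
Your proposal is correct and follows exactly the paper's approach: the paper's entire proof is the single line ``Apply Corollary~\ref{cor:exactmonomial} for the degree $q$ approximation of $f$.'' You have simply unpacked the implicit error bound via the triangle inequality and addressed the rescaling and polynomial-computation caveats that the paper leaves to the reader (and to the \texttt{\textbackslash Josh} note).
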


\begin{proof}
Apply Corollary~\ref{cor:exactmonomial} for the degree $q$ approximation of $f$.
\end{proof}

\subsection{Lower Bound in High Dimensions} \label{sec:lbhighdim}

We now prove that in the high dimensional setting, where $d = \Theta(\log n)$, the algorithm from Corollary~\ref{cor:kernelalg} is essentially tight. In that algorithm, we showed that (recalling Definition~\ref{def:closetopoly}) functions $f$ which are $\eps$-close to a polynomial of degree $o(\log n)$ have efficient algorithms; here we show a lower bound if $f$ is not $\eps$-close to a polynomial of degree $O(\log n)$.

\begin{theorem} \label{thm:hardnessapprox}
Let $f : [0,1] \to \R$ be an analytic function on $[0,1]$ and let $\kappa: \N \to [0,1]$ be a nonincreasing function. Suppose that, for infinitely many positive integers $k$, $f$ is not $\kappa(k)$-close to a polynomial of degree $k$.

Then, assuming $\SETH$, the $\KAdjE$ problem for $\k(x,y) = f(\|x-y\|_2^2)$ in dimension $d$ and error $(\kappa(d+1))^{O(d^4)}$ on $n=1.01^d$ points requires time $n^{2 - o(1)}$.
\end{theorem}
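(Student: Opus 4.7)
My plan is to reduce from exact bichromatic Hamming closest pair, which requires $n^{2-o(1)}$ time in dimension $\Theta(\log n)$ under $\SETH$ (Theorem \ref{thm:r18}), to the stated $\KAdjE$ problem. Given input sets $A, B \subseteq \{0,1\}^{d_0}$ with $d_0 = \Theta(d)$, let $c_\ell$ be the number of cross-pairs in $A \times B$ with squared Hamming distance $\ell$, for $\ell \in \{0, 1, \ldots, d_0\}$. The closest-pair decision problem asks for the smallest $\ell$ with $c_\ell > 0$, so it suffices to recover the vector $c = (c_0, \ldots, c_{d_0}) \in \Z_{\geq 0}^{d_0+1}$ to additive error below $1/2$.

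First I would show that for any reals $a, b \geq 0$, a single $\KAdjE$ call computes an approximation of $S(a,b) := \sum_{\ell=0}^{d_0} c_\ell \cdot f(a\ell + b)$. Embed $A, B$ into $\R^{d_0+1}$ via $A' = \{(\sqrt{a}\,x, 0) : x \in A\}$ and $B' = \{(\sqrt{a}\,y, \sqrt{b}) : y \in B\}$, so that cross-distances satisfy $\|a' - b'\|_2^2 = a\ell + b$; then extract the bichromatic bilinear form $\vec{1}_A^\top A_{\k, A'\cup B'} \vec{1}_B$ with one $\KAdjE$ call (using Proposition \ref{prop:laptoadj} and Remark \ref{rem:equivzeroone}), with the intra-$A$ and intra-$B$ contributions removed by analogous calls. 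Doing this for $d_0+1$ carefully chosen pairs $(a_i, b_i)$ yields an approximate system $\tilde{v} = Mc + \mathrm{err}$ with $M \in \R^{(d_0+1)\times(d_0+1)}$ and $M_{i\ell} = f(a_i \ell + b_i)$. Recovering $c$ by Cramer's rule then resolves the decision problem, provided $M$ is well-conditioned enough to swallow the per-call $\KAdjE$ error.

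To analyze $\det(M)$, use analyticity of $f$ near a fixed expansion point in $[0,1]$ to write $f(a\ell + b) = \sum_{m \geq 0} \ell^m \cdot \phi_m(a, b)$ with $\phi_m(a, b) := a^m f^{(m)}(b)/m!$. This factors $M = \Phi V$ where $V_{m\ell} = \ell^m$ is an infinite-rows-by-$(d_0{+}1)$-columns Vandermonde matrix, and Cauchy-Binet gives
\begin{align*}
\det(M) \;=\; \sum_{\substack{S \subseteq \Z_{\geq 0} \\ |S| = d_0+1}} \det\!\bigl(\Phi[\,\cdot\,, S]\bigr) \cdot \det\!\bigl(V[S,\,\cdot\,]\bigr),
\end{align*}
where each Vandermonde minor $\det(V[S,\cdot])$ is an explicit nonzero integer determinant of $(\ell^m)_{m \in S,\,\ell \in [d_0+1]}$.

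The principal obstacle is to convert the hypothesis --- that $f$ is not $\kappa(k)$-approximable by any degree-$k$ polynomial for infinitely many $k$ --- into a lower bound $|\det(M)| \geq (\kappa(d_0+1))^{O(d_0^3)}$ for some good choice of $(a_i, b_i)$. The intuition is that if no choice of evaluation points achieved such a bound, then the low-order columns of $\Phi$ (those with $m \leq d_0$) would span essentially the full column space, yielding, via Hermite-style interpolation through the derivatives $f^{(m)}(b_i)$, a degree-$d_0$ polynomial that $\kappa(d_0+1)$-approximates $f$, a contradiction. Concretely, I would pick the $(a_i, b_i)$ on a geometrically spaced grid in a subinterval of analyticity so that one Cauchy-Binet term with $\max(S) \geq d_0+1$ is forced to dominate. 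Once $|\det(M)|$ is bounded below, Cramer's rule amplifies per-call $\KAdjE$ error by a factor at most $(\max_{ij}|M_{ij}|)^{d_0}/|\det(M)|$; combined with the need to resolve integer entries $c_\ell \leq n^2 = 1.01^{2d}$, this yields the claimed $\KAdjE$ accuracy requirement of $(\kappa(d+1))^{O(d^4)}$. Since the reduction makes only $O(d)$ $\KAdjE$ calls on $\Theta(n)$ points in dimension $\Theta(d)$, a truly subquadratic $\KAdjE$ algorithm would give one for bichromatic Hamming closest pair, contradicting $\SETH$.
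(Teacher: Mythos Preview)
Your high-level reduction matches the paper's exactly: reduce from Hamming closest pair, recover the count vector $c$ by setting up a linear system $Mc$ via $O(d)$ calls to $\KAdjE$ (each estimating $\sum_\ell c_\ell f(a_i\ell+b_i)$ through an affine embedding), expand $\det(M)$ by Cauchy--Binet against a Vandermonde factor, and control error amplification by Cramer's rule. The embedding, the bichromatic extraction, and the final parameter accounting are all essentially the same as in the paper.

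The genuine gap is the determinant lower bound, which is the technical heart. Your contradiction sketch (``if $\det(M)$ were small for every choice of evaluation points, the low-order columns of $\Phi$ would dominate, so Hermite interpolation yields a degree-$d_0$ approximant'') is only an intuition, and it is not clear how to make it rigorous: smallness of $\det(M)$ does not localize to the low-$m$ columns of $\Phi$, since the Cauchy--Binet sum could be small through cancellation among many terms. Your stated goal of making a term with $\max(S)\geq d_0+1$ dominate is also the wrong target. The paper instead proceeds constructively. First, non-approximability by degree $k$ forces $|f^{(k+1)}(x_k)|>\kappa(k)$ at some point (Lemma~\ref{lem:step1}, via the mean-value theorem applied iteratively to $f$ minus its degree-$k$ Taylor polynomial at $0$); this is then propagated to a short interval using the analyticity bound on $|f^{(k+2)}|$ (Lemma~\ref{lem:step2}); then a downward induction shows that on a yet smaller subinterval $[c,d]$, \emph{every} derivative $f^{(0)},\ldots,f^{(k+1)}$ is simultaneously bounded below by roughly $\rho(k)=[\kappa(k)^2/(64Bk4^{2k+1}k!)]^{k+2}$ (Lemma~\ref{lem:step3}). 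With all $b_i$ set equal to $c$ and $a_i\propto i\cdot\delta$ for a sufficiently tiny $\delta$, the $S=\{0,\ldots,k-1\}$ term in Cauchy--Binet --- which factors as a Vandermonde determinant times $\prod_{\ell<k} f^{(\ell)}(c)\delta^\ell/\ell!$ --- provably dominates the tail (Lemma~\ref{lem:step4}). The uniform lower bounds on all lower-order derivatives from Lemma~\ref{lem:step3} are precisely what make this product quantitatively large; your proposal has no substitute for this chain.
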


This theorem will be a corollary of another result, which is simpler to use in proving lower bounds:

\begin{theorem} \label{thm:hardnessapprox-easy}
Let $f : [0,1] \to \R$ be an analytic function on $[0,1]$ and let $\kappa: \N \to [0,1]$ be a nonincreasing function. Suppose that, for infinitely many positive integers $k$, there exists an $x_k\in [0,1]$ for which $|f^{(k+1)}(x_k)| > \kappa(k)$.

Then, assuming $\SETH$, the $\KAdjE$ problem for $\k(x,y) = f(\|x-y\|_2^2)$ in dimension $d$ and error $(\kappa(d+1))^{O(d^4)}$ on $n=1.01^d$ points requires time $n^{2 - o(1)}$.
\end{theorem}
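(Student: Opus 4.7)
The plan is to reduce from the bichromatic Hamming closest pair problem, which by Theorem~\ref{thm:r18} requires $n^{2-o(1)}$ time under $\SETH$ in dimension $d = \Theta(\log n)$. Given two sets $A, B \subseteq \{0,1\}^d$ with $|A| + |B| = n$, define the histogram $c_\ell := |\{(a,b) \in A \times B : \|a-b\|_0 = \ell\}|$ for $\ell \in \{0, 1, \ldots, d\}$; it suffices to recover the integer vector $c = (c_0, \ldots, c_d)$, since the smallest $\ell$ with $c_\ell > 0$ solves closest-pair. For any reals $\alpha, \beta \ge 0$, append $O(1)$ color-specific coordinates (appropriately scaled and offset) to build embeddings $\phi_A, \phi_B : \{0,1\}^d \to \R^{d+O(1)}$ with $\|\phi_A(a) - \phi_B(b)\|_2^2 = \alpha \cdot \|a-b\|_0 + \beta$ for every $a \in A, b \in B$. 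Running the alleged $\KAdjE$ solver on $\phi_A(A) \cup \phi_B(B)$ with input vector $\mathbf{1}_A$, then taking the inner product of the output with $\mathbf{1}_B$, yields an approximation to $\sum_{a,b} f(\alpha\|a-b\|_0 + \beta) = \sum_{\ell=0}^d c_\ell \, f(\alpha\ell + \beta)$.

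Pick $m := d+1$ parameter pairs $(\alpha_i, \beta_i)_{i=1}^m$ and invoke the solver $m$ times, obtaining $\tilde g \in \R^m$ satisfying $\|\tilde g - Mc\|_\infty \le n\eps \cdot \max_{z \in [0,1]}|f(z)|$, where $M_{i\ell} := f(\alpha_i\ell + \beta_i)$. If $M$ is invertible with $\|M^{-1}\|_\infty \le (2n^2\eps\max|f|)^{-1}$, then $\|M^{-1}(\tilde g - Mc)\|_\infty < 1/2$, and since $c \in \Z_{\ge 0}^m$, it is recovered exactly by rounding. The total cost is $O(d)$ calls to the $\KAdjE$ solver plus $O(nd)$ post-processing, so an $n^{2-\Omega(1)}$-time $\KAdjE$ solver would contradict Theorem~\ref{thm:r18}. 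It therefore suffices to exhibit parameters $(\alpha_i,\beta_i)$ making $\|M^{-1}\|_\infty \le \kappa(d+1)^{-O(d^4)}$, which matches the error tolerance $\eps = \kappa(d+1)^{O(d^4)}$ in the statement.

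The construction uses the hypothesis that $|f^{(d+1)}(x^*)| > \kappa(d)$ for some $x^* \in [0,1]$. Choose a small scale $h > 0$, distinct scalars $\eta_1, \ldots, \eta_m \in [1,2]$, and a shift $\sigma \in \R$ with $\ell - \sigma \ne 0$ for all $\ell \in \{0,\ldots,d\}$; set $\alpha_i := h\eta_i$ and $\beta_i := x^* - \sigma h\eta_i$, so that $\alpha_i\ell + \beta_i = x^* + h\eta_i(\ell - \sigma)$ lies within $O(h)$ of $x^*$. Taylor-expanding $f$ around $x^*$ yields
\begin{align*}
M_{i\ell} = \sum_{j=0}^{\infty} \frac{f^{(j)}(x^*)}{j!}\,(h\eta_i)^j(\ell - \sigma)^j,
\end{align*}
and hence the factorization $M = U D W$ with $U_{ij} = (h\eta_i)^j$, $D_{jj} = f^{(j)}(x^*)/j!$, $W_{j\ell} = (\ell - \sigma)^j$. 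Applying the Cauchy--Binet formula,
\begin{align*}
\det M = \sum_{S \in \binom{\Z_{\ge 0}}{m}} \det(U_{:,S}) \cdot \Bigl(\prod_{j \in S} \tfrac{f^{(j)}(x^*)}{j!}\Bigr) \cdot \det(W_{S,:}),
\end{align*}
where each $\det(U_{:,S})$ equals $(\prod_{j \in S} h^j)$ times a Vandermonde in the $\eta_i$, and each $\det(W_{S,:})$ is a Vandermonde in $(\ell - \sigma)$. For small $h$, only the two lowest-weight subsets $S_0 = \{0, \ldots, d\}$ and $S_1 = \{0,\ldots,d-1,d+1\}$ contribute non-negligibly, with leading $h$-weights differing by one. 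Running two batches of queries at slightly different scales $h$ and $h'$ and taking an appropriate linear combination of the results cancels the $S_0$ contribution (which is independent of $f^{(d+1)}$), leaving an effective determinant proportional to $f^{(d+1)}(x^*) \cdot h^{\Theta(d^2)}$ times nonvanishing Vandermondes in $\eta_i$ and $(\ell - \sigma)$. This lower-bounds $|\det M|$ by $\kappa(d)^{\Theta(d)}\,h^{\Theta(d^2)}$, and combined with the crude entry bound $|M_{i\ell}| \le \max_z |f(z)|$ and Cramer's rule gives the required bound on $\|M^{-1}\|_\infty$.

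The main obstacle is this last determinant lower bound: because the leading Cauchy--Binet term $S_0$ does not involve $f^{(d+1)}$, it naively dominates and hides the signal we wish to detect. Overcoming this requires either a symmetry-based choice of $\eta_i$ and $\sigma$ that makes the $S_0$-Vandermonde vanish on a carefully constructed linear combination, or the two-batch cancellation scheme described above; either way, the resulting $h^{\Theta(d^2)}$ prefactor must be tracked through the error analysis, forcing the tolerance $\eps = \kappa(d+1)^{O(d^4)}$ appearing in the theorem statement.
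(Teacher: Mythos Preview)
Your overall reduction framework---reduce from bichromatic Hamming closest pair, use $\KAdjE$ calls to evaluate $\sum_\ell c_\ell f(\alpha\ell+\beta)$ for several affine maps, invert the resulting linear system via Cramer's rule, and lower-bound $|\det M|$ by Cauchy--Binet---matches the paper. The gap is in the determinant lower bound, and it is not repaired by your two-batch cancellation idea.

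You correctly observe that the leading Cauchy--Binet term $S_0=\{0,\dots,d\}$ does not involve $f^{(d+1)}(x^*)$. But the next term $S_1=\{0,\dots,d-1,d+1\}$ contributes
\[
\det(U_{:,S_1})\cdot\det(W_{S_1,:})\cdot\prod_{j\in S_1}\frac{f^{(j)}(x^*)}{j!}
=(\text{Vandermondes})\cdot\frac{f^{(0)}(x^*)\,f^{(1)}(x^*)\cdots f^{(d-1)}(x^*)\,f^{(d+1)}(x^*)}{0!\,1!\cdots(d-1)!\,(d+1)!},
\]
which is \emph{not} ``proportional to $f^{(d+1)}(x^*)$'' in any usable sense: it is a product that also contains $f^{(0)}(x^*),\dots,f^{(d-1)}(x^*)$ as factors, and your hypothesis gives no lower bound on any of them. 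If, say, $f^{(3)}(x^*)=0$, then the $S_1$ term vanishes---and so does every other Cauchy--Binet term containing the index $3$. No linear combination of batches at different scales $h$ fixes this: combining batches only modifies the diagonal entries of $D$ to $\frac{f^{(j)}(x^*)}{j!}\bigl(\sum_r\lambda_r h_r^j\bigr)$, so every surviving Cauchy--Binet term is still a product over $m$ distinct derivative values at $x^*$, and you still need all of them to be nonzero.

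The paper resolves this with a step you are missing entirely (Lemma~\ref{lem:step3}): starting from $|f^{(k+1)}(x_k)|>\kappa(k)$ at a single point, one first passes to a small interval where $|f^{(k+1)}|>\kappa(k)/2$ throughout (using the analytic bound on $|f^{(k+2)}|$), and then inductively integrates downward. At each step, $|f^{(i+1)}|$ bounded below and continuous on an interval forces $f^{(i)}$ to be strictly monotone there, so on at least one quarter-subinterval $|f^{(i)}|$ is bounded below by (length)$\times$(bound on $|f^{(i+1)}|$). Iterating $k$ times produces a point $c$ at which \emph{every} $|f^{(i)}(c)|$ for $0\le i\le k+1$ is at least $\rho(k):=[\kappa(k)^2/(64Bk4^{2k+1}k!)]^{k+2}$. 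With this in hand, the paper's counting matrix is centered at $c$, the leading Cauchy--Binet term $S_0$ is itself bounded below (every factor is), and a small enough scale parameter makes $S_0$ dominate all higher-weight terms. This propagation of a single high-order derivative bound down to all lower orders is the main analytic idea, and it is what your proposal needs in place of the cancellation scheme.
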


To better understand Theorem \ref{thm:hardnessapprox} in the context of our dichotomy, think about the following example:

\begin{example}
Consider a function $f$ that is exactly $\kappa(k) = 2^{-k^3}$-far from the closest polynomial of degree $k$ for every $k\in \N$. By Corollary \ref{cor:kernelalg}, there is an $d^{\log^{1/3} n} n < n^{1+o(1)}$-time algorithm for $1/\text{poly}(n)$-approximate adjacency matrix multiplication on an $n$-vertex $f$-graph when $d = \Theta(log n)$. In fact, there is an algorithm even for $2^{-o(\log^3 n)}$-error that takes $n^{1+o(1)}$. However, by Theorem \ref{thm:hardnessapprox}, there is no $n^{2-o(1)}$-time algorithm for $2^{-d^3\cdot d^4} = 2^{-\Theta(\log^7 n)}$-approximate multiplication. 
\end{example}

We now give a more concrete version of the proof outline described in the introduction. To prove Theorem \ref{thm:hardnessapprox} given Theorem \ref{thm:hardnessapprox-easy}, it suffices to show that for any function that is far from a degree $k$ polynomial, there exists a point with high $(k+1)$-th derivative (Lemma \ref{lem:step1}). To prove Theorem \ref{thm:hardnessapprox-easy}, we start by showing that there exists an interval (not just a single point) with high $(k+1)$-th derivative (Lemma \ref{lem:step2}). This is done by integrating over the $(k+2)$-nd derivative, exploiting the fact that it is bounded for analytic functions (Proposition \ref{prop:derivs}). Then, we further improve this derivative lower bound by showing that there is an interval on which \emph{all} $i$-th derivatives for $i\le k+1$ are bounded from below (Lemma \ref{lem:step3}). This is done by induction, deriving a bound for $i$-th derivatives by integrating over the $(i+1)$-th derivative. The lower bound on the $(i+1)$-th derivative ensures that it can only be close to 0 at a small interval around one point, so picking an interval far from that point suffices for the inductive step.

Up to this point, we have argued that there must be an interval $I\subset [0,1]$ on which all of $f$'s $\le (k+1)$-derivatives are large in absolute value (Lemma \ref{lem:step3}). We exploit this property to solve an exact bichromatic nearest neighbors problem in Hamming distance in $d = \Theta(\log n)$ dimensions (Lemma \ref{lem:step6}). Since even approximate nearest neighbors cannot be solved in $n^{2 - \delta}$-time for $\delta > 0$ assuming $\SETH$ (Theorem \ref{thm:r18}), this suffices. To solve Hamming nearest neighbors a a pair of sets $S$ and $T$ with $|S| = |T| = n$, we set up $d+1$ different $f$-graph adjacency matrix multiplication problems. In problem $i$, we scale the points in $S$ and $T$ by a factor of $\zeta i$ for some $\zeta  0$ and translate them by $c \in [0,1]$ so that they are in the interval $I$. Then, with one adjacency multiplication, one can evaluate an expression $Z_i$, where $Z_i = \sum_{x\in S, y\in T} f(c + i^2\zeta^2 \|x - y\|_2^2)$. For each distance $i\in [d]$, let $u_j = |\{x\in S, y\in T: \|x - y\|_2^2 = j\}|$. To solve bichromatic nearest neighbors, it suffices to compute all of the $u_j$s. This can be done by setting up a linear system in the $u_j$s, where there is one equation for each $Z_i$. The matrix for this linear system has high determinant because $f$ has high $\le (k+1)$-th derivatives on $I$ (Lemma \ref{lem:step4}). Cramer's Rule can be used to bound the error in our estimate of the $u_j$s that comes from the error in the multiplication oracle (Lemma \ref{lem:step5}). Therefore, $O(d)$ calls to a multiplication oracle suffices for computing the number of pairs of vertices in $S\times T$ that are at each distance value. Returning the minimum distance $i$ for which $u_i > 0$ solves bichromatic nearest neighbors, as desired.

In Lemma \ref{lem:step4}, we will make use of the Cauchy-Binet formula:

\begin{lemma}[Cauchy-Binet formula for infinite matrices]\label{lem:cauchy-binet}
Let $k$ be a positive integer, and for functions $A : [k] \times \N \to \R$ and $B : \N \times [k] \to \R$, define the matrix $C \in \R^{k \times k}$ by, for $i,j \in [k]$, $$C_{ij} := \sum_{\ell=0}^\infty A_{i\ell} \cdot B_{\ell j},$$
and suppose that the sum defining $C_{ij}$ converges absolutely for all $i,j$. Then,
$$\det(C) = \sum_{1 \leq \ell_1 < \ell_2 < \cdots < \ell_k} \det(A[\ell_1, \ell_2, \cdots, \ell_k]) \cdot \det(B[\ell_1, \ell_2, \cdots, \ell_k]),$$
where $A[\ell_1, \ell_2, \cdots, \ell_k] \in \R^{k \times k}$ denotes the matrix whose $i,j$ entry is given by $A_{i\ell_j}$ and $B[\ell_1, \ell_2, \cdots, \ell_k]$ denotes the matrix whose $i,j$ entry is given by $B_{\ell_i j}$.
\end{lemma}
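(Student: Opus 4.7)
The plan is to derive the infinite Cauchy-Binet formula from the standard finite one by a controlled limiting argument, then to handle the bookkeeping via a Leibniz-expansion approach so that the absolute-convergence hypothesis does all the analytic work.

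First, I would reduce to the finite case by truncation. For each $N \geq 0$, let $A^{(N)}$ be the $k \times (N+1)$ matrix whose $(i,\ell)$ entry is $A_{i\ell}$ for $0 \leq \ell \leq N$, and similarly for $B^{(N)}$. Setting $C^{(N)} := A^{(N)} B^{(N)} \in \R^{k\times k}$, the hypothesis that each $\sum_\ell A_{i\ell} B_{\ell j}$ converges absolutely gives $C^{(N)}_{ij} \to C_{ij}$ entrywise as $N \to \infty$, hence $\det(C^{(N)}) \to \det(C)$ by continuity of the determinant. The classical Cauchy-Binet formula applied to $A^{(N)}, B^{(N)}$ expresses $\det(C^{(N)})$ as a sum over $k$-subsets $\{\ell_1 < \cdots < \ell_k\} \subseteq \{0,\ldots,N\}$ of $\det(A[\ell_1,\ldots,\ell_k]) \cdot \det(B[\ell_1,\ldots,\ell_k])$. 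So the theorem reduces to showing that this sum converges absolutely to the claimed infinite sum as $N \to \infty$.

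The cleanest way to prove that convergence — and arguably the cleanest proof of the whole statement — is to bypass truncation and expand directly via Leibniz. Writing
\[
\det(C) = \sum_{\pi \in S_k} \mathrm{sgn}(\pi)\prod_{i=1}^k C_{i,\pi(i)} = \sum_{\pi} \mathrm{sgn}(\pi)\prod_{i=1}^k \sum_{\ell_i=0}^\infty A_{i,\ell_i} B_{\ell_i,\pi(i)},
\]
I would justify expanding the product of sums into a $k$-fold sum over $(\ell_1,\ldots,\ell_k) \in \N^k$ by Fubini/Tonelli, using that $\sum_{\ell_i} |A_{i,\ell_i}||B_{\ell_i,\pi(i)}| < \infty$ by hypothesis and that the product of $k$ such finite quantities is finite. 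Swapping the order of summation, and recognising $\sum_\pi \mathrm{sgn}(\pi)\prod_i B_{\ell_i,\pi(i)} = \det(B[\ell_1,\ldots,\ell_k])$ (where now the row order is the order of the tuple), gives
\[
\det(C) = \sum_{(\ell_1,\ldots,\ell_k) \in \N^k} \Bigl(\prod_{i=1}^k A_{i,\ell_i}\Bigr)\det(B[\ell_1,\ldots,\ell_k]).
\]
Tuples with a repeated coordinate contribute zero since $B[\ldots]$ then has a repeated row. For tuples with distinct coordinates, grouping by the sorted version $m_1<\cdots<m_k$ and letting $\sigma$ be the permutation with $\ell_i = m_{\sigma(i)}$, the row-swap identity $\det(B[m_{\sigma(1)},\ldots,m_{\sigma(k)}]) = \mathrm{sgn}(\sigma)\det(B[m_1,\ldots,m_k])$ combined with $\sum_\sigma \mathrm{sgn}(\sigma)\prod_i A_{i,m_{\sigma(i)}} = \det(A[m_1,\ldots,m_k])$ collapses the sum to the claimed formula.

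The main obstacle is purely analytic: justifying the two interchanges (expanding the product of infinite sums, and reordering the resulting $k$-fold sum by sorted tuples). Both reduce to the absolute bound $\sum_{(\ell_1,\ldots,\ell_k)} \prod_i |A_{i,\ell_i}||B_{\ell_i,\pi(i)}| = \prod_i \sum_{\ell_i}|A_{i,\ell_i}||B_{\ell_i,\pi(i)}| < \infty$, which follows directly from the hypothesis. This same bound also certifies that the final sum $\sum_{m_1<\cdots<m_k} |\det(A[\cdot])\det(B[\cdot])|$ converges (via Leibniz expansion of each determinant), so the equality is a statement between two absolutely convergent series. Everything else is formal Leibniz algebra identical to the standard Cauchy-Binet proof; no new combinatorial identity is needed.
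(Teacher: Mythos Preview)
The paper states this lemma without proof, treating it as a known extension of the classical Cauchy-Binet identity. Your proposal is correct: the Leibniz expansion followed by Fubini (justified by the absolute-convergence hypothesis), then grouping tuples by their sorted version, is exactly the standard argument, and your handling of the analytic justifications is sound.
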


In this section, we exploit the following property of analytic functions:

\begin{proposition}\label{prop:derivs}
Consider a function $f:[0,1]\rightarrow \mathbb{R}$ that is analytic. Then, there is a constant $B > 0$ depending on $f$ such that for all $k \ge 0$ and all $x\in [0,1]$, $|f^{(k)}(x)| < B 4^k k!$
\end{proposition}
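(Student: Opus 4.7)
The plan is to invoke the complex-analytic characterization of real-analytic functions and bound the derivatives via Cauchy's integral formula. Since $f:[0,1]\to\mathbb{R}$ is analytic, at every $x\in[0,1]$ its Taylor series has positive radius of convergence, so $f$ extends to a holomorphic function $\tilde f$ on some open neighborhood $U\subseteq\mathbb{C}$ of $[0,1]$. By compactness of $[0,1]$, the thickness of this extension is bounded below uniformly: there exists $r>0$ such that the closed tube $K=\{z\in\mathbb{C}:\mathrm{dist}(z,[0,1])\le r\}$ is contained in $U$, and continuity of $\tilde f$ on the compact set $K$ gives $M:=\sup_{z\in K}|\tilde f(z)|<\infty$.

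With these preparations in hand, the derivative bound follows in one line from the Cauchy integral formula for derivatives. For any $x\in[0,1]$ and $k\ge 0$, the circle $\gamma_x=\{z:|z-x|=r\}$ lies in $K$, so
\begin{align*}
f^{(k)}(x)=\frac{k!}{2\pi i}\oint_{\gamma_x}\frac{\tilde f(z)}{(z-x)^{k+1}}\,dz,
\end{align*}
and bounding the integral by (arc length)$\cdot$(max modulus) yields $|f^{(k)}(x)|\le k!\,M\,r^{-k}$. When the extension radius happens to satisfy $r\ge 1/4$, setting $B:=M$ immediately delivers the stated conclusion $|f^{(k)}(x)|<B\cdot 4^k\cdot k!$, since then $r^{-k}\le 4^k$.

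The main and essential obstacle is that the base of the Cauchy estimate is $1/r$, not an absolute constant: if the widest tube of holomorphic extension available for the particular $f$ has $r<1/4$, the estimate only yields $|f^{(k)}(x)|\le M\,(1/r)^k\,k!$ with $1/r>4$, and the extra factor $(1/(4r))^k$ cannot be absorbed into any $k$-independent constant $B=B(f)$ (a witness is $f(x)=1/(1+100x)$, whose derivatives at $0$ grow like $100^k\cdot k!$). My reading, consistent with how Proposition~\ref{prop:derivs} is actually used in subsequent steps (where only the exponential shape $C^k\cdot k!$ matters), is that the ``$4$'' should be interpreted as a function-dependent constant $C(f):=\max(4,1/r)$; under this interpretation the Cauchy-estimate argument above proves exactly what later lemmas require. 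Alternatively, reading ``analytic on $[0,1]$'' as ``admits a holomorphic extension to a neighborhood of thickness at least $1/4$'' makes the literal base-$4$ statement correct via the same argument; no further ingredients are needed.
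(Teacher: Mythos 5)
Your argument is correct, and it takes a genuinely different route from the paper's. The paper argues entirely on the real line: it first fixes $x\in[0,1]$, notes that the Taylor series of $f$ about $x$ ``converges at the farther endpoint $y_0$'' (distance $\ge 1/2$), deduces that all but finitely many terms are small and hence $|f^{(i)}(x)|\le B_x 2^i i!$, and then uniformizes over $x$ by re-expanding about the nearest point of the finite grid $\{i/8\}_{i=0}^{8}$, losing a factor $2^k$ in the base to arrive at $B4^k k!$. You instead extend $f$ holomorphically to a compact tube of thickness $r$ around $[0,1]$ and apply the Cauchy integral formula, getting $|f^{(k)}(x)|\le M k!/r^k$ in one step. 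Your route is cleaner and avoids the grid/re-expansion bookkeeping; the paper's route is elementary but, as written, contains an arithmetic slip in the re-expansion step (the factor should be $|x-x_0|^i\le (1/16)^i$, not $(1/16)^{i+k}$) that the authors paper over.

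More importantly, the obstacle you isolate is real, and it is not specific to your method: the paper's own proof silently assumes that the Taylor series of $f$ about each $x\in[0,1]$ converges at distance $1/2$ (``Since $f(y_0)$ is a convergent series\dots''), which is exactly the assumption that the radius of convergence is uniformly at least $1/2$ --- the real-variable analogue of your requirement $r\ge 1/4$. For a general real-analytic $f$ this fails, and your witness $f(x)=1/(1+100x)$, with $|f^{(k)}(0)|=100^k\,k!$, refutes the literal statement with base $4$. So the proposition as stated is false, the paper's proof has the same gap you describe, and your proposed repair --- replace $4$ by a function-dependent constant $C(f)=\max(4,1/r)$ --- is the right one: every downstream use (Lemmas \ref{lem:step2}--\ref{lem:step4}) only needs a bound of the shape $B\,C^k k!$ with $B,C$ depending on $f$, at the cost of propagating $C(f)$ into the constants there. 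Your proof, read with that interpretation, is complete.
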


\begin{proof}
We first show that, for any $x\in [0,1]$, $|f^{(k)}(x)| < B_x 2^k k!$ for some constant $B_x$ depending on $x$. Write $f$'s Taylor expansion around $x$:
\begin{align*}
f(y) = \sum_{i=0}^{\infty} f^{(i)}(x) (y - x)^i / i! .
\end{align*}
Let $y_0 = \arg \max_{a\in \{0,1\}} |a - x|$. Note that $|y_0 - x| \ge 1/2$. Since $f(y_0)$ is a convergent series, there exists a constant $N_x$ dependent on $x$ such that for all $i > N_x$, the absolute value of the $i$-th term of the series for $f(y_0)$ is at most 1/2. Therefore, for all $i > N_x$, $|f^{(i)}(x)| < 2(2^i) i!$. For all $i \le N_x$, $f^{(i)}(x)$ is a constant depending on $x$, so we are done with this part.

Next, we show that $|f^{(k)}(x)| < B 4^k k!$ for all $x\in [0,1]$ and some constant $B$ depending only on $f$. Let $x_0\in \{i/8\}_{i=0}^8$ be the point that minimizes $|x - x_0|$. Note that $|x - x_0| < 1/16$. Taylor expand $f$ around $x_0$:
\begin{align*}
f(x) = \sum_{i = 0}^{\infty} f^{(i)}(x_0) (x - x_0)^i / i! .
\end{align*}
Take derivatives for some $k \ge 0$ and use the triangle inequality:
\begin{align*}
|f^{(k)}(x)| \le \sum_{i = 0}^{\infty} |f^{(i+k)}(x_0)| |x - x_0|^i / i!
\end{align*}
By the first part, $|f^{(i+k)}(x_0)| \le B_{x_0} 2^{i+k} (i+k)!$, so
\begin{align*}
|f^{(k)}(x)| \le \sum_{i=0}^{\infty} B_{x_0} 2^{i+k} ((i+k)!/i!) (1/16)^{(i+k)}
\end{align*}
Note that $(i + k)!/i! \le (2i)^k$ for $i > k$ (we are done for $i \le k$). There is some constant $C$ for which $\sum_{i=0}^{\infty} i^k 4^{-i} = C$, so letting $B = B_{x_0} C$ suffices, as desired.
\end{proof}

We now move on to proving the main results of this section, which consists of several steps.

\begin{lemma}[Step 1: high $(k+1)$-derivative]\label{lem:step1}
Let $f : [0,1] \to \R$ be an analytic function on $[0,1]$ and let $\kappa: \N \to [0,1]$ be a nonincreasing function. Suppose that, for some positive integer $k$, $f$ is not $\kappa(k)$-close to a polynomial of degree $k$.
Then, there exists some $x\in [0,1]$ for which $|f^{(k+1)}(x)| > \kappa(k)$.
\end{lemma}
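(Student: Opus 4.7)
The plan is to prove the contrapositive: assuming $|f^{(k+1)}(x)| \le \kappa(k)$ for every $x \in [0,1]$, I will exhibit a polynomial $p$ of degree $k$ that $\kappa(k)$-approximates $f$ on $[0,1]$, contradicting the hypothesis.

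The natural candidate is the degree-$k$ Taylor polynomial of $f$ at $0$, namely
\[
p(x) \;=\; \sum_{i=0}^{k} \frac{f^{(i)}(0)}{i!}\, x^{i}.
\]
Since $f$ is analytic on $[0,1]$ (in particular $C^{k+1}$), Taylor's theorem with Lagrange remainder gives, for each $x \in [0,1]$, some $\xi \in [0,x]$ with
\[
f(x) - p(x) \;=\; \frac{f^{(k+1)}(\xi)}{(k+1)!}\, x^{k+1}.
\]
Using the assumed uniform bound $|f^{(k+1)}(\xi)| \le \kappa(k)$, together with $x^{k+1} \le 1$ on $[0,1]$ and $(k+1)! \ge 1$, this remainder is bounded in absolute value by $\kappa(k)$. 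Hence $|f(x) - p(x)| \le \kappa(k)$ for all $x \in [0,1]$, which means $f$ \emph{is} $\kappa(k)$-close to a polynomial of degree $k$, contradicting the hypothesis of the lemma.

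There is essentially no obstacle here: the statement is a one-line consequence of Lagrange's form of the remainder, exploiting the fact that both the domain length and the factorial denominator are at worst $1$. The only mild subtlety is that one must verify the existence of the pointwise remainder representation, which is immediate from analyticity (so $f \in C^{\infty}[0,1]$ and Taylor's theorem with remainder applies). Note that the proof does not use the full strength of analyticity — only $C^{k+1}$ regularity — and does not need the auxiliary bound from Proposition~\ref{prop:derivs}; that proposition will be relevant only in the subsequent steps (Lemmas~\ref{lem:step2} and \ref{lem:step3}) that upgrade a single point with large derivative to an entire interval.
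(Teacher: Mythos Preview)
Your proof is correct and takes essentially the same approach as the paper: both use the degree-$k$ Taylor polynomial of $f$ at $0$ as the candidate approximant. The only cosmetic difference is that you invoke the Lagrange form of the remainder directly, whereas the paper unwinds that same bound via an iterated mean-value-theorem argument on $g = f - p$.
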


\begin{proof}
Define the function $g : [0,1] \to \R$ by $g(x) = f(x) - \sum_{\ell=0}^k \frac{f^{(\ell)}(0) \cdot x^\ell}{\ell!}$. We claim that, for each $i \in \{0,1,\ldots,k+1\}$, there is an $x_i \in [0,1]$ such that $|g^{(i)}(x_i)| > \kappa(k)$. Since $g^{(k+1)}(x) = f^{(k+1)}(x)$, plugging in $i=k+1$ into this will imply our desired result. We prove this by induction on $i$.

For the base case $i=0$: note that $g$ is the difference between $f$ and a polynomial of degree $k$, and so by our assumption that $f$ is not $\kappa(k)$-close to a polynomial of degree $k$, there must be an $x_0 \in [0,1]$ such that $|g(x_0)| > \kappa(k)$.

For the inductive step, consider an integer $i\in [k+1]$. Notice that $g^{(i-1)}(0) = 0$ by definition of $g$ since $i \leq k+1$. By the inductive hypothesis, there is an $x_{i-1} \in [0,1]$ with $|g^{(i-1)}(x_{i-1})| > \kappa(k)$. Hence, by the mean value theorem, there must be an $x_i \in [0,x_{i-1}]$ such that 
\begin{align*}
    |g^{(i)}(x_i)| \geq |g^{(i-1)}(x_{i-1}) - g^{(i-1)}(0)|/x_{i-1} \geq |g^{(i-1)}(x_{i-1})| > \kappa(k),
\end{align*}
as desired.
\end{proof}

\begin{proof}[Proof of Theorem \ref{thm:hardnessapprox} given Theorem \ref{thm:hardnessapprox-easy}]
For each of the infinitely many $k$s for which $f$ is $\kappa(k)$-far from a degree $k$ polynomial, Lemma \ref{lem:step1} implies the existance of an $x_k$ for which $|f^{(k+1)}(x_k)| > \kappa(x_k)$. Thus, $f$ satisfies the input condition of Theorem \ref{thm:hardnessapprox-easy}, so applying Theorem \ref{thm:hardnessapprox-easy} proves Theorem \ref{thm:hardnessapprox} as desired.
\end{proof}

Now, we focus on Theorem \ref{thm:hardnessapprox-easy}:

\begin{lemma}[Step 2: high $(k+1)$-derivative on interval] \label{lem:step2}
Let $f : [0,1] \to \R$ be an analytic function on $[0,1]$ and let $\kappa: \N \to [0,1]$ be a nonincreasing function. There is a constant $B>0$ depending only on $f$ such that the following holds.

Suppose that, for some sufficiently large positive integer $k$, there is an $x\in [0,1]$ for which $|f^{(k+1)}(x)| > \kappa(k)$.
Then, there exists an interval $[a,b]\subseteq [0,1]$ with the property that both $b - a \geq \kappa(k) / (32 Bk4^k \cdot k!)$ and, for all $y\in [a,b]$, $|f^{(k+1)}(y)| > \kappa(k)/2$.
\end{lemma}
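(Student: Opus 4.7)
The strategy is to exploit analyticity to bound $f^{(k+2)}$, which makes $f^{(k+1)}$ Lipschitz with a controllable constant, and then to observe that if $|f^{(k+1)}(x)|>\kappa(k)$ at some point $x$, the function cannot drop below $\kappa(k)/2$ in a small neighborhood of $x$.

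First, I would apply Proposition~\ref{prop:derivs} to $f^{(k+2)}$: there is a constant $B_0>0$ depending only on $f$ such that $|f^{(k+2)}(y)|\leq B_0\cdot 4^{k+2}(k+2)!$ for every $y\in[0,1]$. By the mean value theorem, $f^{(k+1)}$ is $M$-Lipschitz on $[0,1]$ with
\begin{align*}
M \;:=\; B_0\cdot 4^{k+2}(k+2)!.
\end{align*}
Next, set $r := \kappa(k)/(2M)$ and define
\begin{align*}
a \;:=\; \max\{0,\, x-r\}, \qquad b \;:=\; \min\{1,\, x+r\}.
\end{align*}
Then $[a,b]\subseteq[0,1]$ contains $x$, and for every $y\in[a,b]$ we have $|y-x|\leq r$, so the Lipschitz bound gives
\begin{align*}
|f^{(k+1)}(y)| \;\geq\; |f^{(k+1)}(x)| - M|y-x| \;>\; \kappa(k) - Mr \;=\; \kappa(k)/2,
\end{align*}
as required.

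Finally, I would verify the length bound. Since $x\in[0,1]$, at least one of $[x,x+r]\cap[0,1]$ or $[x-r,x]\cap[0,1]$ has length $r$, hence $b-a\geq r = \kappa(k)/(2B_0\cdot 4^{k+2}(k+2)!)$. For $k$ sufficiently large, $(k+1)(k+2)\leq 16k$ is false, so the cleanest way to match the stated form $\kappa(k)/(32Bk\cdot 4^k\cdot k!)$ is to take a slightly enlarged constant $B$ depending on $f$ to absorb the $16(k+1)(k+2)/k$ factor; the honest bound produced by this argument is $r\gtrsim \kappa(k)/(B\cdot 4^k\cdot (k+2)!)$, which matches the spirit of the claim (and is the form needed in the subsequent Lemma~\ref{lem:step3}). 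The main conceptual obstacle is simply making sure the Lipschitz estimate in the $(k+2)$-th derivative is tight enough; once Proposition~\ref{prop:derivs} is in hand this is mechanical, and no further properties of $f$ beyond analyticity are required.
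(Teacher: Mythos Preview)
Your approach is essentially identical to the paper's: bound $|f^{(k+2)}|$ via Proposition~\ref{prop:derivs}, infer that $f^{(k+1)}$ is Lipschitz with the resulting constant, and intersect a radius-$r$ ball around $x$ with $[0,1]$. The paper sets $\delta=\kappa(k)/(32Bk4^k\cdot k!)$, asserts $|f^{(k+2)}|\le 16Bk4^k\cdot k!$ (absorbing the $(k+1)(k+2)$ factors into $B$ without comment), and concludes exactly as you do; your honest remark that the clean constant requires enlarging $B$ is in fact what the paper is tacitly doing, and the precise form of the denominator is immaterial for Lemma~\ref{lem:step3}.
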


\begin{proof}
Since $f$ is analytic on $[0,1]$, Proposition \ref{prop:derivs} applies and it follows that there is a constant $B > 0$ dependent on $f$ such that for every $y \in [0,1]$ and every nonnegative integer $m$, we have $|f^{(m)}(y)| \leq Bm4^m \cdot m!$. In particular, for all $y \in [0,1]$, we have $|f^{(k+2)}(y)| \leq 16 Bk4^k \cdot k!$. Let $\delta = \kappa(k)/(32 Bk4^k \cdot k!)$, then let $a = \max\{0, x-\delta\}$, and $b = \min\{1,x+\delta\}$. We have $b-a \geq \delta = \kappa(k)/(32 Bk4^k \cdot k!)$, since when $k$ is large enough, we get that $\delta<1/2$, so we cannot have both $a=0$ and $b=1$. Meanwhile, for any $y \in [a,b]$, we have as desired that
\begin{align*}
|f^{(k+1)}(y)| 
\geq & ~ |f^{(k+1)}(x)| - |x-y| \cdot \sup_{y' \in [a,b]} |f^{(k+2)}(y')| \\
> & ~ \kappa(k) - \delta \cdot 16 Bk4^k \cdot k! \\
= & ~ \kappa(k) / 2.
\end{align*}
\end{proof}

\begin{lemma}[Step 3: high lower derivatives on subintervals] \label{lem:step3}
Let $f : [0,1] \to \R$ be an analytic function on $[0,1]$ and let $\kappa: \N \to [0,1]$ be a nonincreasing function. There is a constant $B>1$ depending only on $f$ such that the following holds.

Suppose that, for some sufficiently large positive integer $k$, there exists an $x\in [0,1]$ for which $|f^{(k+1)}(x)| > \kappa(k)$.
Then, there exists an interval $[c,d]\subseteq [0,1]$ with the property that both $d - c > \kappa(k) / (128 Bk4^{2k+1} \cdot k!)$ and, for all $y\in [c,d]$ and all $i\le k+1$, $|f^{(i)}(y)| > \left[ \kappa(k)^2 / (64 Bk4^{2k+1} \cdot k!) \right]^{k+2-i}$.
\end{lemma}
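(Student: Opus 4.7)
The plan is to build a nested sequence of subintervals $I_{k+1} \supseteq I_k \supseteq \cdots \supseteq I_0$, by reverse induction on $i$ from $k+1$ down to $0$, so that on $I_i$ we have a simultaneous lower bound $|f^{(j)}(y)| \geq M_j$ for every $j$ with $i \le j \le k+1$, with $|I_0|$ still comparable to the length produced by Lemma \ref{lem:step2}. We will then take $[c,d] = I_0$.

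For the base case, I would apply Lemma \ref{lem:step2} to get $I_{k+1}$ of length $L_{k+1} \ge \kappa(k)/(32Bk\cdot 4^k\cdot k!)$ on which $|f^{(k+1)}| > \kappa(k)/2 =: M_{k+1}$. For the inductive step, suppose $I_{i+1}$ is known with $|f^{(i+1)}(y)| > M_{i+1} > 0$ on $I_{i+1}$. Then $f^{(i+1)}$ is continuous and non-vanishing on $I_{i+1}$, hence of constant sign, so $f^{(i)}$ is strictly monotone there; assume WLOG it is increasing. Let $m$ be the midpoint of $I_{i+1}$. If $f^{(i)}(m) \ge 0$, let $I_i$ be the rightmost quarter of $I_{i+1}$; otherwise let $I_i$ be the leftmost quarter. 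In the first case, for $y \in I_i$ we have $f^{(i)}(y) \ge f^{(i)}(m) + \int_m^y f^{(i+1)} \ge M_{i+1}\cdot L_{i+1}/4$ on a subinterval of length at least $L_{i+1}/4$ starting at distance $\ge L_{i+1}/4$ from $m$; the other case is symmetric, giving $-f^{(i)}(y) \ge M_{i+1} L_{i+1}/4$. In either case we obtain $|f^{(i)}(y)| \ge M_{i+1} L_{i+1}/4 =: M_i$ for all $y \in I_i$, while $|I_i| = L_{i+1}/4$. The lower bounds for the larger indices $j>i$ are inherited automatically since $I_i \subseteq I_{i+1}$.

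After $k+1$ iterations, $|I_0| = L_{k+1}/4^{k+1} \ge \kappa(k)/(32Bk\cdot 4^{2k+1}\cdot k!)$, which comfortably exceeds $\kappa(k)/(128Bk\cdot 4^{2k+1}\cdot k!)$. For the derivative bounds, the recurrence $M_i = M_{i+1}\cdot L_{i+1}/4$ together with $L_{i+1} \ge L_{k+1}/4^{k-i}$ yields
\begin{align*}
M_i \ge M_{k+1}\cdot L_{k+1}^{\,k+1-i}/4^{(k+1-i)(k+2-i)/2},
\end{align*}
and a direct (if tedious) calculation using $M_{k+1}=\kappa(k)/2$ and the lower bound on $L_{k+1}$ shows that this quantity exceeds $[\kappa(k)^2/(64Bk\cdot 4^{2k+1}\cdot k!)]^{k+2-i}$ for each $i \le k+1$, since $\kappa(k)\le 1$ provides enough slack in the exponent. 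The main obstacle I expect is organizing the case analysis in the inductive step so that the chosen quarter really does give a uniform lower bound on $|f^{(i)}|$, rather than merely on the total increment of $f^{(i)}$ across $I_{i+1}$; choosing the outer quarter on the side indicated by the sign of $f^{(i)}$ at the midpoint is what makes this work, and once that is in place the rest reduces to bookkeeping the compounding constants.
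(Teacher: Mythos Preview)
Your proposal is correct and takes essentially the same approach as the paper: a reverse induction from $i=k+1$ down to $i=0$, using Lemma~\ref{lem:step2} for the base case and, at each step, exploiting the constant sign of $f^{(i+1)}$ on $I_{i+1}$ to pass to a quarter-length subinterval on which $|f^{(i)}|$ is uniformly large. The paper phrases the selection of the quarter as a case analysis on where the (possible) root of $f^{(i)}$ lies among four equal subintervals, whereas you choose the outer quarter based on the sign of $f^{(i)}$ at the midpoint; these are equivalent, and your version is arguably a bit cleaner. The paper also tracks the target bound $[\kappa(k)^2/(64Bk4^{2k+1}k!)]^{k+2-i}$ directly in the induction rather than first deriving the recurrence $M_i = M_{i+1}L_{i+1}/4$ and then comparing, but the resulting arithmetic is the same.
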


\begin{proof}
We will prove that, for all integers  $0 \leq i \leq k+1$, there is an interval $[c_i,d_i] \subseteq [0,1]$ such that $d_i - c_i > \kappa(k) / (32 Bk4^{2k+1 - i} \cdot k!)$, and for all integers $i \leq i' \leq k+1$ and all $y\in [c_i,d_i]$ we have $|f^{(i')}(y)| > \left[ \kappa(k)^2 / (64 Bk4^{2k+1} \cdot k!) \right]^{k+2-i'}$. Plugging in $i=0$ gives the desired statement. We will prove this by induction on $i$, from $i=k+1$ to $i=0$. The base case $i=k+1$ is given (with slightly better parameters) by Lemma~\ref{lem:step2}.

For the inductive step, suppose the statement is true for $i+1$. We will pick $[c_i,d_i]$ to be a subinterval of $[c_{i+1},d_{i+1}]$, so the inductive hypothesis says that for every $y \in [c_i,d_i]$ and every integer $i < i' \leq k+1$ we have $|f^{(i')}(y)| > \left[ \kappa(k)^2 / (64 Bk4^{2k+1} \cdot k!) \right]^{k+2-i'}$. It thus remains to show that we can further pick $c_i$ and $d_i$ such that $d_i-c_i \geq \frac14 (d_{i+1} - c_{i+1})$ and $|f^{(i)}(y)| > \left[ \kappa(k)^2 / (64 Bk4^{2k+1} \cdot k!) \right]^{k+2-i}$ for all $y \in [c_i,d_i]$. 

Recall that $|f^{(i+1)}(y)| > \left[ \kappa(k)^2 / (64 Bk4^{2k+1} \cdot k!) \right]^{k+1-i}$ for all $y \in [c_{i+1},d_{i+1}]$. Since $f^{(i+1)}$ is continuous, we must have
\begin{itemize}
    \item either $f^{(i+1)}(y) > \left[ \kappa(k)^2 / (64 Bk4^{2k+1} \cdot k!) \right]^{k+1-i}$ for all such $y$,
    \item or ~~~$-f^{(i+1)}(y) > \left[ \kappa(k)^2 / (64 Bk4^{2k+1} \cdot k!) \right]^{k+1-i}$ for all such $y$.
\end{itemize}  
Let us assume we are in the first case; the second case is nearly identical. Let $\delta = (d_{i+1} - c_{i+1})/4$, and consider the four subintervals
\begin{align*}
    [c_{i+1}, c_{i+1} + \delta], ~~~ [c_{i+1} + \delta, c_{i+1} + 2\delta], ~~~ [c_{i+1} + 2\delta, c_{i+1} + 3\delta], \text{~~~and~~~} [c_{i+1} + 3\delta, c_{i+1} + 4\delta].
\end{align*} 
Since $f^{(i+1)}(y) > \left[ \kappa(k)^2 / (64 Bk4^{2k+1} \cdot k!) \right]^{k+1-i}$ for all $y$ in each of those intervals, we know that for each of the intervals, letting $c'$ denote its left endpoint and $d'$ denote its right endpoint, we have 
\begin{align*}
f^{(i)}(d') - f^{(i)}(c') 
\geq & ~ \delta \cdot \left[ \kappa(k)^2 / (64 Bk4^{2k+1} \cdot k!) \right]^{k+1-i} \\
\geq & ~ \left[ \kappa(k) / (32 Bk4^{2k+1 - i} \cdot k!) \right] \cdot \left[ \kappa(k)^2 / (64 Bk4^{2k+1} \cdot k!) \right]^{k+1-i} / 4 \\
\geq & ~ \left[ \kappa(k)^2 / (64 Bk4^{2k+1} \cdot k!) \right]^{k+2-i}.
\end{align*}

In particular, $f^{(i)}$ is increasing on the interval $[c_{i+1}, d_{i+1}]$, and if we look at the five points $y = c_{i+1} + a \cdot \delta$ for $a \in \{0,1,2,3,4\}$ which form the endpoints of our four subintervals, $f^{(i)}$ increases by more than $\left[ \kappa(k)^2 / (64 Bk4^{2k+1} \cdot k!) \right]^{k+2-i}$ from each to the next. It follows by a simple case analysis (on where in our interval $f^{(i)}$ has a root) that there must be one of our four subintervals with $|f^{(i)}(y)| > \left[ \kappa(k)^2 / (64 Bk4^{2k+1} \cdot k!) \right]^{k+2-i}$ for all $y$ in the subinterval. We can pick that subinterval as desired.
\end{proof}

To simplify notation in the rest of the proof, we will let $\rho(k) = \left[\kappa(k)^2/(64Bk4^{2k+1}\cdot k!)\right]^{k+2}$. We now use these properties of $f$ to reason about a certain matrix connected to $f$ that can be used to count the number of pairs of points at each distance. 

\begin{definition}[Counting matrix]\label{def:counting_matrix}
For a function $f : \R \rightarrow \R$, an integer $k \geq 1$, and a function $\rho: \N\rightarrow \mathbb{R}_{>0}$, let $[c,d]$ be the interval from Lemma~\ref{lem:step3}. Define the \emph{counting matrix} be the $k\times k$ matrix $M$ for which 
\begin{align*}
M_{ij} = f(c + (\rho(k)/(B(200k)^k))^{10k} \cdot i \cdot j / k^2).
\end{align*}
\end{definition}

\begin{lemma}[Step 4: determinant lower bound for functions $f$ that are far from polynomials using Cauchy-Binet] \label{lem:step4}
Let $f : [0,1] \to \R$ be an analytic function on $[0,1]$, let $\kappa: \N \to [0,1]$ be a nonincreasing function, and let $\rho(\ell) = \left[\kappa(\ell)^2/(64B\ell4^{2\ell+1}\cdot \ell!)\right]^{\ell+2}$ (as discussed before). Let $k$ be an integer for which there exists an $x\in [0,1]$ with $|f^{(k+1)}(x)| > \kappa(k)$.

Let $M$ be the counting matrix (Definition~\ref{def:counting_matrix}) for $f$, $k$, and $\rho$. Then \begin{align*}
|\det(M)| > (\rho(k))^k(\rho(k)/(Bk^2(200k)^k))^{10k^3}.
\end{align*}
\end{lemma}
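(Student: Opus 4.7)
The plan is to expand $f$ as a Taylor series at $c$ (legal since $f$ is analytic on $[0,1]$) in order to write the counting matrix as an infinite matrix product to which Cauchy--Binet applies. Setting $s := (\rho(k)/(Bk^2(200k)^k))^{10k}$, for each $i,j\in[k]$ we have
\begin{align*}
M_{ij} \;=\; f\!\left(c + s\cdot\tfrac{ij}{k^2}\right) \;=\; \sum_{\ell=0}^{\infty} \frac{f^{(\ell)}(c)}{\ell!}\left(\tfrac{s}{k^2}\right)^{\ell} i^{\ell} j^{\ell} \;=\; \sum_{\ell=0}^{\infty} A_{i\ell}\,B_{\ell j},
\end{align*}
where $A_{i\ell}:=i^{\ell}$ and $B_{\ell j}:=\frac{f^{(\ell)}(c)\, s^{\ell}}{\ell!\, k^{2\ell}} j^{\ell}$. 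The series converges absolutely on $[c,c+s]\subset[0,1]$ by Proposition~\ref{prop:derivs} and our tiny choice of $s$, so Lemma~\ref{lem:cauchy-binet} yields $\det(M)=\sum_{0\le\ell_1<\cdots<\ell_k}\det(A[\boldsymbol\ell])\det(B[\boldsymbol\ell])$.

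\textbf{Lower bounding the main term.} I single out the tuple $\boldsymbol\ell^{*}:=(0,1,\dots,k-1)$. Then $A[\boldsymbol\ell^{*}]$ is the Vandermonde matrix $(i^{j-1})_{i,j\in[k]}$, and $B[\boldsymbol\ell^{*}]$ is a diagonal rescaling (by $\frac{f^{(j-1)}(c)s^{j-1}}{(j-1)!k^{2(j-1)}}$ on row $j$) of the same Vandermonde. The Vandermonde determinant equals $\prod_{1\le a<b\le k}(b-a)\ge 1$, so
\begin{align*}
\bigl|\det(A[\boldsymbol\ell^{*}])\det(B[\boldsymbol\ell^{*}])\bigr| \;\ge\; \prod_{\ell=0}^{k-1} \frac{|f^{(\ell)}(c)|\, s^{\ell}}{\ell!\,k^{2\ell}}.
\end{align*}
Since every index $\ell\le k-1\le k+1$, Lemma~\ref{lem:step3} applies with $\tau:=\kappa(k)^{2}/(64Bk4^{2k+1}k!)$, giving $|f^{(\ell)}(c)|\ge\tau^{k+2-\ell}$. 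Noting $\rho(k)=\tau^{k+2}$, the product telescopes to a quantity of the form $\rho(k)^{k}\cdot s^{k(k-1)/2}\cdot(\text{polynomial-in-}k\text{ denominator})$, which is comfortably larger than the target $\rho(k)^{k}(\rho(k)/(Bk^{2}(200k)^{k}))^{10k^{3}}$ by our choice of $s$.

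\textbf{Bounding all other terms and combining.} For any tuple $\boldsymbol\ell\neq\boldsymbol\ell^{*}$, at least one $\ell_r\ge k$, and more generally $\sum_{r}\ell_r=\tfrac{k(k-1)}{2}+\sigma$ for some integer $\sigma\ge 1$, which produces an extra factor of $s^{\sigma}$ relative to the main term. I apply Hadamard's inequality to both determinants: $|\det(A[\boldsymbol\ell])|\le k^{k/2}\prod_j k^{\ell_j}$ since each entry $i^{\ell_j}\le k^{\ell_j}$; and using Proposition~\ref{prop:derivs}'s bound $|f^{(\ell_r)}(c)|\le B\ell_r 4^{\ell_r}\ell_r!$ (whose $\ell_r!$ cancels the Taylor $1/\ell_r!$), row $r$ of $B[\boldsymbol\ell]$ has entries at most $B\ell_r(4/k)^{\ell_r}s^{\ell_r}$, yielding $|\det(B[\boldsymbol\ell])|\le k^{k/2}\prod_r B\ell_r(4/k)^{\ell_r}s^{\ell_r}$. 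Multiplying, the per-tuple bound factorizes across rows, and summing first over orderings and then over each $\ell_r\ge 0$ gives a convergent geometric series (since $4s\ll 1$) whose total is bounded by $s^{k(k-1)/2+1}\cdot(Bk)^{O(k)}$. The crux is that our doubly-exponentially small choice $s=(\rho(k)/(Bk^{2}(200k)^{k}))^{10k}$ is engineered so that even a single extra factor of $s$ overwhelms the looseness $(Bk)^{O(k)}$ from Hadamard, making the sum over $\boldsymbol\ell\neq\boldsymbol\ell^{*}$ at most half the main-term lower bound. Subtracting then yields the claimed $|\det(M)|>\rho(k)^{k}(\rho(k)/(Bk^{2}(200k)^{k}))^{10k^{3}}$. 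The main obstacle is the bookkeeping in the last step: keeping the row-product structure through Hadamard so that the tuple sum becomes a product of independent one-dimensional geometric sums, and then choosing constants in the exponent of $s$ large enough to absorb the many $k^{O(k)}$ slop factors introduced by these bounds.
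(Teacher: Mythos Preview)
Your proposal is correct and follows essentially the same route as the paper: Taylor-expand $f$ at $c$, apply Cauchy--Binet to the resulting infinite product, isolate the Vandermonde tuple $\boldsymbol\ell^{*}=(0,1,\dots,k-1)$ and lower-bound it via Lemma~\ref{lem:step3}, then show all other tuples contribute a factor of the tiny parameter more and hence are negligible. The only cosmetic difference is that you bound each minor via Hadamard's inequality whereas the paper expands the permutation sum directly---both yield a per-tuple bound of the shape $(Bk)^{O(k)}(\text{small})^{\sum_r\ell_r}$, and the paper carries out your final ``summation'' step by grouping tuples according to their total degree $\tau=\sum_r\ell_r\ge\binom{k}{2}+1$ rather than by factorizing.
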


\begin{proof}
Since $f$ is analytic on $[c,d]$, we can Taylor expand it around $c$:

$$f(x) = \sum_{\ell=0}^{\infty} \frac{f^{(\ell)}(c)}{\ell!} (x - c)^\ell$$
Let $\delta = (\rho(k)/(B(200k)^k))^{10k}/k^2$. Note that for all values of $i,j\in [k]$, the input to $f$ in $M_{ij}$ is in the interval $[c,d]$ by the lower bound on $d - c$ in Lemma \ref{lem:step3}. In particular, for all $i,j\in [k]$,

$$M_{ij} = \sum_{\ell=0}^{\infty} \frac{f^{(\ell)}(c)}{\ell!} \delta^{\ell} i^{\ell} j^{\ell}$$
Define two infinite matrices $A: [k]\times \mathbb Z_{\ge 0} \rightarrow \R$ and $C: \mathbb Z_{\ge 0}\times [k] \rightarrow \R$ as follows:

$$A_{i\ell} = \frac{f^{(\ell)}(c)}{\ell!} \delta^{\ell} i^{\ell}$$

$$C_{\ell j} = j^{\ell}$$
for all $i,j\in [k]$ and $\ell\in \mathbb Z_{\ge 0}$. Then

$$M_{ij} = \sum_{\ell=0}^{\infty} A_{i\ell}C_{\ell j}$$
for all $i,j\in [k]$ and converges, so we may apply Lemma \ref{lem:cauchy-binet}. By Lemma \ref{lem:cauchy-binet},

$$\det(M) = \sum_{0\le \ell_1 < \ell_2 < \hdots < \ell_k} \det(A[\ell_1, \ell_2, \hdots, \ell_k]) \det(C[\ell_1, \ell_2, \hdots, \ell_k])$$
To lower bound $|\det(M)|$, we 
\begin{enumerate}[(1)]
    \item lower bound the contribution of the term for the tuple $(\ell_1, \ell_2, \hdots, \ell_k) = (0, 1, \hdots, k-1)$,
    \item upper bound the contribution of every other term,
    \item  show that the lower bound dominates the sum.
\end{enumerate} 
We start with part (1). Let $D$ and $P$ be $k\times k$ matrices, with $D$ diagonal, $D_{\ell\ell} = \frac{f^{(\ell)}(c)\delta^{\ell}}{\ell!}$, and $P_{i\ell} = i^{\ell}$ for all $\ell\in \{0,1,\hdots,k\}$ and $i\in [k]$. Then $A[0,1,\hdots,k-1] = P D$, which means that
\begin{align*}
\det(A[0,1,\hdots,k-1]) = \det(P) \cdot \det(D)
\end{align*}
$P$ and $C[0,1,\hdots,k-1]$ are Vandermonde matrices, so their determinants has a closed form and, in particular, have the property that $|\det(P)| \ge 1$ and $|\det(C[0,1,\hdots,k-1])| \ge 1$. By Lemma \ref{lem:step3}, $|D_{\ell\ell}| > \delta^\ell \rho(\ell) \ge \delta^{\ell}\rho(k)$ for all $\ell\in \{0,1\hdots,k-1\}$. Therefore,
\begin{align*}
|\det(A[0,1,\hdots,k-1])| \cdot |\det(C[0,1,\hdots,k-1])| 
> & ~ \delta^{1 + 2 + \hdots + (k-1)} \rho(k)^k \\
= & ~ \delta^{\binom{k}{2}} \rho(k)^k.
\end{align*}
This completes part (1). Next, we do part (2). Consider a $k$-tuple $0\le \ell_1 < \ell_2 < \hdots < \ell_k$ and a permutation $\sigma: [k]\rightarrow [k]$. By Proposition \ref{prop:derivs}, there is some constant $B > 0$ depending on $f$ for which $|f^{(\ell)}(c)|\le B 10^{\ell}(\ell!)\le B(10\ell)^{\ell}$ for all $\ell$. Therefore,

$$\left|\prod_{i=1}^k A_{i\ell_{\sigma(i)}} \right|\le B^k (10\delta k)^{\sum_{i=1}^k \ell_i}$$
We also get that

$$\left|\prod_{j=1}^k C_{\ell_{\sigma(j)}j}\right|\le k^{\sum_{j=1}^k \ell_j}$$
Summing over all $k!$ permutations $\sigma$ yields an upper bound on the determinants of the blocks of $A$ and $C$, excluding the top block:

\begin{align*}
 & ~ \sum_{0\le \ell_1 < \ell_2 < \hdots < \ell_k, \ell_k\ne k-1} |\det(A[\ell_1,\ell_2,\hdots,\ell_k])| |\det(C[\ell_1,\ell_2,\hdots,\ell_k])| \\
\le & ~ \sum_{0\le \ell_1 < \ell_2 < \hdots < \ell_k, \ell_k\ne k-1} (k!)^2 B^k (10\delta k^2)^{\sum_{i=1}^k \ell_i}\\
\le & ~ \sum_{\tau = 1 + 2 + \hdots + (k-3) + (k-2) + k}^{\infty} \tau^k (k!)^2 B^k (10\delta k^2)^{\tau}\\
\le & ~ 2\tau_0^k (k!)^2 B^k (10\delta k^2)^{\tau_0},
\end{align*}
where $\tau_0 = \binom{k}{2} + 1$. This completes part (2). Now, we do part (3). By Lemma \ref{lem:cauchy-binet},

\begin{align*}
|\det(M)| &\ge |\det(A[0,1,\hdots,k-1])| |\det(C[0,1,\hdots,k-1])|\\
&- \sum_{0\le \ell_1<\ell_2<\hdots<\ell_k, \ell_k\ne k-1} |\det(A[\ell_1,\hdots,\ell_k])| |\det(C[\ell_1,\hdots,\ell_k])|
\end{align*}
Plugging in the part (1) lower bound and the part (2) upper bound yields

\begin{align*}
|\det(M)| &\ge \delta^{\tau_0-1}\rho(k)^k - 2\tau_0^k (k!)^2 B^k (10\delta k^2)^{\tau_0}\\
&= \delta^{\tau_0-1}\left(\rho(k)^k - 2\delta\tau_0^k (k!)^2 B^k (10k^2)^{\tau_0}\right)\\
&> \delta^{\tau_0-1} \rho(k)^k/2\\
&> \rho(k)^k(\rho(k)/(Bk^2(200k)^k))^{10k^3}
\end{align*}
as desired.
\end{proof}

\begin{lemma}[Step 5: Cramer's rule-based bound on error in linear system] \label{lem:step5}
Let $M$ be an invertible $k$ by $k$ matrix with $|M_{ij}|\le B$ for all $i,j\in [k]$. Let $b$ be a $k$-dimensional vector for which $|b_i|\le \epsilon$ for all $i\in [k]$. Then, $\|M^{-1}b\|_{\infty}\le \epsilon k! B^k / |\det(M)|$.
\label{lemma:step5}
\end{lemma}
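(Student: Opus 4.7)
The plan is a direct application of Cramer's rule combined with the Leibniz formula for determinants. Let $x := M^{-1} b \in \R^k$, so $Mx = b$. For each $i \in [k]$, let $M_i$ denote the matrix obtained from $M$ by replacing its $i$-th column with $b$. Cramer's rule then gives $x_i = \det(M_i)/\det(M)$, so it suffices to show $|\det(M_i)| \le \epsilon \cdot k! \cdot B^k$ for each $i$.

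To bound $|\det(M_i)|$, I would expand via the Leibniz formula:
\begin{equation*}
\det(M_i) = \sum_{\sigma \in S_k} \mathrm{sign}(\sigma) \prod_{j=1}^{k} (M_i)_{j, \sigma(j)}.
\end{equation*}
For any fixed permutation $\sigma$, exactly one factor in the product comes from column $i$ of $M_i$ (namely the factor $(M_i)_{\sigma^{-1}(i), i} = b_{\sigma^{-1}(i)}$, which has absolute value at most $\epsilon$), and the remaining $k-1$ factors come from entries of $M$ (each of absolute value at most $B$). Hence each term in the Leibniz sum has absolute value at most $\epsilon \cdot B^{k-1}$, and summing over the $k!$ permutations yields $|\det(M_i)| \le k! \cdot \epsilon \cdot B^{k-1} \le k! \cdot \epsilon \cdot B^k$ (the last step is trivial if $B \ge 1$; otherwise the bound $\epsilon B^{k-1}$ already implies the claim after dividing by $|\det M|$).

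Dividing by $|\det(M)|$ and taking the maximum over $i \in [k]$ gives $\|M^{-1} b\|_\infty = \max_i |x_i| \le \epsilon \cdot k! \cdot B^k / |\det(M)|$, which is the desired bound. There is no real obstacle here: the lemma is essentially just Cramer's rule plus the Hadamard-style Leibniz estimate, and the only mild subtlety is tracking that the column $b$ contributes the factor of $\epsilon$ in every term of the expansion.
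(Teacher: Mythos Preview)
Your proof is correct and follows essentially the same approach as the paper: Cramer's rule, then the Leibniz expansion to bound $|\det(M_i)| \le k!\,\epsilon\,B^{k-1}$ by noting one factor per term comes from the $b$-column. The paper's argument is identical (and, like yours, actually yields the slightly sharper $B^{k-1}$ before relaxing to $B^k$).
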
 

\begin{proof}
Cramer's rule says that, for each $i \in [k]$, the entry $i$ of the vector $M^{-1}b$ is given by $$(M^{-1}b)_i = \frac{\det(M_i)}{\det(M)},$$ where $M_i$ is the matrix which one gets by replacing column $i$ of $M$ by $b$. Let us upper-bound $|\det(M_i)|$. We are given that each entry of $M_i$ in column $i$ has magnitude at most $\eps$, and each entry in every other column has magnitude at most $B$. Hence, for any permutation $\sigma \in S_k$ on $[k]$, we have $$\left| \prod_{j=1}^k (M_i)_{j,\sigma(j)} \right| \leq \eps \cdot B^{k-1},$$
and so $$|\det(M_i)| \leq  \sum_{\sigma \in S_k} \left| \prod_{j=1}^k (M_i)_{j,\sigma(j)} \right| \leq \eps \cdot B^{k-1} \cdot k!.$$ It follows from Cramer's rule that $|(M^{-1}b)_i| \leq \eps \cdot B^{k-1} \cdot k! / |\det(M)|$, as desired.
\end{proof}

\begin{lemma}[Step 6: reduction] \label{lem:step6}
Let $f : [0,1] \to \R$ be an analytic function on $[0,1]$, let $\kappa: \N \to [0,1]$ be a nonincreasing function, and let $\rho(\ell) = \left[\kappa(\ell)^2/(64B\ell 4^{2\ell+1}\cdot \ell!)\right]^{\ell+2}$ for any $\ell\in \N$ (as discussed before). Suppose that, for infinitely many positive integers $k$, there exists an $x_k$ for which $|f^{(k+1)}(x)| > \kappa(k)$.

Suppose that there is an algorithm for $\epsilon$-approximate matrix-vector multiplication by an $n\times n$ $f$-matrix for points in $[0,1]^d$ in $T(n,d,\epsilon)$ time. Then, there is a 
\begin{align*}
n \cdot \poly(d + \log(1/\kappa(d+1)))  + O(d \cdot T(2n,d+1,(\kappa(d+1))^{O(d^4)}))) \end{align*}
time algorithm for exact bichromatic Hamming nearest neighbors on $n$-point sets in dimension $d$.
\end{lemma}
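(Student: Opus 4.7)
The plan is to reduce bichromatic Hamming nearest neighbors on $A, B \subseteq \{0,1\}^d$ (with $|A|=|B|=n$) to $O(d)$ calls of the $f$-matrix multiplier, by engineering a well-conditioned linear system whose solution is the pairwise Hamming distance histogram. First, in $O(nd)$ time, hash the two sets and check whether $A \cap B \neq \emptyset$; if so, output $0$ and halt. Otherwise every bichromatic Hamming distance lies in $\{1, \ldots, d\}$, and defining $c_\ell := |\{(a,b) \in A \times B : \|a-b\|_0 = \ell\}|$ it suffices to recover the vector $\vec{c} = (c_1, \ldots, c_d)$ exactly and return the smallest $\ell$ with $c_\ell > 0$.

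Set $k = d+1$ and let $c, \delta := (\rho(k)/(B(200k)^k))^{10k}/k^2$ be the quantities from Definition~\ref{def:counting_matrix}. For each query index $i \in [d]$ I would define an affine embedding $\phi_i : \{0,1\}^d \to [0,1]^{d+1}$ by
\[ \phi_i(a) := \bigl(\sqrt{\delta i/k^2} \cdot a,\ \sqrt{c}\bigr) \text{ for } a \in A, \qquad \phi_i(b) := \bigl(\sqrt{\delta i/k^2} \cdot b,\ 0\bigr) \text{ for } b \in B. \]
One checks directly that the coordinates lie in $[0,1]$ (since $\delta, c < 1$), and that $\|\phi_i(a) - \phi_i(b)\|_2^2 = c + (\delta i / k^2) \cdot \|a - b\|_0$, which by Lemma~\ref{lem:step3} lies in the good interval $[c,d]$. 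Let $P_i = \phi_i(A) \cup \phi_i(B) \subseteq [0,1]^{d+1}$; invoke the assumed $\eps$-approximate $\KAdjE$ algorithm on $P_i$ with input vector $y = \mathbbm{1}_B$ to get $z^{(i)} \approx A_{f,P_i} y$, and form the scalar
\[ Z_i := \mathbbm{1}_A^\top z^{(i)} \ \approx\ \sum_{\ell=1}^d c_\ell \cdot f\bigl(c + \delta i \ell / k^2\bigr). \]

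This gives the linear system $\vec{Z} \approx M' \vec{c}$, where $M'_{i\ell} = f(c + \delta i \ell / k^2)$ for $i, \ell \in [d]$ is a $d \times d$ principal submatrix of the counting matrix from Definition~\ref{def:counting_matrix}. The same Cauchy--Binet calculation used in Lemma~\ref{lem:step4} (with $k$ replaced by $d$) shows $|\det(M')| \geq \rho(d)^d \cdot (\rho(d)/(Bd^2(200d)^d))^{10 d^3}$. The $\KAdjE$ guarantee gives $\|z^{(i)} - A_{f,P_i} y\|_\infty \leq \eps \cdot w_{\max}$, so $|Z_i - (M'\vec{c})_i| \leq n \eps \cdot w_{\max}$; since $|f|$ is bounded on $[0,1]$ by a constant depending only on $f$, the perturbation vector has $\ell_\infty$ norm $\lesssim n\eps$. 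Applying Lemma~\ref{lem:step5} (with $B$ taken as $\sup_{x \in [0,1]}|f(x)|$) yields
\[ \|\vec{c}_{\mathrm{recovered}} - \vec{c}\|_\infty \ \leq\ \frac{n \eps \cdot d! \cdot B^d}{|\det(M')|}. \]
Unfolding $\rho(d) = [\kappa(d)^2/(64 B d \cdot 4^{2d+1} d!)]^{d+2}$ gives $|\det(M')| \geq \kappa(d)^{\Theta(d^4)}$ (after absorbing the sub-dominant $d!$ and $4^{d}$ factors into the exponent). Thus choosing $\eps = (\kappa(d+1))^{O(d^4)}$ (valid since $\kappa$ is nonincreasing, so $\kappa(d+1) \leq \kappa(d)$) forces the recovery error below $1/2$. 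Rounding each coordinate of $\vec{c}_{\mathrm{recovered}}$ to the nearest integer yields the exact histogram $\vec{c}$, from which the minimum occupied $\ell$ is read off in $O(d)$ time.

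Total runtime: building each $P_i$ and computing each $Z_i$ outside of the multiplier takes $O(nd)$, and the final linear-system postprocessing takes $\poly(d + \log(1/\kappa(d+1)))$, so the non-oracle cost is $n \cdot \poly(d + \log(1/\kappa(d+1)))$; the oracle is invoked $d$ times on $2n$ points in dimension $d+1$ with precision $(\kappa(d+1))^{O(d^4)}$, matching the claimed bound. The main obstacle I anticipate is bookkeeping the precision calibration: one must carefully compose the determinant lower bound from Lemma~\ref{lem:step4} (which is doubly exponential in $k$), the Cramer's-rule blowup $d! B^d$ from Lemma~\ref{lem:step5}, and the $n \cdot w_{\max}$ factor from summing the multiplier's coordinate-wise errors, and then verify that substituting $\rho(k) = [\kappa(k)^2/\cdots]^{k+2}$ really does yield the clean exponent $O(d^4)$ in the precision. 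A secondary subtlety is that the embedding inflates the dimension from $d$ to $d+1$ (to accommodate the additive offset $\sqrt{c}$), which is precisely why the statement uses $\kappa(d+1)$ rather than $\kappa(d)$.
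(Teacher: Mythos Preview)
Your proposal is correct and follows essentially the same approach as the paper. Two small notes: the paper works with the full $(d{+}1)\times(d{+}1)$ counting matrix and a $(d{+}1)$-entry histogram including distance $0$, so that Lemma~\ref{lem:step4} applies verbatim rather than needing to be re-derived for a principal submatrix; and you have a stray $1/k^2$ in your embedding and in $M'_{i\ell}$, since your $\delta$ already absorbs that factor from Definition~\ref{def:counting_matrix}.
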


\begin{proof}
Let $x_1, \ldots, x_n \in \{0,1\}^d$ be the input to the Hamming nearest neighbors problem, so our goal is to compute $\min_{1 \leq i < j \leq n} \|x_i - x_j\|$. Let $t \in \Z_{\geq 0}^{d+1}$ be the 0-indexed vector of nonnegative integers, where $t_\ell := | \{ 1 \leq i < j \leq n \mid \|x_i - x_j\|_2^2 = \ell \} |$ counts the number of pairs of input points with distance $\ell$. Our goal will be to recover the vector $t$, from which we can recover the answer to the Hamming nearest neighbors problem by returning the smallest index where $t$ is nonzero.

Let $k=d+1$ and let $\eps = (\kappa(k))^{\alpha \cdot k^4}$ for a constant $\alpha>0$ to be picked later. In order to recover $t$, we will make $d+1$ calls to our given algorithm. For $\ell \in \{0,1,\ldots,d\}$, the goal of call $\ell$ is to compute a value $u_\ell$ which is an approximation, with additive error at most $\eps$, of entry $\ell$ of the vector $M t$, where $M$ is the counting matrix defined above.

Let us explain why this is sufficient to recover $t$. Suppose we have computed this vector $u$. We claim that if we compute $M^{-1}u$, and round each entry to the nearest integer, the result is the vector $t$. Indeed, by Lemma~\ref{lemma:step5}, each entry of $M^{-1}u$ differs from the corresponding entry of $t$ by at most an additive $\eps \cdot B^{k-1} \cdot k! / |\det(M)|$, where the constant $B$ is from Proposition \ref{prop:derivs} (since $|f(z)|\le B$ for all $z\in [0,1]$). Substituting our lower bound on $|\det(M)|$ from Lemma~\ref{lem:step4}, we see this additive error is at most $1/3$ as long as we've picked a sufficiently large constant $\alpha>0$. Thus, rounding each entry to the nearest integer recovers $t$, as desired.

It remains to show how to compute $u_\ell$, an approximation with additive error at most $\eps$ of entry $\ell$ of the vector $M t$. In other words, we need to approximate the sum $$\sum_{p=0}^d M_{\ell,p} \cdot t_p = \sum_{1 \leq i < j \leq k} f(c + (\rho(k)/(B(200k)^k))^{10k} \cdot \ell \cdot \|x_i - x_j\|_2^2 / k^2).$$ To do this, we will pick points $y_1, \ldots, y_n, z_1, \ldots, z_n \in [0,1]^{d+1}$ such that 
\begin{align*}
\|y_i - z_j\|_2^2 = c + (\rho(k)/(B(200k)^k))^{10k} \cdot \ell \cdot \|x_i - x_j\|_2^2 / k^2
\end{align*}
for all $i,j \in [n]$, and apply our given algorithm with error $\eps$ to these points. For $i \in [n]$, let $x_i' = (\rho(k)/(B(200k)^k))^{5k} \cdot \sqrt{\ell} \cdot x_i / k$ be a rescaling of $x_i$. We pick $y_i$ to equal $x'_i$ in the first $d$ entries and $0$ in the last entry, and $z_i$  to equal $x'_i$ in the first $d$ entries and $\sqrt{c}$ in the last entry. These points have the desired distances, completing the proof.
\end{proof}

\begin{proof}[Step 7: proof of Theorem~\ref{thm:hardnessapprox-easy}]
If the $\KAdjE$ problem for $\k(x,y) = f(\|x-y\|_2^2)$ in dimension $d$ and error $(\kappa(d+1))^{O(d^4)}$ on $n=1.01^d$ points could be solved in time time $n^{2-\delta}$ for any constant $\delta>0$, then one could immediately substitute this into Lemma~\ref{lem:step6} to refute $\SETH$ in light of Theorem~\ref{thm:r18}.
\end{proof}

\subsection{Lower Bounds in Low Dimensions}\label{sec:inner_product_kernel_matrix_vector_hardness}

The landscape of algorithms available in low dimensions $d = o(\log n)$ is a fair bit more complex. The Fast Multipole Method allows us to solve $\KAdjE$ for a number of functions $f$, including $\k (x,y) = \exp ( -  \| x - y \|_2^2 ) $ and $\k (x,y) = 1 / \| x - y \|_2^2$, for which we have a lower bound in high dimensions. Classifying when these multipole methods apply to a function $f$ seems quite difficult, as researchers have introduced more and more tools to expand the class of applicable functions. See Section~\ref{sec:fastmm}, below, in which we give a much more detailed overview of these methods.

That said, in this subsection, we prove lower bounds for a number of functions $\k$ of interest. We show that for a number of functions $\k$ with applications to geometry and statistics, the $\KAdjE$ problem seems to become hard even in dimension $d=3$ (see the end of this subsection for a list of such $\k$).

We begin with the function $\k(x,y) = |\langle x , y \rangle|$. Here, the $\k$ Adjacency Evaluation problem becomes hard even for very small $d$. We give an $n^{1 + o(1)}$ time algorithm only for $d \leq 2$. For $d=3$ we show that such an algorithm would lead to a breakthrough in algorithms for the $\Z$-{\sf MaxIP} problem, and for the only slightly super-constant $d = 2^{\Omega(\log^* n)}$, we show that a $n^{2-\eps}$ time algorithm would refute {\sf SETH}.

\begin{lemma}
For the function $\k(x,y) = |\langle x , y \rangle|$, the $\KAdjE$ problem (Problem~\ref{pro:KAdjE}) can be solved exactly in time $n^{1 + o(1)}$ when $d=2$.
\end{lemma}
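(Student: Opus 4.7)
The plan is to exploit the fact that in $\R^2$ the sign of $\langle x_i, x_j \rangle$ has a very simple angular description. Write each nonzero point in polar form $x_j = (r_j \cos \theta_j, r_j \sin \theta_j)$; the zero points contribute $0$ to every sum and can be discarded. For the query index $i$, the quantity $\langle x_i, x_j \rangle = r_i r_j \cos(\theta_i - \theta_j)$ is positive precisely when $\theta_j$ lies in the open half-circle $(\theta_i - \pi/2, \theta_i + \pi/2) \pmod{2\pi}$, and negative on the complementary half-circle. Thus, letting $P_i$ and $N_i$ denote those two index sets, we have
\begin{align*}
(A_{\k,P} y)_i \;=\; \sum_{j \in P_i} \langle x_i, x_j \rangle\, y_j \;-\; \sum_{j \in N_i} \langle x_i, x_j \rangle\, y_j,
\end{align*}
up to the diagonal correction $-|\langle x_i, x_i\rangle| y_i$, which can be subtracted in $O(1)$ time per query.

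The key step is to evaluate each of these two sums in polylogarithmic time per query. Writing $x_j = (a_j, b_j)$ and $x_i = (c_i, d_i)$, we have $\langle x_i, x_j\rangle\, y_j = c_i (a_j y_j) + d_i (b_j y_j)$, so
\begin{align*}
\sum_{j \in P_i} \langle x_i, x_j \rangle\, y_j \;=\; c_i \sum_{j \in P_i} a_j y_j \;+\; d_i \sum_{j \in P_i} b_j y_j,
\end{align*}
and likewise for $N_i$. The first stage of the algorithm therefore sorts the points by their angle $\theta_j$ in $O(n \log n)$ time and builds the two prefix-sum arrays of $a_j y_j$ and $b_j y_j$ along this sorted order. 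Any contiguous arc in the sorted order can then be summed in $O(1)$.

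For the second stage, we process each query index $i$ as follows: compute $\theta_i$, locate the two endpoints $\theta_i \pm \pi/2$ in the sorted angle list by binary search in $O(\log n)$ time, recover $P_i$ as a contiguous range (possibly split into two ranges if the arc wraps past $2\pi$, handled by doubling the array or by two binary searches), and read off the corresponding partial sums of $a_j y_j$ and $b_j y_j$ in $O(1)$ each. The complementary index set $N_i$ is obtained by subtracting from the total sum. Assembling the answer $(A_{\k,P} y)_i$ then takes $O(1)$ additional arithmetic operations, so the whole second stage costs $O(n \log n)$. All arithmetic is exact over the input numbers (no approximation is introduced), matching the "solved exactly" claim.

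The main obstacle is essentially bookkeeping rather than mathematics: one must carefully handle the wrap-around when the angular arc $(\theta_i - \pi/2, \theta_i + \pi/2)$ crosses the $2\pi$ boundary, the boundary case $\langle x_i, x_j \rangle = 0$ (points $x_j$ exactly orthogonal to $x_i$ can be assigned to either side since they contribute $0$), and the self-term $j = i$, which is removed by a single correction. Nothing in this argument uses a specific property of the $2$-sphere beyond the fact that the positive-inner-product region is a half-circle; it is precisely this $1$-dimensional, contiguous structure that breaks down in $d \geq 3$ and necessitates the lower bounds proved in the subsequent theorems.
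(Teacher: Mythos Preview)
Your proposal is correct and takes essentially the same approach as the paper: both sort the points by polar angle and exploit that the positive-inner-product set for each query is a contiguous arc. The only difference is cosmetic—the paper maintains the two running sums $x^+, x^-$ via an angular sweep with incremental swaps, whereas you use prefix sums and binary search to answer each arc-sum query; both yield $O(n\log n)$ time.
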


\begin{proof}
Given as input $x_1, \ldots, x_n \in \R^2$ and $y \in \R^n$, our goal is to compute $z \in \R^n$ given by $z_i := \sum_{j\neq i} |\langle x_i, x_j\rangle| \cdot y_j$. We will first compute $z'_i := \sum_{j} |\langle x_i, x_j\rangle| \cdot y_j$, and then subtract $|\langle x_i, x_i\rangle| \cdot y_i$ for each $i$ to get $z_i$.

We first sort the input vectors by their polar coordinate angle, and relabel so that $x_1, \ldots, x_n$ are in sorted order. Let $\phi_i$ be the polar coordinate angle of $x_i$. We will maintain two vectors $x^+, x^- \in \R^2$.  We will `sweep' an angle $\theta$ from $0$ to $2\pi$, and maintain that $x^+$ is the sum of the $y_i \cdot x_i$ with $\langle x_i, \theta \rangle > 0$, and $x^-$ is the sum of the other $y_i \cdot x_i$. Initially let $\theta = 0$ and let $x^+$ be the sum of the $y_i \cdot x_i$ with $\phi_i \in [-\pi/2,\pi/2)$, and $x^-$ be the sum of the remaining $y_i \cdot x_i$. As we sweep, whenever $\theta$ is in the direction of an $x_i$ we can set $z'_i = \langle x_i, x^+ - x^-\rangle$. Whenever $\theta$ is orthogonal to an $x_i$, we swap $y_i \cdot x_i$ from one of $x^+$ or $x^-$ to the other. Over the whole sweep, each point is swapped at most twice, so the total running time is indeed $n^{1 + o(1)}$.
\end{proof}

\begin{lemma} \label{lem:solvemaxip}
For the function $\k(x,y) = |\langle x , y \rangle|$, if the $\KAdjE$ problem (Problem~\ref{pro:KAdjE}) with error $1/n^{\omega(1)}$ can be solved in time $\T(n,d)$, and $n>d+1$, then $\Z$-{\sf MaxIP} (Problem~\ref{pro:Z_maxip}) with $d$-dimensional vectors of integer entries of bit length $O(\log n)$ can be solved in time $O(\T(n,d) \log^2 n + nd)$.
\end{lemma}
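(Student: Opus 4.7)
The plan is to reduce $\Z$-${\sf MaxIP}$ to $O(\log n)$ rounds of threshold decisions, each handled by two calls to the $\KAdjE$ oracle. Given $A,B \subseteq \Z^d$ with $O(\log n)$-bit entries, the quantity $\max_{a,b} \langle a,b\rangle$ is an integer in $[-M,M]$ with $M \leq \poly(n)$, so binary searching over integer thresholds takes $O(\log n)$ iterations; each iteration needs to decide whether the count $N(t) := |\{(a,b) \in A \times B : \langle a,b\rangle \geq t\}|$ is positive.

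The key identity is that the convex, piecewise-linear function
\begin{align*}
\Phi(s) \;:=\; \sum_{a \in A,\, b \in B} |\langle a, b\rangle - s|
\end{align*}
has integer breakpoints, so at any non-integer $s$ its derivative is $\Phi'(s) = |A||B| - 2|\{(a,b) : \langle a,b\rangle > s\}|$. Choosing $s \in (t-1,t)$ therefore yields $N(t) = \tfrac12(|A||B| - \Phi'(s))$, and $\Phi'(t - 1/2)$ can be recovered exactly from two nearby function values via $\Phi'(t - 1/2) = 2(\Phi(t - 1/4) - \Phi(t - 3/4))$.

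Evaluating $\Phi(s)$ is reduced to one $\KAdjE$ call by adding a coordinate to encode the shift: set $a'_i := (4a_i, 1) \in \Z^{d+1}$ and $b'_j := (b_j, -4s) \in \Z^{d+1}$, so that $\langle a'_i, b'_j\rangle = 4(\langle a_i,b_j\rangle - s)$ and hence $4\Phi(s) = \sum_{i,j} |\langle a'_i, b'_j\rangle|$. For $s \in \{t - 1/4,\, t - 3/4\}$ the shift $-4s$ is an integer of $O(\log n)$ bits, so all augmented vectors remain valid inputs. Running $\KAdjE$ on $P := \{a'_i\} \cup \{b'_j\}$ against the $\{0,1\}$-indicator of the $B$-block and then summing the output entries over the $A$-block gives an estimate $\widehat\Phi(s)$ whose additive error is at most $n \cdot \eps \cdot w_{\max}$ with $w_{\max} \leq \poly(n)$; since the oracle is promised with $\eps = 1/n^{\omega(1)}$, this is $o(1)$, so the estimated $\tfrac12(|A||B| - 2(\widehat\Phi(t - 1/4) - \widehat\Phi(t - 3/4)))$ is within $1/3$ of the integer $N(t)$ and can be rounded exactly.

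Putting everything together, the binary search performs $O(\log n)$ threshold tests, each using two $\KAdjE$ calls on at most $n$ points in dimension $d+1$ plus $O(n)$ linear-time aggregation; a one-time $O(nd)$ cost sets up the $a'_i$-part of the augmented vectors (only the last coordinate of the $b'_j$'s changes across iterations). This yields the claimed bound of $O(\T(n,d) \log^2 n + nd)$, with one logarithmic factor coming from the binary search and the extra logarithmic slack absorbing the difference between dimension $d$ and $d+1$ together with the overhead of scaling, rounding, and coordinate updates across rounds. The main point to verify carefully is the error analysis: one must check that the oracle's $\eps = 1/n^{\omega(1)}$ guarantee is strong enough to survive (i) summing $n$ matrix-vector entries of magnitude $\poly(n)$ and (ii) subtracting two such sums, so that rounding still recovers the exact integer $N(t)$.
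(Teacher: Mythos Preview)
Your approach is correct and takes a genuinely different route from the paper. Both proofs binary-search over integer thresholds and encode the threshold via an extra coordinate, but the per-step decision is handled differently. The paper, at threshold $a$, uses $\log n$ bit-encoding subsets $S_k$ (so that every pair $i\neq j$ is separated by some $S_k$), augments the last coordinate asymmetrically across the cut, and then compares $s_1=\sum_{i\in S_k,\,j\notin S_k}|\langle x_i',x_j'\rangle|$ (one $\KAdjE$ call) with $s_2=\sum_{i\in S_k,\,j\notin S_k}\langle x_i',x_j'\rangle$ (computed directly from the rank-$(d{+}1)$ structure); equality $s_1=-s_2$ holds iff no crossing pair exceeds the threshold. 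This is where the second $\log n$ factor in the paper's bound actually comes from. Your proof instead exploits the bichromatic input directly, avoiding the $S_k$ trick entirely, and recovers the exact count $N(t)$ from a discrete derivative of the piecewise-linear $\Phi(s)=\sum_{a,b}|\langle a,b\rangle-s|$ via two $\KAdjE$ evaluations. Your version thus uses only $O(\log n)$ oracle calls total (the stated $\log^2 n$ is genuine slack for you), while the paper's uses $\Theta(\log^2 n)$; on the other hand, the paper's sign-comparison $s_1=-s_2$ is a slightly more elementary observation than your derivative trick and, because of the $S_k$ layer, it also immediately handles the monochromatic variant $\max_{i\neq j}\langle x_i,x_j\rangle$. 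Both proofs share the same minor imprecision of calling the dimension-$d$ oracle on $(d{+}1)$-dimensional augmented points.
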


\begin{proof}
Let $x_1, \ldots, x_n \in \Z^d$ be the input vectors. Let $M \leq n^{O(1)}$ be the maximum magnitude of any entry of any input vector. Thus, $\max_{i \neq j} \langle x_i, x_j \rangle$ is an integer in the range $[-dM^2, dM^2]$. We will binary search for the answer in this interval. The total number of binary search steps will be $O(\log(dM^2)) \leq O(\log n)$.

We now show how to do each binary search step. Suppose we are testing whether the answer is $\leq a$, i.e. testing whether $\langle x_i, x_j \rangle \leq a$ for all $i \neq j$. Let $S_1, \ldots, S_{\log n} \subseteq \{ 1, \ldots, n \}$ be subsets such that for each $i \neq j$, there is a $k$ such that $|S_k \cap \{i,j\}| = 1$. For each $k \in \{1, \ldots, \log n\}$ we will show how to test whether there are $i,j$ with $|S_k \cap \{i,j\}| = 1$ such that $\langle x_i, x_j \rangle \leq a$, which will complete the binary search step.

Define the vectors $x'_1, \ldots, x'_n \in \Z^{d+1}$ by $(x'_i)_{j} = (x_i)_{j}$ for $j \leq d$, and $(x'_i)_{d+1} = a-1$ if $i \in P_k$, and $(x'_i)_{d+1} = -1$ if $i \notin P_k$. Hence, for  $i,j$ with $|S_k \cap \{i,j\}| = 1$ we have $\langle x'_i, x'_j \rangle = \langle x_i, x_j \rangle - a+1$, and so our goal is to test whether there are any such $i,j$ with $\langle x'_i, x'_j \rangle > 0$.

Let $v_k \in \{0,1\}^n$ be the vector with $(v_k)_{i} = 1$ when $i \in P_k$ and $(v_k)_{i} = 0$ when $i \notin P_k$. Use the given algorithm to vector $v_k$, we can compute a $(a \pm n^{-\omega(1)})$ approximation to 
\begin{align*}
s_1 := \sum_{i \in P_k} \sum_{j \notin P_k} |\langle x'_i, x'_j \rangle|
\end{align*}
in time $O(\T(n,d))$. Similarly, using the fact that the corresponding matrix has rank $d$ by definition, we can exactly compute 
\begin{align*}
s_2 := \sum_{i \in P_k} \sum_{j \notin P_k} \langle x'_i, x'_j \rangle
\end{align*}
in time $O(nd)$. Our goal is to determine whether $s_1 = -s_2$. Since each $s_1$ and $s_2$ is a polynomially-bounded integer, and we have a superpolynomially low error approximation to each, we can determine this as desired.
\end{proof}

Combining Lemma~\ref{lem:solvemaxip} with Theorem~\ref{thm:maxiphard} we get:

\begin{corollary} \label{cor:lowdimmultabsinnerproduct}
Assuming {\sf SETH}, there is a constant $c$ such that for the function $\k(x,y) = |\langle x , y \rangle|$, the $\KAdjE$ problem (Problem~\ref{pro:KAdjE}) with error $1/n^{\omega(1)}$ and dimension $d = c^{\log^* n}$ vectors of $O(\log n)$ bit entries requires time $n^{2 - o(1)}$.
\end{corollary}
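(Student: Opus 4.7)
The plan is to prove this corollary as a direct consequence of combining Lemma~\ref{lem:solvemaxip} (reducing $\Z$-${\sf MaxIP}$ to $\KAdjE$ for $\k(x,y)=|\langle x,y\rangle|$) with the conditional lower bound of Theorem~\ref{thm:maxiphard} for $\Z$-${\sf MaxIP}$ in nearly constant dimension. Specifically, let $c$ be the constant provided by Theorem~\ref{thm:maxiphard}, so that any exact algorithm for $\Z$-${\sf MaxIP}$ on $n$ vectors of dimension $d = c^{\log^* n}$ with $O(\log n)$-bit integer entries requires time $n^{2-o(1)}$ assuming $\SETH$.

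Now suppose for contradiction that, with this same value of $c$, there is an algorithm solving $\KAdjE$ for $\k(x,y)=|\langle x,y\rangle|$ with error $1/n^{\omega(1)}$ in dimension $d = c^{\log^* n}$ on $n$ points in time $\T(n,d) = O(n^{2-\delta})$ for some constant $\delta > 0$. Since $d = c^{\log^* n} = o(n)$ we have $n > d+1$ for all sufficiently large $n$, so the hypotheses of Lemma~\ref{lem:solvemaxip} are satisfied. Applying Lemma~\ref{lem:solvemaxip} yields an algorithm for $\Z$-${\sf MaxIP}$ in the same dimension $d$ with $O(\log n)$-bit entries running in time
\[
O(\T(n,d) \log^2 n + nd) \;=\; O(n^{2-\delta} \log^2 n) + O(n \cdot c^{\log^* n}) \;=\; n^{2-\delta + o(1)},
\]
which is $n^{2-\Omega(1)}$, contradicting Theorem~\ref{thm:maxiphard}.

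The proof is essentially a two-line reduction chain, so I do not anticipate any technical obstacle: the content of the result already lives in the two quoted statements, and one only needs to verify that the parameter regimes line up (same $c$, same $d = c^{\log^* n}$, same $O(\log n)$ bit length) and that the $\log^2 n$ and $nd$ overhead from Lemma~\ref{lem:solvemaxip} are absorbed into the $n^{o(1)}$ slack of the lower bound. The only subtle point worth double-checking is that the error guarantee $1/n^{\omega(1)}$ in the corollary matches the error hypothesis of Lemma~\ref{lem:solvemaxip}, which it does by definition; with $M \leq n^{O(1)}$ in the proof of that lemma, a superpolynomially small additive error on the $\KAdjE$ output suffices to recover the integer inner-product sums exactly via binary search, so the reduction goes through without loss.
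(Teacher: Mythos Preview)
Your proposal is correct and matches the paper's approach exactly: the paper proves Corollary~\ref{cor:lowdimmultabsinnerproduct} simply by combining Lemma~\ref{lem:solvemaxip} with Theorem~\ref{thm:maxiphard}, which is precisely the reduction chain you spell out. The only minor wrinkle you might note is that the reduction in Lemma~\ref{lem:solvemaxip} internally passes to $(d+1)$-dimensional vectors, but since $c^{\log^* n}+1$ is still of the form $(c')^{\log^* n}$ for a slightly larger constant, this is absorbed into the choice of constant $c$ in the corollary statement.
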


Similarly, combining with Theorem~\ref{thm:maxiplowdim} we get:

\begin{corollary}
For the function $\k(x,y) = |\langle x , y \rangle|$, if the $\KAdjE$ problem (Problem~\ref{pro:KAdjE}) with error $1/n^{\omega(1)}$ and dimension $d = 3$ vectors of $O(\log n)$ bit entries can be solved in time $n^{4/3 - O(1)}$, then we would get a faster-than-known algorithm for $\Z$-${\sf MaxIP}$ (Problem~\ref{pro:Z_maxip}) in dimension $d=3$.
\end{corollary}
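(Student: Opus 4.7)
The plan is to obtain this corollary as an essentially immediate consequence of Lemma~\ref{lem:solvemaxip} (the reduction from $\Z$-${\sf MaxIP}$ to $\KAdjE$ for $\k(x,y) = |\langle x,y\rangle|$) combined with the known upper bound for three-dimensional $\Z$-${\sf MaxIP}$ recorded in Theorem~\ref{thm:maxiplowdim}, which gives the currently best time $O(n^{4/3})$ and has not been improved in decades.

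Concretely, suppose for contradiction (or rather, suppose in the hypothetical) that there is an algorithm solving $\KAdjE$ for $\k(x,y) = |\langle x,y\rangle|$ in dimension $d = 3$ with error $1/n^{\omega(1)}$ in time $\T(n,3) = n^{4/3 - \delta}$ for some constant $\delta > 0$. First I would verify that the hypothesis $n > d+1 = 4$ of Lemma~\ref{lem:solvemaxip} is satisfied for all sufficiently large $n$, which is trivial. Then I would apply Lemma~\ref{lem:solvemaxip} directly with this $\T$ and $d=3$: the conclusion is that $\Z$-${\sf MaxIP}$ on three-dimensional integer vectors with $O(\log n)$-bit entries can be solved in time
\[
O(\T(n,3) \log^2 n + 3n) \;=\; O\!\left(n^{4/3 - \delta} \log^2 n + n\right).
\]
For any fixed $\delta > 0$, this quantity is $o(n^{4/3})$, and in particular it is asymptotically faster than the longstanding $O(n^{4/3})$ upper bound of Matou{\v s}ek and others quoted in Theorem~\ref{thm:maxiplowdim}. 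That is precisely the claimed ``faster-than-known algorithm for $\Z$-${\sf MaxIP}$ in dimension $d=3$,'' so the corollary follows.

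There is essentially no obstacle here: the lemma does all the work. The only thing to double-check is that the error parameter matches, i.e.\ that the hypothesized $\KAdjE$ algorithm achieves the superpolynomially small error $1/n^{\omega(1)}$ required by Lemma~\ref{lem:solvemaxip} (this is stated in the hypothesis of the corollary itself), and that the bit-complexity of the integer entries produced in the reduction inside Lemma~\ref{lem:solvemaxip} remains $O(\log n)$ in dimension three (which is clear since the reduction only appends a single coordinate of magnitude $O(\text{poly}(n))$ encoding the binary-search threshold). Thus the full proof is a one-line substitution into Lemma~\ref{lem:solvemaxip}, analogous to how Corollary~\ref{cor:lowdimmultabsinnerproduct} follows by substituting Theorem~\ref{thm:maxiphard} into the same lemma.
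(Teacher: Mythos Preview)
The proposal is correct and takes essentially the same approach as the paper, which simply notes that the corollary follows by combining Lemma~\ref{lem:solvemaxip} with Theorem~\ref{thm:maxiplowdim}. Your check that $n^{4/3-\delta}\log^2 n + n = o(n^{4/3})$ is exactly the substitution the paper has in mind.
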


The same proof, but using Theorem~\ref{thm:lownnhard} instead of Theorem~\ref{thm:maxiphard}, can also show hardness of thresholds of distance functions:

\begin{corollary} \label{cor:lowdimmultabsthresholddist}
Corollary~\ref{cor:lowdimmultabsinnerproduct} also holds for the function $\k(x,y) = \mathrm{TH}(\|x-y\|_2^2)$, where $\mathrm{TH}$ is any threshold function (i.e. $\mathrm{TH}(z) = 1$ for $z \geq \theta$ and $\mathrm{TH}(z) = 0$ otherwise, for some $\theta \in \R_{>0}$).
\end{corollary}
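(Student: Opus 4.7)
The plan is to mimic the proof structure of Corollary~\ref{cor:lowdimmultabsinnerproduct}, but to replace the reduction source with exact bichromatic $\ell_2$-closest pair (Theorem~\ref{thm:lownnhard}). Given an instance $A, B \subseteq \Z^d$ with $|A| = |B| = n$, $O(\log n)$-bit integer entries, and $d = c^{\log^* n}$, every squared pairwise distance $\|a-b\|_2^2$ is a nonnegative integer bounded by some $N \leq n^{O(1)}$. I would binary search over $\theta \in \{0,1,\ldots,N\}$ for the minimum $\|a-b\|_2^2$ over $(a,b) \in A \times B$, reducing the task to $O(\log n)$ decision queries of the form ``is there a pair $(a,b) \in A \times B$ with $\|a-b\|_2^2 < \theta$?''

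For each query with parameter $\theta$, instantiate the kernel $\k_\theta(x,y) := \mathrm{TH}_\theta(\|x-y\|_2^2)$, which equals $1$ when $\|x-y\|_2^2 \geq \theta$ and $0$ otherwise. Take $P = A \sqcup B$ (as $2n$ labeled points in $\R^d$) and let $v \in \{0,1\}^{2n}$ be the indicator of the $B$-copies. Call the hypothesized $\KAdjE$ algorithm on $(P,v)$ with error $\eps = 1/n^{3}$, obtaining $\tilde b \in \R^{2n}$ with $\|\tilde b - A_{\k_\theta,P}\,v\|_\infty \leq \eps \cdot w_{\max} \leq \eps$. Then compute
\begin{align*}
    s \;:=\; \sum_{a \in A} \tilde b_a,
\end{align*}
which approximates $\sum_{a \in A,\, b \in B} \k_\theta(a,b) = \bigl|\{(a,b)\in A\times B : \|a-b\|_2^2 \geq \theta\}\bigr|$ with total additive error at most $n\eps < 1/2$. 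Since the exact count is an integer, thresholding $s$ at $n^2 - 1/2$ unambiguously decides whether every $A$-$B$ pair has squared distance $\geq \theta$, answering the query.

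The total running time would be $O(\log n) \cdot \T(n,d,1/n^3) + O(nd\log n)$, where $\T$ is the cost of the assumed $\KAdjE$ algorithm; an $n^{2-\delta}$-time algorithm in dimension $d = c^{\log^* n}$ would then yield an $\widetilde O(n^{2-\delta})$-time algorithm for exact bichromatic $\ell_2$-closest pair in that same dimension, contradicting Theorem~\ref{thm:lownnhard} under $\SETH$ and giving the stated $n^{2-o(1)}$ lower bound.

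The main point to control is error accumulation across the summation: after summing $n$ approximate entries, the total additive error must remain strictly below $1/2$ so that rounding identifies the exact count. This is immediate here because $w_{\max} \leq 1$ and $\|v\|_\infty = 1$, so the per-coordinate error is at most $\eps = 1/n^3$, and the $\eps = 1/n^{\omega(1)}$ regime stipulated in the corollary is more than enough. Unlike Lemma~\ref{lem:solvemaxip}, no dimension-augmenting trick or auxiliary low-rank computation is needed, because the threshold kernel already encodes the decision directly; hence the ambient dimension $d = c^{\log^* n}$ is preserved exactly as in the input closest-pair instance, which is precisely the parameter regime in which Theorem~\ref{thm:lownnhard} provides hardness.
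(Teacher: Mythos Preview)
Your approach is correct and is essentially what the paper's one-line sketch (``the same proof as Lemma~\ref{lem:solvemaxip}, but using Theorem~\ref{thm:lownnhard}'') unwinds to for this kernel: binary search over the squared distance, and for each candidate value use a single $\KAdjE$ call with the indicator vector of $B$ to count bichromatic pairs above the threshold, then round. Because the threshold kernel already \emph{is} the decision indicator, you rightly observe that neither the $S_k$ partitioning nor the auxiliary low-rank sum $s_2$ from Lemma~\ref{lem:solvemaxip} is needed; the bichromatic structure of Theorem~\ref{thm:lownnhard} lets you use $v=\mathbf{1}_B$ directly.

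There is one small gap. The corollary is stated for a \emph{fixed} threshold function $\mathrm{TH}$ (i.e.\ a fixed $\theta$), whereas your binary search instantiates a different kernel $\k_{\theta}$ at each step. As written, you have only shown that an algorithm working for \emph{all} thresholds simultaneously would break $\SETH$. The fix is immediate and does not change the dimension: to simulate threshold $\theta'$ using the fixed-$\theta$ oracle, rescale every input point by $\sqrt{\theta/\theta'}$ before calling $\KAdjE$, so that $\|x-y\|_2^2 \ge \theta'$ in the original coordinates iff the rescaled pair satisfies $\|\cdot\|_2^2 \ge \theta$. Since $\theta'$ ranges over $O(\log n)$-bit integers, the rescaled points still have polylogarithmic bit complexity, the dimension stays $d=c^{\log^*n}$, and your error and runtime analysis go through unchanged. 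This is the distance analogue of the coordinate-appending trick in Lemma~\ref{lem:solvemaxip}; mention it and the proof is complete.
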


Using essentially the same proof as for Lemma~\ref{lem:spars-reduc} in Section~\ref{sec:hardnessnonlip}, we can further extend Corollary~\ref{cor:lowdimmultabsthresholddist} to any non-Lipschitz functions $f$:

\begin{proposition}\label{prop:low-mult-hard}
Suppose $f : \R_+ \to \R$ is any function which is not $(C,L)$-multiplicatively Lipschitz for any constants $C,L \geq 1$. Then, assuming {\sf SETH}, the $\KAdjE$ problem (Problem~\ref{pro:KAdjE}) with error $1/n^{\omega(1)}$ and dimension $d = c^{\log^* n}$ requires time $n^{2 - o(1)}$.
\end{proposition}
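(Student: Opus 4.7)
The plan is to mirror Lemma~\ref{lem:spars-reduc}---the sparsification-side hardness for non-multiplicatively-Lipschitz kernels---replacing the sparsifier construction by a single invocation of $\KAdjE$ that directly extracts an $A$--$B$ cut weight. I reduce from exact bichromatic $\ell_2$-closest pair on sets $A, B \subseteq \Z^d$ each of size $n$ with $O(\log n)$-bit coordinates and $d = c^{\log^* n}$, which needs $n^{2-o(1)}$ time under $\SETH$ by Theorem~\ref{thm:lownnhard}. Squared distances between $A \times B$ pairs are positive integers in $[1, n^{O(1)}]$, so binary-searching the threshold $k$ reduces the closest pair computation to $O(\log n)$ decisions of the form ``is $\min_{a, b}\|a-b\|_2^2 \le k$?'', each of which I would handle with a single $\KAdjE$ call.

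For a fixed threshold $k$, I apply the hypothesis with the specific choice $C = 1 + 1/k$ and a polylogarithmically large $L$ chosen so that $C^L \ge n^{\omega(1)}$. The hypothesis yields $x_f > 0$ and $\rho \in (1/C, C)$ with $|\log(f(\rho x_f)/f(x_f))| > L \log C$; after swapping $(x_f, \rho) \leftrightarrow (\rho x_f, 1/\rho)$ if needed I assume $\rho > 1$, and I would handle the two sign cases (``$f$ drops'' versus ``$f$ rises'' at $x_f$) by a symmetric argument, fixing (say) the drop case $f(\rho x_f) \le n^{-\omega(1)} f(x_f)$. Rescaling $A \cup B$ by $s = \sqrt{x_f / k}$ sends squared distance $k$ to $x_f$ and any squared distance $\ge k+1$ to at least $(1 + 1/k)\, x_f \ge \rho\, x_f$. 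A single $\KAdjE$ call on the rescaled set with query vector $\mathbf{1}_A$ followed by summing the output coordinates indexed by $B$ yields an $n^{-\omega(1)} w_{\max}$-additive estimate of the cut weight $W = \sum_{(a,b) \in A \times B} f(s^2 \|a-b\|_2^2)$. In the ``yes'' case some pair contributes at least $f(x_f)$ to $W$; in the ``no'' case every term is at most $f(\rho x_f) \le n^{-\omega(1)} f(x_f)$, so $W \le n^{2 - \omega(1)} f(x_f)$. Thresholding at $f(x_f)/2$ then distinguishes the two outcomes.

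Over all $O(\log n)$ threshold queries this uses $O(\log n)$ matrix-vector multiplications, so an $n^{2 - \Omega(1)}$-time $\KAdjE$ algorithm would refute Theorem~\ref{thm:lownnhard} and hence $\SETH$. The main technical obstacle, inherited from Lemma~\ref{lem:spars-reduc}, is that for a general possibly non-monotone $f$ the quantity $w_{\max} = \max_{u,v \in P} |f(s^2 \|u-v\|_2^2)|$ may not be bounded in terms of $f(x_f)$ alone, because scaled distances range over $n^{O(1)}$ discrete values at each of which $|f|$ could be large; this will be addressed by taking $\epsilon = 1/(w_{\max} \cdot n^{\omega(1)})$, still allowed by the $1/n^{\omega(1)}$ error budget since $w_{\max}$ depends only on $f$ evaluated at a $\poly(n)$-sized discrete set of scaled distances and is a fixed polynomial quantity. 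A secondary subtlety is that $C = 1 + 1/k$ is not a fixed constant across the binary search; however, the hypothesis is applied for each individual fixed pair $(C, L)$ separately, so the witness $(x_f, \rho)$ is allowed to depend on the query $k$ without weakening the assumption.
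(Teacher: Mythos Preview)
Your approach is exactly the paper's intended one: the paper says ``essentially the same proof as for Lemma~\ref{lem:spars-reduc},'' and you correctly translate that template by replacing the sparsifier call with a single $\KAdjE$ evaluation of the $A$--$B$ cut weight (via $y=\mathbf{1}_A$), then thresholding, all inside a search over the discrete distance set coming from Theorem~\ref{thm:lownnhard}.

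There is, however, a genuine gap in your write-up around non-monotone $f$. You assert that in the ``no'' case ``every term is at most $f(\rho x_f)$,'' but you only know $f(\rho x_f)$ is tiny at the single point $\rho x_f$; the scaled squared distances in the ``no'' case range over $\{(k+1)x_f/k,\,(k+2)x_f/k,\ldots\}$, and nothing stops $f$ from being huge at any of those. Symmetrically, in the ``yes'' case you claim ``some pair contributes at least $f(x_f)$,'' but that pair's scaled squared distance lies in $\{x_f/k,\ldots,x_f\}$ and could land where $f$ is small. Your ``two sign cases'' remark handles only whether $f$ jumps up or down at $x_f$, not the behavior at the other $O(\poly(n))$ discrete distance values. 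Lemma~\ref{lem:spars-reduc} itself avoids this by assuming $f$ is decreasing; the paper's one-line extension to Proposition~\ref{prop:low-mult-hard} is silent on how to drop that assumption, so you are inheriting an under-specified step rather than introducing a new error---but your proposal as written does not fill it in. Relatedly, your $w_{\max}$ discussion is not the real obstacle: the issue is not the absolute size of $w_{\max}$ (you are right that $1/n^{\omega(1)}$ absorbs any fixed ratio), but rather that without monotonicity the cut weight $W$ in the ``yes'' and ``no'' cases need not be separated at all, regardless of how small $\eps$ is.
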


\subsection{Hardness of the \texorpdfstring{$n$}{}-Body Problem}

We now prove Corollary~\ref{cor:nbodyhardness} from the Introduction, showing that our hardness results for the $\KAdjE$ problem (Problem~\ref{pro:KAdjE}) also imply hardness for the $n$-body problem.

\begin{corollary}[Restatement of Corollary~\ref{cor:nbodyhardness}]
Assuming $\SETH$, there is no 
\begin{align*}
\poly( d,\log(\alpha) ) \cdot n^{1+o(1)}
\end{align*}-time algorithm for one step of the $n$-body problem. 
\end{corollary}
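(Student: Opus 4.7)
The plan is to reduce the $\KAdjE$ problem for a carefully chosen kernel $\k(u,v) := (\|u-v\|_2^2 + 1)^{-3/2}$ to a single step of the $n$-body problem by means of a dimension-lifting trick. If a $\poly(d,\log\alpha)\cdot n^{1+o(1)}$-time $n$-body algorithm in dimension $d+1$ existed, the reduction would yield a $\poly(d,\log\alpha)\cdot n^{1+o(1)}$-time $\KAdjE$ algorithm for this $\k$ in dimension $d$, which we will rule out under $\SETH$ using Theorem~\ref{thm:hardnessapprox-easy}.

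The reduction proceeds as follows. Given a $\KAdjE$ input $x_1,\ldots,x_n \in \R^d$ and $y \in \{0,1\}^n$, build an $n$-body instance with at most $2n$ particles in $\R^{d+1}$: for every $i\in[n]$ place a probe particle $B_i := (x_i, 1)$ of unit mass, and for every $j$ with $y_j = 1$ place a source particle $A_j := (x_j, 0)$ of unit mass. Since every probe lies at height $1$, probe-probe forces contribute nothing to the $(d+1)$-th coordinate of the net force on $B_i$; meanwhile each source $A_j$ contributes a $-G_{\text{grav}}\,(\|x_i-x_j\|_2^2 + 1)^{-3/2}$ term to that coordinate. Hence the vertical component of the force on $B_i$ equals
\[
F_{B_i}^{(d+1)} = -G_{\text{grav}} \sum_{j=1}^n y_j \,(\|x_i-x_j\|_2^2 + 1)^{-3/2},
\]
which, after dividing by $-G_{\text{grav}}$ and subtracting the self-term in $O(n)$ extra time, produces entry $i$ of $A_{\k,P}\cdot y$. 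Pairwise distances in the lifted instance change by at most a constant factor, so $\log\alpha$ grows by only $O(1)$.

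It remains to verify that $\KAdjE$ for this $\k$ in dimension $d=\Theta(\log n)$ is hard under $\SETH$. The function $f(z) = (z+1)^{-3/2}$ is analytic on $[0,1]$ with
\[
f^{(k)}(z) = (-1)^k \frac{(2k+1)!}{2^{2k}\,k!}(z+1)^{-(2k+3)/2},
\]
so $|f^{(k+1)}(0)| = (2k+3)!/(2^{2k+2}(k+1)!)$, which grows super-polynomially in $k$ and in particular exceeds $1/2$ for every $k\ge 0$. Choosing the constant (hence nonincreasing) witness $\kappa(k)\equiv 1/2$, Theorem~\ref{thm:hardnessapprox-easy} implies that, assuming $\SETH$, no algorithm can solve $\KAdjE$ for $\k$ in dimension $d=\Theta(\log n)$ with additive error $2^{-O(d^4)} = 2^{-\polylog(n)}$ in time $n^{2-o(1)}$.

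The main delicate step is the error bookkeeping linking the two problems: an $n$-body oracle returning each force coordinate to additive accuracy $\eps'$ yields, after division by the fixed constant $G_{\text{grav}}$, a $\KAdjE$ output with additive error $O(\eps')$; since both $w_{\max}$ for our $\k$ and $\|y\|_\infty$ are at most $1$, achieving the $2^{-\polylog(n)}$ threshold demanded by the lower bound requires only the analogous precision from the $n$-body routine. A subsidiary convention issue is that $n$-body formulations usually require strictly positive masses, which is handled either by omitting $A_j$ when $y_j = 0$ or by substituting an exponentially small mass, neither of which affects the running time. Combining the reduction with the lower bound then rules out any $\poly(d,\log\alpha)\cdot n^{1+o(1)}$-time algorithm for one step of the $n$-body problem in dimension $d=\Theta(\log n)$ under $\SETH$, yielding Corollary~\ref{cor:nbodyhardness}.
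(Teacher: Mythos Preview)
Your proof is correct and follows essentially the same approach as the paper: both reduce from multiplication in the $\k$-graph for $f(z)=(1+z)^{-3/2}$ to one step of $n$-body via the same dimension-lifting trick (reading off the $(d{+}1)$-th coordinate of the force). The only cosmetic differences are that you reduce from $\KAdjE$ with up to $2n$ particles (all probes at height $1$, sources at height $0$) whereas the paper reduces from $\KLapE$ with exactly $n$ particles (each at height $y_x$), and you invoke Theorem~\ref{thm:hardnessapprox-easy} via an explicit derivative computation where the paper cites Theorem~\ref{thm:informal-high} directly.
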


\begin{proof}
We reduce from the $\k$ graph Laplacian multiplication problem, where $\k(u,v) = f(\|u - v\|_2^2)$ and $f(z) = \frac{1}{(1 + z)^{3/2}}$. A $\k$ graph Laplacian multiplication instance consists of the $\k$ graph $G$ on a set of $n$ points $X\subseteq \mathbb{R}^d$ and a vector $y\in \{0,1\}^n$ for which we wish to compute $L_Gy$. Think of $y$ as vector with coordinates in the set $X$. Compute this multiplication using an $n$-body problem as follows:

\begin{enumerate}
    \item For each $b\in \{0,1\}$, let $X_b = \{x\in X ~|~ y_x = b\}$. Let $Z\subseteq \mathbb{R}^{d+1}$ be the set of all $(x,0)$ for $x\in X_0$ and $(x,1)$ for $x \in X_1$.
    \item Solve the one-step $n$-body problem on $Z$ with unit masses. Let $z \in \mathbb{R}^n$ be the vector of the $(d+1)$-th coordinate of these forces, with forces negated for coordinates $x \in X_0$.
    \item Return $-z/G_{\text{grav}}$ (Note that $G_{\text{grav}}$ is the Gravitational constant)
\end{enumerate}

We now show that $z = L_G y$. For $x\in X_b$, $(L_G y)_{x} = (-1)^{1-b} \sum_{x'\in X_{1-b}} \k(x,x')$. We now check that $z_x$ is equal to this by going through pairs $\{x,x'\}$ individually. Note that the $(d+1)$-th coordinate of the force between $(x,b)$ and $(x',b)$ is 0. The $(d+1)$-th coordinate of the force exerted by $(x',1)$ on $(x,0)$ is 
\begin{align*}
\frac{ G_{\text{grav}} }{\|(x,0)-(x',1)\|_2^2} \cdot \frac{(1-0)}{\|(x,0)-(x',1)\|_2} = G_{\text{grav}} \cdot \k(x,x')
\end{align*}
Negating this gives the force exerted by $(x',0)$ on $(x,1)$. All of these contributions agree with the corresponding contributions to the sum $(L_Gy)_x$, so $-z/G_{\text{grav}} = L_G \cdot y$ as desired.

The runtime of this reduction is $O(n)$ plus the runtime of the $n$-body problem. However, Theorem \ref{thm:informal-high} shows that no almost-linear time algorithm exists for $\k$-Laplacian multiplication, since $f$ is not approximable by a polynomial with degree less than $\Theta(\log n)$. Therefore, no almost-linear time algorithm exists for $n$-body either assuming $\SETH$, as desired.
\end{proof}

\subsection{Hardness of Kernel PCA} \label{subsec:kernelPCA}

For any function $\k : \R^d \times \R^d \to \R$, and any set $P = \{x_1, \ldots, x_n\} \subseteq \R^d$ of $n$ points, define the matrix $K_{\k,P} \in \R^{n \times n}$ by 
\begin{align*}
K_{\k,P}[i,j] = \k(x_i, x_j)
\end{align*}

$A_{\k,P}$ and $K_{\k,P}$ differ only on their diagonal entries, so a $n^{1 + o(1)}$ time algorithm for multiplying by one can be easily converted into such an algorithm for the other. Kernel PCA studies 

\begin{problem}[$\k$ PCA]\label{pro:KPCA}
For a given function $\k : \R^d \times \R^d \to \R$, the $\k$ PCA problem asks: Given as input a set $P =\{x_1, \ldots, x_n\} \subseteq \R^d$ with $|P|=n$, output the $n$ eigenvalues of the matrix $(I_n-J_n) \times K_{\k,P} \times (I_n - J_n)$, where $J_n$ is the $n \times n$ matrix whose entries are all $1/n$. In $\eps$-approximate $\k$ PCA, we want to return a $(1 \pm \eps)$-multiplicative approximation to each eigenvalue.
\end{problem}

We can now show a general hardness result for $\k$ PCA:

\begin{theorem}[Approximate] \label{thm:hardnessapproxpca}
For every function $f:\mathbb{R}\rightarrow \mathbb{R}$ which is equal to a Taylor expansion $f(x) = \sum_{i=0}^{\infty} c_i x^i$ on an interval $(0,1)$, if $f$ is not $\eps$-approximated by a polynomial of degree $O(\log n)$ (Definition~\ref{def:closetopoly}) on an interval $(0,1)$ for $\eps = 2^{-\log^4 n}$, then, assuming {\sf SETH}, the $\eps$-approximate $\k$ PCA problem (Problem~\ref{pro:KAdjE}) in dimension $d = O(\log n)$ requires time $n^{2-o(1)}$.
\end{theorem}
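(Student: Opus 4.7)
The plan is to reduce from bichromatic Hamming nearest neighbors (cf.\ Theorem~\ref{thm:r18}) to $\eps$-approximate $\k$ PCA, leveraging the reduction framework of Lemma~\ref{lem:step6} that was used to prove Theorem~\ref{thm:hardnessapprox}. The key observation is that a single $\k$ PCA call suffices to approximate $\mathbf{1}^\top K_{\k,P}\mathbf{1}=\sum_{i,j}\k(x_i,x_j)$. Indeed, since $I_n-J_n$ is idempotent (being the orthogonal projection onto $\mathbf{1}^\perp$), the cyclic property of the trace gives
\begin{align*}
\sum_i \lambda_i(M)=\tr(M)=\tr\bigl(K_{\k,P}(I_n-J_n)\bigr)=\tr(K_{\k,P})-\frac{1}{n}\,\mathbf{1}^\top K_{\k,P}\,\mathbf{1},
\end{align*}
where $M:=(I_n-J_n)K_{\k,P}(I_n-J_n)$. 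Since $\tr(K_{\k,P})=n\cdot f(0)$ is trivially computable in $O(n)$ time, summing the reported eigenvalues lets us recover $\mathbf{1}^\top K_{\k,P}\mathbf{1}$, which is the only quantity the counting-reduction in Lemma~\ref{lem:step6} extracts from each $\KAdjE$ call.

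Given a Hamming nearest neighbor instance, I would construct exactly the same $d+1$ affinely rescaled point sets $P^{(\ell)}\subset[0,1]^{d+1}$ as in Lemma~\ref{lem:step6}, so that $\sum_{i,j}\k(x,y)$ over $P^{(\ell)}$ (up to trivial additive terms) equals entry $\ell$ of the counting-matrix product $Mt$, where $M$ is the counting matrix of Definition~\ref{def:counting_matrix} and $t$ is the vector of pair-distance counts. Running the hypothetical PCA algorithm on each $P^{(\ell)}$ and extracting $\mathbf{1}^\top K_{\k,P^{(\ell)}}\mathbf{1}$ via the identity above yields a linear system in $t$; the Cauchy-Binet determinant lower bound (Lemma~\ref{lem:step4}) together with the Cramer's rule argument (Lemma~\ref{lem:step5}) then recovers $t$ exactly, thereby solving Hamming nearest neighbors. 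Because Lemma~\ref{lem:step6} uses a bipartite construction (auxiliary points $y_i$ and $z_j$) so that $\KAdjE$ isolates only cross-part kernel contributions, I would run three PCA instances per $\ell$ (on $\{y_i\}$, on $\{z_j\}$, and on their union), then take a linear combination to isolate $\sum_{i,j}\k(y_i,z_j)$; this costs only a constant-factor blowup in the total number of PCA calls.

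The main obstacle will be the approximation accounting. A $(1\pm\eps)$ multiplicative error per eigenvalue incurs an additive error of at most $\eps\sum_i |\lambda_i(M)|\le \eps\sqrt{n}\,\|M\|_F\le \eps n^{3/2}\max_{i,j}|K_{\k,P}[i,j]|$ on $\tr(M)$, hence on the recovered value of $\mathbf{1}^\top K_{\k,P}\mathbf{1}$. For $\eps=2^{-\log^4 n}$ and $f$ analytic (hence bounded) on $(0,1)$, this additive error is comfortably below the threshold required by Lemma~\ref{lem:step5} when one instantiates the framework of Theorem~\ref{thm:hardnessapprox} with $\kappa(k)=2^{-\log^4 n}$; the delicate point is verifying that the intermediate scaling factor $(\rho(k)/(B(200k)^k))^{10k}$ arising in the construction of $P^{(\ell)}$ keeps every kernel argument inside $(0,1)$ so that $\max_{i,j}|K_{\k,P^{(\ell)}}[i,j]|$ is bounded by an absolute constant and the error budget is preserved. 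Once this is checked, any $n^{2-\delta}$-time algorithm for $\eps$-approximate PCA would, via the $O(d)=O(\log n)$ calls above, yield an $n^{2-\delta+o(1)}$-time algorithm for bichromatic Hamming closest pair in dimension $O(\log n)$, contradicting $\SETH$ through Theorem~\ref{thm:r18}.
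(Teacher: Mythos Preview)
Your proposal is correct and follows essentially the same approach as the paper. The paper's proof sketch uses exactly the trace identity you wrote down, $\tr((I_n-J_n)K_{\k,P}(I_n-J_n))=\tr(K_{\k,P})-\tfrac{1}{n}\mathbf{1}^\top K_{\k,P}\mathbf{1}$, observes that $\tr(K_{\k,P})$ is trivially computable, and then appeals directly to the fact that the proof of Theorem~\ref{thm:hardnessapprox} only ever uses $\KAdjE$ to estimate sums of the form $\mathbf{1}^\top A_G\mathbf{1}$; the paper defers the remaining details to \cite[Section~5]{bcis18}. Your proposal goes further in spelling out how to isolate the cross-part sums in the bipartite construction of Lemma~\ref{lem:step6} via three PCA calls and in tracking the additive error $\eps\sum_i|\lambda_i|\le\eps\sqrt{n}\,\|M\|_F$, but these are exactly the details the paper elides rather than a different route.
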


\begin{proof}[Proof Sketch]
Theorem~\ref{thm:hardnessapproxpca} follows almost directly from Theorem~\ref{thm:hardnessapprox} when combined with the reduction from \cite[{Section~5}]{bcis18}. The idea is as follows: Suppose we are able to estimate the $n$ eigenvalues of $(I_n-J_n) \times K_{\k,P} \times (I_n - J_n)$. Then, in particular, we can estimate their sum, which is equal to:
$$\tr((I_n-J_n) \times K_{\k,P} \times (I_n - J_n)) = \tr(K_{\k,P} \times (I_n - J_n)^2)  = \tr(K_{\k,P} \times (I_n - J_n)) = \tr(K_{\k,P}) - S(K_{\k,P})/n,$$
where $S(K_{\k,P})$ denotes the sum of the entries of $K_{\k,P}$. We can compute $\tr(K_{\k,P})$ exactly in time $n^{1+o(1)}$, so we are able to get an approximation to $S(K_{\k,P})$. However, in the proof of Theorem~\ref{thm:hardnessapprox}, we showed hardness for approximating $S(K_{\k,P})$, which concludes our proof sketch.
\end{proof} %%% Section 5 Matrix-Vector Multiplication
\section{Sparsifying Multiplicatively Lipschitz Functions in Almost Linear Time}\label{sec:sparsify-lipschitz}

\iffalse
Given $n$ points, let $\kappa$ denote the value of the maximum distance between any pair, divided by the minimum distance, i.e.,
\begin{align*}
    \kappa = \frac{ \max_{u, v \in P} \|u-v\|_2 }{ \min_{u, v \in P} \|u-v\|_2 }.
\end{align*}
Throughout this section, we state our bounds in terms of $\kappa$, since this quantity fits nicely into our proofs. We will soon see how $\kappa$ relates to $\alpha$, where $\alpha$ is defined in the introduction (Section~\ref{sec:intro}) as 
\begin{align*}
\alpha = \frac{\max_{u, v \in P} f(\|u-v\|_2^2)}{\min_{u, v \in P} f(\|u-v\|_2^2)}.
\end{align*}

\paragraph{$\kappa$ vs $\alpha$. }  If $f(x) = 1$ for all $x$, then $\kappa$ can be arbitrarily large, while $\alpha$ is $1$. In this case, sparsifying the $f$ graph is easy, but there may be cases where $\alpha$ is bounded but $\kappa$ is not, where sparsifying $f$ graph could still be hard.
\fi

Given $n$ points, let $\alpha$ denote 
\begin{align*}
\alpha = \frac{\max_{u, v \in P} (\|u-v\|_2^2)}{\min_{u, v \in P} (\|u-v\|_2^2)}.
\end{align*}

In this section, we give an algorithm to compute sparsifiers for a large class of kernels $\k$ in almost linear time in $nd$, with logarithmic dependency on $\alpha$ and $1/\eps^2$ dependence on $\eps$. When $d = \log n$, our algorithm runs in almost linear time in $n$.  
To formally state our main theorem, we define multiplicatively Lipschitz functions:
\begin{definition}\label{def:mult-lip}
Let $C \geq 1$ and $L \geq 1$. A function $f: \mathbb{R}_{\geq 0} \rightarrow \mathbb{R}_{\geq 0}$ is $(C, L)$-multiplicatively Lipschitz iff for all $c \in [1/C, C]$, we have:
  \begin{align*}
  \frac{1}{C^L} < \frac{f(cx)}{f(x)} < C^L, \forall x \in \mathbb{R}_{\geq 0}.
  \end{align*}
\end{definition}

\textbf{Examples:} Any polynomial with non-negative coefficients and maximum degree $q$ is $(1+\eps, q)$ multiplicatively Lipschitz for any $\eps > 0$. The function $f(x) = 1$ when $x < 1$ and $f(x) = 2$ when $x \geq 1$ is $(2, 1)$ multiplicatively Lipschitz.

The following lemma is a simple consequence of our definition of multiplicatively Lipschitz functions:
\begin{lemma}\label{lem:lip} 
Let $C \geq 1$ and $L > 0$. Any function $f : \R_{\geq 0} \rightarrow \R_{\geq 0}$ that is $(C, L)$-multiplicatively Lipschitz satisfies for all $c \in (0,1/C) \cup [C, +\infty)$ :
\begin{align*}
\frac{1}{c^{2L}} < \frac{f(cx)}{f(x)} < c^{2L} .
\end{align*}
\end{lemma}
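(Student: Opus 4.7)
The plan is to bootstrap the definition of $(C,L)$-multiplicatively Lipschitz, which only controls $f(\rho x)/f(x)$ for $\rho$ in the interval $[1/C, C]$, into a bound that works for all $\rho$ outside that interval, by iterating the hypothesis a suitable number of times.

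First I would handle the case $c \geq C$. I would pick $k = \lfloor \log_C c \rfloor \geq 1$, so that $C^k \leq c < C^{k+1}$, and write
\begin{align*}
f(cx) \;=\; f\!\left(\tfrac{c}{C^k}\cdot C^k x\right),
\end{align*}
where the factor $c/C^k$ lies in $[1,C] \subseteq [1/C,C]$. Then I would apply the Lipschitz hypothesis once to absorb this factor, and then $k$ more times, each with $\rho = C$, peeling off one power of $C$ at a time from $C^k x$ down to $x$. This chain yields
\begin{align*}
C^{-(k+1)L}\, f(x) \;\leq\; f(cx) \;\leq\; C^{(k+1)L}\, f(x).
\end{align*}
The final step is to convert $C^{(k+1)L}$ into $c^{2L}$: from $C^{k+1} = C\cdot C^{k} \leq C\cdot c \leq c\cdot c = c^{2}$ (using $C\leq c$), we get $C^{(k+1)L}\leq c^{2L}$, and hence the stated two-sided bound for $c\geq C$.

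Next, for the remaining range $c \in (0,1/C)$, I would reduce to the previous case by setting $c' := 1/c \geq C$ and applying the already-proved bound at the point $y := cx$, using the identity $c'y = x$. This gives
\begin{align*}
(c')^{-2L} f(cx) \;\leq\; f(x) \;\leq\; (c')^{2L} f(cx),
\end{align*}
which, upon rearranging and substituting $c' = 1/c$, becomes
\begin{align*}
c^{2L} \;\leq\; \frac{f(cx)}{f(x)} \;\leq\; c^{-2L},
\end{align*}
i.e.\ the ratio $f(cx)/f(x)$ lies between $c^{2L}$ and $c^{-2L}$, matching the two quantities in the statement.

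I do not expect a genuine obstacle here; the only place to be careful is the constant tracking in the step $C^{k+1}\leq c^{2}$, which requires $c\geq C$ (not merely $c>1$) in order to afford spending an extra factor of $C$ on top of $C^{k}\leq c$. This is exactly why the hypothesis excludes the open interval $(1/C,C)$, and why the exponent is $2L$ rather than $L$.
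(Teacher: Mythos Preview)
Your argument is correct and is the natural iteration argument one would use here. The paper does not actually give a proof of this lemma; it only states that it is ``a simple consequence of our definition of multiplicatively Lipschitz functions,'' so there is nothing to compare against beyond noting that your bootstrapping-by-powers-of-$C$ approach is exactly the intended simple consequence. One minor remark: as you already noticed, for $c\in(0,1/C)$ the bound you derive is $c^{2L}\le f(cx)/f(x)\le c^{-2L}$, which is the reverse ordering of the displayed inequality in the lemma; the lemma as stated in the paper is slightly sloppy on this point, and your interpretation (the ratio lies between the smaller and larger of $c^{\pm 2L}$) is the intended one and is consistent with how the paper later applies the lemma with scaling factor $K\ge 1$.
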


We now state the core theorem of this section: 
\begin{theorem}\label{thm:sparsify-lipschitz}
Let $C\geq 1$ and $L \geq 1$. 
  Consider any $(C, L)$-multiplicatively
  Lipschitz function $f : \R \rightarrow \R$, and let $\k(x,y)=f(\|x-y\|_2^2)$. Let $P$ be a set of $n$ points in $\mathbb{R}^d$. For any $k \in [\Omega(1) , O(\log n)]$  
  such that $C = n^{O(1/k)}$, there exists an algorithm ~\textsc{sparsify-$\k$-graph}$(P,n,d, \k, k, L)$ (Algorithm~\ref{alg:sparsify-lipschitz}) that runs in time: 
  \begin{align*}
  O( ndk )  + 
  \eps^{-2} \cdot n^{1+O(L/k)} 2^{O(k)} \log n \cdot \log \alpha 
  \end{align*}
  and outputs an $\eps$-spectral sparsifier $H$ of the $\k$ graph with $|E_H| = O(n \log n /\eps^2)$.
\end{theorem}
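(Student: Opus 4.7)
The plan is to combine three classical ingredients, all already set up in the preliminaries: ultralow-dimensional Johnson--Lindenstrauss projection (Lemma~\ref{lem:low-dim-jl}), well-separated pair decomposition (WSPD) of Callahan--Kosaraju, and the oversampling sparsifier of Theorem~\ref{thm:oversampling}. The core idea is that a WSPD partitions the $\binom{n}{2}$ pairs of $P$ into a small number of bicliques inside each of which pairwise distances are nearly equal, so the Lipschitz hypothesis forces edge weights to be nearly equal, which in turn means uniform sampling suffices.

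The first step is to apply Lemma~\ref{lem:low-dim-jl} to project $P$ to $k$ dimensions in $O(ndk)$ time; with high probability every pairwise squared distance is preserved up to a factor $n^{O(1/k)}$. The second step is to build a $(1/2)$-WSPD on the projected points, producing biclique pairs $\{(A_i,B_i)\}_{i=1}^N$ with $N\le n\cdot 2^{O(k)}$ and $\sum_i(|A_i|+|B_i|)\le n\cdot 2^{O(k)}$, such that every unordered pair of distinct points lies in exactly one $A_i\times B_i$ and the ratio of maximum to minimum pairwise projected distance inside any single $A_i\times B_i$ is at most $3$. Pulling back through the JL guarantee, pairwise distances in $\R^d$ within any one biclique vary by at most a factor $n^{O(1/k)}$, and then by the $(C,L)$-multiplicatively Lipschitz hypothesis with $C=n^{O(1/k)}$ the $\k$-weights within a biclique vary by at most $\beta:=C^{O(L)}=n^{O(L/k)}$.

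Next, observe that the $\k$-graph Laplacian decomposes as $L_G=\sum_i L_i$ where $L_i$ is the weighted Laplacian of $(A_i,B_i)$, so by PSD linearity it suffices to produce an $\eps$-spectral sparsifier of each $L_i$ and take the union. For a single biclique with $a=|A_i|$, $b=|B_i|$, and weights in $[w_{\min},w_{\max}]$ with ratio $\le\beta$, Rayleigh monotonicity against the uniformly weighted complete bipartite graph yields the leverage-score upper bound $w_e\,\Reff_{L_i}(u,v)\le\beta\bigl(\tfrac1a+\tfrac1b\bigr)$ for every edge $e$ of the biclique; since $L_G\succeq L_i$, these same numbers dominate the true leverage scores in $G$ as well. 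Feeding these upper bounds into \textsc{Oversampling} (Theorem~\ref{thm:oversampling}) gives an $\eps$-spectral sparsifier of $L_i$ using $O(\beta(a+b)\log n/\eps^2)$ uniform samples from $(A_i,B_i)$. Summing over all WSPD bicliques produces an intermediate sparsifier $H'$ with
\[
|E_{H'}|\;=\;O\!\left(\beta\cdot n\cdot 2^{O(k)}\cdot\log n/\eps^2\right)\;=\;n^{1+O(L/k)}\,2^{O(k)}\log n/\eps^2.
\]

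Finally I would apply Corollary~\ref{cor:ss11} to $H'$ to drive the edge count down to $O(n\log n/\eps^2)$; this pass runs in $\tilde O(|E_{H'}|\log\alpha/\eps^2)$ time and contributes the $\log\alpha$ factor in the claimed bound. Adding up: $O(ndk)$ for JL, $n\cdot 2^{O(k)}\log n$ for the WSPD, $n^{1+O(L/k)}\,2^{O(k)}\log n/\eps^2$ for the per-biclique sampling, and the final Spielman--Srivastava pass together yield the bound $O(ndk)+\eps^{-2}\cdot n^{1+O(L/k)}\,2^{O(k)}\log n\cdot\log\alpha$. The main obstacle is the leverage-score step: one must verify that sparsifying each $L_i$ in isolation suffices, and must carefully couple the JL distortion $n^{O(1/k)}$ with the Lipschitz parameter $C=n^{O(1/k)}$ so that the $(C,L)$-multiplicative Lipschitz hypothesis applies inside every biclique, yielding the $\beta=n^{O(L/k)}$ weight-ratio bound that drives the final edge count.
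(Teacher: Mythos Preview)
Your approach is essentially identical to the paper's: JL projection to $k$ dimensions, a $(1/2)$-WSPD on the projected points, uniform sampling within each biclique with the leverage-score overestimate $\beta(1/a+1/b)$, and a final Spielman--Srivastava pass. One bookkeeping correction: your claim that $\sum_i(|A_i|+|B_i|)\le n\cdot 2^{O(k)}$ is not supported by Theorem~\ref{thm:wspd}, which only guarantees that each vertex participates in at most $2^{O(k)}\cdot\log\alpha$ pairs, so the sum is $n\cdot 2^{O(k)}\cdot\log\alpha$; consequently the $\log\alpha$ factor in the final bound already appears in the per-biclique sampling step (exactly as in the paper's proof), not only from the final Corollary~\ref{cor:ss11} pass.
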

 We give a corollary of this theorem.
\begin{corollary}\label{cor:poly-sparse}
Consider a $\k$ graph where $\k(x,y) = f(\|x-y\|_2^2)$, and $f$ is $(2, L)$-multiplicatively Lipschitz. Let $G$ denote the $\k$ graph from $n$ points in $\mathbb{R}^d$. There is an algorithm that takes in time
\begin{align*}
O(n d \sqrt{L \log n} ) + \eps^{-2} \cdot n \cdot 2^{O(\sqrt{L \log n} \log\log n )} \cdot \log \alpha  
\end{align*}
and outputs an $\eps$-spectral sparsifier $H$ of $G$ with $|E_H| = O( n \log n / \eps^2)$. 
If $L = o( \log n/ (\log \log n)^2)$, this runs in time 
\begin{align*}
o(nd \log n) + \eps^{-2}  n^{1+o(1)} \log \alpha.
\end{align*}
\end{corollary}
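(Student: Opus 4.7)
The plan is to derive Corollary~\ref{cor:poly-sparse} as a direct consequence of Theorem~\ref{thm:sparsify-lipschitz} by choosing the free parameter $k$ to balance the two competing exponential factors in the running time. Since $f$ is $(2,L)$-multiplicatively Lipschitz, the constant $C$ in Theorem~\ref{thm:sparsify-lipschitz} equals $2$, and the admissibility condition $C = n^{O(1/k)}$ translates to $k \leq c\log n$ for a sufficiently large absolute constant $c$ (because $n^{c/k} \geq 2$ iff $k \leq c\log_2 n$). Thus any $k \in [\Omega(1), O(\log n)]$ is permissible.

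Next I would substitute into the runtime bound from Theorem~\ref{thm:sparsify-lipschitz}, namely
\begin{align*}
O(ndk) \;+\; \eps^{-2} \cdot n^{1+O(L/k)} \cdot 2^{O(k)} \cdot \log n \cdot \log\alpha,
\end{align*}
and choose $k$ to balance the two dominant factors $n^{O(L/k)} = 2^{O(L\log n / k)}$ and $2^{O(k)}$ in the second summand. Setting $L\log n / k \asymp k$ gives $k = \lceil \sqrt{L\log n}\, \rceil$, and I would then verify this choice lies in the allowed window: the lower bound $k = \Omega(1)$ is automatic from $L \geq 1$ and $n$ large, while the upper bound $k \leq O(\log n)$ needs the case split $L \leq \log n$ (where $\sqrt{L\log n}\leq \log n$) versus $L > \log n$ (where I would simply cap $k$ at $\Theta(\log n)$; in that regime the stated bound in the corollary is weaker than the trivial bound, so nothing is lost).

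With $k = \sqrt{L\log n}$, the first summand becomes $O(nd\sqrt{L\log n})$, matching the corollary. The second summand becomes
\begin{align*}
\eps^{-2}\cdot n\cdot 2^{O(\sqrt{L\log n})}\cdot \log n \cdot \log \alpha.
\end{align*}
To get the cleaner form in the corollary I would absorb the extra $\log n$ factor into the exponential using $\log n = 2^{\log\log n}$, so that $2^{O(\sqrt{L\log n})}\cdot \log n \leq 2^{O(\sqrt{L\log n}\,\log\log n)}$, whenever $\sqrt{L\log n} \geq 1$; this yields the claimed runtime. The sparsifier size bound $|E_H|=O(n\log n/\eps^2)$ is inherited verbatim from Theorem~\ref{thm:sparsify-lipschitz}.

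Finally, for the ``in particular'' clause, I would substitute $L = o(\log n/(\log\log n)^2)$ into $\sqrt{L\log n}\,\log\log n$ to obtain $o(\log n)$, so that $2^{O(\sqrt{L\log n}\,\log\log n)} = n^{o(1)}$, giving the simplified bound $o(nd\log n) + \eps^{-2} n^{1+o(1)}\log\alpha$. There is no real obstacle here beyond bookkeeping; the only mildly delicate point is checking the parameter range for $k$, which is handled by the case split above.
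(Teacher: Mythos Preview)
Your proposal is correct and takes exactly the same approach as the paper: set $k = \sqrt{L\log n}$ in Theorem~\ref{thm:sparsify-lipschitz}. The paper's proof is in fact a single sentence to this effect, so your additional bookkeeping (verifying the admissibility window for $k$, absorbing the $\log n$ into the exponential, and the case split for large $L$) is more careful than what the paper writes out.
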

\begin{proof} 
Set $k = \sqrt{L \log n}$, and the corollary follows from Theorem~\ref{thm:sparsify-lipschitz}.
\end{proof}

This implies that if $f$ is a polynomial with non-negative coefficients, then sparsifiers of the corresponding $\k$-graph can be found in almost linear time. The same result holds if $f$ is the reciprocal of a polynomial with non-negative coefficients.

We will need a few geometric preliminaries in order to present our core algorithm of this section,~\textsc{Sparsify-$\k$-graph}.

\begin{definition}[$\epsilon$-well separated pair]\label{def:wsp}
Given two sets of points $A$ and $B$. We say $A,B$ is an $\epsilon$-well separated pair if the diameter of $A_i$ and $B_i$ are at most $\epsilon$ times the distance between $A_i$ and $B_i$.
\end{definition}

\begin{definition}[$\epsilon$-well separated pair decomposition \cite{ck95}]\label{def:wspd}
An $\eps$-well separated pair decomposition ($\eps$-$\WSPD$) of a given point set $P$ is a family of pairs $\mathcal{F} = \{(A_1, B_1), \ldots (A_s, B_s)\}$ with $A_i, B_i \subset P$ such that:
\begin{itemize}
    \item $\forall i \in [s]$, $A_i$, $B_i$ are $\eps$-well separated pair (Definition~\ref{def:wsp}) 
    \item For any pair $p, q \in P$, there is a unique $i \in [s]$ such that $p \in A_i$ and $q \in B_i$
\end{itemize}
\end{definition}
A famous theorem of Callahan and Kosaraju \cite{ck95} states:
\begin{theorem}[Callahan and Kosaraju \cite{ck95}]\label{thm:wspd}
Given any point set $P \subset \mathbb{R}^d$ and $0 \leq \eps \leq 9/10$, an $\eps$-$\WSPD$ (Definition~\ref{def:wspd}) of size $O(n /\eps^d)$ can be found in $2^{O(d)} \cdot ( n \log n + n / \eps^d ) $ time. Moreover, each vertex participates in at most $2^{O(d)} \cdot \log \alpha$ $\epsilon$-well separated pairs.
\end{theorem}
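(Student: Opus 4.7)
The plan is to follow the classical Callahan--Kosaraju construction, based on a hierarchical decomposition of $P$ and a recursive pair-generation procedure.

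\emph{Step 1: build a fair split tree.} First I would compute a fair split tree $T$ of $P$ in $2^{O(d)} n \log n$ time. Each node $u \in T$ is associated with a subset $P_u \subseteq P$ and a bounding hyperrectangle $R_u$ enclosing $P_u$; the root has $P_u = P$, and at each internal node one splits $R_u$ by a hyperplane perpendicular to its longest side and placed so that the lengths of the two resulting child rectangles have ratio at most $2$. Standard techniques (a compressed/biased split approach, using balanced box-decomposition or a bucketed implementation of the Callahan--Kosaraju construction) produce such a tree with $O(n)$ total nodes, depth $O(\log \alpha)$ for the uncompressed part plus $O(\log n)$ shortcut edges for the compressed part, and construction time $2^{O(d)} n \log n$.

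\emph{Step 2: generate the pair decomposition.} Given $T$, I would invoke the recursive routine $\textsc{FindPairs}(u,v)$: if $R_u$ and $R_v$ are $\eps$-well-separated (the minimum inter-rectangle distance is at least $1/\eps$ times $\max(\mathrm{diam}(R_u),\mathrm{diam}(R_v))$), output the pair $(P_u, P_v)$; otherwise, without loss of generality assume $\mathrm{diam}(R_u) \geq \mathrm{diam}(R_v)$ and recurse on $\textsc{FindPairs}(u_1, v)$ and $\textsc{FindPairs}(u_2, v)$ for the two children $u_1, u_2$ of $u$. The initial call is on (the children of) the root paired with itself. Correctness (every ordered pair of distinct points is separated by exactly one produced WSPD pair) follows by induction on the tree, using that when $u=v$ is a leaf the recursion terminates since a single point is trivially self-separated (and such diagonal cases are excluded).

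\emph{Step 3: size and per-vertex participation via a packing argument.} The main geometric lemma is a packing bound: for any tree node $v$, among all nodes $u$ considered by $\textsc{FindPairs}$ when paired with $v$ that are \emph{not} yet $\eps$-separated but whose parents were non-separated, there are at most $(1/\eps)^{O(d)} = 2^{O(d)}$ many, because the fair-split property forces such $u$'s rectangles to have diameter within a constant factor of $\mathrm{diam}(R_v)$ and to lie within distance $O(\mathrm{diam}(R_v)/\eps)$ of $R_v$; volume packing in $\mathbb{R}^d$ then limits their count. Summing over all $O(n)$ tree nodes $v$ yields a total of $n \cdot 2^{O(d)} / \eps^d = O(n/\eps^d)$ produced pairs (and equally bounds the recursion's work beyond tree construction, giving the $2^{O(d)}(n\log n + n/\eps^d)$ runtime). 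For the per-vertex bound, fix a point $p$: the ancestors of $p$'s leaf in $T$ number $O(\log \alpha)$, and at each such ancestor $u$ the packing argument shows $u$ appears in at most $2^{O(d)}$ WSPD pairs, so $p$ is contained in at most $2^{O(d)} \log \alpha$ pairs.

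\emph{Main obstacle.} The routine parts are Steps 2 and the tallying in Step 3; the technical heart is Step 1 (producing a fair split tree with the right depth and construction time in general $d$) together with the packing lemma in Step 3. The packing lemma requires careful bookkeeping: one must exploit that when $\textsc{FindPairs}(u,v)$ is called, the parent pair was already examined and not separated, which forces the diameters of $R_u$ and $R_v$ to be within a constant factor and bounds their separation. Making this argument clean despite the asymmetric ``recurse on the larger rectangle'' rule, and tracking the $2^{O(d)}$ constants through the volume-packing step, is where most of the work lies; the $\log \alpha$ depth bound on the split tree (which controls per-vertex participation) similarly requires the fair-split invariant rather than a naive midpoint split, which is the reason the compressed variant is used.
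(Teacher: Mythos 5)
This statement is quoted by the paper as a black-box citation of Callahan and Kosaraju \cite{ck95}; the paper gives no proof (its \textsc{GenerateWSPD} procedure simply defers to \cite{ck95}), and your sketch is precisely the standard argument from that source: fair split tree, the asymmetric \textsc{FindPairs} recursion, and the volume-packing bound. Your outline is correct; the only point worth tightening is the per-vertex participation bound, where the number of ancestors of a leaf is $O(d\log\alpha)$ rather than $O(\log\alpha)$ (the longest side of a shrunken box drops by a constant factor only once every $d$ fair splits), but this factor of $d$ is absorbed into the $2^{O(d)}$ term, so the claimed bound stands.
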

Well-separated pairs can be interpreted as complete bipartite graphs on the vertex set, or \textbf{bicliques}. The biclique associated with a well-separated pair is the bipartite graph connecting all vertices on one side of the pair to another. 

This concludes our definitions on well-separated pairs. We now give names to some algorithms in past work, which will be used in our algorithm \textsc{sparsify-$k$-graph}. We define the algorithm $\textsc{GenerateWSPD}(P, \eps)$ to output an $\eps$-WSPD (Definition~\ref{def:wspd}) of $P$. We define the algorithm $\textsc{RandomProject}(P, k)$ to be a random projection of $P$ onto $k$ dimensions. 

Let $\textsc{Biclique}(\k, P, A, B)$ be the complete biclique on the $\k$-graph of $P$ with one side of the biclique having verticese corresponding to points in $A$, and the other side having vertices corresponding to points in $B$. We store this biclique implicitly as $(A, B)$ rather than as a collection of edges. 

Let $\textsc{RandSample}(G, s)$ be an algorithm uniformly at random sampling $O(s)$ edges from $G$, where the big $O$ is the same constant as the big $O$ in the $n^{O(1/k)}$ from Lemma~\ref{lem:low-dim-jl}.

Let $\textsc{SpectralSparsify}(G, \eps)$ be any nearly linear time spectral sparsification algorithm that outputs a $(1+\eps)$ spectral sparsifier with $O(n \log n / \eps^2)$ edges, such as that in Theorem~\ref{thm:ss11} from $\cite{ss11}$.
\begin{algorithm}[!ht]\caption{}\label{alg:sparsify-lipschitz}
\begin{algorithmic}[1]
\Procedure{\textsc{Sparsify-$\k$-Graph}}{$P, n, d, \k, k, L, \eps$} \Comment{Theorem~\ref{thm:sparsify-lipschitz}}

    \State \textbf{Input}: A point set $P$ with $n$ points in dimension $d$, a kernel function $\k(x,y) = f(\|x-y\|_2^2)$, an integer variable $\Omega(1) \leq k \leq O(\log n)$, and a variable $L$, and error $\eps$.
    
    \State \textbf{Output:} A candidate sparsifier of the $\k$-graph on $P$.
    
    \State
    $P' \gets \textsc{RandomProject}(P, k)$
    
    \State $H \gets $ empty graph with $n$ vertices.
    \State $\{(A_1', B_1'), \ldots (A_t', B_t')\} \gets \textsc{GenerateWSPD}(P', n , d, 1/2)$ \Comment{$t = n \cdot 2^d$}
    \For{$ i = 1 \to t$}
        \State Find $(A_i, B_i)$ corresponding to $(A_i', B_i')$, where $A_i, B_i \subset P$.
        \State $Q \gets \textsc{BiClique}(\k, P, A_i, B_i)$.
        \State $s \leftarrow \eps^{-2} n^{O(L/k)} (|A_i|+|B_i|) \log (|A_i|+|B_i|)$
        \State $\overline{Q} \gets \textsc{RandSample}(Q, s)$
        \State $\overline{Q} \gets \overline{Q}$ with each edge scaled by $ |A_i||B_i| / s $.
        \State $H \gets H + \overline{Q}$
    \EndFor
    \State $H \gets \textsc{SpectralSparsify}( H , \epsilon )$ \Comment{Corollary \ref{cor:ss11}}
    \State Return $H$
\EndProcedure
\Procedure{\textsc{GenerateWSPD}}{$P,n,d,\epsilon$} \Comment{Theorem~\ref{thm:wspd}}
    \State ... \Comment{See details in \cite{ck95}}
\EndProcedure
\Procedure{\textsc{RandSample}}{$G,s$}
    \State Sample $O(s)$ edges from $G$ and generate a new graph $\ov{G}$
    \State \Return $\ov{G}$
\EndProcedure
\Procedure{\text{RandomProject}}{$P,d,k$}
    \State $P' \leftarrow \emptyset$
    \State Choose a JL matrix $S \in \R^{k \times d}$
    \For{$x \in P$}
        \State $x'\leftarrow S \cdot x $ 
        \State $P' \leftarrow P' \cup x'$
    \EndFor
    \State \Return $P'$
\EndProcedure
\end{algorithmic}
\end{algorithm}

The rest of this section is devoted to proving Theorem~\ref{thm:sparsify-lipschitz}.
\subsection{High Dimensional Sparsification}

We are nearly ready to prove Theorem~\ref{thm:sparsify-lipschitz}. We start with a Lemma:

\begin{lemma}[$(C, L)$ multiplicative Lipschitz functions don't distort a graph's edge weights much]\label{lem:lipschitz-distortion} 
Consider a complete graph $G$, and a complete graph $G'$, where vertices of $G$ are identified with vertices of $G'$ (which induces an identification between edges). Let $K \geq 1$. Suppose each edge in $G$ satisfies:
\begin{align*} 
\frac{1}{K} \cdot w_{G'}(e) \leq w_G(e) \leq K \cdot w_{G'}(e) 
\end{align*}
If $f$ is a $(C, L)$ multiplicative Lipschitz function, and $f(G)$ refers to the graph $G$ where $f$ is applied to each edge length, and $C < K$ then: 
\begin{align*}
\frac{1}{K^{2L}} \cdot w_{G'}(e) \leq w_{f(G)}(e) \leq K^{2L} \cdot w_{G'}(e).
\end{align*}
\end{lemma}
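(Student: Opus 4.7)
My plan is a straightforward per-edge argument, reducing everything to a single application of the multiplicative Lipschitz property of $f$ to the ratio $w_G(e)/w_{G'}(e)$. The argument does not use any graph-theoretic structure; it treats each edge in isolation.

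Fix an edge $e$ and write $w_G(e) = c \cdot w_{G'}(e)$ for some $c \in [1/K, K]$, which exists by the hypothesis $\frac{1}{K} w_{G'}(e) \le w_G(e) \le K w_{G'}(e)$. By the convention of the lemma, $w_{f(G)}(e) = f(w_G(e)) = f(c \cdot w_{G'}(e))$, so the entire task reduces to showing that $f(c \cdot w_{G'}(e))/f(w_{G'}(e)) \in [K^{-2L}, K^{2L}]$. I would split on where $c$ sits relative to the Lipschitz window $[1/C, C]$. In the case $c \in [1/C, C]$, Definition~\ref{def:mult-lip} gives $C^{-L} \le f(c \cdot w_{G'}(e))/f(w_{G'}(e)) \le C^L$, and the hypothesis $C < K$ together with $L \ge 1$ pushes these factors inside $[K^{-2L}, K^{2L}]$. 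In the complementary case $c \in [1/K, 1/C) \cup (C, K]$, I invoke Lemma~\ref{lem:lip} (handling the sub-$1/C$ subrange by applying it to $1/c$ and the point $c \cdot w_{G'}(e)$), which yields the ratio bound $\max(c, 1/c)^{\pm 2L}$; since $\max(c, 1/c) \le K$ in this range, this is again inside $[K^{-2L}, K^{2L}]$.

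Combining the two cases gives $K^{-2L} \, f(w_{G'}(e)) \le w_{f(G)}(e) \le K^{2L} \, f(w_{G'}(e))$ for every edge $e$, which under the lemma's convention $f(w_{G'}(e)) = w_{f(G')}(e)$ is the per-edge bound one expects. The main obstacle is essentially bookkeeping: confirming that the two regimes (Definition~\ref{def:mult-lip} on $[1/C,C]$ and Lemma~\ref{lem:lip} outside it) together cover all of $[1/K, K]$ and that $C < K$ is used in exactly the right place to keep the exponent at $2L$ rather than something larger.

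One caveat about the literal wording: the lemma as written has $w_{G'}(e)$ on the right-hand side of the conclusion rather than $w_{f(G')}(e) = f(w_{G'}(e))$. The per-edge argument above naturally produces the $w_{f(G')}(e)$ version; the literal $w_{G'}(e)$ version is false for general $f$ (for instance, $f(x) = x^2$ with $w_{G'}(e)$ large and $K$ modest provides a counterexample), so I read this as a typographical slip and would phrase the final conclusion as $K^{-2L} w_{f(G')}(e) \le w_{f(G)}(e) \le K^{2L} w_{f(G')}(e)$, which is exactly what the downstream application (relating the original $\k$-graph weights to the biclique structure derived from the projected graph $G'$) requires.
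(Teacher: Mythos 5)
Your proposal is correct and takes essentially the same route as the paper: the paper's entire proof of this lemma is the single line ``This follows from Lemma~\ref{lem:lip},'' and your case split between $c\in[1/C,C]$ (handled by Definition~\ref{def:mult-lip}) and $c\in[1/K,1/C)\cup(C,K]$ (handled by Lemma~\ref{lem:lip}) simply makes explicit the bookkeeping the paper leaves implicit. Your reading of the conclusion as $K^{-2L}\,w_{f(G')}(e)\le w_{f(G)}(e)\le K^{2L}\,w_{f(G')}(e)$ is also the intended one, consistent with how the lemma is used in the proof of Theorem~\ref{thm:sparsify-lipschitz} to bound the ratio of edge weights within a biclique of the $\k$-graph.
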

\begin{proof} 
This follows from Lemma~\ref{lem:lip}.
\end{proof}

\begin{proof} (of Theorem~\ref{thm:sparsify-lipschitz}):
The Algorithm~\textsc{sparsify-$\k$-graph} starts by performing a random projection of point set $P$ into $k$ dimensions. Call the new point set $P'$. This runs in time $O(ndk)$, and incurs distortion $n^{O(1/k)}$, as seen in Lemma~\ref{lem:low-dim-jl}. 
Next, our algorithm performs a $1/2$-$\WSPD$ on $P'$. As seen in Theorem~\ref{thm:wspd}, this runs in time 
    \begin{align*}
    O(n \log n + n2^{O(k)} )
    \end{align*}  
    We view each well-separated pair $(A_i', B_i')$ on $P'$ as a biclique, where the edge length between any two points in $P'$ corresponds to the edge length between those two points in the original $\k$-graph.
    By the guarantees of Theorem~\ref{thm:wspd},
    the longest edge divided by the shortest edge between two sides of a well-separated pair in $P'$ is at most $2$. Thus, the longest edge divided by the shortest edge within any induced bipartite graph on the $\k$-graph is $2 \cdot n^{O(1/k)}$, by Lemma~\ref{lem:lipschitz-distortion}.

    For each such biclique, the leverage score for each edge is overestimated by 
  \begin{align*}
  n^{O(L/k)} \cdot (|A_i'|+|B_i'|) / (|A_i'||B_i'|).
  \end{align*} 
  This comes from first applying Lemma~\ref{lem:lipschitz-distortion} to upper bound the ratio of the longest edge in a biclique divided by the shortest edge.  This ratio comes out to be $2 n^{O(L/k)}$.  Now recall the definition of leverage score on graphs as $w_e R_e$, where $R_e$ is the effective resistance assuming conductances of $w_e$ on the graph, and $w_e$ is the edge weight. Here, $w_e$ is upper bounded by the longest edge length, and $R_e$ is upper bounded by the leverage score of a biclique supported on the same edges, where all edges lengths are equal to the shortest edge length (this is an underestimate of effective resistance due to Rayleigh monotonicity, see~\cite{c97} for details). Therefore, a leverage score overestimate of the graph can be obtained by $n^{O(L/k)} \cdot (|A_i'|+|B_i'|) / (|A_i'||B_i'|)$, as claimed. The union of these graphs is a spectral sparsifier of our $\k$-graph.
  
  Finally, our algorithm samples
  $n^{O(L/k)}(|A_i'|+|B_i'|)\log(|A_i'|+|B_i'|)$ edges uniformly at random from each biclique, scaling each sampled edge's weight so that the expected value of the sampled graph is equal to the original biclique. Each vertex participates in at most $\log \alpha 2^{O(k)}$ bicliques (see Theorem~~\ref{thm:wspd}). Thus, this uniform sampling procedures' run time is bounded above by 
  \[ 
  n^{1+O(L/k)}2^{O(k)} \log n \cdot \log \alpha.
  \]
  
  Finally, our algorithm runs a sparsification algorithm on our graph after uniform sampling, which gets the edge count of the final graph down to $O(n \log n / \eps^2)$.
This completes our proof of Theorem~\ref{thm:sparsify-lipschitz}.
\end{proof}

\subsection{Low Dimensional Sparsification}

We now present a result on sparsification in low dimensions, when $d$ is assumed to be small or constant. 

\begin{theorem}\label{thm:low-sparsify-lipschitz}
Let $L \geq 1$. Consider a $\k$-graph with $n$ vertices arising from a point set in $d$ dimensions, and let $\alpha$ be the ratio of the maximum Euclidean distance to the minimum Euclidean distance in the point set . Let $f$ be a $(1+1/L, L)$ multiplicatively Lipschitz function. Then an $\eps$ spectral sparsifier of the $\k$-graph can be found in time
\begin{align*}
    \eps^{-2} \cdot n  \cdot \log n \cdot \log \alpha   \cdot (2L)^{O(d)} 
\end{align*}
\end{theorem}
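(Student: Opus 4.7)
The plan is to follow the template of Theorem~\ref{thm:sparsify-lipschitz}, but to omit the Johnson--Lindenstrauss projection step (since $d$ is already small in the intended regime) and to use a much tighter well-separated pair decomposition, with parameter $1/(CL)$ for a small constant $C$, directly on the original point set $P \subset \mathbb{R}^d$. Concretely, I would run $\textsc{GenerateWSPD}(P, n, d, 1/(4L))$ from Theorem~\ref{thm:wspd}, producing at most $O(n \cdot (4L)^d)$ well-separated pairs $\{(A_i, B_i)\}_i$ in time $(2L)^{O(d)} (n \log n + n \log \alpha)$, with each vertex participating in at most $(2L)^{O(d)} \log \alpha$ such pairs.

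The key geometric observation is that within any single pair $(A_i, B_i)$, all cross-distances $\|u-v\|_2$ for $u \in A_i$, $v \in B_i$ agree up to a multiplicative factor of $1 + O(1/L)$, and hence the squared distances $\|u-v\|_2^2$ agree up to a factor of $1 + O(1/L)$ as well, which lies in the interval $(1/(1+1/L), 1+1/L)$ needed to invoke the $(1+1/L, L)$-multiplicative-Lipschitz property of $f$. Therefore, by Definition~\ref{def:mult-lip}, the weights $\k(u,v) = f(\|u-v\|_2^2)$ inside the biclique induced by $(A_i,B_i)$ all lie within a factor of $(1+1/L)^{O(L)} = O(1)$ of one another. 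Since a uniformly weighted biclique on parts of sizes $|A_i|, |B_i|$ has every edge leverage score at most $(|A_i|+|B_i|)/(|A_i||B_i|)$, Rayleigh monotonicity (plus the constant-factor weight spread just argued) gives the leverage-score overestimate $O((|A_i|+|B_i|)/(|A_i||B_i|))$ for every edge of the biclique, with no $n^{O(L/k)}$ loss of the high-dimensional proof.

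Armed with this bound, I would apply \textsc{Oversampling} (Theorem~\ref{thm:oversampling}) to each biclique, sampling $O(\eps^{-2}(|A_i|+|B_i|)\log n)$ edges uniformly at random with appropriate reweighting. By Theorem~\ref{thm:oversampling}, the union $H$ over all bicliques is then an $\eps$-spectral sparsifier of the $\k$-graph; a final call to $\textsc{SpectralSparsify}(H, \eps)$ from Corollary~\ref{cor:ss11} reduces the edge count to $O(n \log n / \eps^2)$. Summing $|A_i|+|B_i|$ over pairs and using the bound that each vertex participates in $(2L)^{O(d)} \log \alpha$ pairs, the total sampling work is $\eps^{-2} \cdot n \log n \cdot \log \alpha \cdot (2L)^{O(d)}$, which dominates the WSPD construction cost and matches the claimed bound.

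The main obstacle, and the place where the low-dimensional analysis differs most from the high-dimensional one, is justifying the tighter $(1+1/L)$-WSPD bound in terms of the leverage-score overestimate: one needs the Lipschitz exponent $L$ to exactly cancel the $1/L$-separation slack (hence the precise $(1+1/L, L)$-multiplicative-Lipschitz hypothesis in the statement). A secondary bookkeeping subtlety is that the per-vertex participation bound in Theorem~\ref{thm:wspd} depends on $\eps^{-d}$, so the $(2L)^{O(d)}$ factor must be carried through both the WSPD cost \emph{and} the sum of biclique sizes; beyond this, the argument is a direct simplification of Theorem~\ref{thm:sparsify-lipschitz} and requires no new ideas.
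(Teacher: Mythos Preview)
Your proposal is correct and follows essentially the same approach as the paper: omit the Johnson--Lindenstrauss step, build a $\Theta(1/L)$-WSPD directly on $P$, use the $(1+1/L,L)$-multiplicative-Lipschitz hypothesis to bound the weight ratio within each biclique by $(1+1/L)^{O(L)}=O(1)$, and then uniformly sample each biclique. The paper's proof is terser (it uses separation parameter $1/L$ rather than your $1/(4L)$, and omits the explicit Rayleigh-monotonicity/leverage-score justification you spell out), but the argument is the same.
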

\begin{proof} We roughly follow the proof of Theorem~\ref{thm:sparsify-lipschitz}, except without the projection onto low dimensions. Now, on the $d$ dimensional data, we create a $1/L$-$\WSPD$. This takes time 
\begin{align*}
O( n \log n ) + n \cdot (2L)^{O(d)}.
\end{align*}
Since $f$ is $(1+1/L,L)$-multiplicatively Lipschitz, it follows that within each biclique of the $\k$-graph induced by the $\WSPD$ (Definition~\ref{def:wspd}), the maximum edge length divided by the minimum edge length is bounded above by $(1+1/L)^{O(L)} = O(1)$. Therefore, performing scaled and reweighted uniform sampling on each biclique takes $O(s \log s)$ time if there are $s$ vertices in the biclique, and gives a sparsified biclique with $O(s \log s)$ edges. 

Taking the union of this number over all bicliques gives an algorithm that runs in time
\begin{align*}
 \eps^{-2} \cdot n \cdot \log n \cdot \log \alpha \cdot (2L)^{O(d)} 
\end{align*}
as desired.
\end{proof}

 %%% Section 6 Sparsifying Multiplicatively Lipschitz Functions in Almost Linear Time
\section{Sparsifiers for \texorpdfstring{$| \langle x,y \rangle|$}{}}

In this section, we construct sparsifiers for Kernels of the form $| \langle x,y \rangle |$.

\begin{lemma}[sparsification algorithm for inner product kernel]\label{lem:inner-sparsify}
Given a set of vectors $X\subseteq \mathbb{R}^d$ with $|X|=n$ and accuracy parameter $\epsilon \in (0,1/2)$, there is an algorithm that runs in $\epsilon^{-2} n \cdot \poly(d, \log n)$ time, and outputs a graph $H$ that satisfies both of the following properties with probability at least $1 - 1/\poly(n)$:
\begin{enumerate}
    \item $(1 - \epsilon)L_G\preceq L_H\preceq (1 + \epsilon)L_G$;
    \item $|E(H)|\le  \epsilon^{-2} \cdot n \cdot \poly(d,\log n)$.
\end{enumerate}
where $G$ is the $\k$-graph on $X$, where $\k(x,y) = |\langle x,y\rangle|$.
\end{lemma}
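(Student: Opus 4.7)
The plan is to use the oversampling framework of Theorem~\ref{thm:oversampling}, together with the structural properties of $|\langle\cdot,\cdot\rangle|$-graphs highlighted in Section~\ref{subsec:relatedwork}, to build the sparsifier cluster by cluster. First, by rescaling we may assume $\|x\|_2 \le 1$ for all $x\in X$, and by a standard splitting argument we may focus on the induced weighted complete graph $G$. The key structural fact is that if $S\subseteq X$ has $|S| > d+1$, then $S$ contains two vectors $u,v$ with $|\langle u,v\rangle| = \Omega(\tfrac{1}{d}\cdot \max_{x,y\in S}|\langle x,y\rangle|)$; this follows from a Cauchy--Schwarz / linear-dependence argument as the paper sketches. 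In particular the unweighted graph of edges whose weight is within a $\poly(d)$ factor of the maximum weight has no independent set of size $> d+1$, hence is dense.

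The main loop of the algorithm is the following. At each stage we have a ``remaining'' vertex set $R$; we find a subset $T\subseteq R$ of size $|T|\ge |R|/\poly(d\log n)$ whose induced weighted subgraph $G[T]$ in the kernel graph has effective resistance diameter at most $\poly(d\log n)/|T|$. Following the hint, such a $T$ can be produced by uniformly sampling $\tilde{O}(\poly(d)\log n)$ edges inside a candidate cluster, then using the Spielman--Srivastava Johnson--Lindenstrauss sketch of Theorem~\ref{thm:ss11} to check, in $\tilde{O}(|R|\cdot\poly(d))$ time, that pairwise effective resistances in the sample are small. Once $T$ is found, Lemma~\ref{lem:conductance_cut} (or a direct leverage-score calculation from the resistance-diameter bound) gives that every edge $e=(u,v)$ of $G[T]$ with weight $w_e$ has leverage score at most $w_e\cdot \poly(d\log n)/|T|\cdot (\text{max weight in }T)^{-1} \le \poly(d\log n)/|T|$. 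Plugging these uniform leverage-score overestimates into Theorem~\ref{thm:oversampling} lets us $(1\pm\epsilon)$-sparsify $G[T]$ using only $\epsilon^{-2}|T|\poly(d,\log n)$ uniformly sampled edges. We add these edges to $H$, remove $T$ from $R$, and repeat; since $|T|\ge |R|/\poly(d\log n)$, only $\poly(d\log n)$ rounds are needed to cover all vertices, giving total work $\epsilon^{-2}n\cdot\poly(d,\log n)$.

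It remains to account for the cross-cluster edges. Because each cluster $T_i$ has $|T_i|\ge n_i/\poly(d\log n)$ fraction of the then-current vertices and total kernel weight concentrated in its top-weight edges, a uniform random sample of $\epsilon^{-2}(n_i+n_j)\poly(d,\log n)$ edges between any two clusters $T_i,T_j$, reweighted appropriately, yields an $\epsilon$-sparsifier of the bipartite biclique between them by the same oversampling argument; here the required leverage-score upper bound again follows from the resistance-diameter bound inside $T_i$ and $T_j$ plus Rayleigh monotonicity. Summing over the $\poly(d\log n)$ clusters keeps both the runtime and the edge count within $\epsilon^{-2}n\cdot\poly(d,\log n)$, yielding both conclusions of the lemma.

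The main obstacle is justifying step (2): that with constant probability a random subset (or a random sample of edges inside a candidate subset) actually produces a $T$ with small effective resistance diameter in $G[T]$, rather than merely in the sampled graph. This is exactly where the ``Markov-style lower tail bound on effective resistances in a random subgraph'' mentioned in the introduction is needed. The plan is to prove that if the underlying graph is dense enough (which is guaranteed by the no-large-independent-set property of $|\langle\cdot,\cdot\rangle|$-graphs), then with constant probability a uniform sample of $\tilde O(\poly(d))$ edges per vertex has all pairwise effective resistances at most $\poly(d\log n)/|T|$, by combining a union bound over pairs with a one-sided Markov bound on $\Reff$ (upper bounds on $\Reff$ do not follow from Chernoff-style concentration because effective resistance can only decrease when edges are added, so a direct anti-concentration argument is required). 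Once this lemma is in place, the rest of the construction is a routine combination of Theorems~\ref{thm:oversampling},~\ref{thm:ss11}, and Corollary~\ref{cor:ss11}.
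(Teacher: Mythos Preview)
Your high-level picture matches the paper: repeatedly peel off a large cluster with small effective-resistance diameter (found via uniform edge sampling, Spielman--Srivastava JL sketches, and the Markov-style lower-tail bound you correctly flag), then sparsify within and between clusters by oversampling. But two of your steps have genuine gaps that the paper spends most of its effort filling.

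First, the cross-cluster step cannot use \emph{uniform} sampling. For an edge $\{u,v\}$ with $u\in T_i$, $v\in T_j$, Proposition~\ref{prop:lev-score-bound} gives $\Reff_G(u,v)\le 3R_i+3R_j+3/\!\sum_{x\in T_i,y\in T_j}w_{xy}$, so the leverage score has a term $3w_{uv}/\!\sum_{x,y}w_{xy}$. Since $|\langle x,y\rangle|$ can be $0$ (orthogonal pairs), the biclique total weight can be dominated by a single edge, making this term $\Theta(1)$ rather than $O(1/|T_i||T_j|)$; a uniform sample of $(|T_i|+|T_j|)\poly(d,\log n)$ edges does \emph{not} overestimate such leverage scores. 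The paper resolves this with a non-uniform sampler (Lemma~\ref{lem:simple-sampling-ds} and Proposition~\ref{prop:main-sampling-ds}) that draws $(u,v)$ with probability proportional to $|\langle u,v\rangle|$, implemented in near-linear time via an $\ell_1$ linear sketch (Theorem~\ref{thm:l1-sketch}) plus a binary tree over one side of the biclique. This weighted sampler is the missing ingredient in your cross-cluster argument.

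Second, ``by rescaling we may assume $\|x\|_2\le 1$'' does not neutralize varying norms: the kernel $|\langle x,y\rangle|$ scales with $\|x\|_2\|y\|_2$, and the cluster/expander structure you use (no independent set of size $>d+1$, hence dense, hence large expander) is a statement about the \emph{normalized} inner products $|\langle x,y\rangle|/(\|x\|_2\|y\|_2)$. When norms span many scales, the low-resistance-diameter clusters you extract need not have comparable edge weights, and the within- and between-cluster leverage bounds break. The paper handles this with a multi-scale cover: first bounded norms (Proposition~\ref{prop:bounded-cover}), then two norm scales (Proposition~\ref{prop:two-bounded-cover}), then all scales with polylog dependence (Proposition~\ref{prop:log-cover}), and finally a $\poly(d,\log n)$-sparse cover independent of the norm ratio (Proposition~\ref{prop:desired-cover}), using an approximate-basis / interval-tree argument for pairs whose norms differ by more than $(dn)^{O(1)}$. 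Your proposal has no analogue of this step.
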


Throughout this section, we specify various values $C_i$. For each subscript $i$, it is the case that $1\le C_i\le \text{poly}(d,\log n)$.

\subsection{Existence of large expanders in inner product graphs}\label{sec:inner_product_kernel_sparsifier_algorithm}

We start by showing that certain graphs that are related to unweighted versions of $\k$-weighted graphs contain large expanders:

\begin{definition}[$k$-dependent graphs]
For a positive integer $k > 1$, call an unweighted graph $G$ $k$-\emph{dependent} if no independent set with size at least $k+1$ exists in $G$.
\end{definition}

We start by observing that inner product graphs are $(d+1)$-dependent.

\begin{definition}[inner product graphs]
For a set of points $X\subseteq \mathbb{R}^d$, the \emph{unweighted inner product graph for } $X$ is a graph $G$ with vertex set $X$ and unweighted edges $\{u,v\}\in E(G)$ if and only if $|\langle u,v\rangle| \ge \frac{1}{d+1} \|u\|_2 \|v\|_2$. The \emph{weighted inner product graph for } $X$ is a complete graph $G$ with edge weights $w_e$ for which $w_{uv} = |\langle u,v\rangle|$.
\end{definition}

We now show that these graphs are $k$-dependent:

\begin{lemma}\label{lem:indep-rank}
Suppose $M \in \R^{n \times n}$ such that $M[i,i] = 1$ for all $i \in [n]$, and $|M[i,j]| < 1/n$ for all $i \neq j$. Then, $M$ has full rank.
\end{lemma}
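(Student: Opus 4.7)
The plan is to prove this via a standard strict diagonal dominance argument, which is equivalent to a special case of Gershgorin's circle theorem / the Levy-Desplanques theorem. The key observation is that $M$ is strictly diagonally dominant: for each row $i$, the diagonal entry has magnitude $|M[i,i]| = 1$, while the sum of the off-diagonal magnitudes satisfies
\begin{align*}
\sum_{j \neq i} |M[i,j]| < (n-1) \cdot \frac{1}{n} < 1 = |M[i,i]|.
\end{align*}

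To actually derive nonsingularity from this, I would argue by contradiction. Suppose $M$ is not full rank; then there exists a nonzero vector $x \in \R^n$ with $Mx = 0$. Let $i^* \in [n]$ be an index maximizing $|x_{i^*}|$, and by rescaling $x$ we may assume $|x_{i^*}| = 1$, so $|x_j| \leq 1$ for all $j$. Reading off row $i^*$ of the equation $Mx = 0$ gives
\begin{align*}
x_{i^*} = -\sum_{j \neq i^*} M[i^*,j] \cdot x_j.
\end{align*}
Taking absolute values and applying the triangle inequality along with the hypothesis $|M[i^*,j]| < 1/n$ yields
\begin{align*}
1 = |x_{i^*}| \leq \sum_{j \neq i^*} |M[i^*,j]| \cdot |x_j| < (n-1) \cdot \frac{1}{n} \cdot 1 < 1,
\end{align*}
a contradiction. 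Hence no such $x$ exists and $M$ has full rank.

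There is no real obstacle here: the proof is essentially two lines once one notices the strict diagonal dominance condition. The only thing to be careful about is the choice of normalization (picking the index of maximum absolute value) so that the triangle inequality tightens to give a strict inequality $< 1$, which is what produces the contradiction. This lemma will presumably be used in the next step to show that for a set of unit vectors, any collection of more than $d+1$ pairwise ``near-orthogonal'' vectors forces their Gram matrix to have a certain structure, establishing the $(d+1)$-dependence of inner product graphs.
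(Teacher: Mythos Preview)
Your proof is correct and follows essentially the same strict diagonal dominance argument as the paper: both assume a nonzero kernel vector, focus on the row corresponding to its coordinate of largest magnitude, and derive the contradiction $1 < (n-1)/n < 1$. Your presentation is in fact a bit cleaner—the paper takes a small detour by first writing one column as a combination of the others and arguing that some coefficient must exceed $1+1/n$, whereas you go straight to normalizing the kernel vector at its maximum entry.
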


\begin{proof}
Assume to the contrary that $M$ does not have full rank. Thus, there are values $c_1, \ldots, c_{n-1} \in \R$ such that for all $i \in [n]$ we have $\sum_{j=1}^{n-1} c_j M[i,j] = M[i,n]$.

First, note that there must be a $j$ with $|c_j| \geq 1 + 1/n$. Otherwise, we would have
$$1 = |M[n,n]| = \left| \sum_{j=1}^{n-1} c_j M[n,j] \right| < \frac{1}{n}  \sum_{j=1}^{n-1} \left| c_j \right|  < \frac{n-1}{n} (1 + 1/n) < 1.$$
Assume without loss of generality that $|c_1| \geq |c_j|$ for all $j \in \{2,3,\ldots,n-1\}$, so in particular $|c_1| \geq 1 + 1/n$. Letting $c_n = -1$, this means that $\sum_{j=1}^n c_j M[1,j] = 0$, and so $M[1,1] = -\sum_{j=2}^n (c_j / c_1) M[1,j]$. Thus,
\begin{align*}
 1 = |M[1,1]| = \left| \sum_{j=2}^n \frac{c_j }{ c_1} M[1,j] \right| \leq  \sum_{j=2}^n \left|\frac{c_j }{ c_1} M[1,j] \right| <  \sum_{j=2}^n \left| M[1,j] \right| < (n-1)\cdot \frac{1}{n} < 1,
\end{align*}
a contradiction as desired.
\end{proof}

\begin{proposition}\label{prop:inner-product-dep}
The unweighted inner product graph for $X\subseteq \mathbb{R}^d$ is $(d+1)$-dependent.
\end{proposition}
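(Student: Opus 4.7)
The plan is to prove the contrapositive via a Gram matrix argument, leveraging Lemma~\ref{lem:indep-rank} directly. Suppose for contradiction that the unweighted inner product graph contains an independent set $\{v_1, \ldots, v_{d+2}\} \subseteq X$. By definition of independence and the edge rule, for every $i \neq j$ we have $|\langle v_i, v_j\rangle| < \tfrac{1}{d+1}\|v_i\|_2\|v_j\|_2$.

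First I would normalize: set $u_i := v_i / \|v_i\|_2$ (each $v_i$ is nonzero since it participates in a well-defined edge condition). This preserves the independence inequalities, so $|\langle u_i, u_j\rangle| < \tfrac{1}{d+1}$ for all $i \neq j$, while $\langle u_i, u_i\rangle = 1$. Then I would take any $d+1$ of these normalized vectors, say $u_1, \ldots, u_{d+1}$, and form their $(d+1) \times (d+1)$ Gram matrix $M$ defined by $M[i,j] = \langle u_i, u_j\rangle$. This matrix satisfies the hypotheses of Lemma~\ref{lem:indep-rank} with $n = d+1$: the diagonal entries are $1$, and every off-diagonal entry has magnitude strictly less than $1/(d+1) = 1/n$.

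Applying Lemma~\ref{lem:indep-rank} yields that $M$ has full rank $d+1$. However, $M$ is the Gram matrix of $d+1$ vectors lying in $\mathbb{R}^d$, so $\operatorname{rank}(M)$ equals the dimension of the span of $\{u_1, \ldots, u_{d+1}\}$, which is at most $d$. This contradiction completes the argument, showing that no independent set of size $d+2$ (in fact, no independent set of size $d+1$) can exist, and hence the graph is $(d+1)$-dependent as claimed.

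There is no real obstacle here; the proposition is essentially an immediate corollary of Lemma~\ref{lem:indep-rank} combined with the standard rank identity for Gram matrices. The only subtlety worth noting is that the argument actually gives the slightly stronger conclusion of $d$-dependence, which is consistent with the stated $(d+1)$-dependence.
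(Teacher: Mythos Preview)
Your proof is correct and follows essentially the same Gram-matrix-plus-Lemma~\ref{lem:indep-rank} argument as the paper. Your explicit normalization step is in fact a bit more careful than the paper's version, which applies Lemma~\ref{lem:indep-rank} directly without first arranging for unit diagonal entries.
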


\begin{proof}
For an independent set $S$ in the unweighted inner product graph $G$ for $X$, define an $S\times S$ matrix $M$ with $M[i,j] = \langle s_i,s_j\rangle$ where $S = \{s_1,s_2,\hdots,s_{|S|}\}$. Then Lemma \ref{lem:indep-rank} coupled with the definition for edge presence in $G$ shows that $M$ is full rank. However, $M$ is a rank $d$ matrix because it is the matrix of inner products for dimension $d$ vectors. Therefore, $d\ge |S|$, so no independent set has size greater than $d$.
\end{proof}

Next, we show that $k$-dependent graphs are dense:

\begin{proposition}[dependent graph is dense]\label{prop:k-dep-dense}
Any $k$-dependent graph $G$ has at least $n^2/(2k^2)$ edges.
\end{proposition}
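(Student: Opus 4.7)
The plan is to reduce to a classical extremal bound. The hypothesis that $G$ is $k$-dependent is exactly the statement that the independence number $\alpha(G) \leq k$, so we need to show that any $n$-vertex graph with independence number at most $k$ has $\Omega(n^2/k^2)$ edges. Two standard routes both work.

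Route 1 (Turán, complement form): pass to the complement $\bar G$, whose clique number satisfies $\omega(\bar G) = \alpha(G) \leq k$. Turán's theorem then gives $|E(\bar G)| \leq (1 - 1/k) n^2/2$, so
\begin{align*}
|E(G)| \;\geq\; \binom{n}{2} - (1-1/k)\tfrac{n^2}{2} \;=\; \tfrac{n(n-k)}{2k}.
\end{align*}
A short calculation shows $\tfrac{n(n-k)}{2k} \geq \tfrac{n^2}{2k^2}$ whenever $n(k-1) \geq k^2$, i.e.\ whenever $n \gtrsim k$. This is automatic in the setting of interest, since this proposition will be applied with $k = d+1$ and $d \ll n$ (indeed $d = n^{o(1)}$ throughout the paper).

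Route 2 (Caro--Wei), which I would mention but not use as the main proof, goes through the lower bound $\alpha(G) \geq \sum_{v} 1/(d(v)+1)$ combined with the AM--HM inequality $\sum_v 1/(d(v)+1) \geq n^2/(n + 2|E(G)|)$. Chaining with $\alpha(G) \leq k$ yields $n + 2|E(G)| \geq n^2/k$, hence $|E(G)| \geq (n^2/k - n)/2$, which simplifies to the target bound in the same parameter range.

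There is no serious obstacle here: the proposition is essentially a direct application of Turán's theorem, and the only care required is verifying the arithmetic to confirm that the $n^2/(2k^2)$ bound holds once $n$ is a modest multiple of $k$. I would present Route 1 since it is the shortest and cleanly foregrounds the extremal input. In the writeup I would state the independence-number reformulation in one line, cite Turán on $\bar G$, and finish with the one-line algebraic check to pass from $n(n-k)/(2k)$ to $n^2/(2k^2)$.
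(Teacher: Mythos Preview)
Your proposal is correct, but the paper takes a different, more self-contained route. Rather than invoking Tur\'an's theorem on the complement, the paper gives a direct counting argument: every $(k+1)$-subset of vertices must contain at least one edge (by $k$-dependence), and any fixed edge lies in at most $\binom{n-2}{k-1}$ such subsets, so
\[
|E(G)| \;\ge\; \frac{\binom{n}{k+1}}{\binom{n-2}{k-1}} \;=\; \frac{n(n-1)}{(k+1)k} \;\ge\; \frac{n^2}{2k^2}.
\]
Your Tur\'an approach actually yields the stronger bound $n(n-k)/(2k)$ (order $n^2/k$ rather than $n^2/k^2$), at the cost of importing Tur\'an as a black box; the paper's argument is weaker but entirely elementary and fits in three lines. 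Both approaches need the same mild $n \gtrsim k$ assumption at the last algebraic step, which as you note is automatic in the paper's regime $k = d+1 = n^{o(1)}$.
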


\begin{proof}
Consider any $k+1$-tuple of vertices in $G$. There are $\binom{n}{k+1}$ such $k+1$-tuples. By definition of $k$-dependence, there must be some edge with endpoints in any $k+1$-tuple. The number of $k$-tuples that any given edge can be a part of is at most $\binom{n-2}{k-1}$. Therefore, the number of edges in the graph is at least
\begin{align*}
\frac{ \binom{n}{k+1} }{ \binom{n-2}{k-1} } \ge \frac{ n(n-1)}{ (k+1)k } \ge \frac{ n^2 }{ 2k^2 }
\end{align*}
as desired.
\end{proof}

Next, we argue that unweighted inner product graphs have large expanders:

\begin{proposition}[every dense graph has a large expander]\label{prop:dense-has-expander}
Consider an unweighted graph $G$ with $n$ vertices and at least $n^2/c$ edges for some $c > 1$. Then, there exists a set $S\subseteq V(G)$ with the following properties:

\begin{enumerate}
    \item (Size) $|S|\ge n/(40c)$
    \item (Expander) $\Phi_{G[S]}\ge 1/(100c\log n)$
    \item (Degree) The degree of each vertex in $S$ within $G[S]$ is at least $n/(10000c)$
\end{enumerate}

\end{proposition}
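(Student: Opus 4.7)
My plan is to follow the standard expander-decomposition template. I will first iteratively delete every vertex whose current degree is below $d_0:=n/(200c)$. Each deletion destroys at most $d_0$ edges and at most $n$ vertices are ever deleted, so in total we lose at most $n^2/(200c)$ edges. The remaining graph $H$ therefore has $|E(H)|\ge n^2/c-n^2/(200c)\ge n^2/(2c)$ and minimum degree at least $d_0$; a trivial counting bound then forces $|V(H)|\ge\Omega(n/\sqrt{c})$.

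Next, set $\phi:=1/(100c\log n)$ and apply a recursive sparse-cut decomposition to $H$: whenever the current subgraph has conductance strictly less than $\phi$, find a cut $(A,B)$ with $\mathrm{vol}(A)\le\mathrm{vol}(B)$ and $|E(A,B)|\le\phi\cdot\mathrm{vol}(A)$, and recurse on both sides. This produces a partition $V(H)=V_1\sqcup\cdots\sqcup V_k$ in which each $H[V_j]$ has conductance at least $\phi$ and the total inter-cluster edge count is bounded by $O(\phi\log n)\cdot|E(H)|\le|E(H)|/(50c)$. Consequently the intra-cluster edges sum to at least $|E(H)|/2\ge n^2/(4c)$, so $\sum_j|V_j|^2\ge 2\sum_j e(H[V_j])\ge n^2/(2c)$. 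If every $|V_j|$ were less than $n/(40c)$, we would get $\sum_j|V_j|^2<(n/(40c))\cdot\sum_j|V_j|\le n^2/(40c)$, a contradiction. Hence some cluster $S:=V_j$ satisfies $|S|\ge n/(40c)$, giving the size property, and $\Phi_{G[S]}=\Phi_{H[S]}\ge\phi=1/(100c\log n)$ gives the conductance property.

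For the minimum-degree property, each $v\in S$ satisfies $d_{G[S]}(v)=d_H(v)-b_v$, where $d_H(v)\ge n/(200c)$ by Step~1 and $b_v$ counts $v$'s edges to other clusters. I plan to invoke a boundary-linked version of the decomposition (in the spirit of Saranurak--Wang) that additionally guarantees $b_v\le O(\phi\log n)\cdot d_H(v)\le d_H(v)/100$ for every vertex; this yields $d_{G[S]}(v)\ge(99/100)\cdot n/(200c)\ge n/(10000c)$ with room to spare. If only the vanilla decomposition is available, the alternative is a short expander-pruning step inside $S$: trim the $O(1/c)$ fraction of $S$-vertices whose boundary is too large and re-establish conductance, arguing that the removed volume is a lower-order term compared to $|S|$ so that a subset of size at least $n/(40c)$ survives with constants only slightly weakened.

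The trickiest ingredient, and hence the main obstacle, is exactly this per-vertex boundary bound. A plain expander decomposition only controls the \emph{total} number of inter-cluster edges, and in principle those edges could concentrate almost entirely on a small subset of the chosen cluster $S$ and collapse its minimum degree. To escape this one either appeals to a boundary-linked expander decomposition or interleaves the minimum-degree trim with the sparse-cut step so that the surviving vertex count, the conductance, and the per-vertex boundary loss are all tracked at once; the size and conductance claims then fall out of averaging and recursion, but this simultaneous bookkeeping for the degree property is what requires the most care.
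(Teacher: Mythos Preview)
Your overall plan---recursive sparse-cut decomposition followed by a sum-of-squares argument to extract a large piece---matches the paper's approach. The difference lies in how the minimum-degree property is obtained, and this is precisely the point you flag as ``the main obstacle.''

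The paper does not separate the low-degree trimming from the conductance cutting; it \emph{interleaves} them. At each step of the recursion, a part $U$ is split either because it has a cut of conductance below $1/(100c\log n)$ (type (a)) or because it contains a vertex of degree below $n/(10000c)$ in $G[U]$ (type (b), splitting off that single vertex). The same charging argument you sketch bounds the edges lost to type (a) by $n^2/(100c)$, and type (b) trivially loses at most $n\cdot n/(10000c)=n^2/(10000c)$. When the process halts, every surviving piece satisfies both the conductance and the minimum-degree conditions \emph{by definition of the stopping rule}, and the sum-of-squares argument then locates a piece of size at least $n/(40c)$.

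This interleaving is exactly what resolves your per-vertex boundary issue: there is no separate cluster boundary to worry about, because the degree trim happens inside each current piece rather than once up front in $H$. Your sequential approach (trim first, decompose second) genuinely needs the extra boundary-linked machinery or a pruning pass, and the pruning argument as written is not complete---you have no a priori bound on $e(H[S])$ for the chosen $S$, so the claim that only an $O(1/c)$ fraction of $S$ has excessive boundary, and that trimming it preserves both size and conductance, is not justified. The fix is simply to fold step (b) into the recursion.
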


\begin{proof}
We start by partitioning the graph as follows:

\begin{enumerate}
    \item Initialize $\mathcal F = \{V(G)\}$
    
    \item While there exists a set $U\in \mathcal F$ with (a) a partition $U = U_1\cup U_2$ with $U_1$ cut having conductance $\le 1/(100 c\log n)$ \textbf{or} (b) a vertex $u$ with degree less than $n/(10000c)$ in $G[U]$
    
    \begin{enumerate}
        \item If (a), replace $U$ in $\mathcal F$ with $U_1$ and $U_2$
        \item Else if (b), replace $U$ in $\mathcal F$ with $U\setminus \{u\}$ and $\{u\}$
    \end{enumerate}
\end{enumerate}
We now argue that when this procedure stops,

$$\sum_{U\in \mathcal F} |E(G[U])| \ge n^2/(2c)$$
To prove this, think of each splitting of $U$ into $U_1$ and $U_2$ as deleting the edges in $E(U_1,U_2)$ from $G$. Design a charging scheme that assigns deleted edges due to (a) steps to edges of $G$ as follows. Let $c_e$ denote the charge assigned to an edge $e$ and initialize each charge to 0. When $U$ is split into $U_1$ and $U_2$, let $U_1$ denote the set with $|E(U_1)| \le |E(U_2)|$. When $U$ is split, increase the charge $c_e$ for each $e\in E(U_1)\cup E(U_1,U_2)$ by $|E(U_1,U_2)|/|E(U_1)\cup E(U_1,U_2)|$.

We now bound the charge assigned to each edge at the end of the algorithm. By construction, $\sum_{e\in E(G)} c_e$ is the number of edges deleted over the course of the algorithm of type (a). Each edge is assigned charge at most $\log |E(G)|\le 2\log n$ times, because $|E(U_1)|\le \frac{|E(U_1)| + |E(U_2)|}{2}\le \frac{|E(U)|}{2}$ when charge is assigned to edges in $U_1$. Furthermore, the amount of charge assigned is the conductance of the cut deleted, which is at most $1/(100c\log n)$. Therefore, 
\begin{align*}
c_e\le \frac{ 2 \log n } { 100c\log n } = \frac{ 1 }{ 50c }
\end{align*}
for all edges in $G$, which means that the total number of edges deleted of type (a) was at most $n^2/(100c)$. Each type (b) deletion reduces the number of edges in $G$ by at most $n/(10000c)$, so the total number of type (b) edge deletions is at most $n^2/(10000c)$. Therefore, the total number of edges remaining is at least 
\begin{align*}
\frac{ n^2 }{ c } - \frac{ n^2 }{ 100c } - \frac{ n^2 }{ 10000c }  > \frac{ n^2 }{ 2c },
\end{align*}
as desired.

By the stopping condition of the algorithm, each connected component of $G$ after edge deletions is a graph with all cuts having conductance at least $1/(100c \log n)$ and all vertices having degree at least $n/(10000c)$. Next, we show that some set in $\mathcal F$ has at least $n/(40c)$ vertices. If this is not the case, then
\begin{align*}
\sum_{U\in \mathcal F} |E(G[U])|
\le & ~ \sum_{U\in \mathcal F} |U|^2\\
\le & ~ \sum_{U\in \mathcal F} \frac{ n}{40c } |U| & \text{~by~assuming~all~}|U|\leq n / (40c) \\
\le & ~ \frac{ n^2 }{ 40c } & \text{~by~} |U| \leq n \\
< & ~ \frac{n^2} {2c}
\end{align*}
which leads to a contradiction. Therefore, there must be some connected component with at least $n/(40c)$ vertices. Let $S$ be this component. By definition $S$ satisfies the \emph{Size} guarantee. By the stopping condition for $\mathcal F$, $S$ satisfies the other two guarantees as well, as desired.
\end{proof}

Proposition \ref{prop:dense-has-expander} does not immediately lead to an efficient algorithm for finding $S$. Instead, we give an algorithm for finding a weaker but sufficient object:

\Aaron{Replace constants}
\begin{proposition}[algorithm for finding sets with low effective resistance diameter]\label{prop:efficient-low-eff-res}
For any set of vectors $X\subseteq \mathbb{R}^d$ with $n = |X|$ and unweighted inner product graph $G$ for $X$, there is an algorithm $\LowDiamSet(X)$ that runs in time

$$\poly(d, \log n) n$$
and returns a set $Q$ that has the following properties with probability at least $1 - 1/\poly(n)$:

\begin{enumerate}
    \item (Size) $|Q| \ge n/C_1$, where $C_1 = 320d^2$
    \item (Low effective resistance diameter) For any pair of vertices $u,v\in Q$, $\Reff_G(u,v)\le \frac{C_2}{n}$, where $C_2 = (10d\log n)^10$
\end{enumerate}
\end{proposition}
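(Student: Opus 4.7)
The plan is to combine the existence result of Proposition \ref{prop:dense-has-expander} with an efficient subsampling step so that the expander subset is discoverable. First I would observe that the unweighted inner product graph $G$ has at least $n^2/(2(d+1)^2)$ edges by Propositions \ref{prop:inner-product-dep} and \ref{prop:k-dep-dense}. Applying Proposition \ref{prop:dense-has-expander} with $c = 2(d+1)^2$ produces a set $S \subseteq V(G)$ of size $\Omega(n/d^2)$ such that every vertex of $S$ has degree $\Omega(n/d^2)$ inside $G[S]$ and $\Phi_{G[S]} = \Omega(1/(d^2 \log n))$. Lemma \ref{lem:conductance_cut} then gives $\Reff_G(u,v) = O(\poly(d,\log n)/n)$ for every pair $u,v \in S$, so such an $S$ is a valid candidate for the output $Q$; the algorithmic task is merely to find it without reading all $\Omega(n^2/d^2)$ edges of $G$.

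Next I would build a sparse proxy $H \subseteq G$ for $G$. Pick $n \cdot \poly(d,\log n)$ random pairs $\{u,v\} \in \binom{X}{2}$ and include each as an edge of $H$ whenever the condition $|\langle u,v\rangle| \ge \|u\|_2 \|v\|_2/(d+1)$ holds, which can be tested in $O(d)$ time per pair. Because $G$ is dense, this uniform-sampling step produces an $H$ with $n \cdot \poly(d,\log n)$ edges and, inside any fixed dense subgraph such as $G[S]$, yields a matrix concentration bound that preserves the expander structure: standard Rudelson/matrix-Chernoff arguments applied to $L_{H[S]}$ (suitably rescaled to be an unbiased estimator of $L_{G[S]}$) give $L_{H[S]} \approx p \cdot L_{G[S]}$ with high probability, hence $\Reff_{H}(u,v) = O(\poly(d,\log n)/n)$ for all $u,v \in S$ after rescaling. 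I would then feed $H$ into the Spielman--Srivastava sketch of Theorem \ref{thm:ss11} to obtain, in $\tilde O(|E(H)|) = n \cdot \poly(d,\log n)$ time, a data structure that reports $(1\pm 1/2)$-approximate effective resistances $\widetilde{\Reff}_H(\cdot,\cdot)$.

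To extract the cluster, sample $O(C_1 \log n)$ random centers $v^\star \in X$; for each, form the set $Q_{v^\star} = \{u \in X : \widetilde{\Reff}_H(u,v^\star) \le C_2/(4n)\}$, and output the first one whose size is at least $n/C_1$. Since $|S| \ge n/C_1$, one of the centers lands in $S$ with high probability, and for that center the preservation result above guarantees $S \subseteq Q_{v^\star}$, so the size constraint is met. The triangle inequality for effective resistance then converts the bound $\widetilde{\Reff}_H(u,v^\star),\widetilde{\Reff}_H(w,v^\star) \le C_2/(4n)$ into a diameter bound $\Reff_G(u,w) \le C_2/n$, provided we know the estimates in $H$ transfer back to $G$.

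The main obstacle, and the reason a delicate argument is needed, is the converse direction: a vertex $u \notin S$ might accidentally satisfy $\widetilde{\Reff}_H(u,v^\star) \le C_2/(4n)$ even though $\Reff_G(u,v^\star)$ is large, because uniformly sampling edges could in principle make effective resistances in $H$ smaller than those in $G$. Standard matrix-concentration tools give upper-tail bounds on $L_H^\dagger$ but not the required lower-tail bound on $\Reff_H$. Here I would invoke the novel Markov-style inequality referenced in the introduction (Lemma \ref{lem:sparse-lower-tail}): since the matrix pseudoinverse is operator-convex, $\mathbb{E}[L_H^\dagger] \succeq L_G^\dagger$, so the nonnegative random variable $\Reff_G(u,v) - \Reff_H(u,v)$ has nonpositive mean, and a Markov argument on the complementary variable bounds the probability that $\Reff_H$ undershoots $\Reff_G$ by more than a $\poly(d,\log n)$ factor. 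Union-bounding over the at most $n^2$ queries and adjusting constants in $C_1, C_2$ completes the argument, certifying that $Q$ satisfies both the size and effective-resistance-diameter requirements within the stated runtime $\poly(d,\log n) \cdot n$.
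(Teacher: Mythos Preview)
Your outline correctly identifies all the structural pieces the paper uses: the density of $G$ from Propositions \ref{prop:inner-product-dep} and \ref{prop:k-dep-dense}, the existence of the expander $S$ from Proposition \ref{prop:dense-has-expander}, uniform edge subsampling to build a sparse proxy, the Spielman--Srivastava sketch, random-center clustering, and Lemma \ref{lem:sparse-lower-tail} for the lower tail. The gap is in how you apply the last of these.

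Lemma \ref{lem:sparse-lower-tail} gives only $\Pr[\Reff_H(u,v)\le \Reff_G(u,v)/\kappa]\le 1/\kappa$. With a single sample $H$ you face a dilemma: if you take $\kappa=\poly(d,\log n)$ as you suggest, the per-pair failure probability is only $1/\poly(d,\log n)$, far too large to union-bound over $n^2$ pairs; if instead you take $\kappa=\poly(n)$ so the union bound succeeds, you learn only that $\Reff_H$ does not undershoot $\Reff_G$ by more than a $\poly(n)$ factor, which is useless for certifying $\Reff_G(u,w)\le C_2/n$ with $C_2=\poly(d,\log n)$. (Your description of the argument as ``the nonnegative random variable $\Reff_G(u,v)-\Reff_H(u,v)$ has nonpositive mean'' is also not quite right: the lemma actually bounds $\E[\Ceff_H(u,v)]\le \Ceff_G(u,v)$ and applies Markov to the conductance.)

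The paper resolves this by drawing $C_5=\Theta(\log n)$ \emph{independent} uniform subsamples $H_1,\ldots,H_{C_5}$, building the sketch on each, and defining $\ReffQuery(u,v)=\max_i f_i(u,v)$. For a fixed pair, each $H_i$ undershoots by a constant factor with probability at most a constant strictly below $1$, so all of them do with probability at most $(4/5)^{C_5}=1/\poly(n)$, and now the union bound over $n^2$ pairs goes through. The upper bound inside $S$ holds simultaneously for every $H_i$ via the oversampling Theorem \ref{thm:oversampling}, so taking the maximum does not spoil it. This independent-repetition-plus-max step is the one idea your proposal is missing.
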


We prove this proposition in Section \ref{subsec:eff-res-diam}.

\subsection{Efficient algorithm for finding sets with low effective resistance diameter} \label{subsec:eff-res-diam}

In this section, we prove Proposition \ref{prop:efficient-low-eff-res}. We implement $\LowDiamSet$ by picking a random vertex $v$ and checking to see whether or not it belongs to a set with low enough effective resistance diameter. We know that such a set exists by Proposition \ref{prop:dense-has-expander} and the fact that dense expander graphs have low effective resistance diameter. One could check that $v$ lies in this set in $\tilde{O}(n^2)$ time by using the effective resistance data structure of \cite{ss11}. Verifying that $v$ is in such a set in $\poly(d,\log n) n$ time is challenging given that $G$ is dense. We instead use the following data structure, which is implemented by uniformly sampling sparse subgraphs of $G$ and using \cite{ss11} on those subgraphs:

\begin{proposition}[Sampling-based effective resistance data structure]\label{prop:weak-sampling}
Consider a set $X\subseteq \mathbb{R}^d$, let $n = |X|$, let $G$ be the unweighted inner product graph for $X$, and let $S\subseteq X$ be a set with the following properties:

\begin{enumerate}
    \item (Size) $|S|\ge n/C_{3a}$, where $C_{3a} = 40\cdot 8d$
    \item (Expander) $\Phi_{G[S]}\ge 1/C_{3b}$, where $C_{3b} = 800d\log n$
    \item (Degree) The degree of each vertex in $S$ within $G[S]$ is at least $n/C_{3c}$, where $C_{3c} = 10000\cdot 8d$.
\end{enumerate}

There is a data structure that, when given a pair of query points $u,v\in X$, outputs a value $\ReffQuery(u,v)\in \mathbb{R}_{\ge 0}$ that satisfies the following properties with probability at least $1 - 1/\poly(n)$:

\begin{enumerate}
    \item (Upper bound) For any pair $u,v\in S$, $\ReffQuery(u,v) \le C_4\Reff_G(u,v)$, where $C_4 = 10$
    \item (Lower bound) For any pair $u,v\in X$, $\ReffQuery(u,v) \ge \Reff_G(u,v)/2$.
\end{enumerate}

The preprocessing method $\ReffPreproc(X)$ takes $\poly(d,\log n) n$ time and $\ReffQuery$ takes $\poly(d,\log n)$ time.
\end{proposition}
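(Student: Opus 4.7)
The plan is to build $H$ by a single round of uniform pair sampling and then feed it to the effective--resistance data structure of Spielman--Srivastava. Concretely, $\ReffPreproc(X)$ first precomputes $\|x\|_2^2$ for each $x\in X$ in $O(nd)$ time, then runs $k=\Theta(n\cdot\poly(d,\log n))$ iterations in each of which it draws an unordered pair $\{u,v\}$ uniformly from $\binom{X}{2}$, spends $O(d)$ time testing whether $|\langle u,v\rangle|\ge \tfrac{1}{d+1}\|u\|_2\|v\|_2$, and, if so, adds the edge to $H$ with weight $\binom{n}{2}/k$. In expectation $L_H=L_G$ and $|E(H)|\le k$, so all edges of $H$ share a common weight and $\log\alpha_H=O(1)$. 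Finally $\ReffPreproc$ invokes Theorem~\ref{thm:ss11} on $H$ with a small constant error to obtain an $O(\log n)\times n$ matrix $\tilde Z$; we set $\ReffQuery(u,v):=\|\tilde Z b_{uv}\|_2^2$. Preprocessing runs in $\tilde O(|E(H)|)\le \poly(d,\log n)\cdot n$ time and each query costs $O(\log n)$, matching the stated bounds.

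For the \textbf{upper bound}, fix any $S\subseteq X$ satisfying the three hypotheses and let $H_S$ be the subgraph of $H$ consisting of those sampled edges both of whose endpoints lie in $S$. Lemma~\ref{lem:conductance_cut} applied inside $G[S]$ gives $\Reff_{G[S]}(u,v)\le O((1/c_u+1/c_v)/\Phi_{G[S]}^2)\le O(C_{3b}^2 C_{3c}/n)$ for every $u,v\in S$, so every edge of $G[S]$ has leverage score $O(\poly(d,\log n)/n)$ in $G[S]$. Our per-pair sampling rate $p=k/\binom{n}{2}$ is a sufficiently large $\poly(d,\log n)/n$, so it dominates these leverage scores by the polylogarithmic factor that Theorem~\ref{thm:oversampling} requires, and hence $H_S$ is a $(1/10)$-spectral sparsifier of $G[S]$ with probability $1-1/\poly(n)$. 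Rayleigh monotonicity and sparsification then give $\Reff_H(u,v)\le \Reff_{H_S}(u,v)\le (11/10)\Reff_{G[S]}(u,v)$, while the trivial bound $\Reff_G(u,v)\ge 1/(n-1)$ (since every vertex of $G$ has degree $\le n-1$) combined with $\Reff_{G[S]}(u,v)\le O(\poly(d,\log n)/n)$ yields $\Reff_{G[S]}(u,v)\le \poly(d,\log n)\cdot \Reff_G(u,v)$. Absorbing the $(1\pm\eps)$ factor from $\tilde Z$ proves the upper bound with $C_4=\poly(d,\log n)$.

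The \textbf{lower bound} is the main obstacle because it must hold for \emph{every} pair $u,v\in X$, including pairs sitting on edges of $G$ whose leverage scores may be $\Omega(1)$ and are therefore not multiplicatively preserved by uniform sampling; the $(1-\eps)$ side of spectral sparsification is simply unavailable here. The plan is to invoke the Markov-style tail inequality proved later as Lemma~\ref{lem:sparse-lower-tail}. Since $L_H$ is an unbiased estimator of $L_G$ built from independent scaled Bernoulli variables, matrix convexity of pseudoinversion gives $\E\bigl[b_{uv}^\top L_H^\dagger b_{uv}\bigr]\ge b_{uv}^\top L_G^\dagger b_{uv}$, and Lemma~\ref{lem:sparse-lower-tail} upgrades this to the single-pair statement $\Pr[\Reff_H(u,v)<\Reff_G(u,v)/2]\le 1/\poly(n)$. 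Union-bounding over all $\binom{n}{2}$ pairs and composing with the $(1\pm\eps)$ guarantee of $\tilde Z$ gives $\ReffQuery(u,v)\ge \Reff_G(u,v)/2$ with failure probability $1/\poly(n)$, completing the proof. The delicate content sits entirely in the Markov-style lemma: pseudoinversion is matrix-convex but not matrix-concave, so ruling out large downward deviations of $\Reff_H$ under our sparse-sampling regime genuinely requires the novel argument rather than a Chernoff- or matrix-Bernstein-type concentration bound.
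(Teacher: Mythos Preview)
Your lower-bound argument has a genuine gap. Lemma~\ref{lem:sparse-lower-tail} is a Markov-style statement: it says $\Pr[\Reff_H(u,v)\le \Reff_G(u,v)/\kappa]\le 1/\kappa$. With $\kappa=2$ this gives a per-pair failure probability of at most $1/2$, not $1/\poly(n)$ as you claim. A single sampled subgraph $H$ therefore cannot survive a union bound over $\binom{n}{2}$ pairs, and there is no way to squeeze a $1/\poly(n)$ tail out of this lemma for one sample---it is essentially tight. The paper's proof handles this by sampling $C_5=\Theta(\log n)$ \emph{independent} subgraphs $H_1,\dots,H_{C_5}$, building a Spielman--Srivastava sketch $f_i$ for each, and returning $\ReffQuery(u,v)=\max_i f_i(u,v)$. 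Then for any fixed pair the probability that \emph{every} $H_i$ undershoots is at most $(4/5)^{C_5}\le n^{-100}$, after which a union bound over all pairs succeeds. Your single-round design misses exactly this amplification step.

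Your upper-bound argument also overshoots the target. After correctly showing $\Reff_H(u,v)\le (11/10)\Reff_{G[S]}(u,v)$, you fall back on the trivial bound $\Reff_G(u,v)\ge 1/(n-1)$ to relate $\Reff_{G[S]}$ to $\Reff_G$, which costs you a $\poly(d,\log n)$ factor and yields only $C_4=\poly(d,\log n)$ rather than the stated $C_4=10$. The paper's proof stops at the comparison with $\Reff_{G[S]}(u,v)$ (which is what the application in $\LowDiamSet$ actually uses), obtaining a constant factor directly from the sparsifier guarantee and a Chernoff bound on edge counts; no detour through the trivial degree bound is needed. Separately, with the $\max$-over-$i$ definition the upper bound must be argued to hold simultaneously for \emph{all} $i$, which is why the paper verifies the sparsifier property for each $H_i[S]$.
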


We now implement this data structure. $\ReffPreproc$ uniformly samples $O(\log n)$ subgraphs of $G$ and builds an effective resistance data structure for each one using \cite{ss11}. $\ReffQuery$ queries each data structure and returns the maximum:

\begin{algorithm}[!ht]
\begin{algorithmic}[1]
\Procedure{\ReffPreproc}{$X$}

    \State \textbf{Input}: $X\subseteq \mathbb{R}^d$ with unweighted inner product graph $G$
    
    \State $C_5\gets 1000\log n$
    
    \State $C_6\gets 80000CC_{3a}^2 C_{3b}^2 C_{3c}^2$, where $C$ is the constant in Theorem \ref{thm:oversampling}
    
    \For{$i$ from 1 through $C_5$}
    
        \State $H_i\gets $ uniformly random subgraph of $G$; sampled by picking $C_6 n$ uniformly random pairs $(u,v)\in X\times X$ and adding the $e = \{u,v\}$ edge to $H_i$ if and only if $\{u,v\}\in E(G)$ (that is when $|\langle u,v\rangle|\ge \frac{1}{d+1} \|u\|_2 \|v\|_2$).
        
        \State For each edge $e\in E(H_i)$, let $w_e = \frac{|E(G)|}{|E(H_i)|}$.
        
        \State $f_i\in \mathbb{R}^{X\times X}\gets$ approximation to $\Reff_{H_i}(u,v)$ given by the data structure of Theorem \ref{thm:ss11} for $\epsilon = 1/6$.
    
    \EndFor
    
\EndProcedure
\Procedure{\ReffQuery}{$u,v$}

    \State \textbf{Input}: A pair of points $u,v\in X$
    
    \State \textbf{Output}: An estimate for the $u$-$v$ effective resistance in $G$
    
    \State \Return $\max_{i=1}^{C_5} f_i(u,v)$
    
\EndProcedure
\end{algorithmic}
\end{algorithm}

Bounding the runtime of these two routines is fairly straightforward. We now outline how we prove the approximation guarantee for $\ReffQuery$. To obtain the upper bound, we use Theorem \ref{thm:oversampling} to show that $H_i$ contains a sparsifier for $G[S]$, so effective resistances are preserved within $S$. To obtain the lower bound, we use the following novel Markov-style bound on effective resistances:

\begin{lemma}\label{lem:sparse-lower-tail}
Let $G$ be a $w$-weighted graph with vertex set $X$ and assign numbers $p_e\in [0,1]$ to each edge. Sample a reweighted subgraph $H$ of $G$ by independently and identically selecting $q$ edges, with an edge chosen with probability proportional to $p_e$ and added to $H$ with weight $t w_e/(p_eq)$, where $t = \sum_{e \in G} p_e$. Fix a pair of vertices $u,v$. Then for any $\kappa > 1$,
\begin{align*}
    \Pr \left[\Reff_H(u,v)\le \Reff_G(u,v)/\kappa \right] \le 1/\kappa.
\end{align*}
\end{lemma}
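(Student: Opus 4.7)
The plan is to prove the contrapositive statement about effective conductance and then apply Markov's inequality. Define the effective conductance $c_H(u,v) := 1/\Reff_H(u,v)$, with the convention that $c_H(u,v) = 0$ whenever $u$ and $v$ lie in different components of $H$. The event $\{\Reff_H(u,v)\le \Reff_G(u,v)/\kappa\}$ is exactly the event $\{c_H(u,v) \ge \kappa \cdot c_G(u,v)\}$, and since $c_H(u,v)$ is a nonnegative random variable, it suffices by Markov's inequality to show
\begin{align*}
\mathbb{E}[c_H(u,v)] \le c_G(u,v).
\end{align*}

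The first step is to verify that the expected Laplacian is correct. Under the described sampling scheme, in one draw edge $e$ is chosen with probability $p_e/t$, and when chosen it contributes $(tw_e/(p_eq))\cdot b_e b_e^\top$ to $L_H$. Summing over $q$ independent draws, the expected contribution of $e$ is $q\cdot(p_e/t)\cdot(tw_e/(p_eq))\cdot b_e b_e^\top = w_e b_e b_e^\top$, so $\mathbb{E}[L_H] = L_G$. The second step is to observe concavity of effective conductance as a function of the edge weights. For any graph $H'$ with weights $\{W_e\}$, Thomson's principle (in its voltage form) gives
\begin{align*}
c_{H'}(u,v) \;=\; \min_{\phi:\,\phi(u)=1,\,\phi(v)=0}\; \sum_e W_e(\phi(a_e)-\phi(b_e))^2,
\end{align*}
which is a pointwise minimum over $\phi$ of functions that are linear in the vector $W$. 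Hence $c_{H'}(u,v)$ is a concave function of $W$.

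Given these two observations, the third step is to apply Jensen's inequality to the concave function above, with randomness over the sample $H$. Writing $W^H_e$ for the (random) weight of $e$ in $H$ (so $W^H_e = 0$ if $e$ is not sampled and $\mathbb{E}[W^H_e]=w_e$), concavity yields
\begin{align*}
\mathbb{E}[c_H(u,v)] \;\le\; c_{\mathbb{E}[H]}(u,v) \;=\; c_G(u,v) \;=\; 1/\Reff_G(u,v).
\end{align*}
The fourth step is Markov: since $c_H(u,v)\ge 0$,
\begin{align*}
\Pr\!\left[c_H(u,v) \ge \kappa\cdot c_G(u,v)\right] \;\le\; \frac{\mathbb{E}[c_H(u,v)]}{\kappa\cdot c_G(u,v)} \;\le\; \frac{1}{\kappa},
\end{align*}
and translating this event back through the reciprocal gives the claimed bound $\Pr[\Reff_H(u,v)\le \Reff_G(u,v)/\kappa]\le 1/\kappa$.

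The only subtlety, and the mild obstacle, is the handling of degeneracy: $H$ may well be disconnected with nonzero probability, and the pseudoinverse formulation of effective resistance is discontinuous there. Working throughout with effective conductance $c_H(u,v)$ rather than effective resistance sidesteps this cleanly, since the variational formula above extends continuously to all nonnegative weight vectors and assigns $c_H(u,v)=0$ precisely when $u,v$ are in different components of $H$, which is consistent with $\Reff_H(u,v)=\infty$ in the lower-tail event we are bounding.
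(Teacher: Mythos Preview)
Your proof is correct and is essentially the same as the paper's. The paper fixes the optimal potential $q^*$ for $G$ and uses it as a feasible point in the conductance minimization for $H$, then takes expectations and applies Markov; your version packages the same step as ``concavity of $c_H(u,v)$ in the weights $+$ Jensen,'' which is exactly what plugging in a fixed minimizer accomplishes. Your remark on handling disconnection via the conductance formulation is a nice touch that the paper leaves implicit.
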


\begin{proof}
For two vertices $u,v$, define the (folklore) \emph{effective conductance} between $u$ and $v$ in the graph $I$ to be
\begin{align*}
\Ceff_I(u,v) := \min_{q\in \mathbb{R}^n: q_u = 0,q_v = 1} \sum_{\text{ edges } \{x,y\} \in I} w_{xy}(q_x - q_y)^2 .
\end{align*}
It is well-known that $\Ceff_I(u,v) = 1/\Reff_I(u,v)$.

 Let 
\begin{align*}
 q^* = \arg\min_{q\in \mathbb{R}^n: q_u = 0, q_v = 1} \sum_{\{x,y\} \in E(G)} w_{xy}(q_x - q_y)^2.
\end{align*} 

Using $q^*$ as a feasible solution in the $\Ceff_H$ optimization problem shows that
\begin{align*}
\textbf{E}[\Ceff_H(u,v)] &\le \textbf{E} \left[\sum_{\{x,y\}\in E(G)} w_{xy}^H(q^*_x - q^*_y)^2 \right]\\
&= \sum_{\{x,y\}\in E(G)} \textbf{E}[w_{xy}^H](q^*_x - q^*_y)^2\\
&= \sum_{\{x,y\}\in E(G)} \frac{p_{xy}}{t}\sum_{i=1}^q (\frac{t w_{xy}^G}{q p_{xy}})(q^*_x - q^*_y)^2\\
&= \sum_{\{x,y\}\in E(G)} w_{xy}^G(q^*_x - q^*_y)^2\\
&= \Ceff_G(u,v)
\end{align*}
where $w^I_e$ denotes the weight of the edge $e$ in the graph $I$.

Finally, we have 
\begin{align*}
\Pr[\Reff_H(u,v) < \Reff_G(u,v)/\kappa]
= & ~ \Pr[\Ceff_H(u,v) > \kappa \cdot \Ceff_G(u,v)]\\
\le & ~ 1/\kappa,
\end{align*}
where the first step follows from $\Reff_G(u,v) = 1 / \Ceff_G(u,v)$, and the last step follows from Markov's inequality.
\end{proof}

We now prove Proposition \ref{prop:weak-sampling}:

\begin{proof}[Proof of Proposition \ref{prop:weak-sampling}]
\textbf{Runtime.} We start with preprocessing. Sampling $C_6 n$ pairs and checking if each pair satisfies the edge presence condition for $G$ takes $O(d C_6 n) = \poly(d,\log n) n$ time. Preprocessing for the function $f_i$ takes $\poly(d,\log n) n$ time by the preprocessing guarantee of Theorem \ref{thm:ss11}. Since there are $C_5 \le \poly(d,\log n)$ different $i$s, the total preprocessing time is $\poly(d,\log n) n$, as desired.

Next, we reason about query time. This follows immediately from the query time bound of Theorem \ref{thm:ss11}, along the the fact that $C_5 \le \poly(d,\log n)$.

\textbf{Upper bound.} Let $p_e^{\text{upper}} = C_7/n$ for each edge $e\in E(G[S])$, where $C_7 = 2C_{3a}C_{3b}^2$. We apply Theorem \ref{thm:oversampling} to argue that $H_i[S]$ is a sparsifier for $G[S]$ for each $i$. By choice of $C_7$ and the \emph{Size} condition on $|S|$, $p_{u,v}^{\text{upper}}\ge \frac{2}{\Phi_{G[S]}^2 |S|}$. By Lemma \ref{lem:conductance_cut} and the \emph{Expander} condition on $G[S]$, $\frac{2}{\Phi_{G[S]}^2 |S|}\ge \Reff_{G[S]}(u,v)$ for each pair $u,v\in S$. Therefore, the second condition of Theorem \ref{thm:oversampling} is satisfied by the probabilities $p_e$. Let $q = 10000C (n^2/(C_{3a}C_{3c})) \log(n^2/(C_{3a}C_{3c})) (C_7/n)$, where $C$ is the constant in Theorem \ref{thm:oversampling}. This value of $q$ satisfies the first condition of Theorem \ref{thm:oversampling}.

To apply Theorem \ref{thm:oversampling}, we just need to show that $H_i[S]$ has at least $q$ edges with probability at least $1 - 1/\poly(n)$. First, note that for uniformly random $u,v\in X$

\begin{align*}
    \Pr_{u,v\in X}[\{u,v\}\in E(G[S])] &= \Pr[\{u,v\}\in E(G[S]), u,v\in S]\\
    &= \Pr[\{u,v\}\in E(G[S]) | u,v\in S] \Pr[u\in S]\Pr[v\in S]\\
    &\ge \frac{1}{2C_{3c}} \frac{1}{C_{3a}^2}\\
\end{align*}
by the \emph{Degree} and \emph{Size} conditions on $S$. Since at least $C_6n$ pairs in $X\times X$ are chosen, $\textbf{E}[|E(H_i[S])|]\ge C_6 n (\frac{1}{2C_{3c}})(\frac{1}{C_{3a}^2})\ge 2q$. By Chernoff bounds, this means that $|E(H_i[S])|\ge q$ with probability at least $1 - 1/\poly(n)$. Therefore, Theorem \ref{thm:oversampling} applies and shows that $H_i[S]$ with edge weights $|E(G[S])|/|E(H_i[S])|$ is a $(1/6)$-sparsifier for $G[S]$ with probability at least $1 - 1/\poly(n)$. By Chernoff bounds, $|E(G[S])|/|E(H_i[S])| \ge |E(G)|/(2|E(H_i)|)$, so $\Reff_{H_i[S]}(u,v)\le \Reff_{G[S]}(u,v) (1 + 1/6)2 \le C_4 \Reff_{G[S]}(u,v)$ for all $u,v\in S$ by the sparsification accuracy guarantee. Since this holds for each $i$, the maximum over all $i$ satisfies the guarantee as well, as desired.

\textbf{Lower bound.} Let $p_e^{\text{lower}} = 1/|E(G)|$ for each edge $e$. Note that $t = 1$, so with $q = |E(H_i)|$, all edges in the sampled graph should have weight $tw_e/(p_eq) = |E(G)|/|E(H_i)|$. Therefore, by Lemma \ref{lem:sparse-lower-tail}, for a pair $u,v\in X$

$$\Pr[\Reff_{H_i}(u,v) \le 4\Reff_G(u,v)/5]\le \frac{4}{5}$$
for each $i$. Since the $H_i$s are chosen independently and identically,

$$\Pr[\max_i \Reff_{H_i}(u,v) \le 4\Reff_G(u,v)/5]\le \left(\frac{4}{5}\right)^{C_5} \le \frac{1}{n^{100}}$$
Union bounding over all pairs shows that $$\ReffQuery(u,v) > (6/7)(4/5)\Reff_G(u,v) > \Reff_G(u,v)/2$$ for all $u,v\in X$ with probability at least $1 - 1/n^{98} = 1 - 1/\poly(n)$, as desired.
\end{proof}

We now describe the algorithm $\LowDiamSet$. This algorithm simply picks random vertices $v$ and queries the effective resistance data structure to check that $v$ is in a set with the desired properties:

\begin{algorithm}[!h]
\begin{algorithmic}[1]
\Procedure{\LowDiamSet}{$X$}

    \State $\ReffPreproc(X)$

    \While{true}
    
        \State Pick a uniformly random vertex $v$ from $X$
        
        \State $Q_v\gets \{u\in X: \ReffQuery(u,v) \le C_2/(2n)\}$
        
        \State Return $Q_v$ if $|Q_v| \ge n/C_1$
    
    \EndWhile

\EndProcedure
\end{algorithmic}
\end{algorithm}

We now prove that this algorithm suffices:

\begin{proof}[Proof of Proposition \ref{prop:efficient-low-eff-res}]
\textbf{Size and Low effective resistance diameter.} Follows immediately from the return condition and the fact that for every $u\in Q_v$ for the returned $Q_v$,

$$\Reff_G(u,v)\le 2\ReffQuery(u,v)\le C_2/n$$
by the \emph{Lower bound} guarantee of Proposition \ref{prop:weak-sampling}.

\textbf{Runtime.} We start by showing that the while loop terminates after $\poly(d,\log n)$ iterations with probability at least $1 - 1/\poly(n)$. By Proposition \ref{prop:inner-product-dep}, $G$ is $(d+1)$-dependent. By Proposition \ref{prop:k-dep-dense}, $G$ has at least $n^2/(8d^2)$ edges. By Proposition \ref{prop:dense-has-expander}, $G$ has a set of vertices $S$ for which $|S|\ge n/(320d^2)$, $\Phi_{G[S]} \ge 1/(800d^2\log n)$, and for which the minimum degree of $G[S]$ is at least $n/(80000d^2)$. By the first condition on $S$,

$$\Pr[v\in S] \ge 1/(320d^2)$$
so the algorithm picks a $v$ in $S$ with probability at least $1 - 1/\poly(n)$ after at most $32000d^2\log n\le \poly(d,\log n)$ iterations. By Lemma \ref{lem:conductance_cut} applied to $S$,

$$\Reff_G(x,y)\le \Reff_{G[S]}(x,y)\le \left(\frac{1}{d_S(x)} + \frac{1}{d_S(y)}\right)\frac{1}{\Phi_{G[S]}^2}$$
for any $x,y\in S$, where $d_S(w)$ denotes the degree of the vertex $w$ in $G[S]$. By the third property of $S$, $d_S(x)\ge n/(80000d^2)$ and $d_S(y)\ge n/(80000d^2)$. By this and the second property of $S$,

$$\Reff_G(x,y)\le 102400000000 d^6(\log^2 n)/n\le C_2/(2C_4 n)$$
for any $x,y\in S$. $S$ satisfies the conditions required of Proposition \ref{prop:weak-sampling} by choice of the values $C_{3a}$, $C_{3b}$, and $C_{3c}$. Therefore, by the \emph{Upper bound} guarantee of Proposition \ref{prop:weak-sampling},

$$\ReffQuery(u,v)\le C_2/(2n)$$
for every $u\in S$ if $v\in S$. Since $|S|\ge n/C_1$ by choice of $C_1$, the return statement returns $Q_v$ with probability at least $1 - 1/\poly(n)$ when $v\in S$. Therefore, the algorithm returns a set $Q_v$ with probability at least $1 - 1/\poly(n)$ after at most $\poly(d,\log n)$ iterations.

Each iteration consists of $O(n)$ $\ReffQuery$ calls and $O(n)$ additional work. Therefore, the total work done by the while loop is $\poly(d,\log n) n$ with probability at least $1 - 1/\poly(n)$. $\ReffPreproc(X)$ takes $\poly(d,\log n) n$ by Proposition \ref{prop:weak-sampling}. Thus, the total runtime is $\poly(d,\log n) n$, as desired.
\end{proof}

\subsection{Using low-effective-resistance clusters to sparsify the unweighted IP graph}

\Aaron{Edit this section to reflect changes to previous subsection. For example, we need to add a proposition about unweighted inner product graph sparsification, as that is what we are proving here.}

In this section, we prove the following result:

\begin{proposition}[Unweighted inner product sparsification]\label{prop:unweighted-ip-sparsify}
There is a $\poly(d,\log n) n/\epsilon^2$-time algorithm for constructing an $\epsilon$-sparsifier with $O(n/\epsilon^2)$ for the unweighted inner product graph of a set of $n$ points $X\subseteq \mathbb{R}^d$.
\end{proposition}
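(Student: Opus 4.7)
The plan is to iteratively apply \LowDiamSet (Proposition~\ref{prop:efficient-low-eff-res}) to carve $X$ into a polylogarithmically small number of clusters, each of which has small $G$-effective-resistance diameter. I then sparsify each intra-cluster subgraph and each inter-cluster bipartite subgraph via uniform sampling, invoking Theorem~\ref{thm:oversampling} with upper-bound leverage-score estimates derived from the triangle inequality for the effective resistance metric.

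\emph{Partitioning.} First I run \LowDiamSet on $X$, then on $X \setminus Q_1$, then on $X \setminus (Q_1 \cup Q_2)$, and so on. Each call returns a cluster $Q_i$ containing at least a $1/C_1$ fraction of the current residual vertices and of $G$-effective-resistance diameter at most $C_2/n$ (the bound transfers to $G$ because the induced subgraph on the residual is a subgraph of $G$, so effective resistances only decrease). After $k = O(C_1 \log n) = \poly(d,\log n)$ iterations every vertex of $X$ lies in some $Q_i$, and the total work is $\poly(d,\log n) \cdot n$.

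\emph{Sparsifying each block.} For each pair $(i,j)$ with $i \le j$, I fix representatives $q_i \in Q_i$ and $q_j \in Q_j$. Because effective resistance is a metric, every edge $(u,v)$ with $u \in Q_i$ and $v \in Q_j$ satisfies
\begin{align*}
  \Reff_G(u,v) \;\le\; \Reff_G(q_i,q_j) + 2 C_2 / n .
\end{align*}
Using the data structure of Proposition~\ref{prop:weak-sampling} (which is based on Lemma~\ref{lem:sparse-lower-tail}), I compute an over-estimate $\tilde R_{ij}$ of $\Reff_G(q_i,q_j)$, then set a uniform per-edge leverage-score upper bound $p_{ij} := \tilde R_{ij} + 2C_2/n$ across the bipartite block. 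Sampling pairs uniformly from $Q_i \times Q_j$, testing each for adjacency in $O(d)$ time, and reweighting via Theorem~\ref{thm:oversampling} yields a spectral sparsifier of $G[Q_i,Q_j]$. Taking the union over $(i,j)$ gives a spectral sparsifier of $G$, and a final call to Corollary~\ref{cor:ss11} compresses the edge count to the promised $O(n/\eps^2)$.

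\emph{Main obstacle.} The crux is to bound the aggregate sampling budget $t := \sum_{i,j} p_{ij} \cdot |E(G[Q_i,Q_j])|$ by $\poly(d,\log n) \cdot n$, which then yields a $\poly(d,\log n) \cdot n/\eps^2$ runtime via Theorem~\ref{thm:oversampling}. The additive part contributes $O(C_2 n)$ using $|E(G)| \le n^2$. For the multiplicative $\tilde R_{ij}$ term, I observe that the true leverage score of every $(u,v) \in G[Q_i,Q_j]$ is at least $\Reff_G(q_i,q_j) - 2C_2/n$, so $\sum_{i,j} \Reff_G(q_i,q_j)\cdot |E(G[Q_i,Q_j])|$ is within $O(n)$ of the global leverage-score sum $\sum_{e \in G} l_e \le n-1$. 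What remains is to verify that the data-structure-produced $\tilde R_{ij}$ is a polylogarithmic-factor overestimate of $\Reff_G(q_i,q_j)$ for all $O(k^2) = \poly(d,\log n)$ representative pairs simultaneously. This is the hard part, and it is where I expect to union-bound the Markov-style lower-tail event from Lemma~\ref{lem:sparse-lower-tail} over the polynomially many queries, possibly invoking multiple independent copies of the data structure of Proposition~\ref{prop:weak-sampling} to boost the success probability from each individual query.
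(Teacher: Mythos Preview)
Your partition step is exactly the paper's \textsc{UnweightedCover}: iterate \LowDiamSet\ on the residual to get $k=\poly(d,\log n)$ clusters. One small slip: the diameter guarantee on $Q_i$ is $C_2/m_i$ with $m_i$ the residual size when $Q_i$ is carved, not $C_2/n$; since $|Q_i|\le m_i$ this still gives $R_i\le C_2/|Q_i|$, which is what your later accounting needs once corrected.

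The real gap is your plan to estimate $\Reff_G(q_i,q_j)$. Proposition~\ref{prop:weak-sampling} guarantees $\ReffQuery(u,v)\le C_4\,\Reff_G(u,v)$ \emph{only} for pairs inside the hidden expander $S$; for arbitrary pairs it promises merely the lower bound $\ReffQuery\ge\Reff_G/2$. Your representatives $q_i$ for $i\ge 2$ are found in residuals $X\setminus(Q_1\cup\cdots)$ and need not lie in $S$: a vertex of small $G$-degree is isolated in each uniformly sampled $H_\ell$ with constant probability, so $\ReffQuery(q_i,q_j)$ can be $+\infty$. Lemma~\ref{lem:sparse-lower-tail}, which you invoke to fix this, points the wrong way---it bounds $\Pr[\Reff_H\le\Reff_G/\kappa]$ and is precisely what makes $\ReffQuery$ a \emph{lower} bound; it says nothing about the upper tail you need. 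Taking independent copies of the data structure only drives the minimum query down, not the maximum. I do not see how to certify $\tilde R_{ij}\le\poly(d,\log n)\cdot\Reff_G(q_i,q_j)$ for all pairs without already being able to multiply by or solve in $L_G$ in near-linear time.

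The paper bypasses this estimation entirely via Proposition~\ref{prop:lev-score-bound}, which is the missing idea. A direct flow argument (spread one unit of $u$--$v$ demand uniformly over all bipartite edges of $E_G(Q_i,Q_j)$, paying the intra-cluster diameters at the two ends) gives, for every edge $(u,v)$ with $u\in Q_i$, $v\in Q_j$,
\[
\Reff_G(u,v)\;\le\;3R_i+3R_j+\frac{3}{|E_G(Q_i,Q_j)|},
\]
with no reference to $\Reff_G(q_i,q_j)$ at all. Summed over the edges of the block the third term contributes exactly~$3$, and summed over all $k^2$ blocks the first two contribute $O(C_2kn)=\poly(d,\log n)\,n$. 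This is the $(\zeta,\kappa)$-cover sparsity bound of Proposition~\ref{prop:unweighted-cover}, and it plugs directly into the oversampling framework without any effective-resistance queries.
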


To sparsify an unweighted inner product graph, it suffices to apply Proposition \ref{prop:efficient-low-eff-res} repeatedly to partition the graph into $\poly(d,\log n)$ clusters, each with low effective resistance diameter. We can use this structure to get a good bound on the leverage scores of edges between clusters:

\begin{proposition}[bound leverage scores of edges between clusters]\label{prop:lev-score-bound}
For a $w$-weighted graph $G$ with vertex set $X$ and $n = |X|$, let $S_1,S_2\subseteq X$ be two sets of vertices, let $R_1 = \max_{u,v\in S_1} \Reff_G(u,v)$, and let $R_2 = \max_{u,v\in S_2} \Reff_G(u,v)$. Then, for any $u\in S_1$ and $v\in S_2$,
\begin{align*}
\Reff_G(u,v)\le 3R_1 + 3R_2 + \frac{3}{\sum_{x\in S_1,y\in S_2} w_{xy}} .
\end{align*}
where $w_{xx} = \infty$ for all $x\in X$
\end{proposition}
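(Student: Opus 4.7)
The plan is to use Thomson's principle ($\Reff_G(u,v)$ equals the minimum energy of any unit $u$-$v$ flow) and construct an explicit three-stage flow from $u$ to $v$. Let $W = \sum_{x \in S_1, y \in S_2} w_{xy}$ and, for each $x \in S_1$, define $\gamma_x = \sum_{y \in S_2} w_{xy}/W$, with $\gamma'_y = \sum_{x \in S_1} w_{xy}/W$ defined analogously on $S_2$. Note $\gamma$ and $\gamma'$ are probability distributions. The three stages are: (i) a unit flow $f_1$ from $u$ into $S_1$ ending at distribution $\gamma$, (ii) a direct bipartite flow $f_2$ routing $w_{xy}/W$ units along each cross edge $(x,y)$, carrying distribution $\gamma$ on $S_1$ to distribution $\gamma'$ on $S_2$, and (iii) a unit flow $f_3$ from distribution $\gamma'$ on $S_2$ to $v$. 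By conservation, $f := f_1 + f_2 + f_3$ is a unit $u$-$v$ flow.

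The key step is bounding the energy of each stage. For $f_2$, a direct computation gives $\mathcal{E}(f_2) = \sum_{x,y} (w_{xy}/W)^2 / w_{xy} = 1/W$. For $f_1$, I write $f_1 = \sum_{x' \in S_1} \gamma_{x'} g^{(x')}$ where $g^{(x')}$ is the optimal unit electrical flow from $u$ to $x'$ in $G$, so $\mathcal{E}(g^{(x')}) = \Reff_G(u,x') \leq R_1$. Applying Cauchy--Schwarz edgewise, $f_1(e)^2 = \bigl(\sum_{x'} \gamma_{x'} g^{(x')}(e)\bigr)^2 \leq \sum_{x'} \gamma_{x'} g^{(x')}(e)^2$ (using $\sum \gamma_{x'}=1$), so summing over edges gives $\mathcal{E}(f_1) \leq \sum_{x'} \gamma_{x'} \Reff_G(u,x') \leq R_1$. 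The same argument yields $\mathcal{E}(f_3) \leq R_2$.

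Finally, combining the three stages, I apply the elementary inequality $(a+b+c)^2 \leq 3(a^2+b^2+c^2)$ on each edge to get $\mathcal{E}(f) \leq 3(\mathcal{E}(f_1) + \mathcal{E}(f_2) + \mathcal{E}(f_3)) \leq 3R_1 + 3R_2 + 3/W$. Thomson's principle then yields the stated bound. The convention $w_{xx} = \infty$ is consistent with this formula: if $S_1 \cap S_2 \neq \emptyset$, then $W = \infty$ and the third term vanishes, matching the fact that one can route through a shared vertex using only stages (i) and (iii).

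The main obstacle is the Cauchy--Schwarz bound on $\mathcal{E}(f_1)$ (and $\mathcal{E}(f_3)$): naively summing the energies of the individual flows $g^{(x')}$ would give no useful bound, and one must instead exploit the convex-combination structure so that the $\gamma_{x'}$ weights produce a true average. Once this step is in place, the rest of the proof is routine flow accounting.
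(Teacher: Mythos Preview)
Your proof is correct and is essentially the same argument as the paper's, carried out on the flow side of the Thomson/Dirichlet duality rather than the demand side. The paper decomposes the demand vector $b_{uv}$ as $d^{(1)}+d^{(12)}+d^{(2)}$ with exactly your weights $\gamma_x=s_x/\tau$, bounds $(d^{(i)})^\top L_G^{\dagger}d^{(i)}$ by $R_i$ via Jensen's inequality for the convex map $d\mapsto d^\top L_G^{\dagger}d$ (the dual of your Cauchy--Schwarz step on flows), handles $d^{(12)}$ by the same explicit cross flow $f_{xy}=w_{xy}/W$, and combines with the same $(a+b+c)^2\le 3(a^2+b^2+c^2)$ inequality.
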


\begin{proof}
Let $\chi\in \mathbb{R}^n$ be the vector with $\chi_u = 1$, $\chi_v = -1$, and $\chi_x = 0$ for all $x\in X$ with $x\ne u$ and $x\ne v$. For each vertex $x\in S_1$, let $s_x = \sum_{y\in S_2} w_{xy}$. For each vertex $y\in S_2$, let $s_y = \sum_{x\in S_1} w_{xy}$. Let $\tau = \sum_{x\in S_1} s_x = \sum_{y\in S_2} s_y = \sum_{x\in S_1,y\in S_2} w_{xy}$. Write $\chi$ as a sum of three vectors $d^{(1)},d^{(12)},d^{(2)}\in \mathbb{R}^n$ as follows:

$$d_x^{(1)} = \begin{cases} 
      1 - \frac{s_x}{\tau} & x = u \\
      -\frac{s_x}{\tau} & x\in S_1\setminus \{u\} \\
      0 & \text{ otherwise} \\
   \end{cases}$$
$$d_x^{(2)} = \begin{cases} 
      \frac{s_x}{\tau} - 1 & x = v \\
      \frac{s_x}{\tau} & x\in S_2\setminus \{v\} \\
      0 & \text{ otherwise} \\
   \end{cases}$$
$$d_x^{(12)} = \begin{cases}
      \frac{s_x}{\tau} & x\in S_1\\
      -\frac{s_x}{\tau} & x\in S_2\\
      0 & \text{ otherwise}\\
    \end{cases}$$
Notice that $d^{(1)} + d^{(2)} + d^{(12)} = \chi$. Furthermore, notice that $d^{(1)} = \sum_{x\in S_1} p_x\chi^{(ux)}$ and $d^{(2)} = \sum_{y\in S_2} q_y\chi^{(yv)}$, where $\chi^{(ab)}$ is the signed indicator vector of the edge from $a$ to $b$ and $\sum_{x\in S_1} p_x = 1$, $\sum_{y\in S_2} q_y = 1$, and $p_x\ge 0$ and $q_y\ge 0$ for all $x\in S_1,y\in S_2$. The function $f(d) = d^{\top} L^{\dag} d$ is convex, so by Jensen's Inequality,
\begin{align*}
(d^{(1)})^{\top} L_G^{\dag} (d^{(1)}) &\le \sum_{x\in S_1} p_x (\chi^{(ux)})^{\top} L_G^{\dag} \chi^{(ux)}\\
&\le \sum_{x\in S_1} p_x R_1\\
&\le R_1
\end{align*}
and $(d^{(2)})^{\top} L_G^{\dag} (d^{(2)}) \le R_2$. Let $f\in \mathbb{R}^{|E(G)|}$ be the vector with $f_{xy} = \frac{w_{xy}}{\tau}$ for all $x\in S_1,y\in S_2$ and let $f_e = 0$ for all other $e\in E(G)$. By definition of the $s_u$s, $f$ is a feasible flow for the electrical flow optimization problem for the demand vector $d^{(12)}$. Therefore,
\begin{align*}
(d^{(12)})^{\top} L_G^{\dag} d^{(12)} &\le \sum_{x\in S_1,y\in S_2} \frac{f_{xy}^2}{w_{xy}}\\
&= \sum_{x\in S_1,y\in S_2} \frac{w_{xy}}{\tau^2}\\
&= \frac{1}{\tau}
\end{align*}
so we can upper bound $\Reff_G(u,v)$ in the following way
\begin{align*}
\Reff_G(u,v) &= (d^{(1)} + d^{(2)} + d^{(12)})^{\top} L_G^{\dag} (d^{(1)} + d^{(2)} + d^{(12)})\\
&\le 3 (d^{(1)})^{\top} L_G^{\dag} d^{(1)} + 3 (d^{(2)})^{\top} L_G^{\dag} d^{(2)} + 3 (d^{(12)})^{\top} L_G^{\dag} d^{(12)}\\
&\le 3 R_1 + 3R_2 + 3/\tau
\end{align*}
as desired.
\end{proof}

\subsection{Sampling data structure}

In this section, we give a data structure for efficiently sampling pairs of points $(u,v)\in \mathbb{R}^d\times \mathbb{R}^d$ with probability proportional to a constant-factor approximation of $|\langle u,v\rangle|$:

\begin{lemma}\label{lem:simple-sampling-ds}
Given a pair of sets $S_1,S_2\subseteq \mathbb{R}^d$, there is a data structure that can be constructed in $\tilde{O}(d|S_1| + |S_2|)$ time that, in $\poly(d\log(|S_1| + |S_2|))$ time per sample, independently samples pairs $u\in S_1, v\in S_2$ with probability $p_{uv}$, where

$$\frac{1}{2} \frac{|\langle u,v\rangle|}{\sum_{a\in S_1,b\in S_2} |\langle a,b\rangle|}\le p_{uv}\le 2\frac{|\langle u,v\rangle|}{\sum_{a\in S_1,b\in S_2} |\langle a,b\rangle|}$$
Furthermore, it is possible to query the probability $p_{uv}$ in $\poly(d\log(|S_1| + |S_2|))$ time.
\end{lemma}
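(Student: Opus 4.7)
The plan is to support sampling via a hierarchical decomposition of $S_2$. I would build a balanced binary tree $T$ over $S_2$; at each internal node $t$ with associated subset $V_t \subseteq S_2$, I store a compact sketch that enables, for any query $u \in S_1$, a factor-$2$ estimate of $\sum_{v \in V_t}|\langle u,v\rangle|$ in $\poly(d\log n)$ time. Natural ingredients for this sketch include the centroid $\Sigma_t:=\sum_{v\in V_t}v$, the second-moment matrix $M_t:=\sum_{v\in V_t}vv^{\top}$, and signed partial sums $\sum_{v\in V_t}\sign(\langle r_i,v\rangle)\,v$ for $k=O(\log n)$ random directions $r_i$ fixed at preprocessing time. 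Given such a primitive, a sample is produced by first picking $u\in S_1$ with probability approximately proportional to $W_u:=\sum_{v\in S_2}|\langle u,v\rangle|$ via rejection sampling against a simple proposal such as $\|u\|$, and then descending $T$ from the root to a leaf, at each internal node moving to a child with probability proportional to its estimated contribution to $W_u$. The descent costs $O(\log|S_2|)$ sketch evaluations, matching the stated per-sample budget, and querying $p_{uv}$ after the fact amounts to retracing the same root-to-leaf path.

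Preprocessing builds the tree and computes the sketches bottom-up, for a total cost of $\tilde O(d(|S_1|+|S_2|))$ after storing only the rank-$O(\log n)$ information actually needed per node. The norms $\|u\|$ for $u\in S_1$ and their prefix sums used by the outer rejection sampler are precomputed in $O(d|S_1|)$ time.

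The main technical obstacle is engineering the per-node sketch so that $\sum_{v\in V_t}|\langle u,v\rangle|$ can be estimated within a factor of $2$ from only $\poly(d\log n)$-size summary data about $V_t$. The centroid alone yields just the signed sum $\sum_v\langle u,v\rangle$, and $M_t$ yields $\sum_v\langle u,v\rangle^2$, which by Cauchy--Schwarz can be off from the $\ell_1$ target by up to a factor of $\sqrt{|V_t|}$. To close this gap, I plan to use the signed random-projection sums to approximately partition $V_t$ according to $\sign(\langle u,v\rangle)$, thereby converting signed partial sums into a factor-$2$ estimate of the absolute sum via standard Khintchine-type concentration. A conservative fallback, should the sign-pattern approach not deliver the factor of $2$ cleanly at every tree level, is to instead sample from the $\ell_2^2$-weighted distribution proportional to $\langle u,v\rangle^2 / \sum_{v'}\langle u,v'\rangle^2$ (directly producible from the $M_t$'s by recursive splitting using $u^\top M_t u$) and rejection-sample against the $\ell_1$ target; the rejection overhead slots into the $\poly(d\log n)$ per-sample budget because the lemma only asks for a factor-$2$ approximation rather than exact sampling.
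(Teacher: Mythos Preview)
Your high-level skeleton matches the paper: a balanced binary tree over $S_2$, at each node a summary that lets you estimate $\sum_{v\in V_t}|\langle u,v\rangle|$ for a query $u$, then sample $u$ from $S_1$ and descend the tree choosing children proportionally to their estimated mass. The difficulty is entirely in the per-node summary and in how $u$ is chosen, and both of your proposed mechanisms have genuine gaps.

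For the per-node estimate, neither the centroid, the second-moment matrix, nor the signed-projection sums $\sum_{v}\mathrm{sign}(\langle r_i,v\rangle)\,v$ give you a constant-factor estimate of $\sum_{v}|\langle u,v\rangle|$. With random signs fixed at preprocessing time, $\langle u,\sum_v \mathrm{sign}(\langle r_i,v\rangle)\,v\rangle=\sum_v \mathrm{sign}(\langle r_i,v\rangle)\,\langle u,v\rangle$ concentrates (via Khintchine) around the $\ell_2$ norm $\big(\sum_v\langle u,v\rangle^2\big)^{1/2}$, not the $\ell_1$ norm; you have no mechanism to align the signs with $\mathrm{sign}(\langle u,v\rangle)$ since $u$ is unknown at preprocessing. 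Your fallback of sampling proportional to $\langle u,v\rangle^2$ and rejecting toward the $|\langle u,v\rangle|$ target has acceptance ratio proportional to $1/|\langle u,v\rangle|$, which is unbounded, so the overhead cannot be confined to $\poly(d\log n)$. The paper fixes this with a single clean observation: writing $w\in\mathbb{R}^{|V_t|}$ with $w_v=\langle u,v\rangle$, the quantity you need is $\|w\|_1$, and an Indyk-style linear $\ell_1$ sketch $Cw$ can be evaluated in $\poly(d\log n)$ time for any query $u$ because $(Cw)_i=\langle u,\sum_v C_{iv}\,v\rangle$, where the $d$-vectors $x^{(i)}=\sum_v C_{iv}\,v$ are precomputed once per node. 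This is the idea you are missing.

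For the outer step of choosing $u$, rejection sampling against the proposal $\|u\|$ again lacks any $\poly(d\log n)$ bound on the number of trials: the ratio $W_u/\|u\|$ can differ across $u\in S_1$ by an arbitrarily large factor, so the rejection constant is not controlled. The paper instead spends preprocessing to \emph{compute} a $(1\pm\epsilon)$-estimate $t_u\approx\sum_{v\in S_2}|\langle u,v\rangle|$ for every $u\in S_1$ (one query per $u$ to the root-level $\ell_1$-sketch data structure), and then samples $u$ exactly proportional to $t_u$. No rejection is used anywhere; the final factor-$2$ guarantee comes from telescoping $(1\pm 1/(100\log n))$-accurate ratios along the $O(\log n)$-length root-to-leaf path.
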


To produce this data structure, we use the following algorithm for sketching $\ell_1$-norms:

\Aaron{Move to preliminaries and fix citation.}
\begin{theorem}[Theorem 3 in \cite{i06}]\label{thm:l1-sketch}
An efficiently computable, $\poly(\log d, 1/\epsilon)$-space linear sketch exists for the $\ell_1$ norm. That is, given a $d\in \mathbb{Z}_{\ge 1}$, $\delta\in (0,1)$, and $\epsilon\in (0,1)$, there is a matrix $C = \textsc{SketchMatrix}(d,\delta,\epsilon)\in \mathbb{R}^{\ell\times d}$ and an algorithm $\textsc{RecoverNorm}(s,d,\delta,\epsilon)$ with the following properties:

\begin{enumerate}
    \item (Approximation) For any vector $v\in \mathbb{R}^d$, with probability at least $1 - \delta$ over the randomness of $\textsc{SketchMatrix}$, the value $r = \textsc{RecoverNorm}(Cv, d, \delta, \epsilon)$ is as follows:
    
    $$(1 - \epsilon) \|v\|_1\le r\le (1 + \epsilon)\|v\|_1$$
    \item $\ell = (c/\epsilon^2) \log(1/\delta)$ for some constant $c > 1$
    \item (Runtime) $\textsc{SketchMatrix}$ and $\textsc{RecoverNorm}$ take $\tilde{O}(\ell d)$ and $\poly(\ell)$ time respectively.
\end{enumerate}
\end{theorem}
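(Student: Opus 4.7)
The plan is to use Indyk's classical construction based on $1$-stable distributions (Cauchy distributions), which is the standard approach for sketching $\ell_1$ norms and yields exactly the parameters claimed. The matrix $\textsc{SketchMatrix}(d,\delta,\eps)$ will be an $\ell \times d$ matrix $C$ whose entries are i.i.d.\ samples from a standard Cauchy distribution (with $\ell = \Theta(\eps^{-2}\log(1/\delta))$), and $\textsc{RecoverNorm}(s,d,\delta,\eps)$ will output an appropriately scaled median of the absolute values of the coordinates of $s$.

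The key structural fact is $1$-stability: if $X_1,\dots,X_d$ are i.i.d.\ standard Cauchy random variables, then for any $v\in\R^d$ the linear combination $\sum_i v_i X_i$ is distributed as $\|v\|_1\cdot X$, where $X$ is itself a standard Cauchy. Applied row-by-row, this implies that each entry $(Cv)_j$ is distributed as $\|v\|_1 \cdot Z_j$ where $Z_1,\dots,Z_\ell$ are i.i.d.\ standard Cauchy. Since the standard Cauchy distribution has no finite mean, we cannot estimate $\|v\|_1$ by an average; instead we use the median of the $|Z_j|$, which has a well-defined population value $m$ satisfying $\Pr[|Z|\leq m] = 1/2$ (explicitly $m=1$), and take the estimator $r := \mathrm{median}_j |(Cv)_j| / m$.

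To show the approximation guarantee, I would argue that $r\in[(1-\eps)\|v\|_1,(1+\eps)\|v\|_1]$ with probability $1-\delta$ by standard order-statistic/Chernoff analysis: let $p_+ = \Pr[|Z|\leq (1+\eps)m]$ and $p_- = \Pr[|Z|\leq (1-\eps)m]$; the density of the Cauchy at $m$ is bounded away from $0$, so $p_+ \geq 1/2 + c\eps$ and $p_-\leq 1/2 - c\eps$ for an absolute constant $c>0$. The sample median exceeds $(1+\eps)m$ only if fewer than $\ell/2$ of the $|Z_j|$ fall in $[0,(1+\eps)m]$, a deviation of $c\eps\ell$ from the mean of a sum of $\ell$ independent Bernoulli$(p_+)$ variables; a Chernoff bound gives failure probability at most $\exp(-\Omega(\eps^2\ell))$, and analogously for the lower tail. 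Choosing $\ell = \Theta(\eps^{-2}\log(1/\delta))$ drives this below $\delta$.

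The final ingredients are the efficiency claims. $\textsc{SketchMatrix}$ just samples $\ell d$ Cauchy variates (via inverse-CDF from uniform samples), taking $\tilde O(\ell d)$ time. $\textsc{RecoverNorm}$ computes the median of $\ell$ numbers in $\poly(\ell)$ time (in fact $O(\ell)$). The main subtlety, and the step I expect to require the most care, is the tail analysis of the median: one must control both that enough samples land below $(1+\eps)m$ and that enough land above $(1-\eps)m$, using that the Cauchy density is Lipschitz and bounded below at $m$ so that the gap in Bernoulli success probabilities is indeed $\Theta(\eps)$ rather than merely $o(\eps)$. Everything else (linearity of the sketch, stability, and the Chernoff step) is either immediate or a routine union bound.
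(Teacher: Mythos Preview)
Your proposal is correct and is precisely the classical Indyk argument based on Cauchy ($1$-stable) sketches with a median estimator. However, note that the paper does not actually prove this theorem: it is stated as a black-box citation (``Theorem 3 in \cite{i06}'') and used as a tool within Corollary~\ref{cor:s-data-struct} and Lemma~\ref{lem:simple-sampling-ds}, with no proof given in the paper itself. So there is nothing to compare your approach against here --- you have essentially reconstructed the proof from the cited reference, which is fine.
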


We use this sketching algorithm to obtain the desired sampling algorithm in the following subroutine:

\begin{corollary}\label{cor:s-data-struct}
Given a set $S\subseteq \mathbb{R}^d$, an $\epsilon\in (0,1)$, and a $\delta\in (0,1)$, there exists a data structure which, when given a query point $u\in \mathbb{R}^d$, returns a $(1 \pm \epsilon)$-multiplicative approximation to $\sum_{v\in S} |\langle u,v\rangle|$ with probability at least $1 - \delta$. This data structure can be computed in $O(\ell d |S|)$ preprocessing time and takes $\poly(\ell d)$ time per query, where $\ell = O( \epsilon^{-2} \log(1/\delta))$.
\end{corollary}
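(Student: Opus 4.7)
The plan is to reduce the quantity $\sum_{v\in S}|\langle u,v\rangle|$ to an $\ell_1$-norm and apply the linear sketch of Theorem~\ref{thm:l1-sketch}. Specifically, let $M\in\mathbb{R}^{|S|\times d}$ be the matrix whose rows are the vectors of $S$. Then for any query $u\in\mathbb{R}^d$ we have
\begin{align*}
\sum_{v\in S}|\langle u,v\rangle| \;=\; \|Mu\|_1.
\end{align*}
Because the sketch of Theorem~\ref{thm:l1-sketch} is \emph{linear}, this equality is exactly what lets us precompute once and reuse across all queries.

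First I would, in the preprocessing step, sample $C=\textsc{SketchMatrix}(|S|,\delta,\epsilon)\in\mathbb{R}^{\ell\times|S|}$ where $\ell=O(\epsilon^{-2}\log(1/\delta))$, and then compute and store the matrix $\widetilde{M}:=CM\in\mathbb{R}^{\ell\times d}$. Building $C$ takes $\wt O(\ell|S|)$ time and forming $\widetilde M$ takes $O(\ell d|S|)$ time, for a total preprocessing cost of $O(\ell d|S|)$ as claimed. Second, on a query $u\in\mathbb{R}^d$ I would compute the length-$\ell$ vector $s:=\widetilde{M}u=C(Mu)$ in $O(\ell d)$ time, and then return
\begin{align*}
r \;:=\; \textsc{RecoverNorm}(s,|S|,\delta,\epsilon),
\end{align*}
which by Theorem~\ref{thm:l1-sketch} is computable in $\poly(\ell)$ time; the total query time is therefore $\poly(\ell d)$.

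For correctness, by linearity of the sketch we have $s=C(Mu)$, so $\textsc{RecoverNorm}$ receives exactly the sketch of the vector $Mu\in\mathbb{R}^{|S|}$. The approximation guarantee of Theorem~\ref{thm:l1-sketch} (applied with failure probability $\delta$ and accuracy $\epsilon$) then yields
\begin{align*}
(1-\epsilon)\|Mu\|_1 \;\le\; r \;\le\; (1+\epsilon)\|Mu\|_1
\end{align*}
with probability at least $1-\delta$, which is the desired $(1\pm\epsilon)$-multiplicative approximation to $\sum_{v\in S}|\langle u,v\rangle|$.

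There is essentially no obstacle here beyond the observation that the target sum is exactly an $\ell_1$-norm of a linear image of the query, so a single linear sketch of $M$ suffices for unboundedly many queries. The only minor thing to note is that the randomness in $C$ is fixed at preprocessing time, so the $1-\delta$ success probability is guaranteed for any single fixed query $u$ (which is what the corollary requires); if one later wanted the guarantee to hold simultaneously for many adversarially-chosen queries, one would drive $\delta$ down by a union bound at the cost of an extra $\log$ factor in $\ell$.
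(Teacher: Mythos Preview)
Your proposal is correct and is essentially identical to the paper's proof: the paper also forms the $\ell$ vectors $x^{(i)}=\sum_{v\in S}C_{iv}v$ (i.e., the rows of your $\widetilde M=CM$) in preprocessing, then on a query $u$ computes $y_i=\langle u,x^{(i)}\rangle$ and returns $\textsc{RecoverNorm}(y,|S|,\delta,\epsilon)$, with the same correctness argument via $\|Mu\|_1=\sum_{v\in S}|\langle u,v\rangle|$. Your matrix formulation is just a cleaner way of writing the same computation, and your closing remark about the per-query success probability is a useful clarification the paper leaves implicit.
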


\begin{proof}
Let $n = |S|$ and $C = \textsc{SketchMatrix}(n,\delta,\epsilon)$. We will show that the following algorithm returns the desired estimate with probability at least $1 - \delta$:

\begin{enumerate}
    \item Preprocessing:
    \begin{enumerate}
        \item Index the rows of $C$ by integers between 1 and $\ell$. Index columns of $C$ by points $v\in S$.
        \item Compute the vector $x^{(i)} = \sum_{v\in S} C_{iv} v$ for each $i\in \ell$, where $\ell = (c/\epsilon^2)\log(1/\delta)$ for the constant $c$ in Theorem \ref{thm:l1-sketch}.
    \end{enumerate}
    \item Given a query point $u\in \mathbb{R}^d$,
    \begin{enumerate}
        \item Let $y\in \mathbb{R}^{\ell}$ be a vector with $y_i = \langle u, x^{(i)}\rangle$ for each $i\in [\ell]$.
        \item Return $\textsc{RecoverNorm}(y,n,\delta,\epsilon)$
    \end{enumerate}
\end{enumerate}

\textbf{Approximation}. Let $w \in \mathbb{R}^n$ be the vector with $w_v = \langle u,v\rangle$ for each $v\in S$. It suffices to show that the number that a query returns is a $(1\pm\epsilon)$-approximation to $\|w\|_1$. By definition, $y_i = \sum_{v\in S} C_{iv} w_v$ for all $i\in [\ell]$ and $v\in S$, so $y = C w$. Therefore, by the \emph{Approximation} guarantee of Theorem \ref{thm:l1-sketch}, $\textsc{RecoverNorm}(y,n,\delta,\epsilon)$ returns a $(1\pm\epsilon)$-approximation to $\|w\|_1$ with probability at least $1 - \delta$, as desired.

\textbf{Preprocessing time}. Computing the matrix $C$ takes $\tilde{O}(n\ell)$ time by the \emph{Runtime} guarantee of Theorem \ref{thm:l1-sketch}. Computing the vectors $x^{(i)}$ for all $i\in [\ell]$ takes $O(d\ell n)$ time. This is all of the preprocessing steps, so the total runtime is $\tilde{O}(d\ell n)$, as desired.

\textbf{Query time}. Each query consists of $\ell$ inner products of $d$ dimensioal vectors and one call to $\textsc{RecoverNorm}$, for a total of $O(\ell d + \poly(\ell))$ work, as desired.
\end{proof}

We use this corollary to obtain a sampling algorithm as follows, where $n = |S_1| + |S_2|$:

\begin{enumerate}
    \item Preprocessing:
    \begin{enumerate}
        \item Use the Corollary \ref{cor:s-data-struct} data structure to $(1 \pm 1/(100\log n))$-approximate $\sum_{v\in S_2} |\langle u,v\rangle|$ for each $u\in S_1$. Let $t_u$ be this estimate for each $u\in S_1$. (one preprocess for $S\gets S_2$, $|S_1|$ queries).
        \item Form a balanced binary tree $\mathcal T$ of subsets of $S_2$, with $S_2$ at the root, the elements of $S_2$ at the leaves, and the property that for every parent-child pair $(P,C)$, $|C|\le 2|P|/3$.
        \item For every node $S$ in the binary tree, construct a $(1 \pm 1/(100\log n))$-approximate data structure for $S$.
    \end{enumerate}
    \item Sampling query:
    \begin{enumerate}
        \item Sample a point $u\in S_1$ with probability $t_u/(\sum_{a\in S_1} t_a)$.
        \item Initialize $S\gets S_2$. While $|S| > 1$,
        \begin{enumerate}
            \item Let $P_1$ and $P_2$ denote the two children of $S$ in $\mathcal T$
            \item Let $s_1$ and $s_2$ be the $(1 \pm 1/(100\log n))$-approximations to $\sum_{v\in P_1} |\langle u,v\rangle|$ and $\sum_{v\in P_2} |\langle u,v\rangle|$ respectively obtained from the data structure for $S$ computed during preprocessing.
            \item Reset $S$ to $P_1$ with probability $s_1/(s_1 + s_2)$; otherwise reset $S$ to $P_2$.
        \end{enumerate}
        \item Return the single element in $S$.
    \end{enumerate}
    \item $p_{uv}$ query:
    \begin{enumerate}
        \item Return the product of the $O(\log n)$ probabilities attached to ancestor nodes of the node $\{v\}$ in $\mathcal T$ for $u$, obtained from the preprocessing step (as during the sampling query)
    \end{enumerate}
\end{enumerate}

$s_i/(s_1 + s_2)$ is a $(1 \pm 1/(100\log n))^2$-approximation to $\Pr[v\in P_i | v\in S]$ for each $i\in \{1,2\}$. A $v\in S_2$ is sampled with probability proportional to the product of these conditional probabilities for the ancestors, for which $p_{uv}$ is a $(1 + 1/(100\log n))^{2\log n + 1}\le 2$-approximation. The total preprocessing time is proportional to the size of all sets in the tree, which is at most $O(|S_2|\log |S_2|)$. The total query time is also $\text{polylog}(|S_1| + |S_2|)\poly(d)$ due to the logarithmic depth of the query binary tree.

We now expand on this intuition to prove Lemma \ref{lem:simple-sampling-ds}:

\begin{proof}[Proof of Lemma \ref{lem:simple-sampling-ds}]
We show that the algorithm given just before this proof satisfies this lemma:

\textbf{Probability guarantee}. Consider a pair $u\in S_1$, $v\in S_2$. Let $A_0 = S_2, A_1,\hdots A_{k-1}, A_k = \{v\}$ denote the sequence of ancestor sets of the singleton set $\{v\}$ in $\mathcal{T}$. For each $i\in [k]$, let $B_i$ be the child of $A_{i-1}$ besides $A_i$ (unique because $\mathcal{T}$ is binary). For a node $X$ of $\mathcal{T}$, let $s_X$ be the $(1\pm 1/(100\log n))$-approximation to $\sum_{v\in X} |\langle u,v\rangle|$ used by the algorithm. The sampling probability $p_{uv}$ is the following product of probabilities:

$$p_{uv} = \frac{t_u}{\sum_{a\in S_1} t_a} \prod_{i=1}^k \frac{s_{A_i}}{s_{A_i} + s_{B_i}}$$
$s_{A_i} + s_{B_i}$ is a $(1\pm 1/(100\log n))$-approximation to $\sum_{v\in A_{i-1}} |\langle u,v\rangle|$ by the approximation guarantee of Corollary \ref{cor:s-data-struct}. $s_{A_i}$ is a $(1\pm 1/(100\log n))$-approximation to $\sum_{v\in A_i} |\langle u,v\rangle|$ by the approximation guarantee of Corollary \ref{cor:s-data-struct}. By these guarantees and the approximation guarantees for the $t_a$s for $a\in S_1$, $p_{uv}$ is a $(1 \pm 1/(100\log n))^{2k+2}\le (1\pm 1/2)$-approximation to

$$\frac{\sum_{b\in S_2} |\langle u,b\rangle|}{\sum_{a\in S_1,b\in S_2} |\langle a,b\rangle|}\prod_{i=1}^k \frac{\sum_{b\in A_i} |\langle u,b\rangle|}{\sum_{b\in A_{i-1}} |\langle u,b\rangle|} = \frac{|\langle u,v\rangle|}{\sum_{a\in S_1,b\in S_2} |\langle a,b\rangle|}$$
as desired, since $k\le \log n$.

\textbf{Preprocessing time}. Let $\delta = 1/n^{1000}$ and $\ell = (c/\epsilon^2)\log(1/\delta)$, where $c$ is the constant from Theorem \ref{thm:l1-sketch}. Approximating all $t_u$s takes $\tilde{O}(\ell d |S_2|) + |S_1|\poly(\ell d) = n \poly(\ell d)$ time by Corollary \ref{cor:s-data-struct}. Preprocessing the data structures for each node $S\in \mathcal{T}$ takes $\sum_{S\in \mathcal{T}} O(\ell d |S|)$ time in total. Each member of $S_2$ is in at most $O(\log n)$ sets in $\mathcal{T}$, since $\mathcal{T}$ is a balanced binary tree. Therefore, $\sum_{S\in \mathcal{T}} O(\ell d |S|)\le \tilde{O}(n\ell d)$, so the total preprocessing time is $\tilde{O}(n\ell d)$, as desired.

\textbf{Query time}. $O(\log n)$ of the data structures attached to sets in $\mathcal{T}$ are queried per sample query, for a total of $O(\log n)\poly(\ell d)$ time by Corollary \ref{cor:s-data-struct}, as desired.

\end{proof}

We use this data structure via a simple reduction to implement the following data structure, which suffices for our applications:

\begin{proposition}\label{prop:main-sampling-ds}
Given a family $\mathcal G$ of sets in $\mathbb{R}^d$, a collection of positive real numbers $\{\gamma_S\}_{S\in \mathcal G}$, and a set $S_2\subseteq \mathbb{R}^d$, there is a data structure that can be constructed in $\tilde{O}(d(|S_2| + \sum_{S\in \mathcal G} |S|))$ time that, in $\poly(d\log(|S_2| + \sum_{S\in \mathcal G} |S|))$ time per sample, independently samples pairs $u\in S$ for some $S\in \mathcal G$, $v\in S_2$ with probability $p_{uv}$, where

$$\frac{\gamma_S |\langle u,v\rangle|}{(\sum_{A\in \mathcal G} \gamma_A) \sum_{a\in A,b\in S_2} |\langle a,b\rangle|}$$
is a 2-approximation to $p_{uv}$. Furthermore, it is possible to query the number $p_{uv}$ in time $\poly(d\log(|S_2| + \sum_{S\in \mathcal G} |S|))$ time.
\end{proposition}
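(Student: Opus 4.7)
The plan is to reduce this to $|\mathcal{G}|$ independent applications of Lemma \ref{lem:simple-sampling-ds}, one for each pair $(S, S_2)$ with $S \in \mathcal{G}$, combined with a top-level sampler that first selects which set $S \in \mathcal{G}$ to draw from. The weight attached to $S$ in the top-level distribution must be $\gamma_S \cdot T_S$, where $T_S$ is a good multiplicative approximation to $\sum_{a \in S, b \in S_2} |\langle a, b \rangle|$, so that after normalizing by the selection probability of $S$ the remaining conditional distribution inside $S$ correctly produces $(u,v)$ with the desired marginal.

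First, during preprocessing I would build the Corollary \ref{cor:s-data-struct} data structure once for $S_2$ with accuracy $\epsilon_0 = 1/(100 \log n)$ and failure probability $\delta = 1/\poly(n)$. For each $S \in \mathcal{G}$, I compute $T_S := \sum_{u \in S} \tilde t_u$, where $\tilde t_u$ is the query result for $u$, which gives a $(1 \pm \epsilon_0)$-approximation to $\sum_{u \in S, b \in S_2} |\langle u, b \rangle|$. In parallel, I build the Lemma \ref{lem:simple-sampling-ds} data structure for the pair $(S, S_2)$ for every $S \in \mathcal{G}$. The total preprocessing time is $\tilde O(d \cdot |S_2|) + \sum_{S \in \mathcal{G}} \tilde O(d \cdot (|S| + |S_2|))$; since the inner $|S_2|$ term can be amortized by sharing the single $S_2$-sketch across all $S$, and the Lemma \ref{lem:simple-sampling-ds} data structures only need the $u$-side sketched per $S$ (the $S_2$-side can be reused), this collapses to $\tilde O(d(|S_2| + \sum_S |S|))$ as required. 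I would additionally store the weights $\gamma_S \cdot T_S$ in a balanced binary tree over $\mathcal{G}$ supporting $O(\log|\mathcal{G}|)$-time weighted sampling and probability queries.

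To sample, I first draw $S \in \mathcal{G}$ from the tree with probability $q_S := (\gamma_S T_S)/(\sum_{A \in \mathcal{G}} \gamma_A T_A)$, then invoke the Lemma \ref{lem:simple-sampling-ds} sampler for $(S, S_2)$ to obtain $(u,v)$ with probability $p^S_{uv}$ that is a $2$-approximation of $|\langle u, v\rangle|/\sum_{a \in S, b \in S_2}|\langle a, b\rangle|$. The overall sampling probability is $p_{uv} = q_S \cdot p^S_{uv}$. Since $T_S$ is a $(1 \pm \epsilon_0)$-approximation of the true sum, $q_S$ is within a $(1 \pm \epsilon_0)^2$ factor of $\gamma_S \sum_{a \in S, b \in S_2}|\langle a,b\rangle|/(\sum_A \gamma_A \sum_{a' \in A, b' \in S_2}|\langle a',b'\rangle|)$, and combining with the factor-$2$ guarantee from Lemma \ref{lem:simple-sampling-ds} yields a total distortion of at most $2 \cdot (1+\epsilon_0)^2/(1-\epsilon_0)^2 < 2 \cdot 1.05 < 2.1$; tightening $\epsilon_0$ to a small enough constant brings this below the $2$-approximation threshold required, or equivalently we can build Lemma \ref{lem:simple-sampling-ds} with a slightly better constant and absorb the slack. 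Probability queries $p_{uv}$ are supported by querying the tree for $q_S$ and invoking the Lemma \ref{lem:simple-sampling-ds} probability-query routine, both in $\poly(d, \log n)$ time.

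The one place that requires care is ensuring the composed approximation factor remains within the stated bound: the top-level selection, the internal Lemma \ref{lem:simple-sampling-ds} sampler, and the Corollary \ref{cor:s-data-struct} estimates all contribute multiplicative error. This is the main obstacle, but it is resolved purely by setting the Corollary \ref{cor:s-data-struct} accuracy parameter $\epsilon_0$ to a sufficiently small constant (independent of $n, d$) and the Lemma \ref{lem:simple-sampling-ds} slack to match, at the cost of absorbing this constant into the $\poly$ factors in preprocessing and query time. A union bound over the $\poly(n)$ queries, at failure probability $\delta = 1/\poly(n)$ per invocation of Corollary \ref{cor:s-data-struct} and Lemma \ref{lem:simple-sampling-ds}, ensures correctness holds with high probability throughout.
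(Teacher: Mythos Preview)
Your two-level scheme is a reasonable idea, but the distribution you produce does not match the one in the statement. With $q_S \propto \gamma_S T_S$ and $p^S_{uv}\approx |\langle u,v\rangle|/\sum_{a\in S,b\in S_2}|\langle a,b\rangle|$, the product telescopes to
\[
p_{uv}\;\approx\;\frac{\gamma_S\,|\langle u,v\rangle|}{\sum_{A\in\mathcal G}\gamma_A\sum_{a\in A,b\in S_2}|\langle a,b\rangle|},
\]
whereas the intended target (as the paper's proof and its downstream use in Proposition~\ref{prop:sparsify-given-cover} make clear; the ``$A$'' in the displayed denominator is a typo for ``$S$'') is
\[
\frac{\gamma_S\,|\langle u,v\rangle|}{\bigl(\sum_{A\in\mathcal G}\gamma_A\bigr)\cdot\sum_{a\in S,b\in S_2}|\langle a,b\rangle|}.
\]
These two expressions differ exactly by the ratio $\sum_{a\in S,b}|\langle a,b\rangle|\big/\bigl(\sum_A\gamma_A\bigr)^{-1}\sum_A\gamma_A\sum_{a\in A,b}|\langle a,b\rangle|$, which can be arbitrarily far from~$1$ across different $S$. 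So as written, your sampler does not $2$-approximate the stated quantity.

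The fix is tiny: take $q_S\propto\gamma_S$ rather than $\gamma_S T_S$. Since the $\gamma_S$ are given exactly, $q_S$ is exact, and $p_{uv}=q_S\cdot p^S_{uv}$ is then a $2$-approximation to the correct target directly inherited from Lemma~\ref{lem:simple-sampling-ds}, with no extra $(1\pm\epsilon_0)$ factors to manage and no need to estimate $T_S$ at all. Your sharing argument for the $S_2$-side preprocessing is fine.

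For comparison, the paper takes a different route: it rescales each $u\in S$ to $f_S(u)=\gamma_S\,(u,\mathrm{id}(u))\big/\sum_{a\in S,b\in S_2}|\langle a,b\rangle|$, pads $S_2$ with zeros in the extra coordinates, and makes a \emph{single} call to Lemma~\ref{lem:simple-sampling-ds} on the combined set $S_1=\bigcup_{S}f_S(S)$. The rescaling forces the per-$S$ total contribution to equal $\gamma_S$, so the normalizer becomes $\sum_A\gamma_A$ and the correct ratio falls out automatically. Your (corrected) hierarchical sampler is arguably cleaner, since it avoids computing the normalizers $\sum_{a\in S,b\in S_2}|\langle a,b\rangle|$ that the paper's rescaling implicitly requires.
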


\begin{proof}
Define a new set $S_1\subseteq \mathbb{R}^{d+\log n}$ as follows, where $n = |S_2| + \sum_{S\in \mathcal G} |S|$:

$$S_1 = \cup_{S\in \mathcal G} \{f_S(u) \forall u\in S\}$$
where the function $f_S:\mathbb{R}^d\rightarrow \mathbb{R}^d$ is defined as follows for any set $S\in \mathcal G$:

$$f_S(u) = \frac{\gamma_S (u,\text{id}(u))}{\sum_{a\in S,b\in S_2} |\langle a,b\rangle|}$$
where $\text{id}: \cup_{S\in \mathcal G} S\rightarrow \{0,1\}^{\log n}$ is a function that outputs a unique ID for each element of $\cup_{S\in \mathcal G} S$. Define $f(u) = f_S(u)$ for the unique $S$ containing $u$ (Without loss of generality assume that exactly one $S$ contains u \Aaron{clarify?}). Let $S_2' = \{(x,0^{\log n}) \forall x\in S_2\}$. Construct the data structure $\mathcal D$ from Lemma \ref{lem:simple-sampling-ds} on the pair of sets $S_1,S_2'$. Now, sample a pair of $d$-dimensional vectors as follows:

\begin{enumerate}
    \item Sample:
    \begin{enumerate}
        \item Sample a pair $(x,(y,0^{\log n}))$ from $\mathcal D$, where $y\in S_2$ and $x\in \mathbb{R}^{d+\log n}$.
        \item Since the function $\text{id}$ outputs values that are not proportional to one other, the function $f$ is injective.
        \item Return the pair $(f^{-1}(x),y)$.
        \item (For the proof, let $w = f^{-1}(x)$ and let $S$ be the unique set for which $w\in S$)
    \end{enumerate}
\end{enumerate}

This data structure has preprocessing time $\tilde{O}(d(|S_2'| + |S_1|)) = \tilde{O}(d(|S_2| + \sum_{S\in \mathcal G} |S|))$ and sample query time $\poly(d\log n)$ by Lemma \ref{lem:simple-sampling-ds}, as desired. Therefore, we just need to show that it samples a pair $(w,y)$ with the desired probability. By the probability guarantee for Lemma \ref{lem:simple-sampling-ds} and the injectivity of the mapping $f$ combined over all $S\in \mathcal G$, $p_{wy}$ is 2-approximated by

\begin{align*}
\frac{|\langle x,(y,0^{\log n})\rangle|}{\sum_{a\in S_1,b\in S_2'} |\langle a,b\rangle|} &= \frac{\gamma_S |\langle w,y\rangle|}{\sum_{A\in \mathcal G} \sum_{p\in A,q\in S_2} |\langle p,q\rangle|}\\
\end{align*}
as desired.
\end{proof}

\subsection{Weighted IP graph sparsification}

In this section, we use the tools developed in the previous sections to sparsify weighted inner product graphs. To modularize the exposition, we define a partition of the edge set of a weighted inner product graph:

\begin{definition}
For a $w$-weighted graph $G$ and three functions on pairs of vertex sets $\zeta,\kappa,\delta$, a collection of vertex set family-vertex set pairs $\mathcal F$ is called a $(\zeta,\kappa,\delta)$-\emph{cover} for $G$ iff the following property holds:

\begin{enumerate}
    \item (Coverage) For any $e = \{u,v\}\in E(G)$, there exists a pair $(\mathcal G,S_1)\in \mathcal F$ and an $S_0\in \mathcal G$ for which $u\in S_0,v\in S_1$ or $u\in S_1,v\in S_0$, and
    $$\Reff_G(u,v)\le \frac{\delta(S_0,S_1)}{w_{uv}} + \frac{\kappa(S_0,S_1)}{\max_{x\in S_0,y\in S_1} w_{xy}} + \frac{\zeta(S_0,S_1)}{\sum_{x\in S_0,y\in S_1} w_{xy}}$$
\end{enumerate}

A $(\zeta,\kappa,\delta)$-cover is said to be $s$-\emph{sparse} if $\sum_{(\mathcal G,S_1)\in \mathcal{F}} \sum_{S_0\in \mathcal G} ((\delta(S_0,S_1) + \kappa(S_0,S_1))|S_0||S_1| + \zeta(S_0,S_1)) \le s$. A $(\zeta,\kappa)$-cover is said to be $w$-\emph{efficient} if $\sum_{(\mathcal G,S_1)\in \mathcal F} \left(|S_1| + \sum_{S_0\in \mathcal G} |S_0|\right)\le w$. When $\delta = 0$, we simplify notation to refer to $(\zeta,\kappa)$-covers instead.
\end{definition}

Given a $(\zeta,\kappa)$-cover for a weighted or unweighted inner product graph, one can sparsify it using Theorem \ref{thm:oversampling} and the sampling data structure from Proposition \ref{prop:main-sampling-ds}:

\begin{proposition}\label{prop:sparsify-given-cover}
Given a set $X\subseteq \mathbb{R}^d$ with $n = |X|$ and an $s$-sparse $w$-efficient $(\zeta,\kappa,\delta)$-cover $\mathcal F$ for the weighted inner product graph $G$ on $X$, and $\epsilon,\delta \in (0,1)$, there is an 
\begin{align*}
\poly(d,\log n,\log s,\log w,\log(1/\delta)) (s + w + n)/\epsilon^4
\end{align*}
time algorithm for constructing an $(1\pm\epsilon)$-sparsifier for $G$ with $O(n\log n/\epsilon^2)$ edges with probability at least $1 - \delta$.
\end{proposition}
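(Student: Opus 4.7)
The plan is to apply oversampling (Theorem~\ref{thm:oversampling}) with leverage-score overestimates read off directly from the cover, using the sampling data structure of Proposition~\ref{prop:main-sampling-ds} to draw the required samples efficiently, and then to compress the resulting graph via Corollary~\ref{cor:ss11}.

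First I would convert the cover bound into a clean leverage-score overestimate. For each edge $e = \{u,v\} \in E(G)$, pick a witnessing pair $(\mathcal{G}, S_1) \in \mathcal{F}$ and $S_0 \in \mathcal{G}$ so that the cover inequality holds for $e$. Multiplying by $w_{uv}$ and using $w_{uv} \le \max_{x \in S_0, y \in S_1} w_{xy}$ on the middle term, I get
\[
w_{uv} \Reff_G(u,v) \;\le\; \delta(S_0, S_1) \;+\; \kappa(S_0, S_1) \;+\; \zeta(S_0, S_1) \cdot \frac{w_{uv}}{\sum_{x \in S_0, y \in S_1} w_{xy}}.
\]
Define $p_e$ as the right-hand side; then $p_e$ is a valid leverage-score overestimate (condition~2 of Theorem~\ref{thm:oversampling}). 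Summing over $e$ within each biclique $S_0 \times S_1$, the $\delta$ and $\kappa$ contributions are bounded by $(\delta(S_0,S_1)+\kappa(S_0,S_1))|S_0||S_1|$ and the $\zeta$ contribution telescopes to $\zeta(S_0,S_1)$, so the total mass $t := \sum_e p_e \le s$ by the sparsity hypothesis.

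Next I would draw $q = \Theta(\epsilon^{-2} \, t \log t \, \log(1/\delta))$ i.i.d.\ edge samples with probability $p_e/t$, as required by Theorem~\ref{thm:oversampling}. I decompose $p_e$ into its three additive parts and set up an outer distribution over the triples $(\mathcal G, S_1, S_0)$ with masses $(\delta + \kappa)|S_0||S_1|$ and $\zeta$. For each draw from the outer distribution, the $\delta$ and $\kappa$ parts require only uniform sampling from a biclique (handled with balanced-binary-tree prefix-sum structures over $|S_0|, |S_1|$ in $O((|S_0|+|S_1|)\log n)$ preprocessing), while the $\zeta$ part requires sampling $(u,v) \in S_0 \times S_1$ with probability proportional to $|\langle u,v\rangle|$, which is exactly what Proposition~\ref{prop:main-sampling-ds} provides (with a $2$-approximation, which is absorbed by enlarging $q$ by a constant). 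Proposition~\ref{prop:main-sampling-ds} also supports querying $p_e$ for the sampled edge, which is needed to compute the reweighting $w_e t/(p_e q)$ prescribed by Theorem~\ref{thm:oversampling}. The total preprocessing across all $(\mathcal G, S_1) \in \mathcal F$ is $\tilde O(d \cdot w)$ by the $w$-efficiency hypothesis, and each of the $q = \tilde O(s/\epsilon^2)$ samples costs $\poly(d, \log n)$ time.

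At this stage Theorem~\ref{thm:oversampling} yields (with probability $\ge 1-\delta$) a graph $H'$ on $n$ vertices with $\tilde O(s/\epsilon^2)$ edges satisfying $(1 - \epsilon/2)L_G \preceq L_{H'} \preceq (1+\epsilon/2)L_G$. To meet the stated edge-count bound I apply Corollary~\ref{cor:ss11} to $H'$ with accuracy $\epsilon/3$, reducing to $O(n \log n/\epsilon^2)$ edges in time $\tilde O(|E(H')|/\epsilon^2) = \tilde O(s/\epsilon^4)$; composing the two spectral approximations gives the final $(1\pm \epsilon)$ guarantee. Adding the preprocessing ($\tilde O(dw)$), sampling ($\tilde O(s/\epsilon^2) \cdot \poly(d,\log n)$), and compression ($\tilde O((s+n)/\epsilon^4)$) costs yields the claimed running time $\poly(d,\log n,\log s,\log w, \log(1/\delta))\cdot(s+w+n)/\epsilon^4$.

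The main obstacle I anticipate is the sampling step: the sampling data structure only delivers approximate probabilities, so I have to verify that the multiplicative distortion (a) still satisfies $p_e \ge w_e \Reff_G(u,v)$ after a constant rescaling and (b) is known exactly enough at query time to set the reweighting factor $w_e t/(p_e q)$ so that the sampled edges are \emph{unbiased} in expectation (up to the constant factor that gets absorbed into the oversampling constant). Once this bookkeeping is verified, everything else is a direct invocation of Theorems~\ref{thm:oversampling} and~\ref{thm:ss11}.
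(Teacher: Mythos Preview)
Your approach is essentially the paper's: overestimate leverage scores via the cover, sample with Theorem~\ref{thm:oversampling} using Proposition~\ref{prop:main-sampling-ds} for the $\zeta$-part and uniform sampling for the $\delta,\kappa$-parts, then compress with Spielman--Srivastava. The decomposition of $t$, the $t\le O(s)$ bound, and the $\eps^{-4}$ accounting all match.

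There is one bookkeeping inconsistency worth flagging. You define $p_e$ using a \emph{single} witnessing triple $(\mathcal G,S_1,S_0)$ per edge, but your sampling procedure (outer distribution over \emph{all} triples, then inner sample) does not draw proportionally to that $p_e$: an edge lying in several bicliques will be hit with total mass equal to the sum of its contributions across all triples, not just the witness. The paper avoids this mismatch by defining the overestimate $r_{uv}$ as the sum over \emph{every} $(\mathcal G,S_1)\in\mathcal F$ and $S_0\in\mathcal G$ with $\{u,v\}\subseteq S_0\cup S_1$ of $\delta(S_0,S_1)+\kappa(S_0,S_1)+\big(\sum_{A\in\mathcal G}\zeta(A,S_1)\big)p_{uv}^{(\mathcal G,S_1)}$. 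This $r_{uv}$ is exactly what your two-stage sampler draws proportionally to, it is still $\ge w_{uv}\Reff_G(u,v)$ (one summand already dominates by Coverage), and $t=\sum_{u,v}r_{uv}\le 2s$ by the same telescoping you wrote. The reweighting factor then uses $r_e$, which you compute at query time by summing the $p_{uv}$-queries of Proposition~\ref{prop:main-sampling-ds} over the triples containing $e$. Once you make this switch, your anticipated obstacle about ``knowing the sampling probability exactly enough for unbiased reweighting'' disappears: $r_e$ is computable directly, and the $2$-approximation inside $p_{uv}$ is already baked into its definition, so no further correction is needed.
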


\begin{algorithm}
\begin{algorithmic}[1]\caption{}
\Procedure{\OversamplingWithCover}{$X,\mathcal F,\epsilon,\delta$}
    \State (for analysis only: define $r_{uv}$ for each pair $u,v\in X$ as in proof)
    \State \textbf{Construct the Proposition \ref{prop:main-sampling-ds} data structure $\mathcal D_{(\mathcal G,S_1)}$ with $\gamma_S = \zeta(S,S_1)$ for each $S\in \mathcal G$}
    \State $t \leftarrow \sum_{u,v\in X} r_{uv}$
    \State $q \leftarrow C \cdot \epsilon^{-2} \cdot t \log t \cdot \log(1/\delta)$
    \State Initialize $H$ to be an empty graph
    \For{$i = 1 \to q$}
        \State Sample one $e = \{u,v\} \in X\times X$ with probability $r_e/t$ \textbf{by sampling $\{u,v\}$ uniformly or from some data structure $\mathcal D_{(\mathcal G,S_1)}$ (see proof for details)}
        \State Add that edge with weight $w_e t / (r_e q)$ to graph $H$ (note: $r_e$ can be computed in $\poly(d,\log w)$ by the $p_{uv}$ query time in Prop \ref{prop:main-sampling-ds})
    \EndFor
    \State \Return Spielman-Srivastava \cite{ss11} applied to $H$
\EndProcedure
\end{algorithmic}
\end{algorithm}

\begin{proof}

\textbf{Filling in algorithm details (the bolded parts)}. We start by filling in the details in the algorithm

\noindent $\OversamplingWithCover$. First, we define $r_{uv}$ for each pair of distinct $u,v\in X$. $\{u,v\}$ is a weighted edge in $G$ with weight $w_{uv}$. Define

$$r_{uv} = 2\sum_{(\mathcal G,S_1)\in \mathcal F}\sum_{S_0\in \mathcal G: u\in S_0,v\in S_1 \text{ or } v\in S_0,u\in S_1} \left(\delta(S_0,S_1) + \kappa(S_0,S_1) + \left(\sum_{A\in \mathcal G} \zeta(A,S_1)\right) p_{uv}^{(\mathcal G,S_1)}\right)$$
where $p_{uv}^{(\mathcal G,S_1)}$ is the probability $p_{uv}$ defined for the data structure $\mathcal D^{(\mathcal G,S_1)}$ in Proposition \ref{prop:main-sampling-ds}. Next, we fully describe how to sample pairs $\{u,v\}$ with probability proportional to $r_{uv}$. Notice that $t$ can be computed in $O(w)$ time because

\begin{align*}
t &= 2\sum_{u,v\in X} r_{uv}\\
&= 2\sum_{(\mathcal G,S_1)\in \mathcal F} \sum_{S_0\in \mathcal G} \sum_{u\in S_0,v\in S_1} \left(\delta(S_0,S_1) + \kappa(S_0,S_1) + \left(\sum_{A\in \mathcal G} \zeta(A,S_1)\right) p_{uv}^{(\mathcal G,S_1)}\right)\\
&= 2\sum_{(\mathcal G,S_1)\in \mathcal F} \left(\left(\sum_{A\in \mathcal G} \zeta(A,S_1)\right) + \sum_{S_0\in \mathcal G} |S_0||S_1|(\delta(S_0,S_1) + \kappa(S_0,S_1))\right)\\
\end{align*}
can be computed in $O(w)$ time. Sample a pair $\{u,v\}$ with probability equal to $r_{uv}/t$ as follows:

\begin{enumerate}
    \item Sample a pair:
    \begin{enumerate}
        \item Sample a Bernoulli $b\sim \text{Bernoulli}\left(\frac{1}{t}\sum_{(\mathcal G,S_1)\in \mathcal F} \left(\sum_{A\in \mathcal G} \zeta(A,S_1)\right)\right)$.
        \item If $b = 1$
        \begin{enumerate}
            \item Sample a pair $(\mathcal G,S_1)\in \mathcal F$ with probability proportional to $\sum_{A\in \mathcal G} \zeta(A,S_1)$.
            \item Sample the pair $(u,v)$ using the data structure $\mathcal D^{(\mathcal G,S_1)}$.
        \end{enumerate}
        \item Else
        \begin{enumerate}
            \item Sample a pair $(\mathcal G,S_1)\in \mathcal F$ with probability proportional to $\sum_{S_0\in \mathcal G} |S_0||S_1|(\delta(S_0,S_1) + \kappa(S_0,S_1))$.
            \item Sample an $S_0\in \mathcal G$ with probability proportional to $|S_0|(\delta(S_0,S_1) + \kappa(S_0,S_1))$.
            \item Sample $(u,v)\in S_0\times S_1$ uniformly.
        \end{enumerate}
    \end{enumerate}
\end{enumerate}

All sums in the above sampling procedure can be precomputed in $\poly(d) w$ time. After doing this precomputation, each sample from the above procedure takes $\poly(d,\log n,\log w)$ time by to Proposition \ref{prop:main-sampling-ds} for the last step in the if statement and uniform sampling from $[0,1]$ with intervals otherwise.

\textbf{Sparsifier correctness}. By the \emph{Coverage} guarantee of $\mathcal F$ and the approximation guarantee for the $p_{uv}$s in Proposition \ref{prop:main-sampling-ds}, $w_{uv}\Reff_G(u,v)\le r_{uv}$ for all $u,v\in X$. Therefore, Theorem \ref{thm:oversampling} applies and shows that the graph $H$ returned is a $(1\pm\epsilon)$-sparsifier for $G$ with probability at least $1 - \delta$. Spielman-Srivastava only worsens the approximation guarantee by a $(1+\epsilon)$ factor, as desired.

\underline{Number of edges in $H$}. It suffices to bound $q$. In turn, it suffices to bound $t$. Recall from above that

\begin{align*}
t &= 2\sum_{(\mathcal G,S_1)\in \mathcal F} \left(\left(\sum_{A\in \mathcal G} \zeta(A,S_1)\right) + \sum_{S_0\in \mathcal G} |S_0||S_1|(\delta(S_0,S_1) + \kappa(S_0,S_1))\right)\\
&= 2\sum_{(\mathcal G,S_1)\in \mathcal F} \left(\sum_{S_0\in \mathcal G} \left(|S_0||S_1|(\delta(S_0,S_1) + \kappa(S_0,S_1)) + \zeta(S_0,S_1)\right)\right)\\
&\le 2s
\end{align*}
since $\mathcal F$ is $s$-sparse. Therefore, $q\le \poly(d,\log s,\log 1/\delta) s/\epsilon^2$.

\textbf{Runtime}. We start by bounding the runtime to produce $H$. Constructing the data structure $\mathcal D_{(\mathcal G,S_1)}$ takes $\poly(d,\log n,\log w) (|S_1| + \sum_{S_0\in \mathcal G} |S_0|)$ time by Proposition \ref{prop:main-sampling-ds}. Therefore, the total time to construct all data structures is at most $\poly(d,\log n,\log w) w$. Computing $t$ and $q$, as discussed above, takes $O(w)$ time. $q\le \poly(d,\log s,\log 1/\delta) s/\epsilon^2$ as discussed above and each iteration of the for loop takes $\poly(d,\log n,\log w)$ time by the query complexity bounds of Proposition \ref{prop:main-sampling-ds}. Therefore, the total time required to produce $H$ is at most $\poly(d,\log n,\log s,\log w,\log 1/\delta) (s + w)/\epsilon^2$. Running Spielman-Srivastava requires an additional $\poly(\log s,\log n) (s/\epsilon^2 + n)/\epsilon^2$ time, for a total of $\poly(d,\log n,\log s,\log w,\log 1/\delta) (s + w)/\epsilon^4$ time, as desired.
\end{proof}

Therefore, to sparsify weighted inner product graphs, it suffices to construct an $\poly(d,\log n) n$-sparse, $\poly(d,\log n) n$-efficient $(\zeta,\kappa)$-cover. We break up this construction into a sequence of steps:

\subsubsection{\texorpdfstring{$(\zeta,\kappa)$}{}-cover for unweighted IP graphs}

We start by constructing covers for unweighted inner product graphs. The algorithm repeatedly peels off sets constructed using $\LowDiamSet$ and returns all pairs of such sets. The sparsity of the cover is bounded due to Proposition \ref{prop:lev-score-bound}. The efficiency of the cover is bounded thanks to a $\poly(d,\log n)$ bound on the number of while loop iterations, which in turn follows from the \emph{Size} guarantee of Proposition \ref{prop:weak-sampling}.

\begin{proposition}[Cover for unweighted graphs]\label{prop:unweighted-cover}
Given a set $X\subseteq \mathbb{R}^d$ with $|X| = n$, there is an $\poly(d,\log n)$-time algorithm $\UnweightedCover(X)$ that, with probability at least $1 - 1/\poly(n)$, produces an $\poly(d,\log n) n$-sparse $\poly(d,\log n) n$-efficient $(\zeta,\kappa)$-cover for the unweighted inner product graph $G$ on $X$.
\end{proposition}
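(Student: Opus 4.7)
The plan is to iteratively apply $\LowDiamSet$ to peel off clusters of small effective-resistance diameter, and then build the cover $\mathcal{F}$ out of pairs of such clusters. Specifically, set $X_0 = X$, and for $j = 1, 2, \ldots$, let $Q_j := \LowDiamSet(X_{j-1})$ and $X_j := X_{j-1} \setminus Q_j$, stopping once $X_j = \emptyset$. The \emph{Size} guarantee in Proposition~\ref{prop:efficient-low-eff-res} ensures $|Q_j| \geq |X_{j-1}|/C_1$, so $|X_j| \leq (1 - 1/C_1) |X_{j-1}|$ and hence after $k = O(C_1 \log n) = \poly(d, \log n)$ iterations we have partitioned $X = Q_1 \sqcup \cdots \sqcup Q_k$. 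Each invocation of $\LowDiamSet$ costs $\poly(d,\log n)\cdot n$, so the total runtime is $\poly(d,\log n)\cdot n$, and a union bound over the $k$ invocations gives the $1 - 1/\poly(n)$ success probability.

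Next I would assemble the cover as $\mathcal{F} := \{(\mathcal{G}_j, Q_j) : j \in [k]\}$ with $\mathcal{G}_j := \{Q_1, \ldots, Q_j\}$. This choice trivially satisfies \emph{Coverage}: any edge $\{u,v\} \in E(G)$ has endpoints in some $Q_i, Q_j$ with $i \leq j$, so $u \in Q_i \in \mathcal{G}_j$ and $v \in Q_j$. It remains to define $\delta, \kappa, \zeta$ and verify the $\Reff$ inequality. I would set $\delta \equiv 0$ throughout. For a diagonal pair $(Q_j, Q_j)$, the \emph{Low effective resistance diameter} guarantee applied to $\LowDiamSet$ on $X_{j-1}$, together with Rayleigh monotonicity (passing from the IP graph on $X_{j-1}$ to the full IP graph $G$), yields $\Reff_G(u,v) \leq C_2/|X_{j-1}|$ for all $u,v \in Q_j$, so I would take $\kappa(Q_j, Q_j) := C_2/|X_{j-1}|$ and $\zeta(Q_j, Q_j) := 0$ (using that $\max w = 1$ in the unweighted IP graph). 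For an off-diagonal pair $(Q_i, Q_j)$ with $i < j$, I would apply Proposition~\ref{prop:lev-score-bound} with $R_1 = C_2/|X_{i-1}|$, $R_2 = C_2/|X_{j-1}|$ to obtain
\begin{align*}
\Reff_G(u,v) \leq \frac{6 C_2}{|X_{j-1}|} + \frac{3}{\sum_{x \in Q_i, y \in Q_j} w_{xy}},
\end{align*}
and hence set $\kappa(Q_i, Q_j) := 6C_2/|X_{j-1}|$ and $\zeta(Q_i, Q_j) := 3$.

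The main obstacle is verifying that the sparsity $\sum ((\delta+\kappa)|S_0||S_1| + \zeta)$ is only $\poly(d,\log n)\cdot n$, since one might fear that taking pairs of $k$ clusters produces an $n^2$-type blowup. The key cancellation is that $|Q_j| \leq |X_{j-1}|$, so the diagonal contribution telescopes: $\sum_j \kappa(Q_j,Q_j) |Q_j|^2 \leq \sum_j (C_2/|X_{j-1}|) \cdot |Q_j| \cdot |X_{j-1}| = C_2 \sum_j |Q_j| = C_2 n$. For the off-diagonal contribution, bounding $|Q_i| \leq |X_{j-1}|$ for $i < j$ (since $|X|$ is nonincreasing) gives $\sum_{i<j} \kappa(Q_i,Q_j) |Q_i||Q_j| \leq 6C_2 \sum_{i<j} |Q_j| \leq 6 C_2 k n$, and the $\zeta$ terms contribute $3\binom{k}{2}$; both are $\poly(d,\log n)\cdot n$. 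Efficiency is easier: $\sum_j (|Q_j| + \sum_{i \leq j} |Q_i|) \leq k \cdot n + k \cdot n = O(kn) = \poly(d,\log n) \cdot n$. This closes the argument.
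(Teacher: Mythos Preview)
Your approach is essentially the same as the paper's: peel off low-effective-resistance-diameter clusters via repeated calls to $\LowDiamSet$, pair them up, and invoke Proposition~\ref{prop:lev-score-bound} for the coverage bound. The paper makes two slightly different (and simpler) choices: it pairs every cluster with the \emph{full} family $\mathcal{U}$ rather than only with earlier clusters $\mathcal{G}_j$, and it uses the weaker diameter bound $C_2/|S|$ rather than your sharper $C_2/|X_{j-1}|$. This lets it take $\kappa(S_0,S_1) = 3C_2(1/|S_0| + 1/|S_1|)$, so that $\kappa(S_0,S_1)|S_0||S_1| = 3C_2(|S_0|+|S_1|)$ and the sparsity sum collapses immediately to $O(|\mathcal{U}|\cdot C_2 n)$ with no case analysis.

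There is one genuine slip in your sparsity calculation for the off-diagonal terms. You claim ``$|Q_i| \leq |X_{j-1}|$ for $i<j$ (since $|X|$ is nonincreasing)'', but this is false: $Q_i$ was peeled \emph{before} $X_{j-1}$ was formed, so $|Q_i|$ can be much larger than $|X_{j-1}|$ (e.g.\ $Q_1$ could be half of $X$ while $|X_{j-1}|$ is tiny). The correct inequality is $|Q_j| \leq |X_{j-1}|$, and with this the bound still goes through:
\[
\sum_{i<j} \frac{6C_2}{|X_{j-1}|}\,|Q_i||Q_j|
= 6C_2 \sum_{j} \frac{|Q_j|}{|X_{j-1}|} \sum_{i<j} |Q_i|
\le 6C_2 \sum_{j} 1 \cdot n
= 6C_2\, k\, n.
\]
So your final estimate $6C_2 k n$ is right; only the justification needs this swap.
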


\begin{algorithm}[!h]\caption{}
\begin{algorithmic}[1]
\Procedure{\UnweightedCover}{$X$}

    \State \textbf{Input}: $X\subseteq \mathbb{R}^d$
    
    \State \textbf{Output}: An sparse, efficient $(\zeta,\kappa)$ cover for the unweighted inner product graph $G$ on $X$
    
    \State $\mathcal U \gets \emptyset$
    
    \State $Y\gets X$
    
    \While{$Y\ne \emptyset$} \Comment{Finding expanders}
    
        \State Let $Q\gets \LowDiamSet(Y)$
    
        \State Add the set $Q$ to $\mathcal U$
        
        \State Remove the vertices $Q$ from $Y$
        
    \EndWhile
    
    \State \Return $\{(\mathcal U, S): \forall S\in \mathcal U\}$
    
\EndProcedure
\end{algorithmic}
\end{algorithm}

\begin{proof}
Let $\mathcal F = \UnweightedCover(X)$ and define the functions $\zeta,\kappa$ as follows: $\zeta(S_0,S_1) = 3$ and $\kappa(S_0,S_1) = C_2\left(\frac{3}{|S_0|} + \frac{3}{|S_1|}\right)$ for any pair of sets $S_0,S_1\subseteq X$. Recall that $C_2$ is defined in the statement of Proposition \ref{prop:efficient-low-eff-res}.

\textbf{Number of while loop iterations}. We start by showing that there are at most $\poly(d,\log n)$ while loop iterations with probability at least $1 - 1/\poly(n)$. By the \emph{Size} guarantee of Proposition \ref{prop:efficient-low-eff-res}, when $\LowDiamSet$ succeeds (which happens with probability $1 - 1/\poly(n)$), $Y$ decreases in size by a factor of at least $1 - 1/p(d,\log n)$ for some fixed constant degree polynomial $p$. Therefore, after $p(d,\log n) \log n = \poly(d,\log n)$ iterations, $Y$ will be empty, as desired. Therefore, $|\mathcal U|\le \poly(d,\log n)$.

\textbf{Runtime}. The runtime follows immediately from the bound on the number of while loop iterations and the runtime bound on $\LowDiamSet$ from Proposition \ref{prop:efficient-low-eff-res}.

\textbf{Coverage}. For each $S\in \mathcal U$, $\max_{u,v\in S} \Reff_G(u,v)\le \frac{C_2}{|S|}$ by the \emph{Low effective resistance diameter} guarantee of Proposition \ref{prop:efficient-low-eff-res}. Plugging this into Proposition \ref{prop:lev-score-bound} immediately shows that $\mathcal F$ is a $(\zeta,\kappa)$-cover for $G$.

\textbf{Sparsity bound}. We bound the desired quantity directly using the fact that $|\mathcal U|\le \poly(d,\log n)$:

\begin{align*}
    \sum_{(\mathcal U,S_1)\in \mathcal F} \sum_{S_0\in \mathcal U} \left(\kappa(S_0,S_1)|S_0||S_1| + \zeta(S_0,S_1)\right) &\le \poly(d,\log n)\sum_{(\mathcal U,S_1)\in \mathcal F} \sum_{S_0\in \mathcal U} (|S_0| + |S_1| + 1)\\
    &\le \poly(d,\log n) n\\
\end{align*}
as desired.
\end{proof}

\textbf{Efficiency bound}. Follows immediately from the bound on $|\mathcal U|$.

\subsubsection{\texorpdfstring{$(\zeta,\kappa)$}{}-cover for weighted IP graphs on bounded-norm vectors}

Given a covers for unweighted inner product graphs, it is easy to construct covers for weighted inner product graphs on bounded norm vectors simply by removing edge weights and producing the cover. Edge weights only differ by a factor of $O(d)$ in these two graphs, so effective resistances also differ by at most that amount. Note that the following algorithm also works for vectors with norms between $z$ and $2z$ for any real number $z$.

\begin{proposition}[Weighted bounded norm cover]\label{prop:bounded-cover}
Given a set $X\subseteq \mathbb{R}^d$ with $1\le \|u\|_2\le 2$ for all $u\in X$ and $|X| = n$, there is an $n\poly(d,\log n)$ time algorithm $\BoundedCover(X)$ that produces a $\poly(d,\log n)$-sparse, $\poly(d,\log n)$-efficient $(\zeta,\kappa)$-cover for the weighted inner product graph $G$ on $X$.
\end{proposition}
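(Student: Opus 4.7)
}

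The plan is to show that $\BoundedCover(X)$ may be implemented by simply running $\UnweightedCover(X)$ (Proposition~\ref{prop:unweighted-cover}) on the same point set and returning the same family $\mathcal{F} = \{(\mathcal{U},S) : S \in \mathcal{U}\}$, with only the cover functions $\zeta,\kappa$ reinterpreted for the weighted inner product graph $G$. The key geometric observation is that since $1 \le \|u\|_2 \le 2$ for all $u \in X$, Cauchy--Schwarz gives $w_{uv} = |\langle u,v\rangle| \le 4$, and whenever $\{u,v\}$ is an edge of the unweighted inner product graph $G_U$ we additionally have $w_{uv} \ge \|u\|_2\|v\|_2/(d+1) \ge 1/(d+1)$. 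Thus, restricted to edges of $G_U$, the weights of $G$ vary by at most a factor of $4(d+1) = O(d)$.

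The first step is to transfer the low effective resistance diameter guarantee from $G_U$ to $G$. Fix a cluster $S \in \mathcal{U}$ and $u,v \in S$. By Proposition~\ref{prop:efficient-low-eff-res}, $\Reff_{G_U}(u,v) \le C_2/n$. Let $G'$ denote the subgraph of $G$ obtained by keeping exactly the $G_U$-edges but with their $G$-weights; by Rayleigh monotonicity $\Reff_G(u,v) \le \Reff_{G'}(u,v)$. Since every edge of $G'$ has weight at least $1/(d+1)$, scaling all weights down uniformly to $1/(d+1)$ only increases effective resistances, so $\Reff_{G'}(u,v) \le (d+1)\,\Reff_{G_U}(u,v) \le (d+1)C_2/n$. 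Therefore $R_S := \max_{u,v \in S} \Reff_G(u,v) \le O(d)C_2/n$ for every $S \in \mathcal{U}$.

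The second step applies Proposition~\ref{prop:lev-score-bound} to $G$: for any $u \in S_0$ and $v \in S_1$ with $S_0,S_1 \in \mathcal{U}$,
\begin{align*}
\Reff_G(u,v) \le 3R_{S_0} + 3R_{S_1} + \frac{3}{\sum_{x \in S_0, y \in S_1} w_{xy}} \le \frac{O(d)C_2}{n} + \frac{3}{\sum_{x \in S_0, y \in S_1} w_{xy}}.
\end{align*}
Using $\max_{x \in S_0, y \in S_1} w_{xy} \le 4$, we can absorb the first term by setting $\kappa(S_0,S_1) = O(d)C_2/n$ (a constant in the cluster sizes) and $\zeta(S_0,S_1) = 3$, giving the required cover inequality. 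Coverage of every pair of points is immediate because $\mathcal{U}$ partitions $X$.

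The third step is to verify the sparsity and efficiency bounds using $|\mathcal{U}| \le \poly(d,\log n)$ from the proof of Proposition~\ref{prop:unweighted-cover}. For sparsity, $\sum_{S_0,S_1 \in \mathcal{U}} \kappa(S_0,S_1)|S_0||S_1| = (O(d)/n)\sum_{S_0,S_1}|S_0||S_1| = (O(d)/n)\cdot n^2 = \poly(d,\log n)\cdot n$, while $\sum \zeta = 3|\mathcal{U}|^2 \le \poly(d,\log n)$. For efficiency, $\sum_{S_1}(|S_1| + \sum_{S_0}|S_0|) \le |\mathcal{U}|\cdot n + n \le \poly(d,\log n)\cdot n$. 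The runtime is dominated by the single call to $\UnweightedCover(X)$, which is $\poly(d,\log n)\cdot n$. The main subtlety is the Rayleigh-monotonicity/weight-scaling argument in the first step; everything else is a matter of carefully bookkeeping the parameters and invoking Proposition~\ref{prop:lev-score-bound}.
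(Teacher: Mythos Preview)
Your approach is essentially the paper's: call $\UnweightedCover(X)$ and return the same family, with the cover parameters inflated by an $O(d)$ factor using the fact that on $G_U$-edges the weighted graph has $w_{uv}\in[1/(d+1),4]$. The only stylistic difference is that the paper treats the unweighted cover as a black box, setting $\zeta=8d\zeta_0$ and $\kappa=8d\kappa_0$ and transferring the Coverage inequality via $\Reff_G\le\Reff_{G_1}\le(d+1)\Reff_{G_0}$, whereas you transfer the per-cluster effective-resistance-diameter bound and then re-apply Proposition~\ref{prop:lev-score-bound} directly to $G$. Your route is arguably slightly cleaner, since the $\zeta$-term then automatically involves the $G$-weights.

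There is, however, one concrete bug in your first step. You write ``By Proposition~\ref{prop:efficient-low-eff-res}, $\Reff_{G_U}(u,v)\le C_2/n$'' for every cluster $S\in\mathcal U$. This is only correct for the first cluster: $\UnweightedCover$ calls $\LowDiamSet$ on a shrinking set $Y\subseteq X$, and the guarantee it returns is $\Reff_{G_Y}(u,v)\le C_2/|Y|$, which by Rayleigh monotonicity gives only $\Reff_{G_U}(u,v)\le C_2/|S|$ (this is precisely the bound used in the proof of Proposition~\ref{prop:unweighted-cover}). With your choice $\kappa(S_0,S_1)=O(d)C_2/n$ the Coverage inequality fails whenever $|S_0|$ or $|S_1|$ is much smaller than $n$. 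The fix is to take $\kappa(S_0,S_1)=O(d)\,C_2\,(1/|S_0|+1/|S_1|)$, i.e.\ exactly $8d$ times the unweighted $\kappa_0$, after which the sparsity sum becomes
\[
\sum_{S_0,S_1\in\mathcal U}\kappa(S_0,S_1)\,|S_0||S_1|
= O(d)C_2\sum_{S_0,S_1}(|S_0|+|S_1|)
\le O(d)C_2\cdot 2|\mathcal U|\cdot n
= \poly(d,\log n)\cdot n,
\]
the same order as before. (In passing: both your computation and the paper's give $\poly(d,\log n)\cdot n$-sparse, matching Proposition~\ref{prop:unweighted-cover}; the missing factor $n$ in the proposition statement appears to be a typo.)
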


\begin{proof}
Let $G_0$ be the unweighted inner product graph on $X$. Let $G_1$ be the weighted graph $G$ with all edges that are not in $G_0$ deleted. Let $\mathcal F$ be the $(\zeta_0,\kappa_0)$-cover given by Proposition \ref{prop:unweighted-cover} for $G_0$. Let this cover be the output of $\BoundedCover(X)$. It suffices to show that $\mathcal F$ is a $(\zeta,\kappa)$-cover for $G$, where $\zeta = 8d \zeta_0$ and $\kappa = 8d \kappa_0$. By Rayleigh monotonicity,

$$\Reff_G(u,v)\le \Reff_{G_1}(u,v)$$
for all $u,v\in X$. Let $w_e$ denote the weight of the edge $e$ in $G_1$. For all edges $e$ in $G_1$, $\frac{1}{d+1}\le w_e\le 4$ by the norm condition on $X$. Therefore, for all $u,v\in X$,

$$\Reff_{G_1}(u,v)\le (d+1)\Reff_{G_0}(u,v)$$
By the \emph{Coverage} guarantee on $\mathcal F$, there exists a pair $(\mathcal G,S_1)$ and an $S_0\in \mathcal G$ for which $u\in S_0,v\in S_1$ or $v\in S_0,u\in S_1$ and

$$\Reff_{G_1}(u,v)\le (d+1) \left(\kappa_0(S_0,S_1) + \frac{\zeta_0(S_0,S_1)}{|S_0||S_1|}\right)$$
By the upper bound on the edge weights for $G_1$,

$$\Reff_{G_1}(u,v)\le 4(d+1) \left(\frac{\kappa_0(S_0,S_1)}{\max_{x\in S_0,y\in S_1} w_{xy}} + \frac{\zeta_0(S_0,S_1)}{\sum_{x\in S_0,y\in S_1}w_{xy}}\right)$$
Since $4(d+1)\le 8d$, $\mathcal F$ is a $(\zeta,\kappa)$-cover for $G$ as well, as desired.
\end{proof}

\subsubsection{\texorpdfstring{$(\zeta,\kappa)$}{}-cover for weighted IP graphs on vectors with norms in the set \texorpdfstring{$[1,2]\cup [z,2z]$}{} for any \texorpdfstring{$z > 1$}{}}

By the previous subsection, it suffices to cover the pairs $(u,v)$ for which $\|u\|_2\in [1,2]$ and $\|v\|_2\in [z,2z]$. This can be done by clustering using $\LowDiamSet$ on the $[z,2z]$-norm vectors. For each cluster $S_1$, let $\mathcal G = \{\{u\}: \forall u\in X \text{ with } \|u\|_2\in [1,2]\}$. This cover is sparse because of the fact that the clusters have low effective resistance diameter. It is efficient because of the small number of clusters.

\begin{proposition}[Two-scale cover]\label{prop:two-bounded-cover}
Given a set $X\subseteq \mathbb{R}^d$ for which $|X| = n$ and $\|u\|_2\in [1,2]\cup [z,2z]$ for all $u\in X$, there is a $\poly(d,\log n) n$-time algorithm $\TwoBoundedCover(X)$ that produces an $\poly(d,\log n) n$-sparse, $\poly(d,\log n)$-efficient $(\zeta,\kappa)$-cover for the weighted inner product graph $G$ on $X$.
\end{proposition}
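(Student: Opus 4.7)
Partition $X = X_{\mathrm{sm}} \cup X_{\mathrm{lg}}$, where $X_{\mathrm{sm}} = \{u \in X : \|u\|_2 \in [1,2]\}$ and $X_{\mathrm{lg}} = \{u \in X : \|u\|_2 \in [z, 2z]\}$. The algorithm $\TwoBoundedCover(X)$ returns the union of three families: $\BoundedCover(X_{\mathrm{sm}})$, $\BoundedCover(X_{\mathrm{lg}})$ (valid for norms in $[z,2z]$ by the remark attached to Proposition~\ref{prop:bounded-cover}), plus a new family $\mathcal F_{\mathrm{cross}}$ for cross-scale pairs. The two $\BoundedCover$ pieces extend from the induced subgraphs $G[X_{\mathrm{sm}}]$ and $G[X_{\mathrm{lg}}]$ to covers for $G$ itself by Rayleigh monotonicity, since adding vertices and edges can only decrease effective resistance. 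To build $\mathcal F_{\mathrm{cross}}$, I run $\UnweightedCover(X_{\mathrm{lg}})$ to obtain a family $\mathcal U_{\mathrm{lg}}$ of at most $\poly(d,\log n)$ clusters of $X_{\mathrm{lg}}$ whose pairs have low effective-resistance diameter in the \emph{unweighted} IP graph on $X_{\mathrm{lg}}$. For each cluster $S \in \mathcal U_{\mathrm{lg}}$, I add to $\mathcal F_{\mathrm{cross}}$ the entry $(\mathcal G_S, S)$, where $\mathcal G_S = \{\{u\} : u \in X_{\mathrm{sm}}\}$ consists of the small-norm singletons.

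\textbf{Coverage of cross pairs.} For $(u,v) \in X_{\mathrm{sm}} \times S$, apply Proposition~\ref{prop:lev-score-bound} with $S_0 = \{u\}$, $S_1 = S$; since the singleton $S_0$ has zero diameter,
\[
\Reff_G(u,v) \le 3 \max_{a,b \in S} \Reff_G(a,b) + \frac{3}{\sum_{y \in S} |\langle u,y\rangle|}.
\]
I bound the first term by a chain of Rayleigh-monotonicity steps, together with a rescaling argument. Let $Y \supseteq S$ be the working set when $\LowDiamSet$ produced $S$; then $\Reff_G(a,b) \le \Reff_{G[Y]}(a,b)$, and restricting further to edges of the unweighted IP graph on $Y$ (each of which has weight $|\langle x,y\rangle| \ge z^2/(d+1)$ in $G[Y]$, since $\|x\|_2, \|y\|_2 \ge z$ for $x,y \in Y \subseteq X_{\mathrm{lg}}$) yields $\Reff_G(a,b) \le \tfrac{d+1}{z^2}\,\Reff_{G_0^Y}(a,b) \le (d+1)C_2/(z^2|S|)$ via Proposition~\ref{prop:efficient-low-eff-res}. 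Setting $\zeta(\{u\}, S) = 3$ and $\kappa(\{u\}, S) = 3 \cdot \max_{a,b \in S}\Reff_G(a,b) \cdot \max_{y \in S}|\langle u,y\rangle|$ places the inequality in cover form; the crude bound $\max_y |\langle u,y\rangle| \le \|u\|_2 \cdot \max_y \|y\|_2 \le 4z$ then gives $\kappa(\{u\}, S) \le 12(d+1)C_2/(z|S|)$.

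\textbf{Sparsity, efficiency, and the main obstacle.} The contribution of $\mathcal F_{\mathrm{cross}}$ to sparsity is
\[
\sum_{S \in \mathcal U_{\mathrm{lg}}} \sum_{u \in X_{\mathrm{sm}}} \bigl(\kappa(\{u\},S)\cdot |S| + \zeta(\{u\},S)\bigr) \le |\mathcal U_{\mathrm{lg}}| \cdot |X_{\mathrm{sm}}| \cdot \bigl(12(d+1)C_2/z + 3\bigr),
\]
which is $\poly(d,\log n)\,n$ as $z \ge 1$, and its contribution to efficiency is $\sum_S (|S| + |X_{\mathrm{sm}}|) \le n + |\mathcal U_{\mathrm{lg}}|\,n = \poly(d,\log n)\,n$. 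The two $\BoundedCover$ calls contribute the same order in both measures, giving the claimed bounds overall; the runtime is dominated by those same subroutines, each $\poly(d,\log n)\,n$. The delicate point, and the main obstacle, is the Rayleigh-and-rescaling argument converting the unweighted effective-resistance diameter guarantee of $\LowDiamSet$ into a bound in the full weighted graph $G$: the factor $z^2$ gained from lower-bounding IP-edge weights inside $X_{\mathrm{lg}}$ must cancel against the factor $4z$ in $\max_y |\langle u,y\rangle|$ so that the cross-cover sparsity stays $\poly(d,\log n)\,n$ uniformly in $z$, even when $z$ is very large.
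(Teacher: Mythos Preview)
Your proposal is correct and follows essentially the same approach as the paper: split $X$ into the two norm bands, handle the two same-scale blocks via $\BoundedCover$, and handle cross-scale pairs by clustering $X_{\mathrm{lg}}$ with repeated $\LowDiamSet$ calls and pairing each resulting cluster $S$ with the family of singletons from $X_{\mathrm{sm}}$, invoking Proposition~\ref{prop:lev-score-bound} with $S_0=\{u\}$ (so $R_1=0$). The only cosmetic differences are that the paper writes out the $\LowDiamSet$ while-loop explicitly rather than calling it $\UnweightedCover$, and defines a uniform $\kappa(S_0,S_1)=24dC_2/|S_1|$ (using the coarser global bound $w_{xy}\le 4z^2$) rather than your $u$-dependent $\kappa(\{u\},S)$ with the tighter bound $4z$; both choices yield the same $\poly(d,\log n)\,n$ sparsity and efficiency.
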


\begin{algorithm}[!h]\caption{}
\begin{algorithmic}[1]
\Procedure{\TwoBoundedCover}{$X$}

    \State \textbf{Input}: $X\subseteq \mathbb{R}^d$, where $\|u\|_2\in [1,2]\cup [z,2z]$ for all $u\in X$
    
    \State \textbf{Output}: An sparse, efficient $(\zeta,\kappa)$ cover for the weighted inner product graph $G$ on $X$
    
    \State $X_{\text{low}}\gets \{u\in X: \|u\|_2\in [1,2]\}$
    
    \State $X_{\text{high}}\gets \{u\in X: \|u\|_2\in [z,2z]\}$
    
    \State $\mathcal U \gets \emptyset$
    
    \State $Y\gets X_{\text{high}}$
    
    \While{$Y\ne \emptyset$}
    
        \State Let $Q\gets \LowDiamSet(Y)$
    
        \State Add the set $Q$ to $\mathcal U$
        
        \State Remove the vertices $Q$ from $Y$
        
    \EndWhile
    
    \State \Return $\{(\{\{u\} \forall u\in X_{\text{low}}\}, S_1): \forall S_1\in \mathcal U\}$
    
    \State $ \cup \BoundedCover(X_{\text{low}}) \cup \BoundedCover(X_{\text{high}})$
    
\EndProcedure
\end{algorithmic}
\end{algorithm}

\begin{proof}
Suppose that the $\BoundedCover$s returned for $X_{\text{low}}$ and $X_{\text{high}}$ are $(\zeta_{\text{low}},\kappa_{\text{low}})$ and $(\zeta_{\text{high}},\kappa_{\text{high}})$-covers respectively. Recall the value $C_2\le \poly(d,\log n)$ from the statement of Proposition \ref{prop:efficient-low-eff-res}. Let $w_{uv} = |\langle u,v\rangle|$ denote the weight of the $u$-$v$ edge in $G$. Let $\mathcal F = \TwoBoundedCover(X)$ and define the functions $\zeta,\kappa$ as follows:

\[
  \zeta(S_0,S_1) =
  \begin{cases}
                                   \zeta_{\text{low}}(S_0,S_1) & \text{if $S_0,S_1\subseteq X_{\text{low}}$} \\
                                   \zeta_{\text{high}}(S_0,S_1) & \text{if $S_0,S_1\subseteq X_{\text{high}}$} \\
                                    3 & \text{if $S_0\subseteq X_{\text{low}}$ and $S_1\subseteq X_{\text{high}}$} \\
                                    \infty & \text{otherwise}
  \end{cases}
\]

\[
  \kappa(S_0,S_1) =
  \begin{cases}
                                   \kappa_{\text{low}}(S_0,S_1) & \text{if $S_0,S_1\subseteq X_{\text{low}}$} \\
                                   \kappa_{\text{high}}(S_0,S_1) & \text{if $S_0,S_1\subseteq X_{\text{high}}$} \\
                                    \frac{24 d C_2}{|S_1|} & \text{if $S_0\subseteq X_{\text{low}}$ and $S_1\subseteq X_{\text{high}}$} \\
                                    \infty & \text{otherwise}
  \end{cases}
\]

\textbf{Number of while loop iterations}. A $\poly(d,\log n)$-round bound follows from the \emph{Size} bound of Proposition \ref{prop:efficient-low-eff-res}. For more details, see the same part of the proof of Proposition \ref{prop:unweighted-cover}, which used the exact same algorithm for producing $\mathcal U$.

\textbf{Runtime}. Follows immediately from the runtime bounds of $\LowDiamSet$, $\BoundedCover$, and the number of while loop iterations.

\textbf{Coverage}. Consider a pair $u,v\in X$. We break the analysis up into cases:

\underline{Case 1: $u,v\in X_{\text{low}}$}. In this case, the \emph{Coverage} property of $\BoundedCover(X_{\text{low}})$ implies that the pair $(u,v)$ is covered in $\mathcal F$ by Rayleigh monotonicity (since $G_{\text{low}}$ is a subgraph of $G$, where $G_{\text{low}}$ is the weighted inner product graph for $X_{\text{low}}$).

\underline{Case 2: $u,v\in X_{\text{high}}$}. In this case, the \emph{Coverage} property of $\BoundedCover(X_{\text{high}})$ implies that the pair $(u,v)$ is covered in $\mathcal F$ by Rayleigh monotonicity.

\underline{Case 3: $u\in X_{\text{low}}$ and $v\in X_{\text{high}}$}. Since $\mathcal U$ is a partition of $X_{\text{high}}$, there is a unique pair $(\mathcal G,S_1)\in \mathcal F$ for which $\{u\}\in \mathcal G$ and $v\in S_1$. Let $H$ denote the unweighted inner product graph on $X_{\text{high}}$. By Rayleigh monotonicity, the fact that $\frac{z^2}{2d}\le\frac{z^2}{d+1}\le w_{xy}$ for all $\{x,y\}\in E(H)$, and the \emph{Low effective resistance diameter} guarantee of Proposition \ref{prop:efficient-low-eff-res},

\begin{align*}
    \Reff_G(x,y) &\le \Reff_{G_{\text{high}}}(x,y)\\
    &\le \frac{2d\Reff_H(x,y)}{z^2}\\
    &\le \frac{2dC_2}{z^2|S_1|}\\
\end{align*}
for any $x,y\in S_1$. Since $z > 1$, $w_{xy}\le 4z^2$ for all $x,y\in X$. Therefore,

$$\Reff_G(x,y) \le \frac{8dC_2}{\max_{p\in X_{\text{low}}, q\in S_1} w_{pq}}$$
for any $x,y\in S_1$. Therefore, Proposition \ref{prop:lev-score-bound} implies the desired \emph{Coverage} bound in this case.

\textbf{Sparsity bound}. We use the fact that $|\mathcal U|\le \poly(d,\log n)$ along with sparsity bounds for $\BoundedCover$ from Proposition \ref{prop:bounded-cover} to bound the sparsity of $\mathcal F$ as follows:

\begin{align*}
    \sum_{(\mathcal G, S_1)\in \mathcal F} \sum_{S_0\in \mathcal G} (\kappa(S_0,S_1) |S_0||S_1| + \zeta(S_0,S_1)) &= \text{Sparsity}(\BoundedCover(X_{\text{low}})) \\
    &+ \text{Sparsity}(\BoundedCover(X_{\text{high}})) \\
    &+ \sum_{u\in X_{\text{low}}, S_1\in \mathcal U} (\kappa(\{u\},S_1)|S_1| + \zeta(\{u\},S_1))\\
    &\le \poly(d,\log n) n + |X_{\text{low}}||\mathcal U| (24dC_2 + 3)\\
    &\le \poly(d,\log n) n\\
\end{align*}
as desired.

\textbf{Efficiency bound}. We use the efficiency bounds of Proposition \ref{prop:bounded-cover} along with the bound on $|\mathcal U|$:

\begin{align*}
    \sum_{(\mathcal G, S_1)\in \mathcal F} \left(|S_1| + \sum_{S_0\in \mathcal G} |S_0|\right) &= \text{Efficiency}(\BoundedCover(X_{\text{low}})) \\
    &+ \text{Efficiency}(\BoundedCover(X_{\text{high}})) \\
    &+ \sum_{S_1\in \mathcal U} (|S_1| + |X_{\text{low}}|)\\
    &\le \poly(d,\log n) n + |X_{\text{high}}| + |\mathcal U||X_{\text{low}}|\\
    &\le \poly(d,\log n) n\\
\end{align*}
as desired.
\end{proof}

\subsubsection{\texorpdfstring{$(\zeta,\kappa)$}{}-cover for weighted IP graphs with polylogarithmic dependence on norm}

We now apply the subroutine from the previous subsection to produce a cover for weighted inner product graphs on vectors with arbitrary norms. However, we allow the sparsity and efficiency of the cover to depend on the ratio $\tau$ between the maximum and minimum norm of points in $X$. To obtain this cover, we bucket vectors by norm and call $\TwoBoundedCover$ on all pairs of buckets.

\begin{proposition}[Log-dependence cover]\label{prop:log-cover}
Given a set of vectors $X\subseteq \mathbb{R}^d$ with $\tau = \frac{\max_{x\in X} \|x\|_2}{\min_{x\in X} \|x\|_2}$ and $n = |X|$, there is a $\poly(d,\log n,\log \tau) n$-time algorithm $\LogCover(X)$ that produces a $\poly(d,\log n,\log \tau) n$-sparse, $\poly(d,\log n,\log \tau) n$-efficient $(\zeta,\kappa)$-cover for the weighted inner product graph $G$ on $X$.
\end{proposition}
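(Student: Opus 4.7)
\textbf{Proof plan for Proposition~\ref{prop:log-cover}.} My plan is to bucket the points in $X$ by norm and then apply $\TwoBoundedCover$ to every pair of buckets (including each bucket with itself). Concretely, let $r_{\min} = \min_{x \in X} \|x\|_2$, and for each integer $i \in \{0, 1, \ldots, \lceil \log_2 \tau \rceil\}$ let $B_i = \{x \in X : \|x\|_2 \in [2^i r_{\min}, 2^{i+1} r_{\min})\}$. There are $T = O(\log \tau)$ nonempty buckets, and every point of $X$ lies in exactly one. For each ordered pair $(i,j)$ with $i \le j$, I will form the set $Y_{ij} \subseteq \mathbb{R}^d$ obtained by rescaling $B_i \cup B_j$ by $1/(2^i r_{\min})$ so that points from $B_i$ land in the annulus $[1,2)$ and points from $B_j$ land in the annulus $[2^{j-i}, 2^{j-i+1})$. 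Since inner products scale homogeneously, this rescaling changes every weight of the restricted IP graph by the same global factor $(2^i r_{\min})^2$, so it preserves all effective resistance ratios and all cover inequalities.

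Then I invoke $\TwoBoundedCover(Y_{ij})$ from Proposition~\ref{prop:two-bounded-cover} (which is stated for $\{[1,2] \cup [z,2z]\}$-norm inputs and works for any real $z > 1$, or trivially reduces to $\BoundedCover$ when $i=j$) to obtain a $(\zeta_{ij}, \kappa_{ij})$-cover $\mathcal{F}_{ij}$ for the weighted IP graph on $Y_{ij}$. Pulling back the rescaling, $\mathcal{F}_{ij}$ can be interpreted as a cover of the weighted IP graph on $B_i \cup B_j$ in $X$. Finally I set the output of $\LogCover(X)$ to be $\mathcal{F} = \bigcup_{i \le j} \mathcal{F}_{ij}$, with $(\zeta, \kappa)$ defined piecewise on each $\mathcal{F}_{ij}$ by the corresponding $(\zeta_{ij}, \kappa_{ij})$ and set to $\infty$ for set pairs not arising in any $\mathcal{F}_{ij}$.

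\textbf{Coverage.} For any edge $\{u,v\} \in E(G)$, there is a unique pair $(i,j)$ with $i \le j$ such that $u,v \in B_i \cup B_j$. By Rayleigh monotonicity applied to the subgraph induced on $B_i \cup B_j$ in $G$, the effective resistance $\Reff_G(u,v)$ is upper-bounded by the effective resistance in the IP graph on $B_i \cup B_j$ alone, which by construction equals (up to the global edge-weight rescaling, which cancels on both sides of the cover inequality) the effective resistance in the IP graph on $Y_{ij}$. The $\TwoBoundedCover$ coverage guarantee from Proposition~\ref{prop:two-bounded-cover} then furnishes the required triple $(\mathcal{G}, S_0, S_1) \in \mathcal{F}_{ij}$ with the correct inequality in terms of $\kappa_{ij}, \zeta_{ij}$, since the edge weights $w_{xy}$ appearing in the denominator on the right-hand side scale the same way.

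\textbf{Sparsity, efficiency, and runtime.} By Proposition~\ref{prop:two-bounded-cover}, each $\mathcal{F}_{ij}$ is $\poly(d, \log n) \cdot (|B_i| + |B_j|)$-sparse and $\poly(d, \log n) \cdot (|B_i| + |B_j|)$-efficient, and is produced in time $\poly(d, \log n) \cdot (|B_i| + |B_j|)$. Summing over the $O(\log^2 \tau)$ pairs $(i,j)$ and using $\sum_i |B_i| = n$, the total sparsity and efficiency are each at most $\poly(d, \log n, \log \tau) \cdot n$, and the total runtime is $\poly(d, \log n, \log \tau) \cdot n$. The main potential obstacle is simply verifying that the scale-invariance of the cover inequalities holds cleanly when one rescales the point set globally; but since $\zeta$ and $\kappa$ appear only in ratios involving $w_{uv}$, $\max w_{xy}$, and $\sum w_{xy}$, and every such weight scales by the same factor $(2^i r_{\min})^2$, this check is immediate. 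All the hard geometric/spectral work has already been packaged into $\TwoBoundedCover$.
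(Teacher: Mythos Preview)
Your proposal is correct and follows essentially the same approach as the paper: bucket by norm, apply $\TwoBoundedCover$ to every pair of buckets, and take the union. Your version is in fact slightly more careful than the paper's own proof---you make explicit the global rescaling needed to match the $[1,2]\cup[z,2z]$ hypothesis of Proposition~\ref{prop:two-bounded-cover} (the paper silently relies on scale-invariance of the inner-product graph), and you explicitly invoke Rayleigh monotonicity to pass from $\Reff$ in the induced subgraph on $B_i\cup B_j$ to $\Reff_G$ (the paper omits this step here, though it uses it in neighboring proofs).
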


\begin{algorithm}[!h]\caption{}
\begin{algorithmic}[1]
\Procedure{\LogCover}{$X$}

    \State \textbf{Input}: $X\subseteq \mathbb{R}^d$
    
    \State \textbf{Output}: An sparse, efficient $(\zeta,\kappa)$ cover for the weighted inner product graph $G$ on $X$
    
    \State $d_{\min}\gets \min_{x\in X} \|x\|_2$, $d_{\max}\gets \max_{x\in X} \|x\|_2$
    
    \For{$i\in \{0,1,\hdots,\log \tau\}$}
    
        $X_i\gets \{x\in X: \|x\|_2\in [2^i d_{\min}, 2^{i+1}d_{\min})\}$
    
    \EndFor
    
    \State $\mathcal F = \emptyset$
    
    \For{each pair $i,j\in \{0,1,\hdots,\log \tau\}$}
    
        \State Add $\TwoBoundedCover(X_i\cup X_j)$ to $\mathcal F$
    
    \EndFor
    
    \State \Return $\mathcal F$
    
\EndProcedure
\end{algorithmic}
\end{algorithm}

\begin{proof}
\textbf{Coverage}. For any edge $\{u,v\}\in E(G)$, there exists a pair $i,j\in \{0,1,\hdots,\log \tau\}$ for which $u,v\in X_i\cup X_j$. Therefore, the \emph{Coverage} property for $\TwoBoundedCover(X_i\cup X_j)$ (which is part of $\mathcal F$) implies that the pair $\{u,v\}$ is covered by $\mathcal F$.

\textbf{Runtime, efficiency, and sparsity}. Efficiency and sparsity of $\mathcal F$ are at most the sum of the efficiencies and sparsities respectively of the constituent $\TwoBoundedCover$s, each of which are at most $\poly(d,\log n) n$ by Proposition \ref{prop:two-bounded-cover}. There are $O(\log^2 \tau)$ such covers in $\mathcal F$, so the efficiency and sparsity of $\mathcal F$ is at most $\poly(d,\log n)\log^2 \tau n\le \poly(d,\log n,\log \tau) n$, as desired. Runtime is also bounded due to the fact that there are at most $\log^2 \tau$ for loop iterations.
\end{proof}

\subsubsection{Desired \texorpdfstring{$(\zeta,\kappa,\delta)$}{}-cover}

Now, we obtain a cover for all $X$ with sparsity, efficiency, and runtime $\poly(d,\log n)$. To do this, we break up pairs to cover $\{u,v\}\in X\times X$ into two types. Without loss of generality, suppose that $\|u\|_2\le \|v\|_2$. The first type consists of pairs for which $\|v\|_2\le (dn)^{1000} \|u\|_2$. These pairs are covered using several $\LogCover$s. The total efficiency, sparsity, and runtime required for these covers is $\poly(d,\log n) n$ due to the fact that each vector is in at most $\poly(d,\log n)$ of these covers.

The second type consists of all other pairs, i.e. those with $\|v\|_2 > (dn)^{1000} \|u\|_2$. For these pairs, we take care of them via a clustering argument. We cluster all vectors in $X$ into $d+1$ clusters in a greedy fashion. Specifically, we sort vectors in decreasing order by norm and create a new cluster for a vector $x\in X$ if $|\langle x,y\rangle | < \frac{1}{d+1} \|x\|_2 \|y\|_2$ for the first vector $y$ in each cluster. Otherwise, we assign $x$ to an arbitrary cluster for which $|\langle x,y\rangle | \ge \frac{1}{d+1} \|x\|_2 \|y\|_2$ for first cluster vector $y$. We then cover the pair $\{u,v\}$ using the pair of sets $(\{u\},C)$, where $C$ is the cluster containing $v$. To argue that this satisfies the \emph{Coverage} property, we exploit the norm condition on the pair $\{u,v\}$. To bound efficiency, sparsity, and runtime, it suffices to bound the number of clusters, which is at most $d+1$ by Proposition \ref{prop:inner-product-dep}.

In order to define this algorithm, we use the notion of an \emph{interval family}, which is exactly the same as the one-dimensional interval tree from computational geometry \Aaron{Cite?}.

\begin{definition}
For a set $X$ and a function $f:X\rightarrow \mathbb{R}$, define the \emph{interval family} for $X$, denoted $\mathcal X = \IntervalFamily(X)$, to be a family of sets produced recursively by initializing $\mathcal X = \{X\}$ and repeatedly taking an element $S\in \mathcal X$, splitting it evenly into two subsets $S_0$ and $S_1$ for which $\max_{x\in S_0} f(x)\le \min_{x\in S_1} f(x)$, and adding $S_0$ and $S_1$ to $\mathcal X$ until $\mathcal X$ contains all singleton subsets of $X$.

$\mathcal X$ has the property that for any set $S\subseteq X$ consisting of all $x\in X$ for which $a\le f(x)\le b$ for two $a,b\in \mathbb{R}$, $S$ is the disjoint union of $O(\log |X|)$ sets in $\mathcal X$. Furthermore, each element in $X$ is in at most $O(\log |X|)$ sets in $\mathcal X$.
\end{definition}

\begin{proposition}[Desired cover]\label{prop:desired-cover}
Given a set $X\subseteq \mathbb{R}^d$ with $n = |X|$, there is a $\poly(d,\log n) n$-time algorithm $\DesiredCover(X)$ that produces a $\poly(d,\log n) n$-sparse, $\poly(d,\log n) n$-efficient $(\zeta,\kappa,\delta)$-cover for the weighted inner product graph $G$ on $X$.
\end{proposition}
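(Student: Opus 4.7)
The plan is to combine three ingredients already developed in the section: the $\LogCover$ subroutine of Proposition~\ref{prop:log-cover}, the greedy inner-product clustering enabled by Proposition~\ref{prop:inner-product-dep}, and the $\IntervalFamily$ construction. I would split the edges of the weighted inner-product graph $G$ on $X$ into \emph{short-range} pairs, those $\{u,v\}$ with $\|v\|_2/\|u\|_2 \le (dn)^{1000}$ (assuming $\|u\|_2\le\|v\|_2$), and \emph{long-range} pairs with ratio exceeding $(dn)^{1000}$, then produce a separate sub-cover for each class and take their union.

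For the short-range part, I would bucket $X$ into dyadic norm classes $X_i = \{x:\|x\|_2\in [2^i,2^{i+1})\}$, and for each $i$ call $\LogCover$ on the overlapping slab $\bigcup_{|i'-i|\le T} X_{i'}$ with $T = \Theta(\log(dn))$. Every slab has norm ratio $\tau \le (dn)^{O(1)}$, so Proposition~\ref{prop:log-cover} yields $\poly(d,\log n)\cdot|X_i^+|$ sparsity and efficiency per call, and each vertex appears in only $O(T)=\poly(\log n)$ slabs, giving the required $\poly(d,\log n)\cdot n$ totals. This step covers every pair whose endpoints lie in two norm classes within $T$ of each other, which is exactly the short-range regime.

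For the long-range part, I would greedily cluster $X$ in order of decreasing norm: scan each $x$ and append it to an existing cluster $C_k$ whose head $y_k^*$ satisfies $|\langle x,y_k^*\rangle|\ge \|x\|_2\|y_k^*\|_2/(d+1)$, otherwise open a new cluster with head $x$. Since the set of heads is an independent set of the unweighted inner-product graph on $X$, Proposition~\ref{prop:inner-product-dep} limits the number of clusters to $d+1$. For each cluster $C$ and each dyadic index $i$ with $2^i < \|y_C^*\|_2/(dn)^{1000}$, I would form a $(\mathcal G_{i,C}, C)$ tuple whose $\mathcal G_{i,C}$ is an $\IntervalFamily$ on the singletons of $X_i$ (rather than listing every singleton directly, which would cost $\Theta(n^2/d)$ sparsity). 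The coverage inequality uses the triangle inequality for effective resistance together with the cluster-head bound $\Reff_G(y_C^*,v) \le (d+1)/(\|y_C^*\|_2\|v\|_2)$ coming from the single heavy edge $\{y_C^*,v\}$; because $\|v\|_2 > (dn)^{1000}\|u\|_2$, the $y_C^*$-$v$ segment contributes only a negligible $(d+1)/((dn)^{1000}\|u\|_2\|v\|_2)$ to $w_{uv}\Reff_G(u,v)$, which can be absorbed into a tiny $\delta$, while the residual $(u,y_C^*)$ pair is covered either by the short-range step or inductively by the same long-range scheme at a higher scale.

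The main obstacle I expect is arranging the long-range sub-cover so that $(\delta+\kappa)|S_0||S_1|+\zeta$ stays $\poly(d,\log n)$ per tuple: naively using $\delta=O(1)$ with singleton $S_0$s and cluster $S_1$s blows up to $\Theta(n^2)$, so the argument must exploit the cluster-head bound to drive $\delta$ down to $\poly(d,\log n)/\poly(n)$ while simultaneously exploiting the interval-family grouping (each vertex in $O(\log n)$ nodes) to absorb the $|S_0||S_1|$ factor. Once this balance is achieved, summing over $\le d+1$ clusters, $O(\log n)$ dyadic scales and $O(\log n)$ interval-family nodes per scale immediately yields the $\poly(d,\log n)\cdot n$ sparsity and efficiency. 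The runtime is then dominated by the short-range $\LogCover$ calls, the $O(dn\log n)$ greedy clustering, and the linear-size interval families, all of which fit the target $\poly(d,\log n)\cdot n$ budget.
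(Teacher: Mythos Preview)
Your short-range part and the greedy clustering into $\le d+1$ clusters are exactly what the paper does. The gap is in how you cover the long-range pairs.

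First, the ``inductive'' step you propose does not fit the cover framework. The Coverage property requires, for each edge $\{u,v\}$, a \emph{single} pair $(S_0,S_1)$ with $u\in S_0,v\in S_1$ satisfying the three-term bound on $\Reff_G(u,v)$. You cannot certify $\{u,v\}$ by saying ``the $\{u,y^*_C\}$ edge is covered elsewhere''; you must produce a direct inequality for $\Reff_G(u,v)$ in terms of the weights between $S_0$ and $S_1$.

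Second, and this is the actual missing idea, the $\IntervalFamily$ belongs on the \emph{cluster} side, not on the small-norm side. The bound the paper uses is obtained as follows: fix the head $w$ of a cluster $C_w$ and let $C\subseteq C_w$ be the interval-family piece consisting of those $y\in C_w$ with $\|y\|_2\ge\xi\|u\|_2$. For each such $y$, the clustering guarantee gives $w_{yw}\ge \frac{1}{d+1}\|y\|_2\|w\|_2\ge \frac{\xi}{d+1}\,w_{uy}$, so the two-edge path $u$--$y$--$w$ has resistance at most $2/w_{uy}$. Putting these paths in parallel over all $y\in C$ gives
\[
\Reff_G(u,w)\;\le\;\frac{2}{\sum_{y\in C} w_{uy}}\,,
\]
and for $v\in C$ we have $\Reff_G(w,v)\le 1/w_{wv}\le \frac{d+1}{\xi}\cdot\frac{1}{w_{uv}}\le \frac{1}{|C|\,w_{uv}}$. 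The triangle inequality then yields the coverage bound with $S_0=\{u\}$, $S_1=C$, $\delta(S_0,S_1)=1/|C|$, $\kappa=0$, $\zeta=2$. The point is that $\delta\,|S_0|\,|S_1|=1$, so every long-range tuple contributes $O(1)$ to sparsity; there is no need to push $\delta$ down to $1/\poly(n)$. Summing over the $O(\log n)$ interval-family pieces per $u$ and $d{+}1$ clusters gives $\poly(d,\log n)\,n$ sparsity and efficiency directly.

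Your placement of the interval family on $X_i$ cannot produce this bound: the parallel-paths estimate above needs $S_1$ to consist only of vertices $y$ with $\|y\|_2\ge\xi\|u\|_2$, which is exactly what the interval family on $C_w$ carves out, and which varies with $u$.
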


\begin{algorithm}[!h]\caption{}
\begin{algorithmic}[1]
\Procedure{\DesiredCover}{$X$}

    \State \textbf{Input}: $X\subseteq \mathbb{R}^d$
    
    \State \textbf{Output}: An sparse, efficient $(\zeta,\kappa,\delta)$-cover for the weighted inner product graph $G$ on $X$, where $\zeta,\kappa$, and $\delta$ are defined in the proof of Proposition \ref{prop:desired-cover}.
    
    \State $\mathcal F\gets \emptyset$
    
    \State $\xi\gets (dn)^{1000}$
    
    \State $d_{\min}\gets \min_{x\in X} \|x\|_2$, $d_{\max}\gets \max_{x\in X} \|x\|_2$
    
    \Comment{Cover nearby norm pairs}
    
    \For{$i\in \{0,1,\hdots,\log (d_{\max}/d_{\min}) - \lceil \log \xi \rceil$}
    
        \State $X_i\gets \{x\in X: \|x\|_2\in [d_{\min}2^i,d_{\min}2^{i+1})$
        
        \State Add $\LogCover(X_i\cup X_{i+1}\cup \hdots\cup X_{i+\lceil \log \xi \rceil})$ to $\mathcal F$
    
    \EndFor
    
    \Comment{Create approximate basis for spread pairs}
    
    $B\gets \emptyset$
    
    \For{$x\in X$ in decreasing order by $\|x\|_2$}
    
        \If{there does not exist $y\in B$ for which $|\langle x,y\rangle| \ge \|x\|_2 \|y\|_2/(d+1)$}
        
            \State Add $x$ to $B$ and initialize a cluster $C_x = \{x\}$
        
        \Else
        
            \State Add $x$ to $C_y$ for an arbitrary choice of $y$ satisfying the condition
        
        \EndIf
    
    \EndFor
    
    \Comment{Cover the spread pairs}
    
    \For{$w\in B$}
    
        \State $\mathcal C_w\gets \IntervalFamily(C_w)$
    
        \For{each set $C\in \mathcal C_w$}
        
            \State $\mathcal G_C\gets$ the family of all singletons of $x\in X$ for which the disjoint union of $O(\log n)$ sets for $\{y\in C_w: \|y\|_2 \ge \xi \|u\|_2\}$ obtained from $\mathcal C_w$ contains $C$
        
            \State Add $(\mathcal G_C, C)$ to $\mathcal F$
        
        \EndFor
    
    \EndFor
    
    \Return $\mathcal F$
    
\EndProcedure
\end{algorithmic}
\end{algorithm}

\begin{proof}
We start by defining the functions $\zeta,\kappa$, and $\delta$. Let $\zeta_i$ and $\kappa_i$ denote the functions for which $\LogCover(Y_i)$ is a $(\zeta_i,\kappa_i)$-cover for the weighted inner product graph on $Y_i$, where $Y_i = X_i\cup X_{i+1}\cup \hdots\cup X_{i+\lceil \log \xi \rceil}$ for all $i\le \log(d_{\max}/d_{\min}) - \lceil \log \xi \rceil$. Let

\[
  \zeta(S_0,S_1) =
  \begin{cases}
                                   \sum_{i: S_0,S_1\subseteq Y_i, \zeta_i(S_0,S_1)\ne \infty} \zeta_i(S_0,S_1) & \text{if there exists $i$ for which $S_0,S_1\subseteq Y_i$} \\
                                   2 & \text{if $S_1\in \mathcal C_y$ for some $y\in B$ and $S_0 = \{x\}$ for some $x$}\\
                                   & \text{with $\|x\|_2\le \min_{a\in C} \|a\|_2/\xi$} \\
                                    \infty & \text{otherwise}
  \end{cases}
\]

\[
  \kappa(S_0,S_1) =
  \begin{cases}
                                   \sum_{i: S_0,S_1\subseteq Y_i, \kappa_i(S_0,S_1)\ne \infty} \kappa_i(S_0,S_1) & \text{if there exists $i$ for which $S_0,S_1\subseteq Y_i$} \\
                                   0 & \text{if $S_1\in \mathcal C_y$ for some $y\in B$ and $S_0 = \{x\}$ for some $x$}\\
                                   & \text{with $\|x\|_2\le \min_{a\in C} \|a\|_2/\xi$} \\
                                    \infty & \text{otherwise}
  \end{cases}
\]

\[
  \delta(S_0,S_1) =
  \begin{cases}
                                   0 & \text{if there exists $i$ for which $S_0,S_1\subseteq Y_i$} \\
                                   \frac{1}{|S_1|} & \text{if $S_1\in \mathcal C_y$ for some $y\in B$ and $S_0 = \{x\}$ for some $x$}\\
                                   & \text{with $\|x\|_2\le \min_{a\in C} \|a\|_2/\xi$} \\
                                    \infty & \text{otherwise}
  \end{cases}
\]

Before proving that the required guarantees are satisfied, we bound some important quantities.

\underline{Bound on $w_{yw}$ in terms of $w_{uy}$ for $y\in C_w$ if $\|y\|_2 \ge \xi \|u\|_2$}. By definition of $C_w$, $w_{yw}\ge \frac{1}{d+1} \|y\|_2 \|w\|_2$ for any $y\in C_w$. $w$ was the first member added to $C_w$, so $\|w\|_2\ge \|y\|_2$. By the norm assumption on $y$, $\|y\|_2 \ge \xi \|u\|_2$. By Cauchy-Schwarz, $\|y\|_2 \|u\|_2 \ge |\langle y,u\rangle| = w_{uy}$. Therefore,

$$w_{yw} \ge \frac{\xi}{d+1} w_{uy}$$

\underline{Bound on $\Reff_G(u,w)$ for $w\in B$}. We start by bounding the effective resistance between $u\in X$ and any $w\in B$ for which $\|w\|_2 > \xi \|u\|_2$. Recall that $w_{xy} = |\langle x,y\rangle|$ for any $x,y\in X$. Consider any $C\in \mathcal C_w$ for which $\min_{a\in C} \|a\|_2 > \xi \|u\|_2$. We show that

$$\Reff_G(u,w) \le \frac{2}{\sum_{y\in C} w_{uy}}$$
Consider all 2-edge paths of the form $u$-$y$-$w$ for $y\in C$. By assumption on $C$, $\|y\|_2\ge \xi \|u\|_2$ for any $y\in C$. Therefore, the bound on $w_{yw}$ applies:

$$w_{yw} \ge \frac{\xi}{d+1} w_{uy}$$
for any $y\in C$. By series-parallel reductions, the $u$-$w$ effective resistance is at most

\begin{align*}
    \Reff_G(u,w) &\le \frac{1}{\sum_{y\in C} \frac{1}{1/w_{uy} + 1/w_{yw}}}\\
    &\le \frac{1}{\sum_{y\in C} \frac{1}{(1 + (d+1)/\xi)/w_{uy}}}\\
    &\le \frac{2}{\sum_{y\in C} w_{uy}}\\
\end{align*}
as desired.

\underline{Bound on $\Reff_G(u,y)$ for $y\in C$}. Any $y\in C$ has the property that $\|y\|_2\ge \xi \|u\|_2$. Therefore, for $y\in C$, $\Reff_G(y,w)\le \frac{d+1}{\xi w_{uy}} \le \frac{1}{|C| w_{uy}}$. By the triangle inequality for effective resistance,

$$\Reff_G(u,y)\le \Reff_G(u,w) + \Reff_G(w,y)\le \frac{1}{|C| w_{uy}} + \frac{2}{\sum_{a\in C} w_{ua}}$$

\textbf{Coverage}. For any pair $\{u,v\}$ for which there exists $i$ with $u,v\in Y_i$, $\{u,v\}$ is still covered by $\mathcal F$ by the \emph{Coverage} property of $\LogCover(Y_i)$. Therefore, we may assume that this is not the case. Without loss of generality, suppose that $\|v\|_2 \ge \|u\|_2$. Then, by assumption, $\|v\|_2\ge \xi \|u\|_2$. By definition of the $C_w$s, there exists a $w\in B$ for which $v\in C_w$. By the first property of interval families, the set $\{x\in C_w: \|x\|_2 \ge \xi \|u\|_2\}$ is the disjoint union of $O(\log n)$ sets in $\mathcal C_w$. Let $C$ be the unique set among these for which $v\in C$. By our effective resistance bound,

\begin{align*}
    \Reff_G(u,v) &\le \frac{1}{|C| w_{uv}} + \frac{2}{\sum_{x\in C} w_{ux}}\\
    &= \frac{\delta(\{u\},C)}{w_{uv}} + \frac{\kappa(\{u\},C)}{\max_{x\in C} w_{ux}} + \frac{\zeta(\{u\},C)}{\sum_{x\in C} w_{ux}}\\
\end{align*}
so the coverage property for the pair $\{u,v\}$ is satisfied within $\mathcal F$ by the pair $(\mathcal G_C, C)$, as desired.

\textbf{Efficiency}. The efficiency of $\mathcal F$ is at most the efficiency of the $\LogCover$s and the remaining part for spread pairs. We start with the $\LogCover$s. By Proposition \ref{prop:log-cover},

\begin{align*}
    \sum_i \text{Efficiency}(\LogCover(Y_i)) &\le \sum_i \poly(d,\log |Y_i|,\log \xi) |Y_i|\\
    &\le \sum_i \poly(d,\log n) |Y_i|\\
    &\le (\log \xi + 1) \poly(d,\log n) \sum_i |X_i|\\
    &\le \poly(d,\log n) n\\
\end{align*}
Therefore, we just need to bound the efficiency of the remainder of $\mathcal F$. The efficiency of $\mathcal F$ is at most

\begin{align*}
    \text{Efficiency}(\mathcal F) &= \sum_{(\mathcal G,S_1)\in \mathcal F} \sum_{S_0\in \mathcal G} ((\delta(S_0,S_1) + \kappa(S_0,S_1))|S_0||S_1| + \zeta(S_0,S_1))\\
    &= \sum_i \text{Efficiency}(\LogCover(Y_i))\\
    &+ \sum_{w\in B}\sum_{C\in \mathcal C_w}\sum_{\{u\}\in \mathcal G_C} (\delta(\{u\},C)|C| + \zeta(\{u\},C))\\
    &\le \poly(d,\log n) n + \sum_{w\in B}\sum_{C\in \mathcal C_w} 3|\mathcal G_C|\\
\end{align*}
By the first property of interval families, each $x\in X$ is present as a singleton in at most $O(\log n)$ $\mathcal G_C$s for $C$ that are a subset of a given $C_w$. Therefore,

$$\text{Efficiency}(\mathcal F)\le \poly(d,\log n) n + \sum_{w\in B} O(\log n) n$$
By Proposition \ref{prop:inner-product-dep}, $|B|\le d+1$. Therefore, $\text{Efficiency}(\mathcal F)\le \poly(d,\log n) n$, as desired.

\textbf{Sparsity}. By Proposition \ref{prop:log-cover},

$$\sum_i \text{Sparsity}(\LogCover(Y_i)) \le \sum_i \poly(d,\log n,\log \xi) |Y_i|\le \poly(d,\log n) n$$
Therefore, we may focus on the remaining part for spread pairs. In particular,

\begin{align*}
    \text{Sparsity}(\mathcal F) &= \sum_{(\mathcal G,S_1)\in \mathcal F}(|S_1| + \sum_{S_0\in \mathcal G} |S_0|)\\
    &\le \poly(d,\log n) n\\
    &+ \sum_{w\in B}\sum_{C\in \mathcal C_w}(|C| + |\mathcal G_C|)\\
    &\le \poly(d,\log n) n + \sum_{w\in B}\sum_{C\in \mathcal C_w} |C|
\end{align*}
where the last inequality follows from the first property of interval families. By the second property of interval families, each element of $C_w$ is in at most $O(\log n)$ sets in $\mathcal C_w$, so $\sum_{C\in \mathcal C_w} |C|\le O(\log n) |C_w|$. Since $|B|\le d+1$, $\text{Sparsity}(\mathcal F)\le \poly(d,\log n) n$, as desired.

\textbf{Runtime}. The first for loop takes $\sum_i \poly(d,\log n) |Y_i|\le \poly(d,\log n) n$ by Proposition \ref{prop:log-cover}. The second for loop takes $O(d|B|n)\le \poly(d,\log n) n$ by the bound on $|B|$. The third for loop takes $\poly(d,\log n) n$ by the runtime for $\IntervalFamily$ and the two properties of interval families. Therefore, the total runtime is $\poly(d,\log n) n$, as desired.

\end{proof}

\subsubsection{Proof of Lemma \ref{lem:inner-sparsify}}

\begin{proof}[Proof of Lemma \ref{lem:inner-sparsify}]
Follows immediately from constructing the cover $\mathcal F$ given by Proposition \ref{prop:desired-cover} and plugging that into Proposition \ref{prop:sparsify-given-cover}.
\end{proof} %%% Section 7 Sparsifiers for $|<x,y>|$
\section{Hardness of Sparsifying and Solving Non-Multiplicatively-Lipschitz Laplacians} \label{sec:hardnessnonlip}

We now define some terms to state our hardness results:

\begin{definition}
For a decreasing function $f$ that is not $(C,L)$-multiplicatively Lipschitz, there exists a point $x$ for which $f(Cx) \le C^{-L} f(x)$. Let $x_0$ denote one such point.
A set of real numbers $S\subseteq \mathbb{R}_{\ge 0}$ is called \emph{$\rho$-discrete} for some $\rho > 1$ if for any pair $a,b\in S$ with $b > a$, $b\le \rho a$. A set of points $X\subseteq \mathbb{R}^d$ is called \emph{$\rho$-spaced} for some $\rho > 1$ if there is some $\rho$-discrete set $S\subseteq \mathbb{R}_{\ge 0}$ with the property that for any pair $x,y\in X$, $\|x - y\|_2\in S$. $S$ is called the \emph{distance set} for $X$.
\end{definition}

\begin{table}[!h]
\centering
\begin{tabular}{|l|l|l|l|l|l|}
    \hline
   Dim. & Thm. & $d$ & $g(p)$ & $\rho$ & Time \\ \hline
   Low & \ref{thm:low-spars-hard} & $c^{\log^* n}$ & $p$ & $1+16 \log( 10 (L^{1/(4c_0)}) ) / L$ & $O(n L^{1/(8c_0)})$ \\ \hline
   High & \ref{thm:high-spars-hard} & $\log n$ & $e^p$ &  $1+2 \log(10 (2^{L^{0.48}})) /L$ & $O(n 2^{L^{.48}})$\\ \hline
\end{tabular}\caption{Sparsification Hardness}\label{tab:spars-hard}
\end{table}

In this section, we show the following two hardness results:

\begin{theorem}[Low-dimensional sparsification hardness]\label{thm:low-spars-hard}
Consider a decreasing function $f:\mathbb{R}_{\ge 0}\rightarrow \mathbb{R}_{\ge 0}$
that is not $(\rho,L)$-multiplicatively lipschitz for some $L > 1$, where $c$ and $c_0$ are the constants given in Theorem \ref{thm:lownnhard} and $\rho = 1 +  2\log (10L^{1/(4c_0)}) / L$. There is no algorithm that, given a set of $n$ points $X$ in $d = c^{\log^* n}$ dimensions, returns a sparsifier of the $f$-graph for $X$ in less than $O(nL^{1/(8c_0)})$ time assuming {\sf SETH}.
\end{theorem}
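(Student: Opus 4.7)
The plan is to reduce from bichromatic $\ell_2$-closest pair on integer-coordinate inputs, which under $\SETH$ requires $n^{2-o(1)}$ time in dimension $d = c^{\log^* n}$ by Theorem~\ref{thm:lownnhard}. Given such an instance $(A,B)$ with entries of absolute value at most $n^{c_0}$, the first observation is that every squared distance $\|a-b\|_2^2$ is a positive integer bounded by $dn^{2c_0}$, so the set of realizable squared distances is discrete and the ratio of any two consecutive values is at least $1 + 1/(dn^{2c_0})$. The hypothetical fast sparsifier will serve as an oracle for the decision problem ``does some pair satisfy $\|a-b\|_2^2 \le k$?'', and I will binary search over $k \in [1, dn^{2c_0}]$.

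Fix a witness point $x_f$ at which $f$ fails to be $(\rho,L)$-multiplicatively Lipschitz; by monotonicity of $f$ this means $f(\rho x_f) \le \rho^{-L} f(x_f)$. For each integer candidate $k$, I scale $A \cup B$ by the factor $\sqrt{x_f/k}$ to form a point set $P_k \subset \mathbb{R}^d$ and invoke the hypothesized sparsifier on the $f$-graph of $P_k$, obtaining an $O(1)$-spectral (hence cut) sparsifier $H_k$. The decision is then made by thresholding the total weight of the $A$-$B$ cut in $H_k$. In the ``yes'' case some pair has scaled squared distance at most $x_f$, contributing weight at least $f(x_f)$ to the cut in $P_k$, so the sparsifier's cut weight is $\ge f(x_f)/2$. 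In the ``no'' case every scaled $A$-$B$ squared distance is at least $x_f(k+1)/k \ge \rho x_f$ (assuming $\rho \le 1 + 1/k$), so every $A$-$B$ edge weight is at most $\rho^{-L} f(x_f)$, and the cut weight in the sparsifier is at most $O(n^2 \rho^{-L} f(x_f)) \le f(x_f)/4$ as soon as $\rho^L \ge 8 n^2$.

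The main technical step is the parameter verification. With $\rho = 1 + 2\log(10 L^{1/(4c_0)})/L$, a direct computation using $L \log \rho \approx 2\log(10 L^{1/(4c_0)})$ gives $\rho^L \approx (10 L^{1/(4c_0)})^2 = 100 L^{1/(2c_0)}$, so the gap inequality $\rho^L \ge 8 n^2$ holds once $L \gtrsim n^{4c_0}$. Similarly, $\rho - 1 = O(\log L / L)$, so the spacing inequality $\rho \le 1 + 1/(dn^{2c_0})$ holds once $L \gtrsim d n^{2c_0} \log L$, which is weaker than the first constraint. Both conditions are therefore satisfied throughout the window $L \in [n^{4c_0},\, n^{8c_0 - \Omega(1)}]$, which is exactly the regime in which the claimed time bound $O(nL^{1/(8c_0)})$ is nontrivially subquadratic; outside this window the theorem is vacuous.

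Putting it together, a binary search over $k$ uses $O(\log n)$ sparsifier calls and recovers the minimum squared $A$-$B$ distance exactly in total time $O(nL^{1/(8c_0)} \log n) = n^{2 - \Omega(1)}$, contradicting Theorem~\ref{thm:lownnhard}. The principal obstacle will be the careful bookkeeping of constants so that the two parameter inequalities hold simultaneously and so that the exponent $1/(8c_0)$ in the final time bound emerges precisely; once that is pinned down, the cut-thresholding reduction proceeds essentially as in the high-dimensional analog (Theorem~\ref{thm:high-spars-hard}), but with the discrete structure of integer-coordinate distances replacing the Boolean-cube gap used there.
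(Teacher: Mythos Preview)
Your proposal is correct and follows essentially the same route as the paper: the paper packages the scale-and-threshold-the-cut step into Lemma~\ref{lem:spars-reduc} and then instantiates it with $n = L^{1/(4c_0)}$ (equivalently, your observation that $L \approx n^{4c_0}$ is the operative regime), obtaining a $\tilde O(n^{3/2})$ closest-pair algorithm in contradiction with Theorem~\ref{thm:lownnhard}. Your inline parameter checks (the spacing inequality $\rho \le 1+1/k$ and the gap inequality $\rho^L \gtrsim n^2$) match the paper's use of $\rho$-discreteness and the non-Lipschitz witness $x_0$ exactly.
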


\begin{theorem}[High-dimensional sparsification hardness]\label{thm:high-spars-hard}
Consider a decreasing function $f:\mathbb{R}_{\ge 0}\rightarrow \mathbb{R}_{\ge 0}$ 
that is not $(\rho,L)$-multiplicatively Lipschitz for some $L > 1$, where $\rho = 1 + 2\log (10 (2^{L^{0.48}}))/L$. There is no algorithm that, given a set of $n$ points $X$ in $d = O(\log n)$ dimensions, returns a sparsifier of the $f$-graph for $X$ in less than $O(n 2^{L^{.48}})$ time assuming {\sf SETH}.
\end{theorem}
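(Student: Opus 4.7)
The plan is to reduce from bichromatic $(1+\eps)$-approximate nearest neighbor on $\{0,1\}^d$ in Hamming distance, which requires $n^{2-o(1)}$ time under $\SETH$ in dimension $d = \Omega(\log n)$ for some absolute constant $\eps > 0$ (Theorem~\ref{thm:r18}). I focus on $L$ large enough that $\rho := 1 + 2\log(10 \cdot 2^{L^{0.48}})/L \le (1+\eps)^2$; this holds once $L$ exceeds an absolute constant, and for smaller $L$ the advertised bound $O(n \cdot 2^{L^{0.48}}) = O(n)$ is vacuous. The overall strategy, as flagged in the introduction, is to leverage a single point $x_0$ witnessing failure of $(\rho, L)$-multiplicative Lipschitzness, i.e., with $f(\rho x_0) \le \rho^{-L} f(x_0)$, so that the YES and NO cases of ANN are separated by an enormous multiplicative gap in $A$-$B$ cut weight.

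Concretely, given an ANN instance $(A, B, r)$ with $|A \cup B| = n$, I rescale each point $p$ to $p \cdot \sqrt{x_0}/r$, producing a set $X' \subseteq \mathbb{R}^d$ whose squared pairwise $\ell_2$ distances equal the original squared Hamming distances times $x_0/r^2$. Let $G$ be the $f$-graph on $X'$. In the YES case some $A$-$B$ edge of $G$ has weight $\ge f(x_0)$; in the NO case every $A$-$B$ squared distance is $\ge (1+\eps)^2 x_0 \ge \rho x_0$, so each such edge has weight at most $f(\rho x_0) \le \rho^{-L} f(x_0)$ by monotonicity of $f$, and the total $A$-$B$ cut weight in $G$ is bounded by $n^2 \rho^{-L} f(x_0)$. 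I then invoke the hypothesized $O(1)$-spectral sparsification algorithm to obtain a sparsifier $H$ with $\tilde O(n)$ edges, and read off its $A$-$B$ cut weight in $\tilde O(n)$ additional time.

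A direct estimate using $(1+y/L)^L \ge e^y/e$ for $y \le L$ with $y = 2\log(10 \cdot 2^{L^{0.48}})$ yields $\rho^L = \Omega(2^{2L^{0.48}})$ with a large hidden constant, so for $n \le 2^{L^{0.48}}$ we have $n^2 \rho^{-L} \le 1/50$. Hence the true $A$-$B$ cut weight in $G$ is $\ge f(x_0)$ in the YES case and $\le f(x_0)/50$ in the NO case; the $O(1)$-spectral-sparsifier guarantee preserves this gap up to constants, so thresholding the cut weight of $H$ at $f(x_0)/10$ decides the ANN instance. The total running time is the sparsification cost plus $\tilde O(nd)$, which by assumption is $o(n \cdot 2^{L^{0.48}})$; picking $n = \Theta(2^{L^{0.48}})$ makes this $o(n^2) = o(n^{2-o(1)})$, contradicting Theorem~\ref{thm:r18} and hence $\SETH$.

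The main obstacle is the tight parameter coupling that fixes the exponent $0.48$. The reduction demands $n^2 \le \rho^L$ (for the cut-weight separation to survive) while the $\SETH$-based contradiction demands $n \cdot 2^{L^{0.48}} \le n^{2-o(1)}$ (for the ANN lower bound to bite); the definition $\rho = 1 + 2\log(10 \cdot 2^{L^{0.48}})/L$ is exactly the choice that simultaneously satisfies both constraints at $n = \Theta(2^{L^{0.48}})$. A secondary subtlety is that the argument crucially uses that $f$ is decreasing: without monotonicity, the bound $f(\text{scaled squared distance}) \le f(\rho x_0)$ in the NO case could fail, which is why the theorem is stated only for decreasing $f$.
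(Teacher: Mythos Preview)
Your reduction template (rescale by $\sqrt{x_0}/r$, compare the $A$--$B$ cut weight in a sparsifier against a threshold) is the same as the paper's, but your parameter choice contains a genuine gap. Setting $n = \Theta(2^{L^{0.48}})$ and concluding that the running time is ``$o(n^2) = o(n^{2-o(1)})$'' is incorrect: $o(n^2)$ does not imply $O(n^{2-\delta})$ for any fixed $\delta>0$ (e.g.\ $n^2/\log n$), so this does not contradict the $n^{2-o(1)}$ lower bound of Theorem~\ref{thm:r18}. To obtain a contradiction you need $2^{L^{0.48}} = n^{o(1)}$, which forces $n = 2^{L^{0.48+\Omega(1)}}$; but then your cut-separation inequality $n^2 \le \rho^L/50$ fails, since $\rho^L \approx (10\cdot 2^{L^{0.48}})^2$ while you would need $\rho^L \gtrsim 2^{2L^{0.48+\Omega(1)}}$. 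In other words, with the theorem's fixed $\rho$, the two constraints you identified are exactly in tension and leave no room for a truly subquadratic bound.

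The paper resolves this tension by a structurally different route. Instead of $(1+\eps)$-ANN, it reduces from \emph{exact} bichromatic Hamming closest pair via Lemma~\ref{lem:spars-reduc}: on $\{0,1\}^d$ the set of achievable $\ell_2$ distances is $\{\sqrt{1},\dots,\sqrt{d}\}$, so consecutive values differ by a factor at least $1+\Theta(1/d)=1+\Theta(1/\log n)$. This ``$\rho$-spacing'' lets the NO case jump past $\rho x_0$ for a $\rho$ that \emph{scales with $n$} (namely $\rho = 1+2\log(10n)/L$, so $\rho^L \approx 100n^2$ automatically), decoupling the gap requirement from the choice of $n$. The paper then takes $n = 2^{L^{0.49}}$, so the hypothesized $O(n\cdot 2^{L^{0.48}})$ running time becomes $n\cdot n^{L^{-0.01}} = n^{1+o(1)}$, and binary-searches over the $O(\log n)$ values in the distance set to solve closest pair. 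The key idea you are missing is to exploit the discreteness of Hamming distances rather than the ANN gap; this is what allows $n$ to be pushed up to $2^{L^{0.49}}$ while keeping the cut-weight separation.
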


Both of these results follow from the following reduction:

\begin{lemma}\label{lem:spars-reduc}
Consider a decreasing function $f:\mathbb{R}_{\ge 0}\rightarrow \mathbb{R}_{\ge 0}$ that is not $(\rho,L)$-multiplicatively Lipschitz for some $L > 1$, where $\rho = 1 + 2\log (10 n ) /L$ and $n > 1$. Suppose that there is an algorithm $\mathcal A$ that, when given a set of $n$ points $X\subseteq \mathbb{R}^d$, returns a 2-approximate sparsifier for the $f$-graph of $X$ with $O( n )$ edges in $\T(n, L, d)$ time. Then, there is an algorithm (Algorithm~\ref{alg:spars-reduc}) that, given two sets $A,B\subseteq \mathbb{R}^d$ for which $A\cup B$ is $\rho$-spaced with distance set $S$, $k\in S$, and $|A \cup B| = n$, returns whether or not $\min_{a\in A, b\in B} \|a - b\|_2 \le k$ in 
\begin{align*}
O(\T(|A\cup B|,L,d) + |A \cup B|)
\end{align*}
time.
\end{lemma}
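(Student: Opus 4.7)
The plan is to reduce bichromatic closest pair (BCP) decision to one call of $\mathcal{A}$. Let $x_0$ be a witness to the failure of $(\rho, L)$-multiplicative Lipschitzness, so that (since $f$ is decreasing) $f(\rho x_0) \leq \rho^{-L} f(x_0)$; this exists by the discussion preceding the lemma. Given input $A, B, k$, I would first rescale every point in $A \cup B$ by $t := x_0/k$, producing a point set $P \subseteq \mathbb{R}^d$ on which the minimum $A$-$B$ distance is either at most $x_0$ (a ``yes''-instance) or, by the $\rho$-spacedness of the distance set $S$, at least $\rho x_0$ (a ``no''-instance). Then I would call $\mathcal{A}$ on $P$ to obtain a $2$-approximate sparsifier $H$ of the $f$-graph $G$ on $P$, compute $W$, the total weight of edges of $H$ crossing the bipartition $(tA, tB)$, and output ``yes'' iff $W \geq f(x_0)/4$.

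Correctness will follow from a completeness/soundness gap. In the completeness case, there is some cross edge of $G$ of weight at least $f(x_0)$ (by monotonicity of $f$), so the true cut weight is at least $f(x_0)$, and the $2$-approximation guarantee gives $W \geq f(x_0)/2$. In the soundness case, every cross edge has weight at most $f(\rho x_0) \leq \rho^{-L} f(x_0)$, so the true cut weight is at most $|A|\cdot |B| \cdot \rho^{-L} f(x_0) \leq n^2 \rho^{-L} f(x_0)$. The choice $\rho = 1 + 2\log(10n)/L$ is calibrated so that, via the Taylor estimate $\ln(1+y) \geq y - y^2/2$ (valid once $L$ is large enough that the quadratic correction is negligible), $\rho^L$ exceeds $10 n^2$, yielding a true cut weight of at most $f(x_0)/10$ and hence $W \leq f(x_0)/5$ through the $2$-approximation. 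The threshold $f(x_0)/4$ cleanly separates $f(x_0)/5$ from $f(x_0)/2$.

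For runtime, rescaling takes $O(|A \cup B|)$ time, the sparsifier call takes $\T(n, L, d)$, and computing the cut weight in the output, which has only $O(n)$ edges, takes $O(n)$; summing gives the claimed bound. The main potential obstacle is confirming the cut-preservation property of the sparsifier, since the hypothesis only asserts a ``$2$-approximate sparsifier''; this is immediate for a spectral sparsifier (since plugging in the $\pm 1$ indicator of the bipartition into the quadratic form recovers the cut weight), and a pure cut sparsifier would suffice as well. A second, more numerical, subtlety is the precise calibration of $\rho$ to produce the required $\rho^L = \Omega(n^2)$: this is exactly where the coefficient $2\log(10n)$ in the lemma statement is chosen, and it is sufficient in the parameter regimes in which Theorems~\ref{thm:low-spars-hard} and~\ref{thm:high-spars-hard} instantiate the lemma.
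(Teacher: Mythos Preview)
Your proposal is correct and follows essentially the same approach as the paper: rescale so that the critical distance lands at the Lipschitz-violation point $x_0$, call the sparsifier once, and threshold on the $A$-$B$ cut weight, using monotonicity of $f$ for completeness and the $\rho^{-L}$ drop plus the $n^2$ edge count for soundness. The only cosmetic differences are your scaling factor $x_0/k$ versus the paper's $\sqrt{x_0}/k$ (reflecting the $\|u-v\|_2$ versus $\|u-v\|_2^2$ convention for the $f$-graph) and your threshold $f(x_0)/4$ versus the paper's $f(x_0)/2$; both choices work.
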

The reduction described starts by scaling the points in $A \cup B$ by a factor of $x_0/k$ to obtain $\wt{A}$ and $\wt{B}$ respectively. Then, it sparsifies the $f$-graph for $\wt{A} \cup \wt{B}$. Finally, it computes the weight of the edges in the $\wt{A}$-$\wt{B}$ cut. Because $f$ is not multiplicatively Lipschitz and the distance set for $\wt{A} \cup \wt{B}$ is spaced, thresholding suffices for solving the $A\times B$ nearest neighbor problem.

\begin{proof}[Proof of Lemma \ref{lem:spars-reduc}]

Consider the following algorithm, \textsc{BichromaticNearestNeighbor} (Algorithm~\ref{alg:spars-reduc}), given below:

\begin{algorithm}\caption{}\label{alg:spars-reduc}
\begin{algorithmic}[1]
\Procedure{\textsc{BichromaticNearestNeighbor}}{$A,B,k$} \Comment{Lemma~\ref{lem:spars-reduc}}

    \State \textbf{Given}: $A,B\subset \mathbb{R}^d$ with the property that $A\cup B$ is $\rho$-spaced, where $\rho = 1 + 2(\log (10 n))/L$, and $k\in S$
    
    \State \textbf{Returns}: whether there are $a\in A, b\in B$ for which $\|a - b\|_2 \le k$
    
    \State $\wt{A} \gets \{a \cdot \sqrt{x_0} / k  ~|~ \forall a\in A\}$ \Comment{$\wt{A} \subset \R^d$}
    
    \State $\wt{B} \gets \{b \cdot \sqrt{x_0} / k  ~|~ \forall b\in B\}$ \Comment{$\wt{B} \subset \R^d$}
    
    \State $H\gets \mathcal A(\wt{A} \cup \wt{B})$ \Comment{$H$ is a 2-approximate sparsifier for the $f$-graph $G$ of $\wt{A} \cup \wt{B}$}
    
    \If{the total weight of edges between $\wt{A}$ and $\wt{B}$ in $H$ is at least $f(x_0)/2$}
        
        \State \Return $\mathsf{true}$
        
    \Else
    
        \State \Return $\mathsf{false}$
        
    \EndIf

\EndProcedure
\end{algorithmic}
\end{algorithm}

We start by bounding the runtime of this algorithm. Constructing $\wt{A}$ and $\wt{B}$ and calculating the total weight of edges between $\wt{A}$ and $\wt{B}$ takes $O( n )$ time since $H$ has $O( n )$ edges. Since the sparsification algorithm is only called once, the total runtime is therefore $\T(n, L, d) + O( n )$, as desired. For the rest of the proof, we may therefore focus on correctness.

First, suppose that $\min_{a\in A, b\in B} \|a - b\|_2 \le k$. There exists a pair of points $\wt{a} \in \wt{A}$, $\wt{b} \in \wt{B}$ with $\| \wt{a} - \wt{b} \|_2 \le \sqrt{x_0}$. Since $f$ is a decreasing function, the edge between $\wt{a}$ and $\wt{b}$ in $G$ has weight at least $f(x_0)$, which means that the total weight of edges in the $\wt{A}$-$\wt{B}$ cut in $G$ is at least $f(x_0)$. Since $H$ is a 2-approximate sparsifier for $G$, the total weight of edges in the $\wt{A}$-$\wt{B}$ cut is at least $f(x_0)/2$. This means that $\mathsf{true}$ is returned, as desired.

Next, suppose that $\min_{a\in A, b\in B} \|a - b\|_2 > k$. Since $A\cup B$ is $\rho$-spaced with distance set $S$ and $k\in S$, $\| a - b \|_2 \ge \rho \cdot k$ for all $a\in A$ and $b\in B$. Therefore, $\| \wt{a} - \wt{b} \|_2 \ge \rho \cdot \sqrt{x_0}$ for all $\wt{a} \in \wt{A}$ and $\wt{b} \in \wt{B}$. 

Since $f$ is decreasing and not $(\rho,L)$-multiplicatively Lipschitz, the weight of any edge between $\wt{A}$ and $\wt{B}$ in $G$ is at most 
\begin{align*}
f(\rho \cdot x_0) \le f(x_0)/(100 n^2).
\end{align*} 

The total weight of edges between $\wt{A}$ and $\wt{B}$ is therefore at most 
\begin{align*}
n^2 \cdot (f(x_0)/(100 n^2)) < f(x_0)/8.
\end{align*}
Since $H$ is a 2-approximate sparsifier for $G$, the total weight between $C$ and $D$ is at most $f(x_0)/4 < f(x_0)/2$, so the algorithm returns $\mathsf{false}$, as desired.
\end{proof}

We now prove the theorems:

\begin{proof}[Proof of Theorem \ref{thm:low-spars-hard}]
Consider an instance of $\ell_2$-bichromatic closest pair for $n = L^{1/(4c_0)}$ and $d = c^{\log^* n}$, where $c$ is the constant given in the dimension bound of Theorem \ref{thm:lownnhard} and $c_0$ is such that integers have bit length $c_0\log n$ in Theorem \ref{thm:lownnhard}. This consists of two sets of points $A,B\subseteq \mathbb{R}^d$ with $|A\cup B| = n$ for which we wish to compute $\min_{a\in A, b\in B} \|a - b\|_2$. By Theorem \ref{thm:lownnhard}, the coordinates of points in $A$ are also $c_0\log n$ bit integers. Therefore, the set $S$ of possible $\ell_2$ distances between points in $A$ and $B$ is a set of square roots of integers with $\log d + c_0\log n\le 2c_0\log n$ bits. Therefore, $S$ is a $\rho$-discrete (recall $\rho = 1 + (2\log (10n))/L$), since $1 + 1/n^{2c_0} > 1 + (2\log (10n))/L$. Furthermore, note that $f$ is not $(\rho,L)$-multiplicatively Lipschitz.

We now describe an algorithm for solving $\ell_2$-closest pair on $A\times B$. Use binary search on the values in $S$ to compute the minimum distance between points in $A,B$. For each query point $k\in S$, by Lemma \ref{lem:spars-reduc}, there is a 
\begin{align*}
\T(n,L,d) = O(\T(L^{1/(4c_0)},L,c^{\log^* L}) + L^{1/(4c_0)})
\end{align*}
-time algorithm for determining whether or not the closest pair has distance at most $k$. Therefore, there is a 
\begin{align*}
O( \log |S| \cdot (\T(L^{1/(4c_0)},L,c^{\log^* L}) + L^{1/4c_0})) = \tilde{O}(L^{1/(4c_0)}L^{1/(8c_0)}) < O( n^{3/2} )
\end{align*}
time algorithm for solving $\ell_2$-closest pair on pairs of sets with $n$ points. But this is impossible given {\sf SETH} by Theorem \ref{thm:lownnhard}, a contradiction. This completes the result.
\end{proof}

\begin{proof}[Proof of Theorem \ref{thm:high-spars-hard}]
\Aaron{Check constants}
Consider an instance of bichromatic Hamming nearest neighbor search for $n = 2^{L^{0.49}}$ and $d = c_1\log n$ for the constant $c_1$ in the dimension bound in Theorem \ref{thm:r18}. This consists of two sets of points $A , B \subseteq \mathbb{R}^d$ with $|A\cup B| = n$ for which we wish to compute $\min_{a\in A, b\in B} \|a - b\|_2$. The coordinates of points in $A$ and $B$ are 0-1. Therefore, the set $S$ of possible $\ell_2$ distances between points in $A$ and $B$ is the set of square roots of integers between 0 and $c_1\log n$, which differ by a factor of at least $1 + 1/(2c_1\log n) > \rho$ (recall $\rho = 1 + (2\log (10n))/L$). Therefore, $A\cup B$ is $\rho$-spaced. Note that $f$ is also no $(\rho,L)$-multiplicatively Lipschitz by definition.

We now give an algorithm for solving $\ell_2$-closest pair on $A\times B$. Use binary search on $S$. For each query $k\in S$, Lemma \ref{lem:spars-reduc} implies that one can check if there is a pair with distance at most $k$ in 
\begin{align*}
\T(n,L,d)
= & ~ O(\T(2^{L^{.49}},L,c_1 L^{.49}) + 2^{L^{.49}}) \\
\le & ~ O(2^{L^{.49} + L^{.48}}) \\ 
= & ~ n^{1 + o(1)}
\end{align*}
time on pairs of sets with $n$ points. But this is impossible given {\sf SETH} by Theorem \ref{thm:r18}. This completes the result.
\end{proof}

Next, we prove hardness results for solving Laplacian systems. In these hardness results, we insist that kernels are bounded:

\begin{definition}
Call a function $f:\mathbb{R}_{\ge 0}\rightarrow \mathbb{R}_{\ge 0}$ \emph{$g(p)$-bounded} for a function $g:\mathbb{R}_{\ge 0}\rightarrow \mathbb{R}_{\ge 0}$ iff for any pair $a,b > 0$ with $b > a$, $f(b) \ge f(a) \cdot g(b/a)$. Call a set $S\subseteq \mathbb{R}_{\ge 0}$ \emph{$\gamma$-bounded} iff $\max_{s\in S} s \le \gamma \min_{s\in S,s\ne 0} s$. A set of points $X\subseteq \mathbb{R}^d$ is called \emph{$\gamma$-boxed} iff the set of distances between points in $X$ is $\gamma$-bounded.
\end{definition}

\begin{table}[!h]
\centering
\begin{tabular}{|l|l|l|l|l|l|l|}
    \hline
   Dim. & Thm. & $d$ & $g(p)$ & $\rho$ & Time & $\epsilon$ \\ \hline
   Low & \ref{thm:low-lsolve-hard} & $c^{\log^* n}$ & $p$ & $1+16 \log( 10 (L^{1/(4c_0)}) ) / L$ & $n \log (g(\gamma)) L^{1/(64c_0)} $ & $1/ ( g(\gamma)^3 2^{\poly(\log n)} )$ \\ \hline
   High & \ref{thm:high-lsolve-hard} & $\log n$ & $e^p$ &  $1+2 \log(10 (2^{L^{0.48}})) /L$ & $n \log(g(\gamma)) 2^{L^{.48}}$ & $1/ ( g(\gamma)^3 2^{\poly(\log n)} )$ \\ \hline
\end{tabular}\caption{Linear System Hardness}\label{tab:lsolve-hard}
\end{table}

We show the following results:

\begin{theorem}[Partial low-dimensional linear system hardness]\label{thm:low-lsolve-hard}
Consider a decreasing $g(p)=p$-bounded function $f:\mathbb{R}_{\ge 0}\rightarrow \mathbb{R}_{\ge 0}$
that is not $(\rho,L)$-multiplicatively Lipschitz for some $L > 1$, where $c$ and $c_0$ are the constants given in Theorem \ref{thm:lownnhard} and $\rho = 1 + 16\log(10 (L^{1/(32c_0)}))/L$. Assuming {\sf SETH}, there is no algorithm that, given a $\gamma$-boxed set of $n$ points $X$ in $d = c^{\log^* n}$ dimensions with $f$-graph $G$ and a vector $b\in \mathbb{R}^n$, returns a $\epsilon = 1/(g(\gamma)^3 2^{\poly(\log n)})$-approximate solution $x\in \mathbb{R}^n$ to the geometric Laplacian system $L_G x = b$ in less than $O(n \log(g(\gamma)) L^{1/(64c_0)})$ time.
\end{theorem}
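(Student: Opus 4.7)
The plan is to reduce from bichromatic $\ell_2$-nearest neighbor search (Theorem~\ref{thm:lownnhard}), mirroring the sparsification-hardness argument of Theorem~\ref{thm:low-spars-hard} but using the Laplacian system solver in place of the sparsifier. Starting from an NN instance on $A,B\subseteq\mathbb{R}^d$ with $n:=|A\cup B|$ points of $c_0\log n$-bit integer coordinates in dimension $d=c^{\log^*n}$, set $n=L^{1/(32c_0)}$ so that the set $S$ of attainable squared $\ell_2$-distances is $(1+1/n^{2c_0})$-discrete, which exceeds the required $\rho=1+16\log(10\,L^{1/(32c_0)})/L$ for this $L$; hence $A\cup B$ is $\rho$-spaced and $\gamma$-boxed with $\gamma\le\poly(n)$. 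Binary-searching over the $|S|\le\poly(n)$ distance values reduces NN in $O(\log n)$ queries to the decision problem: given $k\in S$, decide whether $\min_{(a,b)\in A\times B}\|a-b\|_2\le k$. For each query, scale the points by $\sqrt{x_0}/k$ to form $\wt A,\wt B$, so that in the yes-case some $\wt A$-$\wt B$ pair has distance $\le\sqrt{x_0}$ and in the no-case every $\wt A$-$\wt B$ distance exceeds $\rho\sqrt{x_0}$.

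Let $G$ denote the $\k$-graph on $\wt A\cup\wt B$ and let $b:=\chi_{\wt A}/|\wt A|-\chi_{\wt B}/|\wt B|$, which is orthogonal to ${\bf 1}$ and so lies in the image of $L_G$. Invoke the hypothesized solver on $L_G x=b$ to accuracy $\eps$, and return the scalar $x^\top b$. Because $\|x-L_G^\dag b\|_{L_G}\le\eps\,\|L_G^\dag b\|_{L_G}$, a Cauchy--Schwarz application in the $L_G$-seminorm yields $|x^\top b-b^\top L_G^\dag b|\le\eps\cdot b^\top L_G^\dag b$, so $x^\top b$ estimates the quadratic form $R:=b^\top L_G^\dag b$ to a multiplicative $(1\pm\eps)$ factor. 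The crux is a dichotomy on $R$. In the yes-case, a pair $(\wt a^*,\wt b^*)$ is connected by an edge of weight $\ge f(x_0)$; routing $1/|\wt A|$ units of demand from each $\wt a\in\wt A$ to $\wt a^*$ via the intra-cluster edges, one full unit across the bridge, and $1/|\wt B|$ units from $\wt b^*$ to each $\wt b\in\wt B$ gives a feasible electrical flow whose energy is $O(g(\gamma)^{O(1)}/f(x_0))$, using the $\gamma$-boxedness and the $g(p)$-boundedness of $f$ to lower-bound intra-cluster edge weights in terms of $f(x_0)$. In the no-case, every $\wt A$-$\wt B$ edge has weight at most $f(\rho x_0)\le f(x_0)/n^{\Omega(1)}$ by the failure of $(\rho,L)$-multiplicative Lipschitzness and the choice of $L$, so the total $\wt A$-$\wt B$ cut conductance is $\le f(x_0)/n^{\Omega(1)}$, and the variational lower bound $R\ge(b^\top\chi_{\wt A})^2/(\chi_{\wt A}^\top L_G\chi_{\wt A})$ equals the reciprocal of this cut conductance, yielding $R\ge n^{\Omega(1)}/f(x_0)$. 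The resulting multiplicative gap of $n^{\Omega(1)}/g(\gamma)^{O(1)}$ dwarfs $\eps=1/(g(\gamma)^3 2^{\poly(\log n)})$, so thresholding $x^\top b$ decides the NN instance.

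Each binary-search step runs in $O(n+T_\text{solve})$ time, so the total reduction takes $O(\log n\cdot T_\text{solve})$. A solver of the hypothesized runtime $O(n\log(g(\gamma))\,L^{1/(64c_0)})$ with $L=n^{32c_0}$ and $g(\gamma)\le\poly(n)$ then yields a bichromatic $\ell_2$-NN algorithm running in $O(n^{3/2+o(1)})$, which violates Theorem~\ref{thm:lownnhard} under $\SETH$. The main technical obstacle will be the upper bound on $R$ in the yes-case: because $b$ is supported on all of $\wt A\cup\wt B$ rather than a single pair, the intra-cluster flows must be laid out carefully and energy-accounted using the $g(p)$-boundedness assumption in order to ensure the aggregate intra-cluster energy remains polynomial in $g(\gamma)/f(x_0)$, rather than blowing up with $n$. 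Once this is in place, the $\rho$-spacing of $S$, the variational lower bound on $R$, and the running-time arithmetic all go through by routine calculations analogous to those in Lemma~\ref{lem:spars-reduc} and the proof of Theorem~\ref{thm:low-spars-hard}.
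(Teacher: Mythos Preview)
Your route is genuinely different from the paper's. The paper (Lemma~\ref{lem:lsolve-reduc}) does not solve a single system with the cut demand $b=\chi_{\wt A}/|\wt A|-\chi_{\wt B}/|\wt B|$. Instead it builds a Johnson--Lindenstrauss sketch of $L_G^{\dagger}$ by calling the solver on $O(\log n)$ random right-hand sides, obtaining a low-dimensional embedding $\tilde Z$ with $\|\tilde Z\,b_{st}\|_2\approx\|L_G^{\dagger}b_{st}\|_2$ for every pair $s,t$ (via Propositions~\ref{prop:res2-res} and~\ref{prop:lapl-error}), and then runs an approximate nearest-neighbor data structure (Theorem~\ref{thm:l2-ann}) on the embedded points $\{\tilde Z e_v\}_{v}$ to detect whether \emph{some individual} cross-pair has small effective resistance. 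The yes/no gap there is between $\|L_G^{\dagger}b_{pq}\|_2^2\le n/f(x_0)^2$ for one close pair and $\|L_G^{\dagger}b_{st}\|_2^2\ge\Omega(n^{28}/f(x_0)^2)$ for all cross-pairs---a gap that is completely independent of $\gamma$. The $g(\gamma)$ term enters only through the solver accuracy needed to make the sketch faithful (the $w_{\min}^2/w_{\max}^2$ factor in Proposition~\ref{prop:lapl-error}).

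Your single-solve idea is appealingly direct, but it has a quantitative gap that the paper's approach sidesteps. In your yes-case flow, routing the aggregate demand through $\wt a^*$ and $\wt b^*$ forces you to use intra-cluster edges, whose weights you can only lower-bound by $f(x_0\gamma^2)\ge f(x_0)/g(\gamma^2)$; hence $R\le O(\gamma^2/f(x_0))$ is the best you can get. On the no side, with the specific $\rho$ in the statement you get $f(\rho x_0)\le f(x_0)/(10n)^{16}$, so the cut has weight at most $n^2 f(x_0)/(10n)^{16}$ and $R\ge\Omega(n^{14}/f(x_0))$. For your threshold to separate the cases you therefore need $\gamma^2\ll n^{14}$, i.e.\ $\gamma\ll n^{7}$. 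But the instances from Theorem~\ref{thm:lownnhard} have $\gamma\sim n^{c_0}$, where $c_0$ is a fixed but unspecified absolute constant; nothing in the statement guarantees $c_0<7$. If $c_0\ge 7$ your gap collapses, whereas the paper's per-pair effective-resistance test (enabled by the JL+ANN machinery) is immune to this because it never has to account for intra-cluster routing. This is precisely why the paper takes the more elaborate route rather than the cut-vector shortcut you propose.
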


\begin{theorem}[Partial high-dimensional linear system hardness]\label{thm:high-lsolve-hard}
Consider a decreasing $g(p)=e^p$-bounded function $f:\mathbb{R}_{\ge 0}\rightarrow \mathbb{R}_{\ge 0}$ that is not $(\rho,L)$-multiplicatively Lipschitz for some $L > 1$, where $\rho = 1 + 16\log(10(2^{L^{0.48}}))/L$. Assuming {\sf SETH}, there is no algorithm that, given a $\gamma$-boxed set of $n$ points $X$ in $d = O(\log n)$ dimensions with $f$-graph $G$ and a vector $b\in \mathbb{R}^n$, returns a $\epsilon = 1/(g(\gamma)^3 2^{\poly(\log n)})$-approximate solution $x\in \mathbb{R}^n$ to the geometric Laplacian system $L_G x = b$ in less than $O(n \log(g(\gamma)) 2^{L^{.48}})$ time assuming {\sf SETH}.
\end{theorem}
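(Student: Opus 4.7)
My plan is to parallel the reduction from bichromatic Hamming nearest neighbor used for Theorem~\ref{thm:high-spars-hard}, but with the Laplacian solver in place of the sparsifier. I start with a bichromatic Hamming NN instance $(A,B,k)$ in $d = c_1\log n$ dimensions and $n$ points, with $n = 2^{\Theta(L^{0.48})}$ chosen in the window where Theorem~\ref{thm:r18} forces $n^{2-o(1)}$ time while the conjectured solver running time $O(n\log g(\gamma)\cdot 2^{L^{0.48}})$ is still $n^{2-\Omega(1)}$. I rescale by $\sqrt{x_0}/k$ to get $\wt A\cup \wt B\subset \R^d$ so that the target Hamming distance corresponds to $\sqrt{x_0}$, where $x_0$ witnesses the failure of $(\rho,L)$-multiplicative Lipschitzness of $f$. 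The scaled squared distances lie in $\{j x_0/k^2 : j\in\{0,1,\ldots,d\}\}$, so the set is $\gamma$-boxed with $\gamma=O(d)=O(\log n)$, giving $g(\gamma)=e^{O(\log n)}=n^{O(1)}$, and the distance set is $\rho$-discrete for the stated $\rho$ because its minimum consecutive ratio $1+\Theta(1/d)$ exceeds $\rho = 1+16\log(10\cdot 2^{L^{0.48}})/L$ in this regime.

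The core step uses a single solver call to estimate the bipartite cut weight $w(\wt A,\wt B):=\sum_{a\in\wt A,b\in\wt B} f(\|a-b\|_2^2)$ that drives Lemma~\ref{lem:spars-reduc}. Forming the $f$-graph $G$ on $\wt A\cup \wt B$, I pick the demand $b = \mathbf{1}_{\wt A}/|\wt A| - \mathbf{1}_{\wt B}/|\wt B|$ (orthogonal to $\mathbf{1}$) and call the solver to obtain $\phi$ satisfying $\|\phi - L_G^\dagger b\|_{L_G}\leq \eps\|L_G^\dagger b\|_{L_G}$, then output $Q := b^\top\phi$. By the variational characterization of $L_G^\dagger$, the quantity $b^\top L_G^\dagger b$ equals the minimum electrical energy of a unit flow from $\wt A$ to $\wt B$; contracting both sides to supernodes $s,t$ yields $R_{\mathrm{eff}}(s,t) = 1/w(\wt A,\wt B)$, and an electrical-flow argument shows $b^\top L_G^\dagger b$ agrees with $1/w(\wt A,\wt B)$ up to multiplicative factors bounded by $\poly(g(\gamma))$. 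Corollary~\ref{cor:simpleapproxmult} and Proposition~\ref{prop:lapl-apx} convert the $L_G$-norm solver guarantee into an additive error on $Q$ of order $\eps\cdot\poly(n,g(\gamma))$, which with the admissible $\eps = 1/(g(\gamma)^3\cdot 2^{\poly(\log n)})$ is far smaller than the YES/NO gap.

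The gap analysis mirrors Lemma~\ref{lem:spars-reduc}: YES forces $w(\wt A,\wt B)\geq f(x_0)$ via some cross edge of weight $\geq f(x_0)$, while NO gives $w(\wt A,\wt B)\leq n^2 f(\rho x_0)\leq n^2 f(x_0)/(10\cdot 2^{L^{0.48}})^{16}$. With $n = 2^{\Theta(L^{0.48})}$, the ratio of these bounds, and hence of $Q^{-1}$, is at least $2^{\Omega(L^{0.48})}$, which dominates both the $\poly(g(\gamma))=n^{O(1)}$ harmonic-correction slack and the solver-induced error; a threshold test on $Q$ therefore decides the NN instance in total time $n^{O(1)}\cdot 2^{L^{0.48}}$, contradicting Theorem~\ref{thm:r18} under SETH. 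The main technical obstacle is the electrical-flow argument in the second paragraph: since the $f$-graph is complete, internal edges inside $\wt A$ and inside $\wt B$ perturb the identity $b^\top L_G^\dagger b = 1/w(\wt A,\wt B)$, with the perturbation a priori growing in $g(\gamma)$. Controlling it requires a direct harmonic-potential bound showing that voltages are nearly constant inside each side once the internal conductance dominates the cross conductance by a factor $\geq g(\gamma)$, which is guaranteed in the NO case precisely because the factor-$16$ strengthening of $\rho$ compared with Theorem~\ref{thm:high-spars-hard} provides exactly enough multiplicative-Lipschitz slack to absorb this $g(\gamma)$ blowup together with the $n^2$ counting factor.
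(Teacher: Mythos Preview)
Your approach is correct but takes a genuinely different route from the paper's proof. The paper proves Theorem~\ref{thm:high-lsolve-hard} via Lemma~\ref{lem:lsolve-reduc} and the algorithm \textsc{BichromaticNearestNeighborProj}: it calls the Laplacian solver $O(\log n)$ times with the rows of a random $\pm 1/\sqrt{\ell}$ matrix as right-hand sides, obtaining a Johnson--Lindenstrauss sketch $\tilde Z\approx P\,L_G^{\dagger}$. Then $\|\tilde Z\,b_{st}\|_2\approx\|L_G^{\dagger}b_{st}\|_2$, which by Propositions~\ref{prop:res2-res} and~\ref{prop:weight-res} is tied to $\Reff_G(s,t)$ and hence to the maximum cross-edge weight; an $O(\log n)$-approximate nearest-neighbour query (Theorem~\ref{thm:l2-ann}) on the embedded points detects a close cross pair. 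Your reduction instead uses a \emph{single} solver call with the bipartite demand $b=\mathbf 1_{\tilde A}/|\tilde A|-\mathbf 1_{\tilde B}/|\tilde B|$ and thresholds on $Q=b^{\top}\phi\approx b^{\top}L_G^{\dagger}b$, exploiting that $b^{\top}L_G^{\dagger}b$ approximates $1/w(\tilde A,\tilde B)$ up to a $g(\gamma)$ factor. This is simpler and avoids both JL and ANN; the paper's route is more modular (it reuses the Spielman--Srivastava effective-resistance sketch paradigm) and yields Lemma~\ref{lem:lsolve-reduc} as a standalone tool.

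Two small points where your write-up can be tightened. First, you locate the ``main technical obstacle'' in the NO case, but the NO lower bound $b^{\top}L_G^{\dagger}b\ge 1/w(\tilde A,\tilde B)$ is immediate from Rayleigh monotonicity (contracting each side only lowers the energy), with no need for a near-constant-potential argument. The place where one actually needs an argument is the YES \emph{upper} bound; there the uniform flow that sends $1/(|\tilde A|\,|\tilde B|)$ on every cross edge gives energy at most $1/(|\tilde A|\,|\tilde B|\,w_{\min}^{\text{cross}})\le g(\gamma)/w(\tilde A,\tilde B)$ directly from $g$-boundedness, so the two-sided estimate $b^{\top}L_G^{\dagger}b\in[1,g(\gamma)]\cdot 1/w(\tilde A,\tilde B)$ drops out without any harmonic analysis. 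Second, your error control is cleaner than you indicate: by Cauchy--Schwarz in the $L_G$ inner product, $|b^{\top}\phi-b^{\top}L_G^{\dagger}b|\le\|b\|_{L_G^{\dagger}}\,\|\phi-L_G^{\dagger}b\|_{L_G}\le\eps\,b^{\top}L_G^{\dagger}b$, so the solver error is already multiplicative in $Q$ and any $\eps<1/2$ suffices; invoking Corollary~\ref{cor:simpleapproxmult} and Proposition~\ref{prop:lapl-apx} is unnecessary here.
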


To prove these theorems, we use the following reduction from bichromatic nearest neighbors:

\begin{lemma}\label{lem:lsolve-reduc}
Consider a decreasing $g(p)$-bounded function $f:\mathbb{R}_{\ge 0}\rightarrow \mathbb{R}_{\ge 0}$ that is not $(\rho,L)$-multiplicatively Lipschitz for some $L > 1$, where $\rho = 1 + 16(\log (10 n ))/L$ and $n > 1$. Suppose that there is an algorithm $\mathcal A$ that, given a $\gamma$-boxed set of $n$ points $X$ in $d$ dimensions with $f$-graph $G$ and a vector $b\in \mathbb{R}^n$, returns an $\epsilon = 1/(g(\gamma)^3 2^{\poly(\log n)})$-approximate solution $x\in \mathbb{R}^n$ to the geometric Laplacian system $L_G x = b$ in $\T(n,L,g(\gamma),d)$ time. Then, there is an algorithm that, given two sets $A,B\subseteq \mathbb{R}^d$ for which $A\cup B$ is $\rho$-spaced with $|S|^{O(1)}$-bounded distance set $S$\Aaron{determine $O(1)$}, $k\in S$, and $|A\cup B| = n$, returns whether or not $\min_{a\in A, b\in B} \|a - b\|_2 \le k$ in 
\begin{align*}
    O((\log n)\T(n,L,g(|S|^{O(1)}),d) + |A\cup B|)
\end{align*} 
time.
\end{lemma}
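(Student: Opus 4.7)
The plan follows the structure of Lemma \ref{lem:spars-reduc}, replacing the sparsifier-based cut-weight estimator with one built from the Laplacian solver. First, scale the inputs: set $\tilde{A} = \{a\cdot \sqrt{x_0}/k : a\in A\}$ and $\tilde{B}$ analogously, and form the $f$-graph $G$ on $\tilde{A}\cup \tilde{B}$. The cut-weight analysis from the proof of Lemma \ref{lem:spars-reduc} carries over almost verbatim: when $\min_{a\in A,b\in B}\|a-b\|_2 \le k$, the $(\tilde{A},\tilde{B})$-cut in $G$ has weight at least $f(x_0)$, and otherwise, using the $\rho$-spaced hypothesis together with the failure of $(\rho,L)$-multiplicative Lipschitzness of $f$ at $x_0$, every cross edge has weight at most $f(x_0)/(100n^2)$ and so the cut weight is at most $f(x_0)/100$. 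It therefore suffices to approximate the cut weight $\mathbf{1}_{\tilde{A}}^\top L_G \mathbf{1}_{\tilde{A}}$ well enough to distinguish those two cases, then threshold at $f(x_0)/10$.

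To obtain the cut weight from the solver $\mathcal{A}$, I would first simulate matrix-vector multiplication by $L_G$ using Lemma \ref{lem:mul-given-solve}, and then simply read off $\mathbf{1}_{\tilde{A}}^\top(L_G \mathbf{1}_{\tilde{A}})$ as an inner product. Lemma \ref{lem:mul-given-solve} takes $\mathcal{A}$ together with a sparse $(1\pm 1/900)$-spectral approximation $H$ of $G$ (at most $\tilde{O}(n)$ edges) and outputs a multiplier. Building such an $H$ is where the $\gamma$-boxed and $g(\gamma)$-boundedness hypotheses enter: they force every edge weight of $G$ to lie in a window $[w_{\min}, g(\gamma)\cdot w_{\min}]$, so the uniform complete graph $w_{\min}\cdot K_n$ is a (crude) $g(\gamma)$-spectral approximation of $L_G$. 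Since $K_n$ has uniform leverage scores, it can be sparsified to $\tilde{O}(n)$ edges in $\tilde{O}(n)$ time via Corollary \ref{cor:ss11}, yielding a sparse $H_0$ that, after rescaling, satisfies $L_{H_0}\preceq L_G \preceq g(\gamma)\cdot L_{H_0}$.

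Next I boost $H_0$'s approximation quality from $g(\gamma)$ to the $(1\pm 1/900)$ required by Lemma \ref{lem:mul-given-solve}. The natural approach is preconditioned iteration with $H_0$ as outer preconditioner and $\mathcal{A}$ as inner solver: each iteration costs one $\mathcal{A}$ call plus one $\tilde{O}(n)$-time multiplication by $L_{H_0}$, and drives the effective condition ratio down. A recursive bootstrapping schedule, in which each stage roughly halves $\log(\text{condition ratio})$, reaches the $(1\pm 1/900)$ target in $O(\log g(\gamma)) = O(\log n)$ total calls to $\mathcal{A}$. Once the resulting multiplier is in hand, projecting $\mathbf{1}_{\tilde{A}}$ onto $\mathbf{1}^\perp$, applying the multiplier, and taking one inner product produces a cut-weight estimate. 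The very small solver tolerance $\epsilon = 1/(g(\gamma)^3\cdot 2^{\poly(\log n)})$ ensures that every accumulated error is far smaller than the factor-$100$ cut-weight gap, so thresholding at $f(x_0)/10$ decides correctly.

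The main obstacle is precisely this bootstrapping step: a direct application of Lemma \ref{lem:mul-given-solve} to $H_0$ fails because its iterative-refinement convergence bound $(1-1/900)L_G\preceq L_H\preceq (1+1/900)L_G$ is not satisfied, and off-the-shelf preconditioned Chebyshev with $H_0$ would cost $\Omega(\sqrt{g(\gamma)})$ solver calls --- too many to hit the lemma's $O(\log n)$ budget. The proof will therefore require a carefully interleaved recursive refinement that tightens the effective preconditioner by a constant factor per outer iteration, while reusing $\mathcal{A}$ only $O(\log g(\gamma))$ times overall. All other steps --- scaling the points, sparsifying $K_n$, $\mathbf{1}^\perp$-projection, and a final inner product --- are straightforwardly $O(|A\cup B|)$, giving the claimed $O((\log n)\T(n,L,g(\gamma),d)+|A\cup B|)$ runtime.
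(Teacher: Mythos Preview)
Your approach diverges substantially from the paper's, and the divergence is precisely where your argument has a gap. The paper never builds a sparsifier of $G$, never invokes Lemma~\ref{lem:mul-given-solve}, and never estimates the cut weight $\mathbf{1}_{\tilde A}^\top L_G \mathbf{1}_{\tilde A}$. Instead it uses the solver $\mathcal A$ directly on $O(\log n)$ random right-hand sides: for a $\pm 1/\sqrt{\ell}$ Johnson--Lindenstrauss matrix $P\in\R^{\ell\times n}$ with $\ell=O(\log n)$, it computes $\tilde Z\approx P L_G^\dagger$ row by row, so that $\tilde Z b_s$ is an $\ell$-dimensional sketch of the vector $L_G^\dagger b_s$. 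By Proposition~\ref{prop:res2-res}, $\|L_G^\dagger b_{st}\|_2$ is within a $\poly(n)$ factor of $\Reff_G(s,t)$, and by Proposition~\ref{prop:weight-res} the minimum effective resistance across the $\tilde A$--$\tilde B$ cut is within a $\poly(n)$ factor of the reciprocal of the maximum edge weight across it. The huge ($>n^{14}$) gap in maximum cross-edge weight between the YES and NO cases therefore translates into a $\poly(n)$ gap in $\min_{s\in \tilde A,t\in\tilde B}\|\tilde Z b_s-\tilde Z b_t\|_2$, which is detected by a single call to an $O(\log n)$-approximate nearest-neighbor data structure (Theorem~\ref{thm:l2-ann}) on the sketched points. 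Proposition~\ref{prop:lapl-error} controls the error incurred by replacing $PL_G^\dagger$ with the solver output $\tilde Z$.

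Your plan, by contrast, hinges on producing a $(1\pm 1/900)$-sparsifier $H$ so that Lemma~\ref{lem:mul-given-solve} applies. The ``recursive bootstrapping'' you propose to get from the crude $g(\gamma)$-approximation $H_0$ to such an $H$ in $O(\log g(\gamma))$ solver calls is not a known primitive, and you give no mechanism for it: the iterative refinement in Lemma~\ref{lem:mul-given-solve} contracts the residual only when the preconditioner is already close to $L_G$, and there is no evident way to ``halve the log condition number per outer iteration'' using only black-box solves in $L_G$ and multiplies by $L_{H_0}$. If instead you bypass bootstrapping and sparsify $G$ directly by uniform oversampling (valid since $w_{\max}/w_{\min}\le g(\gamma)$ gives leverage scores $\le 2g(\gamma)/n$), you incur an additive $\tilde O(g(\gamma)\cdot n)$ term. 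That term is harmless for the high-dimensional application (Theorem~\ref{thm:high-lsolve-hard}), where $g(\gamma)=e^{O((\log n)^{0.51})}$, but it is fatal for the low-dimensional one (Theorem~\ref{thm:low-lsolve-hard}), where $g(\gamma)$ is polynomial in $n$ and the whole reduction must run in $o(n^{3/2})$ time. The paper's sketch-then-ANN route avoids this entirely, since it never touches the edge set of $G$.
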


This reduction works in a similar way to the effective resistance data structure of Spielman and Srivastava \cite{ss11}, but with minor differences due to the fact that their data structure requires multiplication by incidence matrix of the graph, which in our case is dense. Our reduction uses Johnson-Lindenstrauss to embed the points
\begin{align*}
v_s = L^{\dagger} b_s
\end{align*}
for vertices $s$ in the graph $G$ into $O(\log n)$ dimensions in a way that distorts the distances
\begin{align*}
b_{st}^\top (L^{\dagger})^2 b_{st}
\end{align*}
for vertices $s,t$ in $G$ by a factor of at most 2. After computing this embedding, we build an $O(\log n)$-approximate nearest neighbor data structure on the resulting points. This allows us to determine whether or not a vertex in $A$ has a high-weight edge in $G$ to $B$ in almost-constant time. After looping through all of the edges in $A$ in total time $n^{1 + o(1)}$, we determine whether or not there are any high-weight edges between $A$ and $B$ in $G$, allowing us to answer the bichromatic nearest neighbors decision problem.

We start by proving a result that links norms of $v_s$ to effective resistances:

\begin{proposition}\label{prop:res2-res}
In an $n$-vertex graph $G$ with vertices $s$ and $t$,
\begin{align*}
0.5 \cdot (\Reff_G(s,t))^2 \le \|v_s - v_t\|_2^2 \le n \cdot (\Reff_G(s,t))^2
\end{align*}
\end{proposition}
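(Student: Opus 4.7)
The plan is to unpack both sides of the inequality in terms of the potential vector $y := L^{\dagger} b_{st} = v_s - v_t$, where $b_{st} = b_s - b_t$ is the signed edge indicator. The two key identities are
\begin{align*}
\|v_s - v_t\|_2^2 = y^\top y = b_{st}^\top (L^\dagger)^2 b_{st}
\quad\text{and}\quad
\Reff_G(s,t) = b_{st}^\top L^\dagger b_{st} = y_s - y_t,
\end{align*}
the second using $L L^\dagger b_{st} = b_{st}$ since $b_{st}$ is orthogonal to the kernel $\mathrm{span}(\mathbf{1})$ of $L$. The lower and upper bounds then become bounds on $\|y\|_2^2$ in terms of $(y_s - y_t)^2$.

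For the lower bound, I would simply observe
\begin{align*}
\|y\|_2^2 \;\geq\; y_s^2 + y_t^2 \;\geq\; \tfrac{1}{2}(y_s - y_t)^2 \;=\; \tfrac{1}{2}\Reff_G(s,t)^2,
\end{align*}
where the second inequality is the standard $(a-b)^2 \leq 2(a^2+b^2)$. This is immediate and requires no structural properties of $L$.

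For the upper bound I would use the electrical interpretation: $y$ is the potential when unit current is injected at $s$ and extracted at $t$, so it is harmonic at every vertex $v \notin \{s,t\}$. The discrete maximum principle then forces $\min(y_s,y_t) \leq y_v \leq \max(y_s,y_t)$ for every $v$. Since $y$ lies in the range of $L$, it is orthogonal to $\mathbf{1}$, i.e., $\sum_v y_v = 0$, so $y_s$ and $y_t$ must have opposite signs (assume $y_s \geq 0 \geq y_t$). Consequently $|y_v| \leq \max(|y_s|,|y_t|) \leq y_s - y_t = \Reff_G(s,t)$ for every $v$, and summing over the $n$ vertices yields $\|y\|_2^2 \leq n \cdot \Reff_G(s,t)^2$.

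The main (mild) obstacle is just justifying $|y_v| \leq \Reff_G(s,t)$ cleanly; the maximum principle plus the zero-sum condition handles it without any spectral-gap or conductance estimates. The whole proof should fit in a short paragraph and doesn't require any of the machinery from earlier sections beyond the definition of $\Reff$ and the basic properties of $L^\dagger$ stated in Section~\ref{sec:graph_Laplacian_notation}.
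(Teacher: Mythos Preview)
Your proof is correct and essentially identical to the paper's. Both arguments use $\|y\|_2^2 \ge y_s^2 + y_t^2 \ge \tfrac12(y_s-y_t)^2$ for the lower bound, and for the upper bound invoke the maximum principle together with $\mathbf{1}^\top y = 0$ to conclude $|y_v|\le \Reff_G(s,t)$ for every $v$, then sum over the $n$ coordinates.
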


\begin{proof}

{\bf Lower bound.}
Let $x = v_t - v_s = L_G^{\dagger} b_{st} \in \R^n$. By definition,
\begin{align*}
\| v_s - v_t \|_2^2 
= & ~ \| x \|_2^2\\
\ge & ~ (x_s)^2 + (x_t)^2\\
= & ~ (x_s)^2 + (x_s - b_{st}^\top L_G^{\dagger} b_{st})^2\\
\ge & ~ (b_{st}^\top L_G^{\dagger} b_{st})^2/2,
\end{align*}
where third step follows from $x_t = x_s - b_{st}^\top L_G^\dagger b_{st}$.

Thus we complete the proof of the lower bound.

{\bf Upper bound.}
Next, we prove the upper bound. The maximum and minimum coordinates of $x$ are $x_t$ and $x_s$ respectively. By definition of the pseudoinverse\Aaron{Justify further in preliminaries?}, $\text{image}(L_G^{\dagger}) = \text{image}(L_G)$. Therefore, $\textbf{1}^\top x = 0$, $x_s \le 0$, and $x_t \ge 0$. $x_s \le 0$ implies that for all $i\in [n]$, 
\begin{align*}
x_i \le x_s + b_{st}^\top L_G^{\dagger} b_{st} \le b_{st}^\top L_G^{\dagger} b_{st}.
\end{align*}
$x_t \ge 0$ implies that for all $i\in [n]$, 
\begin{align*}
x_i \ge x_t - b_{st}^\top L_G^{\dagger} b_{st} \ge -b_{st}^\top L_G^{\dagger} b_{st}.
\end{align*}
Therefore, $|x_i|\le b_{st}^\top L_G^{\dagger} b_{st} = \Reff_G(s,t)$ for all $i\in [n]$. Summing across $i\in [n]$ yields the desired upper bound.
\end{proof}

Furthermore, the minimum effective resistance of an edge across a cut is related to the maximum weight edge across the cut:

\begin{proposition}\label{prop:weight-res}
In an $m$-edge graph $G$ with vertex set $S$,
\begin{align*}
\min_{s\in S,t\notin S} \Reff_G(s,t)\le \min_{e\in \partial S} (1/w_e)\le m \min_{s\in S,t\notin S} \Reff_G(s,t) .
\end{align*}
\end{proposition}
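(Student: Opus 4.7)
The plan is to establish the two inequalities separately, using the two standard variational characterizations of effective resistance (via flows and via potentials).

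For the left inequality $\min_{s \in S, t \notin S} \Reff_G(s,t) \le \min_{e \in \partial S}(1/w_e)$, I would invoke Rayleigh monotonicity (or more directly, use the single-edge resistor as a feasible flow in the flow-based definition). For any edge $e = \{u,v\} \in \partial S$ with $u \in S, v \notin S$, routing one unit of $u$-$v$ current along this single edge gives energy $1/w_e$, so $\Reff_G(u,v) \le 1/w_e$. Since $(u,v)$ is itself an $S$-to-$\bar S$ pair, $\min_{s \in S, t \notin S}\Reff_G(s,t) \le \Reff_G(u,v) \le 1/w_e$. Minimizing over $e \in \partial S$ yields the claim.

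For the right inequality, I would use the potential-based characterization of effective conductance: recall from the proof of Lemma \ref{lem:sparse-lower-tail} that $\Ceff_G(s,t) = \min_{q : q_s = 0,\, q_t = 1}\sum_{\{x,y\}\in E(G)} w_{xy}(q_x - q_y)^2 = 1/\Reff_G(s,t)$. Let $(s^*, t^*)$ be a pair with $s^* \in S$, $t^* \notin S$ achieving $\min_{s \in S, t \notin S}\Reff_G(s,t)$. Plug in the feasible potential $q = \mathbbm{1}_{V \setminus S}$, which satisfies $q_{s^*} = 0$ and $q_{t^*} = 1$. The only edges contributing to the sum are those in $\partial S$, and each contributes exactly $w_e$, so
\[
\frac{1}{\Reff_G(s^*,t^*)} \;=\; \Ceff_G(s^*,t^*) \;\le\; \sum_{e \in \partial S} w_e \;\le\; |\partial S|\cdot \max_{e \in \partial S} w_e \;\le\; m \cdot \max_{e \in \partial S} w_e.
\]
Rearranging and using $\max_{e \in \partial S} w_e = 1/\min_{e \in \partial S}(1/w_e)$ gives $\min_{e \in \partial S}(1/w_e) \le m\cdot \Reff_G(s^*,t^*)$, which is the desired bound.

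There is really no main obstacle here; both inequalities are one-line consequences of the dual flow/potential characterizations of effective resistance, together with the obvious bounds $|\partial S| \le m$ and $\sum_{e \in \partial S} w_e \le |\partial S|\max_{e \in \partial S} w_e$. The only item worth being careful about is matching the quantifiers correctly (the cut pair $(u,v)$ realizing an edge on the left, and the minimizer $(s^*,t^*)$ of effective resistance serving as the $(s,t)$ pair into which we plug the indicator potential on the right).
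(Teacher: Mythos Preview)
Your proof is correct. The left inequality is handled identically to the paper (single-edge flow gives $\Reff_G(u,v)\le 1/w_e$). For the right inequality, however, you take the dual route: the paper works on the \emph{flow} side, lower-bounding the energy $\sum_e f_e^2/w_e$ by restricting to cut edges, factoring out $\min_{e\in\partial S}1/w_e$, and then applying Cauchy--Schwarz together with the fact that any unit $s$--$t$ flow sends at least one unit across $\partial S$. You instead work on the \emph{potential} side, plugging the indicator $q=\mathbbm{1}_{V\setminus S}$ into the Dirichlet form to get $\Ceff_G(s,t)\le\sum_{e\in\partial S}w_e$ directly. Your argument is slightly slicker---it sidesteps Cauchy--Schwarz and in fact yields the marginally sharper intermediate bound $\Reff_G(s,t)\ge 1/\sum_{e\in\partial S}w_e$ before you relax to $|\partial S|\max_e w_e\le m\max_e w_e$. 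One small remark: your bound holds for \emph{every} pair $s\in S$, $t\notin S$ (the indicator potential is feasible for all of them), so singling out the minimizer $(s^*,t^*)$ is not actually needed, though of course it suffices.
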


\begin{proof}

{\bf Lower bound.}
The lower bound on $\min_e 1/w_e$ follows immediately from the fact that for any edge $e = \{s,t\}$, $\Reff_G(s,t)\le r_e$.

{\bf Upper bound.}
For the upper bound, recall that

\begin{align*}
\Reff_G(s,t) &= \min_{f\in \mathbb{R}^m : B^\top f = b_{st}} \sum_{e\in E(G)} f_e^2/w_e\\
&\ge \min_{f\in \mathbb{R}^m : B^\top f = b_{st}} \sum_{e\in \partial S} f_e^2/w_e\\
&\ge \left(\min_{f\in \mathbb{R}^m : B^\top f = b_{st}} \sum_{e\in \partial S} f_e^2\right)\left(\min_{e\in \partial S} 1/w_e\right)\\
&\ge \frac{1}{|\partial S|} \left(\min_{f\in \mathbb{R}^m : B^\top f = b_{st}} \sum_{e\in \partial S} |f_e|\right)^2\left(\min_{e\in \partial S} 1/w_e\right)
\end{align*}
where the first step follows from definition of effective resistance, 
the third step follows from taking $w$ out, and the last step follows from Cauchy-Schwarz.

Since $s\in S$ and $t\notin S$, $\sum_{e\in \partial S} |f_e|\ge 1$. Therefore,
\begin{align*}
\Reff_G(s,t) \ge \frac{1}{m}\min_{e\in \partial S} 1/w_e
\end{align*}
completing the upper bound.
\end{proof}

\begin{proposition}\label{prop:lapl-error}
Consider an $n$-vertex connected graph $G$ with edge weights $\{w_e\}_{e\in G}$, two matrices $Z,\tilde{Z}\in \mathbb{R}^{k\times n}$ with rows $\{z_i\}_{i=1}^k$ and $\{\tilde{z}_i\}_{i=1}^k$ respectively for $k\le n$, and $\epsilon\in (1/n,1)$. Suppose that both of the following properties hold:
\begin{enumerate}
    \item $\|z_i - \tilde{z}_i\|_{L_G} \le 0.01 n^{-12} w_{\min}^2 w_{\max}^{-2}\|z_i\|_{L_G}$ 
for all $i\in [k]$, where $w_{\min}$ and $w_{\max}$ are the minimum and maximum weights of edges in $G$ respectively
    \item $(1 - \epsilon/10) \cdot \| L_G^{\dagger} b_{st}\|_2 \le \|Zb_{st}\|_2 \le (1 + \epsilon/10) \cdot \|L_G^{\dagger}b_{st}\|$ for any vertices $s,t$ in $G$
\end{enumerate}

Then for any vertices $s,t$ in $G$,
\begin{align*}
(1 - \epsilon) \cdot \|L_G^{\dagger} b_{st}\|_2 \le \|\tilde{Z} b_{st}\|_2 \le (1 + \epsilon) \cdot \|L_G^{\dagger} b_{st}\|_2.
\end{align*}
\end{proposition}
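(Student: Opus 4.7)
The plan is to bound $\|(Z-\tilde Z)b_{st}\|_2$ and then apply the triangle inequality together with hypothesis~(2): if we can show
\[
\|(Z-\tilde Z)b_{st}\|_2 \;\le\; (\epsilon/2)\,\|L_G^{\dagger} b_{st}\|_2
\]
for every pair $s,t$, then adding this to the $(1\pm \epsilon/10)$ guarantee for $\|Zb_{st}\|_2$ from~(2) yields the claimed $(1\pm\epsilon)$ bound for $\|\tilde Z b_{st}\|_2$. So the whole task reduces to controlling $\|(Z-\tilde Z)b_{st}\|_2$ using the row-wise Laplacian-norm guarantee in hypothesis~(1).

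\textbf{Step 1 (Cauchy--Schwarz in the $L_G$/$L_G^{\dagger}$ duality).} The $i$-th entry of $(Z-\tilde Z)b_{st}$ equals $(z_i-\tilde z_i)^{\top}b_{st}$. Since $b_{st}\perp \mathbf{1}$, only the component of $z_i-\tilde z_i$ orthogonal to $\ker(L_G)=\mathrm{span}(\mathbf{1})$ contributes, and I will write $u^{\top}v=(L_G^{1/2}u)^{\top}(L_G^{\dagger/2}v)$ on this subspace. Standard Cauchy--Schwarz then gives
\[
\bigl((z_i-\tilde z_i)^{\top}b_{st}\bigr)^{2} \;\le\; \|z_i-\tilde z_i\|_{L_G}^{2}\cdot\|b_{st}\|_{L_G^{\dagger}}^{2} \;=\; \|z_i-\tilde z_i\|_{L_G}^{2}\cdot \Reff_G(s,t).
\]
Summing over $i\in[k]$ yields $\|(Z-\tilde Z)b_{st}\|_2^{2}\le \Reff_G(s,t)\cdot \sum_i \|z_i-\tilde z_i\|_{L_G}^{2}$. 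By hypothesis~(1), the sum on the right is at most $(0.01\, n^{-12} w_{\min}^{2} w_{\max}^{-2})^{2}\cdot \sum_i \|z_i\|_{L_G}^{2}$, so everything reduces to bounding $\sum_i \|z_i\|_{L_G}^{2}$.

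\textbf{Step 2 (bounding $\sum_i\|z_i\|_{L_G}^{2}$ via the edge-expansion of $L_G$).} Writing $L_G=B^{\top}WB$ and expanding,
\[
\sum_{i=1}^{k}\|z_i\|_{L_G}^{2} \;=\; \sum_{e=\{u,v\}\in E(G)} w_{e}\,\|Zb_{uv}\|_{2}^{2}.
\]
Each term $\|Zb_{uv}\|_2^{2}$ is $\le (1+\epsilon/10)^{2}\|L_G^{\dagger}b_{uv}\|_2^{2}\le 4\|L_G^{\dagger}b_{uv}\|_2^{2}$ by hypothesis~(2), and by Proposition~\ref{prop:res2-res} the latter is at most $n\cdot\Reff_G(u,v)^{2}$. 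Because $G$ is connected, any pair $u,v$ admits a path of at most $n-1$ edges of weight $\ge w_{\min}$, so $\Reff_G(u,v)\le n/w_{\min}$. Combining and summing over at most $n^{2}$ edges of weight at most $w_{\max}$ produces a crude but sufficient bound of the form
\[
\sum_{i=1}^{k}\|z_i\|_{L_G}^{2} \;\le\; C\,n^{5}\,w_{\max}/w_{\min}^{2}
\]
for an absolute constant $C$.

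\textbf{Step 3 (putting the pieces together).} Substituting the Step~2 bound into Step~1 gives $\sum_i \|z_i-\tilde z_i\|_{L_G}^{2}\le C'\,n^{-19}w_{\min}^{2}w_{\max}^{-3}$ for an absolute $C'$, and hence
\[
\|(Z-\tilde Z)b_{st}\|_2^{2} \;\le\; C'\,n^{-19}w_{\min}^{2}w_{\max}^{-3}\cdot \Reff_G(s,t).
\]
On the other hand Proposition~\ref{prop:res2-res} and the standard lower bound $\Reff_G(s,t)\ge 1/(n\,w_{\max})$ give $\|L_G^{\dagger}b_{st}\|_2^{2}\ge \tfrac{1}{2}\Reff_G(s,t)^{2}\ge \Reff_G(s,t)/(2n\,w_{\max})$. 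Dividing the two inequalities and using $\epsilon\ge 1/n$ shows $\|(Z-\tilde Z)b_{st}\|_2^{2}\le (\epsilon/10)^{2}\|L_G^{\dagger}b_{st}\|_2^{2}$, which combined with the triangle inequality against hypothesis~(2) yields the desired $(1\pm\epsilon)$-approximation.

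\textbf{Main obstacle.} The conceptual content is the one Cauchy--Schwarz step; the rest is careful bookkeeping. The main technical obstacle is that hypothesis~(2) only controls $\|Zb_{st}\|_2$ rather than $\|z_i\|_{L_G}$ directly, and one has no \emph{a priori} structure on the rows $z_i$. The trick that makes everything go through is the identity $\sum_i \|z_i\|_{L_G}^{2}=\sum_{e}w_{e}\|Zb_{uv}\|_2^{2}$, which lets the row-bound be recovered from the column-by-column guarantee in~(2); verifying that the polynomial losses $(n,w_{\max}/w_{\min})$ accumulated along the way are absorbed by the very small constant $0.01\,n^{-12}w_{\min}^{2}w_{\max}^{-2}$ in hypothesis~(1) is the calculation one must carry out.
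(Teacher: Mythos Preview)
Your proposal is correct and follows essentially the same approach as the paper: reduce via the triangle inequality to bounding $\|(Z-\tilde Z)b_{st}\|_2$, control each $(z_i-\tilde z_i)^\top b_{st}$ via Cauchy--Schwarz against $\|z_i-\tilde z_i\|_{L_G}$, apply hypothesis~(1), and then bound $\|z_i\|_{L_G}^2=\sum_e w_e(z_i^\top b_e)^2$ using hypothesis~(2) together with Proposition~\ref{prop:res2-res}. The one noteworthy difference is your Step~1: you use the clean $L_G/L_G^{\dagger}$ duality Cauchy--Schwarz to obtain the factor $\Reff_G(s,t)$, whereas the paper instead decomposes $b_{st}$ along an $s$--$t$ path and uses ordinary Cauchy--Schwarz to get the cruder factor $n/w_{\min}$; your version is tighter and pairs more naturally with the lower bound $\|L_G^\dagger b_{st}\|_2^2\ge \tfrac12\Reff_G(s,t)^2$, but both routes leave ample slack against the $n^{-12}$ in hypothesis~(1).
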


\begin{proof}
We start by bounding
\begin{align*}
((z_i - \tilde{z}_i)^\top b_{st})^2
\end{align*}
for each $i\in [k]$. Since $G$ is connected, there is a path from $s$ to $t$ consisting of edges $e_1,e_2,\hdots,e_{\ell}$ in that order, where $\ell\le n$. By the Cauchy-Schwarz inequality,
\begin{align*}
((z_i - \tilde{z}_i)^\top b_{st})^2 
\le & ~ n \sum_{j=1}^{\ell} ((z_i - \tilde{z}_i)^\top b_{e_j})^2\\
\le & ~ (n/w_{\min}) \cdot \|z_i - \tilde{z}_i\|_{L_G}^2\\
\le & ~ 0.01 n^{-23} w_{\min}^3 w_{\max}^{-4} \cdot \|z_i\|_{L_G}^2\\
\end{align*}
where the last step from property 1 in proposition statement.

By the upper bound on $\|Zb_{s't'}\|_2$ for any vertices $s',t'$ in $G$, $(z_i^\top b_e)^2\le (1 + \epsilon)^2 b_e^\top (L_G^{\dagger})^2 b_e$ for all edges $e$ in $G$, so 
\begin{align*}
0.01 n^{-23} w_{\min}^3 w_{\max}^{-4} \cdot \|z_i\|_{L_G}^2 
= & ~ 0.01 n^{-23} w_{\min}^3 w_{\max}^{-4} \cdot \sum_{e\in E(G)} w_e((z_i)^\top b_e)^2\\
\le & ~ 0.01 n^{-23} w_{\min}^3 w_{\max}^{-3} \cdot \sum_{e\in E(G)} ((z_i)^\top b_e)^2\\
\le & ~ 0.01 n^{-23} w_{\min}^3 w_{\max}^{-3} \cdot \sum_{e\in E(G)} (1 + \epsilon)^2 b_e^\top (L_G^{\dagger})^2 b_e \\
\le & ~ 0.01 n^{-21} w_{\min}^3 w_{\max}^{-3} (1 + \epsilon)^2 \max_{e\in E(G)} (b_e^\top (L_G^{\dagger})^2 b_e)\\
\le & ~ 0.04 n^{-21} w_{\min}^3 w_{\max}^{-3} \max_{e\in E(G)} (b_e^\top (L_G^{\dagger})^2 b_e) .
\end{align*}
where the first step follows from $\| z_i \|_{L_G}^2 = \sum_{e \in E} z_i^\top w_e b_e b_e^\top z_i $, the second step follows from $w_{\max} = \max_{e\in G} w_e$, and the third step follows from $(z_i^\top b_e)^2 \leq (1+\epsilon)^2 b_e^\top (L_G^\dagger) b_e$, the forth step follows from summation has at most $n^2$ terms, the last step follows from $(1+\epsilon)^2 \leq 4$, $\forall \epsilon \in (0,1)$.

\Aaron{Cite Prop 6.9 instead of going through this paragraph}
$L_G^{\dagger} b_e$ is a vector that is maximized and minimized at the endpoints of $e$. Furthermore, $\textbf{1}\in \text{kernel}(L_G^{\dagger})$ by definition of the pseudoinverse. Thus, we know $L_G^\dagger b_e $ has both positive and negative coordinates and that $\|L_G^{\dagger}b_e\|_{\infty} \leq \max_{i\neq j} | (L_G^\dagger b_e)_i - (L_G^\dagger b_e)_j | \le b_e^\top L_G^{\dagger} b_e$ .

We have
\begin{align*}
& ~ 0.04 n^{-21} w_{\min}^3 w_{\max}^{-3} \cdot \max_{e\in E(G)} b_e^\top (L_G^{\dagger})^2 b_e \\
\le & ~ 0.04 n^{-20} w_{\min}^3 w_{\max}^{-3} \cdot \max_{e\in E(G)} (b_e^\top L_G^{\dagger} b_e)^2\\
\le & ~ 0.4 n^{-20} w_{\min} w_{\max}^{-3}\\
\le & ~ 0.4 n^{-16} w_{\min} w_{\max}^{-1} \cdot (b_{st}^\top L_G^{\dagger} b_{st})^2 \\
\le & ~ 0.8 n^{-16} w_{\min} w_{\max}^{-1} \cdot \| L_G^\dagger b_{st} \|_2^2
\end{align*}
where the first step follows from $\max_{e \in E(G)} b_e^\top (L_G^\dagger)^2 b_e = \max_{e\in E} \| L_G^\dagger b_e \|_2^2 \leq n \max_{e\in E} \| L_G^\dagger b_e \|_{\infty}^2 \leq n \max_{e \in E(G)} ( b_e^\top L_G^\dagger b_e )^2$, the second step follows from $\max_e (b_e^\top L_G^\dagger b_e)^2 \leq w_{\min}^{-2}$ (the lower bound in Proposition \ref{prop:weight-res}), and the third step follows from $w_{\max}^{-2} \leq n^4 (b_{st} L_G^\dagger b_{st})^2$ (the upper bound in Proposition \ref{prop:weight-res}), and the last step follows from $(b_{st}^\top L_G^{\dagger} b_{st})^2 \le 2 \| L_G^\dagger b_{st} \|_2^2$ (Since $L_G^{\dagger} b_{st}$ is maximized and minimized at $t$ and $s$ respectively).

Combining these inequalities shows that
\begin{align*}
((z_i - \tilde{z}_i)^\top b_{st})^2\le 0.8 n^{-16} w_{\min}w_{\max}^{-1} \cdot \|L_G^{\dagger}b_{st}\|_2^2
\end{align*}
Summing over all $i$ and using the fact that $k\le n$, $\epsilon > 1/n$, and $w_{\min}\le w_{\max}$ shows that
\begin{align*}
    \|(Z - \tilde{Z})b_{st}\|_2^2
    = & ~ \sum_{i=1}^k ((z_i - \tilde{z}_i)^\top b_{st})^2\\
    \leq & ~ 0.8 k n^{-16} w_{\min}w_{\max}^{-1} \cdot \|L_G^{\dagger}b_{st}\|_2^2 \\
    \leq & ~ 0.8 \epsilon^2 n^{-13} w_{\min} w_{\max}^{-1} \cdot \|L_G^{\dagger}b_{st}\|_2^2 ~ \\
    \le & ~ 0.8 \epsilon^2 n^{-13} \cdot \|L_G^{\dagger} b_{st}\|_2^2
\end{align*}

Combining this with the given upper and lower bounds on $\| Z b_{st} \|_2$ using the triangle inequality yields the desired result.
\end{proof}

\begin{proof}[Proof of Lemma \ref{lem:lsolve-reduc}]
Consider the following algorithm \textsc{BichromaticNearestNeighbor}, given below:
\begin{algorithm}\caption{  }\label{alg:bichromatic_nearest_neighor_proj}
\begin{algorithmic}[1]
\Procedure{\textsc{BichromaticNearestNeighborProj}}{$A,B,k$} \Comment{Lemma~\ref{lem:lsolve-reduc}}

    \State \textbf{Given}: $A,B\in \mathbb{R}^d$ with the property that $A\cup B$ is $\rho$-spaced with $|S|^{O(1)}$-bounded distance set\Aaron{Determine $O(1)$}
    , where $\rho = 1 + 16(\log (10 n ))/L$, and $k\in S$
    
    \State \textbf{Returns}: whether there are $a\in A, b\in B$ for which $\|a - b\|_2 \le k$
    
    \State $C\gets \{a \cdot \sqrt{x_0} / k ~|~ \forall a\in A\}$ 
    
    \State $D\gets \{b \cdot \sqrt{x_0} / k ~|~ \forall b\in B\}$
    
    \State $\ell\gets 200\log n$

    \State Construct matrix $P \in \R^{\ell \times d}$, where each entry is $1/\sqrt{\ell}$ with prob $1/2$ and $-1/\sqrt{\ell}$ with prob $1/2$ 
    
    \State $\tilde{Z}_{i,*} \gets \mathcal A( C \cup D , P_{i,*} )$ for each $i\in [\ell]$ \Comment{$P_{i,*},\tilde{Z}_{i,*}$ denotes row $i$ of $P \in \R^{\ell \times d} ,\tilde{Z} \in \R^{\ell \times n}$}
    
    \State $\widehat{C}\gets \{\tilde{Z} \cdot b_c ~|~ \forall c\in C\}$ \Comment{$b_c\in \mathbb{R}^n$ denotes the indicator vector of the vertex $c$, i.e. $(b_c)_c = 1$ and $(b_c)_i = 0$ for all $i\ne c$}
    
    \State $\widehat{D}\gets \{\tilde{Z} \cdot b_c ~|~ \forall c\in D\}$
    
    \State $t\gets (\log n)$-approximation to closest $\widehat{C}$-$\widehat{D}$ $\ell_2$-distance using Theorem \ref{thm:l2-ann}
    
    \If{$t \le 3\sqrt{n}(\log n)/f(x_0)$}
        
        \State \Return $\mathsf{true}$
        
    \Else
    
        \State \Return $\mathsf{false}$
        
    \EndIf

\EndProcedure
\end{algorithmic}
\end{algorithm}

First, we bound the runtime of $\textsc{BichromaticNearestNeighborProj}$ (Algorithm~\ref{alg:bichromatic_nearest_neighor_proj}). Computing the sets $C,D$ and the matrix $P$ trivially takes $\tilde{O}( n )$ time. Computing the matrix $\tilde{Z}$ takes 
\begin{align*}
O(\log n) \T(n,L,g(|S|^{O(1)}),d)
\end{align*}
time since the point set $C\cup D$ is $|S|^{O(1)}$-boxed. Computing $\widehat{C}$ and $\widehat{D}$ takes $\tilde{O}( n )$ time, as computing $\widehat{Z}b_i$ takes $O(\log n)$ time for each $i\in C\cup D$ since $b_i$ is supported on just one vertex. Computing $t$ takes $n^{1 + o(1)}$ time by Theorem \ref{thm:l2-ann}. In particular, one computes $t$ by preprocessing a $O(\log n)$-approximate nearest neighbors data structure on $\widehat{D}$ (takes $n^{1+o(1)}$ time), queries the data structure on all points in $\widehat{C}$ (takes $n (n^{o(1)}) = n^{1+o(1)}$ time), and returns the minimum of all of the queries. The subsequent if statement takes constant time. Therefore, the reduction takes \begin{align*}
O((\log n) \cdot \T(n,L,|S|^{O(1)},d) + n^{1 + o(1)}
\end{align*}
time overall, as desired.

Next, suppose that there exists $a\in A$ and $b\in B$ for which $\| a - b \|_2 \le k$. We show that the reduction returns $\mathsf{true}$ with probability at least $1 - 1/n$. Let $G$ denote the $f$-graph on $C\cup D$. By definition of $C$ and $D$ and the fact that $f$ is decreasing, there exists a pair of points in $C$ and $D$ with edge weight at least $f(x_0)$. 

By the lower bound on Proposition \ref{prop:weight-res}, 
\begin{align*}
\exists
p\in C,q\in D \text{~s.t.~} \Reff_G(p,q)\le 1/f(x_0).
\end{align*}

By the upper bound of Proposition \ref{prop:res2-res}, 
\begin{align*}
b_{pq}^\top  (L_G^{\dagger})^2 b_{pq} \le n \cdot (\Reff_G(p,q))^2 \le n/f(x_0)^2 .
\end{align*}

Let $Z \in \R^{\ell \times n}$ be defined as $Z = P \cdot L_G^{\dagger}$. By Theorem \ref{thm:jl} with $\epsilon = 1/2$ applied to the collection of vectors $\{L_G^{\dagger} b_s\}_{s\in C\cup D}$ and projection matrix $P$, 
\begin{align*}
\frac{1}{2} \|L_G^{\dagger}b_{st}\|_2 \le \|Z b_{st}\|_2 \le \frac{3}{2} \|L_G^{\dagger}b_{st}\|_2
\end{align*}
for all pairs $s,t\in C\cup D$ with high probability. Therefore, the second input guarantee of Proposition \ref{prop:lapl-error} is satisfied with high probability. Furthermore, for each $i\in [\ell]$, $\tilde{z}_i$, the $i$th row of $\tilde{Z}$, satisfies the first input guarantee by the output error guarantee of the algorithm $\mathcal A$. Therefore, Proposition \ref{prop:lapl-error} applies and shows that
\begin{align*}
\|\tilde{Z} \cdot b_{pq}\|_2 \le \frac{9}{4} \|L_G^{\dagger} \cdot b_{pq}\|_2 \le 3\sqrt{n} / f(x_0).
\end{align*}
This means that there exists of vectors $a\in \widehat{C},b\in \widehat{D}$ with 
\begin{align*}
\|a - b\|_2 \le 3\sqrt{n} / f(x_0).
\end{align*}
By the approximation guarantee of the nearest neighbors data structure, $t \le ( 3\sqrt{n} / f(x_0) )\log n $ and the reduction returns $\mathsf{true}$ with probability at least $1 - 1/n$, as desired.

Next, suppose that there do not exist $a\in A$ and $b\in B$ for which $\|a - b\|_2 \le k$. We show that the reduction returns $\mathsf{false}$ with probability at least $1 - 1/n$. Since $k\in S$ and $A\cup B$ is $\rho$-spaced, $\|a - b\|_2 \ge \rho \cdot k$ for all $a\in A$ and $b\in B$. Therefore, all edges between $C$ and $D$ in the $f$-graph $G$ for $C\cup D$ have weight at most $f(\rho x_0)\le \frac{f(x_0)}{100 n^{16}}$ since $f$ is not $(\rho,L)$-multiplicatively Lipschitz. 

By the upper bound of Proposition \ref{prop:weight-res}, 
\begin{align*}
\Reff_G(s,t) \ge \frac{1}{ n^2 f(\rho x_0) }\ge \frac{ 100 n^{14} }{f(x_0)}
\end{align*}
for any pair of vertices $s\in C, t\in D$. 

By the lower bound of Proposition \ref{prop:res2-res}, 
\begin{align*}
b_{st}^\top (L_G^{\dagger})^2 b_{st} \ge (\Reff_G(s,t))^2/2 \ge (\frac{50n^{14}}{f(x_0)})^2
\end{align*}
for all $s\in C, t\in D$.

Recall from the discussion of the $\|a - b\|\le k$ case that Theorem \ref{thm:jl} and Proposition \ref{prop:lapl-error} apply. Therefore, with probability at least $1 - 1/n$, by the lower bound of Proposition \ref{prop:lapl-error},
\begin{align*}
\|\tilde{Z}b_{st}\|_2 \ge \frac{1}{4} \| L_G^{\dagger} b_{st} \|_2 \ge \frac{12n^{14}}{f(x_f)}
\end{align*}
for any $s\in C,t\in D$. 

Therefore, by the approximate nearest neighbors guarantee, 
\begin{align*}
t \ge \frac{12 n^{14}}{ (\log n) \cdot f(x_0) } > \frac{ 3\sqrt{n} (\log n) }{ f(x_0) },
\end{align*}
so the algorithm returns $\mathsf{false}$ with probability at least $1 - 1/n$, as desired.
\end{proof}

We now prove the theorems:

\begin{proof}[Proof of Theorem \ref{thm:low-lsolve-hard}]
Consider an instance of bichromatic $\ell_2$-closest pair for $n = L^{1/(32c_0)}$ and $d = c^{\log^* n}$, where $c$ is the constant given in the dimension bound of Theorem \ref{thm:lownnhard} and $c_0$ is such that integers have bit length $c_0\log n$ in Theorem \ref{thm:lownnhard}. This consists of two sets of points $A,B\subseteq \mathbb{R}^d$ with $|A\cup B| = n$ for which we wish to compute $\min_{a\in A, b\in B} \|a - b\|_2$. By Theorem \ref{thm:lownnhard}, the coordinates of points in $A$ are also $c_0\log n$ bit integers. Therefore, the set $S$ of possible $\ell_2$ distances between points in $A$ and $B$ is a set of square roots of integers with $\log d + c_0\log n\le 2c_0\log n$ bits. Therefore, $S$ is a $\rho$-discrete (recall $\rho = 1 + (16\log (10n))/L$) since $1 + 1/n^{2c_0} > 1 + (16\log n)/L$. Furthermore, note that $f$ is not $(\rho,L)$-multiplicatively Lipschitz by assumption.

We now describe an algorithm for solving $\ell_2$-closest pair on $A\times B$. Use binary search on the values in $S$ to compute the minimum distance between points in $A,B$. $S$ consists of integers between 1 and $n^{c_0}$ (pairs with distance 0 can be found in linear time), so $\gamma \le n^{c_0}$. For each query point $k\in S$, by Lemma \ref{lem:lsolve-reduc}, there is a 
\begin{align*}
O((\log n)\cdot \T(n,L,g(\gamma),d) + n) = O((\log n) \cdot \T(L^{1/(32c_0)},L,L^{1/32},c^{\log^* L}) + L^{1/(32c_0)})
\end{align*}
-time algorithm for determining whether or not the closest pair has distance at most $k$. Therefore, there is a 
\begin{align*}
O((\log n) \T(L^{1/(32c_0)},L,L^{1/32},c^{\log^* L}) + L^{1/(32c_0)}) = \tilde{O}(L^{1/(32c_0)}L^{1/(64c_0)} \log(L^{1/32})) < \tilde{O}(n^{3/2})
\end{align*}
time algorithm for solving $\ell_2$-closest pair on pairs of sets with $n$ points. But this is impossible given {\sf SETH} by Theorem \ref{thm:lownnhard}, a contradiction. This completes the result.
\end{proof}

\begin{proof}[Proof of Theorem \ref{thm:high-lsolve-hard}]
\Aaron{Check constants}
Consider an instance of bichromatic Hamming nearest neighbor search for $n = 2^{L^{0.49}}$ and $d = c_1\log n$ for the constant $c_1$ in the dimension bound in Theorem \ref{thm:r18}. This consists of two sets of points $A,B\subseteq \mathbb{R}^d$ with $|A\cup B| = n$ for which we wish to compute $\min_{a\in A, b\in B} \|a - b\|_2$. The coordinates of points in $A$ and $B$ are 0-1. Therefore, the set $S$ of possible $\ell_2$ distances between points in $A$ and $B$ is the set of square roots of integers between 0 and $c_1\log n$, which differ by a factor of at least $1 + 1/(16c_1\log n) > \rho$ (recall $\rho = 1 + (16\log (10n))/L$). Therefore, $A\cup B$ is $\rho$-spaced. Note that $f$ is not $(\rho,L)$-multiplicatively Lipschitz by assumption. Furthermore, $A\cup B$ is $\sqrt{c_1\log n}\le L^{.25}$-boxed.

We now give an algorithm for solving $\ell_2$-closest pair on $A\times B$. Use binary search on $S$. For each query $k\in S$, Lemma \ref{lem:lsolve-reduc} implies that one can check if there is a pair with distance at most $k$ in 
\begin{align*}
\T(n,L,d)
= & ~ O(\T(2^{L^{.49}},L,e^{L^{.25}},2^{c_1 L^{.49}}) + 2^{L^{.49}}) \\
\le & ~ O(2^{L^{.49} + L^{.48}}) \\ 
= & ~ n^{1 + o(1)}
\end{align*}
time on pairs of sets with $n$ points. But this is impossible given {\sf SETH} by Theorem \ref{thm:r18}. This completes the result.
\end{proof} %%% Section 8 Hardness of Sparsifying and Solving
\section{Fast Multipole Method}\label{sec:fastmm}
The fast multipole method (FMM) was described as one of the top-10 most important algorithms of the 20th century \cite{dongarra2000guest}. It is a numerical technique that was developed to speed up calculations of long-range forces in the $n$-body problem in physics. In 1987, FMM was first introduced by Greengard and Rokhlin \cite{gr87}, based on the multipole expansion of the vector Helmholtz equation. By treating the interactions between far-away basis functions using the FMM, the corresponding matrix elements do not need to be explicitly computed or stored. This is technique allows us to improve the naive $O(n^2)$ matrix-vector multiplication time to $o(n^2)$.

Since Greengard and Rokhlin invented FMM, the topic has attracted researchers from many different fields, including physics, math, and computer science  \cite{gr87,g88,gr88,gr89,g90,gs91,emrv92,g94,gr96,bg97,d00,ydgd03,ydd04,m12}.

We first give a quick overview of the high-level ideas of FMM in Section~\ref{sec:fastmm_overview}. In Section~\ref{sec:fastmm_gaussian_kernel}, we provide a complete description and proof of correctness for the fast Gaussian transform, where the kernel function is the Gaussian kernel. Although a number of researchers have used FMM in the past, most of the previous papers about FMM either focus on the low-dimensional or low-error cases. We therefore focus on the superconstant-error, high dimensional case, and carefully analyze the joint dependence on $\eps$ and $d$. We believe that our presentation of the original proof in Section~\ref{sec:fastmm_gaussian_kernel} is thus of independent interest to the community. In Section~\ref{sec:fastmm_other}, we give the analogous results for other kernel functions used in this paper.

%\cite{gr87,gr89}, \cite{g88,g90,g94}

\subsection{Overview}\label{sec:fastmm_overview}
We begin with a description of high-level ideas of the Fast Multipole Method (FMM). Let $\k : \R^d \times \R^d \rightarrow \R$ denote a kernel function. The inputs to the FMM are $N$ sources $s_1, s_2, \cdots, s_N \in \R^d$  and $M$ targets $t_1, t_2, \cdots, t_M$. For each $i \in [N]$, the source $s_i$ has a strength $q_i$. Suppose all sources are in a `box' ${\cal B}$ and all the targets are in a `box' ${\cal C}$. The goal is to evaluate
\begin{align*}
u_j = \sum_{i=1}^N \k (s_i, t_j ) q_i, ~~~ \forall j \in [M]
\end{align*}
Intuitively, if $\k$ has some nice property (e.g. smooth), we can hope to approximate $\k$ in the following sense
\begin{align*}
\k(s,t) \approx \sum_{p=0}^{P-1} B_p(s) \cdot C_p(t), ~~~ s \in {\cal B} , t \in {\cal C}
\end{align*}
where $P$ is a small positive integer, usually called the \emph{interaction rank} in the literature.

Now, we can construct $u_i$ in two steps:
\begin{align*}
v_p = \sum_{i \in {\cal B}}  B_p (s_i) q_i, ~~~ \forall p = 0,1, \cdots, P-1,
\end{align*}
and
\begin{align*}
\wt{u}_j = \sum_{p=0}^{P-1} C_p (t_j) v_p, ~~~ \forall i \in [M].
\end{align*}

Intuitively, as long as ${\cal B}$ and ${\cal C}$ are well-separated, then $\wt{u}_j$ is very good estimation to $u_j$ even for small $P$, i.e., $|\wt{u}_j - u_j | < \epsilon$.

Recall that, at the beginning of this section, we assumed that all the sources are in the the same box ${\cal B}$ and ${\cal C}$. This is not true in general. To deal with this, we can discretize the continuous space into a batch of boxes ${\cal B}_1, {\cal B}_2, \cdots $ and ${\cal C}_1, {\cal C}_2, \cdots $. For a box ${\cal B}_{l_1}$ and a box ${\cal C}_{l_2}$, if they are very far apart, then the interaction between points within them is small, and we can ignore it. If the two boxes are close, then we deal wit them efficiently by truncating the high order expansion terms in $\k$ (only keeping the first $\log^{O(d)}(1/\epsilon)$). For each box, we will see that the number of nearby relevant boxes is at most $\log^{O(d)}(1/\epsilon)$.

\subsection{\texorpdfstring{$\k (x,y) = \exp ( -  \| x - y \|_2^2 ) $}{}, Fast Gaussian transform}\label{sec:fastmm_gaussian_kernel}

Given $N$ vectors $s_1, \cdots s_N \in \R^d$, $M$ vectors $t_1, \cdots, t_M \in \R^d$ and a strength vector $q \in \R^n$, Greengard and Strain \cite{gs91} provided a fast algorithm for evaluating discrete Gauss transform
\begin{align*}
G(t_i) = \sum_{j=1}^N q_j e^{ - \| t_i - s_j \|^2 / \delta }
\end{align*}
for $i \in [M]$ in $O(M+N)$ time. In this section, we re-prove the algorithm described in \cite{gs91}, and determine the exact dependences on $\epsilon$ and $d$ in the running time.

By shifting the origin and rescaling $\delta$, we can assume that the sources $s_j$ and targets $t_i$ all lie in the unit box ${\cal B}_0 = [0,1]^d$. 

Let $t$ and $s$ lie in $d$-dimensional Euclidean space $\R^d$, and consider the Gaussian 
\begin{align*}
e^{- \| t - s \|_2^2 } = e^{ - \sum_{i=1}^d (t_i - s_i)^2  }
\end{align*}

We begin with some definitions.
\begin{definition}[one-dimensional Hermite polynomial]
The Hermite polynomials $\wt{h}_n : \R \rightarrow \R$ is defined as follows
\begin{align*}
\wt{h}_n(t) = (-1)^n e^{t^2} \frac{ \mathrm{d}^n }{ \mathrm{d} t } e^{-t^2}
\end{align*}
\end{definition}

\begin{definition}[one-dimensional Hermite function]
The Hermite functions $h_n : \R \rightarrow \R$ is defined as follows
\begin{align*}
h_n(t) = e^{-t^2} \wt{h}_n(t)
\end{align*}
\end{definition}

We use the following Fact to simplify $e^{-(t-s)^2/\delta}$.
\begin{fact}
For $s_0 \in \R$ and $\delta > 0$, we have
\begin{align*}
e^{ - (t-s)^2 / \delta } = \sum_{n=0}^{\infty} \frac{1}{n !} \cdot \left( \frac{ s - s_0 }{ \sqrt{\delta} } \right)^n \cdot h_n \left( \frac{ t - s_0 }{ \sqrt{\delta} } \right)
\end{align*}
and
\begin{align*}
e^{ - (t-s)^2 / \delta } =  e^{ - (t - s_0)^2 / \delta } \sum_{n=0}^{\infty} \frac{1}{n !} \cdot \left( \frac{ s - s_0 }{ \sqrt{\delta} } \right)^n \cdot \wt{h}_n \left( \frac{t - s_0}{ \sqrt{\delta} } \right) .
\end{align*}
\end{fact}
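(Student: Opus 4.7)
The plan is to derive both identities from the classical generating function for physicists' Hermite polynomials, namely
\[
e^{2uv - v^2} \;=\; \sum_{n=0}^{\infty} \wt{h}_n(u) \, \frac{v^n}{n!},
\]
which I would take as the single nontrivial input. After a change of variables, the claim becomes a direct substitution into this generating function.

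First I would set $u := (t - s_0)/\sqrt{\delta}$ and $v := (s - s_0)/\sqrt{\delta}$, so that $(t - s)/\sqrt{\delta} = u - v$. Expanding,
\[
-\frac{(t-s)^2}{\delta} \;=\; -(u-v)^2 \;=\; -u^2 + 2uv - v^2,
\]
which gives the factorization
\[
e^{-(t-s)^2/\delta} \;=\; e^{-u^2} \cdot e^{2uv - v^2}.
\]
Substituting the generating function for $e^{2uv-v^2}$ and using the definition $h_n(x) = e^{-x^2}\wt{h}_n(x)$ immediately yields
\[
e^{-(t-s)^2/\delta} \;=\; \sum_{n=0}^{\infty} \frac{1}{n!}\left(\frac{s-s_0}{\sqrt{\delta}}\right)^n h_n\!\left(\frac{t-s_0}{\sqrt{\delta}}\right),
\]
which is the first identity. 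The second identity is obtained by \emph{not} absorbing the prefactor $e^{-u^2} = e^{-(t-s_0)^2/\delta}$ into the sum, so that $\wt{h}_n$ appears in place of $h_n$.

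The only step requiring genuine justification is the generating function identity itself, which I would prove in one of two standard ways: either (a) from the Rodrigues formula $\wt{h}_n(u) = (-1)^n e^{u^2}\frac{d^n}{du^n}e^{-u^2}$, by Taylor expanding the function $v \mapsto e^{-(u-v)^2} = e^{-u^2}e^{2uv-v^2}$ around $v=0$ and reading off the coefficient of $v^n/n!$ as $e^{-u^2}\wt{h}_n(u)$; or (b) by directly verifying that the right-hand side satisfies the same ODE in $v$ as the left-hand side with matching initial data. Route (a) is cleanest because it essentially repeats the variable substitution already used above. I do not anticipate any real obstacle: once the generating function identity is in hand, both Hermite expansions are just rearrangements of the same two-line algebraic identity, and there are no convergence issues because the power series for $e^{2uv-v^2}$ converges for all $v \in \R$.
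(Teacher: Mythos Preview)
Your proposal is correct and is essentially the same approach as the paper's, only more explicit: the paper simply writes $e^{-(t-s)^2/\delta} = e^{-(t-s_0-(s-s_0))^2/\delta}$ and then asserts the two Hermite expansions without further justification, whereas you actually supply the missing step by invoking the generating function $e^{2uv-v^2} = \sum_n \wt{h}_n(u)v^n/n!$ and the factorization $e^{-(u-v)^2} = e^{-u^2}e^{2uv-v^2}$. Your version is the fleshed-out form of the paper's one-line proof.
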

\begin{proof}

\begin{align*}
e^{ - (t-s)^2 / \delta } 
= & ~ e^{ - ( t - s_0 - (s - s_0) )^2 / \delta } \\
= & ~ \sum_{n=0}^{\infty} \frac{1}{n !} \left( \frac{ s - s_0 }{ \sqrt{\delta} } \right)^n h_n \left( \frac{ t - s_0 }{ \sqrt{\delta} } \right) \\
= & ~ e^{ - (t - s_0)^2 / \delta } \sum_{n=0}^{\infty} \frac{1}{n !} \left( \frac{ s - s_0 }{ \sqrt{\delta} } \right)^n \wt{h}_n \left( \frac{t - s_0}{ \sqrt{\delta} } \right). 
\end{align*}

\end{proof}

Using Cramer's inequality, we have the following standard bound.
\begin{lemma}
For any constant $K \leq 1.09$, we have
\begin{align*}
|\wt{h}_n(t) | \leq K \cdot 2^{n/2} \cdot \sqrt{ n !} \cdot e^{t^2 /2}
\end{align*}
and
\begin{align*}
| h_n(t) | \leq K \cdot 2^{n/2} \cdot \sqrt{n !} \cdot e^{-t^2/2} .
\end{align*}
\end{lemma}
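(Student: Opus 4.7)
The plan is to exploit the fact that the two displayed bounds are equivalent, then prove the one for $\wt h_n$ via the standard contour-integral argument behind Cramer's inequality.

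First, I would observe that since $h_n(t) = \wt h_n(t)\, e^{-t^2}$, dividing the first bound by $e^{t^2}$ yields the second one verbatim; so it suffices to prove
\begin{align*}
|\wt h_n(t)| \le K\cdot 2^{n/2}\sqrt{n!}\, e^{t^2/2}.
\end{align*}
Next, I would derive the generating function $e^{2tz - z^2} = \sum_{n\ge 0} \wt h_n(t)\, z^n/n!$ directly from Rodrigues' formula by Taylor-expanding $e^{-(z-t)^2}$ in $z$ around $0$. This immediately yields the Cauchy representation
\begin{align*}
\wt h_n(t) = \frac{n!}{2\pi i} \oint_{|z|=r} \frac{e^{2tz - z^2}}{z^{n+1}}\, dz
\end{align*}
for every $r > 0$. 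Setting $z = r e^{i\theta}$ and passing absolute values inside the integral gives
\begin{align*}
|\wt h_n(t)| \le \frac{n!\, e^{-r^2}}{r^n} \cdot \frac{1}{2\pi}\int_0^{2\pi} e^{2r^2\sin^2\theta + 2tr\cos\theta}\, d\theta,
\end{align*}
where I used $\cos(2\theta) = 1 - 2\sin^2\theta$ to split $e^{-r^2\cos(2\theta)}$ as $e^{-r^2}e^{2r^2\sin^2\theta}$.

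Then I would choose the radius $r = \sqrt{n/2}$, which is the saddle point of $n!\,e^{-r^2}/r^n$. Applying Stirling's bound $n! \le \sqrt{2\pi n}\,(n/e)^n\,e^{1/(12n)}$ converts the prefactor $n!\,e^{-r^2}/r^n$ at $r=\sqrt{n/2}$ into $2^{n/2}\sqrt{n!}$ multiplied by a constant that is bounded above uniformly in $n$. For the $\theta$-integral, I would complete the square in the exponent to isolate a clean factor $e^{t^2/2}$: writing $2tr\cos\theta = -\tfrac{1}{2}(t - 2r\cos\theta)^2/? + \tfrac{t^2}{2} + \ldots$ (carried out precisely after substituting $r=\sqrt{n/2}$) lets $e^{t^2/2}$ be pulled outside the integral, leaving an integrand whose remaining $\theta$-dependence is free of $t$ and whose maximum value can be controlled by elementary means.

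The main obstacle will be pinning down the resulting absolute constant so that it is at most $1.09$. Each of the steps above — Stirling, the saddle-point choice of $r$, and the bound on the residual $\theta$-integral — introduces a multiplicative loss, and these losses must be tracked carefully in order to land below the stated threshold rather than at some larger value like $2$ or $e$. Small values of $n$ will need to be handled separately, since Stirling is loose there: this is done by direct numerical verification on the first few explicit Hermite polynomials $\wt h_0(t) = 1$, $\wt h_1(t) = 2t$, $\wt h_2(t) = 4t^2 - 2$, up to some finite threshold $n_0$, beyond which the asymptotic estimate above already gives $K \le 1.09$.
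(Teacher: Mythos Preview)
The paper does not actually prove this lemma: it is stated as a direct consequence of ``Cramer's inequality,'' a classical bound that the authors simply cite as ``standard'' without argument. So there is no proof in the paper to compare against.

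Your contour-integral route is a reasonable start, but there is a genuine gap in the sketch. After completing the square the exponent becomes $2r^2 - 2(r\cos\theta - t/2)^2 + t^2/2$, so the residual average is
\begin{align*}
J_r(t) \;:=\; \frac{1}{2\pi}\int_0^{2\pi} e^{-2(r\cos\theta - t/2)^2}\,d\theta,
\end{align*}
which still depends on $t$; it is not ``free of $t$'' as you assert. If you bound $J_r(t)\le 1$ and take $r=\sqrt{n/2}$, the prefactor $n!\,e^{r^2}/r^n$ compared against $2^{n/2}\sqrt{n!}$ leaves a ratio $\sqrt{n!}\,e^{n/2}/n^{n/2}\sim (2\pi n)^{1/4}$, which diverges --- Stirling together with the saddle-point choice of $r$ does not by itself produce a bounded constant, let alone $1.09$. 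What you actually need is the uniform-in-$t$ estimate $J_r(t)=O(r^{-1/2})$, and the worst case occurs at the turning points $t\approx\pm 2r=\pm\sqrt{2n}$, where $r\cos\theta - t/2$ vanishes \emph{quartically} in $\theta$ near $\theta=0$ or $\pi$ and a Laplace-type argument only yields $J_r\sim C\,r^{-1/2}$. Tracking the constant in that regime is precisely the content of Cramer's inequality (essentially the Plancherel--Rotach/Airy analysis at the edge of the oscillatory region), and it is not the ``elementary means'' your last paragraph anticipates.
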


Next, we will extend the above definitions and observations to the high dimensional case. To simplify the discussion, we define multi-index notation. A multi-index $\alpha = (\alpha_1,\alpha_2, \cdots, \alpha_d)$ is a $d$-tuple of nonnegative integers, playing the role of a multi-dimensional index. For any multi-index $\alpha \in \R^d$ and any $t\in \R^t$, we write
\begin{align*}
\alpha ! = ~ \prod_{i=1}^d (\alpha_i ! ), ~~~~
t^{\alpha} = ~ \prod_{i=1}^d t_i^{\alpha_i} , ~~~~
D^{\alpha} = ~ \partial_1^{\alpha_1} \partial_2^{\alpha_2} \cdots \partial_d^{\alpha_d} .
\end{align*}
where $\partial_i$ is the differentiatial operator with respect to the $i$-th coordinate in $\R^d$. For integer $p$, we say $\alpha \geq p$ if $\alpha_i \geq p, \forall i \in [d]$.

We can now define:
\begin{definition}[multi-dimensional Hermite polynomial]
We define function $\wt{H}_{\alpha} : \R^d \rightarrow \R$ as follows:
\begin{align*}
\wt{H}_{\alpha}(t) = \prod_{i=1}^d \wt{h}_{\alpha_i} (t_i) .
\end{align*}
\end{definition}

\begin{definition}[multi-dimensional Hermite function]
We define function $H_{\alpha} : \R^d \rightarrow \R$ as follows:
\begin{align*}
H_{\alpha}(t) = \prod_{i=1}^d h_{\alpha_i}(t_i). 
\end{align*}
It is easy to see that $ H_{\alpha}(t) = e^{ - \| t \|_2^2 } \cdot \wt{H}_{\alpha}(t)$
\end{definition}

The Hermite expansion of a Gaussian in $\R^d$ is
\begin{align}\label{eq:hermite_expansion_of_gaussian}
e^{- \| t - s \|_2^2 } = \sum_{ \alpha \geq 0 } \frac{ ( t - s_0 )^{\alpha} }{ \alpha ! } h_{\alpha} ( s - s_0 ).
\end{align}
Cramer's inequality generalizes to
\begin{lemma}[Cramer's inequality]\label{lem:cramer_inequality}
Let $K<(1.09)^d$, then
\begin{align*}
 | \wt{H}_{\alpha} (t) | \leq K \cdot e^{ \| t \|_2^2 / 2 } \cdot 2^{ \| \alpha \|_1 /2 } \cdot \sqrt{ \alpha ! } 
\end{align*}
and 
\begin{align*}
 | H_{\alpha} (t) | \leq K \cdot e^{ - \| t \|_2^2 / 2 } \cdot 2^{ \| \alpha \|_1 /2 } \cdot \sqrt{ \alpha ! } 
\end{align*}
\end{lemma}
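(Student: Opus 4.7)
The plan is to reduce the multi-dimensional inequality directly to the one-dimensional Cramer inequality that has already been stated in the excerpt. By the definition $\widetilde{H}_\alpha(t) = \prod_{i=1}^d \widetilde{h}_{\alpha_i}(t_i)$, the absolute value factorizes as $|\widetilde{H}_\alpha(t)| = \prod_{i=1}^d |\widetilde{h}_{\alpha_i}(t_i)|$, so I would simply apply the stated one-dimensional bound $|\widetilde{h}_n(t)| \leq K' \cdot 2^{n/2} \cdot \sqrt{n!} \cdot e^{t^2/2}$ with $K' \leq 1.09$ to each factor and multiply. Collecting the resulting exponents gives $(1.09)^d$ out front, $2^{(\alpha_1+\cdots+\alpha_d)/2} = 2^{\|\alpha\|_1/2}$ from the powers of two, $\sqrt{\alpha_1!\cdots\alpha_d!} = \sqrt{\alpha!}$ from the factorial factors, and $e^{(t_1^2+\cdots+t_d^2)/2} = e^{\|t\|_2^2/2}$ from the Gaussian factors. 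This yields the first inequality with $K = (1.09)^d$.

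For the second inequality, I would use the identity $H_\alpha(t) = e^{-\|t\|_2^2}\cdot \widetilde{H}_\alpha(t)$, noted in the definition of the multi-dimensional Hermite function. Taking absolute values and substituting the first bound gives
\begin{align*}
|H_\alpha(t)| = e^{-\|t\|_2^2} \cdot |\widetilde{H}_\alpha(t)| \leq (1.09)^d \cdot e^{-\|t\|_2^2} \cdot e^{\|t\|_2^2/2} \cdot 2^{\|\alpha\|_1/2} \cdot \sqrt{\alpha!},
\end{align*}
and the two exponentials collapse to $e^{-\|t\|_2^2/2}$, yielding the second inequality with the same constant $K$.

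There is no real obstacle here: the proof is a one-line product of the already-stated one-dimensional Cramer inequality, with the constant $(1.09)^d$ arising simply as the $d$-fold product of the one-dimensional constant. The only thing worth stating cleanly is that the same $K$ works for both inequalities because the extra factor $e^{-\|t\|_2^2}$ in $H_\alpha$ versus $\widetilde{H}_\alpha$ exactly cancels half of the Gaussian growth bound, flipping the sign in the exponent without affecting the polynomial part.
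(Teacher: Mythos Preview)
Your proposal is correct and matches the paper's approach: the paper does not give an explicit proof but simply states that the one-dimensional Cramer inequality ``generalizes to'' this lemma, and your argument—taking the product of the one-dimensional bounds over the $d$ coordinates and then using $H_\alpha(t) = e^{-\|t\|_2^2}\widetilde{H}_\alpha(t)$—is precisely that generalization spelled out.
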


The Taylor series of $H_{\alpha}$ is
\begin{align}\label{eq:taylor_series_of_H_t}
H_{\alpha}(t) = \sum_{\beta \geq 0} \frac{ (t-t_0)^{\beta} }{ \beta ! } (-1)^{\| \beta \|_1} H_{\alpha + \beta} (t_0)
\end{align}

\subsubsection{Estimation}

We first give a definition
\begin{definition}\label{def:G_t}
Let ${\cal B}$ denote a box with center $s_{\cal B}$ and side length $r \sqrt{2\delta}$ with $r < 1$.
If source $s_j$ is in box ${\cal B}$, we say $j \in {\cal B}$. Then the Gaussian evaluation from the sources in box ${\cal B}$ is,
\begin{align*}
G(t) = \sum_{j \in {\cal B}} q_j \cdot e^{ - \| t - s_j \|_2^2 /\delta  }.  
\end{align*}
The Hermite expansion of $G(t)$ is 
\begin{align}\label{eq:hermite_expansion_of_G_t}
G(t) = \sum_{\alpha \geq 0} A_{\alpha} \cdot h_{\alpha} \left( \frac{ t - s_{\cal B} }{ \sqrt{\delta} } \right),  
\end{align}
where the coefficients $A_{\alpha}$ are defined by
\begin{align}\label{eq:def_A_alpha}
A_{\alpha} = \frac{1}{\alpha !} \sum_{j \in {\cal B}} q_j \cdot \left( \frac{ s_j - s_{\cal B} }{ \sqrt{\delta} } \right)^{\alpha} 
\end{align}
\end{definition}

The rest of this section will present a batch of Lemmas that bound the error of the function truncated at certain degree of Taylor and Hermite expansion.
\begin{lemma}\label{lem:fast_gaussian_lemma_1}
Let $p$ denote an integer, let $\Err_H(p)$ denote the error after truncating the series $G(t)$ (as defined in Def.~\ref{def:G_t}) after $p^d$ terms, i.e.,
\begin{align*}
\Err_H(p) = \sum_{\alpha \geq p} A_{\alpha} \cdot H_{\alpha} \left( \frac{t - s_{\cal B}}{ \sqrt{\delta} } \right).
\end{align*}
 Then we have
\begin{align*}
| \Err_H(p) | \leq K \cdot \sum_{j \in {\cal B}} |q_j| \cdot \left( \frac{1}{p !} \right)^{d/2} \cdot \left( \frac{ r^{ p + 1 } }{ 1 - r } \right)^d,
\end{align*}
where $K = (1.09)^d$.
\end{lemma}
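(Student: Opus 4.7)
The plan is to perform a direct substitution followed by careful bookkeeping of factorials and geometric factors. First, I would substitute the definition of the coefficients $A_\alpha$ from \eqref{eq:def_A_alpha} into $\Err_H(p)$, interchange the finite summation over $j \in {\cal B}$ with the infinite summation over $\alpha$, and apply the triangle inequality to obtain
\begin{align*}
| \Err_H(p) | \leq \sum_{j \in {\cal B}} |q_j| \sum_{\alpha \geq p} \frac{1}{\alpha!} \left| \frac{s_j - s_{\cal B}}{\sqrt{\delta}} \right|^\alpha \left| H_\alpha\!\left( \frac{t - s_{\cal B}}{\sqrt{\delta}} \right) \right|.
\end{align*}

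Next, I would apply Cramer's inequality (Lemma~\ref{lem:cramer_inequality}) to bound $|H_\alpha(u)|$ by $K \cdot e^{-\|u\|_2^2/2} \cdot 2^{\|\alpha\|_1/2} \cdot \sqrt{\alpha!}$, use the trivial bound $e^{-\|u\|^2/2} \leq 1$, and combine with the geometric constraint that each coordinate of $(s_j - s_{\cal B})/\sqrt{\delta}$ has absolute value at most $r/\sqrt{2}$ (since ${\cal B}$ is a box of side length $r\sqrt{2\delta}$ centered at $s_{\cal B}$). This gives $\left|(s_j - s_{\cal B})/\sqrt{\delta}\right|^\alpha \leq (r/\sqrt{2})^{\|\alpha\|_1}$, so the $2^{\|\alpha\|_1/2}$ from Cramer's bound and the $2^{-\|\alpha\|_1/2}$ from the coordinatewise bound cancel. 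Using $\sqrt{\alpha!}/\alpha! = 1/\sqrt{\alpha!}$, the inner sum reduces to
\begin{align*}
|\Err_H(p)| \leq K \sum_{j \in {\cal B}} |q_j| \sum_{\alpha \geq p} \frac{r^{\|\alpha\|_1}}{\sqrt{\alpha!}}.
\end{align*}

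Finally, because $\alpha \geq p$ means $\alpha_i \geq p$ coordinatewise, the multi-index tail factors as a product of one-dimensional tails,
\begin{align*}
\sum_{\alpha \geq p} \frac{r^{\|\alpha\|_1}}{\sqrt{\alpha!}} = \prod_{i=1}^d \left( \sum_{n \geq p} \frac{r^n}{\sqrt{n!}} \right),
\end{align*}
and each one-dimensional tail is controlled by factoring out the smallest factorial term and applying a geometric series bound of the form $\sum_{n \geq p} r^n/\sqrt{n!} \lesssim \frac{1}{\sqrt{p!}} \cdot \frac{r^{p+1}}{1-r}$. Raising to the $d$-th power and multiplying by $K$ then yields the claimed bound. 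The main obstacle is the careful matching of the precise form $(r^{p+1}/(1-r))^d \cdot (1/p!)^{d/2}$: the naive calculation produces $r^p$ rather than $r^{p+1}$ in the numerator, so one must either reindex the tail (if the convention $\alpha \geq p$ is interpreted as $\alpha_i > p$) or use a slightly sharper one-dimensional tail estimate that absorbs one factor of $r$ from the geometric sum. Everything else is a routine manipulation.
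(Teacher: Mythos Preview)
Your proposal is correct and follows essentially the same approach as the paper: swap the sums over $j$ and $\alpha$, apply Cramer's inequality (Lemma~\ref{lem:cramer_inequality}) together with the coordinatewise bound coming from the box side length, and finish with a geometric-series tail estimate. The paper's own proof is terser (it only names these three steps), and your write-up simply fills in the details the paper omits; your observation about the $r^p$ versus $r^{p+1}$ discrepancy is a genuine off-by-one ambiguity in the stated bound, not a flaw in your argument.
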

\begin{proof}
Using Eq.~\eqref{eq:hermite_expansion_of_gaussian} to expand each Gaussian (see Definition~\ref{def:G_t}) in the 
\begin{align*}
G(t) = \sum_{j \in {\cal B}} q_j \cdot e^{ - \| t - s_j \|_2^2 / \delta }
\end{align*}
into a Hermite series about $s_{\cal B}$
\begin{align*}
\sum_{\alpha \geq 0} \left( \frac{1}{\alpha !} \sum_{j \in {\cal B}} q_j \cdot \left( \frac{ s_j - s_{\cal B} }{ \sqrt{\delta} } \right)^{\alpha} \right) H_{\alpha} \left( \frac{t - s_{\cal B}}{ \sqrt{\delta} } \right)
\end{align*}
and swap the summation over $\alpha$ and $j$ to obtain
\begin{align*}
 \sum_{j \in {\cal B}} q_j \sum_{\alpha \geq 0}  \frac{1}{\alpha !} \cdot \left( \frac{ s_j - s_{\cal B} }{ \sqrt{\delta} } \right)^{\alpha} \cdot H_{\alpha} \left( \frac{t - s_{\cal B}}{ \sqrt{\delta} } \right)
\end{align*}
The truncation error bound follows from Cramer's inequality (Lemma~\ref{lem:cramer_inequality}) and the formula for the tail of a geometric series.

\end{proof}

The next Lemma shows how to convert a Hermite expansion about $s_{\cal B}$ into a Taylor expansion about $t_{\cal C}$. The Taylor series converges rapidly in a box of side length $r \sqrt{2 \delta}$ about $t_{\cal C}$, where $r < 1$.
\begin{lemma}\label{lem:fast_gaussian_lemma_2}
The Hermite expansion of $G(t)$ is
\begin{align*}
G(t) = \sum_{\alpha \geq 0} A_{\alpha} \cdot H_{\alpha} \left( \frac{ t - s_{\cal B} }{ \sqrt{\delta} } \right)
\end{align*}
has the following Taylor expansion, at an arbitrary point $t_0$ :
\begin{align}\label{eq:taylor_expansion_of_G_t}
G(t) = \sum_{\beta \geq 0} B_{\beta} \left( \frac{ t - t_0 }{ \sqrt{\delta} } \right)^{\beta} .
\end{align}
where the coefficients $B_{\beta}$ are defined as 
\begin{align}\label{eq:def_B_beta}
B_{\beta} = \frac{ (-1)^{ | \beta | } }{ \beta ! } \sum_{\alpha \geq 0} A_{\alpha} \cdot H_{\alpha + \beta} \left( \frac{ s_{\cal B} - t_0 }{ \sqrt{\delta} } \right).
\end{align}
Let $\Err_T(p)$ denote the error by truncating the Taylor series after $p^d$ terms, in the box ${\cal C}$ with center $t_{\cal C}$ and side length $r \sqrt{2\delta}$, i.e.,
\begin{align*}
\Err_T(p) = \sum_{\beta \geq p} B_{\beta} \left( \frac{ t - t_{\cal C} }{ \sqrt{\delta} } \right)^{\beta}
\end{align*}
Then 
\begin{align*} 
| \Err_T(p) | \leq K \cdot Q_B \cdot \left( \frac{1}{p!} \right)^{d/2} \left( \frac{ r^{p+1} }{1-r} \right)^d .
\end{align*}
\end{lemma}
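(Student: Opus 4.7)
The plan is to prove Lemma~\ref{lem:fast_gaussian_lemma_2} in two parts, mirroring the structure of Lemma~\ref{lem:fast_gaussian_lemma_1}: first derive the Taylor expansion formula by substitution and reindexing, then bound the tail via Cramer's inequality and a geometric series.

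\textbf{Step 1: Deriving the Taylor coefficients.} I would start from the Hermite expansion \eqref{eq:hermite_expansion_of_G_t} of $G(t)$. The key identity is the Taylor expansion of the Hermite functions themselves, already recorded as Eq.~\eqref{eq:taylor_series_of_H_t}: for any $t_0 \in \R^d$,
\begin{align*}
H_{\alpha}\!\left(\frac{t-s_{\cal B}}{\sqrt{\delta}}\right) = \sum_{\beta \geq 0} \frac{1}{\beta!}\left(\frac{t-t_0}{\sqrt{\delta}}\right)^{\beta} (-1)^{|\beta|} H_{\alpha+\beta}\!\left(\frac{t_0-s_{\cal B}}{\sqrt{\delta}}\right),
\end{align*}
which follows by applying \eqref{eq:taylor_series_of_H_t} coordinate-wise with argument $(t-s_{\cal B})/\sqrt{\delta}$ and Taylor center $(t_0-s_{\cal B})/\sqrt{\delta}$. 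Substituting this into \eqref{eq:hermite_expansion_of_G_t}, swapping the two sums (justified below by absolute convergence), and collecting terms with the same $\beta$ produces exactly \eqref{eq:taylor_expansion_of_G_t} with $B_{\beta}$ given by \eqref{eq:def_B_beta}. Note that $H_{\alpha+\beta}(\cdot) = H_{\alpha+\beta}(-\cdot)$ times a sign that is absorbed into $(-1)^{|\beta|}$, so the sign $s_{\cal B}-t_0$ vs $t_0-s_{\cal B}$ in the formula is handled by the parity of Hermite functions (this is a one-line check I would include for completeness).

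\textbf{Step 2: Bounding the tail.} For $t$ in the box ${\cal C}$ of side $r\sqrt{2\delta}$ centered at $t_{\cal C}$, each coordinate of $(t-t_{\cal C})/\sqrt{\delta}$ has absolute value at most $r/\sqrt{2}$, so $\bigl|\bigl((t-t_{\cal C})/\sqrt{\delta}\bigr)^{\beta}\bigr| \leq (r/\sqrt{2})^{|\beta|}$. For the coefficient, I apply Cramer's inequality (Lemma~\ref{lem:cramer_inequality}) to $H_{\alpha+\beta}$, then use $\sqrt{(\alpha+\beta)!} \leq \sqrt{\alpha!}\sqrt{\beta!}\cdot 2^{|\alpha+\beta|/2}$ (a standard multinomial-type bound), and $|A_{\alpha}| \leq Q_B \cdot \|\cdot\|^{\alpha}/\alpha!$ from \eqref{eq:def_A_alpha} where $Q_B = \sum_{j \in {\cal B}} |q_j|$, together with $|(s_j - s_{\cal B})/\sqrt{\delta}|_{\text{coord}} \leq r/\sqrt{2}$. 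Putting these together yields $|B_{\beta}| \leq K\,Q_B\,(\beta!)^{-1/2}\cdot (\text{convergent factor over }\alpha)\cdot\text{(constant)}^{|\beta|}$, after which summing $\sum_{\beta \geq p}$ factorizes across coordinates into $d$ one-dimensional geometric tails $\sum_{n \geq p} (1/\sqrt{n!})\, r^n \leq (1/\sqrt{p!})\cdot r^{p+1}/(1-r)$. The product over $d$ coordinates produces the claimed $K \cdot Q_B \cdot (1/p!)^{d/2} \cdot (r^{p+1}/(1-r))^d$ bound.

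\textbf{Main obstacle.} The calculation itself is routine algebra, but the main technical care is twofold: (i) justifying the interchange of the two infinite sums in Step~1 — here I would verify absolute convergence by bounding the double sum using the Cramer estimates of Step~2 applied with $\beta=0$, which shows the unordered sum is finite; and (ii) keeping the combinatorial factors tight enough that the exponent $d/2$ on $1/p!$ survives. The temptation is to use crude bounds like $(\alpha+\beta)! \leq 2^{|\alpha+\beta|}\alpha!\beta!$ which would lose the $\sqrt{\cdot}$, so one has to use $\sqrt{(\alpha+\beta)!}$ directly and rely on factorizing across coordinates before estimating, so that in each of the $d$ coordinates we pick up one factor of $1/\sqrt{p!}$ independently. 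Once this factorization is performed, the one-dimensional tail estimate and the final product yield the stated bound cleanly.
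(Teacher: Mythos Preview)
Your Step~1 matches the paper exactly: expand each $H_{\alpha}$ via Eq.~\eqref{eq:taylor_series_of_H_t} and swap the order of summation. (The sign/parity issue you flag is indeed a one-line Hermite-parity check and is harmless for the error bound.)

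Your Step~2, however, takes a different and less sharp route than the paper. You bound $|B_{\beta}|$ term-by-term: apply Cramer to $H_{\alpha+\beta}$, use $\sqrt{(\alpha+\beta)!}\le 2^{|\alpha+\beta|/2}\sqrt{\alpha!\beta!}$, and then sum over $\alpha$. This works, but the $\alpha$-sum leaves behind an extra $d$-dependent constant (your ``convergent factor over $\alpha$''), so the bound you obtain is the stated one only up to replacing $K=(1.09)^d$ by some larger $K'^d$. The paper instead performs one further simplification that makes the $\alpha$-sum vanish exactly: substitute the definition \eqref{eq:def_A_alpha} of $A_{\alpha}$ into \eqref{eq:def_B_beta} and swap the sums over $\alpha$ and $j$, obtaining
\[
B_{\beta}=\frac{(-1)^{|\beta|}}{\beta!}\sum_{j\in{\cal B}} q_j \sum_{\alpha\ge 0}\frac{1}{\alpha!}\Bigl(\frac{s_j-s_{\cal B}}{\sqrt{\delta}}\Bigr)^{\alpha} H_{\alpha+\beta}\Bigl(\frac{s_{\cal B}-t_{\cal C}}{\sqrt{\delta}}\Bigr).
\]
The inner sum is now recognized, via Eq.~\eqref{eq:taylor_series_of_H_t}, as the Taylor expansion of $H_{\beta}\bigl((s_j-t_{\cal C})/\sqrt{\delta}\bigr)$ about $s_{\cal B}$, giving the closed form
\[
B_{\beta}=\frac{(-1)^{|\beta|}}{\beta!}\sum_{j\in{\cal B}} q_j\, H_{\beta}\Bigl(\frac{s_j-t_{\cal C}}{\sqrt{\delta}}\Bigr).
\]
A single application of Cramer to $H_{\beta}$ then yields $|B_{\beta}|\le K\,Q_B\,2^{|\beta|/2}/\sqrt{\beta!}$ with no residual $\alpha$-sum, and the geometric tail over $\beta$ gives the stated bound with the stated $K$. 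So your approach is correct in spirit and gets the right $(1/p!)^{d/2}(r^{p+1}/(1-r))^d$ shape, but the paper's re-summation trick is what pins down the constant and avoids the combinatorial inequality on $(\alpha+\beta)!$ altogether.
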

\begin{proof}
Each Hermite function in Eq.~\eqref{eq:hermite_expansion_of_G_t} can be expanded into a Taylor series by means of Eq.~\eqref{eq:taylor_series_of_H_t}. The expansion in Eq.~\eqref{eq:taylor_expansion_of_G_t} is obtained by swapping the order of summation. 

The truncation error bound can be proved as follows. Using Eq.~\eqref{eq:def_A_alpha} for $A_{\alpha}$, we can rewrite $B_{\beta}$:
\begin{align*}
B_{\beta} = & ~ \frac{ (-1)^{|\beta|} }{ \beta ! } \sum_{\alpha \geq 0} A_{\alpha} H_{\alpha + \beta} \left( \frac{ s_{\cal B} - t_{\cal C} }{ \sqrt{\delta} } \right) \\
= & ~ \frac{ (-1)^{|\beta|} }{ \beta ! } \sum_{\alpha \geq 0} \left( \frac{1}{ \alpha ! } \sum_{j \in {\cal B} } q_j \left( \frac{ s_j - s_{\cal B} }{ \sqrt{\delta} } \right)^{\alpha} \right) H_{\alpha + \beta} \left( \frac{ s_{\cal B} - t_{\cal C} }{ \sqrt{\delta} } \right) \\
= & ~ \frac{ (-1)^{|\beta|} }{ \beta ! } \sum_{j \in {\cal B} } q_j \sum_{\alpha \geq 0 } \frac{1}{ \alpha !} \left( \frac{s_j - s_{\cal B} }{ \sqrt{\delta} } \right)^{\alpha} \cdot H_{\alpha + \beta} \left( \frac{ s_{\cal B} - t_{\cal C} }{ \sqrt{\delta} } \right)
\end{align*}
By Eq.~\eqref{eq:taylor_series_of_H_t}, the inner sum is the Taylor expansion of $H_{\beta} ( (s_j - t_{\cal C}) / \sqrt{\delta} )$. Thus
\begin{align*}
B_{\beta} = \frac{ (-1)^{\|\beta\|_1} }{ \beta ! } \sum_{j \in {\cal B} } q_j \cdot H_{\beta} \left( \frac{ s_j - t_{\cal C} }{ \sqrt{\delta} } \right)
\end{align*}
and Cramer's inequality implies
\begin{align*}
| B_{\beta} | \leq \frac{1}{\beta !} K \cdot Q_B 2^{\|\beta\|_1/2} \sqrt{\beta !} \leq K Q_B \frac{ 2^{ \| \beta \|_1 / 2 } }{ \sqrt{\beta !} }
\end{align*}
The truncation error follows from summation the tail of a geometric series.
\end{proof}

For the purpose of designing our algorithm, we'd like to make a variant of Lemma~\ref{lem:fast_gaussian_lemma_2} in which the Hermite series is truncated before converting it to a Taylor series. This means that in addition to truncating the Taylor series itself, we are also truncating the finite sum formula in Eq.~\eqref{eq:def_B_beta} for the coefficients. 

\begin{lemma}\label{lem:fast_gaussian_lemma_3}
Let $G(t)$ be defined as Def~\ref{def:G_t}. For an integer $p$, let $G_{p}(t)$ denote the Hermite expansion of $G(t)$ truncated at $p$,
\begin{align*}
G_p(t) = \sum_{\alpha \leq p} A_{\alpha} H_{\alpha} \left( \frac{ t - s_{\cal B} }{ \sqrt{\delta} } \right).
\end{align*}
The function $G_p(t)$ has the following Taylor expansion about an arbitrary point $t_0$:
\begin{align*}
G_p(t) = \sum_{\beta \geq 0} C_{\beta} \cdot \left( \frac{ t - t_0 }{ \sqrt{\delta} } \right)^{\beta},
\end{align*}
where the the coefficients $C_{\beta}$ are defined as
\begin{align}\label{eq:def_C_beta}
C_{\beta}= \frac{ (-1)^{\| \beta \|_1} }{ \beta ! } \sum_{\alpha \leq p} A_{\alpha} \cdot H_{\alpha + \beta} \left( \frac{s_{\cal B} - t_{\cal C} }{ \sqrt{\delta} } \right).
\end{align}
Let $\Err_T(p)$ denote the error in truncating the Taylor series after $p^d$ terms, in the box ${\cal C}$ with center $t_{\cal C}$ and side length $r \sqrt{2\delta} $, i.e.,
\begin{align*}
\Err_T(p ) = \sum_{\beta \geq p} C_{\beta} \left( \frac{ t - t_{\cal C} }{ \sqrt{\delta} } \right)^{\beta}.
\end{align*}
Then, we have
\begin{align*}
| \Err_T(p) | \leq K' \cdot Q_B \left( \frac{1}{p !} \right)^{d/2} \left( \frac{r^{p+1}}{1-r} \right)^d
\end{align*}
where $K' \leq 2K$ and $r \leq 1/2$. 
\end{lemma}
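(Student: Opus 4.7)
The plan is to adapt the proof of Lemma~\ref{lem:fast_gaussian_lemma_2} to handle the fact that here the Hermite series is truncated before it is converted into a Taylor expansion. The target bound on $|\mathrm{Err}_T(p)|$ differs from the one in Lemma~\ref{lem:fast_gaussian_lemma_2} only by a constant factor (the passage from $K$ to $K'\le 2K$), so the same Cramer/geometric-series machinery should work if I can re-derive a clean closed form for the $C_\beta$.

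First I would substitute the formula \eqref{eq:def_A_alpha} for $A_\alpha$ into \eqref{eq:def_C_beta} and swap the order of summation to get
\begin{align*}
C_\beta = \frac{(-1)^{\|\beta\|_1}}{\beta!}\sum_{j\in\mathcal{B}} q_j \sum_{\alpha\le p} \frac{1}{\alpha!}\left(\frac{s_j-s_\mathcal{B}}{\sqrt\delta}\right)^{\!\alpha} H_{\alpha+\beta}\!\left(\frac{s_\mathcal{B}-t_\mathcal{C}}{\sqrt\delta}\right).
\end{align*}
By Eq.~\eqref{eq:taylor_series_of_H_t}, the inner $\alpha$-sum, if extended to all $\alpha\ge 0$, is exactly the Taylor expansion of $H_\beta\!\bigl((s_j-t_\mathcal{C})/\sqrt\delta\bigr)$ about the point $(s_\mathcal{B}-t_\mathcal{C})/\sqrt\delta$. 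In the proof of Lemma~\ref{lem:fast_gaussian_lemma_2} that identification gave $|B_\beta|\le K Q_B \cdot 2^{\|\beta\|_1/2}/\sqrt{\beta!}$; here I only need to verify that the same bound (with a slightly larger constant) holds for the \emph{partial} sum over $\alpha\le p$.

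The key estimate is therefore to control the tail of the Taylor expansion of $H_{\beta}$. Using Cramer's inequality (Lemma~\ref{lem:cramer_inequality}) to bound $|H_{\alpha+\beta}((s_\mathcal{B}-t_\mathcal{C})/\sqrt\delta)|\le K\, 2^{(\|\alpha\|_1+\|\beta\|_1)/2}\sqrt{(\alpha+\beta)!}\,e^{-\|s_\mathcal{B}-t_\mathcal{C}\|_2^2/(2\delta)}$ and using $|(s_j-s_\mathcal{B})/\sqrt\delta|\le r/\sqrt 2$ componentwise (since $s_j\in\mathcal{B}$ and ${\cal B}$ has side $r\sqrt{2\delta}$), the $\alpha$-th term is bounded by a product of geometric factors in $r$. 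Summing all $\alpha\le p$ and comparing against the full tail $\sum_{\alpha\ge 0}$ shows that the partial Taylor sum differs from $H_\beta\bigl((s_j-t_\mathcal{C})/\sqrt\delta\bigr)$ by at most a factor depending only on $r<1$, and hence is bounded by a multiple of what appears in the proof of Lemma~\ref{lem:fast_gaussian_lemma_2}; for $r\le 1/2$ this multiplicative factor is at most $2$, giving $|C_\beta|\le K'\,Q_B\, 2^{\|\beta\|_1/2}/\sqrt{\beta!}$ with $K'\le 2K$.

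Finally, I would insert this coefficient bound into $\mathrm{Err}_T(p)=\sum_{\beta\ge p} C_\beta\bigl((t-t_\mathcal{C})/\sqrt\delta\bigr)^\beta$, use $|(t-t_\mathcal{C})/\sqrt\delta|\le r/\sqrt 2$ componentwise (since $t\in\mathcal{C}$), and factor the multi-index sum into $d$ copies of a one-dimensional geometric tail $\sum_{k\ge p} r^k/\sqrt{k!}\le (1/p!)^{1/2}\cdot r^{p+1}/(1-r)$. Multiplying the $d$ one-dimensional bounds gives precisely the claimed $K'\cdot Q_B\cdot (1/p!)^{d/2}\cdot(r^{p+1}/(1-r))^d$. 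The main obstacle I anticipate is the careful bookkeeping of the exponential factor $e^{-\|s_\mathcal{B}-t_\mathcal{C}\|_2^2/(2\delta)}$ from Cramer's inequality and the multi-index combinatorics needed to verify that truncating at $\alpha\le p$ only worsens the constant by a factor of $2$ uniformly in $\beta$; everything else is a direct generalization of the argument used for Lemma~\ref{lem:fast_gaussian_lemma_2}.
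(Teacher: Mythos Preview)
Your overall strategy—write $C_\beta$ in terms of the sources, compare to the full-sum coefficient $B_\beta$, and then bound the tail in $\beta$—is the same decomposition the paper uses. But there is a genuine gap in the middle step.

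You assert that the truncated inner sum $\sum_{\alpha\le p}\frac{1}{\alpha!}\bigl(\tfrac{s_j-s_{\mathcal B}}{\sqrt\delta}\bigr)^{\alpha}H_{\alpha+\beta}\bigl(\tfrac{s_{\mathcal B}-t_{\mathcal C}}{\sqrt\delta}\bigr)$ differs from $H_\beta\bigl(\tfrac{s_j-t_{\mathcal C}}{\sqrt\delta}\bigr)$ only by a multiplicative factor $\le 2$ uniformly in $\beta$, and hence that $|C_\beta|\le K'\,Q_B\,2^{\|\beta\|_1/2}/\sqrt{\beta!}$ with $K'\le 2K$. This uniform coefficient bound is false. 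If you bound the discarded tail $\sum_{\alpha>p}(\cdots)$ termwise via Cramer, the $\alpha$-th term carries $\sqrt{(\alpha+\beta)!}$; after dividing by $\beta!$ you are left with $\sqrt{(\alpha+\beta)!/\beta!}$, which for fixed $\alpha$ grows polynomially in the $\beta_i$'s (already in one dimension with $\alpha=1$ it is $\sqrt{\beta+1}$). So $|C_\beta-B_\beta|$ does \emph{not} obey the same $2^{\|\beta\|_1/2}/\sqrt{\beta!}$ decay as $|B_\beta|$, and your ``factor at most $2$'' cannot be made uniform in $\beta$.

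The paper sidesteps this by never bounding $|C_\beta|$ on its own. It writes $C_\beta=B_\beta+(C_\beta-B_\beta)$ and splits the \emph{error} itself:
\[
|\mathrm{Err}_T(p)|\ \le\ \Bigl|\sum_{\beta\ge p}B_\beta\bigl(\tfrac{t-t_{\mathcal C}}{\sqrt\delta}\bigr)^{\beta}\Bigr|\ +\ \Bigl|\sum_{\beta\ge p}(C_\beta-B_\beta)\bigl(\tfrac{t-t_{\mathcal C}}{\sqrt\delta}\bigr)^{\beta}\Bigr|.
\]
The first piece is exactly the bound from Lemma~\ref{lem:fast_gaussian_lemma_2}. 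The second piece is handled as a \emph{double} sum over $\alpha>p$ and $\beta\ge p$ simultaneously: after Cramer, one uses $\binom{\alpha+\beta}{\alpha}\le 2^{\|\alpha+\beta\|_1}$ to get a summand of the form $\frac{r^{\|\alpha\|_1}}{\sqrt{\alpha!}}\cdot\frac{r^{\|\beta\|_1}}{\sqrt{\beta!}}$ (with $r$ replaced by $\sqrt 2\,r\le 1$), which factors and gives a second copy of the geometric/$(p!)^{-d/2}$ tail. The point is that the extra $\beta$-growth you would see in $|C_\beta-B_\beta|$ alone is absorbed by the $(r/\sqrt 2)^{\|\beta\|_1}$ coming from $\bigl(\tfrac{t-t_{\mathcal C}}{\sqrt\delta}\bigr)^{\beta}$, which is only available once you are summing the error, not when you are bounding a single coefficient. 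Reorganize your argument accordingly: bound the two error pieces, not the coefficients.
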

\begin{proof}

We can write $C_{\beta}$ in the following way:
\begin{align*}
C_{\beta} = & ~ \frac{ (-1)^{ \| \beta \|_1 } }{ \beta ! } \sum_{j \in {\cal B}} q_j \sum_{\alpha \leq p} \frac{1}{ \alpha ! } \left( \frac{ s_j - s_{\cal B} }{ \sqrt{\delta} } \right)^{\alpha} \cdot H_{\alpha + \beta} \left( \frac{ s_{\cal B} - t_{\cal C} }{ \sqrt{\delta} } \right) \\
= & ~ \frac{ (-1)^{\| \beta \|_1} }{ \beta ! } \sum_{j \in {\cal B}} q_j \left( \sum_{\alpha \geq 0} - \sum_{\alpha > p} \right) \frac{1}{ \alpha ! } \left( \frac{ s_j - s_{\cal B} }{ \sqrt{\delta} } \right)^{\alpha} \cdot H_{\alpha + \beta} \left( \frac{ s_{\cal B} - t_{\cal C} }{ \sqrt{\delta} } \right) \\
= & ~ B_{\beta} - \frac{ (-1)^{ \| \beta \|_1 } }{ \beta ! } \sum_{j \in {\cal B}} q_j \sum_{\alpha > p} \frac{1}{ \alpha ! } \left( \frac{ s_j - s_{\cal B} }{ \sqrt{\delta} } \right)^{\alpha} \cdot H_{\alpha + \beta} \left( \frac{ s_{\cal B} - t_{\cal C} }{ \sqrt{\delta} } \right) \\
= & ~ B_{\beta} + (C_{\beta} - B_{\beta})
\end{align*}
Next, we have
\begin{align}\label{eq:split_Err_T_p_into_two_terms}
| \Err_T(p) | \leq \left| \sum_{\beta \geq p} B_{\beta} \left( \frac{ t - t_{\cal C} }{ \sqrt{\delta} } \right)^{\beta} \right| + \left| \sum_{\beta \geq p} (C_{\beta} - B_{\beta}) \cdot \left( \frac{ t - t_{\cal C} }{ \sqrt{\delta} } \right)^{\beta} \right|
\end{align}
Using Lemma~\ref{lem:fast_gaussian_lemma_2}, we can upper bound the first term in the Eq.~\eqref{eq:split_Err_T_p_into_two_terms} by,
\begin{align*}
K \cdot Q_B \left( \frac{1}{ p ! } \right)^{d/2} \cdot \left( \frac{ r^{p+1} }{1-r} \right)^d
\end{align*}
To bound the second term in Eq.~\eqref{eq:split_Err_T_p_into_two_terms}, we can do the following
\begin{align*}
 & ~ \left| \sum_{\beta \geq p} (C_{\beta} - B_{\beta}) \cdot \left( \frac{ t - t_{\cal C} }{ \sqrt{\delta} } \right)^{\beta} \right| \\
\leq & ~ Q_B \cdot \sum_{\beta \geq p} \left| \Big( \frac{ t - t_{\cal C} }{ \sqrt{\delta} } \Big)^{\beta} \right| \cdot \frac{1}{\beta !} \sum_{\alpha > p} \frac{1}{\alpha !} \left| \Big( \frac{ s_j - s_{\cal B} }{ \sqrt{\delta} } \Big)^{\alpha} \right| \cdot \left| H_{\alpha + \beta} \left( \frac{s_{\cal B} -  t_{\cal C} } { \sqrt{\delta} } \right) \right| \\
\leq & ~ K Q_{B} \sum_{\alpha > p} \sum_{\beta > p} \frac{ r^{\| \alpha \|_1 } }{  \sqrt{ \alpha ! }} \cdot \sqrt{ \frac{ (\alpha +\beta) ! }{ \alpha ! \beta ! } } \cdot \frac{ r^{\| \beta \|_1 } }{  \sqrt{ \beta ! }}
\end{align*}
Finally, the proof is complete since we know that
\begin{align*}
\frac{ (\alpha + \beta) ! }{  \alpha ! \beta ! } \leq 2^{ \| \alpha + \beta \|_1 }.
\end{align*}
\end{proof}

The proof of the following Lemma is almost identical. We omit the details here.
\begin{lemma}\label{lem:fast_gaussian_lemma_4}
Let $G_{s_j} (t)$ be defined as
\begin{align*}
G_{s_j}(t) = q_j \cdot e^{- \| t - s_j \|_2^2 / \delta}
\end{align*}
has the following Taylor expansion at $t_{\cal C}$
\begin{align*}
G_{s_j}(t) = \sum_{\beta \geq 0} {\cal B}_{\beta} \left( \frac{ t - t_{\cal C} }{ \sqrt{\delta} } \right)^{\beta},
\end{align*}
where the coefficients $B_{\beta}$ is defined as
\begin{align*}
B_{\beta} = q_j \cdot \frac{ (-1)^{ \| \beta\|_1 } }{ \beta ! } \cdot H_{\beta} \left( \frac{ s_j - t_{\cal C} }{ \sqrt{\delta} } \right)
\end{align*}
and the error in truncation after $p^d$ terms is
\begin{align*}
| \Err_T(p) | = \left| \sum_{\beta \geq p} B_{\beta} \left( \frac{t - t_{\cal C}}{ \sqrt{\delta}}  \right)^{\beta} \right| \leq K \cdot q_j \cdot \left( \frac{1}{p!} \right)^{d/2} \cdot \left( \frac{ r^{p+1} }{ 1 -r } \right)^d
\end{align*}
for $r < 1$.
\end{lemma}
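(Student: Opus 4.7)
The plan is to follow the same template as the proof of Lemma~\ref{lem:fast_gaussian_lemma_2}, specialized to the single-source case. Since $G_{s_j}(t) = q_j e^{-\|t-s_j\|_2^2/\delta}$ is just $q_j$ times one Gaussian, no outer sum over sources appears and the constant $Q_B$ is replaced by $q_j$.

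First I would derive the Taylor expansion. I would Taylor-expand $e^{-\|t-s_j\|_2^2/\delta}$ about $t_{\cal C}$ by using Eq.~\eqref{eq:hermite_expansion_of_gaussian} with the expansion center set to $t_{\cal C}$ and roles of source/target swapped, so that $(t-t_{\cal C})/\sqrt{\delta}$ appears in the polynomial factor and $(s_j - t_{\cal C})/\sqrt{\delta}$ appears in the Hermite function. Accounting for the sign flip $h_\beta(-x) = (-1)^{\|\beta\|_1} h_\beta(x)$ (equivalently, the fact that differentiating in $t$ versus in $s$ introduces a $(-1)^{\|\beta\|_1}$), one obtains the claimed coefficient
\begin{align*}
B_{\beta} = q_j \cdot \frac{(-1)^{\|\beta\|_1}}{\beta !} \cdot H_{\beta}\!\left( \frac{ s_j - t_{\cal C} }{ \sqrt{\delta} } \right).
\end{align*}

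Next, for the truncation error, I would apply Cramer's inequality (Lemma~\ref{lem:cramer_inequality}) to $H_{\beta}$, yielding $|H_\beta(x)| \leq K \cdot e^{-\|x\|_2^2/2} \cdot 2^{\|\beta\|_1/2} \sqrt{\beta!}$. Plugging this into the formula for $B_\beta$ and bounding $|(t-t_{\cal C})/\sqrt{\delta}|^\beta \leq r^{\|\beta\|_1}$ (using that $t\in{\cal C}$ lies in a box of side $r\sqrt{2\delta}$ centered at $t_{\cal C}$), I get
\begin{align*}
|\Err_T(p)| \leq K \, q_j \sum_{\beta \geq p} \frac{r^{\|\beta\|_1} \cdot 2^{\|\beta\|_1/2}}{\sqrt{\beta!}}.
\end{align*}
Factoring the sum across the $d$ coordinates (it is a $d$-fold product of one-dimensional tail sums), and using $\sqrt{\beta!} \geq \sqrt{p!}$ for each coordinate together with the geometric-series bound $\sum_{n \geq p+1} r^n = r^{p+1}/(1-r)$, gives the stated bound
\begin{align*}
|\Err_T(p)| \leq K \cdot q_j \cdot \left(\frac{1}{p!}\right)^{d/2} \left(\frac{r^{p+1}}{1-r}\right)^d.
\end{align*}

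There is no real obstacle here beyond bookkeeping: the main technical input (Cramer's inequality plus a geometric-series tail) is already in place, and unlike Lemma~\ref{lem:fast_gaussian_lemma_3} we do not have two nested truncations interacting, so no extra factor of $2$ on $K$ is needed. The only subtlety is getting the sign and the role-swap in Eq.~\eqref{eq:hermite_expansion_of_gaussian} right when moving the expansion center from the source side to the target side, which is why the $(-1)^{\|\beta\|_1}$ appears in $B_\beta$.
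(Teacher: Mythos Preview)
Your approach is exactly what the paper intends: it explicitly says ``The proof of the following Lemma is almost identical [to Lemma~\ref{lem:fast_gaussian_lemma_2}]. We omit the details here,'' and you have correctly specialized that argument to a single source, replacing $Q_B$ by $q_j$ and invoking Cramer's inequality plus a geometric tail.

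One small bookkeeping slip: in your intermediate display you bound the polynomial factor by $r^{\|\beta\|_1}$ and then still carry the $2^{\|\beta\|_1/2}$ from Cramer, which would leave an unwanted $(\sqrt{2})^{\|\beta\|_1}$ in the final answer. The fix is to use the sharper bound $\left|\frac{t-t_{\cal C}}{\sqrt{\delta}}\right|^{\beta} \le (r/\sqrt{2})^{\|\beta\|_1}$ (since the box has side $r\sqrt{2\delta}$, each coordinate satisfies $|t_i-(t_{\cal C})_i|/\sqrt{\delta}\le r/\sqrt{2}$); this exactly cancels the $2^{\|\beta\|_1/2}$ and yields the stated bound.
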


\subsubsection{Algorithm}

The algorithm is based on subdividing $B_0$ into smaller boxes with sides of length $r \sqrt{2\delta}$ parallel to the axes, for a fixed $r \leq 1/2$. We can then assign each source $s_j$ to the box ${\cal B}$ in which it lies and each target $t$, to the box ${\cal C}$ in which it lies. 

For each target box ${\cal C}$, we need to evaluate the total field due to sources in all boxes. Since boxed ${\cal B}$ have side lengths $r \sqrt{2\delta}$, only a fixed number of source boxes ${\cal B}$ can contribute more than $Q \epsilon$ to the field in a given target box ${\cal C}$, where $Q = \| q\|_1$ and $\epsilon$ is the precision parameter. If we cut off the sum over all ${\cal B}$ after including the $(2k+1)^d$ nearest boxes to ${\cal C}$, it incurs an error which can be upper bounded as follows 
\begin{align}\label{eq:box_error}
\sum_{j : \| t - s_j \|_{\infty} \geq k r \sqrt{2\delta} } |q_j| \cdot e^{-\| t - s_j \|_2^2 / \delta} 
\leq & ~ \sum_{j : \| t - s_j \|_{\infty} \geq k r \sqrt{2\delta} } |q_j| \cdot e^{-\| t - s_j \|_{\infty}^2 / \delta} \notag\\
\leq & ~  \sum_{j : \| t - s_j \|_{\infty} \geq k r \sqrt{2\delta}} |q_j| \cdot e^{ - ( k \cdot r \sqrt{2\delta} )^2 / \delta } \notag \\
\leq & ~ Q \cdot e^{ - 2r^2 k^2}
\end{align}
where the first step follows from $\| \cdot \|_2 \geq \| \cdot \|_{\infty}$, the second step follows from $\| t - s_j\|_{\infty} \geq k r \sqrt{2\delta}$, and the last step follows from a straightforward calculation.

For a box ${\cal B}$ and a box ${\cal C}$, there are several possible ways to evaluate the interaction between ${\cal B}$ and ${\cal C}$. We mainly need the following three techniques:
\begin{enumerate}
	%\item $N_B$ Gaussians, directly evaluated 
	\item $N_{\cal B}$ Gaussians, accumulated in Taylor series via definition $B_{\beta}$ in Lemma~\ref{lem:fast_gaussian_lemma_4}
	\item Hermite series, directly evaluated
	\item Hermite series, accumulated in Taylor series in Lemma~\ref{lem:fast_gaussian_lemma_3} % via (18) 
\end{enumerate}
Essentially, having any two of the above three techniques is sufficient to give an algorithm that runs in $(M+N) \log^{O(d)} (\| q\|_1 /\epsilon)$ time.

In the next a few paragraphs, we explain the details of the three techniques.

\paragraph*{Technique 1.} Consider a fixed source box ${\cal B}$. For each target box ${\cal C}$ within range, we must compute $p^d$ Taylor series coefficients
\begin{align*}
C_{\beta} ( {\cal B} ) = \frac{ (-1)^{|\beta|} }{ \beta ! } \sum_{ j \in {\cal B} } H_{\beta} \left( \frac{ s_j - t_C }{ \sqrt{\delta} } \right).
\end{align*}
Each coefficient requires $O(N_{\cal B})$ work to evaluate, resulting in a net cost $O(p^d N_{\cal B})$. Since there are at most $(2k+1)^d$ boxes within range, the total work for forming all the Taylor series is $O( (2k+1)^d p^d N )$. Now, for each target $t_i$, one must evaluate the $p^d$-term Taylor series corresponding to the box in which $t_i$ lies. The total running time of algorithm is thus
\begin{align*}
O( (2k+1)^d p^2 N ) + O(p^d M).
\end{align*}

\paragraph*{Technique 2.} We form a Hermite series for each box ${\cal B}$ and evaluate it at all targets. Using Lemma~\ref{lem:fast_gaussian_lemma_1}, we can rewrite $G(t)$ as
\begin{align*}
G(t) = & ~ \sum_{ {\cal B} } \sum_{ j \in {\cal B} } q_j \cdot e^{ - \| t - s_j \|_2^2 / \delta } \\ 
= & ~ \sum_{ {\cal B} } \sum_{ \alpha \geq 0 } A_{\alpha} (B) H_{\alpha} \left( \frac{ t - s_{\cal B} }{ \sqrt{\delta} } \right) + \Err_H(p)
\end{align*}
where $|\Err_H(p)| \leq \epsilon$ and
\begin{align}\label{def:A_alpha_cal_B}
A_{\alpha}  ( {\cal B} ) = \frac{1}{\alpha !} \sum_{j \in {\cal B}} q_j \cdot \left( \frac{ s_j - s_{\cal B} }{ \sqrt{\delta} } \right)^{\alpha}.
\end{align}
To compute each $A_{\alpha}( {\cal B})$ costs $O(N_{\cal B})$ time, so forming all the Hermite expansions takes $O(p^d N)$ time. Evaluating at most $(2k + 1)^d$ expansions at each target $t_i$ costs $O( (2k+1)^d p^d )$ time per target, so this approach takes
\begin{align*}
O(p^d N) + O(  (2k+1)^d p^d M )
\end{align*}
time in total.

\paragraph*{Technique 3.} Let $N(B)$ denote the number of boxes. Note that $N(B) \leq \min ( ( r \sqrt{2\delta} )^{-d/2} , M )$.

 Suppose we accumulate all sources into truncated Hermite expansions and transform all Hermite expansions into Taylor expansions via Lemma~\ref{lem:fast_gaussian_lemma_3}. Then we can approximate the function $G(t)$ by
\begin{align*}
G(t) = & ~ \sum_{{\cal B}} \sum_{j \in {\cal B}} q_j \cdot e^{ - \| t - s_j \|_2^2 / \delta } \\
= & ~ \sum_{\beta \leq p} C_{\beta} \left( \frac{ t - t_{\cal C} }{ \sqrt{\delta} } \right)^{\beta} + \Err_T(p) + \Err_H(p)
\end{align*}
where $|\Err_H(p)| + |\Err_T(p)| \leq Q \cdot \epsilon$,
\begin{align*}
C_{\beta} = \frac{ (-1)^{\| \beta \|_1 } }{ \beta ! } \sum_{{\cal B}} \sum_{\alpha \leq p} A_{\alpha} ({\cal B}) H_{\alpha + \beta} \left( \frac{ s_{ {\cal B } } - t_{ {\cal C} } }{ \sqrt{\delta} } \right)
\end{align*}
and the coefficients $A_{\alpha}({\cal B})$ are defined as Eq.~\eqref{def:A_alpha_cal_B}. Recall in Part 2, it takes $O(p^d N)$ time to compute all the Hermite expansions, i.e., to compute the coefficients $A_{\alpha} ({\cal B})$ for all $\alpha \leq p$ and all sources boxes ${\cal B}$.

Making use of the large product in the definition of $H_{\alpha+\beta}$, we see that the time to compute the $p^d$ coefficients of $C_{\beta}$ is only $O(d p^{d+1})$ for each box ${\cal B}$ in the range. Thus, we know for each target box ${\cal C}$, the running time is
\begin{align*}
O( (2k +1)^d d p^{d+1} ).
\end{align*} 

Finally we need to evaluate the appropriate Taylor series for each target $t_i$, which can be done in $O(p^d M)$ time. Putting it all together, this technique 3 takes time
\begin{align*}
O( (2k+1)^d d p^{d+1} N(B) ) + O(p^d N) + O(p^d M).
\end{align*}

\subsubsection{Result}

Finally, in order to get $\epsilon$ additive error for each coordinate, we will choose $k = O(\log (\| q \|_1 /\epsilon))$ and $p = O(\log( \| q \|_1 / \epsilon ))$.

\begin{theorem}[fast Gaussian transform]\label{thm:fast_gaussian_transform}
Given $N$ vectors $s_1, \cdots, s_N \in \R^d$, $M$ vectors $t_1, t_2, \cdots, t_M$ $ \in \R^d$, a number $\delta > 0$, and a vector $q \in \R^n$, let function $G : \R^d \rightarrow \R$ be defined as $G(t) = \sum_{i=1}^N q_i \cdot e^{ -\| t - s_i \|_2^2/\delta }$. 
There is an algorithm that runs in 
\begin{align*}
O \left( (M + N) \log^{O(d)} ( \| q \|_1 / \epsilon ) \right)
\end{align*}
time, and outputs $M$ numbers $x_1, \cdots, x_M$ such that for each $j \in [M]$
\begin{align*}
 G(t_j) - \epsilon \leq x_j \leq G(t_j) +\epsilon .
\end{align*}
\end{theorem}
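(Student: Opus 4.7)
The plan is to combine the three interaction techniques developed in the subsections above, together with a parameter choice that makes the per-box work polylogarithmic in $\|q\|_1/\epsilon$. First I will fix $r = 1/2$ so that every box has side length $\sqrt{\delta/2}$, and set the truncation parameters to $k = \Theta(\log(\|q\|_1/\epsilon))$ and $p = \Theta(\log(\|q\|_1/\epsilon))$. With $r=1/2$, the box-range cutoff bound in~\eqref{eq:box_error} gives $Q\cdot e^{-2r^2 k^2} \leq \epsilon$, so restricting the Gaussian sum at each target to the $(2k+1)^d$ nearest source boxes incurs additive error at most $\epsilon$. Similarly, the $(p!)^{-d/2}(r^{p+1}/(1-r))^d$ factor appearing in Lemmas~\ref{lem:fast_gaussian_lemma_1},~\ref{lem:fast_gaussian_lemma_3}, and~\ref{lem:fast_gaussian_lemma_4} falls below $\epsilon/\|q\|_1$ for this choice of $p$, so each truncation (Hermite or Taylor) contributes at most $\epsilon$ after multiplying through by $Q = \|q\|_1$. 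Rescaling $\epsilon$ by a constant absorbs the sum of these $O(1)$ error contributions.

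Next I will describe the algorithm. Subdivide $[0,1]^d$ into boxes of side $r\sqrt{2\delta}$ and bucket the sources and targets; this is standard and runs in $O((M+N)d)$ time. For each source box ${\cal B}$, let $N_{\cal B}$ be the number of sources in it, and similarly $M_{\cal C}$ for a target box. For each pair $({\cal B}, {\cal C})$ of boxes within $(2k+1)^d$ range of each other, we select one of the three techniques based on which is cheapest:
\begin{itemize}
\item If $N_{\cal B}$ and $M_{\cal C}$ are both at most $p^d$, apply Technique 1 (direct Taylor expansion from sources), costing $O(p^d N_{\cal B} + p^d M_{\cal C})$.
\item If $N_{\cal B} \geq p^d$ but $M_{\cal C} \leq p^d$, accumulate a Hermite series for ${\cal B}$ (cost $O(p^d N_{\cal B})$ once per box) and evaluate it at each target in ${\cal C}$ (Technique 2), costing $O(p^d M_{\cal C})$ per interaction.
\item If $N_{\cal B} \geq p^d$ and $M_{\cal C} \geq p^d$, form the Hermite series for ${\cal B}$ once, convert it to a Taylor series around $t_{\cal C}$ via Lemma~\ref{lem:fast_gaussian_lemma_3} at cost $O(dp^{d+1})$, and evaluate that Taylor series at each target in ${\cal C}$ (Technique 3), costing $O(p^d M_{\cal C})$ per interaction.
\end{itemize}
In all three regimes, the per-interaction cost is bounded by $O((p^d + dp^{d+1})(N_{\cal B}/p^d + M_{\cal C}/p^d + 1))$, which after summing over all at-most-$(2k+1)^d$ interactions incident to each source or target box telescopes to $O((M+N)(2k+1)^d dp^{d+1})$.

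Finally I will verify the running time and error. Summing the Hermite-expansion preprocessing $O(p^d N)$, the Taylor-evaluation postprocessing $O(p^d M)$, and the per-interaction costs above, the total running time is
\begin{align*}
O\bigl((M+N) \cdot (2k+1)^d \cdot d \cdot p^{d+1}\bigr) = O\bigl((M+N) \log^{O(d)}(\|q\|_1/\epsilon)\bigr),
\end{align*}
using $k,p = O(\log(\|q\|_1/\epsilon))$. The total additive error at each target is bounded by the box-cutoff error plus the sum, over the $(2k+1)^d$ in-range boxes, of the truncation errors from Lemmas~\ref{lem:fast_gaussian_lemma_1},~\ref{lem:fast_gaussian_lemma_3}, and~\ref{lem:fast_gaussian_lemma_4}; by our choice of $p$ each such term is at most $\epsilon/(2k+1)^d$ after rescaling, yielding total error at most $\epsilon$ as required. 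The main obstacle I anticipate is bookkeeping: verifying that the adaptive case split really does lead to the cleanly-amortized $(M+N)\log^{O(d)}$ bound (as opposed to a larger bound like $(M+N) + N(B)\log^{O(d)}$), and checking that the constants hidden inside the $O(d)$ exponent are independent of the input, so that setting $p, k = \Theta(\log(\|q\|_1/\epsilon))$ suffices uniformly.
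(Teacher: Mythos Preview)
Your plan matches the paper's: the paper's own argument is just the one-line parameter choice $k,p = O(\log(\|q\|_1/\epsilon))$ together with the remark that ``any two of the three techniques suffice,'' so your write-up is already more explicit than what appears there.

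Two concrete bookkeeping fixes, both of the sort you flagged as the likely obstacle. First, your three-way case split omits the case $N_{\cal B} < p^d$, $M_{\cal C} \geq p^d$; handle it with Technique~1 (accumulate Taylor coefficients directly from the $N_{\cal B}$ sources into ${\cal C}$'s Taylor series), at cost $O(p^d N_{\cal B})$ per pair. Second, your uniform per-pair bound $O((p^d + dp^{d+1})(N_{\cal B}/p^d + M_{\cal C}/p^d + 1))$ is false in the small--small regime: Technique~1 there costs $O(p^d(N_{\cal B}+M_{\cal C}))$, which can be $\Theta(p^{2d})$ while your formula gives only $O(dp^{d+1})$. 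The \emph{summed} bound you want is nonetheless correct, because
\[
\sum_{\text{pairs}} p^d N_{\cal B} \;\le\; (2k+1)^d\, p^d \sum_{\cal B} N_{\cal B} \;=\; (2k+1)^d\, p^d\, N,
\]
and symmetrically for $M$; so sum each technique's cost directly over pairs rather than routing through a per-pair formula. In fact any single one of the three techniques, applied to every pair, already yields $O((M+N)(2k+1)^d p^{O(1)} \cdot p^{d})$ --- this is exactly what the paper's displayed running-time lines for Techniques~1--3 say --- so the adaptive split, while standard in FMM implementations, is not needed for the theorem as stated.
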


The proof of fast Gaussian transform also implies a result for the online version:
\begin{theorem}[online version]\label{thm:fast_gaussian_transform_online}
Given $N$ vectors $s_1, \cdots, s_N \in \R^d$, a number $\delta > 0$, and a vector $q \in \R^n$, let function $G : \R^d \rightarrow \R$ be defined as $G(t) = \sum_{i=1}^N q_i \cdot e^{ -\| t - s_i \|_2^2/\delta }$. 
There is an algorithm that takes 
\begin{align*}
O \left( N \log^{O(d)} ( \| q \|_1 / \epsilon ) \right)
\end{align*}
time to do preprocessing, and then for each $t \in \R^d$, takes $O(\log^{O(d)} ( \| q \|_1 / \epsilon ))$ time to output a number $x$ such that 
\begin{align*}
 G(t) - \epsilon \leq x \leq G(t) +\epsilon .
\end{align*}
\end{theorem}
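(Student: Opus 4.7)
The plan is to observe that the offline algorithm of Theorem~\ref{thm:fast_gaussian_transform}, when analyzed via ``Technique 3'' above, does almost all of its work on the source side and on the fixed box grid, independently of the particular query points $t_1,\ldots,t_M$. The only truly per-target computation is evaluating one $p^d$-term Taylor series at the target location. So I would simply split that algorithm into an offline preprocessing phase that builds the per-box Taylor data, and an online query phase that evaluates a single Taylor series.

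Concretely, in preprocessing I would fix $p = \Theta(\log(\|q\|_1/\epsilon))$ and $k = \Theta(\log(\|q\|_1/\epsilon))$ as in Theorem~\ref{thm:fast_gaussian_transform}, subdivide $[0,1]^d$ into the grid of boxes of side $r\sqrt{2\delta}$ with $r=1/2$, and bucket the sources into their boxes using a hash map keyed by integer box coordinates (cost $O(dN)$). For each nonempty source box $\mathcal{B}$ I would compute the Hermite coefficients $A_\alpha(\mathcal{B})$ for $\alpha\le p$ by Eq.~\eqref{def:A_alpha_cal_B}; the total cost across all boxes is $O(p^d N)$. Then, for every target box $\mathcal{C}$ that lies within $k$ grid-steps of some nonempty source box, I would use Lemma~\ref{lem:fast_gaussian_lemma_3} to accumulate the Hermite-to-Taylor conversions from its (at most $(2k+1)^d$) relevant source boxes into a single length-$p^d$ Taylor coefficient vector $\{C_\beta(\mathcal{C})\}_{\beta\le p}$ centered at $t_\mathcal{C}$; each such conversion costs $O(dp^{d+1})$. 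The number of such target boxes is at most $(2k+1)^d$ times the number of nonempty source boxes, hence $O((2k+1)^d N)$, and storing the coefficient arrays in a hash map keyed on box coordinates gives total preprocessing time $O\bigl((2k+1)^d d p^{d+1} N\bigr) = O\bigl(N\log^{O(d)}(\|q\|_1/\epsilon)\bigr)$ and comparable space.

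For a query $t\in[0,1]^d$, I would compute the integer coordinates of the box $\mathcal{C}$ containing $t$ in $O(d)$ time, look up $\{C_\beta(\mathcal{C})\}$ in the hash map (defaulting to $0$ if absent, which corresponds to no nearby sources), and evaluate the truncated Taylor series $\sum_{\beta\le p} C_\beta(\mathcal{C})\bigl((t-t_\mathcal{C})/\sqrt{\delta}\bigr)^\beta$ in $O(p^d)=O(\log^{O(d)}(\|q\|_1/\epsilon))$ time. Correctness follows directly from the error analysis already used to prove Theorem~\ref{thm:fast_gaussian_transform}: the contribution of source boxes more than $k$ grid-steps away is bounded by Eq.~\eqref{eq:box_error}, and the truncation error from the Hermite-to-Taylor step is bounded by Lemma~\ref{lem:fast_gaussian_lemma_3}. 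Choosing $p,k=\Theta(\log(\|q\|_1/\epsilon))$ makes the sum of these at most $\epsilon$, exactly as in the offline setting.

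The only mildly subtle point, which I expect to be the main thing to spell out, is handling queries whose box $\mathcal{C}$ was never touched in preprocessing. For such $\mathcal{C}$ there is no source box within $k$ grid-steps, so Eq.~\eqref{eq:box_error} directly bounds $|G(t)|\le \epsilon$ and returning $0$ is a valid $\epsilon$-additive answer. This, together with the $O(1)$-time box lookup via hashing, is what lets the query phase avoid any dependence on $N$.
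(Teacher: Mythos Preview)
Your proposal is correct and is exactly the approach the paper has in mind: the paper does not give a separate proof but simply remarks that ``the proof of fast Gaussian transform also implies a result for the online version,'' i.e., that the Technique~3 pipeline naturally splits into a source-side preprocessing phase and a per-target Taylor evaluation. You have faithfully spelled out the details the paper omits, including the hash-map bookkeeping and the handling of queries whose box has no nearby sources.
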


\subsection{Generalization}\label{sec:fastmm_general}

The fast multipole method described in the previous section works not only for the Gaussian kernel, but also for $\k(u,v) = f(\|u,v\|_2^2)$ for many other functions $f$. As long as $f$ has the following properties, the result of Theorem~\ref{thm:fast_gaussian_transform} also holds for $f$:
\begin{itemize}
    \item $f : \R \rightarrow \R_{+}$.
    \item $f$ is non-increasing, i.e., if $x \geq y \geq 0$, then $f(x) \leq f(y)$.
    \item $f$ is decreasing fast, i.e., for any $\epsilon \in (0,1)$, we have $f( \Theta(\log (1/\epsilon) ) ) \leq \epsilon$.
    \item $f$'s Hermite expansion and Taylor expansions are truncateable: If we only keep $\log^d(1/\epsilon)$ terms of the polynomial for $\k$, then the error is at most $\epsilon$.
\end{itemize}

Let us now sketch how each of these properties is used in discritizing the continuous domain into a finite number of boxes. First, note that Eq.\eqref{eq:box_error} holds more generally for any function $f$ with these properties. Indeed, we can bound the error as follows (note that $Q = \| q \|_1$):
\begin{align}\label{eq:box_error_general}
\sum_{j : \| t - s_j \|_{\infty} \geq k r \sqrt{2\delta} } |q_j| \cdot f( \| t - s_j \|_2 / \sqrt{\delta} ) 
\leq & ~ \sum_{j : \| t - s_j \|_{\infty} \geq k r \sqrt{2\delta} } |q_j| \cdot f( \| t - s_j \|_{\infty}/ \sqrt{\delta} ) \notag\\
\leq & ~  \sum_{j : \| t - s_j \|_{\infty} \geq k r \sqrt{2\delta}} |q_j| \cdot f( \sqrt{2} k r ) \notag \\
\leq & ~ Q \cdot f( \sqrt{2} k r ) \notag \\
\leq & ~ \epsilon
\end{align}
where the first step follows from $\| \cdot \|_2 \geq \| \cdot \|_{\infty}$ and that $f$ is non-increasing, the second step follows from $\| t - s_j\|_{\infty} \geq k r \sqrt{2\delta}$ and that  $f$ is non-increasing, and the last step follows from the fact that $f$ is decreasing fast, and choosing $k = O(\log (Q/\epsilon) / r)$.

We next give an example of how the truncatable expansions property is used. Here we only show to generalize Definition~\ref{def:G_t} to Definition~\ref{def:G_t_general} and generalize Lemma~\ref{lem:fast_gaussian_lemma_1} to Definition~\ref{lem:fast_gaussian_lemma_1_general}; the other Lemmas in the proof can be extended in a similar way.

\begin{definition}\label{def:G_t_general}
Let ${\cal B}$ denote a box with center $s_{\cal B}$ and side length $r \sqrt{2\delta}$ with $r < 1$.
If source $s_j$ is in box ${\cal B}$, we say $j \in {\cal B}$. Then the Gaussian evaluation from the sources in box ${\cal B}$ is,
\begin{align*}
G(t) = \sum_{j \in {\cal B}} q_j \cdot f( \| t - s_j \|_2 / \sqrt{\delta} ).
\end{align*}
The Hermite expansion of $G(t)$ is 
\begin{align}\label{eq:hermite_expansion_of_G_t_general}
G(t) = \sum_{\alpha \geq 0} A_{\alpha} \cdot h_{\alpha} \left( \frac{ t - s_{\cal B} }{ \sqrt{\delta} } \right),  
\end{align}
where the coefficients $A_{\alpha}$ are defined by
\begin{align}\label{eq:def_A_alpha_general}
A_{\alpha} = \frac{1}{\alpha !} \sum_{j \in {\cal B}} q_j \cdot \left( \frac{ s_j - s_{\cal B} }{ \sqrt{\delta} } \right)^{\alpha} 
\end{align}
\end{definition}

\begin{definition}\label{lem:fast_gaussian_lemma_1_general}
Let $p$ denote an integer, let $\Err_H(p)$ denote the error after truncating the series $G(t)$ (as defined in Def.~\ref{def:G_t_general}) after $p^d$ terms, i.e.,
\begin{align*}
\Err_H(p) = \sum_{\alpha \geq p} A_{\alpha} \cdot H_{\alpha} \left( \frac{t - s_{\cal B}}{ \sqrt{\delta} } \right).
\end{align*}
We say $f$ is Hermite truncateable, if 
 Then we have
\begin{align*}
| \Err_H(p) | \leq p^{-\Omega(pd)}
\end{align*}
where $K = (1.09)^d$.
\end{definition}

\subsection{\texorpdfstring{$\k (x,y) = 1 / \| x - y \|_2^2$}{}}\label{sec:fastmm_other}

Similar ideas yield the following algorithm:
\begin{theorem}[FMM, \cite{bg97,m12}]\label{thm:fmminverse}
Given $n$ vectors $x_1, x_2, \cdots, x_n \in \R^d$, let matrix $A \in \R^{n \times n}$ be defined as $A_{i,j} = 1 / \| x_i - x_j \|_2^2$. For any vector $h \in \R^n$, in time $O( n \log^{O(d)}( \| u \|_1 /\epsilon) )$, we can output a vector $u$ such that
\begin{align*}
 (A h)_i - \epsilon \leq  u_i \leq  (A h)_i + \epsilon.
\end{align*}
\end{theorem}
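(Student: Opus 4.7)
The plan is to follow the same high-level template as the proof of Theorem~\ref{thm:fast_gaussian_transform} for the Gaussian kernel in Section~\ref{sec:fastmm_gaussian_kernel}, adapting each step to the kernel $f(r) = 1/r^2$. The important structural difference is that $1/r^2$ does \emph{not} decay fast enough for the simple ``far boxes contribute at most $\epsilon$'' truncation of Eq.~\eqref{eq:box_error_general} to work; a point at $\ell_\infty$-distance $k r \sqrt{2\delta}$ still contributes $\Omega(1/k^2)$, so we cannot ignore interactions beyond a small number of neighbouring boxes. We therefore replace the ``single-level, fixed cutoff $k$'' decomposition used for the Gaussian by a hierarchical box decomposition (a $2^d$-ary tree built by bisecting the unit box $\log_2(\|u\|_1/\epsilon)$ levels deep), and rely on the well-known multipole-to-local machinery to aggregate contributions across all scales.

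First I would prove the analytic analogue of Lemmas~\ref{lem:fast_gaussian_lemma_1}--\ref{lem:fast_gaussian_lemma_4} for the function $g(z) = 1/\|z\|_2^2$. Concretely, for a source box $\mathcal{B}$ centred at $s_\mathcal{B}$ and a target box $\mathcal{C}$ centred at $t_\mathcal{C}$ which is ``well-separated'' from $\mathcal{B}$ in the sense that the distance between the two centres is at least twice the box diameter, I would expand
\[
\frac{1}{\|x - y\|_2^2} \;=\; \sum_{\alpha \geq 0} B_\alpha(t_\mathcal{C} - s_\mathcal{B})\,(x - s_\mathcal{B})^{\alpha_1}(y - t_\mathcal{C})^{\alpha_2}
\]
by Taylor-expanding both arguments around the box centres; the coefficients $B_\alpha$ are (rescaled) multi-dimensional derivatives of $g$. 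The key estimate is that truncating this expansion at degree $p$ yields residual bounded by $C^d \cdot (1/2)^p / \|t_\mathcal{C} - s_\mathcal{B}\|_2^2$, since each factor $(x - s_\mathcal{B})/\|t_\mathcal{C} - s_\mathcal{B}\|$ has magnitude at most $1/2$. Setting $p = O(\log(\|q\|_1/\epsilon))$ then drives the aggregate truncation error below $\epsilon$ per target.

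Next I would reproduce the three translation operations from the Gaussian FMM: (i) forming a multipole expansion for each leaf box from the sources it contains (analogue of Eq.~\eqref{eq:def_A_alpha}), (ii) merging child-box expansions into parent expansions as we move up the tree, and (iii) translating a source-box multipole expansion into a target-box local Taylor expansion whenever the two boxes are well-separated (analogue of Lemma~\ref{lem:fast_gaussian_lemma_3}). At each level of the tree, every target box has only $O(C^d)$ well-separated ``interaction-list'' boxes whose influence has not already been captured at a coarser level; near-field interactions at the leaf level involve only $O(1)$ neighbouring leaves per target, so they can be computed exactly in $O(n)$ total time by brute force. Combining these operations over all $O(\log(\|q\|_1/\epsilon))$ levels gives the stated runtime $O(n \log^{O(d)}(\|u\|_1/\epsilon))$.

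The main obstacle, as in the Gaussian case, is the joint dependence on $d$ and $\epsilon$: each box stores $p^d = \log^{O(d)}(\|q\|_1/\epsilon)$ expansion coefficients, each multipole-to-local translation costs $\mathrm{poly}(p^d)$ arithmetic operations, and the interaction list has size $(2k+1)^d$ with $k = O(1)$ per level. Once the per-box analytic bound in the previous paragraph is established with the correct $C^d$ constant, the rest of the accounting mirrors the computations leading to Theorem~\ref{thm:fast_gaussian_transform} (summing the work level-by-level over the $2^d$-ary tree), and the additive-error guarantee follows by taking a union bound over the at most $n$ targets and choosing $p$ and the tree depth a logarithmic factor larger than the per-target bound demands.
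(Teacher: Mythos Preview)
The paper does not actually prove this theorem: it is stated with the one-line preamble ``Similar ideas yield the following algorithm'' and attributed to \cite{bg97,m12}, with no argument given. So there is no paper proof to compare against in the strict sense.

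That said, your sketch is the right one and matches what those references do. You correctly observe that the single-level box decomposition used for the Gaussian in Section~\ref{sec:fastmm_gaussian_kernel} does \emph{not} transfer directly, because $1/r^2$ fails the ``decreasing fast'' condition listed in Section~\ref{sec:fastmm_general} (indeed $1/\log^2(1/\epsilon) \not\le \epsilon$), so the far-field cutoff in Eq.~\eqref{eq:box_error_general} cannot be made to work with $k = O(\log(1/\epsilon))$. The hierarchical $2^d$-ary tree with multipole-to-multipole, multipole-to-local, and local-to-local translations that you describe is exactly the classical FMM for inverse-power kernels, and your accounting of $p^d$ coefficients per box, $O(1)^d$ interaction-list size per level, and $O(\log(\|q\|_1/\epsilon))$ levels gives the stated bound. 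One minor phrasing issue: your final ``union bound over the at most $n$ targets'' is not a probabilistic union bound---there is no randomness here---but simply a sum of deterministic per-target truncation errors; the conclusion is unaffected.
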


 %%% Section 9 Fast Multipole Method
\section{Neural Tangent Kernel} \label{sec:ntk}

In this section, we show that the popular Neural Tangent Kernel $\k$ from theoretical Deep Learning can be rearranged into the form $\k(x,y) = f(\|x-y\|_2^2)$ for an appropriate analytic function $f$, so our results in this paper apply to it.
We first define the kernel. 
\begin{definition}[Neural Tangent Kernel, \cite{jgh18}]\label{def:neural_tangent_kernel}
Given $n$ points $x_1, x_2, \cdots, x_n \in \R^d$, and any activation function $\sigma : \R \rightarrow \R$, the neural tangent kernel matrix $\k \in \R^{n \times n}$ can be defined as follows, where $\mathcal{N}(0,I_d)$ denotes the Gaussian distribution:
\begin{align*}
\k_{i,j}:= \int_{\mathcal{N}(0,I_d)} \sigma'(w^\top x_i)\sigma'(w^\top x_j) x_i^\top x_j \d w .
\end{align*}
\end{definition}

In the literature of convergence results for deep neural networks \cite{ll18,dzps19,als18,als19,sy19,bpsw20,lsswy20}, it is natural to assume that all the data points are on the unit sphere, i.e., for all $i \in [n]$ we have $\| x_i \|_2 = 1$ and datas are separable i.e., for all $i\neq j$, $\| x_i - x_j \|_2 \geq \delta$. One of the most standard and common used activation functions in neural network training is ReLU activation, which is $\sigma(x) = \max\{x,0\}$. Using Lemma~\ref{lem:ntk_relu}, we can figure out the corresponding kernel function. By Theorem~\ref{thm:hardnessapprox}, the multiplication task for neural tangent kernels is hard. In neural network training, the multiplication can potentially being used to speed the neural network training procedure(See \cite{sy19}).

In the following lemma, we compute the kernel function for ReLU activation function.
\begin{lemma}\label{lem:ntk_relu}
If $\sigma(x) = \max\{x,0\}$, then the Neural Tangent Kernel can be written as $\k(x,y) = f(\|x-y\|_2^2)$ for
\begin{align*}
f(x) = \frac{1}{\pi} ( \pi - \cos^{-1} ( 1 - 0.5 x ) ) \cdot (1 - 0.5 x)
\end{align*}
\end{lemma}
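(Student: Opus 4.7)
The plan is to directly evaluate the integral in Definition~\ref{def:neural_tangent_kernel} using the fact that $\sigma'(z) = \mathbf{1}[z \geq 0]$ for the ReLU activation (almost everywhere, which is all that matters for the Gaussian integral). Since $x^\top y$ does not depend on $w$, it factors out of the integral, leaving
\[
\k(x,y) = x^\top y \cdot \Pr_{w \sim \mathcal{N}(0,I_d)}\!\left[ w^\top x \geq 0 \text{ and } w^\top y \geq 0 \right].
\]
So the proof reduces to computing this orthant probability and then rewriting everything in terms of $\|x - y\|_2^2$.

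To compute the probability, I would use the rotational invariance of the standard Gaussian: the joint distribution of $(w^\top x, w^\top y)$ depends only on the Gram matrix of $x, y$, so we can reduce to the $2$-dimensional subspace spanned by $x$ and $y$. Assuming (as the paper does in the neural-network literature it cites) that $\|x\|_2 = \|y\|_2 = 1$, the pair $(w^\top x, w^\top y)$ is a mean-zero Gaussian with variances $1$ and correlation $\rho = x^\top y = \cos\theta$, where $\theta \in [0,\pi]$ is the angle between $x$ and $y$. A standard orthant-probability calculation (either by Sheppard's formula $\tfrac14 + \tfrac{1}{2\pi}\arcsin\rho$, or by viewing $w/\|w\|_2$ as uniform on $S^1$ and measuring the wedge on which both inner products are non-negative) then gives the probability as $(\pi - \theta)/(2\pi)$.

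The last step is a change of variables from angle to squared distance. Since $\|x\|_2 = \|y\|_2 = 1$, one has
\[
\|x - y\|_2^2 = 2 - 2\,x^\top y, \qquad \text{hence} \qquad x^\top y = 1 - \tfrac{1}{2}\|x-y\|_2^2,
\]
and $\theta = \cos^{-1}\!\bigl(1 - \tfrac{1}{2}\|x-y\|_2^2\bigr)$. Substituting back,
\[
\k(x,y) = \bigl(1 - \tfrac{1}{2}\|x-y\|_2^2\bigr) \cdot \frac{\pi - \cos^{-1}\!\bigl(1 - \tfrac{1}{2}\|x-y\|_2^2\bigr)}{2\pi},
\]
which has exactly the form $f(\|x-y\|_2^2)$ claimed in the lemma (up to a $1/\pi$ vs.\ $1/(2\pi)$ normalization constant that depends on the precise convention used in Definition~\ref{def:neural_tangent_kernel} for the Gaussian ``integral''; with the convention that absorbs the extra factor of $2$ into the measure, one obtains the stated formula exactly).

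The main obstacle is really only the orthant probability: once that is in hand, the rest is algebraic manipulation and a substitution. The trickiest part is justifying the reduction to two dimensions cleanly when $x$ and $y$ are linearly dependent (the case $\theta \in \{0,\pi\}$ must be treated as a boundary case, but both sides of the identity extend continuously, so a density/continuity argument suffices). One should also check that the derivative $\sigma'$ being undefined at $0$ does not affect the integral, which is immediate since $\{w : w^\top x = 0\}$ has Gaussian measure zero.
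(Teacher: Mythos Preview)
Your approach is essentially identical to the paper's: factor $x^\top y$ out of the integral, evaluate the remaining Gaussian orthant probability, and then substitute $x^\top y = 1 - \tfrac{1}{2}\|x-y\|_2^2$. The paper simply asserts the value of the integral without justification, whereas you supply the standard derivation via rotational invariance / Sheppard's formula. Your observation about the $1/\pi$ versus $1/(2\pi)$ constant is accurate: the standard orthant probability is $(\pi-\theta)/(2\pi)$, and the paper's stated formula carries an extra factor of $2$ relative to this, which it does not explain; your attribution of the discrepancy to the (ambiguous) notation $\int_{\mathcal{N}(0,I_d)}\cdots\,\d w$ is the right way to handle it.
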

\begin{proof}

First, since $\| x_i \|_2 = \|x_j\|_2$, we know that
\begin{align*}
\| x_i - x_j \|_2^2 = \| x_i \|_2^2 - 2 \langle x_i , x_j \rangle + \| x_j \|_2^2 = 2 - 2\langle x_i, x_j \rangle.
\end{align*}

By definition of $\sigma$, we know 
\begin{align*}
\sigma(x) =  
\begin{cases}
1, & \text{~if~} x > 0; \\
0, & \text{~otherwise.}
\end{cases}
\end{align*}
Using properties of the Gaussian distribution ${\cal N}(0,I_d)$, we have
\begin{align*}
\int_{\mathcal{N}(0,I_d)} \sigma'(w^\top x_i)\sigma'(w^\top x_j) \d w 
= & ~ \frac{1}{\pi} (\pi - \cos^{-1} ( x_i x_j ) ) \\
= & ~ \frac{1}{\pi} (\pi - \cos^{-1} ( 1 - 0.5 \| x_i - x_j \|_2^2 ) )
\end{align*}

We can rewrite $\k_{i,j}$ as follows:
\begin{align*}
\k_{i,j} 
= & ~ \int_{\mathcal{N}(0,I_d)} \sigma'(w^\top x_i)\sigma'(w^\top x_j) x_i^\top x_j \d w \\
= & ~ (1-0.5 \| x_i - x_j \|_2^2) \cdot \int_{\mathcal{N}(0,I_d)} \sigma'(w^\top x_i)\sigma'(w^\top x_j) \d w \\
= & ~ (1-0.5 \| x_i - x_j \|_2^2) \cdot \frac{1}{\pi} ( \pi - \cos^{-1} ( 1 - 0.5 x ) )\\
= & ~ f( \| x_i - x_j \|_2^2 ) .
\end{align*}

\end{proof}

\begin{lemma}
Given $n$ data points $x_1, \dots, x_n \in \R^d$ on unit sphere. For any activation function $\sigma : \R \rightarrow \R$, the corresponding Neural Tangent Kernel $\k(x_i,x_j)$ is a function of $\| x_i - x_j \|_2^2$.
\end{lemma}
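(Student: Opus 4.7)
The plan is to exploit two pieces of structure: (i) the unit-sphere assumption collapses $\langle x_i, x_j\rangle$ and $\|x_i-x_j\|_2^2$ into the same piece of information via the identity $\langle x_i,x_j\rangle = 1 - \tfrac12\|x_i-x_j\|_2^2$; and (ii) the standard Gaussian measure $\mathcal{N}(0,I_d)$ is rotation-invariant, so the integral defining $\k(x_i,x_j)$ only depends on the relative geometry of the pair $(x_i,x_j)$, namely their inner product. Combining these gives the claim.

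First I would write $\k(x_i,x_j)$ as a two-dimensional integral. Let $V = \mathrm{span}(x_i,x_j)$, which has dimension at most $2$, and decompose $w = w_V + w_{V^\perp}$. Only $w_V$ affects the integrand, and marginalizing out $w_{V^\perp}$ leaves a standard Gaussian on $V$. Choosing any orthonormal basis of $V$ and letting $(g_1,g_2)$ be the corresponding i.i.d.\ standard Gaussians, the pair $(w^\top x_i, w^\top x_j)$ is a centered Gaussian in $\R^2$ whose covariance matrix is
\begin{equation*}
\Sigma(x_i,x_j) \;=\; \begin{pmatrix} \|x_i\|_2^2 & \langle x_i,x_j\rangle \\ \langle x_i,x_j\rangle & \|x_j\|_2^2 \end{pmatrix} \;=\; \begin{pmatrix} 1 & \langle x_i,x_j\rangle \\ \langle x_i,x_j\rangle & 1 \end{pmatrix},
\end{equation*}
using $\|x_i\|_2 = \|x_j\|_2 = 1$. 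Thus the integral $\int_{\mathcal{N}(0,I_d)} \sigma'(w^\top x_i)\sigma'(w^\top x_j)\,\d w$ equals an expectation over a bivariate Gaussian whose law depends on the pair $(x_i,x_j)$ only through the scalar $\langle x_i,x_j\rangle$. Call this expectation $F(\langle x_i,x_j\rangle)$ for some function $F:[-1,1]\to\R$ determined by $\sigma'$.

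Next, using the definition of the NTK and pulling out the factor $x_i^\top x_j$, I would write
\begin{equation*}
\k(x_i,x_j) \;=\; \langle x_i,x_j\rangle \cdot F\bigl(\langle x_i,x_j\rangle\bigr).
\end{equation*}
Finally, substituting the unit-sphere identity $\langle x_i,x_j\rangle = 1 - \tfrac12\|x_i-x_j\|_2^2$ expresses $\k(x_i,x_j)$ as a function of $\|x_i-x_j\|_2^2$ alone, namely $f(z) = (1-\tfrac{z}{2})\,F(1-\tfrac{z}{2})$, which concludes the argument. As a sanity check, specializing $\sigma$ to ReLU recovers the formula in Lemma~\ref{lem:ntk_relu}, where $F(t) = \tfrac{1}{\pi}(\pi - \cos^{-1}(t))$.

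The only mildly delicate step is the reduction to a $2$-dimensional integral with the stated covariance, since $V$ may be $1$-dimensional when $x_i = \pm x_j$; but this degenerate case is handled identically (the bivariate Gaussian degenerates, yet the resulting quantity still depends only on $\langle x_i,x_j\rangle \in \{\pm 1\}$). No further obstacles arise, and the integrability questions for general $\sigma'$ are a side issue that the statement implicitly assumes away by positing the integral in Definition~\ref{def:neural_tangent_kernel}.
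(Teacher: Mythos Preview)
Your proof is correct and follows essentially the same approach as the paper: both reduce the $d$-dimensional Gaussian integral to an expectation over the bivariate Gaussian $(w^\top x_i, w^\top x_j)$ with covariance $\begin{pmatrix}1 & \langle x_i,x_j\rangle\\ \langle x_i,x_j\rangle & 1\end{pmatrix}$, deduce that $\k(x_i,x_j)$ depends only on $\langle x_i,x_j\rangle$, and then invoke the unit-sphere identity. Your version is slightly more explicit (the $V$-$V^\perp$ decomposition and the degenerate-span remark), but the argument is the same.
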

\begin{proof}
Note that $w\sim \mathcal{N}(0,I_d)$, so we know $(x_i^\top w,x_j^\top w)\sim\mathcal{N}(0, \Sigma_{i,j})$, where the covariance matrix 
\begin{align*}
\Sigma_{i,j}=\begin{bmatrix}
	x_i^\top x_i & x_i^\top x_j\\
	x_j^\top x_i & x_j^\top x_j
\end{bmatrix} = \begin{bmatrix}
	1 & x_i^\top x_j\\
	x_i^\top x_j & 1
\end{bmatrix} \in \R^{2 \times 2},
\end{align*}
since $\|x_i\|_2=\|x_j\|_2 = 1$. 
Thus, 
\begin{align*}
\k(x_i,x_j) = \E_{(a,b)\sim\mathcal{N}(0,\Sigma_{i,j})}[\sigma'(a)\sigma'(b)]x_i^\top x_j = g(x_i^\top x_j)
\end{align*}
for some function $g$.

 Note $x_i^\top x_j = -\frac{1}{2}\|x_i-x_j\|^2 + 1$, so $\k(x_i,x_j)=f( \| x_i-x_j \|_2^2 )$ for some function $f$, which completes the proof.
\end{proof}

 %%% Section 10 Neural Tangent Kernel

\fi

%\newpage
%\appendix

%\input{dis}
\addcontentsline{toc}{section}{References}
\bibliographystyle{alpha}
\bibliography{ref}

\end{document}